\documentclass[acmsmall,screen,authorversion]{acmart}\settopmatter{}
\pdfoutput=1

\acmJournal{PACMPL}
\acmVolume{5}
\acmNumber{ICFP} \acmArticle{69}
\acmYear{2021}
\acmMonth{8}
\acmDOI{10.1145/3473574}
\startPage{1}
\acmPrice{}
\acmSubmissionID{icfp21main-p46-p}

\setcopyright{rightsretained}

\bibliographystyle{ACM-Reference-Format}
\citestyle{acmauthoryear}

\usepackage{booktabs}   \usepackage{subcaption}

\NeedsTeXFormat{LaTeX2e}
\ProvidesPackage{mathcommon}[2013/05/11 LaTeX package defining common math macros]

\makeatletter

\edef\bold@wt{b} 

\SetSymbolFont{operators}{bold}{\tx@enc}{\rmdefaultB}{\bold@wt}{n}

\DeclareMathAlphabet{\mathbf}{\tx@enc}{\rmdefaultB}{\bold@wt}{n}
\SetMathAlphabet{\mathit}{bold}{\tx@enc}{\rmdefaultB}{\bold@wt}{it}
\makeatother

\RequirePackage{amsfonts}   \RequirePackage{mathtools}

\RequirePackage{dsfont}

\RequirePackage{stmaryrd}

\RequirePackage{tikz}
\usetikzlibrary{arrows,fit,matrix,positioning,calc}

\makeatletter

\theoremstyle{plain}\@ifundefined{theorem}{\newtheorem{theorem}{Theorem}[section]}{}
\@ifundefined{corollary}{\newtheorem{corollary}[theorem]{Corollary}}{}
\@ifundefined{prop}{\newtheorem{prop}{Proposition}[section]}{}
\@ifundefined{lemma}{\newtheorem{lemma}[theorem]{Lemma}}{}
\@ifundefined{conjecture}{}{}
\@ifundefined{definition}{}{}
\@ifundefined{remark}{}{}
\@ifundefined{note}{}{}
\newtheorem*{lemma*}{Lemma}
\theoremstyle{definition}\@ifundefined{fact}{}{}
\@ifundefined{hypothesis}{}{}
\@ifundefined{example}{}{}
\@ifundefined{exercise}{}{}
\@ifundefined{problem}{}{}
\@ifundefined{notation}{}{}

\makeatother

\renewcommand{\emptyset}{\varnothing}

\newcommand{\set}[1]{\{ #1 \}}

\DeclareMathOperator*{\dom}{dom}

\newcommand{\ifempty}[3]{\ifx\\#1\\ #2\else #3\fi}

\renewcommand{\to}[1][]{\ifempty{#1}{\rightarrow}{\xrightarrow{\, #1 \,}}}

\newcommand{\ntrans}[1]{\stackrel{\bullet}{\rightarrow}}

\renewcommand{\vec}[1]{\boldsymbol{\mathrm{#1}}}

\usepackage[scaled=0.8]{beramono}
\usepackage{url}
\usepackage{flushend} \usepackage[olditem,oldenum,defblank]{paralist} \usepackage[inline]{enumitem}

\makeatletter

\newif\ifsmallrulenames  \smallrulenamesfalse

\newcommand{\choosernsize}[2]{\ifsmallrulenames#1\else#2\fi}

\newcommand{\rn}[1]{\ifmmode 
    \mathchoice
      {\mbox{\choosernsize{\small}{}\sc #1}}
      {\mbox{\choosernsize{\small}{}\sc #1}}
      {\mbox{\choosernsize{\tiny}{\small}\sc #1}}
      {\mbox{\choosernsize{\tiny}{\tiny}\uppercase{#1}}}\else
    \hbox{\choosernsize{\small}{}\sc #1}\fi}

\newif\ifsuppressrulenames  
\suppressrulenamesfalse

\newif\ifbcprulessavespace
\bcprulessavespacefalse

\newif\ifbcprulestwocol
\bcprulestwocolfalse

\newcommand{\inflabel}[1]{\ifsuppressrulenames\else
    \def\lab{#1}\ifx\lab\empty
      \relax
    \else
      (\rn{\lab})\fi\fi
}

\newlength{\afterruleskip}
\setlength{\afterruleskip}{\bigskipamount}

\newlength{\labelminsep}
\setlength{\labelminsep}{0.2em}

\newdimen\labelcolwidth

\newcommand{\typicallabel}[1]{
  \setbox \@tempboxa \hbox{\inflabel{#1}}
  \labelcolwidth \wd\@tempboxa
  }
\typicallabel{}

\newif  \ifindexrules   \indexrulesfalse

\newbox\@labelbox
\newbox\rulebox
\newdimen\ruledim
\newdimen\labeldim

\newcommand{\layoutruleverbose}[2]{\unvbox\voidb@x  \addvspace{\afterruleskip}

   \setbox \rulebox \hbox{$\displaystyle #2$}

   \setbox \@labelbox \hbox{#1}
   \ruledim \wd \rulebox
   \labeldim \wd \@labelbox

\@tempdima \linewidth
   \advance \@tempdima -\labelcolwidth
   \ifdim \@tempdima < \ruledim
     \@tempdima \ruledim
   \else
     \advance \@tempdima by \ruledim
     \divide \@tempdima by 2
   \fi
   \advance \@tempdima by \labelminsep
   \advance \@tempdima by \labeldim
   \ifdim \@tempdima < \linewidth
\@tempdima \linewidth
     \advance \@tempdima -\labelcolwidth
     \hbox to \linewidth{\hbox to \@tempdima{\hfil
         \box\rulebox
         \hfil}\hfill
       \hbox to 0pt{\hss\box\@labelbox}}\else
\@tempdima 0pt
   \advance \@tempdima by \ruledim
   \advance \@tempdima by \labelminsep
   \advance \@tempdima by \labeldim
   \ifdim \@tempdima < \linewidth
\hbox to \linewidth{\hfil
         \box\rulebox
         \hskip \labelminsep
         \box\@labelbox}\else
\@tempdima \linewidth
     \advance \@tempdima -\labelcolwidth
     \hbox to \linewidth{\hbox to \@tempdima{\hfil
          \box\rulebox
          \hfil}
       \hfil}\penalty10000
     \hbox to \linewidth{\hfil
         \box\@labelbox}\fi\fi

   \addvspace{\afterruleskip}\@doendpe  \ignorespaces
   }

\newcommand{\layoutrulenolabel}[1]{\unvbox\voidb@x  \addvspace{\afterruleskip}

   \setbox \rulebox \hbox{$\displaystyle #1$}

   \@tempdima \linewidth
   \advance \@tempdima -\labelcolwidth
   \hbox to \@tempdima{\hfil 
      \box\rulebox
      \hfil}

   \addvspace{\afterruleskip}\@doendpe  \ignorespaces
   }

\newcommand{\layoutruleterse}[2]{\setbox \rulebox \hbox{$\displaystyle #2$}
   \noindent
   \parbox[b]{0.5\linewidth}
    {\vspace*{0.4em} \hfill\box\rulebox\hfill~}
   }

\newcommand{\layoutrule}[2]{\ifbcprulessavespace 
        \layoutruleterse{#1}{#2}
      \else
        \layoutruleverbose{#1}{#2}
      \fi
}

\newif\ifnewrule   \newrulefalse
\newcommand{\setrulebody}[1]{\ifnewrule
     \@ifundefined{HIGHLIGHT}{\fbox{\ensuremath{#1}}}{\HIGHLIGHT{#1}}\else
     #1
  \fi
}

\newcommand{\typesetax}[1]{\setrulebody{\begin{array}{@{}c@{}}#1\end{array}}}
\newcommand{\typesetrule}[2]{\setrulebody{\frac{\begin{array}{@{}c@{}}#1\end{array}}{\begin{array}{@{}c@{}}#2\end{array}}}}

\newcommand{\ruleindexprefix}[1]{\gdef\ruleindexprefixstring{#1}}
\ruleindexprefix{}
\newcommand{\maybeindex}[1]{\ifindexrules
    \index{\ruleindexprefixstring#1@\rn{#1}}\fi}

\def\infax{\@ifnextchar[{\@infaxy}{\@infaxx}}
\def\@infaxx#1{\ifbcprulessavespace $\typesetax{#1}$\else \layoutrulenolabel{\typesetax{#1}}\fi\newrulefalse\ignorespaces}
\def\@infaxy[#1]{\maybeindex{#1}\@infax{\inflabel{#1}}}
\def\@infax#1#2{\layoutrule{#1}{\typesetax{#2}}\ignorespaces}

\def\infrule{\@ifnextchar[{\@infruley}{\@infrulex}}
\def\@infrulex#1#2{\ifbcprulessavespace $\typesetrule{#1}{#2}$\else \layoutrulenolabel{\typesetrule{#1}{#2}}\fi\newrulefalse\ignorespaces}
\def\@infruley[#1]{\maybeindex{#1}\@infrule{\inflabel{#1}}}
\def\@infrule#1#2#3{\layoutrule{#1}{\typesetrule{#2}{#3}}\ignorespaces}

\newcommand{\andalso}{\quad\quad}

\makeatother
\usepackage{bussproofs}

\usepackage{multicol}

\usepackage{xifthen}

\usepackage{tikz-cd}
\tikzcdset{arrow style=tikz, diagrams={>=latex}}  

\NeedsTeXFormat{LaTeX2e}
\ProvidesPackage{lstscala}[2013/05/01 LaTeX package defining a Scala listings profile]

\RequirePackage{listings}  \RequirePackage{xcolor}

\lstdefinelanguage{Scala}{morekeywords={abstract,case,catch,char,class,def,do,else,extends,false,final,finally,for,if,import,implicit,match,module,new,null,object,override,package,private,protected,public,for,public,return,scala,super,sealed,this,throw,trait,true,try,type,val,var,with,while,yield,static,void,let,in},otherkeywords={=>,<-,<:,>:,\#,\&},
  sensitive,morecomment=[l]//,morecomment=[s]{/*}{*/},morestring=[b]",morestring=[b]',morestring=[b]""",
  moredelim=**[is][\btHL]{`}{`},
  showstringspaces=false}[keywords,comments,strings]

\lstdefinestyle{scala-plain}{language=Scala,mathescape=true,aboveskip=1em,
  belowskip=1em,
basicstyle=\normalsize\ttfamily,keywordstyle=\bfseries,columns=fullflexible,keepspaces=true,numbers=none}

\definecolor{lstscalabg}{rgb}{0.95,0.95,1}     \definecolor{lstscalastr}{rgb}{0.6,0,0}        \definecolor{lstscalacmt}{rgb}{0.25,0.5,0.35}  \definecolor{lstscalakw}{rgb}{0.5,0,0.35}      \definecolor{lstscaladoc}{rgb}{0.25,0.35,0.75} 

\lstdefinestyle{scala-color}{style=scala-plain,keywordstyle=\color{lstscalakw}\bfseries,
  stringstyle=\color{lstscalastr},
  commentstyle=\color{lstscalacmt},
  morecomment=[s][\color{lstscaladoc}]{/**}{*/},
}

\lstnewenvironment{lstscala}{\lstset{style=scala-plain}}{}
\lstnewenvironment{lstscalac}{\lstset{style=scala-color}}{}
\lstset{style=scala-plain,basicstyle=\ttfamily,escapeinside=||}
\lstnewenvironment{lstscalasmall}{\lstset{style=scala-color,basicstyle=\footnotesize\ttfamily}}{}

\usepackage{microtype}

\raggedbottom

\newcommand{\secfnt}[1]{\textbf{\textsf{#1}}}

\newcommand{\fig}[4][tbp]{\begin{figure}[#1]#4\Description{#3}\caption{\label{#2}#3}\end{figure}}

\definecolor{dark-gray}{gray}{0.5}
\definecolor{shade}{gray}{0.8}

\newcommand{\enableLstShortInline}{\lstMakeShortInline[style=scala-plain,flexiblecolumns=false,mathescape=true,basicstyle=\tt]@
}
\newcommand{\disableLstShortInline}{\lstDeleteShortInline@}

\newcommand{\F}{f\kern.1em}

\newcommand{\pref}[2]{\hyperref[#2]{#1\ref*{#2}}}

\newcommand{\Sec}[2][\S]{\pref{#1}{sec:#2}}
\newcommand{\Fig}[2][Fig.]{\pref{#1~}{fig:#2}}

\newcommand{\Lem}[2][Lemma]{\pref{#1~}{lem:#2}}
\newcommand{\Thm}[2][Theorem]{\pref{#1~}{thm:#2}}
\newcommand{\Cor}[2][Corollary]{\pref{#1~}{cor:#2}}
\newcommand{\Prop}[2][Prop.]{\pref{#1~}{prop:#2}}

\newcommand{\Item}[2][]{\pref{#1}{item:#2}}
\newcommand{\Feat}[2][F]{\pref{#1}{feat:#2}}

        \newcommand{\ie}{i.e.\ }
\newcommand{\Eg}{For example, }    \newcommand{\eg}{e.g.\ }

\newcommand{\cf}{cf.\ }
        
\newcommand{\wrt}{w.r.t.\ }

\newcommand{\FSub}{\ensuremath{F_{\leq}}}
\newcommand{\FSubC}{\ensuremath{F_{<:}}}
\newcommand{\FOmega}{\ensuremath{F_\omega}}
\newcommand{\KFOmegaSub}{\ensuremath{F^\omega_{<:}}}
\newcommand{\FOmegaSubC}{\ensuremath{\mathcal{F}^\omega_{\leq}}}

\newcommand{\FOmegaInt}{\ensuremath{F^\omega_{\displaystyle \cdot \cdot}}}

\newcounter{RuleRef}
\newcommand{\ruledef}[1]{\refstepcounter{RuleRef}#1\label{rule:#1}}
\newcommand{\ruledefN}[2]{\refstepcounter{RuleRef}#2\label{rule:#1}}
\newcommand{\ruleref}[1]{\textsc{\hyperref[rule:#1]{#1}}}
\newcommand{\rulerefP}[1]{\textsc{(\ruleref{#1})}}
\newcommand{\rulerefN}[2]{\textsc{\hyperref[rule:#1]{#2}}}
\newcommand{\rulerefPN}[2]{\textsc{(\rulerefN{#1}{#2})}}

\newcommand{\ts}[1][]{\vdash_{\sf #1}}
\newcommand{\nts}[1][]{\nvdash_{\sf #1}}
\newcommand{\tsNe}{\ts[ne]}
\newcommand{\tsVar}{\ts[var]}
\newcommand{\tsD}{\ts[d]}
\newcommand{\tsE}{\ts[e]}

\newcommand{\tsC}{\ts[c]}
\newcommand{\tsTf}{\ts[tf]}
\newcommand{\gap}{\quad\quad}
\newcommand{\orElse}{\hspace{1ex}\big|\hspace{1ex}}

\newcommand{\judgment}[2]{\smallskip \secfnt{#1} \hfill #2}

\newcommand{\hlbox}[1]{\colorbox{shade}{$#1$}}

\newcommand{\infruleLeft}[3]{\typicallabel{#1}\infrule[\ruledef{#1}]{#2}{#3}\typicallabel{}}

\newcommand{\infaxSimp}[2][]{$\typesetax{#2}$\ifthenelse{\isempty{#1}}{}{\hspace{\labelminsep}\sc(\ruledef{#1})}}
\newcommand{\infruleSimpBase}[3]{$\typesetrule{#2}{#3}$\ifthenelse{\isempty{#1}}{}{\hspace{\labelminsep}\sc #1}}
\newcommand{\infruleSimp}[3][]{\infruleSimpBase{\ifthenelse{\isempty{#1}}{}{(\ruledef{#1})}}{#2}{#3}}
\newcommand{\infruleSimpN}[4]{\infruleSimpBase{(\ruledefN{#1}{#2})}{#3}{#4}}

\newlength{\savedcolumnsep}
\newlength{\savedmulticolsep}
\newenvironment{nosepmulticols}[3][0pt]{\setlength{\savedcolumnsep}{\columnsep}\setlength{\savedmulticolsep}{\multicolsep}\setlength{\multicolsep}{#1}\setlength{\columnsep}{#2}\begin{multicols}{#3}}{\end{multicols}\setlength{\columnsep}{\savedcolumnsep}\setlength{\multicolsep}{\savedmulticolsep}}

\newenvironment{multicolenum}[2][]{\begin{nosepmulticols}[\topsep]{-2\parindent}{#2}\begin{enumerate}[#1]}{\end{enumerate}\end{nosepmulticols}}

\newcommand{\fv}{\mathsf{fv}}
\renewcommand{\dom}{\mathsf{dom}}

\newcommand{\kin}{:}
\newcommand{\tin}{:}

\newcommand{\ksn}{\,\rightrightarrows\,}
\newcommand{\kck}{\,\leftleftarrows\,}

\newcommand{\kas}{\mskip1mu{:}\mskip2mu}
\newcommand{\tas}{\mskip1mu{:}\mskip2mu}

\newcommand{\genjudg}{\!\mathcal{J}}
\newcommand{\judg}[2][\ts \genjudg]{#2 #1}
\newcommand{\judgD}[2][\genjudg]{\judg[\ts #1]{#2}}

\newcommand{\judgS}[2][\genjudg]{\judg[\ts #1]{#2}}
\newcommand{\judgC}[2][\genjudg]{\judg[\ts #1]{#2}}
\newcommand{\judgDD}[2][\genjudg]{\judg[\tsD #1]{#2}}
\newcommand{\judgEE}[2][\genjudg]{\judg[\tsE #1]{#2}}

\newcommand{\judgCC}[2][\genjudg]{\judg[\tsC #1]{#2}}

\newcommand{\ctx}[2][]{{#2 \; \mathsf{ctx}_{\sf #1}}}
\newcommand{\kind}[3][]{#2 \ts[#1] {#3 \; \mathsf{kd}}}
\newcommand{\type}[3][]{#2 \ts[#1] #3 \kin \kstar}
\newcommand{\ctxD}[1]{\ctx{#1}}
\newcommand{\ctxDD}[1]{\ctx[d]{#1}}
\newcommand{\kindD}[2]{\kind{#1}{#2}}
\newcommand{\kindDD}[2]{\kind[d]{#1}{#2}}
\newcommand{\typeD}[2]{\type{#1}{#2}}

\newcommand{\ctxE}[1]{\ctx[e]{#1}}
\newcommand{\kindE}[2]{\kind[e]{#1}{#2}}

\newcommand{\ctxS}[1]{{#1 \; \mathsf{ctxs}}}
\newcommand{\kindS}[2]{#1 \ts {#2 \; \mathsf{kds}}}
\newcommand{\typeS}[2]{\type{#1}{#2}}
\newcommand{\spS}[4]{#1 \ts #2 \kin #3 \kin #4}
\newcommand{\ctxC}[1]{\ctx{#1}}
\newcommand{\ctxCC}[1]{\ctx[c]{#1}}
\newcommand{\kindC}[2]{\kind{#1}{#2}}
\newcommand{\kindCC}[2]{\kind[c]{#1}{#2}}
\newcommand{\typeC}[2]{#1 \ts #2 \ksn #2 \intv #2}
\newcommand{\typeCCk}[2]{#1 \ts #2 \kck \kstar}

\newcommand{\typeCP}[2]{#1 \ts #2 \ksn (#2) \intv\mskip1mu (#2)}
\newcommand{\spC}[4]{#1 \ts #2 \ksn #3 \ksn #4}
\newcommand{\spCC}[4]{#1 \ts[c] #2 \ksn #3 \ksn #4}

\newcommand{\op}[1]{#1_{\sf n}}
\newcommand{\cst}[1]{#1_{\sf c}}

\newcommand{\alEq}{\equiv}

\newcommand{\ksub}{\leq}
\newcommand{\tsub}{\leq}
\newcommand{\tsup}{\geq}

\newcommand{\keq}{=}
\newcommand{\teq}{=}

\newcommand{\ctxEqD}[2]{\ctx{#1 \keq #2}}
\newcommand{\ctxEqC}[2]{\ctx{#1 \keq #2}}

\newcommand{\wkEq}{\approx}

\newcommand{\subst}[3]{#1[#3/#2]}  

\newcommand{\hsubst}[4]{#1[#4/{#2}^{\,#3}]} \newcommand{\rapp}[3]{\app{{#1} \,\cdot^{#2}}{#3}}

\newcommand{\etaExpRaw}[2]{\eta_{#1} #2}
\newcommand{\etaExp}[2]{\etaExpRaw{#1}{(#2)}}

\newcommand{\weakEtaExpRaw}[2]{\bar{\eta}_{#1} #2}
\newcommand{\weakEtaExp}[2]{\weakEtaExpRaw{#1}{(#2)}}

\newcommand{\nfRaw}{\mathsf{nf}}
\newcommand{\nf}[2]{\nfRaw_{#1}(#2)}
\newcommand{\nfCtx}[1]{\nf{}{#1}}

\newcommand{\reduces}[1][]{\longrightarrow#1}
\newcommand{\betastep}{\reduces[_\beta]}
\newcommand{\betared}{\reduces[_{\smash{\beta}}^*]}

\newcommand{\cbvstep}{\reduces[_{\sf v}]}
\newcommand{\cbvred}{\reduces[_{\sf v}^*]}

\newcommand{\skEq}{=_{\rm SK}}                        \newcommand{\skSub}{\leq_{\rm SK}}                    \newcommand{\skRed}[1][]{\Rrightarrow_{#1}}           \newcommand{\skExp}[1][]{\Lleftarrow_{#1}}            \newcommand{\skSig}{\Gamma_{\rm SK}}                  \newcommand{\skS}[1][]{\mathsf{S}}                    \newcommand{\skK}[1][]{\mathsf{K}}                    \newcommand{\skApp}[2]{#1 \odot #2}                   \newcommand{\skSRedVar}{S_{\sf r}}                    \newcommand{\skSExpVar}{S_{\sf e}}                    \newcommand{\skKRedVar}{K_{\sf r}}                    \newcommand{\skKExpVar}{K_{\sf e}}                        \newcommand{\skKRed}[2]{K_{\sf r} \, #1 \, #2}                \newcommand{\encode}[1]{\llbracket #1 \rrbracket}

\newcommand{\cempty}{\emptyset}

\newcommand{\kstar}{{*}}

\newcommand{\kempty}{\emptyset}

\DeclareMathOperator{\intv}{{.}{.}}
\newcommand{\intvP}{\intv\mskip1mu{}}

\newcommand{\hointv}[1]{\intv_{#1}}
\newcommand{\hointvP}[1]{\intv_{#1}\mskip1mu{}}

\newcommand{\ksingp}[1]{S(#1)}                \newcommand{\ksing}[2]{\ksingp{#1 \kin #2}}   

\newcommand{\kpowp}[1]{P(#1)}                  

\newcommand{\dfun}[3]{({#1}\kas{#2}) \to {#3}}

\newcommand{\sempty}{\epsilon}
\newcommand{\scons}[2]{#1 , #2}

\renewcommand{\Top}{\top}
\renewcommand{\Bot}{\bot}

\newcommand{\tmax}[1]{\Top_{#1}}
\newcommand{\tmin}[1]{\Bot_{#1}}

\newcommand{\kmax}[1]{\kstar_{#1}}

\newcommand{\call}[2]{\forall #1. \, {#2}}     \newcommand{\all}[3]{\call{{#1}\kas{#2}}{#3}}  

\newcommand{\fun}[2]{#1 \to #2}

\newcommand{\clam}[2]{\lambda #1. \, #2}       \newcommand{\lam}[3]{\clam{{#1}\tas{#2}}{#3}}  \newcommand{\Lam}[3]{\clam{{#1}\kas{#2}}{#3}}
\newcommand{\app}[2]{{#1} \, {#2}}
\newcommand{\appp}[3]{\app{\app{#1}{#2}}{#3}}
\newcommand{\apppp}[4]{\app{\appp{#1}{#2}{#3}}{#4}}

\newcommand{\ksimp}[1]{\lvert #1 \rvert}

\newcommand{\simpEq}[2]{\ksimp{#1} \alEq \ksimp{#2}}

\newcommand{\toElim}[1]{#1} \newcommand{\toType}[1]{#1}

\newcommand{\elet}{\mathbf{let} \; }
\newcommand{\ein}{\mathbf{in} \;}

\newcommand{\eletin}[2]{\elet #1 \; \ein #2}

\newcommand{\record}[1]{\set{\, #1 \,}}

\newcommand{\SupSec}[2][Appendix~]{\Sec[#1]{#2}}
\newcommand{\SupItem}[1]{\ref*{item:#1}}

\newcommand{\isInTheAppendix}{is included \hyperref[sec:modules]{in the appendix}}

\begin{document}

\title[A Theory of Higher-Order Subtyping with Type Intervals
(Extended Version)]{A Theory of Higher-Order Subtyping with Type Intervals}
\subtitle{Extended Version Including Supplementary Material}

\author{Sandro Stucki}
\orcid{0000-0001-5608-8273}             \affiliation{
\department{Computer Science and Engineering}
\institution{Chalmers University of Technology}
\city{Gothenburg}
\country{Sweden}                      }
\email{sandros@chalmers.se}             

\author{Paolo G. Giarrusso}
\affiliation{
\institution{Bedrock Systems Inc.}    \city{Berlin}
\country{Germany}                     }
\email{p.giarrusso@gmail.com}

\begin{abstract}
  The calculus of Dependent Object Types (DOT) has enabled a more
  principled and robust implementation of Scala, but its support for
  type-level computation has proven insufficient.
As a remedy, we propose \FOmegaInt{}, a rigorous theoretical
  foundation for Scala's higher-kinded types.
\FOmegaInt{} extends \KFOmegaSub{} with \emph{interval kinds}, which
  afford a unified treatment of important type- and kind-level
  abstraction mechanisms found in Scala, such as bounded
  quantification, bounded operator abstractions, translucent type
  definitions and first-class subtyping constraints.
The result is a flexible and general theory of higher-order subtyping.
We prove type and kind safety of \FOmegaInt{}, as well as weak
  normalization of types and undecidability of subtyping.
All our proofs are mechanized in Agda using a fully syntactic
  approach based on hereditary substitution.
\end{abstract}

\begin{CCSXML}
<ccs2012>
   <concept>
       <concept_id>10003752.10003790.10011740</concept_id>
       <concept_desc>Theory of computation~Type theory</concept_desc>
       <concept_significance>500</concept_significance>
       </concept>
   <concept>
       <concept_id>10003752.10010124.10010125.10010130</concept_id>
       <concept_desc>Theory of computation~Type structures</concept_desc>
       <concept_significance>500</concept_significance>
       </concept>
   <concept>
       <concept_id>10003752.10010124.10010131.10010134</concept_id>
       <concept_desc>Theory of computation~Operational semantics</concept_desc>
       <concept_significance>300</concept_significance>
       </concept>
   <concept>
       <concept_id>10003752.10010124.10010138.10010142</concept_id>
       <concept_desc>Theory of computation~Program verification</concept_desc>
       <concept_significance>300</concept_significance>
       </concept>
 </ccs2012>
\end{CCSXML}

\ccsdesc[500]{Theory of computation~Type theory}
\ccsdesc[500]{Theory of computation~Type structures}
\ccsdesc[300]{Theory of computation~Operational semantics}
\ccsdesc[300]{Theory of computation~Program verification}

\keywords{Scala, higher-kinded types, subtyping, type intervals,
  bounded polymorphism, bounded type operators, singleton kinds,
  dependent kinds, hereditary substitution}

\maketitle

\enableLstShortInline

\section{Introduction}
\label{sec:intro}

Modern statically typed programming languages provide powerful
type-level abstraction facilities such as parametric polymorphism,
subtyping, generic datatypes, and varying degrees of type-level
computation.
When several of these features are present in the same language, new
and more expressive combinations arise,
such as
\begin{inparaenum}[(F1)]
\item\label{feat:bd-quant} bounded quantification,
\item\label{feat:bd-op} bounded type operators, and
\item\label{feat:typ-def} translucent type definitions.
\end{inparaenum}
Such mechanisms further increase the expressivity of the language, but
also the complexity of its type system and, ultimately, its implementation.

A case in point is the Scala programming language.
Scala is a multi-paradigm language that integrates functional and
object-oriented concepts.
It features all of the aforementioned type-level constructs and many
more.
While programmers enjoy the expressivity of Scala's type system, the
language developers have struggled for years to manage its complexity
--
tracing down elusive compiler bugs and soundness issues in a
seemingly ad-hoc fashion.
The development of the calculus of Dependent Object Types
(DOT)~\cite{AminGORS16wf} marked a turning point.
By providing a solid theoretical foundation for a large part of the
Scala language, DOT inspired a complete re-design of the Scala
compiler as well as a substantial redesign of the language itself --
giving rise to Scala~3 \cite{Dotty20code}.
But despite initial hopes to the contrary, DOT has proven insufficient
to express Scala's \emph{higher-kinded} (HK) types,
which are used pervasively throughout the Scala standard
library~\cite{MoorsPO08fool}.
The lack of a proper theory of HK types severely complicates their
implementation in Scala~3 \cite{OderskyMP16scala}.
To address this problem, we introduce \FOmegaInt{},
a rigorous theoretical foundation for Scala's HK types with a
machine-checked safety proof.

We start by illustrating the use of HK types through an example.
\vspace{-.5\baselineskip}
\begin{lstscala}
  type Ordering[A] = (A, A) => Boolean
  class SortedView[A, B >: A](xs: List[A], ord: Ordering[B]) {
    def foldLeft[C](z: C, op: (C, A) => C): C = //...
    def concat[C >: A <: B](ys: List[C]): SortedView[C, B] = //...
    // definitions of further operations such as 'map', 'flatMap', etc.
  }
\end{lstscala}
\vspace{-.5\baselineskip}
The example is inspired by the definition of the class @SeqView.Sorted@
from the standard library.\footnote{See
  \url{https://www.scala-lang.org/api/2.13.6/scala/collection/SeqView$$Sorted.html}}
It is heavily simplified for conciseness and to avoid the use of Scala
idioms irrelevant to this paper.

The @SortedView@ class allows one to iterate over the elements of an
underlying list @xs@ in increasing order (via @foldLeft@) without
modifying the original list.
The second constructor argument @ord@ provides the comparison
operation used for sorting.
Note that @ord@ compares data of type @B@, which is declared a
supertype of the element type @A@ via the annotation @B >: A@.
Thus we may use a comparison function for a less precise type to sort
the elements in the view.\footnote{Expert readers may notice that \lstinline{Ordering} is contravariant
  in its argument, hence \lstinline{SortedView} could simply
take a parameter of type \lstinline{Ordering[A]} and still accept instances of \lstinline{Ordering[B]}.
But this simplification does not extend to the Scala standard
  library, where \lstinline{Ordering} is invariant due to additional
  methods; hence the lower bound is necessary.
For readability, we
use the simplified version of
\lstinline{Ordering} but keep the bound in the definition of
  \lstinline{SeqView}.}

In Scala, parametrized classes like @SortedView@ are first-class type
operators.
Hence, @SortedView@ is an instance of a lower-bounded type
operator~(\Feat{bd-op}).
The operator @Ordering[A]@ is just a convenient type alias for the
function type @(A, A) => Boolean@ of binary operators.
It is a \emph{transparent} type definition~(\Feat{typ-def}),
in that instances of @Ordering[A]@ can be
replaced by its definition anywhere in the program.
The method @concat@ illustrates the use of lower- and upper-bounded
polymorphism~(\Feat{bd-quant}) to combine views on lists of different
but related element types.
Views are covariant:
if @A@ is a subtype of @C@ then a view @v@ on an @A@-list is also a
view on a @C@-list and can be extended as such using @v.concat(ys)@.
The upper bound on @C@ ensures that the ordering remains applicable.
This pattern of using lower-bounds to implement operations on
covariant data structures is common in the Scala standard library.

Although bounded quantification~(\Feat{bd-quant}) and bounded
operators~(\Feat{bd-op}) may seem conceptually different from
translucent type definitions~(\Feat{typ-def}), all three are closely
related.
A type alias declaration of the form @X = A@ (such as @Ordering@ above) declares
that the type @X@ has type @A@ as upper and lower bound.
Because subtyping in Scala is antisymmetric, this effectively
identifies @X@ with @A@.
In general, a type declaration of the form
@X >: A <: B@,
introduces an abstract type @X@ that is bounded by @A@ from below and
by @B@ from above.
In other words, the declaration
@X >: A <: B@
specifies a \emph{type interval} in which @X@ must be contained.
Type aliases
take the form of \emph{singleton
  intervals}
@X >: A <: A@,
where the lower and upper bounds coincide.

In our example, the types @Ordering@ and @SortedView@ can be seen as abstract types.
We can write their declarations as\footnote{
This code is accepted by the Scala~2.13.6 compiler.
Ironically, it is rejected by Scala~3,
which is built on DOT.
}
\vspace{-.5\baselineskip}
\begin{lstscala}
  type Ordering[A] >: (A, A) => Boolean <: (A, A) => Boolean
  type SortedView[A, B >: A] <: {
    def foldLeft[C](z: C, op: (C, A) => C): C
    def concat[C >: A <: B](ys: List[C]): SortedView[C, B]   /* ... */   }
\end{lstscala}
\vspace{-.5\baselineskip}
This version is closer to how type definitions are represented in DOT.
The class @SortedView@ is now represented as an abstract type declaration
with only an upper bound.
This means that its interface remains exposed -- any instance of
@SortedView[A, B]@ can be up-cast to a record with fields @foldLeft@,
@concat@, etc.
But not every record (or class) containing those fields is
automatically an instance of the type @SortedView[A, B]@.
This ensures that @SortedView@ continues to behave like a nominal
type, as Scala classes are supposed to.\footnote{For details on how Scala classes may be encoded in DOT, we refer the reader to \citet[Chap.~2]{Amin16thesis}.}
Unlike @SortedView@, the abstract type @Ordering@ is both upper- and
lower-bounded.

Intervals can also encode abstract types missing a bound.
Scala features a pair of extremal types @Any@ and @Nothing@:
the maximal type @Any@ is a supertype of every other type;
the minimal type @Nothing@ is a subtype of every other type.
Abstract types @X@ with only an upper or lower bound @A@ thus inhabit
the degenerate intervals
@X >: Nothing <: A@
and
@X >: A <: Any@,
respectively.

This suggests a uniform treatment of \Feat{bd-quant}--\Feat{typ-def}
through type intervals, and indeed, this is essentially how bounded
quantification~(\Feat{bd-quant}) and type definitions~(\Feat{typ-def})
are modeled in DOT.
Unfortunately, DOT lacks intrinsics for higher-order type computation,
such as type operator abstractions and applications, preventing it
from encoding simple HK type definitions such as the identity operator
@type Id[X] = X@ \cite{OderskyMP16scala}.
Traditionally, \Feat{bd-quant}--\Feat{typ-def} have been studied
through orthogonal extensions of \citeauthor{Girard72thesis}'s higher-order
polymorphic $\lambda$-calculus \FOmega{}~\citeyearpar{Girard72thesis}.
Bounded higher-order subtyping (\Feat{bd-quant} and \Feat{bd-op}) has
been formalized in variants of~\KFOmegaSub{}
\cite{PierceS97tcs,CompagnoniG03ic}, translucent type definitions~(\Feat{typ-def}) through singleton kinds~\cite{StoneH00popl}.
The treatment of~\Feat{bd-quant} and~\Feat{typ-def} in DOT
suggest a different, unified approach to
studying~\Feat{bd-quant}--\Feat{typ-def}:
via a formal theory of higher-order subtyping with type
intervals, which we pursue with \FOmegaInt{}.

Our goal in doing so is twofold.
\begin{inparablank} \item First, we want to establish a theoretical foundation for Scala's
  HK types, with full support for type-level computations
  (including~\Feat{bd-quant}--\Feat{typ-def}).
A principled theoretical understanding of HK types is crucial
  because potential safety issues are hard to identify and fix by
  ``trial and error'' alone, especially when they arise from feature
  interactions.
\item Second, we want to study the concept of type intervals
in its own right.
Despite their apparent simplicity, adding type intervals
  to~\KFOmegaSub{} leads to a surprisingly rich theory of higher-order
  subtyping that goes beyond previous treatments
  of \Feat{bd-quant}--\Feat{typ-def}.
That is because type intervals encode first-class subtyping
  constraints, or \emph{type inequations}, similar to extensional
  identity types in Martin-L\"of type theory (MLTT).
\end{inparablank}

In DOT, type intervals are baked into abstract type members and
therefore tied to the use of \emph{path-dependent types};
in \FOmegaInt{}, we break this bond.
A DOT type member declaration $\record{ X \colon A \intv B }$ roughly
corresponds to a Scala-style type member declaration
@class C { type X >: A <: B }@.
It combines two separate type-system features:
the declaration of an abstract type member $X$ in a record or class,
and the declaration of subtyping constraints on $X$ via the bounds $A$
and $B$.
These two features are independent.
For example, the Agda programming language features unbounded abstract
type members via record types while the $F_{<:}$ calculus features
subtyping constraints via bounded quantification but no type members.
The notion of path-dependent types in DOT and Scala is intimately
linked to abstract type members.
Given an instance $z \kas \record{ X \colon A \intv B }$, the type
expression $z.X$ denotes the type value assigned to $X$ in $z$.
The type $z.X$ is \emph{path-dependent} because it depends on the
term-level expression $z$ (the ``path'' to $X$).
In DOT (but not Scala), type members are also used to model bounded
quantification.

In \FOmegaInt{}, we deliberately separate the notion of type intervals
from that of abstract type members (and path-dependent types) and drop
the latter.
This simplifies the theory and allows us to study the power of type
intervals in the context of higher-order subtyping:
for bounded quantification, bounded operator abstraction and
translucent type definitions -- all of which are independent of
path-dependent types, yet commonly used when working with Scala's HK
types.

We
leave the development of a combined theory of HK and
path-dependent types for future work, and focus here on the
theoretical and practical insights afforded by the novel combination
of \emph{higher-order subtyping with type intervals}.
Concretely, we make the following contributions.
\begin{enumerate}
\item We propose \emph{type intervals} as a unifying concept for
  expressing bounded quantification, bounded operator abstractions,
  and translucent type definitions.
Going beyond the status~quo, we show that type intervals are
  expressive enough to also cover less familiar constructs, such as
  lower-bounded operator abstractions and first-class inequations
  (\Sec{background}).

\item We introduce \FOmegaInt{} --~an extension of \FOmega{} with
  \emph{interval kinds}~-- as a formal calculus of higher-order
  subtyping with type intervals (\Sec{declarative}).
\FOmegaInt{} is the first formalization of Scala's higher-kinded
  types with a rigorous, machine-checked type safety proof.
As such, it
provides a theoretical foundation for
several important features of Scala-like type systems.

\item We establish important metatheoretic properties of our theory:
\begin{inparablank} \item kind safety (\Sec{declarative}),
  \item weak normalization of types (\Sec{normalization}),
  \item type safety (\Sec{canonical}), and
  \item undecidability of subtyping (\Sec{undec}).
  \end{inparablank}
The metatheoretic proofs are complicated substantially by the
  interaction of advanced type system features such as dependent
  kinds, subtyping and subkinding,
  and (in)equality reflection.
As others have
  recognized~\cite{AspinallC01tcs,Zwanenburg99tlca,YangO17oopsla}, the
  combination of dependent types (or kinds) and subtyping poses a
  particular challenge in metatheoretic developments.
The usefulness of our proof techniques thus extends beyond the scope
  of \FOmegaInt{} to other systems combining dependent types and
  subtyping.

\item The metatheoretic development is entirely syntactic (it involves
  no model constructions) and has been fully mechanized
  using the Agda proof assistant~\cite{Norell07thesis}.
The main technical device is a purely syntactic, bottom-up
  normalization procedure based on a novel variant of \emph{hereditary
    substitution} that computes the $\beta\eta$-normal forms of types
  and kinds (\Sec{normalization}).

\end{enumerate}
We outline our proof strategy in \Sec{strategy},
review related work in \Sec{related} and
give concluding remarks in \Sec{conclusions}.

Because of space constraints, we omit most proofs and many details of
the metatheory from the paper and focus instead on the big picture:
the design and expressiveness of \FOmegaInt{} as well as the many
challenges involved in proving its type safety and our strategies for
addressing them.
However, the complete metatheory, including full proofs of all lemmas
and theorems stated in the paper, has been mechanized in Agda, and the
source code is freely available as an artifact~\cite{artifact}.
An overview of the Agda formalization, establishing the connection to
the theory presented in the paper, \isInTheAppendix{}, along
with detailed human-readable descriptions of metatheoretic results
that have been omitted from the paper.
Yet more details can be found in the first author's
PhD~dissertation~\cite{Stucki17thesis}.

\newcommand{\allmeta}{\textit{All}}
\newcommand{\bndmeta}{\textit{Bounded}}
\newcommand{\allvar}{\texttt{All}}
\newcommand{\bndvar}{\texttt{Bounded}}

\section{Subtyping with Type Intervals}
\label{sec:background}

Before we define our formal theory of type intervals, let us illustrate the
core ideas
in a bit more detail.
Consider the following Scala type definitions.
\vspace{-.5\baselineskip}
\begin{lstscala}
  abstract class Bounded[B, F[_ <: B]] { def apply[X <: B]: F[X] }
  type All[F[_]] = Bounded[Any, F]
\end{lstscala}
\vspace{-.3\baselineskip}
The class @Bounded@ and the type alias @All@ are Scala encodings of
the bounded and unbounded universal quantifiers found in
\FSub{}~\cite{CurienG92mscs}. We chose this example for its brevity and because it exemplifies
the type-level mechanisms
found in more realistic definitions, such as those given in
\Sec{intro}.
In particular, it features bounded quantification (of @X <: B@ in
@apply@), a bounded operator (the parameter @F[_ <: B]@ of @Bounded@)
and a transparent type alias (@All@).

We want to translate these two definitions into a typed
$\lambda$-calculus.
The challenge is to find a type system that is expressive enough to do
so.
The example involves type-level computations, so our first candidate
is \citeauthor{Girard72thesis}'s higher-order polymorphic
$\lambda$-calculus \FOmega{}~\citeyearpar{Girard72thesis}, but it
lacks even basic support for subtyping.
Our next candidate is \KFOmegaSub{}, which extends \FOmega{} with
higher-order subtyping and bounded quantification.
But most variants of \KFOmegaSub{} lack support for bounded
operators~\cite[cf.][]{Pierce02tapl,PierceS97tcs}.
Thankfully, \citeauthor{CompagnoniG03ic} have developed \FOmegaSubC{},
a variant of \KFOmegaSub{} with bounded
operators~\citeyearpar{CompagnoniG03ic}.
\FOmegaSubC{} has four type variable binders:
\begin{alignat*}{5}
  \clam{X \tsub A &\kas K}{t} &\quad& \text{term-level type abstraction}&\qquad
  \call{X \tsub A &\kas K}{B} &\quad& \text{type-level bounded quantifier}\\
  \clam{X \tsub A &\kas K}{B} && \text{type-level type abstraction}&
  \fun{(X \tsub A &\kas K)}{J} && \text{kind-level dependent arrow}
\end{alignat*}
The type $A$ in a binding $X \tsub A \colon K$ is called the \emph{upper
  bound} of $X$, and must be of kind $K$.
To represent unconstrained bindings, \FOmegaSubC{} features a
\emph{top} type $\Top$, which is a supertype of every other type (like
@Any@ in Scala).
The bound $\Top$ in $X \tsub \Top \kin \kstar$ is thus trivially
satisfied and can be omitted.

Since operator abstractions carry bounds in \FOmegaSubC{}, so must
arrow kinds.
This makes arrow kinds type-dependent, which substantially complicates
the meta theory of \FOmegaSubC{} when compared to other variants of
\KFOmegaSub{}.
As usual, we abbreviate $\dfun{X}{J}{K}$ to $\fun{J}{K}$ when $X$ does
not occur freely in $K$.

A possible translation of @Bounded@ and @All@ to \FOmegaSubC{} is
\begin{align*}
  \bndmeta &\; \coloneq \;
  \Lam{B}{\kstar}{\Lam{F}{\dfun{X \tsub B}{\kstar}{\kstar}}{\all{X \tsub B}{\kstar}{\app{F}{X}}}} &
  \allmeta &\; \coloneq \; \app{\bndmeta}{\Top}
\end{align*}
The named, parametrized class @Bounded[B, F[_ <: B]]@ has been
replaced by a pair of nested anonymous operator abstractions taking
arguments
$B \colon \kstar$ and $F \colon \dfun{X \tsub B}{\kstar}{\kstar}$;
the signature @apply[X <: B]: F[X]@ of the method @apply@ by the bounded universal
$\all{X \tsub B}{\kstar}{\app{F}{X}}$.

The declared kinds of the variables $B$, $F$ and $X$ in the definition
of $\bndmeta$ indicate what sort of type they represent:
$B$ and $X$ are proper types, while~$F$ is a unary bounded
operator. This makes $\bndmeta$ itself a higher-order type operator of kind
$\fun{\kstar}{\fun{(\dfun{X \tsub B}{\kstar}{\kstar})}{\kstar}}$.
For example, we obtain the type of the polymorphic identity function
by applying $\bndmeta$
as follows:
\begin{align*}
  \appp{\bndmeta}{\Top}{(\Lam{X}{\kstar}{\fun{X}{X}})} \; &= \;
  \appp{(\Lam{B}{\kstar}{\Lam{F}{\dfun{X \tsub B}{\kstar}{\kstar}}{\all{X \tsub B}{\kstar}{\app{F}{X}}}})}{\Top}{(\Lam{X}{\kstar}{\fun{X}{X}})} \\
  &= \; \all{X}{\kstar}{\fun{X}{X}}.
\end{align*}
The translation of the type alias @All[F[_]]@ is then just the
partial application $\allmeta = \app{\bndmeta}{\Top}$.

The above definitions of $\bndmeta$ and $\allmeta$ are
meta-definitions, \ie they are just convenient shorthands for the type
expressions
$\Lam{B}{\kstar}{\Lam{F}{\dfun{X \tsub B}{\kstar}{\kstar}}{\all{X \tsub
      B}{\kstar}{\app{F}{X}}}}$ and $\app{\bndmeta}{\Top}$.
But we can also give object-level definitions of $\bndmeta$ and
$\allmeta$ in~\FOmegaSubC{}, using standard syntactic sugar for
let-binding type and term variables:
\begin{align*}
  \eletin{X \kin K = A}{t} \; &\coloneq \; \app{(\Lam{X}{K}{t})}{A}, &
  \eletin{x \tin B = s}{t} \; &\coloneq \; \app{(\Lam{x}{B}{t})}{s}.
\end{align*}
We can use $\bndmeta$ as an abstract type operator in a term $t$ by
let-binding it to a type variable:
\[
  \eletin{\bndvar \kin \dfun{B}{\kstar}{\fun{(\dfun{X \tsub B}{\kstar}{\kstar})}{\kstar}}
\; = \;
\Lam{B}{\kstar}{\Lam{F}{\dfun{X \tsub B}{\kstar}{\kstar}}{\all{X \tsub A}{\kstar}{\app{F}{X}}}}}{t}.
\]
This definition is \emph{opaque}, \ie the term $t$ sees the signature
of $\bndvar$, but not its definition.
Consider \begin{alignat*}{4}
  \elet &\bndvar \kin
  \dfun{B}{\kstar}{\fun{(\dfun{X \tsub B}{\kstar}{\kstar})}{\kstar}}
  \; &&= \; \dotsc &&\ein\\
  \elet &x \tin \all{X}{\kstar}{\fun{X}{X}} \;
  &&= \; \lam{X}{\kstar}{\lam{z}{X}{z}} \quad &&\ein
  &&\qquad \text{\em --- OK}\\
  \elet &y \tin \appp{\bndvar}{\Top}{(\lam{X}{\kstar}{\fun{X}{X}})} \;
  &&= \; \lam{X}{\kstar}{\lam{z}{X}{z}} &&\ein \dotsc
  &&\qquad \text{\em --- type error}
\end{alignat*}
The third definition does not type check because
$\bndvar \not\teq \Lam{B}{\kstar}{\Lam{F}{\dfun{X \tsub
      B}{\kstar}{\kstar}}{\all{X}{\kstar}{\app{F}{X}}}}$ as types,
despite the binding.
Indeed, \FOmegaSubC{} cannot express \emph{transparent} type
definitions.

Furthermore, \FOmegaSubC{} also lacks support for \emph{lower-bounded}
definitions.
As discussed in \Sec{intro}, these have important applications \eg in
the Scala standard library.
Both transparent and lower-bounded definitions, and all the features
of \FOmegaSubC{}, can be uniformly expressed using \emph{interval
  kinds}.

\subsection{Intervals and Singletons}

As the name implies, an interval kind $A \intv B$ is inhabited by a
range of proper types $C$, namely those that are supertypes
$C \tsup A$ of its lower bound $A$ and subtypes $C \tsub B$ of its
upper bound $B$.
Hence, kinding statements of the form $C \kin A \intv B$ are
equivalent to pairs of subtyping statements $A \tsub C$ and
$C \tsub B$.
We make this equivalence formal in the next section.

Since every proper type is a subtype of $\Top$, intervals of the form
$A \intv \Top$ are effectively unconstrained from above, and can thus
be used to encode \emph{lower-bounded} definitions.
Similarly, upper-bounded definitions can be expressed using intervals
of the form $\Bot \intv A$ where the lower bound is the minimum or
\emph{bottom} type $\Bot$, our equivalent of Scala's @Nothing@ type.
For example, we recover \FSub{}-style bounded quantifiers
$\call{X \tsub B}{A}$ as $\all{X}{\Bot \intv B}{A}$.
Interval kinds of the form $A \intv A$, where the lower and upper
bounds coincide, are called \emph{singleton kinds} or simply
\emph{singletons}.
Given $B \kin A \intv A$, we have both
$A \tsub B$ and $B \tsub A$, which, assuming an antisymmetric
subtyping relation, implies $A \teq B$.
Singleton kinds can thus encode transparent definitions and have been
studied for that purpose by \citet{StoneH00popl}.
We adopt their notation $\ksingp{A} = A \intv A$ for the singleton containing just $A$.

Using interval kinds, we refine our definitions of $\bndvar$ and
$\allvar$ to make them transparent.
\begin{alignat*}{4}
  &\elet &\bndvar &\kin \dfun{B}{\kstar}{\dfun{F}{\,\fun{\Bot \intv B}{\kstar}}{\ksingp{\all{X}{\Bot \intv B}{\app{F}{X}}}}}\\
&&&= \; \Lam{B}{\kstar}{\Lam{F}{\fun{\,\Bot \intv B}{\kstar}}{\all{X}{\Bot \intv B}{\app{F}{X}}}} &\; &\ein \\
  &\elet &\allvar &\kin \dfun{F}{\fun{\kstar}{\kstar}}{\ksingp{\appp{\bndvar}{\Top}{F}}}
\; = \; \app{\bndvar}{\Top}
  &&\ein \dotsc
\end{alignat*}
The signature of $\bndvar$ tells us that, when we apply it to suitable
type arguments $B$ and $F$, the result is both a subtype and a
supertype of $\all{X}{\Bot \intv B}{\app{F}{X}}$.
In other words, we have
$\appp{\bndvar}{B}{F} \teq \all{X}{\Bot \intv B}{\app{F}{X}}$ in the
body of the let-binding.
Similarly, we have $\app{\allvar}{F} \teq \appp{\bndvar}{\Top}{F}$, as
desired.

Interval kinds $A \intv B$ are only well-formed if $A$ and $B$ are
proper types, \ie of kind $\kstar$.
To express all of the binders found in \FOmegaSubC{}, we need a way to
encode bindings of the form $X \tsub A \kin K$, for arbitrary kinds
$K$.
As we will see in \Sec{declarative}, this is indeed possible because
\FOmegaInt{} can encode \emph{higher-order interval kinds}
$A \hointv{K} B$ for arbitrary $K$ using its other kind- and
type-level constructs.

\subsection{First-Class Inequations}
\label{sec:inconsistent}

Instances $C \kin A \intv B$ of an interval kind $A \intv B$ represent
types bounded by $A$ and $B$ respectively.
But they also represent proofs that $A \tsub C$ and $C \tsub B$, and
--~by transitivity of subtyping~-- that $A \tsub B$.
In other words, the inhabitants of interval kinds $A \intv B$
represent first-class \emph{type inequations} $A \tsub B$.
Similarly, higher-order intervals represent type operator inequations.
Interval kinds thus provide us with a mechanism for \emph{(in)equality
  reflection}, \ie a way to extend the subtyping relation via
assumptions made at the term- or type-level (via type abstractions).

Among other things, this allows us to postulate type operators with
associated subtyping rules through type variable bindings.
We will see an example of this in \Sec{undec}; other examples are
intersection types or equi-recursive types and their associated
subtyping theories (see \SupSec{custom_sub_theories} for a detailed
example).
This is possible because we do not impose any \emph{consistency
  constraints} on the bounds of intervals.
That is, an interval kind $A \intv B$ is well-formed, irrespective of
whether $A \tsub B$ is actually provable or not.
If we can prove that $A \tsub B$, we say that the bounds of
$A \intv B$ are \emph{consistent}.

Having both (in)equality reflection and inconsistent bounds makes
\FOmegaInt{} very expressive, but breaks subject reduction of open
terms and decidability of (sub)typing.
This is common in type theories with equality reflection (\eg
extensional MLTT~\citep{Nordstrom90mltt}) because they allow the
reflection of absurd assumptions.
For example, in a context where $Z \kin \Top \intv \Bot$, we have
$\Top \tsub Z \tsub \Bot$, \ie the subtyping relation becomes trivial,
and we can type non-terminating and stuck terms.
\begin{align*}
  &\app{(\lam{x}{\Top}{\app{x}{x}})}{(\lam{x}{\Top}{\app{x}{x}})}
  \tin \Top &
  &\text{\em --- non-terminating, but } \;
  \Top \; \tsub \; Z \; \tsub \; \Bot \; \tsub \; \fun{\Top}{\Top}\\
  &\app{(\Lam{X}{\kstar}{\lam{x}{X}{x}})}{(\lam{x}{\Top}{x})}
  \tin \Top &
  &\text{\em --- stuck, but } \;
  \all{X}{\kstar}{\fun{X}{X}} \; \tsub \; \Top \; \tsub \; Z \; \tsub
    \; \Bot \; \tsub \; \fun{\Top}{\Top}
\end{align*}
It is therefore unsafe to reduce terms under absurd assumptions in
general.
Note that these examples do not break type safety of \FOmegaInt{}
overall though.
The absurd assumption $Z \kin \Top \intv \Bot$ can never be
instantiated, and hence reduction of \emph{closed} terms remains
perfectly safe.
We discuss this point in more detail at the end of \Sec{declarative}.
The use of inconsistent bounds to prove undecidability of subtyping is
more subtle; we return to it in \Sec{undec}.

Seeing the trouble inconsistent bounds can cause, one may wonder why
we do not just enforce consistency of interval bounds statically.
There are several reasons.
\begin{itemize}[left=1pt .. \parindent]
\item \emph{Statically enforcing consistent bounds in Scala is hard.}
While we could statically enforce consistent bounds in \FOmegaInt{},
  \citet{AminRO14oopsla} have shown that this would not extend to
  systems closer to Scala;
a detailed explanation is given by
  \citet[Sec.~3.4.4,~4.2.3]{Amin16thesis}.

\item \emph{Inconsistent bounds are useful.}
Intervals with unconstrained bounds are useful, \eg to encode
  generalized algebraic datatypes (GADTs) via first-class inequality
  constraints~\cite{ParreauxBG19scala,CretinR14lics}.
Consistency of such constraints cannot be established when a GADT is
  defined, only when it is instantiated.

\item \emph{Decidability of subtyping could be recovered.}
Even if we enforced consistent bounds in \FOmegaInt{}, subtyping
  would likely remain undecidable because \FOmegaInt{}, like full
  \FSub{} and all variants of the DOT calculus, use a strong subtyping
  rule for universals that is a known sources of
  undecidability~\cite{Pierce92popl}.
Recent work by~\citet{HuL20popl} suggests a novel approach for
  algorithmic subtyping that handles both inconsistent bounds and strong
  subtyping for universals.
Whether or not their approach can be generalized to higher-order
  subtyping is a question we leave for future work.
\end{itemize}

\disableLstShortInline

\section{The Declarative System}
\label{sec:declarative}

In this section, we introduce \FOmegaInt{} -- our formal theory of
higher-order subtyping with type intervals.
We present its syntax and its type system, and establish some basic
metatheoretic properties
-- just enough to show that subject reduction holds for well-kinded
open types.
Finally, we
discuss the challenges involved in proving type safety, and outline
our strategy for doing so.

\subsection{Syntax}
\label{sec:syntax}

\fig[tb]{fig:syntax}{Syntax of \FOmegaInt.}{\begin{align*}
x, y, z&, \dotsc             \hspace{6em}\secfnt{Term variable}\hspace{3.2em}
X, Y, Z, \dotsc              &&\secfnt{Type variable}\\
s, t \; &::= \; x
  \orElse \lam{x}{A}{t}
  \orElse \app{s}{t}
  \orElse \Lam{X}{K}{t}
  \orElse \app{t}{A}         &&\secfnt{Term}\\
u, v \; &::= \; \lam{x}{A}{t}
  \orElse \Lam{X}{K}{t}      &&\secfnt{Value}\\
A, B, C \; &::= \; X
  \orElse \Top
  \orElse \Bot
  \orElse \fun{A}{B}
  \orElse \all{X}{K}{A}
  \orElse \lam{X}{K}{A}
  \orElse \app{A}{B}         &&\secfnt{Type}\\
\Gamma, \Delta \; &::= \; \cempty
  \orElse \Gamma, x \tas A
  \orElse \Gamma, X \kas K   &&\secfnt{Typing context}\\
J, K, L \; &::= \; A \intv B
  \orElse \dfun{X}{J}{K}     \hspace{1.7em}\secfnt{Kind}\hspace{3.2em}
j , k , l \; ::= \; *
  \orElse \fun{j}{k}         &\quad&\secfnt{Shape (simple kind)}
\end{align*}\vspace{-1.5em}
}
 
The syntax of \FOmegaInt{} is given in~\Fig{syntax}.
The syntax of terms and types is identical to that of \FOmega{} except
for the extremal type constants $\Top$ and $\Bot$.
The \emph{top type} $\Top$ is the maximal proper type:
any other proper type is a subtype of $\Top$.
Dually, the \emph{bottom type} $\Bot$ is the minimal proper type.
Following \citet{Pierce02tapl},
$\lambda$s
carry domain annotations.
This will become important in \Sec{decl_rules}, \Sec{raw_nf}
and \Sec{commute}.

The main differences between \FOmegaInt{} and other variants of
\FOmega{} are reflected in its kind language.
First, the usual kind of proper types~$\kstar$ is replaced by the
\emph{interval kind} former~$A \intv B$.
The interval $A \intv B$ is inhabited by exactly those proper types
that are supertypes of $A$ and subtypes of $B$.
The degenerate interval~$\Bot \intv \Top$ spans \emph{all} proper
types.
Hence we use $\kstar$ as a shorthand for $\Bot \intv \Top$.
Second, most variants of \FOmega{} have a \emph{simple} kind
language (as described by the non-terminal $k$ in~\Fig{syntax}).
In contrast, \FOmegaInt{} has a \emph{dependent} kind language.
The arrow kind $\dfun{X}{J}{K}$ acts as a binder for the type variable
$X$ which may appear freely in the codomain $K$.
Dependent kinds play an important role when modeling \emph{bounded
  type operators}.
For example, consider a binary type operator of kind
$\dfun{X}{\kstar}{\dfun{Y}{\Bot \intv X}{\kstar}}$.
The upper-bounded kind $\Bot \intv X$ of $Y$ ensures that the operator
can only be applied to types $A$, $B$ if $B$ is a subtype of $A$.
This idea goes back to \citeauthor{CompagnoniG03ic}'s~\FOmegaSubC,
which features both upper-bounded type operators and dependent arrow
kinds~\citeyearpar{CompagnoniG03ic}.

We abbreviate $\Bot \intv \Top$ by $\kstar$ and $\dfun{X}{J}{K}$ by
$\fun{J}{K}$ when $X$ is not free in $K$.
This allows us to treat
\emph{simple kinds} or \emph{shapes} $k$ as a subset of
(dependent) kinds $K$.
In the opposite direction, we define an \emph{erasure} map $\ksimp{K}$
which forgets any dependencies in
$K$
(see~\Fig{encodings}).
Given a kind~$K$, we say \emph{$K$~has shape~$\ksimp{K}$}.
Unlike kinds, shapes are stable under
substitution, \ie $\ksimp{\subst{K}{X}{A}} \alEq \ksimp{K}$.

Following \citet{Barendregt93lct}, we identify expressions $e$ (terms,
types and kinds), up to \emph{$\alpha$-equivalence} and assume that
the names of bound and free variables are distinct.
We write $e \alEq e'$
to stress that
$e$ and $e'$ are $\alpha$-equivalent.
The set of free variables of $e$ is denoted by $\fv(e)$,
and we write $\subst{e}{x}{t}$ and $\subst{e}{X}{A}$ for
capture-avoiding term and type substitutions in $e$, respectively.
We require that the variables bound in a typing context~$\Gamma$ be
distinct so that we may think of $\Gamma$ as a finite map
and use function notation, such as $\dom(\Gamma)$, $\Gamma(x)$,
$\Gamma(X)$. We write $(\Gamma, \Delta)$ for the concatenation of two
contexts~$\Gamma$ and~$\Delta$ with disjoint domains,
and we often omit the empty context $\cempty$, writing \eg
$\Gamma = x \tas A, Y \tas K$ instead of
$\Gamma = \cempty, x \tas A, Y \tas K$.

\subsubsection{Encodings}
\label{sec:encodings}

\fig[tb]{fig:encodings}{Syntactic shorthands and encodings}{\begin{minipage}[t]{0.18\linewidth}
  \judgment{Kind constants}{\vphantom{\fbox{$A \hointv{K} B$}}}
  \begin{align*}
    \kstar  & \; := \; \Bot \intv \Top \\
    \kempty & \; := \; \Top \intv \Bot
  \end{align*}
\end{minipage}\hfill
\begin{minipage}[t]{0.75\textwidth}
  \judgment{Higher-order type intervals}{\fbox{$A \hointv{K} B$} \,
    \fbox{\vphantom{$A$}$\kmax{K}$}}
  \[
    \begin{alignedat}{2}
      A &\hointv{A' \intv B'}    &&B \; := \; A \intv B \\
      A &\hointv{\dfun{X}{J}{K}} &&B \; := \;
      \dfun{X}{J}{\app{A}{X}\hointv{K}\app{B}{X}}\\
        &&&\phantom{B \; := \; {}} \text{for } X \notin \fv(A) \cup \fv(B)
    \end{alignedat}\hspace{2.8em}
    \begin{alignedat}{2}
      &\kmax{A \intv B}      &&\; := \; \Bot \intv \Top \\
      &\kmax{\dfun{X}{J}{K}} &&\; := \; \dfun{X}{J}{\kmax{K}}\\
      &~
    \end{alignedat}
  \]
\end{minipage}\bigskip

\begin{minipage}[t]{0.55\textwidth}
  \judgment{Higher-order extrema}{\fbox{$\tmax{K}$} \,
      \fbox{$\tmin{K}$}}
  \[
    \begin{alignedat}{2}
      &\tmax{A \intv B}      &&\; := \; \Top \\
      &\tmax{\dfun{X}{J}{K}} &&\; := \; \lam{X}{J}{\tmax{K}}
    \end{alignedat}\hspace{3em}
    \begin{alignedat}{2}
      &\tmin{A \intv B}      &&\; := \; \Bot \\
      &\tmin{\dfun{X}{J}{K}} &&\; := \; \lam{X}{J}{\tmin{K}}
    \end{alignedat}
  \]
\end{minipage}\hfill \begin{minipage}[t]{0.36\linewidth}
  \judgment{Type erasure}{\fbox{$\ksimp{K}$}}
  \begin{alignat*}{2}
      &\ksimp{A \intv B}      &&\; := \; \kstar \\
      &\ksimp{\dfun{X}{J}{K}} &&\; := \; \fun{\ksimp{J}}{\ksimp{K}}
  \end{alignat*}
\end{minipage}\bigskip

\judgment{Bounded quantification and type operators}{\hphantom{\fbox{$\ksimp{K}$}}}
\begin{align*}
    \all{X \tsub A}{K}{B}  \; &:= \; \all{X}{(\tmin{K}) \hointv{K} A}{B} &
    \dfun{X \tsub A}{J}{K} \; &:= \; \dfun{X}{(\tmin{J}) \hointv{J} A}{K}\\
    \Lam{X \tsub A}{K}{t}  \; &:= \; \Lam{X}{(\tmin{K}) \hointv{K} A}{t} &
    \lam{ X \tsub A}{K}{B} \; &:= \; \lam{X}{ (\tmin{K}) \hointv{K} A}{B}
\end{align*}\vspace{-1.5em}}
 
Together with the extremal types $\Top$ and $\Bot$, interval kinds
allow us to express \emph{bounded quantification} and \emph{bounded
  operators} over proper types.
For example, the \FSub{}-style universal type $\call{X \tsub A}{B}$
can be expressed as $\all{X}{\Bot \intv A}{B}$ in \FOmegaInt.
To extend this principle to \emph{higher-order} bounded quantification
and type operators, we define encodings for higher-order interval
kinds and extremal types via type abstraction and dependent kinds in
\Fig{encodings}.
The encoding of \emph{higher-order maxima}~$\tmax{K}$ is
standard~\citep[cf.][]{Pierce02tapl,CompagnoniG03ic}; that of
\emph{higher-order minima}~$\tmin{K}$ follows the same principle.
The encoding of \emph{higher-order interval kinds}~$A \hointv{K} B$
resembles that of higher-order \emph{singleton kinds} given
by~\citet{StoneH00popl}.
Indeed,
singleton kinds are just interval kinds where the upper and lower bounds coincide.
Encodings of higher-order \FOmegaSubC{}-style bounded operators and
universal quantifiers are also given in \Fig{encodings}.

\subsubsection{Structural Operational Semantics}
\label{sec:sos}

For computations in terms, we adopt the standard call-by-value~(CBV)
semantics given by~\citet[Fig.~30-1]{Pierce02tapl},
writing $t \cbvred t'$ when the term $t$ CBV-reduces in one or more
steps to $t'$.
For types and kinds, we define the \emph{one-step $\beta$-reduction}
relation $\betastep$ as the compatible closure of $\beta$-contraction
of type operators \wrt all the type and kind formers.
We write $\betared$ for its reflexive, transitive closure,
\emph{$\beta$-reduction}.

\subsection{Declarative Typing and Kinding}
\label{sec:decl_rules}

The static semantics of \FOmegaInt{} are summarized in
\Fig[Figs.]{decl_rules1} and~\ref{fig:decl_rules2}.
We refer to this set of judgments as the \emph{declarative} system, as
opposed to the \emph{canonical} system introduced in \Sec{canonical}.
We sometimes write $\judg{\Gamma}$ to denote an arbitrary judgment of
the declarative system. Throughout the paper, we silently assume that judgments are
\emph{well-scoped}, \ie
if $\judg{\Gamma}$ then $\fv(\genjudg) \subseteq \dom(\Gamma)$.
We now discuss each of the judgments, emphasizing novel rules.

\fig{fig:decl_rules1}{Declarative presentation of \FOmegaInt{} -- part 1.
Premises \hlbox{\text{in gray}} are validity conditions
  (see~\Sec{decl_basics_short}).}{\judgment{Context formation}{\fbox{$\ctxD{\Gamma}$}}
\begin{center}
  \begin{minipage}[t]{0.2\linewidth}
    \infruleLeft{C-Empty}{\quad}{\ctxD{\cempty}}
  \end{minipage}\hspace{1em}
  \begin{minipage}[t]{0.36\linewidth}
    \infruleLeft{C-TmBind}
    {\ctxD{\Gamma} \andalso \kindD{\Gamma}{K}}
    {\ctxD{\Gamma, X \kas K}}
  \end{minipage}\hspace{1em}
  \begin{minipage}[t]{0.35\linewidth}
    \infruleLeft{C-TpBind}
    {\ctxD{\Gamma} \andalso \typeD{\Gamma}{A}}
    {\ctxD{\Gamma, x \tas A}}
  \end{minipage}
\end{center}

\judgment{Kind formation}{\fbox{$\kindD{\Gamma}{K}$}}
\begin{multicols}{2}
  \typicallabel{K-Abs}
\infrule[\ruledef{Wf-Intv}]
  {\typeD{\Gamma}{A} \andalso \typeD{\Gamma}{B}}
  {\kindD{\Gamma}{A \intv B}}

\infrule[\ruledef{Wf-DArr}]
  {\kindD{\Gamma}{J} \andalso \kindD{\Gamma, X \kas J}{K}}
  {\kindD{\Gamma}{\dfun{X}{J}{K}}}
\end{multicols}

\judgment{Kinding}{\fbox{$\Gamma \ts A \kin K$}}
\begin{multicols}{3}
  \typicallabel{}
\infrule[\ruledef{K-Var}]
  {\ctxD{\Gamma}  \andalso  \Gamma(X) = K}
  {\Gamma \ts X \kin K}

\infrule[\ruledef{K-Top}]
  {\ctxD{\Gamma}}
  {\Gamma \ts \Top \kin \kstar}

\infrule[\ruledef{K-Bot}]
  {\ctxD{\Gamma}}
  {\Gamma \ts \Bot \kin \kstar}
\end{multicols}
\begin{multicols}{2}
  \typicallabel{K-Abs}
\infrule[\ruledef{K-Arr}]
  {\typeD{\Gamma}{A}  \andalso  \typeD{\Gamma}{B}}
  {\Gamma \ts \fun{A}{B} \kin \kstar}

\infrule[\ruledef{K-Abs}]
  {\kind{\Gamma}{J}  \andalso  \Gamma, X \kas J \ts A \kin K\\
   \hlbox{\kind{\Gamma, X \kas J}{K}}}
  {\Gamma \ts \lam{X}{J}{A} \kin \dfun{X}{J}{K}}

\infrule[\ruledef{K-Sing}]
  {\Gamma \ts A \kin B \intv C}
  {\Gamma \ts A \kin A \intv A}

\infrule[\ruledef{K-All}]
  {\kindD{\Gamma}{K}  \andalso  \typeD{\Gamma, X \kas K}{A}}
  {\Gamma \ts \all{X}{K}{A} \kin \kstar}

\infrule[\ruledef{K-App}]
  {\Gamma \ts A \kin \dfun{X}{J}{K}  \andalso  \Gamma \ts B \kin J\\
   \hlbox{\kind{\Gamma, X \kas J}{K}       \andalso
          \kind{\Gamma}{\subst{K}{X}{B}}}}
  {\Gamma \ts \app{A}{B} \kin \subst{K}{X}{B}}

\infrule[\ruledef{K-Sub}]
  {\Gamma \ts A \kin J  \andalso  \Gamma \ts J \ksub K}
  {\Gamma \ts A \kin K}
\end{multicols}

\judgment{Typing}{\fbox{$\Gamma \ts t \tin A$}}
\begin{multicols}{2}
  \typicallabel{T-Var}
\infrule[\ruledef{T-Var}]
  {\ctxD{\Gamma}  \andalso  \Gamma(x) = A}
  {\Gamma \ts x \tin A}

\infrule[\ruledef{T-Abs}]
  {\typeD{\Gamma}{A}  \andalso  \typeD{\Gamma}{B}\\
   \Gamma, x \tas A \ts t \tin B}
  {\Gamma \ts \lam{x}{A}{t} \tin \fun{A}{B}}

\infrule[\ruledef{T-App}]
  {\Gamma \ts s \tin \fun{A}{B}  \andalso  \Gamma \ts t \tin A}
  {\Gamma \ts \app{s}{t} \tin B}

\infrule[\ruledef{T-TAbs}]
  {\kindD{\Gamma}{K}  \andalso  \Gamma, X \kas K \ts t \tin A}
  {\Gamma \ts \Lam{X}{K}{t} \tin \all{X}{K}{A}}

\infrule[\ruledef{T-TApp}]
  {\quad\\
   \Gamma \ts t \tin \all{X}{K}{A}  \andalso  \Gamma \ts B \kin K}
  {\Gamma \ts \app{t}{B} \tin \subst{A}{X}{B}}

\infrule[\ruledef{T-Sub}]
  {\Gamma \ts t \tin A  \andalso  \Gamma \ts A \tsub B \kin \kstar}
  {\Gamma \ts t \tin B}
\end{multicols}\vspace{-1em}

}
 \fig{fig:decl_rules2}{Declarative presentation of \FOmegaInt{} -- part 2.
Premises \hlbox{\text{in gray}} are validity conditions
  (see~\Sec{decl_basics_short}).}{\judgment{Subkinding}{\fbox{$\Gamma \ts J \ksub K$}}
\begin{multicols}{2}
  \typicallabel{SK-Intv}
\infrule[\ruledef{SK-Intv}]
  {\quad\\
   \Gamma \ts A_2 \tsub A_1 \kin \kstar  \andalso
   \Gamma \ts B_1 \tsub B_2 \kin \kstar}
  {\Gamma \ts A_1 \intv B_1 \ksub A_2 \intv B_2}

\infrule[\ruledef{SK-DArr}]
  {\kind{\Gamma}{\dfun{X}{J_1}{K_1}}\\
   \Gamma \ts J_2 \ksub J_1  \andalso
   \Gamma, X \kas J_2 \ts K_1 \ksub K_2}
  {\Gamma \ts \dfun{X}{J_1}{K_1} \ksub \dfun{X}{J_2}{K_2}}
\end{multicols}

\judgment{Subtyping}{\fbox{$\Gamma \ts A \ksub B \kin K$}}
\begin{multicols}{2}
  \typicallabel{ST-Refl}
\infrule[\ruledef{ST-Refl}]
  {\Gamma \ts A \kin K}
  {\Gamma \ts A \tsub A \kin K}

\infrule[\ruledef{ST-Top}]
  {\Gamma \ts A \kin B \intv C}
  {\Gamma \ts A \tsub \Top \kin \kstar}

\infrule[\ruledefN{ST-Beta1}{ST-$\beta_1$}]
  {\Gamma, X \kas J \ts A \kin K  \andalso  \Gamma \ts B \kin J\\
   \hlbox{\Gamma \ts \subst{A}{X}{B} \kin \subst{K}{X}{B}}\\
   \hlbox{\kind{\Gamma, X \kas J}{K}       \andalso
          \kind{\Gamma}{\subst{K}{X}{B}}}}
  {\Gamma \ts \app{(\lam{X}{J}{A})}{B} \tsub \subst{A}{X}{B} \kin
    \subst{K}{X}{B}}

\infrule[\ruledefN{ST-Eta1}{ST-$\eta_1$}]
  {\Gamma \ts A \kin \dfun{X}{J}{K}  \andalso  X \notin \fv(A)}
  {\Gamma \ts \lam{X}{J}{\app{A}{X}} \tsub A \kin \dfun{X}{J}{K}}

\infrule[\ruledef{ST-Arr}]
  {\Gamma \ts A_2 \tsub A_1 \kin \kstar  \andalso
   \Gamma \ts B_1 \tsub B_2 \kin \kstar}
  {\Gamma \ts \fun{A_1}{B_1} \tsub \fun{A_2}{B_2} \kin \kstar}

\infrule[\ruledef{ST-Abs}]
  {\Gamma \ts \lam{X}{J_1}{A_1} \kin \dfun{X}{J}{K}\\
   \Gamma \ts \lam{X}{J_2}{A_2} \kin \dfun{X}{J}{K}\\
   \Gamma, X \kas J \ts A_1 \tsub A_2 \kin K}
  {\Gamma \ts \lam{X}{J_1}{A_1} \tsub \lam{X}{J_2}{A_2} \kin \dfun{X}{J}{K}}

\infrule[\ruledefN{ST-Bnd1}{ST-Bnd$_1$}]
  {\Gamma \ts A \kin B_1 \intv B_2}
  {\Gamma \ts B_1 \tsub A \kin \kstar}

\infrule[\ruledef{ST-Intv}]
  {\Gamma \ts A_1 \tsub A_2 \kin B \intv C}
  {\Gamma \ts A_1 \tsub A_2 \kin A_1 \intv A_2}

\infrule[\ruledef{ST-Trans}]
  {\Gamma \ts A \tsub B \kin K  \andalso  \Gamma \ts B \tsub C \kin K}
  {\Gamma \ts A \tsub C \kin K}

\infrule[\ruledef{ST-Bot}]
  {\Gamma \ts A \kin B \intv C}
  {\Gamma \ts \Bot \tsub A \kin \kstar}

\infrule[\ruledefN{ST-Beta2}{ST-$\beta_2$}]
  {\Gamma, X \kas J \ts A \kin K  \andalso  \Gamma \ts B \kin J\\
   \hlbox{\Gamma \ts \subst{A}{X}{B} \kin \subst{K}{X}{B}}\\
   \hlbox{\kind{\Gamma, X \kas J}{K}       \andalso
          \kind{\Gamma}{\subst{K}{X}{B}}}}
  {\Gamma \ts \subst{A}{X}{B} \tsub \app{(\lam{X}{J}{A})}{B} \kin
    \subst{K}{X}{B}}

\infrule[\ruledefN{ST-Eta2}{ST-$\eta_2$}]
  {\Gamma \ts A \kin \dfun{X}{J}{K}  \andalso  X \notin \fv(A)}
  {\Gamma \ts A \tsub \lam{X}{J}{\app{A}{X}} \kin \dfun{X}{J}{K}}

\infrule[\ruledef{ST-All}]
  {\typeD{\Gamma}{\all{X}{K_1}{A_1}}\\
   \Gamma \ts K_2 \ksub K_1  \quad\,
   \Gamma, X \kas K_2 \ts A_1 \tsub A_2 \kin \kstar}
  {\Gamma \ts \all{X}{K_1}{A_1} \tsub \all{X}{K_2}{A_2} \kin \kstar}

\infrule[\ruledef{ST-App}]
  {\Gamma \ts A_1 \tsub A_2 \kin \dfun{X}{J}{K}  \andalso
   \Gamma \ts B_1 \teq B_2 \kin J\\
   \hlbox{\Gamma \ts B_1 \kin J              \quad \kind{\Gamma, X \kas J}{K}         \quad \kind{\Gamma}{\subst{K}{X}{B_1}}}}
  {\Gamma \ts \app{A_1}{B_1} \tsub \app{A_2}{B_2} \kin \subst{K}{X}{B_1}}

\infrule[\ruledefN{ST-Bnd2}{ST-Bnd$_2$}]
  {\Gamma \ts A \kin B_1 \intv B_2}
  {\Gamma \ts A \tsub B_2 \kin \kstar}

\infrule[\ruledef{ST-Sub}]
  {\Gamma \ts A_1 \tsub A_2 \kin J  \andalso  \Gamma \ts J \ksub K}
  {\Gamma \ts A_1 \tsub A_2 \kin K}
\end{multicols}\bigskip

\begin{minipage}[2]{0.43\linewidth}
  \judgment{Kind equality}{\fbox{$\Gamma \ts J \keq K$}}

\infruleLeft{SK-AntiSym}
  {\Gamma \ts J \tsub K  \andalso  \Gamma \ts K \tsub J}
  {\Gamma \ts J \teq K}
\end{minipage}\hfill
\begin{minipage}[2]{0.51\linewidth}
  \judgment{Type equality}{\fbox{$\Gamma \ts A \keq B \kin K$}}

\infruleLeft{ST-AntiSym}
  {\Gamma \ts A \tsub B \kin K  \andalso  \Gamma \ts B \tsub A \kin K}
  {\Gamma \ts A \teq B \kin K}
\end{minipage}
\smallskip

}
 
\subsubsection{Context and Kind Formation}

The rules for context formation $\ctxD{\Gamma}$ are standard.
They ensure that the type and kind annotations of all bindings are
well-formed.
The rules of the remaining judgments are set up so that they can only
be derived in well-formed contexts.

Our kind formation judgment $\kindD{\Gamma}{K}$ ensures that
\begin{inparaenum}[(a)]
\item all types appearing in $K$ are well-kinded,
and
\item that the bounds of intervals are proper types (not $\lambda$s),
  forbidding for instance $\Bot \intv \lam{X}{K}{A}$.
\end{inparaenum}
The formation rule \ruleref{Wf-DArr} for dependent arrows is standard.
An interval $A \intv B$ is well-formed if
$A$ and $B$ are proper types.
As discussed in \Sec{inconsistent}, we choose not to enforce
$A \tsub B$;
\eg the empty kind $\kempty = \Top \intv \Bot$ is well-formed.
We say that the bounds of an interval $A \intv B$ are
\emph{consistent} in $\Gamma$ if $\Gamma \ts A \tsub B \kin \kstar$
and \emph{inconsistent} otherwise.
If $A$ and $B$ are closed types and $\nts A \tsub B \kin \kstar$ in
the empty context, we say that the bounds of $A \intv B$ are
\emph{absurd}.
For example, the bounds of $\kstar$ are always consistent
while those of $\Top \intv \Bot$ are absurd.\footnote{See~\Lem{st-inv} in \Sec{canon-inversion}.}
The bounds of $X \intv Y$ are inconsistent in
$\Gamma = X \kas \kstar, Y \kas \kstar$
but not absurd because $X$ and $Y$ are open types.

\subsubsection{Kinding and Typing}

The kinding rules \ruleref{K-Var}, \ruleref{K-Top}, \ruleref{K-Bot},
\ruleref{K-Arr} and~\ruleref{K-All} are all standard.
The rule~\ruleref{K-All} resembles
that found in~\FOmega{}:
no bound annotations are needed because the bounds of $X$ are
internalized in the kind $K$.
Similarly, \ruleref{K-Abs} and~\ruleref{K-App}, resemble those in~\FOmega{} more
than those in \KFOmegaSub{}. The rules \ruleref{K-Sing} and~\ruleref{K-Sub} are used to adjust the
kind of a type:
\ruleref{K-Sub} is the kind-level analog of~\ruleref{T-Sub}
(subsumption);
\ruleref{K-Sing} resembles \citeauthor{StoneH00popl}'s singleton
introduction rule~\citeyearpar{StoneH00popl}.
Note that \ruleref{K-Sing} only \emph{narrows} the kind of a type
whereas \ruleref{K-Sub} only \emph{widens} it.
The premise of \ruleref{K-Sing} may look a bit surprising:
why use $\Gamma \ts A \kin B \intv C$ instead of
$\Gamma \ts A \kin \kstar$~?
This extra flexibility is necessary to prove that types inhabiting
intervals are proper types, \ie that $\Gamma \ts A \kin B \intv C$
implies $\Gamma \ts A \kin \kstar$.
Thus the relaxed premise justifies itself.

The typing rules are again entirely standard, with the possible
exception of some additional context and kind formation premises that
would be redundant in variants of \FOmega{} with simple kinds.

\subsubsection{Subkinding and Subtyping}

\FOmegaInt{} features both subtyping and \emph{subkinding}.
The use of subkinding in~\FOmegaInt{} is both natural and essential.
If a type is contained in an interval $A \intv B$, then one naturally
expects it to also be contained in a \emph{wider} interval
$A' \intv B'$ where $A' \tsub A$ and $B \tsub B'$.
This is captured in the rule~\ruleref{SK-Intv}.
Subkinding is also essential.
It is thanks to \ruleref{SK-Intv} that we can express bounded polymorphism and bounded
type operators in~\FOmegaInt{}.
Consider \eg a polymorphic term
$t \tin \all{X}{\Bot \intv A}{B}$ and a type argument $C$ such that
$C \tsub A \kin \kstar$.
We can apply $t$ to $C$ because $C$ has kind $C \intv C$
(by~\ruleref{K-Sing}) which in turn is a subkind of $\Bot \intv A$
(by~\ruleref{SK-Intv}).
The rule~\ruleref{SK-DArr} lifts the interval containment order
through dependent arrow kinds.
It resembles \citeauthor{AspinallC01tcs}'s subtyping
rule~\textsc{(s-$\pi$)} for dependent product
types~\citet{AspinallC01tcs}.

Subtyping judgments $\Gamma \ts A \tsub B \kin K$ are indexed by the
common kind $K$ in which $A$ and $B$ are related.
Note that two types may be related in some kinds but not others.
For example, the extremal types $\Top$ and $\Bot$ are related as
proper types, \ie $\ts \Bot \tsub \Top \kin \kstar$, but not as
inhabitants of their respective singleton kinds $\Bot \intv \Bot$ and
$\Top \intv \Top$.
For a given context $\Gamma$ and kind $K$, the subtyping relation
$\Gamma \ts A \tsub B \kin K$ is a preorder, as witnessed by the rules~\ruleref{ST-Refl} and~\ruleref{ST-Trans}.
Note that there are no such rules for subkinding,
but it is easy to prove them admissible.
In \Sec{canonical}, we will see that some (but not all) instances of
\ruleref{ST-Refl} and~\ruleref{ST-Trans} can be eliminated too.

The rules~\ruleref{ST-Top} and~\ruleref{ST-Bot} establish $\Top$
and~$\Bot$ as the maximum and minimum proper types \wrt subtyping.
The premise $\Gamma \ts A \kin B \intv C$ ensures that the extremal
types are only related to other proper types.
As for \ruleref{K-Sing}, the
kind $B \intv C$ in the
premise allows us to prove that types inhabiting intervals are proper
types:
\vspace{-\baselineskip}
\begin{prooftree}
  \AxiomC{$\Gamma \ts A \kin B \intv C$}
  \LeftLabel{\rulerefP{K-Sing}}
  \UnaryInfC{$\Gamma \ts A \kin A \intv A$}
    \AxiomC{$\Gamma \ts A \kin B \intv C$}
    \LeftLabel{\rulerefP{ST-Bot}}
    \UnaryInfC{$\Gamma \ts \Bot \tsub A \kin \kstar$}
      \AxiomC{$\Gamma \ts A \kin B \intv C$}
      \RightLabel{\rulerefP{ST-Top}}
      \UnaryInfC{$\Gamma \ts A \tsub \Top \kin \kstar$}
    \RightLabel{\rulerefP{SK-Intv}}
    \BinaryInfC{$\Gamma \ts A \intv A \ksub \Bot \intv \Top$}
  \RightLabel{\rulerefP{K-Sub}}
  \BinaryInfC{$\Gamma \ts A \kin \Bot \intv \Top$}
\end{prooftree}
This derivation would not be possible if the rules~\ruleref{K-Sing},
\ruleref{ST-Bot} or~\ruleref{ST-Top} had premise
$\Gamma \ts A \kin \kstar$.

The rules~\rulerefN{ST-Beta1}{ST-$\beta_1$}
and~\rulerefN{ST-Beta2}{ST-$\beta_2$} correspond to
$\beta$-contraction and expansion, respectively.
Two separate rules are needed because subtyping is not symmetric.
We could have combined them into a single type equality rule but that
would have complicated the definition of type equality.
Similarly, the rules~\rulerefN{ST-Eta1}{ST-$\eta_1$}
and~\rulerefN{ST-Eta2}{ST-$\eta_2$} relate $\eta$-convertible types.
The rule for universals resembles
\ruleref{SK-DArr}.

Most variants of \KFOmegaSub{} separate subtyping of type operator
applications into a subtyping rule that only compares the heads of
applications, and a congruence rule for type equality \wrt
application.
Here we fuse these two rules into a single subtyping
rule~\ruleref{ST-App}.
Since we do not track the variance of type operators,
the arguments must be equal types.
Because arrow kinds are dependent,
either $B_1$ or $B_2$
must be substituted for $X$ in $K$ in the conclusion.
Both are equally suitable; we pick $B_1$.

The rule for subtyping operator abstractions, \ruleref{ST-Abs}, is
maybe the most unusual when compared to other variants of \FOmega{}
since it allows abstractions to be subtypes even if their domain
annotations $J_1$ and $J_2$ are not subkinds.
Other systems adopt weaker versions of this rule where $J \alEq J_2$ or even $J \alEq J_1 \alEq J_2$.
But such rules are not suitable for a theory featuring both subkinding
and $\eta$-equality.
Let $A \kin \fun{\kstar}{\kstar}$ be an operator and $B \kin \kstar$ a
type in $\Gamma$.
By \ruleref{K-Sub}, \ruleref{ST-Sub}, \rulerefN{ST-Eta1}{ST-Eta$_{1,2}$}
and antisymmetry, \[
  \Gamma \; \ts \;
  \lam{X}{\Bot \intv B}{\app{A}{X}}
  \; \teq \;
  A
  \; \teq \;
  \lam{X}{B \intv \Top}{\app{A}{X}}
  \; \kin \; \dfun{X}{B \intv B}{\kstar}
\]
\ie the two $\eta$-expansions of $A$ are equal as types, despite
having distinct domain annotations.
Because the $\eta$-rules allow such equations, we adopt a compatible
subtyping rule for abstractions.
The first two premises of~\ruleref{ST-Abs} ensure that both
abstractions --~irrespective of their domain annotations~-- inhabit
the common arrow kind $\dfun{X}{J}{K}$;
the remaining premise ensures that the bodies of the two abstractions
are pointwise subtypes assuming the common domain $X \kas J$.
Note that systems without domain annotations
avoid such complications~\citep[cf.][]{Abel08mscs}.

So far, we have seen how a type interval $A \intv B$ can be formed
using~\ruleref{Wf-Intv}, and introduced using~\ruleref{K-Sing}, but we
have yet to see how the bounds $A$ and $B$ of the interval can be put
to use.
Type intervals are ``eliminated'' by turning them into subtyping
judgments via a pair of \emph{bound projection}
rules~\rulerefN{ST-Bnd1}{ST-Bnd$_1$}
and~\rulerefN{ST-Bnd2}{ST-Bnd$_2$}.
Given a type $A \kin B \intv C$, the
rules~\rulerefN{ST-Bnd1}{ST-Bnd$_1$}
and~\rulerefN{ST-Bnd2}{ST-Bnd$_2$} assert that $B$ and $C$ are indeed
lower and upper bounds, respectively, of $A$.
When $A$ is a variable, we may use rule~\rulerefN{ST-Bnd2}{ST-Bnd$_2$}
to derive judgments of the form $\Gamma \ts X \tsub C$, similar to
those obtained using the variable subtyping rule from \FSub{}.
More generally, the bound projection rules allow us to \emph{reflect}
any well-formed assumption $\Gamma(X) = A \intv B$
--~consistent or not~--
into a corresponding subtyping judgment
$\Gamma \ts A \tsub B \kin \kstar$.
We discuss the ramifications this has for type safety in
\Sec{decl_typesafety}.

As for kinding judgments, there are two subtyping rules that allow us
to adjust the kinds of subtyping judgments, \ruleref{ST-Sub}
and~\ruleref{ST-Intv}.
The former is the analog of~\ruleref{K-Sub} for subtyping, whereas the
latter is the subtyping counterpart of the interval introduction
rule~\ruleref{K-Sing}:
if $A$ and $B$ are subtypes in some interval $C \intv D$, then surely
they are still subtypes in the interval $A \intv B$ bounded by
those very same types.
Indeed, $A \intv B$ is the smallest interval in which the two types
are related.
The \ruleref{ST-Intv} rule plays an important role in the proof of
subject reduction for types and kinds (\Thm{beta_red_teq}) because it
allows us to relate $\beta$-equal types inhabiting singleton kinds.

\subsubsection{Kind and Type Equality}

The kind and type equality judgments are each generated by exactly one
rule:
\ruleref{SK-AntiSym} for kind equality and \ruleref{ST-AntiSym}
for type equality.
In most variants of \FOmega{} with subtyping, the subtyping relation
is not defined to be antisymmetric.
Instead antisymmetry may or may not be an admissible property that has
to be proven~\cite[cf.][]{CompagnoniG99csl}.
In \FOmegaInt{}, antisymmetry is not an admissible property, however.
To see this, consider the context $\Gamma = X \kas A \intv A$ for some
proper type $A$.
Then $\Gamma \ts X \kin A \intv A$, and
we can derive that $X$ and $A$ are mutual subtypes
using~\rulerefN{ST-Bnd1}{ST-Bnd$_1$}
and~\rulerefN{ST-Bnd2}{ST-Bnd$_2$}.
But without antisymmetry we have no way to derive
$\Gamma \ts X \teq A \kin \kstar$.
Faced with this issue, we could have chosen to add a
\emph{singleton reflection} rule
for deriving $\Gamma \ts X \teq A \kin \kstar$ from
$\Gamma \ts X \kin A \intv A$ directly,
such as the one due to~\citet{StoneH00popl}.
Interestingly, antisymmetry for proper types is derivable from
\citeauthor{StoneH00popl}'s
rule and other rules about type intervals.
We conjecture that antisymmetry of subtyping under arbitrary kinds
would also have been admissible in such a system, albeit at the cost
of a more complicated type equality judgment.
We prefer the simpler judgment with an explicit antisymmetry rule.

\subsection{Basic Metatheoretic Properties}
\label{sec:decl_basics_short}

With the dynamics and statics in place, we can begin our work on the
metatheory of~\FOmegaInt.
Our system enjoys the usual basic metatheoretic properties, such as
preservation of all the judgments under weakening, substitution and
narrowing of contexts, as well as admissibility of the missing
order-theoretic and congruence rules for subkinding, kind equality and
type equality.
Although these properties constitute the foundation on which we build
the remainder of our metatheory, they are also entirely standard, and
little insight is gained by spelling them out in detail.
We therefore relegate them to
\SupSec{decl_details}.
There, the reader will also find a collection of
admissible rules that justify the encodings of the higher-order
extrema and interval kinds given in~\Sec{encodings}.
These include formation, subtyping and subkinding rules for the
encoded kinds, and typing rules for introducing and eliminating the
more familiar forms of bounded universals.
Unlike the other admissible rules, they are not important for the
remainder of the metatheoretic development.

There are two standard properties of the declarative system that are
exceptional in that their proofs are not routine inductions,
namely \emph{validity} of the various judgments and
\emph{functionality} of substitutions.
Roughly, a judgment is valid if all its parts are well-formed.
\begin{lemma}[validity]\label{lem:decl_validity}
~
\begin{enumerate}[leftmargin=12em,labelsep=1em,topsep=.5ex]\item[(kinding validity)]
If $\; \Gamma \ts A \kin K$,
    then $\; \ctxD{\Gamma}$ and $\; \kindD{\Gamma}{K}$.

\item[(typing validity)]
If $\; \Gamma \ts t \tin A$,
    then $\; \ctxD{\Gamma}$ and $\; \typeD{\Gamma}{A}$.

\item[(kind (in)equation validity)]
If $\; \Gamma \ts J \keq K$ or $\; \Gamma \ts J \ksub K$,
    then $\; \kindD{\Gamma}{J}$ and $\; \kindD{\Gamma}{K}$.

\item[(type (in)equation validity)]
    If $\; \Gamma \ts A \teq B \kin K$
    or $\; \Gamma \ts A \tsub B \kin K$,
    then $\; \Gamma \ts A \kin K$ and
    $\; \Gamma \ts B \kin K$.
  \end{enumerate}
\end{lemma}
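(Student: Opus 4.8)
The plan is to prove all four parts of \Lem{decl_validity} simultaneously by a single mutual induction on the derivations of the various judgments, since kinding, typing, subkinding, subtyping and the equality judgments are all mutually dependent through subsumption, the gray validity premises and the antisymmetry rules. Before attempting this, however, I would want to have in hand the standard structural lemmas that the excerpt relegates to the appendix, in particular weakening and the substitution lemma for all judgments, and the fact that $\kindD{\Gamma}{A \intv B}$ holds whenever $\typeD{\Gamma}{A}$ and $\typeD{\Gamma}{B}$ (immediate from \ruleref{Wf-Intv}). I would also use the admissibility of reflexivity and transitivity for subkinding, as noted in the text, and the basic fact that $\kstar = \Bot \intv \Top$ is well-formed in any well-formed context (from \ruleref{K-Bot} and \ruleref{Wf-Intv}).

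The bulk of the argument is routine: for each rule, assume the statement for the immediate subderivations and assemble the conclusion. Most kinding rules carry exactly the context- and kind-formation premises needed --- this is precisely why the gray validity conditions were added to rules like \ruleref{K-Abs}, \ruleref{K-App}, \rulerefN{ST-Beta1}{ST-$\beta_1$}, \rulerefN{ST-Beta2}{ST-$\beta_2$} and \ruleref{ST-App}; so for those rules the relevant well-formedness facts are literally available as hypotheses. For \ruleref{K-Var} and \ruleref{K-Top}, \ruleref{K-Bot}, one needs that the context-formation judgment entails well-formedness of the kinds it records, which is a straightforward side induction on $\ctxD{\Gamma}$ (inversion on \ruleref{C-TmBind}/\ruleref{C-TpBind}), combined with weakening to move $\kindD{\Gamma'}{K}$ past the later bindings. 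For \ruleref{K-Sub} and \ruleref{ST-Sub} one invokes (kind (in)equation validity) on the subkinding premise to recover $\kindD{\Gamma}{K}$. For the typing rules, \ruleref{T-App} needs inversion: from $\typeD{\Gamma}{\fun{A}{B}}$, invert \ruleref{K-Arr} (after possibly peeling \ruleref{K-Sub}/\ruleref{K-Sing}, which on proper types only re-derive $\kstar$) to extract $\typeD{\Gamma}{B}$; similarly \ruleref{T-TApp} needs the substitution lemma applied to $\typeD{\Gamma, X \kas K}{A}$, obtained by inverting \ruleref{K-All} on the type of the function. The subtyping and subkinding cases feed (type/kind (in)equation validity) back into themselves and then repackage the premises; \ruleref{ST-Intv} and \ruleref{K-Sing} deserve a moment's care --- there one must show that $\kindD{\Gamma}{A \intv B}$ from $\Gamma \ts A \tsub B \kin C \intv D$, which follows because the premise's type-equation validity gives $\Gamma \ts A \kin C \intv D$ and $\Gamma \ts B \kin C \intv D$, whence $\typeD{\Gamma}{A}$ and $\typeD{\Gamma}{B}$ by the ``inhabitants of intervals are proper types'' derivation displayed in \Sec{decl_rules}, and then \ruleref{Wf-Intv} applies.

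The main obstacle is that this is emphatically \emph{not} a clean structural induction: several rules (\eg \ruleref{T-App}, \ruleref{T-TApp}, and the inversions just mentioned) require applying validity to a conclusion whose derivation is not a direct subderivation but is obtained via inversion or substitution on a subderivation, so the induction does not visibly decrease. The standard fix --- the one I would pursue --- is to strengthen the induction so that it proceeds simultaneously with the substitution lemma (functionality of substitutions is flagged in the excerpt as the other non-routine property, and the two must be proven together), or alternatively to set up the induction on a suitable size measure that is preserved by the relevant inversions. Concretely, I would prove by simultaneous induction: (i) validity, (ii) the substitution lemma, and (iii) the narrowing lemma, carefully ordering the cases so that when validity on a non-subderivation is needed, it is only needed for a derivation already handled at an earlier stage of the measure. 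Threading these dependencies correctly --- ensuring there is no circularity between ``validity needs substitution'' and ``substitution needs validity'' --- is the delicate part and the step most likely to require the machine-checked bookkeeping that the authors did in Agda.
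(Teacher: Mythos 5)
You have correctly located the crux---the circular dependency between validity and functionality, caused by having to apply the induction hypothesis to derivations obtained \emph{via} validity, which are not subderivations---but the fix you propose is the one the paper explicitly rules out. Proving validity simultaneously with the substitution and narrowing lemmas does not address the problem: the plain substitution lemma (\Lem{decl_subst}) is proved independently, before validity, and is not the circular partner; the partner is functionality of substituting judgmentally \emph{equal} types (\Lem{decl_funct}), and there the bad case (\eg the \ruleref{K-Abs} case of functionality, or \ruleref{ST-App} on the validity side) needs the IH on a kind-formation derivation produced by kinding validity, whose size bears no relation to the derivation being inducted on. The paper states in \Sec{decl_basics_short} that simultaneous induction ``is not enough to resolve the circular dependency,'' and you offer no concrete size measure that would be preserved under such validity-generated detours; it is not apparent that one exists. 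The paper's actual route, following \citet{HarperP05tocl}, is different: temporarily \emph{extend} the rules with the gray validity conditions, prove a strengthened functionality lemma (with explicit context-equality hypotheses) and then validity for the extended system---where the needed well-formedness facts genuinely are premises---then prove the extended and original systems equivalent, and only then transfer \Lem{decl_validity} to the original system.

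This also exposes a second gap: you treat the gray premises as ``literally available as hypotheses,'' but \Lem{decl_validity} is stated for the original declarative system, in which those premises do not exist. If you assume them, you are proving validity only for the extended system and still owe the equivalence proof (in particular completeness, \ie that each validity condition is derivable from the remaining premises, which itself relies on validity of the extended system); if you do not assume them, the cases you pass over quickly---\ruleref{ST-App}, \rulerefN{ST-Beta1}{ST-$\beta_{1}$}/\rulerefN{ST-Beta2}{ST-$\beta_{2}$}, and \ruleref{K-Abs} inside functionality---fail for exactly the reason above (for \ruleref{ST-App} you must also use functionality plus kind conversion to bring $\app{A_2}{B_2}$ to kind $\subst{K}{X}{B_1}$, a step your sketch does not mention). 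Your routine cases are fine: context lookup with weakening, \ruleref{K-Sub}/\ruleref{ST-Sub} via subkinding validity, the \ruleref{K-Sing}/\ruleref{ST-Intv} cases via the ``inhabitants of intervals are proper types'' argument (the paper's \Lem{ST-Intv-Star}), and generation for \ruleref{T-App}/\ruleref{T-TApp}. But the essential idea of the paper's proof---the extended system and its equivalence with the original---is missing, and without it (or some genuinely new well-founded measure, which you have not supplied) the induction does not go through.
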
\noindent
The validity lemma provides a ``sanity check'' for the static
semantics, but it also plays a crucial role in the proofs of other
important properties, such as subject reduction or soundness of type
normalization.
Unfortunately, it is harder to prove than one might expect.
The proofs of kinding, subkinding and subtyping validity require the
following \emph{functionality} lemma for the case of \ruleref{ST-App}.
\begin{lemma}[functionality]\label{lem:decl_funct}
  Let $\Gamma \ts A_1 \teq A_2 \kin K$.
  \begin{enumerate}\item If $\; \kindD{\Gamma, X \kas K}{J}$, then
    $\Gamma \ts \subst{J}{X}{A_1} \keq \subst{J}{X}{A_2}$.
  \item If $\; \Gamma, X \kas K \ts B \kin J$, then
    $\Gamma \ts \subst{B}{X}{A_1} \keq \subst{B}{X}{A_2} \kin
    \subst{J}{X}{A_1}$.
  \end{enumerate}
\end{lemma}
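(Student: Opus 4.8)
The plan is to prove the two parts of the functionality lemma simultaneously by induction, using a single generalized induction hypothesis that bundles together substitution-congruence for kinds, for kinding derivations, and also (crucially) for subkinding and subtyping derivations. The reason for the bundling is that the kinding rules \ruleref{K-Abs} and \ruleref{K-App} carry kind-formation premises, and \ruleref{K-Sub} carries a subkinding premise, so the induction for part~(2) cannot proceed without already knowing functionality for part~(1) and for subkinding; conversely the subkinding rule \ruleref{SK-Intv} mentions subtyping, and \ruleref{SK-DArr} mentions kind-formation and subkinding under an extended context, so all these relations must travel together. Concretely, I would state a combined lemma: assuming $\Gamma \ts A_1 \teq A_2 \kin K$, then for every derivation $\mathcal{D}$ in context $\Gamma, X \kas K, \Delta$ (where $\Delta$ is an arbitrary context extension), substituting $A_1$ for $X$ yields a derivation related to the one obtained by substituting $A_2$, in the appropriate sense: well-formed kinds become kind-equal, well-kinded types become type-equal at the substituted kind, subkindings become subkindings between the $A_1$-substituted endpoints (with $A_1$ vs.\ $A_2$ reconciled by symmetry and transitivity), and similarly for subtyping. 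The $\Delta$ is needed so the IH applies under binders.

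The key steps, in order, are: (i)~first establish the prerequisite toolkit — weakening, ordinary (single-variable) substitution for all judgments, narrowing, and the admissible reflexivity/transitivity/congruence rules for subkinding and type equality — all of which are cited as available from \SupSec{decl_details}; (ii)~set up the combined statement with the context-extension parameter $\Delta$; (iii)~run the induction on the structure of the given derivation, handling the variable case $X$ itself by returning $\Gamma \ts A_1 \teq A_2 \kin K$ (weakened by $\Delta$), handling other variables and the constants $\Top,\Bot$ by reflexivity, and pushing congruence through the structural rules using the IH on subderivations, re-extending $\Delta$ when descending under a binder; (iv)~in the rules with substitution in the conclusion — \ruleref{K-App}, \ruleref{T-TApp}, and the $\beta$-rules — combine the IH-produced equalities using the ordinary substitution lemma and functionality for kinds to reconcile $\subst{K}{Y}{B_1}[\dots]$ against $\subst{K}{Y}{B_2}[\dots]$; (v)~for \ruleref{K-Sub} / \ruleref{ST-Sub}, apply the IH to both the subject derivation and the subkinding premise and recombine with \ruleref{K-Sub}/\ruleref{ST-Sub}; (vi)~for \ruleref{K-Sing} and \ruleref{ST-Intv} use that the conclusion's kind is built from the (now type-equal) subjects themselves, so congruence of interval kinds closes the case.

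The main obstacle I expect is the interaction of the substituted-kind conclusions with antisymmetry-generated type equalities in the presence of dependent kinds: when a rule such as \ruleref{ST-App} or \ruleref{K-App} puts $\subst{K}{Y}{B_1}$ in the conclusion, the two substitution instances $\subst{\subst{K}{Y}{B_1}}{X}{A_1}$ and $\subst{\subst{K}{Y}{B_1}}{X}{A_2}$ differ, and to land back in a single kind one must invoke part~(1) of the very lemma being proved (functionality for kinds) together with the subsumption rules \ruleref{K-Sub}/\ruleref{ST-Sub} — this is exactly the circularity that forces the simultaneous induction, and getting the induction measure right (so that the kind-functionality call is on a structurally smaller derivation, or at least on a derivation whose size is controlled by validity, \Lem{decl_validity}, if that is already available here) is the delicate point. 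A secondary subtlety is the $\eta$-rules \rulerefN{ST-Eta1}{ST-$\eta_1$} and \rulerefN{ST-Eta2}{ST-$\eta_2$}, where the side condition $X \notin \fv(A)$ must be maintained after substitution — harmless since we substitute for a different variable, but worth checking — and the domain-annotation mismatch tolerated by \ruleref{ST-Abs}, which means the congruence step there must reconstruct the common arrow kind $\dfun{X}{J}{K}$ from its (now equal) substituted forms rather than comparing the annotations pointwise.
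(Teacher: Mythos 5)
There is a genuine gap, and it is exactly the point you hedge on at the end: your plan needs part~(1) on kind-formation derivations that are \emph{not} sub-derivations of the kinding derivation being analysed, and you hope validity ``is already available here'' --- it is not. In this development \Lem{decl_validity} is itself proved \emph{using} functionality, so you cannot appeal to it; and bundling subkinding and subtyping into the simultaneous induction does not repair this. Concretely, in the \ruleref{K-Abs} case of part~(2), with $B = \lam{Y}{J_1}{B'}$ and premise $\Gamma, X \kas K, Y \kas J_1 \ts B' \kin J_2$, you must place $\subst{(\lam{Y}{J_1}{B'})}{X}{A_2}$ in the kind $\subst{(\dfun{Y}{J_1}{J_2})}{X}{A_1}$ before \ruleref{ST-Abs}/\ruleref{ST-AntiSym} can fire; that requires the kind equation $\subst{J_2}{X}{A_1} \keq \subst{J_2}{X}{A_2}$, i.e.\ part~(1) applied to a formation derivation of the codomain $J_2$. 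But \ruleref{K-Abs} has no premise $\kindD{\Gamma, X \kas K, Y \kas J_1}{J_2}$; the only route to such a derivation is kinding validity applied to the body premise, which is both unavailable (circular) and, even if granted, not a sub-derivation, so the structural IH cannot be applied to it. The same failure recurs in \ruleref{K-App} and \ruleref{ST-App}. Your extra bundle buys nothing here, and is in fact unnecessary: the \ruleref{K-Sub}/\ruleref{ST-Sub} premises are discharged by the ordinary substitution lemma (\Lem{decl_subst}) plus subsumption and antisymmetry, not by functionality for subkinding or subtyping.

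The paper breaks the circularity by a different device, following \citet{HarperP05tocl}: the rules \ruleref{K-Abs}, \ruleref{K-App}, \rulerefN{ST-Beta1}{ST-$\beta_{1,2}$} and \ruleref{ST-App} are temporarily extended with redundant \emph{validity conditions} (the gray premises), functionality and validity are proved for the extended system --- where the missing kind-formation derivations are now genuine premises, so the IH applies --- and the two systems are then shown equivalent, after which the stated lemma follows for the original rules. The extended functionality statement is also strengthened roughly as you anticipate (arbitrary position in the context, explicit hypotheses $\Gamma \ts A_1 \kin K$ and $\Gamma \ts A_2 \kin K$, and equal target contexts), and is proved only for kind formation and kinding, not for subkinding or subtyping. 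Your generalization over $\Delta$ and most of your case analysis are fine as far as they go, but without the extended system (or some comparable mechanism supplying those kind-formation derivations as induction-accessible premises) the induction you describe does not go through.
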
\noindent
The proof of functionality, in turn, depends on kinding and subtyping
validity.
Proving the two statements by simultaneous induction
is not enough to
resolve the circular dependency because the proof of functionality
would require us to apply the IH to derivations obtained via validity.
Since these are not generally sub-derivations of the relevant premise,
the induction does not go through.

Instead, we follow \citet{HarperP05tocl} and establish validity by
``temporarily extending'' certain rules of the declarative system with
additional premises, which we call \emph{validity conditions},
shown in gray in \Fig[Figs.]{decl_rules1} and~\ref{fig:decl_rules2}.
We then prove functionality and validity for the extended system, show
that the two systems are equivalent and the validity conditions are
redundant after all, and obtain \Lem{decl_validity} for the original
system.
For details see \SupSec{decl_validity_full}.

\subsection{Subject Reduction for Well-Kinded Types}
\label{sec:kind_pres}

For most versions of \FOmega{}, subject reduction
for types is easy to prove because types are simply-kinded.
In \FOmegaInt{}, the proof is complicated by the presence of
type-dependent kinds and subkinding.
However these complications are minor since there are only two
\emph{shapes} of kinds --~intervals and arrows~-- with exactly one
subkinding rule per shape.
Hence subkinding is easy to invert.

To prove subject reduction, we show that $\beta$-reduction steps can be lifted to type and kind equality.
\begin{theorem}\label{thm:beta_red_teq}
  ~
  \begin{enumerate}\item If $\; \kindD{\Gamma}{J}$ and $J \betastep K$, then
    $\Gamma \ts J \keq K$.
  \item If $\; \Gamma \ts A \kin K$ and $A \betastep B$, then
    $\Gamma \ts A \teq B \kin K$.
\end{enumerate}
\end{theorem}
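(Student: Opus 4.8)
The plan is to prove parts~(1) and~(2) simultaneously, by induction on the derivations of $\kindD{\Gamma}{J}$ and $\Gamma \ts A \kin K$, with a case analysis on the $\beta$-step. Because $\betastep$ is the compatible closure of $\beta$-contraction, each step is either a root contraction $\app{(\lam{X}{J'}{C})}{D} \betastep \subst{C}{X}{D}$ or a congruence step inside a proper subexpression, and the latter is governed entirely by the kinding rule at the root of the derivation. Almost every case is routine: for a congruence step one applies the induction hypothesis to the relevant sub-derivation and then closes with the matching admissible congruence rule for kind or type equality (from \Sec{decl_basics_short}); when the reducing subexpression is the domain annotation of a binder -- as in \ruleref{Wf-DArr}, \ruleref{K-All}, \ruleref{K-Abs} -- one additionally uses narrowing and reflexivity of subkinding to retype the body under the new, narrower bound, then applies the relevant variance rule (\ruleref{SK-DArr}, \ruleref{ST-All}, \ruleref{ST-Abs}) in both directions and closes with \ruleref{SK-AntiSym} or \ruleref{ST-AntiSym}. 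The kind-adjustment cases are just as short: for \ruleref{K-Sub}, apply the induction hypothesis at the smaller kind and re-subsume with \ruleref{ST-Sub}; for \ruleref{K-Sing}, the induction hypothesis gives $\Gamma \ts A \teq B \kin B' \intv C'$, hence $A$ and $B$ are mutual subtypes, and \ruleref{ST-Intv} followed by \ruleref{SK-Intv} and \ruleref{ST-Sub} re-indexes that equality at the singleton $A \intv A$ -- this is precisely the use of \ruleref{ST-Intv} alluded to above.

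The one case with real content is the root contraction, which occurs under \ruleref{K-App}: here $A = \app{(\lam{X}{J''}{C})}{D}$, $B = \subst{C}{X}{D}$, and the \ruleref{K-App} premises supply $\Gamma \ts \lam{X}{J''}{C} \kin \dfun{X}{J'}{K'}$, $\Gamma \ts D \kin J'$, $K = \subst{K'}{X}{D}$, together with the validity conditions $\kindD{\Gamma, X \kas J'}{K'}$ and $\kindD{\Gamma}{\subst{K'}{X}{D}}$. The crux is an inversion of \ruleref{K-Abs} up to subkinding: from $\Gamma \ts \lam{X}{J''}{C} \kin \dfun{X}{J'}{K'}$ one discards the (impossible) \ruleref{K-Sing} step, peels off the chain of \ruleref{K-Sub} steps, composes the intervening subkindings by transitivity, and inverts \ruleref{SK-DArr} -- ``easy'' because there is one subkinding rule per shape -- to obtain $\Gamma \ts J' \ksub J''$ and some $\hat K$ with $\Gamma, X \kas J'' \ts C \kin \hat K$ and $\Gamma, X \kas J' \ts \hat K \ksub K'$; narrowing along $\Gamma \ts J' \ksub J''$ and then \ruleref{K-Sub} yield $\Gamma, X \kas J' \ts C \kin K'$. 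With this in hand, the substitution lemma provides the remaining validity premise $\Gamma \ts \subst{C}{X}{D} \kin \subst{K'}{X}{D}$, so \rulerefN{ST-Beta1}{ST-$\beta_1$}, \rulerefN{ST-Beta2}{ST-$\beta_2$} and \ruleref{ST-AntiSym} give $\Gamma \ts \app{(\lam{X}{J'}{C})}{D} \teq \subst{C}{X}{D} \kin \subst{K'}{X}{D}$. Since the redex carries the annotation $J''$, not $J'$, I would bridge the gap by first deriving $\Gamma \ts \lam{X}{J''}{C} \teq \lam{X}{J'}{C} \kin \dfun{X}{J'}{K'}$ via two applications of \ruleref{ST-Abs} -- both abstractions inhabit $\dfun{X}{J'}{K'}$ and their bodies are related by \ruleref{ST-Refl}, which is exactly the annotation insensitivity \ruleref{ST-Abs} is designed to permit -- and \ruleref{ST-AntiSym}, then combining with the previous equation by congruence of application and transitivity to conclude $\Gamma \ts A \teq B \kin K$.

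A couple of remarks on the bookkeeping. Throughout, \Lem{decl_validity} is what makes the gray validity conditions in \Fig[Figs.]{decl_rules1} and~\ref{fig:decl_rules2} available for free, so no separate well-formedness side-conditions need to be tracked. And in the \ruleref{K-App} congruence case where the argument reduces, $D \betastep D'$, the two directions of the application congruence naturally land at kinds $\subst{K'}{X}{D}$ and $\subst{K'}{X}{D'}$, so one of them must be transported using functionality (\Lem{decl_funct}) and \ruleref{ST-Sub} -- this too is subsumed by the admissible application-congruence rule. I expect the inversion of \ruleref{K-Abs} up to subkinding, and more generally keeping dependent kinds aligned under substitution, to be the only genuine obstacle; everything else is the kind of bookkeeping that the basic metatheory of \Sec{decl_basics_short} exists to absorb.
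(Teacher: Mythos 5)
Your proposal is correct and follows essentially the same route as the paper's proof: simultaneous induction on the kind-formation and kinding derivations, discharging congruence steps with the admissible equality rules (with the \ruleref{K-Sing} case handled exactly as in the admissibility proof of \rulerefN{TEq-Sing}{TEq-Sing}, i.e.\ via \ruleref{ST-Intv}), and handling the root contraction under \ruleref{K-App} by inverting the abstraction's kinding through the subkinding chain -- easy precisely because there is one subkinding rule per shape -- before closing with \rulerefN{ST-Beta1}{ST-$\beta_{1,2}$}, antisymmetry, and functionality/validity for the kind bookkeeping. Your detour through $\Gamma \ts \lam{X}{J''}{C} \teq \lam{X}{J'}{C}$ via \ruleref{ST-Abs} is a valid (if slightly roundabout) alternative to applying the $\beta$-rules directly at the original annotation $J''$ and adjusting the kind afterwards.
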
\noindent
Subject reduction for kinds and types then follows immediately from
\Thm{beta_red_teq} and validity.
\begin{corollary}[subject reduction for kinding]\label{cor:pres_kind}
~
  \begin{enumerate}\item If $\; \kindD{\Gamma}{J}$ and $J \betared K$, then
    $\kindD{\Gamma}{K}$.
  \item If $\; \Gamma \ts A \kin K$ and $A \betared B$, then
    $\Gamma \ts B \kin K$.
\end{enumerate}
\end{corollary}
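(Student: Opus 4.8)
The plan is to obtain the corollary directly from \Thm{beta_red_teq} by iterating it along the reduction sequence, using \Lem{decl_validity} to recover a well-formedness judgment from each equation it produces. Since $\betared$ is by definition the reflexive–transitive closure of the one-step relation $\betastep$, both parts are proved by induction on the number of $\betastep$-steps.

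For part (1), the zero-step case gives $J \alEq K$ and is immediate. For a nonempty reduction, write $J \betastep J' \betared K$. From $\kindD{\Gamma}{J}$ and $J \betastep J'$, \Thm{beta_red_teq}(1) yields $\Gamma \ts J \keq J'$, and kind (in)equation validity (\Lem{decl_validity}) then gives $\kindD{\Gamma}{J'}$. The induction hypothesis applied to $J' \betared K$ finishes the case.

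For part (2) the argument is parallel: induct on the length of $A \betared B$, with the zero-step case again immediate. For $A \betastep A' \betared B$, apply \Thm{beta_red_teq}(2) to $\Gamma \ts A \kin K$ to get $\Gamma \ts A \teq A' \kin K$, and then type (in)equation validity to get $\Gamma \ts A' \kin K$. The key point is that the kind index $K$ is preserved exactly --~this is precisely what \Thm{beta_red_teq}(2) together with type-equation validity deliver~-- so the induction hypothesis on $A' \betared B$ applies verbatim and yields $\Gamma \ts B \kin K$.

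I expect no real obstacle at this level: once \Thm{beta_red_teq} is in hand, the corollary is a routine two-case induction, and the only things to be careful about are invoking validity at each intermediate stage so that the induction hypothesis sees a well-formed premise, and checking that no change of index sneaks in (it does not, since \Thm{beta_red_teq} is stated for $\beta$-reduction and keeps the kind fixed). All the genuine difficulty --~handling the interplay of $\beta$-contraction with dependent arrow kinds, subkinding, and the singleton/interval rules, in particular the use of \ruleref{ST-Intv} to relate $\beta$-equal types living in singleton kinds~-- is already discharged in the proof of \Thm{beta_red_teq} itself.
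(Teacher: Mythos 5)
Your proof is correct and matches the paper's own argument: the corollary is obtained by iterating \Thm{beta_red_teq} along the multi-step reduction and using (in)equation validity from \Lem{decl_validity} to recover a kinding or kind-formation judgment at each intermediate step. Nothing further is needed.
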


\subsection{The Long Road to Type Safety}
\label{sec:decl_typesafety}

After establishing subject reduction for well-kinded types,
we prove type safety via \emph{progress} and \emph{preservation} (aka subject
reduction)~\citep{WrightF94ic}.
But as we show in this section, we must first weaken the statement of
preservation for it to hold in \FOmegaInt{}.

Preservation typically applies to all open terms:
\begin{prop}[preservation]
  If $\; \Gamma \ts t \tin A$ and $t \cbvstep t'$, then $\Gamma \ts t' \tin A$.
\end{prop}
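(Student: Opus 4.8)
The plan is to first attempt the textbook proof, by induction on the reduction $t \cbvstep t'$ (equivalently, on the typing derivation of $t$), handling the congruence cases ($\app{s}{t} \cbvstep \app{s'}{t}$, $\app{v}{t} \cbvstep \app{v}{t'}$, $\app{t}{B} \cbvstep \app{t'}{B}$) by the induction hypothesis plus inversion of the relevant typing rule, and the two contraction cases --- term $\beta$, $\app{(\lam{x}{A}{s})}{v} \cbvstep \subst{s}{x}{v}$, and type $\beta$, $\app{(\Lam{X}{K}{s})}{B} \cbvstep \subst{s}{X}{B}$ --- by the term and type substitution lemmas of \SupSec{decl_details}. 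The contraction cases require \emph{generation}/\emph{inversion} lemmas: from $\Gamma \ts \lam{x}{A}{s} \tin \fun{A'}{B'}$ one wants $\Gamma \ts A' \tsub A$ and $\Gamma, x \tas A \ts s \tin B'$ (up to a final \ruleref{T-Sub}), and from $\Gamma \ts \Lam{X}{K}{s} \tin \all{X}{K'}{A'}$ one wants $\Gamma \ts K' \ksub K$ and $\Gamma, X \kas K \ts s \tin A'$; both reduce to inversion of \emph{subtyping} at arrow and $\forall$ types, e.g. that $\Gamma \ts \fun{A}{B} \tsub \fun{A'}{B'} \kin \kstar$ forces $\Gamma \ts A' \tsub A$ and $\Gamma \ts B \tsub B'$.

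The hard part --- in fact the fatal one --- is precisely this subtyping inversion, and it is where the statement as written breaks. Because \FOmegaInt{} does not enforce consistency of interval bounds, a context may be \emph{absurd}: in $\Gamma = Z \kas \Top \intv \Bot$ the bound-projection rules \rulerefN{ST-Bnd1}{ST-Bnd$_1$} and \rulerefN{ST-Bnd2}{ST-Bnd$_2$} together with \ruleref{ST-Trans} yield $\Gamma \ts \Top \tsub \Bot \kin \kstar$, so every pair of proper types becomes related and, worse, any two well-kinded $\forall$-types become mutual subtypes by routing through $\Top$ and $\Bot$ --- even when their binders have different \emph{shapes}. Concretely, over $\Gamma = Z \kas \Top \intv \Bot$ we have $\Gamma \ts \Lam{X}{\fun{\kstar}{\kstar}}{\lam{w}{\app{X}{\Top}}{w}} \tin \all{X}{\fun{\kstar}{\kstar}}{\fun{\app{X}{\Top}}{\app{X}{\Top}}}$, and since both this $\forall$-type and $\all{X}{\kstar}{X}$ are well-kinded, \ruleref{ST-Top}, \rulerefN{ST-Bnd1}{ST-Bnd$_1$}, \rulerefN{ST-Bnd2}{ST-Bnd$_2$}, \ruleref{ST-Bot} and \ruleref{ST-Trans} give $\Gamma \ts \all{X}{\fun{\kstar}{\kstar}}{\fun{\app{X}{\Top}}{\app{X}{\Top}}} \tsub \all{X}{\kstar}{X} \kin \kstar$; hence by \ruleref{T-Sub} and \ruleref{T-TApp},
\[
  \Gamma \ts \app{(\Lam{X}{\fun{\kstar}{\kstar}}{\lam{w}{\app{X}{\Top}}{w}})}{\Top} \tin \Top .
\]
This term $\cbvstep$-reduces to $\lam{w}{\app{\Top}{\Top}}{w}$, whose domain annotation $\app{\Top}{\Top}$ is ill-kinded --- no subkinding rule relates an interval to an arrow kind, so $\Top$ can never acquire an arrow kind --- and therefore the reduct has no type at all in $\Gamma$. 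So the proposition, as stated for arbitrary open contexts, is \emph{false}.

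The way forward is to \emph{weaken} the statement so that this pathology is excluded: one restricts preservation to configurations whose context has no absurd bounds, and in particular to closed terms ($\Gamma = \cempty$), which is all that type safety actually needs, since an absurd assumption can never be instantiated. Under that restriction the subtyping-inversion lemmas do hold, but proving them is itself the crux of the remaining development: declarative subtyping is far from syntax-directed (it carries \ruleref{ST-Trans}, \ruleref{ST-Sub}, \ruleref{ST-Intv}, and the $\beta$/$\eta$ rules), so inversion cannot be read off by case analysis. This is exactly why \Sec{canonical} introduces the \emph{canonical} presentation, in which subtyping is cut-free and works on $\beta\eta$-normal forms (using weak normalization from \Sec{normalization}) so that inversion becomes routine, and then proves it equivalent to the declarative system. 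Only with that machinery in place does the induction sketched above go through for closed terms, giving the corrected preservation theorem and, together with progress, type safety.
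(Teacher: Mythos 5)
Your diagnosis coincides with the paper's own treatment of this proposition: it is stated only to be refuted, the refutation rests on an absurd interval binding reflecting bogus subtyping into the context (the paper uses $X \kas (\fun{A}{A}) \intv (\fun{A}{\fun{A}{A}})$ to type $\app{\app{(\lam{x}{A}{x})}{v}}{v}$ whose reduct $\app{v}{v}$ is ill-typed and stuck, whereas you use $Z \kas \Top \intv \Bot$ to obtain a shape-mismatched universal subtyping and a reduct whose annotation $\app{\Top}{\Top}$ is unkindable), and the statement is then weakened to closed terms and proved via the normalization and canonical-subtyping machinery, exactly as you outline. Your counterexample is correct --- since subkinding and \rulerefN{K-Sing}{K-Sing} preserve kind shape, $\Top$ can never acquire an arrow kind, so the reduct $\lam{w}{\app{\Top}{\Top}}{w}$ is indeed untypable --- so the proposal is sound and follows essentially the paper's route.
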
\noindent
However, in \FOmegaInt{} this statement fails because reduction of open terms is
unsafe.
The culprit are type variable bindings with absurd bounds.
Consider the following example.
Let $v$ be the polymorphic identity function
$v = \Lam{X}{\kstar}{\lam{x}{X}{x}}$ which is of type
$A = \all{X}{\kstar}{\fun{X}{X}}$.
In~\FOmegaInt{}, closed universals are not subtypes of closed
arrows;\footnote{See~\Lem{st-inv} in \Sec{canon-inversion}.}
hence $v$ cannot be applied to itself.
For the same reason, the term application
$t = \app{\app{(\lam{x}{A}{x})}{v}}{v}$ is ill-typed as a closed term.
Yet, $t$ is well-typed in the context
$\Gamma = X \kas (\fun{A}{A}) \intvP (\fun{A}{\fun{A}{A}})$ because we
can use \rulerefN{ST-Bnd1}{ST-Bnd$_1$} and
\rulerefN{ST-Bnd2}{ST-Bnd$_2$} to derive
$\Gamma \ts \fun{A}{A} \tsub X \tsub \fun{A}{\fun{A}{A}} \kin \kstar$
and subsumption to derive
$\Gamma \ts \lam{x}{A}{x} \kin \fun{A}{\fun{A}{A}} \kin \kstar$.
Note that
the bounds of $X$ are proper types, so
its kind is well-formed, as is the context $\Gamma$.
But since $\nts A \tsub \fun{A}{A} \kin \kstar$, the bounds of the
interval are absurd.

Next, consider what happens when $t$ takes a reduction step.
\[
  \app{\app{(\lam{x}{A}{x})}{v}}{v}
  \; \cbvstep \; \app{(\subst{x}{x}{v})}{v}
  \; \alEq    \; \app{v}{v}.
\]
According to preservation, the application $\app{v}{v}$ should have
type $A$, but instead it is ill-typed, even in $\Gamma$.
The assumption $X \kin (\fun{A}{A}) \intvP (\fun{A}{\fun{A}{A}})$ is
useless here, since $v$ does not have type $\fun{A}{A}$.
Not only is $\app{v}{v}$ ill-typed, it is also stuck.
Hence $\app{v}{v}$ is neither a value nor can it be reduced further --
type safety clearly does not hold in $\Gamma$.

But all is not lost.
Type safety still holds for closed terms, as does a weaker form of
preservation.
\begin{prop}[preservation -- weak version]\label{prop:weak-pres}
If $\; \ts t \tin A$ and $t \cbvstep t'$, then $\ts t' \tin A$.
\end{prop}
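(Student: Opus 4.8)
The plan is to prove \Prop{weak-pres} by induction on the derivation of $\ts t \tin A$, following the standard progress-and-preservation recipe of \citet{WrightF94ic}, with a case analysis on the single step $t \cbvstep t'$. The congruence steps --- reduction inside the function or the argument of a term application, or inside a type application --- are immediate from the induction hypothesis together with the corresponding typing rule, and \ruleref{T-Sub} goes through by applying the induction hypothesis and re-applying \ruleref{T-Sub}, since its subtyping premise is left untouched by reduction. The two substantive cases are the $\beta$-redexes: a term application $\app{(\lam{x}{B}{s})}{v} \cbvstep \subst{s}{x}{v}$ reached via \ruleref{T-App}, and a type application $\app{(\Lam{X}{K'}{s})}{C} \cbvstep \subst{s}{X}{C}$ reached via \ruleref{T-TApp}. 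For both I would use a \emph{generation} (inversion) lemma for the abstraction and then close the case with a substitution lemma and one use of subsumption.

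Concretely, the generation lemma for the term case states: if $\ts \lam{x}{B}{s} \tin \fun{C_1}{C_2}$, then $x \tas B \ts s \tin B_0$ for some $B_0$ with $\ts C_1 \tsub B \kin \kstar$ and $\ts B_0 \tsub C_2 \kin \kstar$. It is obtained by observing that a $\lambda$-abstraction can only be typed by \ruleref{T-Abs} (whose bare conclusion has the form $\fun{B}{B_0}$, with $x \tas B \ts s \tin B_0$) followed by a chain of \ruleref{T-Sub} steps, collapsing that chain by \ruleref{ST-Trans} into a single subtyping $\ts \fun{B}{B_0} \tsub \fun{C_1}{C_2} \kin \kstar$, and inverting it. In the \ruleref{T-App} redex we then have $\ts v \tin C_1$, hence $\ts v \tin B$ by \ruleref{T-Sub}; the term-substitution lemma (\Sec{decl_basics_short}) yields $\ts \subst{s}{x}{v} \tin B_0$, and a final \ruleref{T-Sub} with $\ts B_0 \tsub C_2 \kin \kstar$ gives the required $\ts \subst{s}{x}{v} \tin C_2$. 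The \ruleref{T-TApp} case is symmetric, using the generation lemma for $\Lambda$-abstractions against a universal $\all{X}{K}{A}$ --- which also yields a subkinding $\ts K \ksub K'$ between the declared and the residual kind, needed to re-check the argument $C$ --- the type-substitution lemma, and the fact that substitution preserves subtyping. Both generation lemmas reduce, by a routine induction over chains of \ruleref{T-Sub}, to the purely \emph{structural} inversion of subtyping between closed types: that $\fun{A_1}{B_1} \tsub \fun{A_2}{B_2}$ forces $\ts A_2 \tsub A_1 \kin \kstar$ and $\ts B_1 \tsub B_2 \kin \kstar$; that $\all{X}{K_1}{A_1} \tsub \all{X}{K_2}{A_2}$ forces $\ts K_2 \ksub K_1$ and $X \kas K_2 \ts A_1 \tsub A_2 \kin \kstar$; and, crucially, that neither a closed arrow type is a subtype of a closed universal type nor vice versa. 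In the empty context these are exactly the assertions of \Lem{st-inv} (\Sec{canon-inversion}), and this is precisely where closedness is indispensable: under an arbitrary $\Gamma$, an absurd binding such as $Z \kas \Top \intv \Bot$ would, via \rulerefN{ST-Bnd1}{ST-Bnd$_1$} and \rulerefN{ST-Bnd2}{ST-Bnd$_2$}, collapse subtyping altogether and invalidate every one of these inversions --- which is exactly the failure of unrestricted preservation diagnosed just above.

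\textbf{The main obstacle is \Lem{st-inv} itself}, rather than the bookkeeping around it. Declarative subtyping is generated with an explicit transitivity rule \ruleref{ST-Trans} and with the bound-projection rules \rulerefN{ST-Bnd1}{ST-Bnd$_1$} and \rulerefN{ST-Bnd2}{ST-Bnd$_2$}, so one cannot read structural information off a closed subtyping derivation by a naive induction: a derivation of $\fun{A_1}{B_1} \tsub \fun{A_2}{B_2}$ may well detour through wholly unrelated intermediate types. Extracting the inversion principle therefore requires the \emph{canonical} system of \Sec{canonical}, in which subtyping is reformulated on $\beta\eta$-normal forms in a transitivity-free, essentially syntax-directed manner, together with a proof that the declarative and canonical systems derive the same judgments (which itself rests on weak normalization of types, \Sec{normalization}). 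Once that apparatus and its inversion lemmas are in place, \Prop{weak-pres} --- along with the companion canonical-forms lemma needed for progress, that a closed value of arrow type is a $\lambda$-abstraction and one of universal type a $\Lambda$-abstraction --- follows with little additional effort. In effect, the real content of type safety is deferred to the canonical presentation, and weak preservation is the short payoff at the end.
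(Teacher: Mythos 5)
Your proposal is correct and follows essentially the same route as the paper's proof: induction on the typing derivation with case analysis on the CBV step, a generation lemma for term and type abstractions, inversion of top-level subtyping in the empty context (\Lem{st-inv}, obtained via the canonical system), and the substitution lemma plus \ruleref{T-Sub} to close the $\beta$-redex cases. The only cosmetic difference is that you fold the subtyping inversion into the statement of your generation lemma, whereas the paper keeps generation (yielding a single arrow/universal subtyping) and inversion as two separate lemmas.
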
\noindent
Throughout the next two sections, we will work our way towards a proof
of this proposition.

\subsection{Challenges and Proof Strategy}
\label{sec:strategy}

To conclude the section, let us briefly explore the challenges
involved in proving weak preservation and our strategy to address them.
The complexity of the subtyping relation throws a spanner in the works
when we try to prove weak preservation for cases where $\beta$-contractions occur.
To prove these cases, one normally starts by showing that
the following rules are admissible:
\begin{center}
  \infruleSimp
  {\Gamma \ts \fun{A_1}{B_1} \tsub \fun{A_2}{B_2} \kin \kstar}
  {\Gamma \ts A_2 \tsub A_1 \kin \kstar  \andalso
   \Gamma \ts B_1 \tsub B_2 \kin \kstar}
  \hspace{2em}
  \infruleSimp
  {\Gamma \ts \all{X}{K_1}{A_1} \tsub \all{X}{K_2}{A_2} \kin \kstar}
  {\Gamma \ts K_2 \ksub K_1  \andalso
   \Gamma, X \kas K_2 \ts A_1 \tsub A_2 \kin \kstar}
\end{center}
These properties are generally known as \emph{inversion of subtyping},
and are closely related to the $\Pi$-injectivity property, which is a
well-known source of complexity in dependent type theories.
There are several features of subtyping that severely complicate the
proof of subtyping inversion.
\begin{enumerate}\item\label{item:sinv1} The rules for $\beta$ and $\eta$-conversion,
  together with transitivity, may change the shapes of related types
  in the middle of a subtyping derivation, \eg from a type former to
  a type application.
  \[
    \Gamma \; \ts \; \fun{A_1}{A_2} \; \tsub \;
    \app{(\lam{X}{\kstar}{\fun{X}{A_2}})}{A_1} \; \tsub \; \dotsm \;
    \tsub \; \app{(\lam{X}{\kstar}{\fun{X}{B_2}})}{B_1} \; \tsub \;
    \fun{B_1}{B_2} \; \kin \; \kstar
  \]
\item\label{item:sinv2} The subsumption rule~\ruleref{ST-Sub} may
  change the kinds of related types at any time.
\item\label{item:sinv3} As outlined above,
we can derive judgments of the form
  $\Gamma \ts A \tsub X \tsub B \kin \kstar$ where $A$ and $B$ need
  not be of the same shape, from absurd assumptions in $\Gamma$.
For example
  \[
    \Gamma \; \ts \; \fun{A}{B} \; \tsub \; X \; \tsub \;
    \all{X}{K}{C} \; \kin \; \kstar
  \]
\end{enumerate}
We address these points as follows.
First, we eliminate uses of the $\beta\eta$~rules (\Item{sinv1}) by
adopting an alternative presentation of subtyping which we dub
\emph{canonical subtyping} (\Sec{canonical}).
Canonical subtyping only relates types in \emph{$\eta$-long
  $\beta$-normal form} (\Sec{normalization}),
so there is no need for $\beta\eta$~rules.
This approach works in variants of \FOmega{} with
$\eta$-conversion~\citep[cf.][]{AbelR08csl}, whereas
rewriting-based proofs of
$\Pi$-injectivity~\citep[e.g.][]{Barendregt93lct,Adams06jfp} do not generalize readily to our setting.

The canonical presentation of subtyping also restricts the placement
of subsumption (\Item{sinv2}) to certain strategic positions, just
as in algorithmic or bidirectional subtyping (\Sec{canonical}).

Finally, we avoid issues caused by absurd bounds (\Item{sinv3}) by
proving subtyping inversion only in the empty context, \ie only for
closed types.
That suffices for proving weak preservation and, as the above example
illustrates, it is the best we can do in a system with inequality
reflection (\Sec{canon-inversion}).

The core challenges of the proof of subtyping inversion thus consists
in showing that well-formed kinds and well-kinded types have
normal forms (\Sec{normalization}), so that the canonical and
declarative versions of subtyping can be proven equivalent
(\Sec{canonical}).
We address these challenges in the next two sections.

\section{Normalization of Types}
\label{sec:normalization}

As discussed in the previous section,
we cannot prove inversion of subtyping directly,
because $\beta\eta$-convertible types and kinds
differ in their syntactic structure.
We address this problem in two steps:
(1) in this section, we show that types and kinds in \FOmegaInt{}
can be reduced to $\beta\eta$-normal form
via a \emph{bottom-up normalization} procedure based on
\emph{hereditary substitution};
(2) in next section, we give a canonical presentation of subtyping
that only relates types in normal form.

\subsection{Syntax}

We begin by introducing an alternative syntax for types that is better
suited to our definition of hereditary substitution.
The key difference is that the type arguments of applications
are grouped together in sequences called \emph{spines}.
Hence, we refer to this presentation of types as \emph{spine form}.
\begin{align*}
D, E &::= \app{F}{\vec{E}}     & F, G &::= X
\orElse \Top
\orElse \Bot
\orElse \fun{D}{E}
\orElse \all{X}{K}{E}
\orElse \lam{X}{K}{E}          & \vec{D}, \vec{E} &::= \sempty
\orElse \scons{D}{\vec{E}}     \end{align*}

\noindent
Applications form a separate syntactic category,
\emph{eliminations}~$E$, and consist of a \emph{head}~$F$ and a
spine~$\vec{E}$.
A head is any type former that is not an application.
We adopt vector notation for spines, writing $\vec{E}$ for the
sequence $\vec{E} = E_1, E_2, \dotsc, E_n$ and $(\scons{\vec{D}}{\vec{E}})$ for the concatenation of
$\vec{D}$ and $\vec{E}$.

The two representations of types are isomorphic, so we mix them
freely, knowing that explicit conversions can always be inserted where
necessary.

\subsection{Hereditary Substitution in Raw Types}
\label{sec:raw_hsubst}

In the \Sec{canonical}, we will introduce a system of canonical
judgments defined directly on normal forms.
Since kinds in \FOmegaInt{} are dependent, some of the kinding and
subtyping rules involve substitutions in kinds, e.g.~\ruleref{K-App}
or~\rulerefN{ST-Beta1}{ST-$\beta_{1,2}$}.
Unfortunately, substitutions do not preserve normal forms because
substituting an operator abstraction for the head of a neutral type
introduces a new redex.
For example
$\subst{(\app{Y}{A})}{Y}{\lam{X}{K}{B}} \alEq
\app{(\lam{X}{K}{B})}{A}$ is not a normal form, even if $\app{Y}{A}$
and $\lam{X}{K}{B}$ are.
To define a canonical counterpart of e.g.~\ruleref{K-App} directly on
normal kinds and types, we need a variant of substitution that
immediately eliminates the $\beta$-redexes it introduces.
This type of substitution operation is known as \emph{hereditary
  substitution}~\cite{WatkinsCPW04types}.
The challenge in defining a hereditary substitution function
is, of course, ensuring its totality.

\fig{fig:hsubst}{Hereditary substitution and reducing application}{\small\raggedright \judgment{Hereditary substitution}{\fbox{$\hsubst{D}{X}{k}{E}$}}
\begin{alignat*}{3}
\hsubst{(\app{Y}{\vec{D}}) &}{X}{k}{E} &&\; = \;
  \begin{cases}
    \; \rapp{E}{k}{(\hsubst{\vec{D}}{X}{k}{E})} & \text{ if }\, Y = X,\\
    \; \app{Y}{(\hsubst{\vec{D}}{X}{k}{E})}     & \text{ otherwise,}
  \end{cases}\\
\hsubst{(\app{\Top}{\vec{D}}) &}{X}{k}{E} &&\; = \; \app{\Top}{(\hsubst{\vec{D}}{X}{k}{E})}\\
\hsubst{(\app{\Bot}{\vec{D}}) &}{X}{k}{E} &&\; = \; \app{\Bot}{(\hsubst{\vec{D}}{X}{k}{E})}\\
\hsubst{(\app{(\fun{D_1}{D_2})}{\vec{D}}) &}{X}{k}{E} &&\; = \; \app{(\fun{\hsubst{D_1}{X}{k}{E}}{\hsubst{D_2}{X}{k}{E}})}{(\hsubst{\vec{D}}{X}{k}{E})}\\
\hsubst{(\app{(\all{Y}{K}{D'})}{\vec{D}}) &}{X}{k}{E} &&\; = \; \app{(\all{Y}{\hsubst{K}{X}{k}{E}}{\hsubst{D'}{X}{k}{E}})}{(\hsubst{\vec{D}}{X}{k}{E})} &&\qquad \text{ for } Y \neq X, Y \notin \fv(E),\\
\hsubst{(\app{(\lam{Y}{K}{D'})}{\vec{D}}) &}{X}{k}{E} &&\; = \; \app{(\lam{Y}{\hsubst{K}{X}{k}{E}}{\hsubst{D'}{X}{k}{E}})}{(\hsubst{\vec{D}}{X}{k}{E})} &&\qquad \text{ for } Y \neq X, Y \notin \fv(E).\\[-1ex]
\intertext{\judgment{}{\fbox{$\hsubst{\vec{D}}{X}{k}{E}$}\vspace{-2.25em}}}
\hsubst{\sempty &}{X}{k}{E} &&\; = \; \sempty \\
\hsubst{(\scons{D'}{\vec{D}}) &}{X}{k}{E} &&\; = \; \scons{(\hsubst{D'}{X}{k}{E})}{(\hsubst{\vec{D}}{X}{k}{E})}\\[-1ex]
\intertext{\judgment{}{\fbox{$\hsubst{K}{X}{k}{E}$}\vspace{-1.75em}}}
\hsubst{(D_1 \intv D_2) &}{X}{k}{E} &&\; = \; \hsubst{D_1}{X}{k}{E} \intv \hsubst{D_2}{X}{k}{E} \\
\hsubst{(\dfun{Y}{J}{K}) &}{X}{k}{E} &&\; = \; \dfun{Y}{\hsubst{J}{X}{k}{E}}{\hsubst{K}{X}{k}{E}}
      &&\qquad \text{ for } Y \neq X, Y \notin \fv(E).
\end{alignat*}

\begin{minipage}[t]{.45\linewidth}
\judgment{Reducing application}{\fbox{$\rapp{D}{k}{E}$}}
\begin{alignat*}{3}
D &\rapp{}{\kstar}{} &&E \; = \; \app{D}{E} \\
D &\rapp{}{\fun{k}{l}}{} &&E \; = \;
  \begin{cases}
    \; \hsubst{D'}{X}{k}{E} & \text{ if } \, D = \lam{X}{J}{D'},\\
    \; \app{D}{E}           & \text{ otherwise.}
  \end{cases}
\end{alignat*}
\end{minipage}
\hfill
\begin{minipage}[t]{.52\linewidth}
\judgment{}{\fbox{$\rapp{D}{k}{\vec E}$}}
\begin{alignat*}{3}
D &\rapp{}{k}{\sempty} && \; = \; D \\
D &\rapp{}{k}{(\scons{E}{\vec{E}})} && \; = \;
  \begin{cases}
    \; \rapp{(\rapp{D}{\fun{k_1}{k_2}}{E})}{k_2}{\vec{E}}
    & \text{ if } \, k = \fun{k_1}{k_2},\\
    \; \app{D}{(\scons{E}{\vec{E}})}
    & \text{ otherwise.}
  \end{cases}
\end{alignat*}
\end{minipage}}
 
Our definition of hereditary substitution is given in \Fig{hsubst}.
Hereditary substitution is defined mutually with \emph{reducing application} of eliminations
by recursion on the structure of the shape $k$.
The definitions of the three hereditary substitution functions
proceed by inner recursion on the structure of the parameters
$K$, $D$ and $\vec{D}$, respectively.
Note that there are some recursive calls where no parameter decreases,
but one can check (and we have done so) that at least one of the
relevant parameters decreases strictly along every cycle in the call
graph.
Hence the five functions remain structurally recursive, ensuring their
totality.
Note the crucial use of spine forms:
$\rapp{D}{k}{\vec{E}}$ simultaneously unwinds $\vec{E}$ and $k$ using
the fact that $\vec{E}$ matches the right-associative structure of
$k$.

Our presentation of hereditary substitution differs from others
in the literature.
Like \citet{KellerA10msfp}, we define hereditary substitution by
structural recursion and mutually with reducing application.
But their definition
is based on an intrinsically typed representation,
which does not readily generalize to a system with dependent types (or
kinds).
Instead, like \citet{AbelR08csl} we define hereditary substitution
directly on raw (\ie unkinded) types,
so our definition contains degenerate cases;
unlike \citeauthor{AbelR08csl}'s, our
definition is structurally recursive hence easier to mechanize.
Our approach of defining hereditary substitution by recursion on shapes rather than (dependent) kinds
was inspired by \citeauthor{HarperL07jfp}'s formalization of
Canonical~LF \citeyearpar{HarperL07jfp}.
However,
they define hereditary substitution as an inductive relation,
thus they avoid degenerate cases but must establish
functionality and termination separately.

Because the essential difference between ordinary and hereditary
substitution is that the latter reduces newly created $\beta$-redexes,
the results of the two operations are $\beta$-convertible.
\begin{lemma}\label{lem:subst-red-hsubst}
Let $E$ be an elimination, $X$ a type variable and $k$ a shape, then
\begin{multicolenum}{2}\item\label{item:red-hsubst-kind}
    $\subst{K}{X}{E} \betared \hsubst{K}{X}{k}{E}$ for any kind $K$;
  \item\label{item:red-hsubst-elim}
    $\subst{D}{X}{E} \betared \hsubst{D}{X}{k}{E} \,$ for any type $D$;
\item\label{item:red-rapp-elim}
    $\app{E}{D} \betared \rapp{E}{k}{D} \;$ for any type $D$;
  \item\label{item:red-rapp-sp}
    $\app{E}{\vec{D}} \betared \rapp{E}{k}{\vec{D}} \;$ for any spine $\vec{D}$.
  \end{multicolenum}
\end{lemma}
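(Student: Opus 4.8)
The plan is to prove all four claims simultaneously --- together with the companion statement $\subst{\vec{D}}{X}{E} \betared \hsubst{\vec{D}}{X}{k}{E}$ for spines --- by induction that follows exactly the termination order already used to justify the definition of hereditary substitution in \Fig{hsubst}: the lexicographic combination of the shape $k$ with the structural size of the principal argument ($K$, $D$, $\vec{D}$, or the spine being consumed). Two standard facts about reduction will be used throughout: $\betared$ is transitive and is a congruence (being the reflexive--transitive closure of the compatible closure $\betastep$), and $\betastep$ contains every $\beta$-contraction $\app{(\lam{Y}{J}{D'})}{E} \betastep \subst{D'}{Y}{E}$. As the paper allows, I pass freely between the ordinary and the spine presentations of types.

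First I would dispatch the routine cases. Every clause of hereditary substitution whose right-hand side merely pushes the substitution under a former --- those for $\Top$, $\Bot$, $\fun{D_1}{D_2}$, $\all{Y}{K}{D'}$ and $\lam{Y}{K}{D'}$ (each carrying a trailing spine), the two spine clauses, the two kind clauses, and the variable clause with $Y \neq X$ --- unfolds on both sides to the same former applied to correspondingly (ordinary- or hereditarily-) substituted components; applying the induction hypothesis componentwise, each component being strictly smaller in the termination order, and recombining by congruence of $\betared$ completes these cases. The non-reducing clauses of reducing application ($D \rapp{}{\kstar}{} E = \app{D}{E}$, the ``otherwise'' branches, and $D \rapp{}{k}{\sempty} = D$) are immediate since both sides coincide.

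Then come the three cases that do real work. For a variable head with $Y = X$ we have $\subst{(\app{X}{\vec{D}})}{X}{E} = \app{E}{(\subst{\vec{D}}{X}{E})}$; the spine induction hypothesis and congruence give $\app{E}{(\subst{\vec{D}}{X}{E})} \betared \app{E}{(\hsubst{\vec{D}}{X}{k}{E})}$, and then claim~\Item{red-rapp-sp} --- applied to the elimination $E$, the shape $k$ and the spine $\hsubst{\vec{D}}{X}{k}{E}$, which is precisely the appeal that makes the definition terminate --- rewrites $\app{E}{(\hsubst{\vec{D}}{X}{k}{E})}$ to $\rapp{E}{k}{(\hsubst{\vec{D}}{X}{k}{E})}$. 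For reducing application $D \rapp{}{\fun{k}{l}}{} E$ with $D = \lam{X}{J}{D'}$, we have $\app{D}{E} \betastep \subst{D'}{X}{E}$ by $\beta$-contraction, and claim~\Item{red-hsubst-elim} at the strictly smaller shape $k$ gives $\subst{D'}{X}{E} \betared \hsubst{D'}{X}{k}{E}$, which is the right-hand side. For reducing application of a non-empty spine, $D \rapp{}{\fun{k_1}{k_2}}{(\scons{E}{\vec{E}})} = \rapp{(\rapp{D}{\fun{k_1}{k_2}}{E})}{k_2}{\vec{E}}$: starting from $\app{D}{(\scons{E}{\vec{E}})} = \app{(\app{D}{E})}{\vec{E}}$, claim~\Item{red-rapp-elim} and congruence reduce the inner application to $\rapp{D}{\fun{k_1}{k_2}}{E}$, and claim~\Item{red-rapp-sp} at the smaller shape $k_2$ finishes; transitivity of $\betared$ chains the steps in each case.

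The hard part will not be any individual case but the well-foundedness of the induction itself: the substantive cases make recursive appeals that decrease no single structural parameter --- the $Y = X$ clause calls reducing application at the \emph{same} shape $k$, and claim~\Item{red-rapp-sp} on a cons-spine calls claim~\Item{red-rapp-elim} at the same shape --- so plain structural induction on the argument does not go through. The remedy is to run the induction on exactly the measure the paper has already shown justifies the definition in \Fig{hsubst} (shape first, then structural size, exploiting that every cycle in the call graph strictly decreases some relevant parameter); in the Agda development this amounts to proving the lemma by the same well-founded recursion as hereditary substitution itself, so that an induction hypothesis is available precisely where the defining equations recurse.
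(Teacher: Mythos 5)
Your proposal is correct and follows essentially the same route the paper takes for all its lemmas about hereditary substitution (cf.\ the proofs of the weak-equality and simple-kinding substitution lemmas in the supplementary material): a simultaneous proof of all parts whose structure mirrors the mutual recursion of \Fig{hsubst}, with an outer induction on the shape $k$, inner structural induction on the principal argument, congruence and transitivity of $\betared$ for the pushed-under cases, a single $\beta$-contraction in the $\lambda$-case of reducing application, and the non-decreasing recursive appeals justified exactly as in the paper's totality argument, namely that every cycle through the call graph strictly decreases $k$. No gap; the only cosmetic quibble is that in the reducing-application case you swap the roles of the letters $D$ and $E$ relative to the lemma statement, which does not affect the argument.
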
\noindent
It is an immediate consequence of \Lem{subst-red-hsubst} and subject
reduction that ordinary and hereditary substitutions produce
judgmentally equal results.
\begin{corollary}[soundness of hereditary substitution]\label{cor:hsubst-sound}
  Let $\; \Gamma \ts A \kin K$, then
  \begin{enumerate}\item\label{item:hsubst-snd-kind} if
    $\; \kindD{\Gamma, X \kas K, \Delta}{J}$, then
    $\Gamma, \subst{\Delta}{X}{A} \ts \subst{J}{X}{A} \keq
    \hsubst{J}{X}{\ksimp{K}}{A}$;
  \item\label{item:hsubst-snd-type} if
    $\; \Gamma, X \kas K, \Delta \ts B \kin J$, then
    $\Gamma, \subst{\Delta}{X}{A} \ts \subst{B}{X}{A} \teq
    \hsubst{B}{X}{\ksimp{K}}{A} \kin \subst{J}{X}{A}$.
  \end{enumerate}
\end{corollary}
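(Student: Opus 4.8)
The plan is to obtain the corollary as a direct combination of results already in hand: preservation of all judgments under ordinary capture-avoiding substitution (\SupSec{decl_details}), \Lem{subst-red-hsubst}, subject reduction for types and kinds (\Cor{pres_kind}), and the fact that single $\beta$-steps lift to (in)equality (\Thm{beta_red_teq}). The underlying idea is simple: by \Lem{subst-red-hsubst} the ordinary-substitution side of each equation $\beta$-reduces to the hereditary-substitution side, and on well-formed inputs such a $\beta$-reduction can be read off as a chain of equality judgments.

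First I would promote \Thm{beta_red_teq} from single steps to multi-step reduction. By induction on the length of a reduction $J_0 \betastep J_1 \betastep \dotsm \betastep J_n$, using \Cor{pres_kind} to keep each intermediate kind or type well-formed and admissibility of reflexivity and transitivity of kind (and type) equality to chain the one-step equalities, one gets: if $\kindD{\Gamma}{J_0}$ and $J_0 \betared J_n$ then $\Gamma \ts J_0 \keq J_n$, and likewise if $\Gamma \ts A_0 \kin K$ and $A_0 \betared A_n$ then $\Gamma \ts A_0 \teq A_n \kin K$. With this in place, clause~\ref{item:hsubst-snd-kind} follows thus: from $\Gamma \ts A \kin K$ and $\kindD{\Gamma, X \kas K, \Delta}{J}$ the substitution lemma yields $\kindD{\Gamma, \subst{\Delta}{X}{A}}{\subst{J}{X}{A}}$; \Lem{subst-red-hsubst}, instantiated with the shape $\ksimp{K}$, yields $\subst{J}{X}{A} \betared \hsubst{J}{X}{\ksimp{K}}{A}$; and the multi-step form of \Thm{beta_red_teq} then gives $\Gamma, \subst{\Delta}{X}{A} \ts \subst{J}{X}{A} \keq \hsubst{J}{X}{\ksimp{K}}{A}$. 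Clause~\ref{item:hsubst-snd-type} has the same structure: the substitution lemma for typing gives $\Gamma, \subst{\Delta}{X}{A} \ts \subst{B}{X}{A} \kin \subst{J}{X}{A}$, \Lem{subst-red-hsubst} gives $\subst{B}{X}{A} \betared \hsubst{B}{X}{\ksimp{K}}{A}$, and since both \Cor{pres_kind} and \Thm{beta_red_teq} keep the kind index fixed, the resulting chain of equalities lives uniformly at kind $\subst{J}{X}{A}$, so $\Gamma, \subst{\Delta}{X}{A} \ts \subst{B}{X}{A} \teq \hsubst{B}{X}{\ksimp{K}}{A} \kin \subst{J}{X}{A}$.

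I do not expect a real obstacle here --- this is precisely why the corollary is ``immediate'': the difficulty has all been pushed into \Lem{subst-red-hsubst}, \Thm{beta_red_teq} and \Cor{pres_kind}, which we may assume. The only points that need attention are bookkeeping: instantiating the shape parameter of \Lem{subst-red-hsubst} with $\ksimp{K}$, so the conclusion matches the hereditary substitution that actually appears in the canonical system of \Sec{canonical}; and promoting the single-step statement \Thm{beta_red_teq} to the multi-step reduction produced by \Lem{subst-red-hsubst} via the routine induction above. Tracing the difficulty further back, it resides in the proof of \Lem{subst-red-hsubst} itself, where the structurally-recursive definitions of hereditary substitution and reducing application (\Fig{hsubst}) must be reconciled, case by case, with ordinary substitution and $\beta$-reduction.
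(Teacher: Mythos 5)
Your proposal is correct and follows essentially the same route as the paper: the corollary is obtained by combining \Lem{subst-red-hsubst} with subject reduction (\Thm{beta_red_teq} together with \Cor{pres_kind}), using the ordinary substitution lemma to establish well-formedness in the substituted context and chaining the lifted per-step equalities by transitivity at a fixed kind. The multi-step lifting of \Thm{beta_red_teq} that you spell out is exactly the routine bookkeeping the paper leaves implicit when it calls the corollary an immediate consequence.
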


\subsection{Normalization of Raw Types}
\label{sec:raw_nf}

Based on hereditary substitution, we define a bottom-up
\emph{normalization} function $\nfRaw$ on kinds and types.
It is a straightforward extension of the normalization function by
\citet{AbelR08csl} to dependent kinds.
The function $\nfRaw$ is defined directly on raw types and kinds and
relies on a separate function for $\eta$-expanding variables.
The definition of both functions is given in \Fig{normalization}.

\fig[tb]{fig:normalization}{Normalization of types and kinds}{\judgment{$\eta$-expansion}{\fbox{$\etaExp{K}{A}$}}
\begin{align*}
&\etaExpRaw{D_1 \intv D_2}{(A)} \; = \; A &
&\etaExpRaw{\dfun{X}{J}{K}}{(A)} \; = \; \lam{X}{J}{\etaExp{K}{\app{A}{(\etaExp{J}{X})}}}
    \quad \text{ for } X \notin \fv(A).
  &&~
\end{align*}

\judgment{Normalization}{\fbox{$\nf{\Gamma}{A}$}}
\begin{alignat*}{3}
&\nf{\Gamma}{X} && \; = \; \etaExp{\Gamma(X)}{X}
    &&\qquad \text{ if } \, X \in \dom(\Gamma),\\
  &&& \phantom{{} \; = \; {}} X
    &&\qquad \text{ otherwise,}\\
&\nf{\Gamma}{A} && \; = \; A
    &&\qquad \text{ for } A \in \set{ \Bot, \Top },\\
&\nf{\Gamma}{\fun{A}{B}} && \; = \; \fun{\nf{\Gamma}{A}}{\nf{\Gamma}{B}}\\
&\nf{\Gamma}{\all{X}{K}{A}} && \; = \; \all{X}{K'}{\nf{\Gamma, X \kas K'}{A}}
    &&\qquad \text{ where } \, K' = \nf{\Gamma}{K},\\
&\nf{\Gamma}{\lam{X}{K}{A}} && \; = \; \lam{X}{K'}{\nf{\Gamma, X \kas K'}{A}}
    &&\qquad \text{ where } \, K' = \nf{\Gamma}{K},\\
&\nf{\Gamma}{\app{A}{B}} && \; = \; \hsubst{E}{X}{\ksimp{K}}{\nf{\Gamma}{B}}
    &&\qquad \text{ if } \, \nf{\Gamma}{A} = \lam{X}{K}{E},\\
  &&& \phantom{{}\; = \; {}}\app{(\nf{\Gamma}{A})}{(\nf{\Gamma}{B})}
    &&\qquad \text{ otherwise.}\\[-1ex]
\intertext{\judgment{}{\fbox{$\nf{\Gamma}{K}$}}\vspace{-1.5em}}
&\nf{\Gamma}{A \intv B} && \; = \; \nf{\Gamma}{A} \intv \nf{\Gamma}{B}\\
&\nf{\Gamma}{\dfun{X}{J}{K}} && \; = \; \dfun{X}{J'}{\nf{\Gamma, X \kas J'}{K}}
    &&\qquad \text{ where } \, J' = \nf{\Gamma}{J}.
\end{alignat*}\vspace{-1.5em}}
 
The $\eta$-expansion $\etaExp{K}{A}$ of a type $A$ of kind $K$ is
defined by recursion on the structure of~$K$.
It is used in the definition of $\nfRaw$ to expand type variables.
Note that $\etaExp{K}{A}$ immediately expands newly introduced
argument variables to produce $\eta$-long forms.
Normalization $\nf{\Gamma}{A}$ and $\nf{\Gamma}{K}$ of raw types and
kinds in context $\Gamma$ are defined by mutual recursion on $A$ and
$K$, respectively.
The case of applications uses hereditary substitution to eliminate
$\beta$-redexes.
Note the crucial use of domain-annotations:
in order to hereditarily substitute a type argument in the body of an
operator abstraction $\lam{X}{K}{E}$, we need to guess its shape, or
equivalently, the shape of $X$.
Since the normalization function is defined directly on raw, unkinded
types, the only way to obtain this information is from the kind
annotation~$K$ in the abstraction.

The context parameter $\Gamma$ of $\nfRaw$ is used to look up the
declared kinds of variables, which drive their $\eta$-expansion.
To ensure that the resulting $\eta$-expansions are normal, the context
$\Gamma$ must itself be normal.
We therefore extend normalization pointwise to contexts, defining
$\nfCtx{\Gamma}$ as
\begin{align*}
  \nfCtx{\cempty} &= \cempty &
  \nfCtx{\Gamma, x \tas A} &=
    \nfCtx{\Gamma}, x \tas \nf{\nfCtx{\Gamma}}{A} &
  \nfCtx{\Gamma, X \kas K} &=
    \nfCtx{\Gamma}, X \kas \nf{\nfCtx{\Gamma}}{K}
\end{align*}
Since $\nfRaw$ is a total function defined directly on raw types and
kinds, it necessarily contains degenerate cases, \ie the resulting
types need not be $\beta$-normal.
\Eg the case of applications relies on the domain annotations $K$ of
operator abstractions $\lam{X}{K}{A}$ in head position to be truthful.
The ill-kinded type
$\Omega = (\lam{X}{\kstar}{\app{X}{X}})(\lam{X}{\kstar}{\app{X}{X}})$
will result in
$\nf{}{\Omega} \alEq
\hsubst{(\app{X}{X})}{X}{\kstar}{\lam{X}{\kstar}{\app{X}{X}}} \alEq
\rapp{(\lam{X}{\kstar}{\app{X}{X}})}{\kstar}{(\lam{X}{\kstar}{\app{X}{X}})}
\alEq \Omega$.
However, for well-kinded types $\Gamma \ts A \kin K$, the type
$\nf{\Gamma}{A}$ is guaranteed to be an $\eta$-long $\beta$-normal
form, as we will see in \Sec{canonical} (\cf \Lem{canon-complete}).

Furthermore, $\eta$-expansion and normalization are \emph{sound}, \ie
they do not alter the meaning of types and kinds.
In particular, well-kinded types and well-formed kinds are
judgmentally equal to their normalized counterparts.
The proof relies on soundness of hereditary substitutions.

\begin{lemma}[soundness of normalization]\label{lem:nf-sound}
  ~
  \begin{multicolenum}{2}\item\label{item:nf-sound-kd} If $\; \kindD{\Gamma}{K}$, then
    $\Gamma \ts K \keq \nf{\nfCtx{\Gamma}}{K}$.
  \item\label{item:nf-sound-tp} If $\; \Gamma \ts A \kin K$, then
    $\Gamma \ts A \teq \nf{\nfCtx{\Gamma}}{A} \kin K$.
\end{multicolenum}
\end{lemma}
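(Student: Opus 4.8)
The plan is to prove the two statements by simultaneous induction on the structure of the raw kind $K$ and the raw type $A$, using Corollary~\ref{cor:hsubst-sound} (soundness of hereditary substitution) for the application case and the validity lemma (Lemma~\ref{lem:decl_validity}) throughout. The key auxiliary fact I would establish first is a \emph{soundness lemma for $\eta$-expansion}: if $\Gamma \ts A \kin K$, then $\Gamma \ts A \teq \etaExp{K}{A} \kin K$. This is proved by induction on $K$. The interval case $K = D_1 \intv D_2$ is trivial since $\etaExp{D_1 \intv D_2}{A} = A$. The dependent-arrow case $K = \dfun{X}{J}{K'}$ uses the $\eta$-rules \ruleref{ST-Eta1} and \ruleref{ST-Eta2} together with \ruleref{ST-AntiSym} to get $\Gamma \ts A \teq \lam{X}{J}{\app{A}{X}} \kin K$, then applies the inner induction hypothesis (for $K'$, after $\eta$-expanding the argument variable via the IH for $J$) under the binder, and reassembles using the congruence rule for $\lambda$ (admissible, from \SupSec{decl_details}) plus functionality/\ruleref{K-App}-style reasoning to handle the kind dependency. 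Since shapes are stable under substitution, the shape bookkeeping in $\etaExp{K}{A}$ causes no trouble.

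With $\eta$-soundness in hand, the main induction is mostly routine congruence reasoning. For a type variable $X$ with $\Gamma(X) = K$, we have $\Gamma \ts X \kin K$ by \ruleref{K-Var}, and $\nf{\nfCtx{\Gamma}}{X} = \etaExp{\nfCtx{\Gamma}(X)}{X}$; combining $\eta$-soundness with the normalized-context IH gives $\Gamma \ts X \teq \nf{\nfCtx{\Gamma}}{X} \kin K$. For $\Bot$, $\Top$ the result is immediate from \ruleref{ST-Refl}. For $\fun{A}{B}$, $\all{X}{K}{A}$, $\lam{X}{K}{A}$, and the kind formers $D_1 \intv D_2$ and $\dfun{X}{J}{K}$, we invert the relevant formation rule to extract kinding/well-formedness of the components, apply the induction hypotheses (for binders, in the extended context $\Gamma, X \kas \nf{\nfCtx{\Gamma}}{K}$, which requires knowing the normalized binder kind is well-formed — itself part (1) of the lemma), and reassemble via the admissible congruence rules for the respective formers. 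The one genuinely interesting case is application $\app{A}{B}$: by inversion of \ruleref{K-App} we get $\Gamma \ts A \kin \dfun{X}{J}{L}$ and $\Gamma \ts B \kin J$; the IH gives $\Gamma \ts A \teq \nf{\nfCtx{\Gamma}}{A} \kin \dfun{X}{J}{L}$ and $\Gamma \ts B \teq \nf{\nfCtx{\Gamma}}{B} \kin J$. If $\nf{\nfCtx{\Gamma}}{A}$ is not an abstraction, we conclude by the congruence rule for \ruleref{ST-App}. If $\nf{\nfCtx{\Gamma}}{A} = \lam{X}{K}{E}$, then by validity and inversion $\Gamma, X \kas K \ts E \kin L'$ for a suitable $L'$ with $\ksimp{K} = \ksimp{J}$, and $\nf{\nfCtx{\Gamma}}{\app{A}{B}} = \hsubst{E}{X}{\ksimp{K}}{\nf{\nfCtx{\Gamma}}{B}}$; Corollary~\ref{cor:hsubst-sound} then identifies this hereditary substitution with the ordinary substitution $\subst{E}{X}{\nf{\nfCtx{\Gamma}}{B}}$ up to type equality, and a $\beta$-step (\ruleref{ST-Beta1}/\ruleref{ST-Beta2} with \ruleref{ST-AntiSym}) plus transitivity closes the case.

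The hard part will be managing the interplay between the two parts of the lemma and the need to work in normalized contexts: proving $\Gamma \ts A \teq \nf{\nfCtx{\Gamma}}{A} \kin K$ requires that $\nfCtx{\Gamma}$ is itself judgmentally equal to $\Gamma$ (so that kinding in one context transfers to the other), which in turn depends on part (1) applied to each binder kind of $\Gamma$ — hence the induction must really be set up over contexts as well, or phrased so that the context-normalization statement ($\Gamma \ts \nfCtx{\Gamma}$ is a well-formed context with $\Gamma \approx \nfCtx{\Gamma}$ pointwise up to kind/type equality) is proven by an outer induction on context length before, or simultaneously with, the main statement. Threading this correctly — together with the dependent-kind bookkeeping in the application and $\eta$-expansion cases, where substitutions in the classifying kind must be tracked via functionality (Lemma~\ref{lem:decl_funct}) — is where the real care is needed; the individual rule applications are otherwise standard.
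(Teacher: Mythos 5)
Your plan is essentially the paper's own argument: the paper likewise treats soundness of $\eta$-expansion as the auxiliary fact and then proves both parts by induction, discharging the variable case via $\eta$-expansion in the normalized context and the application case via soundness of hereditary substitution (\Cor{hsubst-sound}), with the detailed bookkeeping (context normalization, functionality and kind conversion for the dependent kinds) carried out only in the Agda mechanization. The subtleties you flag are exactly the ones the mechanized proof has to handle, so the proposal is correct and follows the same route.
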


\subsection{Commutativity of Normalization and Hereditary Substitution}
\label{sec:commute}

We are now almost ready to introduce the canonical presentation of
\FOmegaInt{}.
Our final task in this section is to establish a series of
\emph{commutativity properties} of hereditary substitution and
normalization.
They say, roughly, that the order of these operations can be switched
without changing the result.
We require these properties to prove equivalence of the canonical and
declarative systems.
\Eg to prove that \rulerefN{ST-Beta1}{ST-Beta$_1$} is admissible in
the canonical system, we must show that
\[
  \Gamma \; \ts \;
  \hsubst{\nf{\Gamma}{A}}{X}{\ksimp{J}}{\nf{\Gamma}{B}} \; \keq \;
  \nf{\Gamma}{\subst{A}{X}{B}} \; \kin \;
  \nf{\Gamma}{\subst{K}{X}{B}}.
\]
Since normalization involves hereditary substitutions, we must further
show that these preserve the canonical judgments.
The case for applications involves
kind equations of the form
\[
  \Gamma \; \ts \; \hsubst{\hsubst{K}{Y}{k}{A}}{X}{j}{B} \; \keq \;
  \hsubst{\hsubst{K}{X}{j}{B}}{Y}{k}{\hsubst{A}{X}{j}{B}}.
\]
Unfortunately, the commutativity properties do not hold for arbitrary
raw types and kinds.
The reasons are twofold.
First, our definition of hereditary substitution contains degenerate
cases for ill-kinded inputs, which can cause inconsistencies when we
commute hereditary substitutions.
For example, it's easy to verify that for
$B = \lam{Z}{\fun{\kstar}{\kstar}}{Z}$
\[
  \hsubst{\hsubst{(\app{X}{A})}{X}{\kstar}{Y}}{Y}{\fun{\kstar}{\kstar}}{B}
  \; \alEq \;
  A
  \; \not\alEq \;
  \app{B}{A}
  \; \alEq \;
\hsubst{\hsubst{(\app{X}{A})}{Y}{\fun{\kstar}{\kstar}}{B}}{X}{\kstar}{\hsubst{Y}{Y}{\fun{\kstar}{\kstar}}{B}}.
\]
Second, normalization involves $\eta$-expansion and, as we have seen
in \Sec{decl_rules}, $\eta$-expansions of the same type variable can
differ syntactically in their domain annotations.

We address the two problems separately.
For the former, we adopt the approach taken by \citet{AbelR08csl},
namely to prove commutativity of hereditary substitutions only for
well-kinded normal forms.
To apply their technique, we first need to show that hereditary
substitutions preserve kinding (of normal forms).
This is easy in their setting, which is simply kinded, but challenging
in ours because our kinding rules involve substitutions in dependent
kinds.
A direct proof that hereditary substitutions preserve kinding would
require the very commutativity lemmas we are trying to establish.
We circumvent this issue by relaxing our requirements:
for $\hsubst{e}{Y}{k}{V}$ to be non-degenerate, the normal form $V$ need
not actually be well-kinded;
it only needs to have \emph{shape} $k$.
Using this insight, we prove commutativity of hereditary substitutions
in \ref*{item:comm-step4} steps.
\begin{enumerate}
\item\label{item:comm-step1}
We define a \emph{simple kinding judgment} that assign shapes
  (rather than kinds) to normal forms.

\item We show that hereditary substitution preserves simple kinding.
Because shapes have no type dependencies, the proof does not require
  any commutativity lemmas.

\item\label{item:comm-step3} We show that hereditary substitutions
  in simply kinded normal forms commute.

\item\label{item:comm-step4}
We show that simple kinding is sound:
every type of kind $K$ has a normal form of shape $\ksimp{K}$.

\end{enumerate}
We refer the reader to \SupSec{simple} for details.

It remains to show that normalization commutes with substitution.
To work around the issue of incongruous domain annotations,
we introduce an auxiliary equivalence $A \wkEq B$ on types and kinds,
called \emph{weak equality}, that identifies operator abstractions up
to the \emph{shape} of their domain annotations,
\ie $\lam{X}{J}{A} \wkEq \lam{X}{K}{A}$ when $\simpEq{J}{K}$.
Substitution then \emph{weakly commutes} with
normalization of well-formed kinds and well-kinded types.
\begin{lemma}\label{lem:weq-nf-comm}
  Let $\; \Gamma \ts A \kin J$ and $V = \nf{\nfCtx{\Gamma}}{A}$, then
  \begin{enumerate}\item if $\; \kindD{\Gamma, X \kas J, \Delta}{K}$, then
    $\nf{\nfCtx{\Gamma,\subst{\Delta}{X}{A}}}{\subst{K}{X}{A}}
    \wkEq
    \hsubst{(\nf{\nfCtx{\Gamma, X \kas J, \Delta}}{K})}{X}{\ksimp{J}}{V}$;
  \item if $\; \Gamma, X \kas J, \Delta \ts B \kin K$, then
    $\nf{\nfCtx{\Gamma,\subst{\Delta}{X}{A}}}{\subst{B}{X}{A}}
    \wkEq
    \hsubst{(\nf{\nfCtx{\Gamma, X \kas J, \Delta}}{B})}{X}{\ksimp{J}}{V}$;
  \end{enumerate}
\end{lemma}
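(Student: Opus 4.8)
The plan is to prove the two statements simultaneously by induction on the well-formedness derivation of $K$ and the kinding derivation of $B$ --- equivalently, by structural induction on $K$ and $B$, using inversion of the declarative rules to recover well-formedness of the immediate subexpressions. Before the case analysis I would assemble the routine structural facts that make weak equality tractable: (i) normalization depends only on the portion of the context binding the free variables of its argument, so $\nf{\nfCtx{\Gamma, \subst{\Delta}{X}{A}}}{C} \alEq \nf{\nfCtx{\Gamma}}{C}$ whenever $\fv(C) \subseteq \dom(\Gamma)$, which reconciles the two contexts appearing on the left of each equation; (ii) hereditary substitution commutes with $\eta$-expansion, $\hsubst{(\etaExpRaw{K}{(C)})}{X}{k}{V} \alEq \etaExpRaw{\hsubst{K}{X}{k}{V}}{(\hsubst{C}{X}{k}{V})}$ when the $\eta$-bound variables avoid $X$ and $\fv(V)$; (iii) $\eta$-expanding a variable and then hereditarily substituting for it a normal form of the matching shape is the identity up to weak equality, $\hsubst{(\etaExp{K}{X})}{X}{\ksimp{K}}{V} \wkEq V$ for $V$ normal of shape $\ksimp{K}$; (iv) $\wkEq$ is a congruence for every type- and kind-former, is preserved by hereditary substitution in each of its arguments, and --- the key point --- normalizing the same expression in two weakly equal contexts yields weakly equal results; and (v) shapes are invariant under normalization, $\ksimp{\nf{\Gamma}{K}} \alEq \ksimp{K}$. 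Facts (iii) and (iv) are exactly where genuine $\wkEq$ (rather than $\alEq$) becomes unavoidable, for the two reasons flagged earlier: the $\eta$-expansions of a single variable can carry shape-equal but syntactically distinct domain annotations.

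Given these, most cases are bookkeeping. For $K = D_1 \intv D_2$ apply part~2 to $D_1$ and $D_2$ and use congruence of $\wkEq$; for $K = \dfun{Y}{J_0}{K_0}$ apply part~1 to $J_0$, then --- in the context extended by $Y$, whose annotation differs only up to $\wkEq$ between the two sides --- apply part~1 to $K_0$ and invoke (iv). For types, $\Top$ and $\Bot$ are immediate, $B = \fun{B_1}{B_2}$ follows from the IH on $B_1, B_2$ with $\wkEq$-congruence for $\fun{}{}$, and the binder cases $B = \all{Y}{K_0}{B_0}$ and $B = \lam{Y}{K_0}{B_0}$ follow from the IH on $K_0$ and on $B_0$, again using (iv) to push the normalization of the body through the weakly equal domain annotations produced by the IH on $K_0$. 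In the variable case $B = Y$: if $Y = X$ then $\subst{B}{X}{A} \alEq A$, the left-hand side is $\nf{\nfCtx{\Gamma,\subst{\Delta}{X}{A}}}{A} \alEq V$ by (i), the right-hand side is $\hsubst{(\etaExp{J'}{X})}{X}{\ksimp{J}}{V}$ with $J' = \nf{\nfCtx{\Gamma}}{J}$, and since $\ksimp{J'} \alEq \ksimp{J}$ by (v) and $V$ is a normal form of shape $\ksimp{J}$ by simple-kinding soundness applied to $\Gamma \ts A \kin J$, fact (iii) closes the case; if $Y \neq X$ then (ii) rewrites the right-hand side as an $\eta$-expansion of $\hsubst{(\nf{\nfCtx{\Gamma, X \kas J, \Delta}}{\Delta(Y)})}{X}{\ksimp{J}}{V}$, and the IH on $\Delta(Y)$ together with (iv) finishes it.

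The genuinely delicate case is the application $B = \app{B_1}{B_2}$. Writing $\Gamma' = \nfCtx{\Gamma, X \kas J, \Delta}$, the definition of $\nfRaw$ on $\app{B_1}{B_2}$ may itself perform a further hereditary substitution, namely $\hsubst{E}{Y}{\ksimp{L}}{\nf{\Gamma'}{B_2}}$ when $\nf{\Gamma'}{B_1} \alEq \lam{Y}{L}{E}$. Here I would apply the IH to $B_1$ and to $B_2$, use simple-kinding soundness (so that the well-kinded $B_1$ normalizes to a form whose head annotation is truthful, the ``reducing'' branch is taken compatibly on both sides, and neither hereditary substitution slips into a degenerate branch), and then reduce the goal to an instance of the commutativity of two nested hereditary substitutions on simply kinded normal forms --- exactly the property established by the four-step argument of \SupSec{simple}, whose shape hypotheses are satisfied because every normal form in sight has a known shape. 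Threading $\wkEq$ through these nested substitutions and getting all the shape bookkeeping to line up is where essentially all the real work of the lemma lies; once that case is in place, the remaining cases follow from the congruence properties of $\wkEq$ collected in step~(iv).
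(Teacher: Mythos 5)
Your overall strategy is the paper's: a simultaneous induction on the kind-formation and kinding derivations, with the $Y = X$ variable case closed by the fact that hereditarily substituting a shape-correct normal form for an $\eta$-expanded variable is the identity up to $\wkEq$ (\Cor{etaexp-var-hsubst}, justified via \Lem[Lemmas]{nf-simp} and~\ref{lem:simp-stable-nf}), and the application case reduced --- after \Lem{nf-simp} guarantees the reducing branch is taken compatibly on both sides --- to commutativity of hereditary substitutions on simply kinded normal forms (\Lem{hsubst-comm}) together with $\wkEq$-congruence of hereditary substitution (\Lem{weq-par-hsubst}). Your preliminary facts (i)--(v) and the structural cases are the same bookkeeping the paper relies on.

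The one step that does not go through as written is the variable case $Y \neq X$. You invoke ``the IH on $\Delta(Y)$'', but $\Delta(Y)$ is not a subexpression of $B = Y$, so under your ``equivalently, structural induction on $K$ and $B$'' reading there is no such IH; and under induction on the kinding derivation, the formation derivation of $\Delta(Y)$ sits inside the context-formation premise of \ruleref{K-Var}, a judgment for which the lemma has no clause, so a standard mutual structural induction does not license recursing into it either (you would need an induction on derivation height, or an additional context-formation part in the statement). The fix is that no induction is needed here: after your fact (ii) (\Lem{etaexp-ne-hsubst-miss}) rewrites the right-hand side as an $\eta$-expansion of $Y$ along $\hsubst{(\nfCtx{\Gamma, X \kas J, \Delta}(Y))}{X}{\ksimp{J}}{V}$, it suffices that this kind and the kind $\nfCtx{\Gamma, \subst{\Delta}{X}{A}}(Y)$ driving the left-hand $\eta$-expansion have \emph{equal shapes}, since $\eta$-expanding the same variable along shape-equal kinds yields weakly equal results (\Lem{weq-etaexp}); and shape equality follows, without any appeal to the IH, from stability of shapes under normalization and hereditary substitution plus commutation of simplification with context lookup (\Lem[Lemmas]{simp-stable-nf} and~\ref{lem:simp-stable-hsubst}). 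This is exactly how the paper closes that case. A minor further point: the claimed equivalence with structural induction on $B$ also glosses over \ruleref{K-Sing} and \ruleref{K-Sub}, though those cases are harmless here because the conclusion of part~2 does not mention $K$.
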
\noindent
The proof uses commutativity of hereditary substitutions
and a few helper lemmas, \eg that substitutions weakly commute with
$\eta$-expansion.
The full proof is given in \SupSec{comm-nf-subst}.

\section{The Canonical System}
\label{sec:canonical}

Having characterized the normal forms of kinds and types in the
previous section, we now present our \emph{canonical system} of
judgments directly defined on normal forms,
and summarize its most important metatheoretic properties:
the hereditary substitution lemma, equivalence \wrt the declarative
system, and inversion of subtyping.
We conclude the section by revisiting the type safety proof outlined
in \Sec{declarative}.

\fig{fig:canon_rules1}{Canonical presentation of \FOmegaInt{} -- part 1}{
\judgment{Canonical kinding of variables}{\fbox{$\Gamma \tsVar X \kin K$}}

\bigskip

\begin{minipage}[t]{0.35\linewidth}
\infruleLeft{CV-Var}
  {\ctxC{\Gamma}  \andalso  \Gamma(X) = K}
  {\Gamma \tsVar X \kin K}
\end{minipage}\hfill
\begin{minipage}[t]{0.56\linewidth}
\infruleLeft{CV-Sub}
  {\Gamma \tsVar X \kin J  \andalso  \Gamma \ts J \ksub K  \andalso
   \kindC{\Gamma}{K}}
  {\Gamma \tsVar X \kin K}
\end{minipage}

\bigskip

\judgment{Spine kinding}{\fbox{$\spC{\Gamma}{J}{\vec{V}}{K}$}}

\bigskip

\begin{minipage}[t]{0.32\linewidth}
\infruleLeft{CK-Empty}{\quad}{\spC{\Gamma}{K}{\sempty}{K}}
\end{minipage}\hfill
\begin{minipage}[t]{0.64\linewidth}
\infruleLeft{CK-Cons}
  {\Gamma \ts U \kck J \andalso \kindC{\Gamma}{J} \andalso
   \spC{\Gamma}{\hsubst{K}{X}{\ksimp{J}}{U}}{\vec{V}}{L} }
  {\spC{\Gamma}{\dfun{X}{J}{K}}{\scons{U}{\vec V}}{L}}
\end{minipage}

\bigskip

\begin{minipage}[t]{0.49\linewidth}
  \judgment{Kinding of neutral types}{\fbox{$\Gamma \tsNe N \kin K$}}

\infruleLeft{CK-Ne}
  {\Gamma \tsVar X \kin J  \andalso  \spC{\Gamma}{J}{\vec{V}}{K}}
  {\Gamma \tsNe \app{X}{\vec{V}} \kin K}
\end{minipage}\hfill
\begin{minipage}[t]{0.43\linewidth}
  \judgment{Kinding checking}{\fbox{$\Gamma \ts V \kck K$}}

\infruleLeft{CK-Sub}
  {\Gamma \ts V \ksn J  \andalso  \Gamma \ts J \ksub K}
  {\Gamma \ts V \kck K}
\end{minipage}

\bigskip

\judgment{Kind synthesis for normal types}{\fbox{$\Gamma \ts V \ksn K$}}
\begin{multicols}{2}
  \typicallabel{CK-Sing}
\infrule[\ruledef{CK-Top}]
  {\ctxC{\Gamma}}
  {\typeC{\Gamma}{\Top}}

\infrule[\ruledef{CK-Arr}]
  {\typeC{\Gamma}{U}  \andalso  \typeC{\Gamma}{V}}
  {\typeCP{\Gamma}{\fun{U}{V}}}

\infrule[\ruledef{CK-Abs}]
  {\kindC{\Gamma}{J}  \andalso  \Gamma, X \kas J \ts V \ksn K}
  {\Gamma \ts \lam{X}{J}{V} \ksn \dfun{X}{J}{K}}

\infrule[\ruledef{CK-Bot}]
  {\ctxC{\Gamma}}
  {\typeC{\Gamma}{\Bot}}

\infrule[\ruledef{CK-All}]
  {\kindC{\Gamma}{K}  \andalso  \typeC{\Gamma, X \kas K}{V}}
  {\typeCP{\Gamma}{\all{X}{K}{V}}}

\infrule[\ruledef{CK-Sing}]
  {\Gamma \tsNe N \kin U \intv V}
  {\Gamma \ts N \ksn N \intv N}
\end{multicols}\vspace{-1em}}
 \fig{fig:canon_rules2}{Canonical presentation of \FOmegaInt{} -- part 2}{
\judgment{Subtyping of proper types}{\fbox{$\Gamma \ts U \tsub V$}}
\begin{multicols}{2}
  \typicallabel{CST-Top}
\infrule[\ruledef{CST-Top}]
  {\typeC{\Gamma}{V}}
  {\Gamma \ts V \tsub \Top}

\infrule[\ruledef{CST-Trans}]
  {\Gamma \ts U \tsub V  \andalso  \Gamma \ts V \tsub W}
  {\Gamma \ts U \tsub W}

\infrule[\ruledef{CST-Ne}]
  {\quad\\
   \Gamma \tsVar X \kin K  \andalso
   \spC{\Gamma}{K}{\vec{V}_1 \keq \vec{V}_2}{U \intv W}}
  {\Gamma \ts \app{X}{\vec{V}_1} \tsub \app{X}{\vec{V}_2}}

\infrule[\ruledefN{CST-Bnd1}{CST-Bnd$_1$}]
  {\Gamma \tsNe N \kin V_1 \intv V_2}
  {\Gamma \ts V_1 \tsub N}

\infrule[\ruledef{CST-Bot}]
  {\typeC{\Gamma}{V}}
  {\Gamma \ts \Bot \tsub V}

\infrule[\ruledef{CST-Arr}]
  {\Gamma \ts U_2 \tsub U_1  \andalso  \Gamma \ts V_1 \tsub V_2}
  {\Gamma \ts \fun{U_1}{V_1} \tsub \fun{U_2}{V_2}}

\infrule[\ruledef{CST-All}]
  {\typeC{\Gamma}{\all{X}{K_1}{V_1}}\\
   \Gamma \ts K_2 \ksub K_1  \andalso
   \Gamma, X \kas K_2 \ts V_1 \tsub V_2}
  {\Gamma \ts \all{X}{K_1}{V_1} \tsub \all{X}{K_2}{V_2}}

\infrule[\ruledefN{CST-Bnd2}{CST-Bnd$_2$}]
  {\Gamma \tsNe N \kin V_1 \intv V_2}
  {\Gamma \ts N \tsub V_2}
\end{multicols}\vspace{-1ex}

\judgment{Checked subtyping}{\fbox{$\Gamma \ts U \tsub V \kck K$}}

\bigskip

\begin{minipage}[t]{0.39\linewidth}
\infruleLeft{CST-Intv}
  {\Gamma \ts V_1 \kck U \intv W\\ \Gamma \ts V_2 \kck U \intv W\\
   \Gamma \ts V_1 \tsub V_2}
  {\Gamma \ts V_1 \tsub V_2 \kck U \intv W}
\end{minipage}\hfill
\begin{minipage}[t]{0.58\linewidth}
\infruleLeft{CST-Abs}
  {\Gamma \ts \lam{X}{J_1}{V_1} \kck \dfun{X}{J}{K}\\
   \Gamma \ts \lam{X}{J_2}{V_2} \kck \dfun{X}{J}{K}\\
   \Gamma, X \kas J \ts V_1 \tsub V_2 \kck K}
  {\Gamma \ts \lam{X}{J_1}{V_1} \tsub \lam{X}{J_2}{V_2} \kck \dfun{X}{J}{K}}
\end{minipage}

\bigskip

\judgment{Spine equality}{\fbox{$\spC{\Gamma}{J}{\vec{U} \keq \vec{V}}{K}$}}

\bigskip

\begin{minipage}[t]{0.34\linewidth}
\infruleLeft{SpEq-Empty}{\quad}
  {\spC{\Gamma}{K}{\sempty \keq \sempty}{K}}
\end{minipage}\hfill
\begin{minipage}[t]{0.645\linewidth}
\infruleLeft{SpEq-Cons}
  {\Gamma \ts U_1 \keq U_2 \kck J \andalso
   \spC{\Gamma}{\hsubst{K}{X}{\ksimp{J}}{U_1}}{\vec{V}_1 \keq \vec{V}_2}{L}}
  {\spC{\Gamma}{\dfun{X}{J}{K}}{\scons{U_1}{\vec{V}_1} \keq \scons{U_2}{\vec{V}_2}}{L}}
\end{minipage}}
 
The rules for canonical kinding, subtyping and spine equality are given in
\Fig[Figs.]{canon_rules1} and~\ref{fig:canon_rules2}.
The canonical system also contains judgments for kind and context
formation, subkinding, and type and kind equality, but the rules of
those judgments are analogous to their declarative counterparts, so we
omit them.
We use the following naming conventions to distinguish normal forms:
$U$, $V$, $W$ denote normal types;
$M$, $N$ denote neutral types.
No special notation is used for normal kinds.

The judgments for kinding, subtyping and spine equality are
bidirectional.
Double arrows are used to indicate whether a kind is an input
($K \ksn$ or $\kck K$) or an output ($\ksn K$) of the judgment.
The contexts and subjects of judgments are always considered inputs.
The rules for kind synthesis are similar to those for declarative
kinding, except that the synthesized kinds are more precise, that
there is no subsumption rule, and that kinding of variables and
applications has been combined into a single rule~\ruleref{CK-Sing}
for kinding neutral proper types.
A quick inspection of the rules reveals that all synthesized kinds are
singletons.
In particular, $\Gamma \ts V \!\ksn V \intv V$ for proper types $V$.
The kind checking judgment $\Gamma \ts V \kck K$ has only one
inference rule:
the subsumption rule \ruleref{CK-Sub}.

The canonical kinding judgments $\Gamma \tsVar X \kin K$ for variables
and $\Gamma \tsNe N \kin K$ for neutral types are not directed because
of the presence of the subsumption rule~\ruleref{CV-Sub}.
While this rule is not actually necessary for variable kinding, it
considerably simplifies the metatheory.
Without it, the proof of context narrowing would circularly depend on
at least three other lemmas
--~transitivity of subkinding, functionality and the hereditary
substitution lemma~--
all of which use somewhat idiosyncratic, possibly incompatible
induction strategies.

Canonical spine kinding $\spC{\Gamma}{J}{\vec{V}}{K}$ differs from the
other judgments in that it features both an input kind $J$ and an
output kind $K$.
When an operator of shape $J$ is applied to the spine $\vec{V}$, the
resulting type is of shape $K$ -- as exemplified by the
rule~\ruleref{CK-Ne}.
In~\ruleref{CK-Cons}, the head~$U$ of the spine~$\scons{U}{\vec{V}}$
is hereditarily substituted for $X$ in the codomain~$K$ of the overall
input kind~$\dfun{X}{J}{K}$ to obtain the input
kind~$\hsubst{K}{X}{\ksimp{J}}{U}$ for kinding the tail~$\vec{V}$ of
the spine.
The use of hereditary (rather than ordinary) substitution ensures that
the resulting kind remains normal.

The rules for canonical subtyping are divided into two judgments:
\emph{subtyping of proper types} $\Gamma \ts U \tsub V$ and
\emph{checked subtyping} $\Gamma \ts U \tsub V \kck K$.
The separate judgment for proper subtyping $\Gamma \ts U \tsub V$
simplifies the metatheory by disentangling subtyping and kinding.
It resembles the subtyping judgment of \FSub{}.
Notable differences are the two bound projection
rules~\rulerefN{CST-Bnd1}{CST-Bnd$_1$}
and~\rulerefN{CST-Bnd2}{CST-Bnd$_2$}, which replace the variable
subtyping rule, and the rule~\ruleref{CST-Ne} for subtyping
neutrals.
The most interesting of these is~\ruleref{CST-Ne}.
It says that two neutral types $\app{X}{\vec{V}_1}$ and
$\app{X}{\vec{V}_2}$ headed by a common type variable $X$ are subtypes
if they have canonically equal spines.
Importantly, the spines $\vec{V}_1$ and $\vec{V}_2$ need not be
syntactically equal.
In \FOmegaInt{}, normal forms may be judgmentally equal yet differ
syntactically, \eg because they have different domain annotations (via
\ruleref{CST-Abs}), because one of the types is declared as a type
alias of the other (via a singleton kind),
or because one of the types can be proven to alias the other due to inconsistent bounds:
$ \Gamma, \, X \kas\, U \intv V, \, Y \kas\, V \intv U, \, \Delta \ts
\, U \teq V$.
The last example illustrates that there is no easy way for the
normalization function $\nfRaw$ to resolve such equations.
In systems without dependent kinds and equality reflection,
judgmentally equal types have syntactically equal normal forms, so
that \ruleref{CST-Ne} can be omitted \citep[see e.g.][]{AbelR08csl};
in systems with singleton kinds (but no type intervals), type aliases
can be resolved during normalization \citep[see e.g.][]{StoneH00popl}.
Neither of these apply in~\FOmegaInt{}.
Unfortunately, the presence of \ruleref{CST-Ne} complicates the
metatheory of the canonical system (\cf \Sec{canon-hsubst}).

The checked subtyping judgment $\Gamma \ts U \tsub V \kck K$ is
kind-directed.
The kind $K$ determines whether $U$ and $V$ are compared as proper
types (\ruleref{CST-Intv}) or type operators (\ruleref{CST-Abs}).
The rule~\ruleref{CST-Intv} checks that the types $V_1$ and $V_2$ have
the expected kind $U \intv W$ and are proper subtypes.
Because normal types are $\eta$-long, the only normal types of arrow
kind are operator abstractions.
They are compared using rule~\ruleref{CST-Abs}, exactly as in the
declarative system.
The rules of the \emph{spine equality} judgment
$\spC{\Gamma}{J}{\vec{U} \teq \vec{V}}{K}$ resemble those of spine
kinding.

\subsection{Metatheoretic Properties of the Canonical System}
\label{sec:canon-sound}

It is easy to show that the canonical system is sound \wrt the
declarative one, \ie that normal forms related by the canonical
judgments are also related by the declarative counterparts.
To avoid confusion, we mark canonical judgments with the subscript
``$\mathsf{c}$'' and declarative ones with ``$\mathsf{d}$'' in the
following soundness lemma.
The full statement of the lemma has \SupItem{can-snd-speq}~parts, one for each canonical
judgment;
we only show the most important ones.
\begin{lemma}[soundness of the canonical rules -- excerpt]
  \label{lem:canon-sound}
  ~
  \begin{enumerate}\item If $\; \Gamma \tsC V \ksn K$ or
    $\Gamma \tsC V \kck K$, then $\; \Gamma \tsD V \kin K$.
  \item If $\; \Gamma \tsC U \tsub V$, then
    $\; \Gamma \tsD U \tsub V \kin \kstar$.
  \item If $\; \Gamma \tsC U \tsub V \kck K$,
    then $\; \Gamma \tsD U \tsub V \kin K$.
  \end{enumerate}
\end{lemma}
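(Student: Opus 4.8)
The plan is to prove all parts of the lemma simultaneously, by induction on the derivations of the canonical judgments, translating each canonical rule into a derivation (or short composite derivation) of the corresponding declarative judgment. The parts for context formation, kind formation, subkinding and kind/type equality are essentially immediate, since the rules of those canonical judgments mirror their declarative counterparts one-for-one; the real work lies in kind synthesis and checking, neutral and spine kinding, and the two subtyping judgments. Throughout, we rely on two facts already at hand: validity (\Lem{decl_validity}), used to discharge the gray validity-condition premises of declarative rules such as \ruleref{K-App} and \ruleref{ST-App} --- this is unproblematic because validity has been established independently, so no circularity arises --- and soundness of hereditary substitution (\Cor{hsubst-sound}), which lets us replace a hereditary substitution $\hsubst{K}{X}{\ksimp{J}}{U}$ by the ordinary substitution $\subst{K}{X}{U}$ up to declarative kind equality. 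We also use freely the admissible order-theoretic and congruence rules of the declarative system.

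For kind synthesis, \ruleref{CK-Top}, \ruleref{CK-Bot}, \ruleref{CK-Arr}, \ruleref{CK-All} and \ruleref{CK-Abs} map directly to \ruleref{K-Top}, \ruleref{K-Bot}, \ruleref{K-Arr}, \ruleref{K-All} and \ruleref{K-Abs}, the last discharging its validity condition $\kindD{\Gamma, X \kas J}{K}$ by applying \Lem{decl_validity} to the translated body derivation; \ruleref{CK-Sing} maps to \ruleref{K-Sing}, \ruleref{CV-Var} to \ruleref{K-Var}, and \ruleref{CV-Sub} and \ruleref{CK-Sub} both to \ruleref{K-Sub} (after the IH for the subkinding premise). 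For neutral kinding \ruleref{CK-Ne} and spine kinding the argument is more delicate: from the spine-kinding derivation $\spC{\Gamma}{J}{\vec{V}}{K}$ we prove, by an auxiliary induction on it, that whenever $\Gamma \tsD A \kin J$ we also have $\Gamma \tsD \app{A}{\vec{V}} \kin K$; each \ruleref{CK-Cons} step invokes the IH for the kind-checking premise $\Gamma \ts U \kck J$ (yielding $\Gamma \tsD U \kin J$), then \ruleref{K-App}, then \Cor{hsubst-sound} to rewrite the declarative conclusion $\subst{K}{X}{U}$ into the canonical $\hsubst{K}{X}{\ksimp{J}}{U}$ modulo kind equality, so the running kind stays in step with the canonical derivation. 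Instantiating $A$ with the head variable (kinded via \ruleref{K-Var}) discharges \ruleref{CK-Ne}. Note that canonical kind synthesis always produces singleton intervals, so whenever a declarative rule demands an input of the form $\Gamma \ts B \kin C \intv D$ (as in \ruleref{K-Sing}, \ruleref{ST-Top}, \ruleref{ST-Bot}) it is immediately available.

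For proper subtyping $\Gamma \tsC U \tsub V$, the cases \ruleref{CST-Top}, \ruleref{CST-Bot}, \ruleref{CST-Arr}, \ruleref{CST-All} and \ruleref{CST-Trans} map to \ruleref{ST-Top}, \ruleref{ST-Bot}, \ruleref{ST-Arr}, \ruleref{ST-All} and \ruleref{ST-Trans}: a premise $\typeC{\Gamma}{V}$ translates to $\Gamma \tsD V \kin V \intv V$, which is exactly the interval-shaped kinding demanded by \ruleref{ST-Top}, \ruleref{ST-Bot} and \ruleref{ST-All}, and transitivity composes two $\kstar$-indexed subtyping derivations directly. The bound-projection rules \rulerefN{CST-Bnd1}{CST-Bnd$_1$} and \rulerefN{CST-Bnd2}{CST-Bnd$_2$} map to \rulerefN{ST-Bnd1}{ST-Bnd$_1$} and \rulerefN{ST-Bnd2}{ST-Bnd$_2$} via the translated neutral-kinding premise. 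The rule \ruleref{CST-Ne} is the subtlest: starting from $\Gamma \tsD X \tsub X \kin K$ by \ruleref{ST-Refl} (applied to $X$ kinded via \ruleref{K-Var}), we consume the spine-equality derivation $\spC{\Gamma}{K}{\vec{V}_1 \keq \vec{V}_2}{U \intv W}$ one argument at a time with \ruleref{ST-App}, supplying at each step the argument equality $\Gamma \tsD U_1 \teq U_2 \kin J$ from the IH for checked type equality and, again, using \Cor{hsubst-sound} to track the substituted kind; this yields $\Gamma \tsD \app{X}{\vec{V}_1} \tsub \app{X}{\vec{V}_2} \kin U \intv W$, which we widen to $\Gamma \tsD \app{X}{\vec{V}_1} \tsub \app{X}{\vec{V}_2} \kin \kstar$ by \ruleref{ST-Sub} and \ruleref{SK-Intv}. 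For checked subtyping, \ruleref{CST-Abs} mirrors \ruleref{ST-Abs} premise-for-premise, while \ruleref{CST-Intv} requires one more step: the IH gives $\Gamma \tsD V_1 \tsub V_2 \kin \kstar$, $\Gamma \tsD V_1 \kin U \intv W$ and $\Gamma \tsD V_2 \kin U \intv W$; \ruleref{ST-Intv} lifts the subtyping to kind $V_1 \intv V_2$, rules \rulerefN{ST-Bnd1}{ST-Bnd$_1$} and \rulerefN{ST-Bnd2}{ST-Bnd$_2$} give $U \tsub V_1$ and $V_2 \tsub W$, and \ruleref{SK-Intv} together with \ruleref{ST-Sub} finally re-indexes it at $U \intv W$.

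The main obstacle is not any single case but the bookkeeping around spines: the spine-kinding and spine-equality judgments must be translated via auxiliary lemmas phrased relative to an arbitrary already-kinded head, and every \ruleref{CK-Cons}/\ruleref{SpEq-Cons} step forces us to reconcile the hereditary substitution threading the output kind of the canonical judgment with the ordinary substitution appearing in the declarative \ruleref{K-App}/\ruleref{ST-App}. \Cor{hsubst-sound} bridges exactly this gap, but it must be invoked at each step and composed with declarative kind equality --- and, in the subtyping case, with \ruleref{ST-Sub} --- to keep the running kind synchronized with the canonical derivation; getting this composition and the attendant simultaneous-induction structure right is where the care is needed. Everything else is routine: discharging the gray validity conditions via \Lem{decl_validity}, and exploiting that canonical synthesis yields only singleton intervals.
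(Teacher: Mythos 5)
Your proposal is correct and follows essentially the same route as the paper: a simultaneous induction over all the canonical judgments, with the spine kinding and spine equality parts generalized over an arbitrary declaratively kinded (resp.\ related) head, \Cor{hsubst-sound} invoked at the \ruleref{CK-Cons}/\ruleref{SpEq-Cons} steps to reconcile hereditary with ordinary substitution, and \ruleref{K-Sing} together with the interval-adjustment lemmas (the fact that synthesized kinds are singletons, and re-indexing between $\kstar$ and $U \intv W$) to fix up kinds where the canonical and declarative rules disagree. The only cosmetic difference is that where you re-derive the kind adjustments by hand via \ruleref{ST-Intv}, \rulerefN{ST-Bnd1}{ST-Bnd$_{1,2}$}, \ruleref{SK-Intv} and \ruleref{ST-Sub}, the paper packages them as the admissible lemmas \Lem{K-Intv-Star}, \Lem{ST-Intv-Star} and \Lem{nfsn-intv-sing}.
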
\noindent
Many of the basic properties of the declarative system
--~
weakening, context narrowing, admissible order-theoretic rules,
many validity properties, and the commutativity lemmas from the
previous section~--
also hold in the canonical system, and their proofs carry over with
minor modifications.
Full statements and proofs are given in \SupSec{canon-sound-full}.
Notable exceptions are the substitution and functionality lemmas.
These do not hold because ordinary substitutions do not preserve
normal forms.
Instead, we need to establish analogous lemmas for hereditary
substitutions.

\subsubsection{The Hereditary Substitution Lemma}
\label{sec:canon-hsubst}

The most important metatheoretic property of the canonical system is
the \emph{hereditary substitution lemma}, which states, roughly, that
canonical judgments are preserved by hereditary substitutions of
canonically well-kinded types.
It is key to proving completeness \wrt the declarative system, and
thus to our overall goal of establishing type safety.
Proving and even stating the hereditary substitution lemma is
challenging.
The full statement of the lemma has \SupItem{can-rapp-teq} separate
parts, all of which have to be proven simultaneously.
The large number of canonical judgments is one reason for the
complexity of the lemma.
But the foremost reason is that the proof of the hereditary
substitution lemma circularly depends on \emph{functionality of the
  canonical judgments}.
This circular dependency is caused by the subtyping
rule~\ruleref{CST-Ne}.

To illustrate this, assume we are given two normal forms $V$ and $W$
such that
$\Gamma \ts V \teq W \kck \kstar$, but $V \not\alEq W$ syntactically.
We have seen examples of such normal forms $V$ and $W$ earlier.
Assume further that there is some $X$ with
$\Gamma(X) = \fun{\kstar}{\kstar}$ and consider what happens when we
hereditarily substitute the operator $\lam{Y}{\kstar}{U}$ for $X$ in
the judgment $\Gamma \ts \app{X}{V} \tsub \app{X}{W}$ obtained
using~\ruleref{CST-Ne}.
We would like to show that hereditary substitution preserves this
inequation, \ie that
\[
  \Gamma \; \ts \;
  \hsubst{(\app{X}{V})}{X}{\fun{\kstar}{\kstar}}{\lam{Y}{\kstar}{U}}
  \; \alEq \;
  \hsubst{U}{Y}{\kstar\!}{V}
  \; \tsub \;
  \hsubst{U}{Y}{\kstar\!}{W}
  \; \alEq \;
  \hsubst{(\app{X}{W})}{X}{\fun{\kstar}{\kstar}}{\lam{Y}{\kstar}{U}},
\]
which requires functionality.
The example illustrates a second point, namely that, in order to prove
that hereditary substitutions preserve canonical kinding and
subtyping, we need to prove that kinding and subtyping of reducing
applications is admissible.
Our hereditary substitution lemma must cover all of these properties,
leading to the aforementioned grand total of
\SupItem{can-rapp-teq}~parts.
We give a small excerpt here, illustrating some of the properties
just discussed.
\begin{lemma}[hereditary substitution -- excerpt]\label{lem:canon-hsubst}
  Assume $\Gamma \ts U_1 \; \teq \; U_2 \kck K$.
  \begin{enumerate}\item Hereditary substitution preserves kind checking:

    if $\; \Gamma, X \kas K, \Delta \ts V \kck J$, then
    $\; \Gamma, \hsubst{\Delta}{X}{\ksimp{K}}{U_1} \ts
    \hsubst{V}{X}{\ksimp{K}}{U_1} \kck \hsubst{J}{X}{\ksimp{K}}{U_1}$.
  \item Functionality/monotonicity of hereditary substitution:

    if $\; \Gamma, X \kas K, \Delta \ts V \kck J$ and
    $\; \kindC{\Gamma, X \kas K, \Delta}{J}$, then
    \[
      \Gamma, \hsubst{\Delta}{X}{\ksimp{K}}{U_1} \ts
      \hsubst{V}{X}{\ksimp{K}}{U_1} \tsub
      \hsubst{V}{X}{\ksimp{K}}{U_2} \kck
      \hsubst{J}{X}{\ksimp{K}}{U_1}.
    \]
\item Hereditary substitution preserves checked subtyping:

    if $\; \Gamma, X \kas K, \Delta \ts V_1 \tsub V_2 \kck J$ and
    $\kindC{\Gamma, X \kas K, \Delta}{J}$, then
    \[
      \Gamma, \hsubst{\Delta}{X}{\ksimp{K}}{U_1} \ts
      \hsubst{V_1}{X}{\ksimp{K}}{U_1} \tsub
      \hsubst{V_2}{X}{\ksimp{K}}{U_2} \kck
      \hsubst{J}{X}{\ksimp{K}}{U_1}.
    \]
\item Canonical equality of reducing applications is admissible:

    if $\; \Gamma \ts V_1 \teq V_2 \kck \dfun{X}{K}{J}$, then
    $
      \Gamma \ts
      \rapp{V_1}{\fun{\ksimp{K}}{\ksimp{J}}}{U_1} \teq
      \rapp{V_2}{\fun{\ksimp{K}}{\ksimp{J}}}{U_2} \kck
      \hsubst{J}{X}{\ksimp{K}}{U_1}.
    $
  \end{enumerate}
\end{lemma}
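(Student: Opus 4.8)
The plan is to prove \emph{all} parts of the lemma — the excerpt states four, but the full statement has one family of clauses per canonical judgment (variable, neutral and spine kinding; kind synthesis and checking; proper and checked subtyping; spine equality; plus the corresponding facts about reducing application) — by a single simultaneous induction. The principal measure is lexicographic: first the shape $\ksimp{K}$ of the variable $X$ being eliminated (equivalently, the shape at which the relevant reducing application operates), and then the structure of the canonical derivation being transformed (for the reducing-application clauses, the length of the spine). This mirrors the fact that hereditary substitution and reducing application are themselves defined by recursion on the shape with an inner recursion on the term.

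Most cases are routine congruences. When the subject is headed by a variable $Y \neq X$, or is $\Top$, $\Bot$, an arrow, a quantifier, or an abstraction, hereditary substitution commutes with the type former, so one applies the induction hypothesis to the immediate subderivations — reindexing contexts to $\Gamma, \hsubst{\Delta}{X}{\ksimp{K}}{U_1}$ and pushing $U_1$ (or the equation $U_1 \teq U_2$) under new binders by weakening — and reassembles the conclusion with the same rule. Keeping the kind indices synchronized forces us to use the commutativity properties of hereditary substitution from \Sec{commute}: the output kind produced by \ruleref{CK-Cons} (resp.\ \ruleref{SpEq-Cons}) contains nested hereditary substitutions, and identifying $\hsubst{(\hsubst{K}{Z}{\ksimp{J}}{U})}{X}{\ksimp{K'}}{U_1}$ with $\hsubst{(\hsubst{K}{X}{\ksimp{K'}}{U_1})}{Z}{\ksimp{J}}{\hsubst{U}{X}{\ksimp{K'}}{U_1}}$ is exactly what those lemmas give (up to weak equality, which the canonical equality judgment absorbs for well-kinded types). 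The subsumption rules \ruleref{CV-Sub}, \ruleref{CK-Sub}, \ruleref{CST-Sub} go through once we also include, as further clauses of the bundle, that hereditary substitution preserves subkinding and kind well-formedness.

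The heart of the proof is the case where the neutral being substituted into is headed by $X$, i.e.\ the \ruleref{CK-Ne} and \ruleref{CST-Ne} cases. Here $\hsubst{(\app{X}{\vec V})}{X}{\ksimp{K}}{U_1}$ reduces to $\rapp{U_1}{\ksimp{K}}{\hsubst{\vec V}{X}{\ksimp{K}}{U_1}}$, so we need two auxiliary families, themselves part of the simultaneous induction: (i) reducing application preserves canonical kinding, kind checking and (checked) subtyping; and (ii) reducing application is functional — equal heads applied to canonically equal spines yield canonically equal results, which is part~4 and its spine-level generalization. Both are proven by peeling off the leading argument: when $\ksimp{K} = \fun{k_1}{k_2}$ the normal head $U_1$ must be an abstraction $\lam{Z}{J'}{V'}$ (normal types of arrow shape are abstractions), and $\rapp{(\lam{Z}{J'}{V'})}{\fun{k_1}{k_2}}{(\scons{W}{\vec W'})} = \rapp{(\hsubst{V'}{Z}{k_1}{W})}{k_2}{\vec W'}$ appeals only to hereditary substitution at the strictly smaller shape $k_1$ and reducing application at the strictly smaller shape $k_2$; hence at shape $\ksimp{K}$ clauses (i) and (ii) call the induction hypothesis only at smaller shapes, and are available exactly when the \ruleref{CK-Ne}/\ruleref{CST-Ne} cases of hereditary substitution at shape $\ksimp{K}$ require them. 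Functionality is unavoidable in the \ruleref{CST-Ne} case of part~3 because there $V_1 = \app{X}{\vec V_1}$ and $V_2 = \app{X}{\vec V_2}$ with only \emph{canonically} equal spines, and with $U_1 \teq U_2$ rather than $U_1 \alEq U_2$, so the two sides reduce to $\rapp{U_1}{\ksimp{K}}{\hsubst{\vec V_1}{X}{\ksimp{K}}{U_1}}$ and $\rapp{U_2}{\ksimp{K}}{\hsubst{\vec V_2}{X}{\ksimp{K}}{U_2}}$, which can only be related via (ii).

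The main obstacle is precisely this entanglement of substitution-preservation with functionality through \ruleref{CST-Ne}: one must stratify the bundled statement and its measure so that the functionality facts needed in the neutral cases are always available either at a strictly smaller shape (for reducing application) or as an already-dispatched clause at the same measure (for the spine recursion), while the functionality clauses themselves appeal to substitution-preservation only at smaller shapes — with no circularity, since reducing application at shape $\ell$ never depends on hereditary substitution at shape $\ell$. Getting this stratification right, and keeping every reassembled judgment \emph{well-formed} by threading the \Sec{commute} commutativity lemmas through the kind indices, is what makes the lemma hard even to state; the remaining cases are bookkeeping.
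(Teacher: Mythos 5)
Your proposal matches the paper's proof essentially exactly: the full lemma is proven as one bundle (including kind formation, variable/neutral/spine kinding, subkinding, the functionality clauses, and admissibility of kinding/equality of reducing applications) by simultaneous induction on the shape $\ksimp{K}$ with an inner induction on the derivation being substituted into, the $X$-headed neutral cases of \ruleref{CK-Ne}/\ruleref{CST-Ne} discharged via the reducing-application clauses (which only recurse at strictly smaller shapes), and the \ruleref{CK-Cons}/\ruleref{SpEq-Cons} cases handled with the commutativity of hereditary substitutions in kinds (obtained there syntactically via simplification of canonical kinding, rather than up to weak equality as you suggest, but this is an inessential difference). Your stratification argument for why functionality and substitution-preservation can be proved together despite \ruleref{CST-Ne} is precisely the paper's justification.
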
\noindent
The structure of the proof mirrors that of the recursive definition of
hereditary substitution itself.
All \SupItem{can-rapp-teq}~parts are proven simultaneously by
induction in the structure of the shape $\ksimp{K}$.
Most parts proceed by an inner induction on the derivations of the
judgments into which $U_1$ and $U_2$ are being substituted.
The cases involving the rules~\ruleref{CK-Cons}
and~\ruleref{SpEq-Cons}, rely on commutativity of hereditary
substitutions in kinds.
Details are given in \SupSec{canon-hsubst-full}.

Just as for the declarative system, the proof of functionality enables
us to prove some additional validity properties,
which, in turn, are necessary to prove completeness of the canonical
system.

\subsubsection{Completeness of Canonical Kinding}
\label{sec:canon-complete}

In \Sec{normalization}, we saw that every declaratively well-kinded
type has a judgmentally equal $\beta\eta$-normal form
(\Lem[Lemmas]{nf-sound}).
To establish equivalence of the canonical and declarative systems, it
remains to show that the normal forms of types related by a
declarative judgment are also canonically related.
The full statement of the completeness lemma has
\SupItem{can-cmp-beta}~parts, one for each declarative judgment plus
three auxiliary parts for dealing with hereditary substitutions and
$\beta\eta$-conversions.
The most important ones are the following, where we again use the subscripts ``$\mathsf{c}$'' and
``$\mathsf{d}$'' to distinguish canonical judgments from declarative
ones.
\begin{lemma}[completeness of the canonical rules -- excerpt]
    \label{lem:canon-complete}
  ~
  \begin{enumerate}\item If $\; \Gamma \tsD A \kin K$, then
    $\; \nfCtx{\Gamma} \tsC \nf{}{A} \kck \nf{}{K}$.
  \item If $\; \Gamma \tsD A \ksub B \kin K$, then
    $\; \nfCtx{\Gamma} \tsC \nf{}{A} \ksub \nf{}{B} \kck \nf{}{K}$.
  \end{enumerate}
\end{lemma}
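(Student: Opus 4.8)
The plan is to prove all \SupItem{can-cmp-beta}~parts of the completeness lemma simultaneously, by induction on the declarative derivation, with the two displayed statements as the cases for kinding and checked subtyping. Besides one part per declarative judgment, the induction carries three auxiliary parts linking normalization to the operations of \Sec{normalization}: that $\beta\eta$-conversion is absorbed by $\nfRaw$ (when $A \betastep B$, or when $A$ and $B$ are $\eta$-related and well-kinded, their normal forms are canonically equal); that normalizing an ordinary substitution $\subst{B}{X}{A}$ agrees canonically with hereditarily substituting the normal form of $A$ into that of $B$; and a companion statement for reducing applications. These are, in effect, the canonical-judgment upgrades of \Lem{subst-red-hsubst} and \Lem{weq-nf-comm}. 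The machinery we rely on is \Lem{nf-sound} (normalization is judgmentally invisible), \Lem{canon-hsubst} (hereditary substitution preserves the canonical judgments, and kinding and subtyping of reducing applications are admissible), \Lem{weq-nf-comm} together with the simple-kinding commutativity results of \Sec{commute}, subject reduction (\Cor{pres_kind}, \Thm{beta_red_teq}), and the canonical counterparts of weakening, narrowing, the order-theoretic rules and the validity lemmas discussed in \Sec{canon-sound}.

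For the declarative rules whose conclusions normalize compositionally, the argument is routine: $\nfRaw$ builds the normal form of the conclusion from the normal forms of the premises, so the induction hypotheses feed directly into the corresponding canonical rule. This handles \ruleref{Wf-Intv}, \ruleref{Wf-DArr}; the kinding rules \ruleref{K-Top}, \ruleref{K-Bot}, \ruleref{K-Arr}, \ruleref{K-Abs}, \ruleref{K-All}, and \ruleref{K-Sing} (mapping to \ruleref{CK-Sing}); and the subtyping rules \ruleref{ST-Top}, \ruleref{ST-Bot}, \ruleref{ST-Arr}, \ruleref{ST-All}, \ruleref{ST-Abs}, \rulerefN{ST-Bnd1}{ST-Bnd$_1$} and \rulerefN{ST-Bnd2}{ST-Bnd$_2$}, which map to the corresponding canonical rules, plus \ruleref{ST-Refl} and \ruleref{ST-Trans}, discharged using admissible canonical reflexivity and \ruleref{CST-Trans} (together with their checked-subtyping versions obtained from canonical validity). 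The subsumption-style rules \ruleref{K-Sub}, \ruleref{ST-Sub}, \ruleref{ST-Intv} go through \ruleref{CK-Sub} and \ruleref{CST-Intv}, after invoking the equivalence of declarative and canonical subkinding. The rule \ruleref{K-Var} requires a small auxiliary observation -- $\eta$-expanding a canonically well-kinded neutral along its declared kind yields a checked normal type of that kind -- and the dual observation makes the $\eta$-rules \rulerefN{ST-Eta1}{ST-$\eta_1$} and \rulerefN{ST-Eta2}{ST-$\eta_2$} collapse, since $\nf{\nfCtx{\Gamma}}{\lam{X}{J}{\app{A}{X}}}$ and $\nf{\nfCtx{\Gamma}}{A}$ are weakly equal and hence canonically equal once both are known well-kinded. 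Antisymmetry (\ruleref{ST-AntiSym}, \ruleref{SK-AntiSym}) reduces to the canonical antisymmetry rules once both directions are in hand.

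The substantive cases are \ruleref{K-App}, \ruleref{ST-App} and the $\beta$-rules, where the normal form of the conclusion arises from a reducing application or a hereditary substitution. For \ruleref{K-App}, $\nf{\nfCtx{\Gamma}}{\app{A}{B}}$ equals $\rapp{\nf{\nfCtx{\Gamma}}{A}}{\ksimp{K}}{\nf{\nfCtx{\Gamma}}{B}}$, so the claim follows from the induction hypotheses and the reducing-application part of \Lem{canon-hsubst}; the conclusion's kind $\nf{\nfCtx{\Gamma}}{\subst{K}{X}{B}}$ is reconciled with $\hsubst{\nf{}{K}}{X}{\ksimp{J}}{\nf{}{B}}$ via \Lem{weq-nf-comm} and the weak-to-canonical-equality step. \ruleref{ST-App} is the subtyping analogue: since $A_1, A_2$ have arrow kind, their normal forms are operator abstractions, so the conclusion comes from applying the hereditary-substitution and functionality parts of \Lem{canon-hsubst} to their bodies (recovered by inversion of \ruleref{CST-Abs}), using that the premise $B_1 \teq B_2$ normalizes to a canonical equation. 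For \rulerefN{ST-Beta1}{ST-$\beta_1$} and \rulerefN{ST-Beta2}{ST-$\beta_2$}, \Lem{weq-nf-comm} gives that $\nf{\nfCtx{\Gamma}}{\app{(\lam{X}{J}{A})}{B}}$ and $\nf{\nfCtx{\Gamma}}{\subst{A}{X}{B}}$ are weakly equal, so the conclusion follows from canonical reflexivity once both sides are certified well-kinded, which is supplied by the kinding part of the same simultaneous induction.

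The main obstacle is precisely this repeated upgrade from weak equality $\wkEq$ to canonical type equality $\teq$. \Lem{weq-nf-comm} only yields agreement up to the \emph{shapes} of domain annotations; promoting that to a genuine canonical equation needs both normal forms to be canonically well-kinded, so that the annotations appearing in them are themselves well-formed kinds -- and that well-kindedness is another component of the lemma being proven. Hence completeness, the $\beta\eta$-absorption auxiliary and the substitution-commutation auxiliary cannot be teased apart; they must travel together in one induction, with the declarative derivation as the primary measure and the shape $\ksimp{K}$ as a secondary measure in the substitution auxiliaries, mirroring the recursion structure of hereditary substitution and of \Lem{canon-hsubst}. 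Making that measure well-founded, and threading the simple-kinding commutativity results of \Sec{commute} through every point where the annotations introduced by $\eta$-expansion must finally be pinned down, is where the real work lies; the rest is bookkeeping.
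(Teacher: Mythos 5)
Your overall architecture matches the paper's quite closely: one simultaneous induction with a part per declarative judgment plus auxiliary parts for $\beta\eta$-absorption and for the agreement of normalized substitutions with hereditary substitutions, with \Lem{canon-hsubst} discharging \ruleref{K-App}/\ruleref{ST-App}, the $\eta$-expansion property handling \ruleref{K-Var}, and \Lem{weq-nf-comm} plus the weak-to-canonical lifting doing the conversion cases. However, there is a genuine gap in how you justify the crucial well-kindedness facts needed for that lifting. In the \emph{original} declarative system, the rules \rulerefN{ST-Beta1}{ST-$\beta_1$}, \rulerefN{ST-Beta2}{ST-$\beta_2$}, \ruleref{K-App} and \ruleref{ST-App} do \emph{not} have premises such as $\Gamma \ts \subst{A}{X}{B} \kin \subst{K}{X}{B}$ or $\kindD{\Gamma}{\subst{K}{X}{B}}$; these are only derivable via the substitution and validity lemmas, which produce derivations that are not structurally smaller. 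Yet your plan needs exactly these facts: to promote $\nf{}{\subst{A}{X}{B}} \wkEq \hsubst{\nf{}{A}}{X}{\ksimp{J}}{\nf{}{B}}$ and $\nf{}{\subst{K}{X}{B}} \wkEq \hsubst{\nf{}{K}}{X}{\ksimp{J}}{\nf{}{B}}$ to canonical equations, the weak-equality lifting requires \emph{both} sides to be canonically well-kinded/well-formed (a one-sided version is false, since the differing domain annotations on the other side need not be well-formed), and you say this "is supplied by the kinding part of the same simultaneous induction." But the IH only applies to sub-derivations, and in the original system these kinding statements are not sub-derivations of the rule instance you are handling; your proposed secondary measure $\ksimp{K}$ does not help, because the problematic calls are kinding calls on unrelated (not smaller) derivations.

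The paper resolves precisely this circularity by running the completeness induction over the \emph{extended} declarative system, whose rules carry the gray validity conditions (this is why the main text notes that the proof "relies on the validity conditions discussed in \Sec{decl_basics_short}"): in the extended \rulerefN{ST-Beta1}{ST-$\beta_{1,2}$}, \ruleref{K-App} and \ruleref{ST-App}, judgments like $\Gamma \ts \subst{A}{X}{B} \kin \subst{K}{X}{B}$ and $\kindD{\Gamma}{\subst{K}{X}{B}}$ \emph{are} premises, so the IH applies to them directly, and the result for the original system follows afterwards from the equivalence of the two declarative presentations. Your proposal never invokes this device (you cite only the canonical validity lemmas of \Sec{canon-sound}, which do not supply these declarative-side facts), so as written the induction is not well-founded at the $\beta$-rules and at \ruleref{K-App}/\ruleref{ST-App}. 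To repair it you would either have to adopt the paper's extended-system detour or exhibit some other way to obtain canonical well-formedness of $\nf{}{\subst{K}{X}{B}}$ and well-kindedness of $\nf{}{\subst{A}{X}{B}}$ without appealing to the IH on non-sub-derivations; neither is indicated in your text.
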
\noindent
The completeness proof relies on \Lem{canon-hsubst} to
show that the normal forms of applications are canonically
well-kinded, and on the weak commutativity properties established in
\Sec{commute} to show that $\beta$- and $\eta$-conversions are
admissible in the canonical system.
In addition, the proof relies on the validity conditions
discussed in \Sec{decl_basics_short}.
The full statement and proof of the completeness lemma are given in
\SupSec{canon-complete-full}.

\subsection{Inversion of Subtyping and Type Safety}
\label{sec:canon-inversion}

As we saw in \Sec{declarative}, reductions in open terms are unsafe
because variable bindings with inconsistent bounds can inject
arbitrary inequations into the subtyping relation.
Under such assumptions, subtyping cannot be inverted in any meaningful
way.
We therefore prove inversion of subtyping only in the empty context,
following the approach by \citet{RompfA16oopsla}:
\begin{enumerate}
\item We introduce a helper judgment $\tsTf U \tsub V$, which states
  that $U$ is a proper subtype of $V$ in the empty context.
It is obtained from the canonical subtyping judgment
  $\Gamma \ts U \tsub V$ by fixing $\Gamma = \cempty$ and removing
  \ruleref{CST-Trans} and any rules involving free variables
  (\ruleref{CST-Ne}, \rulerefN{CST-Bnd1}{CST-Bnd$_1$} and
  \rulerefN{CST-Bnd2}{CST-Bnd$_2$}).
Soundness of $\tsTf U \tsub V$ \wrt canonical subtyping is
  immediate.
\item We prove that transitivity is admissible in $\tsTf U \tsub V$.
This is straightforward since there are no $\beta\eta$-conversion
  rules or bound projection rules that get in the way.
\item Because transitivity is admissible, it is straightforward to
  establish completeness, and thus equivalence of $\tsTf U \tsub V$
  \wrt canonical subtyping.
\item Inversion of the canonical subtyping relation in the empty
  context then follows immediately by inspection of the rules for
  $\tsTf U \tsub V$ and equivalence of the two judgments.
\item Inversion of top-level declarative subtyping follows by
  equivalence of canonical and declarative subtyping and soundness of
  normalization.
\end{enumerate}
\begin{lemma}[inversion of top-level subtyping]
  \label{lem:st-inv}
  Let $\; \cempty \ts A_1 \tsub A_2 \kin \kstar$.
  \begin{enumerate}\item If $\; A_1 = \fun{B_1}{C_1}$ and
    $\; A_2 = \fun{B_2}{C_2}$, then
    $\; \cempty \ts B_2 \tsub B_1 \kin \kstar$ and
    $\; \cempty \ts C_1 \tsub C_2 \kin \kstar$.
  \item If $\; A_1 = \all{X}{K_1}{B_1}$ and
    $\; A_2 = \all{X}{K_2}{B_2}$, then
    $\; \cempty \ts K_2 \ksub K_1$ and
    $\; X \kas K_2 \ts B_1 \tsub B_2 \kin \kstar$.
  \item \begin{inparablank} \item $\cempty \nts \Top \tsub \Bot$, $\;$and$\;$
    \item $\cempty \nts \fun{A}{B} \tsub \all{X}{K}{C}$, $\;$and$\;$
    \item $\cempty \nts \all{X}{K}{A} \tsub \fun{B}{C}$.
    \end{inparablank}
\end{enumerate}
\end{lemma}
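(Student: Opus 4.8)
The plan is to follow the five-step recipe sketched just above: cut declarative subtyping in the empty context down to a \emph{transitivity-free} fragment of canonical subtyping, invert that fragment by a trivial case analysis on the last rule, and transfer the result back. First I would introduce the helper judgment $\tsTf U \tsub V$ as the restriction of canonical proper subtyping $\Gamma \ts U \tsub V$ to the empty context with \ruleref{CST-Trans} removed --- the variable and bound-projection rules \ruleref{CST-Ne}, \rulerefN{CST-Bnd1}{CST-Bnd$_1$} and \rulerefN{CST-Bnd2}{CST-Bnd$_2$} can be dropped as well, since their conclusions mention \emph{neutral} types and there are none in the empty context. Soundness of $\tsTf$ with respect to canonical subtyping is then immediate. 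The substantive step is that \ruleref{CST-Trans} is nonetheless \emph{admissible} in $\tsTf$: from $\tsTf U \tsub V$ and $\tsTf V \tsub W$ one derives $\tsTf U \tsub W$ by induction, casing on the pivot $V$. Since the only normal proper types in the empty context are $\Top$, $\Bot$, arrows and universals, and since no neutrals or leftover $\beta\eta$-conversion rules are around to interfere, the cases $V \in \set{\Top, \Bot}$ (and $U = \Bot$, $W = \Top$) are immediate, and otherwise both derivations end in \ruleref{CST-Arr} (resp.\ \ruleref{CST-All}), whose component (sub)typing premises are recombined using the induction hypothesis.

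With transitivity admissible, $\tsTf$ and canonical proper subtyping coincide in the empty context --- completeness replays a canonical derivation rule by rule, trading each use of \ruleref{CST-Trans} for the admissible transitivity. Inversion of $\tsTf$ then follows by inspecting its four rules. A derivation of $\tsTf \fun{U_1}{V_1} \tsub \fun{U_2}{V_2}$ cannot end in \ruleref{CST-Top}, \ruleref{CST-Bot} or \ruleref{CST-All} (wrong shapes), so it ends in \ruleref{CST-Arr} and exposes $\tsTf U_2 \tsub U_1$ and $\tsTf V_1 \tsub V_2$; a derivation of $\tsTf \all{X}{K_1}{V_1} \tsub \all{X}{K_2}{V_2}$ must end in \ruleref{CST-All} and exposes $\cempty \ts K_2 \ksub K_1$ and $X \kas K_2 \ts V_1 \tsub V_2$; and $\tsTf \Top \tsub \Bot$, $\tsTf \fun{U}{V} \tsub \all{X}{K}{W}$ and $\tsTf \all{X}{K}{U} \tsub \fun{V}{W}$ are all underivable, since no rule of $\tsTf$ has a conclusion of those shapes.

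It remains to lift this to the declarative statement. Given $\cempty \ts A_1 \tsub A_2 \kin \kstar$ with $A_1, A_2$ in one of the named shapes, validity together with completeness of the canonical rules (\Lem{canon-complete}) gives $\cempty \ts \nf{}{A_1} \tsub \nf{}{A_2} \kck \kstar$, hence, unwinding \ruleref{CST-Intv} and using the equivalence above, $\tsTf \nf{}{A_1} \tsub \nf{}{A_2}$. Normalization preserves the outermost constructor ($\nf{}{(\fun{B}{C})}$ is again an arrow, $\nf{}{(\all{X}{K}{B})}$ again a universal, $\nf{}{\Top} = \Top$, $\nf{}{\Bot} = \Bot$), so the inversion of the previous paragraph applies to $\nf{}{A_1}$ and $\nf{}{A_2}$; the resulting subtyping and subkinding facts about the normal forms are transported back to $A_1, A_2$ using soundness of the canonical rules (\Lem{canon-sound}), soundness of normalization (\Lem{nf-sound}), and the fact that type and kind equality entail subtyping and subkinding. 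The negative cases follow by contradiction: a declarative derivation of $\cempty \ts \fun{A}{B} \tsub \all{X}{K}{C} \kin \kstar$ (and symmetrically for $\cempty \ts \all{X}{K}{A} \tsub \fun{B}{C}$ and $\cempty \ts \Top \tsub \Bot$) would, through these steps, yield a $\tsTf$-derivation of an underivable judgment.

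The main obstacle is the admissibility of transitivity for the transitivity-free fragment, and within it the universal case: composing two \ruleref{CST-All} steps requires narrowing of the body-subtyping derivation along, and transitivity of, the composed subkinding, so the induction is really carried out for the transitivity-free restriction of canonical subtyping over \emph{arbitrary} contexts and only afterwards specialized to $\cempty$. Everything downstream --- the equivalence of $\tsTf$ with canonical subtyping, the inversion of $\tsTf$, and the shuttling of judgments between the raw/normal and canonical/declarative presentations --- is routine bookkeeping resting on the soundness and completeness results of \Sec{canonical}.
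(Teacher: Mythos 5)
Your overall route is the paper's own five-step recipe (it tracks the outline in \Sec{canon-inversion} almost verbatim), and your final transfer between the declarative, canonical and normalized presentations is essentially the paper's proof of the embedding and contradiction parts. The genuine divergence --- and the gap --- is in how you set up the transitivity-free judgment and prove transitivity admissible. In the paper, $\tsTf U \tsub V$ is transitivity-free \emph{only at the top level}: the premises of its arrow and universal rules remain ordinary canonical judgments (\eg \ruleref{TfST-All} has the premise $X \kas K_2 \ts V_1 \tsub V_2$, not a $\tsTf$ judgment), and the admissible rule is the asymmetric \ruleref{TfST-Trans}, whose first premise is canonical ($\cempty \ts U \tsub V$) and whose second is $\tsTf V \tsub W$; it is proved by induction on the canonical derivation. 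Composing two universal steps then needs only canonical context narrowing and ordinary \ruleref{CST-Trans} under the extended context, so no transitivity elimination in a non-empty context is ever performed, and no case analysis on the pivot type is needed.

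You instead propose a hereditarily transitivity-free system with pure transitivity ($\tsTf$ plus $\tsTf$ gives $\tsTf$), and you concede that the induction must then be ``carried out for the transitivity-free restriction of canonical subtyping over arbitrary contexts''. That lemma is not available --- it is false if the restriction keeps \ruleref{CST-Ne}, \rulerefN{CST-Bnd1}{CST-Bnd$_1$} and \rulerefN{CST-Bnd2}{CST-Bnd$_2$}, which you must keep for the system to match canonical subtyping in non-empty contexts: under $\Gamma = X \kas \Top \intv \Bot$ both $\Gamma \ts \Top \tsub X$ and $\Gamma \ts X \tsub \Bot$ are derivable without transitivity, but $\Gamma \ts \Top \tsub \Bot$ is not, since no rule other than \ruleref{CST-Trans} has a conclusion of that shape. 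If you instead drop the neutral and bound-projection rules in non-empty contexts, transitivity may hold but completeness fails (already $X \kas \kstar \ts X \tsub X$ becomes underivable), so neither the ``recombine the premises by the IH'' step in your $\forall$-case nor your rule-by-rule completeness replay goes through: the body premise of a universal lives in a non-empty context where canonical derivations may freely use the variable rules and transitivity. The repair is exactly the paper's design choice: strip transitivity only from the outermost layer, keep canonical premises in the $\tsTf$ rules, and prove the mixed rule \ruleref{TfST-Trans}; everything else in your write-up then goes through.
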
\noindent
With subtyping inversion in place, we are finally ready to prove type
safety of \FOmegaInt{}.
\begin{theorem}[type safety] Well-typed terms do not get stuck.
  \begin{enumerate}[leftmargin=10em,labelsep=1em]\item[(progress)] If $\; \ts t \tin A$, then either $t$ is a value,
    or $t \cbvstep t'$ for some term $t'$.
  \item[(weak preservation)] If $\; \ts t \tin A$ and
    $t \cbvstep t'$, then $\ts t' \tin A$.
  \end{enumerate}
\end{theorem}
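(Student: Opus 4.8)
The plan is to follow the standard Wright--Felleisen recipe: prove \emph{progress} and \emph{weak preservation} each by induction on the typing derivation $\ts t \tin A$, with the real work isolated in a \emph{canonical forms} lemma resting on inversion of top-level subtyping (\Lem{st-inv}). As \Sec{decl_typesafety} stresses, working in the empty context is essential rather than convenient: in an arbitrary context, absurd bounds can make a type abstraction a subtype of an arrow type, so canonical forms --- and hence progress --- genuinely fails there.

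First I would record two routine \emph{abstraction-inversion} lemmas, each proved by induction on typing (peeling off \ruleref{T-Sub} and composing the accumulated subtyping facts with \ruleref{ST-Trans}): if $\ts \lam{x}{A_0}{t} \tin C$ then $x \tas A_0 \ts t \tin B_0$ and $\ts \fun{A_0}{B_0} \tsub C \kin \kstar$ for some $B_0$; dually for $\ts \Lam{X}{K_0}{t} \tin C$. Combined with \Lem{st-inv}(3), these give canonical forms for closed values: a value of type $\fun{A}{B}$ must be a term abstraction (a type abstraction is excluded because $\ts \all{X}{K_0}{A_0} \tsub \fun{A}{B} \kin \kstar$ is impossible), and a value of type $\all{X}{K}{A}$ must be a type abstraction.

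Progress then follows by a direct induction on $\ts t \tin A$: the \ruleref{T-Var} case is vacuous for closed terms; abstractions and type abstractions are values; \ruleref{T-Sub} is immediate from the IH; and for \ruleref{T-App} and \ruleref{T-TApp} the IH on the operator subterm either produces a reduction step (propagated by the CBV congruence rules) or a value, which canonical forms identifies as an abstraction, exhibiting a $\beta$-redex. For weak preservation, again by induction on $\ts t \tin A$: the congruence cases use the IH and re-apply the same typing rule; \ruleref{T-Sub} uses the IH and \ruleref{T-Sub}; and the two $\beta$-cases are the crux. For the term-level redex $\app{(\lam{x}{A_0}{s})}{v}$, abstraction-inversion plus \Lem{st-inv}(1) supply $x \tas A_0 \ts s \tin B_0$ together with the domain/codomain inclusions; subsumption retypes $v$ at $A_0$, the term-substitution lemma (a basic property from \Sec{decl_basics_short}) yields $\ts \subst{s}{x}{v} \tin B_0$, and a final subsumption lands in the expected type. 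The type-level redex $\app{(\Lam{X}{K_0}{s})}{B}$ is handled symmetrically: \Lem{st-inv}(2) and \ruleref{K-Sub} retype $B$ at $K_0$, the type-substitution lemma gives $\ts \subst{s}{X}{B} \tin \subst{A_0}{X}{B}$, the substitution lemma for subtyping pushes the body inclusion down to $\ts \subst{A_0}{X}{B} \tsub \subst{A'}{X}{B} \kin \kstar$, and \ruleref{T-Sub} closes the case.

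The main obstacle is not in this argument but in its sole nontrivial input, \Lem{st-inv}: once subtyping inversion is available in the empty context, the proof above is essentially textbook, the only local care being the correct threading of subsumption through the $\beta$-cases and invoking --- rather than re-deriving --- the weakening and substitution lemmas of \Sec{decl_basics_short}. All the difficulty has been front-loaded into \Sec{normalization} and \Sec{canonical}: normalizing well-formed kinds and well-kinded types, proving the canonical and declarative subtyping relations equivalent, and eliminating transitivity in the empty context so that \Lem{st-inv} can be read off by inspection of the surviving rules.
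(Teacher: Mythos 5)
Your proposal is correct and follows essentially the same route as the paper: generation (abstraction-inversion) lemmas for term and type abstractions, canonical forms for closed values obtained from the contradiction cases of top-level subtyping inversion (\Lem{st-inv}), progress by induction on typing using canonical forms, and weak preservation by induction on typing with the $\beta$-cases discharged via generation, inversion of arrow/universal subtyping, subsumption, and the substitution lemma. The only differences are cosmetic (e.g.\ you apply the final subsumption after substituting into the body, whereas the paper subsumes the body first), so there is nothing substantive to add.
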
\noindent
The proofs are standard.
Details are given in \SupSec{canon-inversion-full}.

\section{Undecidability of Subtyping}
\label{sec:undec}

In this section, we prove that subtyping for \FOmegaInt{} is
undecidable.
The culprit is equality reflection via the bound projection
rules~\rulerefN{ST-Bnd1}{ST-Bnd$_1$}
and~\rulerefN{ST-Bnd2}{ST-Bnd$_2$} (cf. \Sec{inconsistent}).
Following~\citet{CastellanCD15tlca}, we prove undecidability of
subtyping by reduction from convertibility of SK combinator terms.
The \emph{SK combinator calculus} has only three term formers
$t ::= \skS \orElse \skK \orElse \app{s}{t}$ and two equational
axioms:
$\app{\app{\app{\skS}{s}}{t}}{u} \skEq \app{\app{s}{u}}{(\app{t}{u})}$
and
$\app{\app{\skK}{s}}{t} \skEq s$.
Yet SK is Turing-complete, and convertibility of SK terms is undecidable.
We embed SK terms and equations into \FOmegaInt{} via the following
declarations:
\begin{align*}
  \skSig \; \coloneq \; {}&
    S \kin \kstar, \qquad K \kin \kstar, \qquad
    \skApp{}{} \kin \fun{\kstar}{\fun{\kstar}{\kstar}}, \\
  &\skSRedVar \kin \dfun{X}{\kstar}{\dfun{Y}{\kstar}{\dfun{Z}{\kstar}{\skApp{\skApp{\skApp{S}{X}}{Y}}{Z} \, \intv \,
    \skApp{\skApp{X}{Z}}{(\skApp{Y}{Z})}}}}, \\
  &\skSExpVar \kin \dfun{X}{\kstar}{\dfun{Y}{\kstar}{\dfun{Z}{\kstar}{\skApp{\skApp{X}{Z}}{(\skApp{Y}{Z})} \, \intv \,
    \skApp{\skApp{\skApp{S}{X}}{Y}}{Z}}}}, \\
  &\skKRedVar \kin \dfun{X}{\kstar}{\dfun{Y}{\kstar}{\skApp{\skApp{K}{X}}{Y}} \, \intv \, X}, \qquad
    \skKExpVar \kin \dfun{X}{\kstar}{\dfun{Y}{\kstar}{X \, \intv \, \skApp{\skApp{K}{X}}{Y}}}.\\
  \encode{\skS} \; \coloneq \; {}& S \qquad
    \encode{\skK} \; \coloneq \; K \qquad
    \encode{\app{s}{t}} \; \coloneq \; \skApp{\encode{s}}{\encode{t}}.
\end{align*}
The map $\encode{-}$ encodes SK terms as types under $\skSig$
and induces a reduction from SK convertibility $s \skEq t$ to
subtyping $\skSig \ts \encode{s} \tsub \encode{t} \kin \kstar$,
which can be used to prove undecidability of subtyping.
\begin{theorem}[undecidability]\label{thm:undec}
  Let $s$, $t$ be SK terms.
Then $\; \skSig \ts \encode{s} \tsub \encode{t} \kin \kstar\;$
  iff $\; s \skEq t$.
\end{theorem}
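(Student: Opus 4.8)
We will prove the two implications separately; the direction $s \skEq t \implies \skSig \ts \encode{s} \tsub \encode{t} \kin \kstar$ is routine, while the converse is the heart of the matter.

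For the first direction, we first check that $\ctxD{\skSig}$ holds --- which is where it matters that interval kinds carry no consistency requirement, so that the kinds of $\skSRedVar,\skSExpVar,\skKRedVar,\skKExpVar$ are well-formed regardless of whether their bounds are provably consistent --- and that $\skSig \ts \encode{s} \kin \kstar$ for every SK term $s$ (an easy induction using \ruleref{K-Var} and \ruleref{K-App}). We then strengthen the goal to type \emph{equality} and prove $s \skEq t \implies \skSig \ts \encode{s} \teq \encode{t} \kin \kstar$ by induction on $s \skEq t$, viewed as the least congruence containing the two SK axioms. Reflexivity, symmetry and transitivity are handled by \ruleref{ST-Refl}, \ruleref{ST-AntiSym} and admissibility of transitivity of type equality; the congruence cases use the admissible congruence rule for type application derived from \ruleref{ST-App} and \ruleref{ST-AntiSym} --- it is essential here that the induction hypothesis delivers an equation rather than merely a subtyping, since \ruleref{ST-App} demands equal arguments. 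For the $\mathsf S$-axiom we instantiate $\skSRedVar$ at $\encode{s},\encode{t},\encode{u}$: by \ruleref{K-App} the result inhabits the interval $\skApp{\skApp{\skApp{S}{\encode{s}}}{\encode{t}}}{\encode{u}} \intv \skApp{\skApp{\encode{s}}{\encode{u}}}{(\skApp{\encode{t}}{\encode{u}})}$, so \rulerefN{ST-Bnd1}{ST-Bnd$_1$}, \rulerefN{ST-Bnd2}{ST-Bnd$_2$} and \ruleref{ST-Trans} yield $\skSig \ts \encode{\app{\app{\app{\skS}{s}}{t}}{u}} \tsub \encode{\app{\app{s}{u}}{(\app{t}{u})}} \kin \kstar$; the dual instantiation of $\skSExpVar$ yields the reverse, and \ruleref{ST-AntiSym} combines them into an equation. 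The $\mathsf K$-axiom is symmetric, using $\skKRedVar$ and $\skKExpVar$. Finally $\teq$ implies $\tsub$.

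For the converse we pass to the canonical system. Each $\encode{s}$ is already $\eta$-long $\beta$-normal (every $\odot$ is fully applied and $S$, $K$ have kind $\kstar$) and $\skSig$ is its own normal form, so $\nfCtx{\skSig} = \skSig$ and $\nf{}{\encode{s}} = \encode{s}$; hence by \Lem{canon-sound} and \Lem{canon-complete} it suffices to show $\skSig \tsC \encode{s} \tsub \encode{t} \implies s \skEq t$ for canonical proper subtyping. The plan is to decode the relevant normal forms back into SK terms. Let $\mathcal L$ be the set of normal types that in context $\skSig$ are built only from the heads $S$, $K$, $\odot$, the witness variables $\skSRedVar,\skSExpVar,\skKRedVar,\skKExpVar$, bound variables of kind $\kstar$, and $\Top$, $\Bot$; its kinds are intervals and dependent arrows over $\mathcal L$. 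A simultaneous induction over all canonical judgments shows that $\mathcal L$ is closed under canonical derivability in $\skSig$ --- since $\skSig$ mentions no arrow or universal type, no rule can introduce one. On the $\Top,\Bot$-free fragment of $\mathcal L$ we define a decoding $\delta$ into $\skEq$-classes of SK terms by $\delta(S) = [\skS]$, $\delta(K) = [\skK]$, $\delta(\skApp{A}{B}) = [\app{\delta(A)}{\delta(B)}]$, and $\delta$ of a witness neutral as either endpoint of its (singleton-after-decoding) interval, e.g.\ $\delta(\app{\app{\app{\skSRedVar}{A}}{B}}{C}) := [\app{\app{\app{\skS}{\delta(A)}}{\delta(B)}}{\delta(C)}]$; this is well defined precisely because the $\mathsf S$- and $\mathsf K$-axioms identify the two endpoints of every interval occurring in the kind of a witness variable. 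We then prove, by induction on canonical derivations, that (i) $\skSig$ never derives $\Top \tsub N$ or $N \tsub \Bot$ for a $\Top,\Bot$-free neutral $N \in \mathcal L$, so $\Top$, $\Bot$ cannot occur as intermediate types in a chain of \ruleref{CST-Trans} steps between two encodings, and (ii) $\skSig \tsC U \tsub V$ with $U, V$ $\Top,\Bot$-free in $\mathcal L$ implies $\delta(U) = \delta(V)$ --- the bound-projection rules \rulerefN{CST-Bnd1}{CST-Bnd$_1$}, \rulerefN{CST-Bnd2}{CST-Bnd$_2$} on witness neutrals are sound for $\delta$ by construction, \ruleref{CST-Ne}, \ruleref{CST-Abs}, \ruleref{CST-Intv} by congruence and the analogous decoding of the spine and equality judgments, and \ruleref{CST-Trans} by transitivity of $\skEq$. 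Since $\delta(\encode{s}) = [s]$, item (ii) gives $[s] = [t]$, i.e.\ $s \skEq t$; undecidability of subtyping then follows from the theorem and the undecidability of SK convertibility.

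The hard part will be the structural analysis of canonical derivations in the non-empty context $\skSig$ --- the closure of $\mathcal L$ together with items (i) and (ii). Because we deliberately impose no consistency constraints, the interval kinds of $\skSRedVar,\dots,\skKExpVar$ inject each SK rewrite inequation into subtyping in both directions, and in combination with \ruleref{CST-Trans}, \ruleref{CST-Top} and \ruleref{CST-Bot} they generate a substantial relation; the work is to show that this extra power --- in particular the universal bounds $\Top$ and $\Bot$ --- can never relate the encodings of two non-convertible SK terms. This is the same sort of fine control over subtyping that underlies inversion of subtyping (\Lem{st-inv}), now carried out in a non-empty context; equivalently, it amounts to soundness of an interpretation of $\skSig$ in a combinatory algebra obtained from SK terms by freely adjoining a top and a bottom element, in which the image of the closed SK terms forms an antichain. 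The overall strategy follows \citet{CastellanCD15tlca}.
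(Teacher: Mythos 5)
Your ``if'' direction is fine and matches the paper (which only sketches it), and the skeleton of your converse --- pass to the canonical system, cut the judgment down to a fragment generated by $\skSig$, then decode derivations back into SK --- is structurally the same as the paper's. The gap is in how you treat $\Top$ and $\Bot$. Your item~(i) only excludes $\Top$ or $\Bot$ appearing as the \emph{entire} middle type of a \ruleref{CST-Trans} step, and your decoding $\delta$ is defined only on the $\Top,\Bot$-\emph{free} part of $\mathcal L$, so item~(ii) says nothing about transitivity steps whose middle type merely \emph{contains} $\Top$ or $\Bot$ as a subterm. Such steps really occur between pure encodings: the paper exhibits the derivable chain $\skSig \ts S \tsub \skApp{\skApp{K}{S}}{(\skApp{\skApp{K}{\Top}}{S})} \tsub \skApp{\skApp{K}{S}}{\Top} \tsub S \kin \kstar$, obtained by instantiating $\skKRedVar$ and $\skKExpVar$ at impure arguments; every intermediate type there is a neutral with $\Top$ buried inside, hence is neither $\Top$ itself nor in the domain of $\delta$, so your induction in~(ii) breaks exactly at \ruleref{CST-Trans}. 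Note also that (i) itself is not a routine induction: $\Top \tsub \skKExp{\Top}{\encode{t}}$ \emph{is} derivable by \rulerefN{CST-Bnd1}{CST-Bnd$_1$}, so chains do leave the pure fragment starting from $\Top$, and ruling out that they ever re-enter it at a pure encoding is essentially the same global problem you are trying to sidestep.

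The paper fills precisely this hole by making the decoding total on the relevant types rather than restricting it: it extends the SK syntax with constants $\Bot$ and $\Top$, decodes reduced canonical derivations under $\skSig$ into an order $\skSub$ on extended terms that includes $\Bot \skSub t$ and $t \skSub \Top$, and then proves a confluence-style property for parallel reductions $\skRed[\leq]$, $\skRed[\geq]$ which may \emph{eliminate} but never \emph{introduce} $\Bot$ and $\Top$: if $s \skSub t$ then $s \skRed[\leq]^* u \skExp[\leq]^* t$ for some $u$, whence $s \skSub t$ implies $s \skEq t$ whenever $s$ and $t$ are pure. Your closing remark --- that pure SK terms form an antichain in the structure obtained by freely adjoining top and bottom --- is exactly this statement, but in your plan it is asserted rather than proved, and it is the crux of the theorem, not a consequence of (i) and (ii). To repair the proposal, drop the $\Top,\Bot$-free restriction, decode into the extended SK syntax with its order, and establish the antichain property by a rewriting argument such as the paper's parallel-reduction confluence.
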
\noindent
It is easy to verify the ``if'' direction.
\Eg the contraction law for $\skK$ corresponds to the inequation
$
  \skSig \ts
  \skApp{\skApp{K}{\encode{s}}}{\encode{t}}
  \; \tsub \;
  \skKRed{\encode{s}}{\encode{t}}
  \; \tsub \;
  \encode{s}
  \kin \kstar.
$
The ``only if'' direction is more challenging.
The complexity of subtyping derivations precludes a direct decoding
into $\skEq$ for many of the same reasons that a direct proof of
subtyping inversion is unfeasible.
In addition, types such as $\Top$ or $\all{X}{K}{A}$ that are not
encodings of SK terms can appear as intermediate expression along a
subtyping derivation.
\Eg note the fleeting appearance of $\Top$ in the following.
\[
  \skSig \;\; \ts \;\;
  \encode{\skS}
  \;\; \alEq \;\;
  S
  \;\; \tsub \;\;
  \skApp{\skApp{K}{S}}{(\skApp{\skApp{K}{\Top}}{S})}
  \;\; \tsub \;\;
  \skApp{\skApp{K}{S}}{\Top}
  \;\; \tsub \;\;
S
  \;\; \alEq \;\;
  \encode{\skS}
  \;\; \kin \;\; \kstar.
\]
To eliminate such spurious appearances of undecodable types, our proof
takes a detour through four auxiliary judgment forms.
\begin{enumerate}
\item We first prove undecidability of canonical subtyping to
  eliminate instances of $\beta\eta$-conversions.
Undecidability of declarative subtyping follows by equivalence of
  the two judgments.

\item We define a \emph{reduced} version of the canonical system from
  which we exclude any rules (and judgment forms) that are not
  relevant to the embedding shown above.
\Eg any judgments involving higher-order operators are excluded, as
  are the rules~\ruleref{CST-Arr}
  and~\ruleref{CST-All} and the variable subsumption rule~\ruleref{CV-Sub}.
The reduced system still contains~\ruleref{CST-Bot}
  and~\ruleref{CST-Top} since we cannot rule out intermediate
  occurrences of these rules a-priori.
We show that canonical subtyping derivations under $\skSig$ can be
  translated into reduced ones.

\item We extend the SK term syntax with $\Bot$ and $\Top$ and define
  an order $\skSub$ on extended terms.
The term order is an asymmetric version of $\skSub$ which features
  the rules $\Bot \skSub t$ and $t \skSub \Top$.
Thanks to these, reduced canonical subtyping derivations can be
  directly decoded into $\skSub$.

\item We introduce a pair of \emph{parallel reduction relations}
  $\skRed[\leq]$ and $\skRed[\geq]$ on the extended syntax.
These contain the rules $\Bot \skRed[\leq] t$ and
  $\Top \skRed[\geq] t$ for eliminating occurrences of $\Bot$ and
  $\Top$, along with the usual contraction rules for SK terms.
Crucially, the reduction rules can eliminate but never introduce
  instances of $\Bot$ and $\Top$.
Hence, if $s$ is a pure SK term, $s \skRed t$ implies $s \skEq t$.
The parallel reductions enjoy a confluence property \wrt the term
  order:
If $s \skSub t$, then there is a $u$ such that
  $s \skRed[\leq]^* u \skExp[\leq]^* t$.
Via confluence, $s \skSub t$ implies $s \skEq t$ for pure $s$ and
  $t$.
\end{enumerate}
Thus we have established a chain of implications from which the result
follows.
\[
  \skSig \ts \encode{s} \tsub \encode{t} \kin \kstar
  \; \implies \;
\skSig \ts[red] \encode{s} \tsub \encode{t}
  \; \implies \;
  s \skSub t
  \; \implies \;
  s \skRed[\leq]^* u \skExp[\leq]^* t
  \; \implies \;
  s \skEq t.
\]
For full details, we refer the intrepid reader to the mechanized proof
of \Thm{undec}, which is given in the \texttt{FOmegaInt.Undecidable}
module of our Agda formalization~\cite{artifact}.

\section{Related Work}
\label{sec:related}

Bounded quantification has been studied extensively through \FSub{},
a variant of System~F with bounded quantification, which comes in two
flavors:
the \emph{Kernel} variant \FSubC{}~\cite{CardelliMMS91tacs} based on
\citeauthor{CardelliW85csur}'s
\emph{Kernel~Fun}~\citeyearpar{CardelliW85csur} has decidable
subtyping, while \emph{Full}~\FSub{} \cite{CurienG92mscs} features a
more expressive subtyping rule for bounded universal quantifiers that
renders subtyping undecidable~\cite{Pierce92popl}.
Recently, \citet{HuL20popl} have shown that the $D_{<:}$ calculus --~a
simplified variant of DOT that uses an expressive $\forall$-subtyping
rule~-- suffers from the same decidability issue as Full~\FSub{}.
For compatibility with DOT, and knowing that subtyping in \FOmegaInt{}
is undecidable either way, we also adopt the more expressive rule.
The metatheory developed in
\Sec[\S\S]{declarative}--\ref{sec:undec}
is largely unaffected by this choice.

An extension of \citeauthor{Girard72thesis}'s
\FOmega{}~\citeyearpar{Girard72thesis} with higher-order subtyping and
bounded quantification was first proposed
by~\citet{Cardelli90unpublished} under the name \KFOmegaSub{}.
Basic meta theoretic properties of \KFOmegaSub{} were established by
\citet{PierceS97tcs}, \citet{Compagnoni95csl}, and
\citet{CompagnoniG99csl}.
An extension with bounded operator abstractions (\FOmegaSubC{}) has
been developed by \citet{CompagnoniG03ic}.
More recently, \citet{AbelR08csl} developed a variant of \KFOmegaSub{}
where types are identified up to $\beta\eta$-equality
and proved its decidability using hereditary
substitution.
Their work inspired the syntactic approach taken in this paper.

Many of the ideas in \KFOmegaSub{} go back to early work by
\citet{Cardelli88popl} on \emph{power types}.
Though very expressive, power types render the type language
non-normalizing, and in a later work \citet{CardelliL91jfp} replaced
them with the better behaved \emph{power kinds}.
Power kinds can be directly expressed in \FOmegaInt{} as interval
kinds that are bounded by $\Bot$ from below:
$\kpowp{A} = \Bot \intv A$.
\citet{Crary97icfp} developed an extension of \FOmega{} with power
kinds as a general calculus for higher-order subtyping.
His representations of higher-order bounded quantifiers and operators
closely resemble ours.

The notion of \emph{translucency} was introduced by
\citet{HarperL94popl} in the setting of ML-style modules with
sharing constraints.
They proposed \emph{translucent sums} as a uniform way of representing
translucent type definitions.
\citet{StoneH00popl} later proposed \emph{singleton kinds} as an
alternative mechanism for representing type definitions with sharing
constraints.
Interval kinds are closely related, conceptually and formally, to Stone and Harper's
singleton kinds.

The safety of (in)consistent subtyping constraints in \FSub{}-like
systems has been studied in depth by~\citet{CretinR14lics} and
\citet{SchererR15esop}.
They formalize two distinct types of subtyping coercions:
\emph{coherent} coercions can be erased (\ie used implicitly) while
\emph{incoherent} coercions are introduced and eliminated explicitly.
Reductions is allowed (and safe) only under coherent abstractions.
It is unclear if their results extend to \FOmegaSubC{}-like systems
such as ours.

Hereditary substitution is due to \citet{WatkinsCPW04types} and has
been used to prove weak normalization of a variety of systems.
A particularly illuminating example is provided by
\citet{KellerA10msfp} who use it to implement a normalization function
for STLC in Agda.
Other examples are the work by \citet{AbelR08csl} on \KFOmegaSub{} and
the presentation of Canonical~LF by \citet{HarperL07jfp}, both of
which inspired the metatheoretic development in this paper.
Hereditary substitution has also been used to mechanize the equational
theory of singleton kinds~\cite{Crary09lfmtp} and the semantics of the
SML language~\cite{LeeCH07popl} in Twelf.

\FOmegaInt{} belongs to a long line of calculi developed to model
Scala's type system.
One of the first to support complex type operators and a form of
interval kinds was \emph{Scalina} \cite{MoorsPO08fool}.
Type and kind safety of Scalina was never established but it inspired
an extension of Scala's type system with HK types, including
higher-order bounded polymorphism, type operators and type
definitions~\cite{MoorsPO08oopsla}.
More recently, \citet{AminGORS16wf} introduced the calculus of
\emph{Dependent Object Types (DOT)} as a theoretical foundation for
Scala and a core calculus for the \emph{Scala~3}
compiler~\cite{Dotty20code}.
Many variants of DOT have been developed, differing in expressiveness and presentation;
most come with mechanized type safety proofs~\citep[see e.g.][]{Amin16thesis,RompfA16oopsla,RapoportL19oopsla,GiarrussoSTBK20}.
Central to all is the notion of \emph{abstract type members}.
Because type members can have lower and upper bounds, they provide a
form of type intervals.
In~\FOmegaInt{}, we separate the concept of type intervals from that
of abstract type members via interval kinds.
DOT admits encodings of some type operators, but none of the DOT
calculi developed so far can express general HK types as supported by
Scala.
But the development of the Scala~3 compiler has shown the need for a principled theory of HK
types~\cite{OderskyMP16scala}.
In this paper, we have proposed such a theory.

In the future, we wish to extend our work on HK types with existing
work on DOT by recombining abstract type members with interval kinds.
A type member definition would then be of the form
$\record{ X \colon K }$ where $K$ may be a higher-order interval kind.
We expect this to cause new
feature interactions, some of which may be problematic.
A sketch of such an extension, including a brief discussion of
potential issues, can be found in the first author's
dissertation~\cite[][Ch.~6]{Stucki17thesis}.

Another direction for future work is to adapt the techniques developed
by \citet{HuL20popl} for algorithmic subtyping in the DOT-like
calculus $D_{<:}$ to our system.
\citeauthor{HuL20popl} address the problems caused by inconsistent
bounds in $D_{<:}$ by replacing the general subtyping transitivity
rule with a specialized rule that combines transitivity and bound
projection.
This isolates the problematic use of inequality reflection in a single
rule.
They then show that one can obtain a decidable system by removing this
rule and weakening the rule for subtyping universals.
Not only is this an elegant solution, it also closely reflects the
strategy implemented in the Scala compiler.
We do believe that a variant of \citeauthor{HuL20popl}'s strategy
could be applied to our system.
However, we expect transitivity elimination to be considerably more
challenging in our system than in $D_{<:}$, as one would expect in a
dependently kinded setting.

\enableLstShortInline

\section{Conclusions}
\label{sec:conclusions}

We have described \FOmegaInt{}, a formal theory of higher-order
subtyping with type intervals.
In \FOmegaInt{}, type intervals are represented through interval
kinds.
We showed how interval kinds can be used to encode bounded universal
quantification, bounded type operators and singleton
kinds. We illustrated the use of interval kinds to abstract over and reflect
type inequations and discussed the problems that arise
when the corresponding intervals have inconsistent bounds.

We established basic metatheoretic properties of \FOmegaInt{}.
We proved subject reduction in its full generality on the type level,
and in a restricted form on the term level.
We showed that types and kinds are weakly normalizing by defining a
bottom-up normalization procedure on raw kinds and types and proving
its soundness.
We gave an alternative, canonical presentation of the kind and type
level of \FOmegaInt{}, defined directly on $\beta\eta$-normal forms.
We showed that hereditary substitutions preserve canonical judgments
and used this result to establish equivalence of the declarative and
canonical presentations.
We showed that canonical and, by equivalence, declarative subtyping
can be inverted in the empty context.
Based on these results, we established type safety of \FOmegaInt{}.
We concluded our metatheoretic development by showing that subtyping
is undecidable in~\FOmegaInt{}.
The metatheory
has been fully mechanized in Agda. 

Our goal in developing \FOmegaInt{} was twofold:
study the theory of type intervals for
higher-order subtyping, and develop a foundation for Scala's
higher-kinded types.
We believe that \FOmegaInt{} fulfills this goal and constitutes an important step toward a full formalization of
Scala's expressive type system.
During the development of \FOmegaInt{}, we discovered a number of
minor flaws in Scala~3 (listed in \SupSec{scala_issues}).
None of these issues constitute critical bugs -- in particular, they do not break type safety.
But they do illustrate that subtyping in Scala~3 is slightly weaker
than necessary, suggesting that there is room for improvement.
Indeed, we hope that \FOmegaInt{} will serve as a blueprint for a more
principled implementation of higher-order subtyping in future versions
of Scala.

\disableLstShortInline

\begin{acks}
  We owe special thanks to Guillaume Martres for many discussions
  about this work, for his patience in answering our questions about
  the Scala~3 type checker, and for his striking ability to produce
  counterexamples to type system drafts.
For insightful discussions and feedback on earlier versions of this
  work we thank Andreas Abel, Nada Amin, Jesper Cockx, Martin Odersky and François
  Pottier.
We thank the anonymous reviewers for their helpful comments and
  suggestions.
This paper is based upon work supported by the
  \grantsponsor{501100000781}{European Research Council (ERC)}{
    http://dx.doi.org/10.13039/501100000781} under
  Grant~\grantnum{501100000781}{587327 DOPPLER}
  and by the
  \grantsponsor{501100004359}{Swedish Research Council (VR)}{
    http://dx.doi.org/10.13039/501100004359} under
  Grants~\grantnum{501100004359}{2015-04154 PolUser} and
  \grantnum{501100004359}{2018-04230 Perspex}.
\end{acks}

\appendix

\section{Overview of the Agda Mechanization}
\label{sec:modules}

The following is a list of the modules included in the Agda
mechanization, with short descriptions of their purpose.
The list is organized in blocks which correspond roughly to the
sections of the paper where the metatheoretic definitions and
properties corresponding to the contents of the module are (first)
described.
For more information, see the \emph{README.md} and
\emph{Correspondence.agda} files included in our
artifact~\cite{artifact}.

\subsection{The Declarative System}

Syntax of raw (\ie untyped) terms along with support for untyped
substitutions.
\begin{itemize}
\item\texttt{FOmegaInt.Syntax}
\end{itemize}
Variants of $\beta$-reduction/equivalence and properties thereof.
\begin{itemize}
\item\texttt{FOmegaInt.Reduction.Cbv}
\item\texttt{FOmegaInt.Reduction.Full}
\end{itemize}
Declarative typing, kinding, subtyping, etc.\ along with corresponding
substitution lemmas.
\begin{itemize}
\item\texttt{FOmegaInt.Typing}
\end{itemize}
An alternative presentation of kinding and subtyping that is better
suited for proving functionality and validity lemmas, and a proof that
the two presentations are equivalent.
\begin{itemize}
\item\texttt{FOmegaInt.Kinding.Declarative}
\item\texttt{FOmegaInt.Kinding.Declarative.Validity}
\item\texttt{FOmegaInt.Kinding.Declarative.Equivalence}
\end{itemize}
Encodings and properties of higher-order extremal types, interval
kinds and bounded quantifiers.
\begin{itemize}
\item\texttt{FOmegaInt.Typing.Encodings}
\end{itemize}

\subsection{Normalization of Types}

Hereditary substitutions and normalization of raw types and kinds.
\begin{itemize}
\item\texttt{FOmegaInt.Syntax.SingleVariableSubstitution}
\item\texttt{FOmegaInt.Syntax.HereditarySubstitution}
\item\texttt{FOmegaInt.Syntax.Normalization}
\end{itemize}
Weak equality of raw terms (up to kind annotations).
\begin{itemize}
\item\texttt{FOmegaInt.Syntax.WeakEquality}
\end{itemize}
Soundness of normalization \wrt to declarative kinding.
\begin{itemize}
\item\texttt{FOmegaInt.Kinding.Declarative.Normalization}
\end{itemize}
Simple kinding of types, and hereditary substitution lemmas; lemmas
about $\eta$-expansion of simply kinded types and kinds.
\begin{itemize}
\item\texttt{FOmegaInt.Kinding.Simple}
\item\texttt{FOmegaInt.Kinding.Simple.EtaExpansion}
\end{itemize}
Normalization and simultaneous simplification of declaratively kinded
types.
\begin{itemize}
\item\texttt{FOmegaInt.Kinding.Simple.Normalization}
\end{itemize}

\subsection{The Canonical System}

Canonical kinding of types along with (hereditary) substitution,
validity and inversion lemmas for canonical kinding and subtyping.
\begin{itemize}
\item\texttt{FOmegaInt.Kinding.Canonical}
\item\texttt{FOmegaInt.Kinding.Canonical.HereditarySubstitution}
\item\texttt{FOmegaInt.Kinding.Canonical.Validity}
\item\texttt{FOmegaInt.Kinding.Canonical.Inversion}
\end{itemize}
Lifting of weak (untyped) kind and type equality to canonical kind
and type equality.
\begin{itemize}
\item\texttt{FOmegaInt.Kinding.Canonical.WeakEquality}
\end{itemize}
Equivalence of canonical and declarative kinding.
\begin{itemize}
\item\texttt{FOmegaInt.Kinding.Canonical.Equivalence}
\end{itemize}
Generation of typing and inversion of declarative subtyping in the
empty context.
\begin{itemize}
\item\texttt{FOmegaInt.Typing.Inversion}
\end{itemize}
Type safety (preservation and progress).
\begin{itemize}
\item\texttt{FOmegaInt.Typing.Preservation}
\item\texttt{FOmegaInt.Typing.Progress}
\end{itemize}

\subsection{Undecidability of Subtyping}

A reduced variant of the canonical system.
\begin{itemize}
\item\texttt{FOmegaInt.Kinding.Canonical.Reduced}
\end{itemize}
Setup for the undecidability proof:
syntax and lemmas for the SK combinator calculus, and support for
encoding/decoding SK terms and equality proofs into types and
subtyping derivations.
\begin{itemize}
\item\texttt{FOmegaInt.Undecidable.SK}
\item\texttt{FOmegaInt.Undecidable.Encoding}
\item\texttt{FOmegaInt.Undecidable.Decoding}
\end{itemize}
Undecidability of subtyping
\begin{itemize}
\item\texttt{FOmegaInt.Undecidable}
\end{itemize}

\subsection{Auxiliary Modules Providing Generic Functionality}

Generic support for typing contexts over abstract bindings.
\begin{itemize}
\item\texttt{Data.Context}
\item\texttt{Data.Context.WellFormed}
\item\texttt{Data.Context.Properties}
\end{itemize}
Extra lemmas that are derivable in the substitution framework of
the Agda standard library, as well as support for binary (term)
relations lifted to substitutions, typed substitutions, and typed
relations lifted to substitutions.
\begin{itemize}
\item\texttt{Data.Fin.Substitution.Extra}
\item\texttt{Data.Fin.Substitution.ExtraLemmas}
\item\texttt{Data.Fin.Substitution.Relation}
\item\texttt{Data.Fin.Substitution.Typed}
\item\texttt{Data.Fin.Substitution.TypedRelation}
\end{itemize}
Support for generic reduction relations, and relational reasoning for
transitive relations.
\begin{itemize}
\item\texttt{Relation.Binary.Reduction}
\item\texttt{Relation.Binary.TransReasoning}
\end{itemize}

\section{Basic Metatheory and Admissible Rules of the Declarative
  System}
\label{sec:decl_details}

We establish basic meta-theoretic properties of the declarative system
and introduce a number of admissible rules, many of which are used in
proofs later on.
In particular, we establish a series of standard lemmas stating that the
declarative judgments are preserved under common operations on
contexts.
Next, we show that the usual order-theoretic
rules for subkinding and type and kind equality are admissible, as are
congruence rules (\wrt to all the type formers) for type equality.
We further introduce admissible typing, kinding, subtyping and
subkinding rules for the encoded higher-order extremal types, interval
kinds and bounded quantifiers described in \Sec{declarative} of the
paper, and discuss alternative encodings.
Finally, we state and prove a number of standard validity properties
for the various judgments.

The contents of this section are based on Chapter~3 of the first
author's PhD~dissertation.  We refer the interested reader to the
dissertation for the full details~\cite[][Ch.~3]{Stucki17thesis}.

\subsection{Basic Metatheoretic Properties}
\label{sec:decl_basics}

We start our metatheoretic development by showing that judgments can
only be derived in well-formed contexts.
\begin{lemma}[context validity]\label{lem:decl_ctx_valid}
If $\; \judgD{\Gamma}$ for any of the judgments defined above, then
  $\ctxD{\Gamma}$.
\end{lemma}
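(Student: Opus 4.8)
The plan is to prove the lemma by a single simultaneous induction on the derivation of $\judgD{\Gamma}$, ranging at once over all eight declarative judgment forms: context formation, kind formation, kinding, typing, subkinding, subtyping, and kind and type equality. For the context-formation judgments the statement is immediate, since the conclusion $\ctxD{\Gamma}$ of such a judgment is itself what we must prove about it; in particular the base case \ruleref{C-Empty} gives $\ctxD{\cempty}$ outright. For every rule of the remaining judgments I would argue by cases on the last rule used, and each case falls into one of three routine patterns.

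First, the rules \ruleref{K-Var}, \ruleref{K-Top}, \ruleref{K-Bot} and \ruleref{T-Var} carry an explicit premise $\ctxD{\Gamma}$, which we simply return. Second, all other rules whose conclusion lives over a context $\Gamma$ have at least one premise that is itself a judgment over the \emph{same} context $\Gamma$ --- for instance \ruleref{Wf-Intv}, \ruleref{K-Arr}, \ruleref{K-App}, \ruleref{T-App}, \ruleref{ST-Trans}, \ruleref{SK-AntiSym}, \ruleref{ST-AntiSym}, as well as the bound-projection rules and the $\beta$- and $\eta$-subtyping rules, each of which has a kinding premise over $\Gamma$; applying the induction hypothesis to such a premise yields $\ctxD{\Gamma}$. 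Third, the binder-introducing rules \ruleref{Wf-DArr}, \ruleref{K-Abs}, \ruleref{K-All}, \ruleref{T-Abs}, \ruleref{T-TAbs}, \ruleref{ST-All} and \ruleref{SK-DArr} additionally have premises over an extended context $(\Gamma, X \kas J)$ or $(\Gamma, x \tas A)$; but each such rule \emph{also} has a premise over the unextended $\Gamma$, namely well-formedness of the bound kind or type annotation, so the induction hypothesis applied to that premise already gives $\ctxD{\Gamma}$ and the extended-context premises never need to be consulted.

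There is no real obstacle here; the only thing that requires attention is the bookkeeping. One must genuinely set the induction up simultaneously over all the judgment forms --- a plain induction on kinding derivations alone would not suffice, since kinding refers to subkinding and subtyping which refer back to kinding --- and one must check rule by rule that every rule does possess a sub-derivation, or an explicit side condition, over the unextended context $\Gamma$. Once this is verified, the proof closes without appealing to weakening, substitution, inversion of the context-formation rules, or any other metatheoretic property, which is precisely why this lemma can stand at the very beginning of the metatheoretic development.
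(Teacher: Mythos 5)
Your proof is correct and follows essentially the same route as the paper: a simultaneous induction over all the declarative judgment forms, with context-formation cases trivial, rules carrying an explicit $\ctxD{\Gamma}$ premise immediate, and every remaining rule handled by applying the induction hypothesis to a premise over the unextended context, which (as you verify rule by rule) always exists. The paper's proof is just a terser statement of exactly this argument, so no further changes are needed.
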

\begin{proof}
  By (simultaneous) induction on the derivations of the various
  judgments.
The cases for context formation judgments and rules that contain
  $\ctxD{\Gamma}$ as a premise are trivial.
For other rules, the result follows by applying the IH to any of the
  premises that do not extend the context.  There is always at least
  one such premise.
\end{proof}

\noindent Next, we establish a series of standard lemmas stating that the
declarative judgments are preserved under common operations on
contexts,
namely, addition and narrowing of bindings, and substitutions.
\begin{lemma}[weakening]\label{lem:decl_weaken}
  If $\; \judgD{\Gamma, \Delta}$, then
  \begin{enumerate}[nosep]
  \item for any type $A$ and $x \notin \dom(\Gamma, \Delta)$, if
    $\; \typeD{\Gamma}{A}$, then $\judgD{\Gamma, x \tas A, \Delta}$;
  \item for any kind $K$ and $X \notin \dom(\Gamma, \Delta)$, if
    $\; \kindD{\Gamma}{K}$, then $\judgD{\Gamma, X \kas K, \Delta}$.
  \end{enumerate}
\end{lemma}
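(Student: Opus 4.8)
The plan is to prove both clauses simultaneously by induction on the derivation of $\judgD{\Gamma,\Delta}$, ranging over \emph{all} judgment forms of the declarative system (context and kind formation, kinding, typing, subkinding, subtyping, and the two anti-symmetry rules), with the decomposition point between $\Gamma$ and $\Delta$ and the inserted binding universally quantified. Since the two clauses differ only in whether the new binding is a term- or a type-variable binding, I treat them uniformly: write $\Gamma, B, \Delta$ for the extended context, where $B$ is either $x \tas A$ with $\typeD{\Gamma}{A}$ and $x \notin \dom(\Gamma,\Delta)$, or $X \kas K$ with $\kindD{\Gamma}{K}$ and $X \notin \dom(\Gamma,\Delta)$. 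Because the statement must also hold for the declarative system extended with the gray validity conditions (which are used to prove validity), I note up front that those conditions are themselves judgments of the system, so the induction hypothesis dispatches them exactly like any other premise and no extra work is needed.

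First I would handle the leaf rules. For the context-formation base case \ruleref{C-Empty}, inserting $B$ into $\cempty$ gives $\cempty, B$, which is well-formed by the appropriate context-extension rule from $\ctxD{\cempty}$ and the well-formedness hypothesis on $B$'s annotation. For the variable-lookup rules \ruleref{K-Var} and \ruleref{T-Var}, the key observation is that, because the inserted variable is fresh for $\dom(\Gamma,\Delta)$, we have $(\Gamma,\Delta)(Y) = (\Gamma, B, \Delta)(Y)$ for every $Y \in \dom(\Gamma,\Delta)$, so the lookup premise is preserved verbatim; the context-formation premise follows by the IH. The remaining axioms (\ruleref{K-Top}, \ruleref{K-Bot}, \ruleref{ST-Refl}, \ruleref{ST-Top}, \ruleref{ST-Bot}, the bound-projection and $\eta$-rules, etc.) reduce to a context-formation or type/kind well-formedness premise, all handled by the IH; where a context-formation premise is not literally present it is recovered via context validity (\Lem{decl_ctx_valid}).

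For the inductive rules there are two shapes. Rules all of whose premises are judgments in the \emph{same} context $\Gamma,\Delta$ — e.g.\ \ruleref{Wf-Intv}, \ruleref{K-Arr}, \ruleref{K-Sing}, \ruleref{K-Sub}, \ruleref{T-App}, \ruleref{ST-Trans}, \ruleref{ST-Sub}, \ruleref{ST-Intv}, the anti-symmetry rules, and every gray validity-condition premise — are handled by applying the IH to each premise and re-assembling the rule. Rules that \emph{extend} the context in a premise — \ruleref{C-TmBind}, \ruleref{C-TpBind}, \ruleref{Wf-DArr}, \ruleref{K-Abs}, \ruleref{K-All}, \ruleref{K-App}, \ruleref{T-Abs}, \ruleref{T-TAbs}, \ruleref{T-TApp}, \ruleref{SK-DArr}, \ruleref{ST-Abs}, \ruleref{ST-All}, and the $\beta$- and $\eta$-rules — require invoking the IH with $\Delta$ replaced by the \emph{larger} suffix $\Delta, Y \kas J$ (after $\alpha$-renaming the rule's bound variable $Y$ away from the chosen fresh $x$ or $X$), so that the inserted binding still lands strictly before the rule's own binding; the insertion then commutes with the rule's binding and the rule re-applies to the weakened premises.

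The main obstacle is not conceptual but bookkeeping: stating the induction generally enough in the decomposition $\Gamma,\Delta$ so the context-extending cases go through, and keeping the freshness side-conditions coherent under $\alpha$-conversion — in the Agda development, under the de Bruijn shifting built into the substitution framework, where "inserting a binding in the middle" is realized by a weakening substitution and the lemma is an instance of the generic typed-substitution machinery. Once the generalized statement is in place, each case is mechanical, so I would expect the proof to be a routine (if lengthy) simultaneous induction.
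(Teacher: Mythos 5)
Your proposal is correct and takes essentially the same route as the paper, which proves weakening by exactly this simultaneous induction on derivations of all the declarative judgments (with the insertion point generalized) and dismisses it as entirely standard, giving no further details. The only slip is a harmless bit of bookkeeping: for \ruleref{C-TmBind}/\ruleref{C-TpBind} the premises live in a \emph{shorter} context than the conclusion, so the IH is applied to a shorter suffix (or the new binding is simply appended when $\Delta$ is empty), not to an enlarged one as in \ruleref{Wf-DArr}, \ruleref{K-Abs}, etc.
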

\begin{corollary}[iterated weakening]\label{cor:decl_iter_weaken}
  If $\; \ctxD{\Gamma, \Delta}$ and $\judgD{\Gamma}$, then
  $\judgD{\Gamma, \Delta}$.
\end{corollary}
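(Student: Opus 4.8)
The plan is to prove \Cor{decl_iter_weaken} by a routine induction on the suffix $\Delta$, with single-step weakening (\Lem{decl_weaken}) doing all the work. The guiding observation is that \Lem{decl_weaken} inserts a fresh binding into an already-derivable context at an \emph{arbitrary} position; in particular, by instantiating the lemma's trailing context with $\cempty$ it appends a binding at the very end. So adding the bindings of $\Delta$ one at a time, from the outermost inwards, turns the corollary into $\abs{\Delta}$ applications of \Lem{decl_weaken}, interleaved with inversions of context formation.

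In detail, I would induct on $\Delta$. If $\Delta = \cempty$, then $\Gamma, \Delta$ is just $\Gamma$ and the claim is the hypothesis $\judgD{\Gamma}$. If $\Delta = \Delta', x \tas A$, I invert $\ctxD{\Gamma, \Delta', x \tas A}$ — the only applicable rule is \ruleref{C-TpBind} — to obtain $\ctxD{\Gamma, \Delta'}$ and $\typeD{\Gamma, \Delta'}{A}$, noting also that $x \notin \dom(\Gamma, \Delta')$ by the standing convention that the variables bound in a context are distinct. The induction hypothesis for $\Delta'$ (discharged by $\judgD{\Gamma}$ and $\ctxD{\Gamma, \Delta'}$) yields $\judgD{\Gamma, \Delta'}$. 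Applying \Lem{decl_weaken} with its leading context taken to be $\Gamma, \Delta'$ and its trailing context empty then gives $\judgD{\Gamma, \Delta', x \tas A}$, which is $\judgD{\Gamma, \Delta}$. The case $\Delta = \Delta', X \kas K$ is symmetric: invert \ruleref{C-TmBind} instead, extract $\ctxD{\Gamma, \Delta'}$ and $\kindD{\Gamma, \Delta'}{K}$, and use the type-variable clause of \Lem{decl_weaken}.

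There is no genuine obstacle here — the statement is an immediate consequence of \Lem{decl_weaken}, one of the standard context lemmas, and of the (syntax-directed, hence trivially invertible) context formation judgment. The only mild point requiring care is lining up the shape of \Lem{decl_weaken}, which inserts a binding in the \emph{middle} of a context $\Gamma, \Delta$, with the ``append at the tail'' pattern needed for the induction; this is handled once and for all by the $\cempty$-instantiation noted above.
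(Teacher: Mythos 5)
Your proof is correct and matches the intended argument: the paper treats this as an immediate corollary of \Lem{decl_weaken}, obtained exactly by induction on $\Delta$ with the single-step lemma instantiated to append a binding at the tail (and inversion of the syntax-directed context-formation rules supplying the side premises). Nothing further is needed.
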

\begin{lemma}[substitution]\label{lem:decl_subst}
  ~
  \begin{enumerate}[nosep]
  \item If $\; \Gamma \ts t \tin A$ and
    $\judgD{\Gamma, x \tas A, \Delta}$, then
$\judgD[\subst{\genjudg}{x}{t}]{\Gamma, \Delta}$.
  \item If $\; \Gamma \ts A \kin K$ and
    $\judgD{\Gamma, X \kas K, \Delta}$, then
    $\judgD[\subst{\genjudg}{X}{A}]{\Gamma, \subst{\Delta}{X}{A}}$.
  \end{enumerate}
\end{lemma}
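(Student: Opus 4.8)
The plan is to prove both parts simultaneously by induction on the derivation of the judgment $\judgD{\Gamma, x \tas A, \Delta}$ (resp.\ $\judgD{\Gamma, X \kas K, \Delta}$), generalizing over the suffix context $\Delta$. Since the declarative system has many mutually-defined judgment forms (context formation, kind formation, kinding, typing, subkinding, subtyping, kind equality, type equality), the induction is really a simultaneous induction over all of them; I will phrase the statement uniformly using the $\genjudg$ metavariable as the excerpt already does. The two parts (term-variable substitution and type-variable substitution) do not interact in a problematic way, so each can be treated on its own, but it is cleanest to set them up in parallel since the structural cases are identical in shape.

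First I would dispatch the easy structural cases. For each rule whose premises mention only contexts of the form $(\Gamma, x\tas A, \Delta')$ with $\Delta'$ an extension of $\Delta$ (e.g.\ \ruleref{K-Arr}, \ruleref{T-App}, \ruleref{ST-Trans}, \ruleref{ST-Arr}, and the like), one simply applies the induction hypothesis to each premise and reassembles the rule, using that substitution commutes with the syntactic constructors. The base cases are \ruleref{C-Empty}, \ruleref{K-Var}/\ruleref{T-Var} and the context-lookup rules: here one checks that looking up a variable $y \neq x$ (resp.\ $Y$) in $\subst{(\Gamma,\Delta)}{x}{t}$ yields the substituted type/kind, so the conclusion holds by the same rule, possibly after an application of weakening (\Lem{decl_weaken}) to transport $\Gamma \ts t \tin A$ (resp.\ $\Gamma \ts A \kin K$) into the larger context $(\Gamma, \subst{\Delta}{x}{t})$. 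The crucial base case for \emph{part 1} is when the looked-up variable \emph{is} $x$: then $\genjudg$ was $x \tin A$, and $\subst{\genjudg}{x}{t}$ is $t \tin A$ — but $x \notin \fv(A)$ by well-scopedness, so $\subst{A}{x}{t} \alEq A$, and the conclusion is exactly the hypothesis $\Gamma \ts t \tin A$ weakened to $(\Gamma,\Delta)$. The analogous point for \emph{part 2} with $X$ is handled identically using $\fv(K) \subseteq \dom(\Gamma)$.

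The binder cases — \ruleref{C-TmBind}, \ruleref{C-TpBind}, \ruleref{Wf-DArr}, \ruleref{K-Abs}, \ruleref{K-App}, \ruleref{K-All}, \ruleref{T-Abs}, \ruleref{T-TAbs}, \ruleref{T-TApp}, \ruleref{SK-DArr}, \ruleref{ST-Abs}, \ruleref{ST-All}, the $\beta\eta$-rules, etc.\ — are where the real bookkeeping lies, but each is routine: a premise typed in $(\Gamma, x\tas A, \Delta, z\tas C)$ (or with a type binding) is handled by instantiating the induction hypothesis with the extended suffix $\Delta' = (\Delta, z\tas C)$, and then one pushes the substitution through the binder and the substitution in the conclusion using the standard substitution-commutation identity $\subst{(\subst{K}{Y}{B})}{x}{t} \alEq \subst{(\subst{K}{x}{t})}{Y}{\subst{B}{x}{t}}$ (with $Y \notin \fv(t)$, guaranteed by the Barendregt convention). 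For part 2, one additionally uses that shapes are stable under substitution, $\ksimp{\subst{K}{X}{A}} \alEq \ksimp{K}$, wherever a rule mentions the erasure of a kind; but in fact the declarative rules here don't mention erasures, so that observation is only needed downstream.

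The main obstacle I anticipate is purely organizational rather than mathematical: getting the simultaneous induction statement strong enough that every premise of every rule is in scope of an induction hypothesis, including the gray \emph{validity-condition} premises of \Fig[Figs.]{decl_rules1}--\ref{fig:decl_rules2}, and correctly threading weakening through the variable cases so that $\Gamma \ts t \tin A$ (resp.\ $\Gamma \ts A \kin K$) is available in whatever enlarged context the subderivation lives in. There is no circularity worry here — unlike validity and functionality, substitution does not depend on results proven later, it is one of the genuinely routine basic lemmas — so once the statement is set up as above the proof goes through by a direct, if lengthy, case analysis. I would therefore present the statement with the $\Delta$-generalization explicit, prove the variable and binder cases in detail, and remark that all remaining cases follow by applying the induction hypotheses to the premises and reassembling.
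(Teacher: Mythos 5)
Your proposal matches the paper's proof: the paper likewise proves the lemma by induction on the derivation of the second premise (with the suffix $\Delta$ generalized, and a local induction on $\Delta$ for context formation), discharges the variable cases via iterated weakening of $\Gamma \ts t \tin A$ (resp.\ $\Gamma \ts A \kin K$) together with the observation that the substituted variable cannot occur in the looked-up type or kind, and handles the cases whose conclusions contain substitutions (e.g.\ \ruleref{K-App}, \rulerefN{ST-Beta1}{ST-$\beta_1$}) with the same substitution-commutation identity (the paper's \Lem{sub_commutes}). The only cosmetic difference is that the paper treats the two parts separately rather than in parallel, which you yourself note is immaterial.
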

\begin{lemma}[context narrowing]\label{lem:decl_weak_narrow}
  ~
  \begin{enumerate}[nosep]
  \item If $\; \typeD{\Gamma}{A}$, $\Gamma \ts A \tsub B \kin \kstar$
    and $\judgD{\Gamma, x \tas B, \Delta}$, then
    $\judgD{\Gamma, x \tas A, \Delta}$.
  \item If $\; \kindD{\Gamma}{J}$, $\Gamma \ts J \ksub K$ and
    $\judgD{\Gamma, X \kas K, \Delta}$, then
    $\judgD{\Gamma, X \kas J, \Delta}$.
  \end{enumerate}
\end{lemma}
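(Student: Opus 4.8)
The plan is to prove both parts by a single simultaneous induction on the derivation of the judgment being narrowed---$\judgD{\Gamma, x \tas B, \Delta}$ for part~1, $\judgD{\Gamma, X \kas K, \Delta}$ for part~2---ranging over \emph{all} the judgment forms at once (context formation, kind formation, kinding, typing, subkinding, subtyping, and the two equality judgments), and keeping the gray validity conditions of \Fig[Figs.]{decl_rules1} and~\ref{fig:decl_rules2} inside the bundle. The side hypotheses ($\typeD{\Gamma}{A}$ together with $\Gamma \ts A \tsub B \kin \kstar$; respectively $\kindD{\Gamma}{J}$ together with $\Gamma \ts J \ksub K$) are threaded through the induction unchanged and invoked only in a couple of base cases; the one external ingredient is weakening (\Lem{decl_weaken}).

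Part~1 is almost vacuous, because types and kinds in \FOmegaInt{} never mention term variables, so replacing $x \tas B$ by $x \tas A$ leaves every type-level judgment---and the suffix $\Delta$---literally unchanged. The only rules that consult the binding are \ruleref{T-Var} and, when $\Delta = \cempty$, \ruleref{C-TpBind}. For \ruleref{C-TpBind} with $\Delta = \cempty$ we re-apply the rule using its premise $\ctxD{\Gamma}$ and the side hypothesis $\typeD{\Gamma}{A}$. For a derivation ending in \ruleref{T-Var} on the variable $x$, its premise $\ctxD{\Gamma, x \tas B, \Delta}$ is a subderivation, so the induction hypothesis narrows it to $\ctxD{\Gamma, x \tas A, \Delta}$; then \ruleref{T-Var} yields $\Gamma, x \tas A, \Delta \ts x \tin A$, and \ruleref{T-Sub} applied to $\Gamma \ts A \tsub B \kin \kstar$ weakened into $\Gamma, x \tas A, \Delta$ recovers $\Gamma, x \tas A, \Delta \ts x \tin B$. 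Every other rule is handled by applying the induction hypothesis to its premises and re-applying the same rule.

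Part~2 follows the same template, but now the binding $X \kas K$ occurs pervasively, so essentially every rule has a case. For rules whose last step does not inspect $X$ we apply the induction hypothesis to each premise---including the gray conditions, which are themselves kind-formation or kinding judgments and hence lie within the induction---and re-apply the rule, carrying the kind and type annotations in subjects and context along verbatim. The load-bearing case is \ruleref{K-Var} on the variable $X$: its premise $\ctxD{\Gamma, X \kas K, \Delta}$ is a subderivation, which the induction hypothesis narrows to $\ctxD{\Gamma, X \kas J, \Delta}$; then \ruleref{K-Var} gives $\Gamma, X \kas J, \Delta \ts X \kin J$, and \ruleref{K-Sub} applied to $\Gamma \ts J \ksub K$ weakened into $\Gamma, X \kas J, \Delta$ recovers $\Gamma, X \kas J, \Delta \ts X \kin K$. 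The base case \ruleref{C-TmBind} with $\Delta = \cempty$ is discharged using the side hypothesis $\kindD{\Gamma}{J}$, just as \ruleref{C-TpBind} was in part~1.

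There is no deep obstacle here---the paper is right to call these properties routine---but the points that need care are: (i) the argument must bundle \emph{all} judgments (and their gray validity conditions) into one induction, since the variable cases reach into context formation and conversely; (ii) weakening the subkinding (resp.\ subtyping) hypothesis into the narrowed context is legitimate precisely because the context-formation premise of \ruleref{K-Var} (resp.\ \ruleref{T-Var}) is available as an already-narrowed subderivation; and (iii) by the usual $\alpha$-conventions $X$ (resp.\ $x$) stays out of $\dom(\Delta)$, so lookup in the narrowed context behaves as expected. With these observations in place, the remaining cases are mechanical rule-by-rule bookkeeping, which is why the paper relegates the full proof to the appendix.
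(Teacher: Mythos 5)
Your proof is correct and takes essentially the same route as the paper's: a simultaneous induction over all the declarative judgments, with the variable cases \ruleref{T-Var} and \ruleref{K-Var} discharged by weakening the subtyping/subkinding hypothesis into the already-narrowed context and then applying \ruleref{T-Sub}/\ruleref{K-Sub}. The only cosmetic difference is that the paper invokes iterated weakening (\Cor{decl_iter_weaken}) where you cite \Lem{decl_weaken}; your observation (ii) about the narrowed context-formation subderivation is exactly what justifies that corollary's use.
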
\noindent
The context narrowing lemma is a bit weaker than one might expect;
the premises $\typeD{\Gamma}{A}$ and $\kindD{\Gamma}{J}$ seem
redundant.
Surely, if $\Gamma \ts A \tsub B \kin \kstar$ then $A$ and $B$ ought
to be proper types.
This property --~called \emph{subtyping validity}~-- does indeed hold,
but we are not yet ready to prove it.
Indeed, one of the prerequisites is the context narrowing lemma itself.

\Lem[Lemmas]{decl_weaken}, \ref{lem:decl_subst}
and~\ref{lem:decl_weak_narrow} are proven in that order, each by
simultaneous induction on the derivations of the various judgments.
The proofs of \Lem[Lemmas]{decl_subst} and~\ref{lem:decl_weak_narrow}
rely on \Cor{decl_iter_weaken} for the variable cases \ruleref{T-Var}
and \ruleref{K-Var}.
All three proofs are entirely standard, so we only present an excerpt
from the proof of \Lem{decl_subst} to illustrate the basic strategy.
In it, we make use of the following helper lemma about substitutions.
\begin{lemma}[substitutions commute]\label{lem:sub_commutes}
  Let $e$ be some arbitrary expression, $A$, $B$ types and $X$, $Y$
  distinct type variables such that $X \notin \fv(B)$.  Then
  $\subst{\subst{e}{X}{A}}{Y}{B} \alEq
  \subst{\subst{e}{Y}{B}}{X}{\subst{A}{Y}{B}}$
\end{lemma}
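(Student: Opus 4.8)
The plan is to prove this by a straightforward structural induction on the expression $e$, carried out simultaneously over the grammars of terms, types and kinds given in \Fig{syntax}. Throughout I would work modulo $\alpha$-equivalence and freely invoke Barendregt's variable convention, as adopted in \Sec{syntax}: whenever $e$ is a binder such as $\lam{x}{A}{t}$, $\lam{Z}{K}{A}$, $\all{Z}{K}{A}$ or $\Lam{Z}{K}{t}$, I assume its bound variable $Z$ (resp.\ $x$) is chosen fresh, in particular distinct from $X$ and $Y$ and not occurring free in $A$ or $B$. This makes all the substitutions appearing in the statement capture-free ``on the nose'', so that the only real content of the proof lies in the base cases for variables.

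For the base cases, the constants $\Top$, $\Bot$ and any term variable are unaffected by type substitution, so both sides equal $e$ trivially. For type variables there are three subcases. If $e = X$ then, since $X \neq Y$, the left-hand side is $\subst{A}{Y}{B}$, and the right-hand side is $\subst{(\subst{X}{Y}{B})}{X}{\subst{A}{Y}{B}} = \subst{X}{X}{\subst{A}{Y}{B}} = \subst{A}{Y}{B}$. If $e = Y$ then the left-hand side is $\subst{(\subst{Y}{X}{A})}{Y}{B} = \subst{Y}{Y}{B} = B$, while the right-hand side is $\subst{B}{X}{\subst{A}{Y}{B}}$, which equals $B$ precisely because $X \notin \fv(B)$ by hypothesis; this is the one place where the side condition is used. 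Finally, if $e = Z$ with $Z \notin \set{X, Y}$, both sides reduce to $Z$.

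The inductive cases are all routine. For a non-binding former such as $\fun{A_1}{A_2}$, $\app{A_1}{A_2}$, $\app{s}{t}$, $\app{t}{A}$ or an interval kind $A_1 \intv A_2$, substitution distributes over the immediate subexpressions, and the claim follows by applying the induction hypothesis to each of them (the side condition $X \notin \fv(B)$ is inherited unchanged). For a binder, the variable convention lets us push both substitutions under the binder without renaming, reducing to the induction hypothesis on the body and, where relevant, on the type or kind annotation. I expect the only point requiring any care --~and hence the main ``obstacle''~-- to be exactly this binder bookkeeping: a fully explicit treatment would have to exhibit a fresh $Z$ and $\alpha$-rename before descending, but since \Sec{syntax} fixes the convention once and for all, this is handled uniformly and the induction goes through without incident.
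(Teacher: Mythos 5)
Your proof is correct and matches the paper's approach: the paper likewise proves this by straightforward induction on the structure of $e$, with the side condition $X \notin \fv(B)$ doing its work exactly in the $e = Y$ case as you identify. Nothing to add.
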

\begin{proof}
  By induction on the structure of $e$.
\end{proof}

\begin{proof}[Proof of \Lem{decl_subst}]
  The two parts are proven separately, each by induction on the
  derivation of the second premise ($\judgD{\Gamma, x \tas A, \Delta}$
  for the first part, $\judgD{\Gamma, X \kas K, \Delta}$ for the
  second).
For the context formation judgment, the proofs proceed by a local
  induction on the structure of $\Delta$.
We show the cases for \ruleref{K-Var} and
  \rulerefN{ST-Beta1}{ST-$\beta_1$} for the second part of the lemma.
The other cases are similar.
\begin{itemize}
  \item \emph{Case \ruleref{K-Var}.}
$\; \genjudg$ is $X \kin J$ and we have
    $\ctxD{\Gamma, X \kas K, \Delta}$ and
    $(\Gamma, X \kas K, \Delta)(Y) = J$.
By the IH, we get $\ctxD{\Gamma, \subst{\Delta}{X}{A}}$.
We distinguish two sub-cases based on whether $Y = X$.
\begin{itemize}
    \item \emph{Sub-case $Y = X$.}
We want to show that
      $\Gamma, \subst{\Delta}{X}{A} \ts A \kin \subst{K}{X}{A}$.
By well-scopedness, $X$ does not occur freely in $K$, so we have
      $\subst{K}{X}{A} \alEq K$.
The desired result follows from applying iterated
      weakening~(\Cor{decl_iter_weaken}) to the premise
      $\Gamma \ts A \kin K$.
    \item \emph{Sub-case $Y \neq X$.}
We want to show that
      $\Gamma, \subst{\Delta}{X}{A} \ts Y \kin \subst{J}{X}{A}$.
Since the domains of $\Gamma$ and $\Delta$ are disjoint, $Y$
      must appear either in $\Gamma$ or $\Delta$ but not in both.
If $Y \in \dom(\Gamma)$, then $X$ does not occur freely in
      $J = \Gamma(Y)$ and hence
      $(\Gamma, \subst{\Delta}{X}{A})(Y) = J \alEq \subst{J}{X}{A}$.
Otherwise
      $(\Gamma, \subst{\Delta}{X}{A})(Y) = (\subst{\Delta}{X}{A})(Y) =
      \subst{J}{X}{A}$.
In either case we conclude by~\ruleref{K-Var}.
    \end{itemize}

  \item \emph{Case \rulerefN{ST-Beta1}{ST-$\beta_1$}.}
$\; \genjudg$ is
    $\app{(\lam{Y}{J_1}{B_1})}{B_2} \tsub \subst{B_1}{X}{B_2} \kin
    \subst{J_2}{X}{B_2}$ and we have
    $\Gamma, X \kas K, \Delta, Y \kas J_1 \ts B_1 \kin J_2$ and
    $\Gamma, X \kas K, \Delta \ts B_2 \kin J_1$.
By the IH we get
    \begin{alignat*}{3}
      \Gamma, \subst{\Delta}{X}{A}&, Y \kas \subst{J_1}{X}{A} &\;&\ts \;
        \subst{B_1}{X}{A} \; \kin \; \subst{J_2}{X}{A} &&\text{ and}\\
      \Gamma, \subst{\Delta}{X}{A}& &\;&\ts \; \subst{B_2}{X}{A} \;
        \kin \; \subst{J_1}{X}{A}.
    \end{alignat*}

    Applying \rulerefN{ST-Beta1}{ST-$\beta_1$}, we obtain
    \begin{align*}
    \Gamma, \subst{\Delta}{X}{A} \quad \ts \quad
      \subst{(\app{(\lam{Y}{J_1}{B_1})}{B_2})}{X}{A} \quad \tsub &\quad
      \subst{\subst{B_1}{X}{A}}{Y}{\subst{B_2}{X}{A}} \\
      \kin \,{} &\quad \subst{\subst{K}{X}{A}}{Y}{\subst{B_2}{X}{A}}
    \end{align*}
    and using \Lem{sub_commutes} twice, it follows that
    \[\Gamma, \subst{\Delta}{X}{A} \ts
      \subst{(\app{(\lam{Y}{J_1}{B_1})}{B_2})}{X}{A} \tsub
      \subst{\subst{B_1}{Y}{B_2}}{X}{A}
      \kin  \subst{\subst{K}{Y}{B_2}}{X}{A}\]
    which concludes the case.\qedhere
  \end{itemize}
\end{proof}

\subsection{Admissible Order-Theoretic Rules}
\label{sec:decl_order}

The rules~\ruleref{ST-Refl} and \ruleref{ST-Trans} establish that
subtyping is a preorder.
Via~\ruleref{ST-AntiSym}, we can lift these
properties to type equality, and show that the latter is symmetric.
Together, these properties make type equality an equivalence relation.
\begin{corollary} Type equality is an equivalence, \ie the following
  equality rules are admissible.
\end{corollary}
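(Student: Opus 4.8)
The plan is to exploit that type equality is not primitive but \emph{defined}: the judgment $\Gamma \ts A \teq B \kin K$ is generated by the single rule \ruleref{ST-AntiSym}, so it holds exactly when both $\Gamma \ts A \tsub B \kin K$ and $\Gamma \ts B \tsub A \kin K$ hold. The ``only if'' direction is an immediate generation (inversion) observation, since \ruleref{ST-AntiSym} is the only rule whose conclusion is a type equality; it requires no induction. Given this, admissibility of the equivalence rules for $\teq$ reduces entirely to the preorder structure of subtyping, witnessed by \ruleref{ST-Refl} and \ruleref{ST-Trans}.

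Concretely, I would proceed as follows. For reflexivity, from $\Gamma \ts A \kin K$ I would apply \ruleref{ST-Refl} to get $\Gamma \ts A \tsub A \kin K$ and then \ruleref{ST-AntiSym} (with that derivation in both premise slots) to conclude $\Gamma \ts A \teq A \kin K$. For symmetry, from $\Gamma \ts A \teq B \kin K$ I would invert \ruleref{ST-AntiSym} to recover $\Gamma \ts A \tsub B \kin K$ and $\Gamma \ts B \tsub A \kin K$, then reapply \ruleref{ST-AntiSym} with the premises swapped. For transitivity, from $\Gamma \ts A \teq B \kin K$ and $\Gamma \ts B \teq C \kin K$ I would invert both to obtain the four subtyping derivations $\Gamma \ts A \tsub B \kin K$, $\Gamma \ts B \tsub A \kin K$, $\Gamma \ts B \tsub C \kin K$, $\Gamma \ts C \tsub B \kin K$; two uses of \ruleref{ST-Trans} then give $\Gamma \ts A \tsub C \kin K$ and $\Gamma \ts C \tsub A \kin K$, and a final \ruleref{ST-AntiSym} yields $\Gamma \ts A \teq C \kin K$.

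There is essentially no obstacle here, which is why this is stated as a corollary: the derivations are a handful of rule instances with no induction, because \ruleref{ST-AntiSym} has no recursive structure to unfold. The only detail worth noting is that the well-kindedness premise $\Gamma \ts A \kin K$ in the reflexivity rule is genuinely required (\ruleref{ST-Refl} needs it), so the admissible rule matches the form used elsewhere in the metatheory. The identical argument, with \ruleref{ST-AntiSym}, \ruleref{ST-Refl} and \ruleref{ST-Trans} replaced by \ruleref{SK-AntiSym} and the (easily checked admissible) reflexivity and transitivity rules for subkinding, shows that kind equality is likewise an equivalence, so I would not spell it out again.
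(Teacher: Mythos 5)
Your proposal is correct and matches the paper's intent: the paper states the corollary without a written proof, but the surrounding text makes exactly your argument --- type equality is generated solely by \ruleref{ST-AntiSym}, so reflexivity, symmetry and transitivity follow by inverting that rule and using the primitive preorder rules \ruleref{ST-Refl} and \ruleref{ST-Trans}, with no induction needed.
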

\begin{center}
  \infruleSimp[TEq-Refl]
  {\Gamma \ts A \kin K}
  {\Gamma \ts A \teq A \kin K}
  \hspace{4em}
  \infruleSimp[TEq-Sym]
  {\Gamma \ts A \teq B \kin K}
  {\Gamma \ts B \teq A \kin K}
  \medskip

  \infruleSimp[TEq-Trans]
  {\Gamma \ts A \teq B \kin K  \andalso  B \teq C \kin K}
  {\Gamma \ts A \teq C \kin K}
  \medskip
\end{center}
The same is true of kind equality, though we first have to establish
that subkinding is a preorder.
\begin{lemma} Subkinding is a preorder, \ie the following
  subkinding rules are admissible.
\end{lemma}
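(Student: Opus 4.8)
The plan is to establish reflexivity and transitivity of subkinding by short structural inductions, pushing all the real work onto the corresponding properties of subtyping and onto the fact that, since there is exactly one subkinding rule per shape (\ruleref{SK-Intv} for intervals, \ruleref{SK-DArr} for arrows), the shape of a kind determines which rule can conclude a subkinding judgment about it.

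For \emph{reflexivity} ($\Gamma \ts K \ksub K$ whenever $\kindD{\Gamma}{K}$) I would induct on $K$. If $K$ is an interval $A \intv B$, then inverting \ruleref{Wf-Intv} gives $\typeD{\Gamma}{A}$ and $\typeD{\Gamma}{B}$; applying \ruleref{ST-Refl} to each and then \ruleref{SK-Intv} yields $\Gamma \ts A \intv B \ksub A \intv B$. If $K$ is an arrow $\dfun{X}{J}{K'}$, inverting \ruleref{Wf-DArr} gives $\kindD{\Gamma}{J}$ and $\kindD{\Gamma, X \kas J}{K'}$; the induction hypothesis gives $\Gamma \ts J \ksub J$ and $\Gamma, X \kas J \ts K' \ksub K'$, and \ruleref{SK-DArr} --~with both domains taken to be $J$, both codomains to be $K'$, and its first premise discharged by the hypothesis $\kindD{\Gamma}{\dfun{X}{J}{K'}}$~-- closes the case.

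For \emph{transitivity} (from $\Gamma \ts J \ksub K$ and $\Gamma \ts K \ksub L$ conclude $\Gamma \ts J \ksub L$) I would induct on the shape $\ksimp{K}$ of the middle kind. If $K$ is an interval, both given derivations must end in \ruleref{SK-Intv}, so $J$, $K$ and $L$ are all intervals; extracting the four subtyping premises, composing them pairwise with \ruleref{ST-Trans} (contravariantly on the lower bounds, covariantly on the upper bounds), and reapplying \ruleref{SK-Intv} gives $\Gamma \ts J \ksub L$. If $K$ is an arrow $\dfun{X}{J_2}{K_2}$, both derivations end in \ruleref{SK-DArr}, so $J = \dfun{X}{J_1}{K_1}$ and $L = \dfun{X}{J_3}{K_3}$, and we obtain $\kindD{\Gamma}{\dfun{X}{J_1}{K_1}}$, $\Gamma \ts J_2 \ksub J_1$, $\Gamma, X \kas J_2 \ts K_1 \ksub K_2$, $\Gamma \ts J_3 \ksub J_2$ and $\Gamma, X \kas J_3 \ts K_2 \ksub K_3$. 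The domain side is the induction hypothesis at the smaller shape $\ksimp{J_2}$: $\Gamma \ts J_3 \ksub J_1$. For the codomain side the two derivations sit in incompatible contexts, so I would first narrow $\Gamma, X \kas J_2 \ts K_1 \ksub K_2$ to $\Gamma, X \kas J_3 \ts K_1 \ksub K_2$ using \Lem{decl_weak_narrow}; its side condition $\Gamma \ts J_3 \ksub J_2$ is in hand, and its side condition $\kindD{\Gamma}{J_3}$ follows from context validity (\Lem{decl_ctx_valid}) applied to $\Gamma, X \kas J_3 \ts K_2 \ksub K_3$ and inversion of \ruleref{C-TmBind}. The induction hypothesis at the smaller shape $\ksimp{K_2}$ then gives $\Gamma, X \kas J_3 \ts K_1 \ksub K_3$, and \ruleref{SK-DArr} assembles $\Gamma \ts \dfun{X}{J_1}{K_1} \ksub \dfun{X}{J_3}{K_3}$.

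I expect the arrow case of transitivity to be the only real obstacle: the codomain subkinding judgments are stated over different contexts, so we must narrow, and narrowing demands a well-formedness side condition that we cannot yet get from subkinding validity --~but, crucially, can get from context validity instead. This also dictates the induction measure: we must recurse on the shape of the middle kind, which is context-independent and hence untouched by the narrowing step, whereas recursing on derivation height would fail since \Lem{decl_weak_narrow} returns a fresh derivation of uncontrolled size.
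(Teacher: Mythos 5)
Your proposal is correct and follows essentially the same route as the paper: reflexivity by structural induction on the kind using \ruleref{ST-Refl}, and transitivity by induction on the middle kind (the paper recurses on it structurally, you on its shape, which is equivalent here), handling the arrow case via context narrowing (\Lem{decl_weak_narrow}) with the required well-formedness side condition recovered from context validity (\Lem{decl_ctx_valid}) -- exactly the two ingredients the paper's proof cites.
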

\begin{center}
  \infruleSimp[KS-Refl]
  {\kindD{\Gamma}{K}}
  {\Gamma \ts K \tsub K}
  \hspace{3em}
  \infruleSimp[KS-Trans]
  {\Gamma \ts K \tsub J  \andalso  J \tsub L}
  {\Gamma \ts K \tsub L}
\end{center}
\begin{proof}
  Separately for each rule, by structural induction on $K$ for
  \ruleref{KS-Refl} and on $J$ for \ruleref{KS-Trans}.  For the type
  interval cases we use the corresponding order-theoretic properties
  of subtyping.  The proof of transitivity uses context narrowing
  (\Lem{decl_weak_narrow}) and context validity (\Lem{decl_ctx_valid})
  for subkinding in the case where $J = \dfun{X}{J_1}{J_2}$.
\end{proof}

\begin{corollary} Kind equality is an equivalence, \ie the following
  equality rules are admissible.
\end{corollary}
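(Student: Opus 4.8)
The plan is to obtain the three kind-equality rules (reflexivity, symmetry, transitivity) directly from the single generating rule \ruleref{SK-AntiSym}, together with the fact --- just established --- that subkinding is a preorder. The key observation is that \ruleref{SK-AntiSym} is the \emph{only} rule concluding a kind-equality judgment, so every derivation of $\Gamma \ts J \keq K$ can be inverted to recover a pair of subkinding derivations $\Gamma \ts J \tsub K$ and $\Gamma \ts K \tsub J$, and conversely any such pair recombines via \ruleref{SK-AntiSym}. Thus the whole corollary reduces to trivial bookkeeping at the level of subkinding.

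Concretely, for reflexivity I would take $\kindD{\Gamma}{K}$, apply \ruleref{KS-Refl} to get $\Gamma \ts K \tsub K$, and feed this derivation into \ruleref{SK-AntiSym} as both premises to conclude $\Gamma \ts K \keq K$. For symmetry I would invert $\Gamma \ts J \keq K$ into $\Gamma \ts J \tsub K$ and $\Gamma \ts K \tsub J$, then reapply \ruleref{SK-AntiSym} with the premises swapped, giving $\Gamma \ts K \keq J$. For transitivity I would invert both hypotheses to the four subkinding judgments $\Gamma \ts J \tsub K$, $\Gamma \ts K \tsub J$, $\Gamma \ts K \tsub L$, $\Gamma \ts L \tsub K$, apply \ruleref{KS-Trans} twice to obtain $\Gamma \ts J \tsub L$ and $\Gamma \ts L \tsub J$, and finish with \ruleref{SK-AntiSym}.

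There is essentially no obstacle internal to this corollary: every step is a one-line rule application or inversion. The only real work lies upstream, in the preceding lemma that subkinding is a preorder, whose transitivity argument already required context narrowing (\Lem{decl_weak_narrow}) and context validity (\Lem{decl_ctx_valid}) to handle the dependent-arrow case \ruleref{SK-DArr}. Given that lemma, the present corollary is immediate, and --- exactly as for type equality --- the identical pattern (invert the antisymmetry rule, manipulate the underlying order, reapply antisymmetry) also yields the companion statement from \ruleref{ST-Refl}, \ruleref{ST-Trans} and \ruleref{ST-AntiSym}.
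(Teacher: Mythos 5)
Your proof is correct and matches the paper's intent exactly: the corollary is stated immediately after the subkinding-preorder lemma precisely because \ruleref{SK-AntiSym} is the sole rule generating kind equality, so inverting it, applying \ruleref{KS-Refl} and \ruleref{KS-Trans}, and reapplying it gives all three equality rules. Nothing further is needed.
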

\begin{center}
  \infruleSimp[KEq-Refl]
  {\kindD{\Gamma}{K}}
  {\Gamma \ts K \teq K}
  \hspace{2em}
  \infruleSimp[KEq-Trans]
  {\Gamma \ts K \teq J  \andalso  J \teq L}
  {\Gamma \ts K \teq L}
  \hspace{2em}
  \infruleSimp[KEq-Sym]
  {\Gamma \ts K \teq J}
  {\Gamma \ts J \teq K}
  \medskip
\end{center}
In addition, the following variants of subtyping and subkinding
reflexivity are also admissible, which makes the subtyping and
subkinding relations partial orders \wrt type and kind equality.
\begin{center}
  \medskip
  \infruleSimp[ST-Refl-TEq]
  {\Gamma \ts A \teq B \kin K}
  {\Gamma \ts A \tsub B \kin K}
  \hspace{3em}
  \infruleSimp[SK-Refl-KEq]
  {\Gamma \ts J \keq K}
  {\Gamma \ts J \tsub K}
  \medskip
\end{center}
Another consequence of these rules is that we can treat well-typed
terms and well-kinded types up to type and kind equality,
respectively.
\begin{corollary}[conversion] The following are admissible.
\end{corollary}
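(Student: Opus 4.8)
The plan is to obtain each of the conversion rules as a one- or two-step composition of rules we have \emph{already} shown to be admissible, so there is essentially nothing genuinely new to prove. The key observation is that the two ``reflexivity up to equality'' rules \ruleref{ST-Refl-TEq} and \ruleref{SK-Refl-KEq} (established just above) turn a type or kind equation into the corresponding (sub)order relation, after which the ordinary subsumption rules \ruleref{T-Sub}, \ruleref{K-Sub} and \ruleref{ST-Sub} do the rest. So the proof is by a short derivation per rule, with no induction needed.

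Concretely: for conversion on typing, given $\Gamma \ts t \tin A$ and $\Gamma \ts A \teq B \kin \kstar$, I would first apply \ruleref{ST-Refl-TEq} to the equation to get $\Gamma \ts A \tsub B \kin \kstar$, and then conclude $\Gamma \ts t \tin B$ by \ruleref{T-Sub}. For conversion on kinding, given $\Gamma \ts A \kin J$ and $\Gamma \ts J \keq K$, I would apply \ruleref{SK-Refl-KEq} to obtain the subkinding $\Gamma \ts J \tsub K$ and then conclude $\Gamma \ts A \kin K$ by \ruleref{K-Sub}. Conversion of a subtyping judgment along a kind equation is handled identically, composing \ruleref{SK-Refl-KEq} with \ruleref{ST-Sub}. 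If any of the conversion rules is stated with the equation in the opposite orientation (e.g.\ replacing the \emph{source} type/kind rather than the target), I would first flip it using \ruleref{TEq-Sym} or \ruleref{KEq-Sym}, both of which are available from the corollaries established immediately before; and any remaining variants (such as converting the index kind of a \emph{type} equation) are obtained by the same recipe.

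There is essentially no obstacle: the entire difficulty was already discharged in the proofs of \ruleref{ST-Refl-TEq} and \ruleref{SK-Refl-KEq} (which rest on antisymmetry \ruleref{ST-AntiSym}/\ruleref{SK-AntiSym} together with the order-theoretic rules), and in the earlier weakening, substitution and narrowing lemmas used there. The only point requiring a little care is purely bookkeeping, namely that the conversion rules quantify the equation at exactly the kind (here $\kstar$ for terms, an arbitrary well-formed kind for types and subtyping) expected as the index by \ruleref{T-Sub}, \ruleref{K-Sub} and \ruleref{ST-Sub}; since the indices match, the compositions go through with no extra side conditions, and the corollary follows.
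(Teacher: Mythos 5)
Your proposal is correct and matches the paper's (implicit) argument: the conversion rules are stated as a corollary precisely because they are the compositions of \ruleref{ST-Refl-TEq}/\ruleref{SK-Refl-KEq} with the subsumption rules \ruleref{T-Sub}, \ruleref{K-Sub} and \ruleref{ST-Sub}, exactly as you describe. The only case deserving one more word than you give it is \ruleref{TEq-Conv}, where one first inverts the equality judgment (generated by the single rule \ruleref{ST-AntiSym}) into its two subtyping halves, converts each via \ruleref{SK-Refl-KEq} and \ruleref{ST-Sub}, and recombines with \ruleref{ST-AntiSym} --- but that is indeed ``the same recipe'' and poses no difficulty.
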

\begin{multicols}{2}
  \typicallabel{K-Conv}

  \infrule[\ruledef{K-Conv}]
  {\Gamma \ts A \kin J  \andalso  \Gamma \ts J \keq K}
  {\Gamma \ts A \kin K}

  \infrule[\ruledef{ST-Conv}]
  {\Gamma \ts A \tsub B \kin J  \andalso  \Gamma \ts J \keq K}
  {\Gamma \ts A \tsub B \kin K}

  \infrule[\ruledef{T-Conv}]
  {\Gamma \ts t \tin A  \andalso  \Gamma \ts A \teq B \kin \kstar}
  {\Gamma \ts t \tin B}

  \infrule[\ruledef{TEq-Conv}]
  {\Gamma \ts A \teq B \kin J  \andalso  \Gamma \ts J \keq K}
  {\Gamma \ts A \teq B \kin K}
\end{multicols}

In light of their order-theoretic properties, we often call kind and
type equality judgments \emph{equations} and subkinding and subtyping
judgments \emph{inequations}.
We sometimes use equational reasoning notation in proofs, \ie we write
\[
  \Gamma \; \ts \; A_1 \; \teq \; A_2 \; \teq \; \cdots \; \teq \; A_i
  \; \tsub \; A_{i+1} \; \tsub \; \cdots \; \teq \; A_n \; \alEq \;
  A_{n+1} \; \alEq \; \cdots \; \teq \; A_m \; \kin \; K
\]
to denote chains of (in)equations where the use of the corresponding
transitivity rules is left implicit.
In so doing, we may freely mix the relations $\tsub$, $\teq$ and
$\alEq$ provided that they are defined on the same sort (\ie kinds or
types).
Such chains are always interpreted as judgments of the weakest
relation they contain.

\subsection{Validity of the Declarative System}
\label{sec:decl_validity_full}

In this section, we state and prove a number of \emph{validity
  properties} for the various judgments defined in
\Fig[Figs.]{decl_rules1} and~\ref{fig:decl_rules2} of the paper
(page~\pageref{fig:decl_rules1}).
Roughly, we say that a judgment is valid if all its parts are
well-formed.  For example, \emph{subkinding validity} states that, if
$\Gamma \ts J \ksub K$, then both $J$ and $K$ are actually well-formed
kinds.  We saw another example earlier: \emph{context
  validity}~(\Lem{decl_ctx_valid}) states that the context $\Gamma$ of
any judgment $\judgD{\Gamma}$ is well-formed.  Here is a summary of
the validity properties that remain to be proven.

\begin{lemma}[validity]\label{lem:decl_validity_full}
  The judgments defined in \Fig[Figs.]{decl_rules1}
  and~\ref{fig:decl_rules2} enjoy the following validity properties.
\begin{enumerate}[nosep,leftmargin=11.5em,labelsep=1em,topsep=1ex]
\item[(kinding validity)]
If $\; \Gamma \ts A \kin K$, then $\kindD{\Gamma}{K}$.

\item[(typing validity)]
If $\; \Gamma \ts t \tin A$, then $\typeD{\Gamma}{A}$.

\item[(subkinding validity)]
If $\; \Gamma \ts J \ksub K$, then $\kindD{\Gamma}{J}$ and
    $\kindD{\Gamma}{K}$.

\item[(subtyping validity)]
If $\; \Gamma \ts A \tsub B \kin K$, then $\Gamma \ts A \kin K$ and
    $\Gamma \ts B \kin K$.

\item[(kind equation validity)]
If $\; \Gamma \ts J \keq K$, then $\kindD{\Gamma}{J}$ and
    $\kindD{\Gamma}{K}$.

\item[(type equation validity)]
If $\; \Gamma \ts A \teq B \kin K$, then $\Gamma \ts A \kin K$ and
    $\Gamma \ts B \kin K$.
  \end{enumerate}
\end{lemma}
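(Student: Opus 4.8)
The plan is to prove all six statements simultaneously by induction on the derivation of the judgment in the hypothesis, following the strategy of \citet{HarperP05tocl} sketched in \Sec{decl_basics_short}. That is, I would first prove validity for the \emph{extended} declarative system whose rules carry the gray validity conditions shown in \Fig[Figs.]{decl_rules1} and~\ref{fig:decl_rules2}, and only afterwards transfer the result to the original system. The key point is that in the extended system the troublesome rules --~\ruleref{K-App}, \rulerefN{ST-Beta1}{ST-$\beta_{1,2}$}, \ruleref{ST-App}, and the kinding/subtyping rules mentioning abstractions~-- already carry as premises precisely the well-formedness facts we would otherwise have to reconstruct, so most cases of the induction become routine appeals to the induction hypothesis together with the basic structural lemmas (weakening~\Lem{decl_weaken}, substitution~\Lem{decl_subst}, context narrowing~\Lem{decl_weak_narrow}) and the admissible order-theoretic rules of \SupSec{decl_order}.

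First I would dispatch the easy cases. For \emph{kinding validity}, rules like \ruleref{K-Var}, \ruleref{K-Top}, \ruleref{K-Bot}, \ruleref{K-Arr}, \ruleref{K-All} produce $\kstar = \Bot \intv \Top$, which is well-formed by \ruleref{Wf-Intv} and \ruleref{K-Bot}/\ruleref{K-Top} in the ambient context (obtained via context validity, \Lem{decl_ctx_valid}); \ruleref{K-Abs} has $\kind{\Gamma, X \kas J}{K}$ as a gray premise and needs \ruleref{Wf-DArr}; \ruleref{K-Sing} needs $\kindD{\Gamma}{A \intv A}$, which follows from \ruleref{Wf-Intv} applied to the fact that $A$ is a proper type --~and \emph{this} is the one nontrivial sub-argument, requiring the derivation displayed in \Sec{decl_rules} showing $\Gamma \ts A \kin B \intv C$ implies $\Gamma \ts A \kin \kstar$; \ruleref{K-Sub} combines the IH with subkinding validity; \ruleref{K-App} has the substituted kind $\subst{K}{X}{B}$ supplied as a gray premise. \emph{Typing validity} is analogous, using kinding validity for the type-application case \ruleref{T-TApp} (the result type $\subst{A}{X}{B}$ is well-kinded by the substitution lemma applied to $\typeD{\Gamma, X \kas K}{A}$, which comes from typing validity at the abstraction). \emph{Subkinding} and \emph{kind equation validity} follow by a straightforward induction using \ruleref{Wf-Intv}/\ruleref{Wf-DArr} and, for the anti-symmetry rule \ruleref{SK-AntiSym}, the IH on both premises. \emph{Type equation validity} reduces to \emph{subtyping validity} via \ruleref{ST-AntiSym}.

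The heart of the proof is \emph{subtyping validity}, where the case \ruleref{ST-App} is the main obstacle. Here the IH gives $\Gamma \ts A_1 \kin \dfun{X}{J}{K}$ and $\Gamma \ts A_2 \kin \dfun{X}{J}{K}$ and an equation $\Gamma \ts B_1 \teq B_2 \kin J$, and the conclusion's kind is $\subst{K}{X}{B_1}$; we must show $\Gamma \ts \app{A_1}{B_1} \kin \subst{K}{X}{B_1}$ --~easy by \ruleref{K-App} using the gray premises~-- but also $\Gamma \ts \app{A_2}{B_2} \kin \subst{K}{X}{B_1}$, which requires converting $\subst{K}{X}{B_2}$ to $\subst{K}{X}{B_1}$. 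This is exactly the \emph{functionality} lemma~\Lem{decl_funct}: from $\Gamma \ts B_1 \teq B_2 \kin J$ and $\kindD{\Gamma, X \kas J}{K}$ (a gray premise) we get $\Gamma \ts \subst{K}{X}{B_1} \keq \subst{K}{X}{B_2}$, and then \ruleref{K-Conv} (from \SupSec{decl_order}) closes the gap. As noted in \Sec{decl_basics_short}, functionality itself depends on kinding and subtyping validity, so the real work is organizing the mutual induction: I would follow \citet{HarperP05tocl} and prove \Lem{decl_funct} \emph{together with} the validity statements in one grand simultaneous induction over the extended system, where the extra gray premises guarantee that every kind/type we need to recurse on is literally a sub-derivation of a premise, breaking the apparent circularity. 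The remaining subtyping cases are comparatively mild: \ruleref{ST-Trans} uses the IH on both premises (both sides sit in the common kind $K$); \ruleref{ST-Sub} and \ruleref{ST-Intv} adjust kinds using subkinding validity and \ruleref{K-Sing}; \ruleref{ST-Top}/\ruleref{ST-Bot}/\rulerefN{ST-Bnd1}{ST-Bnd$_{1,2}$} all require showing the relevant type is a proper type, again via the $B \intv C \Rightarrow \kstar$ argument; \ruleref{ST-Abs} and \ruleref{ST-All} use their first premises (which assert well-kindedness of the abstractions/universals directly) plus weakening/narrowing for the body. Once functionality and validity hold for the extended system, a final lemma shows each gray premise is derivable from the non-gray ones (using validity itself), hence the two systems coincide, and \Lem{decl_validity_full} transfers to the original declarative system verbatim; full details are relegated to \SupSec{decl_validity_full}.
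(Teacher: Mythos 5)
Your overall architecture matches the paper's: prove validity for the extended system whose rules carry the gray validity conditions, show the two systems equivalent, and transfer the result; and you correctly identify \ruleref{ST-App} plus functionality as the crux, and the $\Gamma \ts A \kin B \intv C \Rightarrow \Gamma \ts A \kin \kstar$ argument as the key sub-step for \ruleref{K-Sing}, \ruleref{ST-Top}, \ruleref{ST-Bot} and the bound-projection rules. However, there is a genuine gap at exactly the point where the difficulty lives: you propose to prove functionality \emph{together with} validity ``in one grand simultaneous induction,'' justified by the claim that the gray premises make every needed derivation a literal sub-derivation. That claim fails for the substituted types themselves. Inside the functionality proof (e.g.\ the \ruleref{K-All} and \ruleref{K-Abs} cases) you must apply the substitution lemma (\Lem{decl_subst}) with $A_2$ to obtain well-formedness of the right-hand instance, which requires $\Gamma \ts A_2 \kin K$; the only source for this in your formulation is equation validity applied to the \emph{standing hypothesis} $\Gamma \ts A_1 \teq A_2 \kin K$, which is fixed throughout the functionality induction and is not a sub-derivation of anything you are inducting on. This is precisely the circularity the paper warns cannot be broken by simultaneous induction alone. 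The paper's fix is different: it first proves a strengthened, self-contained functionality lemma for the extended system (\Lem{decl_funct_ext}) whose statement carries $\Gamma \ts A_1 \kin K$, $\Gamma \ts A_2 \kin K$ and context-equality side conditions as explicit hypotheses, so that its proof never appeals to validity; only afterwards is \Lem{decl_validity_full} proven, using that lemma in the \ruleref{ST-App} case. Your plan, as stated, neither strengthens the functionality statement nor supplies an induction measure under which the appeal to validity on the fixed equation would be legitimate.

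Two smaller omissions: first, even in the extended system the \ruleref{K-App} case of functionality needs $\kindD{\Gamma, X \kas K, \Delta}{J_1}$, which is \emph{not} among the gray premises and has to be extracted from the context-formation information buried inside the premise $\kindD{\Gamma, X \kas K, \Delta, Y \kas J_1}{J_2}$ via a separate ``lookup'' helper lemma proven alongside functionality; so ``literally a sub-derivation of a premise'' is too strong even there. Second, the paper proves validity simultaneously with \Lem{ST-Intv-Star} (if $\Gamma \ts A_1 \tsub A_2 \kin B \intv C$ then $\Gamma \ts A_1 \tsub A_2 \kin \kstar$), which is what the \ruleref{ST-Intv} case actually needs to widen $A_1 \intv A_1$ to $A_1 \intv A_2$; your sketch of that case (``subkinding validity and \ruleref{K-Sing}'') does not produce the required inequation $\Gamma \ts A_1 \tsub A_2 \kin \kstar$, so this auxiliary statement must be added to the mutual induction.
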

These validity properties provide a ``sanity check'' for the static
semantics developed in this section, but they also play a crucial role
in the proofs of other important properties, such as subject
reduction, soundness of type normalization and and type safety.

Unfortunately, the validity properties are harder to prove than one
might expect.  The proofs of kinding, subkinding and subtyping
validity require the following \emph{functionality} lemma for
the case of \ruleref{ST-App}.
\begin{lemma}[functionality]\label{lem:decl_funct_full}
  Let $\Gamma \ts A_1 \teq A_2 \kin K$.
  \begin{enumerate}
  \item If $\; \kindD{\Gamma, X \kas K, \Delta}{J}$, then
    $\Gamma, \subst{\Delta}{X}{A_1} \ts \subst{J}{X}{A_1} \keq
    \subst{J}{X}{A_2}$.
  \item If $\; \Gamma, X \kas K, \Delta \ts B \kin J$, then
    $\Gamma, \subst{\Delta}{X}{A_1} \ts \subst{B}{X}{A_1} \keq
    \subst{B}{X}{A_2} \kin \subst{J}{X}{A_1}$.
  \end{enumerate}
\end{lemma}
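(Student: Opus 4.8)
The plan is to prove both parts of the functionality lemma by simultaneous induction on the derivation of the second hypothesis — the well-formedness derivation $\kindD{\Gamma, X \kas K, \Delta}{J}$ for part (1), and the kinding derivation $\Gamma, X \kas K, \Delta \ts B \kin J$ for part (2). Since kinds mention types and the arrow-kind formation rule \ruleref{Wf-DArr} recurses into a context extended with a type binding, the two parts genuinely need to be established together, threading a (possibly empty) tail context $\Delta$ through the statement so that the induction hypothesis applies when we descend under a binder. Throughout, I would freely use the basic structural properties already available in this setting — weakening, ordinary substitution (\Lem{decl_subst} in its declarative-validity-conditions form), context narrowing, the admissible congruence and order-theoretic rules for kind and type equality (\ruleref{K-Conv}, \ruleref{TEq-Conv}, symmetry/transitivity), and the fact that $\Gamma \ts A_1 \teq A_2 \kin K$ gives, via \ruleref{ST-AntiSym} inversion, the two subtyping directions $\Gamma \ts A_1 \tsub A_2 \kin K$ and $\Gamma \ts A_2 \tsub A_1 \kin K$ together with $\Gamma \ts A_i \kin K$ by type-equation validity.

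For part (1), the case \ruleref{Wf-Intv} for an interval $B' \intv C'$ reduces directly to part (2) applied to the kinding derivations of $B'$ and $C'$ at kind $\kstar$: substituting $A_1$ versus $A_2$ yields equal types, hence equal intervals by the congruence rule for \ruleref{Wf-Intv} on type equations. The case \ruleref{Wf-DArr} for $\dfun{Y}{J_1}{J_2}$ applies the induction hypothesis to $J_1$ (in context $\Gamma, X \kas K, \Delta$) and to $J_2$ (in context $\Gamma, X \kas K, \Delta, Y \kas J_1$, i.e. with the tail enlarged to $\Delta, Y \kas J_1$), then reassembles using the congruence rule for dependent arrows — here one must be slightly careful, since the two substituted codomains live over syntactically different domains $\subst{J_1}{X}{A_1}$ and $\subst{J_1}{X}{A_2}$, so the conclusion for the arrow is obtained by combining the codomain equation with a context-narrowing/conversion step to reconcile the binding. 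For part (2), the variable case \ruleref{K-Var} splits on whether the variable is $X$ or lies in $\Gamma$ or $\Delta$: if it is $X$, the goal is exactly $\Gamma, \subst{\Delta}{X}{A_1} \ts A_1 \teq A_2 \kin \subst{K}{X}{A_1}$, which — since $X \notin \fv(K)$ by well-scopedness so $\subst{K}{X}{A_1} \alEq K$ — follows from the hypothesis by iterated weakening; otherwise it follows from reflexivity and \ruleref{K-Var} as in the substitution lemma. The cases \ruleref{K-Top}, \ruleref{K-Bot}, \ruleref{K-Arr}, \ruleref{K-All}, \ruleref{K-Abs}, \ruleref{K-App}, \ruleref{K-Sing} and \ruleref{K-Sub} each recurse on the premises and reassemble via the matching congruence rule on type equations, the relevant conversion rule (for the kind index in \ruleref{K-Sub}, using part (1) on the subkinding premise's kinds together with \ruleref{SK-Refl-KEq}), and — in \ruleref{K-App} — \Lem{sub_commutes} to push the two substitutions past each other, exactly mirroring the \rulerefN{ST-Beta1}{ST-$\beta_1$} case already displayed for \Lem{decl_subst}.

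The main obstacle will be the \ruleref{K-App} (and analogously any rule whose output kind involves a substitution into a dependent kind): after applying the induction hypotheses we obtain an equation $\Gamma, \subst{\Delta}{X}{A_1} \ts \subst{B}{X}{A_1} \teq \subst{B}{X}{A_2} \kin \subst{(\dfun{Y}{J'}{L})}{X}{A_1}$ for the operator and $\Gamma, \subst{\Delta}{X}{A_1} \ts \subst{C}{X}{A_1} \teq \subst{C}{X}{A_2} \kin \subst{J'}{X}{A_1}$ for the argument, and must conclude an equation at kind $\subst{\subst{L}{Y}{C}}{X}{A_1}$ between $\subst{(\app{B}{C})}{X}{A_1}$ and $\subst{(\app{B}{C})}{X}{A_2}$; chaining a use of the application-congruence rule for type equality at the $A_1$-substituted instances with a use of the \emph{same} functionality statement ``internally'' to commute $C$ through $L$, all while keeping the two layers of substitution in the kind index straight via \Lem{sub_commutes}, is where the bookkeeping becomes delicate. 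This is why, as the paper notes, functionality and validity cannot be separated — functionality here is being used in a context where it would itself need validity — and the resolution is precisely \citeauthor{HarperP05tocl}'s device of proving everything for the system augmented with the gray validity conditions, so that the extra premises of \ruleref{K-App} and \rulerefN{ST-Beta1}{ST-$\beta_1$} supply the well-formedness facts about $\subst{K}{X}{B}$ needed to make the induction go through without circular appeal to validity. I would carry out the whole argument inside that augmented system.
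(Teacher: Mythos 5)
Your overall plan---simultaneous induction on the kind-formation and kinding derivations, carried out in the Harper--Pfenning-style extended system with the gray validity conditions, and transferred back to the original system at the end---is exactly the paper's strategy. The genuine gap is that, \emph{inside} the induction, you help yourself to results that in this development only become available \emph{after} functionality, so the circularity you correctly identify is not actually broken. Concretely: (i) you obtain $\Gamma \ts A_1 \kin K$ and $\Gamma \ts A_2 \kin K$ from the equation hypothesis ``by type-equation validity'', but validity (even for the extended system) is proved \emph{using} functionality; and these kinding facts are genuinely needed, e.g.\ to apply the substitution lemma so that $\ctxD{\Gamma, \subst{\Delta}{X}{A_1}}$ is available for iterated weakening in the \ruleref{K-Var} case, and so that bindings such as $Y \kas \subst{J_1}{X}{A_2}$ are well-formed when you reconcile domains in the \ruleref{K-All}, \ruleref{K-Abs} and \ruleref{K-App} cases. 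The gray premises on the rules cannot supply them, since they concern the premises of rules, not the types being substituted. (ii) You propose to reassemble each case ``via the matching congruence rule on type equations''; but the congruence rules \ruleref{TEq-Arr}, \ruleref{TEq-All}, \ruleref{TEq-Abs}, \ruleref{TEq-App} are themselves proved (in \Sec{decl_cong}) using validity and functionality, so they are likewise off-limits here. The paper's actual inductive statement (\Lem{decl_funct_ext}) avoids both problems by strengthening the hypotheses: $\Gamma \ts A_1 \kin K$ and $\Gamma \ts A_2 \kin K$ are assumed outright, the target context is generalized to an arbitrary well-formed $\Sigma$ with $\ctxD{\Gamma, \Sigma \keq \Gamma, \subst{\Delta}{X}{A_1}}$ and $\ctxD{\Gamma, \Sigma \keq \Gamma, \subst{\Delta}{X}{A_2}}$ (symmetrizing the two substituted contexts), and each case derives both subtyping directions with the primitive rules (\ruleref{ST-All}, \ruleref{ST-App}, \ruleref{ST-Abs}) plus \ruleref{ST-AntiSym}, using only the order-theoretic and conversion rules from \Sec{decl_order}. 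The lemma in your formulation is then recovered only afterwards, as a corollary of the extended lemma, validity, and the equivalence of the two systems.

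A second, more local gap is in \ruleref{K-App}. To fix up the kind of the right-hand side you must derive $\Gamma, \Sigma \ts \subst{J_1}{X}{A_2} \keq \subst{J_1}{X}{A_1}$ for the operator's \emph{domain} $J_1$, i.e.\ apply the induction hypothesis to a derivation of $\kindD{\Gamma, X \kas K, \Delta}{J_1}$---but no such derivation appears among the premises of the (extended) rule. The validity condition you invoke, $\kindD{\Gamma, X \kas K, \Delta, Y \kas J_1}{J_2}$, only contains one buried in its context-formation part; the paper must extract it by traversing that derivation up to a leaf (equivalently, by a helper lemma proved simultaneously with functionality). So ``the gray premises supply the well-formedness facts about $\subst{K}{X}{B}$'' addresses the conclusion kind but not this domain equation, and without the extraction step (or an equivalent further strengthening) the \ruleref{K-App} case does not go through.
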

But a naive attempt at proving this lemma directly leads to a circular
dependency on kinding and subtyping validity, in a way that is not
easily resolved.
In particular, it is not sufficient to simply prove the two statements
simultaneously.

It is instructive to play through the critical cases encountered when
attempting to prove \Lem[Lemmas]{decl_validity_full}
and~\ref{lem:decl_funct_full} directly to see where things go wrong
and to better understand the solution described in the next section.
We start with subtyping validity, attempting a proof by induction on
subtyping derivations.  For the case of the application
rule~\ruleref{ST-App}, we are given
$\Gamma \ts A_1 \tsub A_2 \kin \dfun{X}{J}{K}$ and
$\Gamma \ts B_1 \teq B_2 \kin J$, and we would like to show that
$\Gamma \ts \app{A_1}{B_1} \kin \subst{K}{X}{B_1}$ and
$\Gamma \ts \app{A_2}{B_2} \kin \subst{K}{X}{B_1}$.  We can already
spot the source of trouble: the type $B_1$ that is being substituted
for $X$ in the kind of the second type application differs from the
argument type $B_2$.  By the IH, we get
$\Gamma \ts A_2 \kin \dfun{X}{J}{K}$ and $\Gamma \ts B_2 \kin J$, and
applying~\ruleref{K-App} we obtain
$\Gamma \ts \app{A_2}{B_2} \kin \subst{K}{X}{B_2}$ but, as expected,
the kinds do not match up.  If we could show that
$\Gamma \ts \subst{K}{X}{B_2} \keq \subst{K}{X}{B_1}$, then
by~\ruleref{K-Conv}, we would be done.  Enter functionality of kind
formation.

For the functionality lemma, we attempt a proof by simultaneous
induction on kind formation and kinding derivations and consider the
case where the current kinding derivation ends in an instance of the
operator abstraction rule~\ruleref{K-Abs}.  In addition to the premise
$\Gamma \ts B_1 \teq B_2 \kin J$, we are given derivations for
$\kindD{\Gamma, X \kas J}{K_1}$ and
$\Gamma, X \kas J, Y \kas K_1 \ts A \kin K_2$, and we want to show
that
\[
  \Gamma \ts \; \subst{(\lam{Y}{K_1}{A})}{X}{B_1} \; \teq \;
  \subst{(\lam{Y}{K_1}{A})}{X}{B_2} \; \kin \;
  \subst{(\dfun{Y}{K_1}{K_2})}{X}{B_1}.
\]
To do so, we would like to use the rule~\ruleref{ST-Abs} together
with~\ruleref{ST-AntiSym} but we first need to establish the
right-hand validity of the above equation, \ie that
\[
  \Gamma \ts \subst{(\lam{Y}{K_1}{A})}{X}{B_2} \kin
  \subst{(\dfun{Y}{K_1}{K_2})}{X}{B_1}.
\]
Clearly, \Lem{decl_validity_full} would be helpful here: equation
validity would give us $\Gamma \ts B_2 \kin J$, from which we could
obtain
$\Gamma \ts \subst{(\lam{Y}{K_1}{A})}{X}{B_2} \kin
\subst{(\dfun{Y}{K_1}{K_2})}{X}{B_2}$ by~\Lem{decl_subst}.  This is
almost what we need.  Again, we face a mismatch in kinds that could,
in principle, be remedied by using functionality of kind formation
together with~\ruleref{K-Conv}.  Concretely, we would like to invoke
the IH to derive
$\Gamma \ts \subst{(\dfun{Y}{K_1}{K_2})}{X}{B_1} \keq
\subst{(\dfun{Y}{K_1}{K_2})}{X}{B_2}$.  Alas, we do not have a
suitable sub-derivation to do so.  Although
$\kindD{\Gamma}{\dfun{Y}{K_1}{K_2}}$ follows from kinding validity, we
cannot apply the IH to this result because it is not a
sub-derivation of our overall premise.  Indeed, none of the
sub-derivations we are given are sufficient to derive the required
kind equation.

Note that we could finish the proof of this case if only (1)~the
rule~\ruleref{K-Abs} had an additional premise
$\kind{\Gamma, X \kas J, Y \kas K_1}{K_2}$ and (2)~the IH was a bit
stronger, so that we could use it to derive
\[
  \Gamma, Y \kas \subst{K_1}{X}{B_1} \ts \subst{K_2}{X}{B_1} \keq
  \subst{K_2}{X}{B_2}.
\]
This, together with a similar use of the IH on the first premise
of~\ruleref{ST-Abs} and some uses of \Lem{decl_subst},
\ruleref{ST-Refl-TEq} and~\ruleref{SK-DArr}, would be enough to derive
$\Gamma \ts \subst{(\dfun{Y}{K_1}{K_2})}{X}{B_2} \ksub
\subst{(\dfun{Y}{K_1}{K_2})}{X}{B_1}$,
which we could then put to use with~\ruleref{K-Sub}.  Indeed, these
are the basic ideas that will allow us to resolve the circular
dependency between~\Lem{decl_validity_full} and~\Lem{decl_funct_full}.
We start by addressing point~(1).

\subsubsection{The Extended System}

We define a pair of \emph{extended} kinding and subtyping judgments
where some rules have been endowed with additional premises.
These are precisely the premises highlighted in gray in
\Fig[Figs.]{decl_rules1} and~\ref{fig:decl_rules2}.
We call these extra premises \emph{validity conditions}.
Crucially, the validity conditions of an extended rule are redundant
in the sense that they follow (more or less) directly from the
remaining premises of the rule via \Lem{decl_validity_full}.
For example, the extended rule~\ruleref{K-Abs} carries the extra
premise $\kindD{\Gamma, X \kas J}{K}$ which follows directly from
applying kinding validity to the rule's second premise
$\Gamma, X \kas J \ts A \kin K$.  Thanks to this invariant, the two
sets of rules are in fact equivalent -- every derivation of an
extended kinding or subtyping judgment has a corresponding derivation
that uses only original rules, and vice-versa.  We give a formal
equivalence proof in \Sec{decl_equivalence}.

Since kinding and subkinding are defined mutually with all the other
judgments of the declarative system (except typing), the extension
indirectly affects those judgments as well.  We call the entire set of
extended judgments the \emph{extended (declarative) system}, as
opposed to the \emph{original (declarative) system}.  We will
sometimes distinguish the two systems by writing $\judgDD{\Gamma}$ for
judgments of the original system and $\judgEE{\Gamma}$ for those of
the extended system.  Since the two systems are equivalent, this
distinction only matters in a few key situations -- notably the
development in the remainder of this section.  When we refer to the
``declarative system'' in the following sections, we will always mean
the original declarative system, unless otherwise noted.

The idea of extending a set of inference rules with redundant premises
in order to simplify metatheoretic proofs is not new.
For example, \citet{HarperP05tocl} use similar premises to establish
validity properties for the typing judgments of a variant of LF.
Furthermore, some readers will have noticed that a few of the original
declarative rules already carry redundant premises.  For example, the
first premise $\kindD{\Gamma}{J}$ of the rule \ruleref{K-Abs} could
easily be reconstructed from its second premise
$\Gamma, X \kas J \ts A \kin K$ via context validity
(\Lem{decl_ctx_valid}).  The rules~\ruleref{Wf-DArr}, \ruleref{K-All},
\ruleref{T-Abs}, and~\ruleref{T-TAbs} carry similar validity
conditions.  We include these premises primarily because their
presence simplifies the proof of the substitution lemma
(\Lem{decl_subst}).  Context validity, on the other hand, remains
easily provable without them.

To prove the validity properties stated in~\Lem{decl_validity_full}
for both the original and extended systems, we use the following
strategy:
\begin{enumerate}
\item\label{item:val-s1} prove that the validity properties hold for the
  extended judgments;

\item prove that the two systems are equivalent, \ie that
  \begin{enumerate}
  \item the extended rules are \emph{sound} \wrt to the original
    ones --~we can drop the validity conditions without affecting the
    conclusions of any derivations~-- and that
  \item the extended rules are \emph{complete} \wrt to the original
    ones -- the additional validity conditions follow from the remaining
    premises of the extended rules via the validity properties proved
    in step~\Item{val-s1};
  \end{enumerate}

\item prove that the validity properties hold for the original system
  via the equivalence -- convert original derivations to extended
  derivations (via completeness), derive the property in question,
  convert the conclusion back (via soundness).
\end{enumerate}

Before we continue, we should point out that some of the validity
conditions of the extended system are not actually necessary for the
proof of~\Lem[Lemmas]{decl_validity_full}
and~\ref{lem:decl_funct_full} -- some even complicate the proofs.
However, we will face a similar cyclic dependency later on when
attempting to prove the equivalence of the (original) declarative
system and the \emph{canonical system} of judgments introduced
in~\Sec{canonical}.  Rather than introducing yet another extension to
the declarative system later, we opt for a combined system containing
all of the extra conditions.

We start the development set out above by noting that all the basic
metatheoretic properties established in \Sec{decl_basics} still hold
for the extended system.  The proofs carry over with minor adjustments
to deal with the additional premises.
Next, we prove a variant of the functionality lemma discussed above
but using the extended kinding and subtyping rules.
To do so, we need the following auxiliary definition of \emph{context
  equality} and an associated corollary of context weakening.
\begin{center}
  \infruleSimp{\quad}{\ctxEqD{\cempty}{\cempty}}
  \hspace{3em}
  \infruleSimp
  {\ctxEqD{\Gamma}{\Delta} \andalso \Gamma \ts J \keq K}
  {\ctxEqD{\Gamma, X \kas J}{\Delta, X \kas K}}
  \hspace{3em}
  \infruleSimp
  {\ctxEqD{\Gamma}{\Delta} \andalso \Gamma \ts A \teq B \kin \kstar}
  {\ctxEqD{\Gamma, x \tas A}{\Delta, x \tas B}}
\end{center}
\begin{corollary}[context conversion]\label{cor:decl_ctx_conv}
  If $\; \ctxD{\Gamma}$ and $\ctxEqD{\Gamma}{\Delta}$ and $\judgD{\Delta}$,
  then $\judgD{\Gamma}$.
\end{corollary}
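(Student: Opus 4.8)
The plan is to prove \Cor{decl_ctx_conv} by induction on the derivation of the context equality $\ctxEqD{\Gamma}{\Delta}$, after first strengthening the statement to range over a common suffix: for every context $\Sigma$ (with $\dom(\Sigma)$ disjoint from $\dom(\Gamma)$), if $\ctxD{\Gamma, \Sigma}$ and $\judgD{\Delta, \Sigma}$, then $\judgD{\Gamma, \Sigma}$. The corollary is the instance $\Sigma = \cempty$. This generalization is what makes the induction go through, since the rules defining $\ctxEqD{-}{-}$ peel bindings off the \emph{right} whereas the conversion we must perform acts on the \emph{prefix}; carrying the already-processed tail along in $\Sigma$ sidesteps this mismatch.

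The base case $\Gamma = \Delta = \cempty$ is immediate, as $\Gamma, \Sigma = \Sigma = \Delta, \Sigma$. For the inductive step, say $\Gamma = \Gamma', X \kas J$ and $\Delta = \Delta', X \kas K$ with $\ctxEqD{\Gamma'}{\Delta'}$ and $\Gamma' \ts J \keq K$ (the case of a term binding $x \tas A$ versus $x \tas B$ is strictly analogous, using the term-variable versions of the rules). From $\ctxD{\Gamma', X \kas J, \Sigma}$ we extract $\ctxD{\Gamma'}$ and $\kindD{\Gamma'}{J}$, and from $\judgD{\Delta', X \kas K, \Sigma}$ together with context validity (\Lem{decl_ctx_valid}) we extract $\kindD{\Delta'}{K}$. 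Invoking the induction hypothesis for $\ctxEqD{\Gamma'}{\Delta'}$ with the \emph{empty} suffix transports $\kindD{\Delta'}{K}$ to $\kindD{\Gamma'}{K}$, whence $\ctxD{\Gamma', X \kas K}$ by \ruleref{C-TmBind}. Since $\Gamma' \ts K \keq J$ by symmetry of kind equality and hence $\Gamma' \ts K \ksub J$ by the admissible rule \ruleref{SK-Refl-KEq}, context narrowing (\Lem{decl_weak_narrow}), applied to $\ctxD{\Gamma', X \kas J, \Sigma}$ taken as a context-formation judgment, upgrades this to $\ctxD{\Gamma', X \kas K, \Sigma}$. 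Now the induction hypothesis for $\ctxEqD{\Gamma'}{\Delta'}$ with suffix $(X \kas K, \Sigma)$ applies to $\judgD{\Delta', X \kas K, \Sigma}$ and yields $\judgD{\Gamma', X \kas K, \Sigma}$. Finally, $\Gamma' \ts J \ksub K$ (again by \ruleref{SK-Refl-KEq}) together with $\kindD{\Gamma'}{J}$ lets us apply context narrowing once more to replace the annotation $K$ by $J$, giving $\judgD{\Gamma', X \kas J, \Sigma}$. The term-binding case proceeds identically via \ruleref{C-TpBind}, \ruleref{ST-Refl-TEq}, symmetry of type equality, and the term-binding instance of \Lem{decl_weak_narrow}.

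The one real subtlety lies in this step: to invoke the induction hypothesis on $\judgD{\Delta', X \kas K, \Sigma}$, we first need the ``mixed'' context $\Gamma', X \kas K, \Sigma$ — whose tail annotations come from $\Delta$ — to be well-formed. We get this in two moves: transporting well-formedness of the single annotation $K$ from $\Delta'$ to $\Gamma'$ using the induction hypothesis itself (instantiated at the context- and kind-formation judgments), and then extending over $\Sigma$ by context narrowing on the context-formation judgment. It is essential that the whole argument stays within the basic metatheory of \Sec{decl_basics} and the order-theoretic rules of \Sec{decl_order}, and in particular does \emph{not} use the validity lemma (\Lem{decl_validity_full}): validity is proved only afterwards, with the help of this very corollary, so invoking it here would be circular.
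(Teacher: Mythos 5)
Your proof is correct and is essentially the argument the paper intends: the paper states \Cor{decl_ctx_conv} without an explicit proof, as a consequence of the basic metatheory, and your suffix-generalized induction on the derivation of $\ctxEqD{\Gamma}{\Delta}$, replacing bindings pointwise with the weak narrowing lemma (\Lem{decl_weak_narrow}) via \ruleref{SK-Refl-KEq}/\ruleref{ST-Refl-TEq} and symmetry of the equalities, is exactly that argument made explicit. Your restriction to \Lem{decl_ctx_valid}, weakening, narrowing and the admissible order-theoretic rules correctly avoids the circularity with validity, and applying narrowing to the context-formation judgment in the mixed-context step is legitimate, since the paper treats context formation as one of the generic declarative judgments covered by those lemmas.
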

Context equality $\ctxEqD{\Gamma}{\Delta}$ is simply the pointwise
lifting of type and kind equality to contexts.
It allows us to relate the bindings appearing in two syntactically
different contexts $\Gamma$ and $\Delta$, as illustrated in our
extended functionality lemma.
\begin{lemma}[functionality -- extended version]
  \label{lem:decl_funct_ext}
  Substitutions of equal types in well-formed expressions result in
  well-formed equations.  Let $\Gamma$, $\Delta$, $\Sigma$ be
  contexts, $K$ a kind and $A_1$, $A_2$ types, such that
  $\Gamma \ts A_1 \kin K$ and $\Gamma \ts A_2 \kin K$, the context
  $\Gamma, \Sigma$ is well-formed and the following equations hold:
  \begin{align*}
    &\Gamma \ts A_1 \; \teq \; A_2 \kin K
    &&\ctxD{\Gamma, \Sigma \; \keq \; \Gamma, \subst{\Delta}{X}{A_1}}
    &&\ctxD{\Gamma, \Sigma \; \keq \; \Gamma, \subst{\Delta}{X}{A_2}}.
  \end{align*}
  \begin{enumerate}
  \item If $\; \kindD{\Gamma, X \kas K, \Delta}{J}$, then
    $\Gamma, \Sigma \ts \subst{J}{X}{A_1} \keq \subst{J}{X}{A_2}$.
  \item If $\; \Gamma, X \kas K, \Delta \ts B \kin J$, then
    $\Gamma, \Sigma \ts \subst{B}{X}{A_1} \keq \subst{B}{X}{A_2} \kin
    \subst{J}{X}{A_1}$.
  \end{enumerate}
\end{lemma}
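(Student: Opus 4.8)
The plan is to prove the two parts simultaneously by structural induction on the derivation of the second hypothesis --- $\kindD{\Gamma, X \kas K, \Delta}{J}$ in part~(1) and $\Gamma, X \kas K, \Delta \ts B \kin J$ in part~(2) --- carried out throughout in the \emph{extended} system, where these rules carry their gray validity-condition premises. Since kind formation and kinding are mutually inductive but do not involve typing, one simultaneous induction suffices; the subtyping and subkinding premises occurring in rules such as \ruleref{K-Sub} are not recursed on but discharged with tools already available at this point: the substitution lemma (\Lem{decl_subst}), context conversion (\Cor{decl_ctx_conv}), the commuting-substitutions lemma (\Lem{sub_commutes}), and the admissible order-theoretic and congruence rules of \SupSec{decl_order}. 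Two features of the statement are what make the induction go through where a naive attempt fails. The auxiliary context $\Sigma$ and the two context-equality side conditions are threaded through: whenever a rule descends under a binder $Y \kas J_1$, I first apply the induction hypothesis (IH) to the derivation of $J_1$ to obtain $\Gamma, \Sigma \ts \subst{J_1}{X}{A_1} \keq \subst{J_1}{X}{A_2}$, then extend $\Sigma$ to $\Sigma, Y \kas \subst{J_1}{X}{A_1}$, re-establish well-formedness of the enlarged context (via \Lem{decl_subst} and \Cor{decl_ctx_conv}) and the two context equalities (the fresh domain equation handles the $A_2$-side), and recurse. And the hypotheses are deliberately symmetric in $A_1$ and $A_2$ --- both $\Gamma \ts A_1 \kin K$ and $\Gamma \ts A_2 \kin K$ and both context equalities are \emph{assumed} rather than reconstructed from validity, which at this point would be circular --- so the IH can be applied both as stated and with the roles of $A_1$ and $A_2$ exchanged.

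The base and non-binder cases are routine. For \ruleref{K-Var} in part~(2) with subject $Y$: if $Y = X$ then $X \notin \fv(K)$ by well-scopedness, so the goal is a weakening of $\Gamma \ts A_1 \teq A_2 \kin K$ to $\Gamma, \Sigma$; if $Y \neq X$ it reduces by reflexivity of type equality to kinding $Y$ in $\Gamma, \subst{\Delta}{X}{A_1}$ followed by context conversion. The constant cases are immediate by reflexivity; \ruleref{Wf-Intv}, \ruleref{K-Arr}, \ruleref{K-All} and \ruleref{Wf-DArr} follow by applying the IH to each premise and reassembling with the admissible congruence rules for $\intv$, $\to$, $\forall$ and dependent arrows, using \ruleref{ST-AntiSym} and \ruleref{SK-AntiSym} to turn the resulting pair of subtyping/subkinding facts back into an equation; \ruleref{K-Sing} additionally appeals to \ruleref{ST-Intv} and \ruleref{TEq-Conv}; and \ruleref{K-Sub} composes the IH with the substitution of its subkinding premise and the admissible kind-subsumption rule for equations.

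The only delicate cases are \ruleref{K-Abs} and \ruleref{K-App}, and these are where the extended system pays off: in both, the target equation must be produced in a kind of the form $\subst{J}{X}{A_1}$ whose codomain is itself dependent, so a \emph{kind} equation on that codomain is required, and the gray validity-condition premises supply exactly that as genuine sub-derivations. For \ruleref{K-Abs}, applying part~(1) of the IH to the extra premise $\kindD{\Gamma, X \kas K, \Delta, Y \kas J_1}{J_2}$ yields $\Gamma, \Sigma, Y \kas \subst{J_1}{X}{A_1} \ts \subst{J_2}{X}{A_1} \keq \subst{J_2}{X}{A_2}$; combined with the IH on the domain and body, \ruleref{K-Sub} and \ruleref{SK-DArr} to line up kinds, and \ruleref{ST-Abs} with \ruleref{ST-AntiSym}, the case closes. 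I expect \ruleref{K-App} to be the main obstacle. The trouble is that \ruleref{ST-App} substitutes the \emph{left} argument into its conclusion kind, so the reverse inequality $\subst{(\app{A_0}{B_0})}{X}{A_2} \tsub \subst{(\app{A_0}{B_0})}{X}{A_1}$ naively comes out in the $A_2$-substituted codomain kind; I resolve this by using the $A_1 \leftrightarrow A_2$ symmetry of the statement to obtain the head- and argument-equations directly in the $A_2$-substituted kinds (the swapped IH), by using \Lem{sub_commutes} to identify that kind with $\subst{(\subst{K_0}{Y}{B_0})}{X}{A_2}$, and by applying part~(1) of the IH to the other gray premise $\kindD{\Gamma, X \kas K, \Delta}{\subst{K_0}{Y}{B_0}}$ to bridge the remaining gap via \ruleref{ST-Conv}, before closing with \ruleref{ST-AntiSym}. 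The real work throughout is the kind bookkeeping --- tracking which kind every (in)equation lives in and re-indexing along the codomain equations now available from the gray premises --- and it is precisely these extra premises and the symmetry of the statement that keep the argument non-circular, so that it can in turn serve as the basis for the proof of validity (\Lem{decl_validity_full}) of the extended system.
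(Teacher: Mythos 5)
Your proof is correct and follows the paper's overall strategy: a simultaneous induction on the extended kind-formation and kinding derivations, threading the auxiliary context $\Sigma$ and the two assumed context equalities through binder cases, handling \ruleref{K-Var} by the $Y=X$ split with weakening, and leaning on the gray validity conditions exactly where the paper does. Where you genuinely diverge is the reverse inequation in the \ruleref{K-App} case. The paper derives both directions with all kinds substituted by $A_1$, which forces it to kind $\subst{B_2}{X}{A_2}$ against $\subst{J_1}{X}{A_1}$ and hence to apply the IH to a derivation of $\kindD{\Gamma, X \kas K, \Delta}{J_1}$ that is not among the premises; it bridges this with an informal ``lookup procedure'' through the derivation tree of the validity condition $\kindD{\Gamma, X \kas K, \Delta, Y \kas J_1}{J_2}$ (and suggests a dedicated helper lemma to make it precise). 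You instead exploit the symmetry of the hypotheses, run the reverse direction entirely in $A_2$-substituted kinds via the swapped IH, discharge the \ruleref{ST-App} validity conditions directly from the substitution lemma (\Lem{decl_subst}), context conversion and context validity (\Lem{decl_ctx_valid}), and only re-index at the end by \ruleref{ST-Conv} using part~(1) of the IH on the genuine gray premise $\kindD{\Gamma, X \kas K, \Delta}{\subst{J_2}{Y}{B_2}}$ together with \Lem{sub_commutes}. Since that premise is an honest sub-derivation, your variant needs no lookup procedure at all, which is a small but real simplification (and would likely be friendlier to a termination checker); the paper's route keeps a single kind index throughout the case at the cost of that extra machinery. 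Both arguments are sound, and the rest of your case analysis (\ruleref{K-Abs} via \ruleref{SK-DArr}, \ruleref{K-Sub}, \ruleref{ST-Abs} and antisymmetry) matches the paper's treatment of the binder cases.
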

Compared to \Lem{decl_funct_full}, the lemma has been strengthened
--~so that it is applicable to any type variable binding in a context,
not just the last one~-- and simultaneously weakened -- by adding
extra conditions on $A_1$, $A_2$ and the target context
$\Gamma, \Sigma$.
The latter are effectively validity conditions ensuring that the proof
of the lemma does not depend on \Lem{decl_validity_full}.
The separate target context~$\Sigma$ is used to symmetrize the
treatment of context extensions, which is helpful when dealing with
kind annotations in contravariant positions.
\begin{proof}
  The two parts are proven simultaneously, by induction on extended
  kind formation and kinding derivations, respectively.  The proof of
  the first part is relatively straightforward, while the proof of
  the second part deserves some attention.  We present a few key
  cases, the others are similar.
  \begin{itemize}
  \item \emph{Case~\ruleref{K-Var}.} We have $B = Y$, $J = \Gamma(Y)$
    and $\ctxD{\Gamma, X \kas K, \Delta}$.  We distinguish two cases
    based on $Y$.
    \begin{itemize}
    \item \emph{Sub-case $Y = X$.}  We have
      $J = K \alEq \subst{K}{X}{A_1}$ since $X \notin \fv(K)$.  By
      iterated weakening, we get
      $\Gamma, \Sigma \ts A_1 \teq A_2 \kin \subst{K}{X}{A_1}$ and we
      are done.
    \item \emph{Sub-case $Y \neq X$.}  We have
      $(\Gamma, \subst{\Delta}{X}{A_1})(Y) \alEq \subst{J}{X}{A_1}$,
      either because $Y \in \dom(\Gamma)$ and $X \notin \fv(J)$, or
      because $Y \in \dom(\Delta)$ and
      $(\subst{\Delta}{X}{A_1})(Y) = \subst{J}{X}{A_1}$.  Furthermore,
      since $\ctxEqD{\Gamma, \Sigma}{\Gamma, \subst{\Delta}{X}{A_1}}$
      we have
      \[
        \Gamma, \Sigma \; \ts \; (\Gamma, \Sigma)(Y) \; \keq
        \; (\Gamma, \subst{\Delta}{X}{A_1})(Y) \; \alEq \;
        \subst{J}{X}{A_1}.
      \]
      We conclude by~\ruleref{K-Var}, \ruleref{K-Conv},
      and~\ruleref{TEq-Refl}.
    \end{itemize}

  \item \emph{Case~\ruleref{K-All}.} We have $B = \all{Y}{J_1}{B_1}$
    and $J = \kstar$ for some kind $J_1$ and type $B_1$, as well as
    $\kindD{\Gamma, X \kas K, \Delta}{J_1}$ and
    $\typeD{\Gamma, X \kas K, \Delta, Y \kas J_1}{B_1}$.  We want to
    show that $\subst{(\all{Y}{J_1}{B_1})}{X}{A_1}$ and
    $\subst{(\all{Y}{J_1}{B_1})}{X}{A_2}$ are mutual subtypes.  To do
    so, we first prove that
    \begin{alignat*}{2}
      &\typeD{\Gamma, \Sigma \;}{\; \subst{(\all{Y}{J_1}{B_1})}{X}{A_1} \;}
      && \quad
      \typeD{\Gamma, \Sigma \;}{\; \subst{(\all{Y}{J_1}{B_1})}{X}{A_2} \;}\\
      &\Gamma, \Sigma \; \ts \; \subst{J_1}{X}{A_2} \; \ksub \;
        \subst{J_1}{X}{A_1} && \quad
      \Gamma, \Sigma \; \ts \; \subst{J_1}{X}{A_1} \; \ksub \;
        \subst{J_1}{X}{A_2} \\
      &\Gamma, \Sigma, X \kas \subst{J_1}{X}{A_2} \; \ts \;
        \subst{B_1}{X}{A_1} \; \tsub \; \subst{B_1}{&&X}{A_2} \; \kin \;
        \kstar \\
      &\Gamma, \Sigma, X \kas \subst{J_1}{X}{A_1} \; \ts \;
        \subst{B_1}{X}{A_2} \; \tsub \; \subst{B_1}{&&X}{A_1} \; \kin \;
        \kstar
    \end{alignat*}
    then apply~\ruleref{ST-All} twice, and conclude
    with~\ruleref{ST-AntiSym}.  The two kinding judgments follow from
    the premises by the substitution lemma~(\Lem{decl_subst}), the two
    subkinding judgments by the IH.  The last two subtyping judgments
    require some extra work.

    Note that the additional type variable bindings in the two
    judgments differ syntactically, so we will have to use the IH
    twice, with different target contexts.  In each case we need to
    show that the kind of the additional binding is well-formed and
    equal to $\subst{J_1}{X}{A_1}$ and $\subst{J_1}{X}{A_2}$,
    respectively.  Concretely, we need to show that
    \begin{gather*}
      \kindD{\Gamma, \Sigma}{ \subst{J_1}{X}{A_2} \keq \subst{J_1}{X}{A_1}}
      \hspace{4em} \kindD{\Gamma, \Sigma}{\subst{J_1}{X}{A_2}} \\
      \kindD{\Gamma, \Sigma}{ \subst{J_1}{X}{A_2} \keq \subst{J_1}{X}{A_2}}
    \end{gather*}
    for the first invocation of the IH, and three analogous statements
    for the second.  The first equation follows from the two
    subkinding judgments above via~\ruleref{ST-AntiSym}.  The context
    formation judgment and the first equation follow from the
    substitution lemma and~\ruleref{TEq-Refl}.  This is sufficient to
    apply the IH and obtain the first of the two remaining subtyping
    judgments via~\ruleref{ST-Refl-TEq}.  The proof of the second one
    is similar.

  \item \emph{Case~\ruleref{K-App}.} We have $B = \app{B_1}{B_2}$
    and $J = \subst{J_2}{Y}{B_2}$ for some $B_1$, $B_2$, $J_2$, as
    well as
    \begin{align*}
      &\Gamma \ts A_1 \kin K, && \Gamma \ts A_2 \kin K \\
      &\Gamma \ts A_1 \teq A_2 \kin K, && \ctxD{\Gamma, \Sigma} \\
      &\ctxD{\Gamma, \Sigma \keq \Gamma, \subst{\Delta}{X}{A_1}}, &
      &\ctxD{\Gamma, \Sigma \keq \Gamma, \subst{\Delta}{X}{A_2}}, \\
      &\Gamma, X \kas K, \Delta \ts B_1 \kin \dfun{Y}{J_1}{J_2}, &
      &\Gamma, X \kas K, \Delta \ts B_2 \kin J_1, \\
      &\kindD{\Gamma, X \kas K, \Delta, Y \kas J_1}{J_2}, &
      &\kindD{\Gamma, X \kas K, \Delta}{\subst{J_2}{Y}{B_2}}
    \end{align*}
    for some $J_1$.  We want to establish that that
    $\subst{(\app{B_1}{B_2})}{X}{A_1}$ and
    $\subst{(\app{B_1}{B_2})}{X}{A_2}$ are mutual subtypes in
    $\subst{\subst{J_2}{Y}{B_2}}{X}{A_1}$, \ie that
    \begin{align}
      &\Gamma, \Sigma \; \ts \; \subst{(\app{B_1}{B_2})}{X}{A_1} \; \tsub \;
      \subst{(\app{B_1}{B_2})}{X}{A_2} \; \kin \;
      \subst{\subst{J_2}{Y}{B_2}}{X}{A_1}, \quad \text{and}
        \label{eq:funct1}\\
      & \Gamma, \Sigma \; \ts \; \subst{(\app{B_1}{B_2})}{X}{A_2}
      \; \tsub \; \subst{(\app{B_1}{B_2})}{X}{A_1} \; \kin \;
      \subst{\subst{J_2}{Y}{B_2}}{X}{A_1}.\label{eq:funct2}
    \end{align}

    The first half is fairly straightforward.  Applying the IH to the
    first two premises of~\ruleref{K-App} yields
    corresponding equations, the first of which we turn into an
    inequation via~\ruleref{ST-Refl-TEq}.
    \begin{align*}
      &\Gamma, \Sigma \; \ts \; \subst{B_1}{X}{A_1} \; \tsub \;
        \subst{B_1}{X}{A_2} \; \kin \; \subst{(\dfun{Y}{J_1}{J_2})}{X}{A_1},\\
      &\Gamma, \Sigma \; \ts \; \subst{B_2}{X}{A_1} \; \teq \;
        \subst{B_2}{X}{A_2} \; \kin \; \subst{J_1}{X}{A_1}.
    \end{align*}

    In order to apply~\ruleref{ST-App} we also need to derive the
    following validity conditions:
    \begin{align*}
      & \Gamma, \Sigma \ts \subst{B_2}{X}{A_1} \kin \subst{J_1}{X}{A_1}, &&
      \kindD{\Gamma, \Sigma, Y \kas \subst{J_1}{X}{A_1}}{
        \subst{J_2}{X}{A_1}}, \\
      & \kindD{\Gamma, \Sigma}{
        \subst{\subst{J_2}{X}{A_1}}{Y}{\subst{B_2}{X}{A_1}}}.
    \end{align*}
    All three follow from premises of~\ruleref{K-App} and the
    substitution lemma~(\Lem{decl_subst}), followed by a use
    of~\Cor{decl_ctx_conv} to adjust the contexts.  Adjusting the
    context of
    $\kindD{\Gamma, \subst{\Delta}{X}{A_1}, Y \kas
      \subst{J_1}{X}{A_1}}{ \subst{J_2}{X}{A_1}}$,
    requires a bit more work because we need to prove that the kind of
    the extra binding $Y \kas \subst{J_1}{X}{A_1}$ is well-formed.  To
    do so, we first invoke context validity~(\Lem{decl_ctx_valid}) on
    the second validity condition of~\ruleref{K-App}, which gives
    us $\kindD{\Gamma, X \kas K, \Delta}{J_1}$.  Form this, we derive
    the desired well-formedness proof via the substitution lemma.
    By~\ruleref{ST-App} and~\Lem{sub_commutes}, we arrive
    at~\eqref{eq:funct1}.

    We have to work a bit harder to prove~\eqref{eq:funct2}.
    Again, we want to apply~\ruleref{ST-App}, and again, the first
    two premises follow from the IH --~this time followed by a use
    of~\ruleref{TEq-Sym} to adjust the direction~--
    and~\ruleref{ST-Refl-TEq} to turn the first equation into a
    subtyping statement.  The validity conditions are
    \begin{align*}
      & \Gamma, \Sigma \ts \subst{B_2}{X}{A_2} \kin \subst{J_1}{X}{A_1}, &&
      \kindD{\Gamma, \Sigma, Y \kas \subst{J_1}{X}{A_1}}{
        \subst{J_2}{X}{A_1}}, \\
      & \kindD{\Gamma, \Sigma}{
        \subst{\subst{J_2}{X}{A_1}}{Y}{\subst{B_2}{X}{A_2}}}.
    \end{align*}
    We have already established the second condition; the third one
    follows from applying the substitution lemma to the first two.  So
    it remains to prove the first.

    We start by deriving
    $\Gamma, \Sigma \ts \subst{B_2}{X}{A_2} \kin \subst{J_1}{X}{A_2}$
    via the substitution lemma and context conversion.  Next, we would
    like to use the IH to show that
    $\Gamma, \Sigma \ts \subst{J_1}{X}{A_2} \keq \subst{J_1}{X}{A_1}$
    in order to adjust the kind of the previous judgment
    via~\ruleref{K-Conv}.  But to do so, we need to find a derivation
    of $\kindD{\Gamma, X \kas K, \Delta}{J_1}$ that is a strict
    sub-derivation of our current instance of \ruleref{K-App}.

    Fortunately, this is always possible, thanks to the validity
    condition $\kindD{\Gamma, X \kas K, \Delta, Y \kas J_1}{J_2}$.
    Since the contexts of kind formation judgments are always
    well-formed themselves (see~\Lem{decl_ctx_valid}), it suffices to
    traverse the derivation tree of this judgment upwards along kind
    formation and kinding rules until one arrives at a ``leaf'' --~an
    instance of~\ruleref{K-Var}, \ruleref{K-Top} or~\ruleref{K-Bot}~--
    which holds a well-formedness derivation for the current context.
    That context formation derivation, in turn, contains a
    sub-derivation of the desired kind formation judgment.  Readers
    who are skeptical of this somewhat informal argument are
    encouraged to state and prove a helper lemma that combines the IH
    with the ``lookup procedure'' just described.  The lemma is proven
    simultaneously with the main lemma, by induction on kind formation
    and kinding derivations.

    With all the validity conditions in place, we
    apply~\ruleref{ST-App} to obtain
    \[
      \Gamma, \Sigma \; \ts \; \subst{(\app{B_1}{B_2})}{X}{A_2} \;
      \tsub \; \subst{(\app{B_1}{B_2})}{X}{A_1} \; \kin \;
      \subst{\subst{J_2}{X}{A_1}}{Y}{\subst{B_2}{X}{A_2}}.
    \]
    To complete the proof of~\ref{eq:funct2}, we
    use~\ruleref{ST-Conv} and
    \begin{align*}
      \Gamma, \Sigma &{} \; \ts \;
      \subst{\subst{J_2}{X}{A_1}}{Y}{\subst{B_2}{X}{A_2}}\\
      &{} \; \keq \;
      \subst{\subst{J_2}{X}{A_2}}{Y}{\subst{B_2}{X}{A_2}}
      \tag{by~\Lem{decl_subst}}\\
      &{} \; \alEq \;
      \subst{\subst{J_2}{Y}{B_2}}{X}{A_2} \tag{by~\Lem{sub_commutes}}\\
      &{} \; \keq \;
      \subst{\subst{J_2}{Y}{B_2}}{X}{A_1} \tag{by the IH}
    \end{align*}
    using our earlier result
    $\Gamma, \Sigma \ts \subst{J_1}{X}{A_2} \keq \subst{J_1}{X}{A_1}$
    in the first step and the final validity condition
    of~\ruleref{K-App} in the last.  \qedhere

\end{itemize}
\end{proof}

We are now ready to prove~\Lem{decl_validity_full} in the extended
system, simultaneously with the following lemma.
\begin{lemma}\label{lem:ST-Intv-Star}
  Subtypes inhabiting interval kinds are proper subtypes.  If
  $\; \Gamma \ts A \tsub B \kin C \intv D$, then also
  $\Gamma \ts A \tsub B \kin \kstar$.
\end{lemma}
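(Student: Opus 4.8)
The plan is to fold \Lem{ST-Intv-Star} into the same simultaneous induction that establishes \Lem{decl_validity_full} (and functionality), and to prove it by case analysis on the last rule of the given derivation $\mathcal{D} :: \Gamma \ts A \tsub B \kin C \intv D$. The guiding observation is that the statement follows uniformly once we know the index kind is a \emph{well-formed} interval, i.e.\ $\kindD{\Gamma}{C \intv D}$: inverting the sole interval-formation rule \ruleref{Wf-Intv} yields $\Gamma \ts C \kin \kstar$ and $\Gamma \ts D \kin \kstar$; feeding these to \ruleref{ST-Bot} and \ruleref{ST-Top} gives $\Gamma \ts \Bot \tsub C \kin \kstar$ and $\Gamma \ts D \tsub \Top \kin \kstar$ (exactly as in the sample derivation showing that interval inhabitants are proper types); \ruleref{SK-Intv} then delivers $\Gamma \ts C \intv D \ksub \Bot \intv \Top$, i.e.\ $\Gamma \ts C \intv D \ksub \kstar$; and \ruleref{ST-Sub} transports $\mathcal{D}$ to the desired $\Gamma \ts A \tsub B \kin \kstar$. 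So the real task in each case is merely to supply $\kindD{\Gamma}{C \intv D}$.

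Carrying this out: for every rule whose conclusion is syntactically indexed at $\kstar = \Bot \intv \Top$ --~namely \ruleref{ST-Top}, \ruleref{ST-Bot}, \ruleref{ST-Arr}, \ruleref{ST-All} and the bound-projection rules \rulerefN{ST-Bnd1}{ST-Bnd$_1$} and \rulerefN{ST-Bnd2}{ST-Bnd$_2$}~-- there is nothing to prove, since $\mathcal{D}$ already has the required form. The rules \rulerefN{ST-Eta1}{ST-$\eta_1$}, \rulerefN{ST-Eta2}{ST-$\eta_2$} and \ruleref{ST-Abs} conclude at an arrow kind, so those cases do not arise. For \ruleref{ST-Trans}, \ruleref{ST-Intv} and \ruleref{ST-Sub} the claim follows by applying the induction hypothesis for \Lem{ST-Intv-Star} to the subtyping premise(s): \ruleref{ST-Trans} relates its premises in the same interval kind, \ruleref{ST-Intv} relates them in another interval, and for \ruleref{ST-Sub} we additionally use that subkinding preserves the interval shape (only \ruleref{SK-Intv} can derive $\Gamma \ts J \ksub C \intv D$), so the premise is again indexed at an interval. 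For \ruleref{ST-Refl} the premise $\Gamma \ts A \kin C \intv D$ is a strict sub-derivation, so kinding validity --~available as an induction hypothesis of the combined induction~-- supplies $\kindD{\Gamma}{C \intv D}$. Finally, for \ruleref{ST-App}, \rulerefN{ST-Beta1}{ST-$\beta_1$} and \rulerefN{ST-Beta2}{ST-$\beta_2$}, the needed $\kindD{\Gamma}{C \intv D}$ is literally one of the gray validity conditions of the rule ($\kind{\Gamma}{\subst{K}{X}{B_1}}$, respectively $\kind{\Gamma}{\subst{K}{X}{B}}$), so it is immediately available.

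The only delicate point --~and the reason this cannot be a one-line appeal to validity~-- is exactly that \Lem{ST-Intv-Star} lives inside the simultaneous induction with \Lem{decl_validity_full}: one must not obtain $\kindD{\Gamma}{C \intv D}$ by applying subtyping validity to $\mathcal{D}$ itself, only to a strict sub-derivation or to a validity condition of the extended system. This is, in fact, part of why those redundant validity conditions were attached to \ruleref{ST-App} and the $\beta$-subtyping rules in the first place. Once the case analysis is organized so that every case draws $\kindD{\Gamma}{C \intv D}$ from a legitimate source, each case closes with the short, uniform derivation from \ruleref{ST-Bot}, \ruleref{ST-Top}, \ruleref{SK-Intv} and \ruleref{ST-Sub} described above.
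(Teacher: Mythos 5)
Your proof is correct, and it sits where the paper's does: \Lem{ST-Intv-Star} is proven for the extended system, inside the simultaneous induction with the validity properties, and your closing derivation --- \ruleref{ST-Bot} and \ruleref{ST-Top} on the bounds, \ruleref{SK-Intv} to obtain $\Gamma \ts C \intv D \ksub \kstar$, then \ruleref{ST-Sub} --- is the subtyping analogue of the derivation the paper displays to show that interval inhabitants are proper types. The difference is in how the lemma is discharged. The paper performs no case analysis on the last rule: it obtains the result directly by applying subtyping validity to the given derivation (``the proof of \Lem{ST-Intv-Star} uses subtyping validity in turn''). Contrary to your worry, that appeal is sound: validity of a derivation only ever invokes \Lem{ST-Intv-Star} on a strict sub-derivation (in its \ruleref{ST-Intv} case), so the mutual recursion stays well-founded when \Lem{ST-Intv-Star} for a derivation is computed from the validity result of that same derivation. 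You instead reconstruct $\kindD{\Gamma}{C \intv D}$ rule by rule --- from kinding validity on the premise of \ruleref{ST-Refl}, from the induction hypothesis in the \ruleref{ST-Trans}, \ruleref{ST-Intv} and \ruleref{ST-Sub} cases (using that only \ruleref{SK-Intv} can conclude with an interval on the right), and from the gray validity conditions in the \ruleref{ST-App} and \rulerefN{ST-Beta1}{ST-$\beta_{1,2}$} cases, with the remaining rules either already indexed at $\kstar$ or impossible at interval kinds. That costs a full case analysis over all subtyping rules but buys an argument whose well-foundedness is purely structural and which pinpoints exactly where the extended system's extra premises are needed; the paper's route is shorter at the price of a (sound but) slightly more delicate mutual-recursion scheme. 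Both versions transfer to the original declarative system via the equivalence of the two presentations.
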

\begin{proof}[Proof of \Lem{decl_validity_full} and
  \Lem{ST-Intv-Star} -- extended version]\label{pf:decl_validity_ext}
  All the validity properties are proven simultaneously with
  \Lem{ST-Intv-Star}, by induction on the derivations of the
  respective premises.  The proof is now mostly routine, thanks to the
  validity conditions.  The only interesting cases are those of
  of~\ruleref{ST-App}, where we use the functionality lemma to adjust
  the kind of the right-hand validity proof, and~\ruleref{ST-Intv},
  where we use~\Lem{ST-Intv-Star}.  The proof of \Lem{ST-Intv-Star}
  uses subtyping validity in turn.
\end{proof}

\begin{corollary}\label{lem:TEq-Intv-Star}
  Equal types in intervals are equal as proper types.  If
  $\; \Gamma \ts A \teq B \kin C \intv D$, then also
  $\Gamma \ts A \teq B \kin \kstar$.
\end{corollary}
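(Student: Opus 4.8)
The plan is to derive this directly from \Lem{ST-Intv-Star} together with the observation that the type equality judgment is generated by a single rule. Recall from the discussion of \Fig{decl_rules2} that $\Gamma \ts A \teq B \kin K$ can only be obtained via \ruleref{ST-AntiSym}, whose premises are $\Gamma \ts A \tsub B \kin K$ and $\Gamma \ts B \tsub A \kin K$. Hence the proof needs no induction at all: it is a one-step inversion followed by a one-step reconstruction.

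Concretely, I would proceed as follows. Given a derivation of $\Gamma \ts A \teq B \kin C \intv D$, invert \ruleref{ST-AntiSym} to obtain sub-derivations of $\Gamma \ts A \tsub B \kin C \intv D$ and $\Gamma \ts B \tsub A \kin C \intv D$. Apply \Lem{ST-Intv-Star} to each of these, yielding $\Gamma \ts A \tsub B \kin \kstar$ and $\Gamma \ts B \tsub A \kin \kstar$. Then re-apply \ruleref{ST-AntiSym} to conclude $\Gamma \ts A \teq B \kin \kstar$.

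Since \Lem{ST-Intv-Star} has already been established (simultaneously with the extended validity lemma, as per \Pf[Proof]{decl_validity_ext}), there is essentially no obstacle here; the only point worth flagging is the appeal to the fact that \ruleref{ST-AntiSym} is the unique generating rule for type equality, so that the inversion step is valid. This is immediate from the rule system in \Fig{decl_rules2}. As with the other results of \SupSec{decl_validity_full}, the statement is first proven in the extended system and then transported to the original declarative system via the equivalence of the two systems (\SupSec{decl_equivalence}), but this transport is routine and shared with the surrounding lemmas.
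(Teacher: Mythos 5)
Your proposal is correct and matches the paper's intended argument: the paper states this as an immediate corollary of \Lem{ST-Intv-Star}, and the proof is exactly the one-step inversion of \ruleref{ST-AntiSym} (the unique generating rule for type equality), followed by applying \Lem{ST-Intv-Star} to both subtyping directions and reassembling with \ruleref{ST-AntiSym}. The remark about transporting between the extended and original systems is harmless but not essential, since the argument is identical in both.
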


\subsubsection{Equivalence}
\label{sec:decl_equivalence}

The next and final step in our program for
proving~\Lem{decl_validity_full} is to establish the equivalence of
the two declarative systems.
\begin{lemma}\label{decl_equiv}
  The original and extended declarative systems are equivalent:
  $\judgDD{\Gamma}$ iff\, $\judgEE{\Gamma}$.
\end{lemma}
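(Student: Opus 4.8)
The plan is to prove the two inclusions separately, each by a single simultaneous induction over the derivations of all the judgment forms of the declarative system (context and kind formation, kinding, typing, subkinding, subtyping, and kind and type equality).

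\emph{Soundness} (if $\judgEE{\Gamma}$ then $\judgDD{\Gamma}$) is the easy direction. Every rule of the extended system is obtained from the corresponding original rule by adjoining the gray validity conditions as extra premises, all remaining rules are literally shared between the two systems, and the validity conditions never occur in the conclusion of a rule. Hence, given an extended derivation, I would induct on its structure, apply the induction hypothesis to each premise to obtain an original derivation of it, discard the derivations of the validity conditions, and re-apply the same rule in its original form; the resulting original derivation has the same conclusion.

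\emph{Completeness} (if $\judgDD{\Gamma}$ then $\judgEE{\Gamma}$) again proceeds by simultaneous induction on the original derivation. For any rule shared by the two systems one simply applies the induction hypothesis to the premises and re-applies the rule. For one of the five extended rules -- \ruleref{K-Abs}, \ruleref{K-App}, \rulerefN{ST-Beta1}{ST-$\beta_1$}, \rulerefN{ST-Beta2}{ST-$\beta_2$} and \ruleref{ST-App} -- one applies the induction hypothesis to the original premises to obtain \emph{extended} derivations of each, then reconstructs the missing validity conditions from these and re-applies the extended rule. The reconstruction is where \Lem{decl_validity_full}, already established for the extended system in the previous subsection, does the work: e.g.\ in \ruleref{K-Abs} the condition $\kindD{\Gamma, X \kas J}{K}$ is exactly kinding validity applied to the extended derivation of $\Gamma, X \kas J \ts A \kin K$; in \ruleref{K-App} the condition $\kind{\Gamma, X \kas J}{K}$ comes from kinding validity on the first premise and $\kind{\Gamma}{\subst{K}{X}{B}}$ then follows by the substitution lemma (\Lem{decl_subst}) applied to the extended derivation of $\Gamma \ts B \kin J$. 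The two $\beta$-rules and \ruleref{ST-App} are handled the same way, additionally using type equation validity and, where a substitution must be pushed inside a kind, \Lem{sub_commutes}. Assembling the original premises together with the freshly derived validity conditions, the extended rule applies and yields the required derivation.

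Combining the two inclusions gives the equivalence. The point that needs care is the ordering of the metatheory rather than any deep argument: completeness invokes validity \emph{for the extended system}, so that lemma -- together with the extended functionality lemma \Lem{decl_funct_ext} on which it rests -- must already be in hand, which it is, both having been proved entirely within the extended system without appeal to validity of the original system, so there is no circularity. I expect the only real obstacle to be the routine but unavoidable case analysis confirming, rule by rule, that each validity condition of each of the five extended rules is derivable from that rule's remaining (original) premises via \Lem{decl_validity_full}, \Lem{decl_subst} and occasionally \Lem{sub_commutes} -- tedious bookkeeping with no conceptual surprises.
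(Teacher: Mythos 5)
Your proposal is correct and follows essentially the same route as the paper: soundness by discarding the gray validity conditions, and completeness by reconstructing them from the remaining premises of the five extended rules (\ruleref{K-Abs}, \ruleref{K-App}, \rulerefN{ST-Beta1}{ST-$\beta_1$}, \rulerefN{ST-Beta2}{ST-$\beta_2$}, \ruleref{ST-App}) using \Lem{decl_validity_full} for the extended system together with the (extended) substitution lemma, exactly as the paper's strategy prescribes. Your remark on the ordering of the metatheory -- that extended-system validity is established beforehand and independently, so no circularity arises -- is also the point the paper relies on.
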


Thanks to this equivalence, all the validity properties laid out
in~\Lem{decl_validity_full} also hold for the original judgments of
the declarative system.
Our original functionality lemma (\Lem{decl_funct_full}) and the
following strengthened version of \Lem{decl_weak_narrow} follow as
corollaries of validity and \Lem[Lemmas]{decl_funct_ext} and
\ref{lem:decl_weak_narrow}, respectively.
\begin{corollary}[context narrowing -- strong version]
  \label{cor:decl_narrow}
  ~
  \begin{enumerate}
  \item If $\; \Gamma \ts A \tsub B \kin \kstar$
    and $\judgD{\Gamma, x \tas B, \Delta}$, then
    $\judgD{\Gamma, x \tas A, \Delta}$.
  \item If $\; \Gamma \ts J \ksub K$ and
    $\judgD{\Gamma, X \kas K, \Delta}$, then
    $\judgD{\Gamma, X \kas J, \Delta}$,
  \end{enumerate}
\end{corollary}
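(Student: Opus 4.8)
The plan is to obtain the strong narrowing corollary as an immediate consequence of the weak narrowing lemma (\Lem{decl_weak_narrow}), by discharging that lemma's extra well-formedness hypotheses using the validity properties of \Lem{decl_validity_full}. Recall that the only difference between the two statements is that \Lem{decl_weak_narrow} additionally assumes $\typeD{\Gamma}{A}$ in its first part and $\kindD{\Gamma}{J}$ in its second; since validity is now available for the original declarative system (transported from the extended system via Lemma~\ref{decl_equiv}), both of these side conditions are redundant, and the corollary drops them.

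For part~1, I would start from the hypothesis $\Gamma \ts A \tsub B \kin \kstar$ and apply subtyping validity (the relevant clause of \Lem{decl_validity_full}) to obtain $\Gamma \ts A \kin \kstar$, which is exactly $\typeD{\Gamma}{A}$. All three premises of \Lem{decl_weak_narrow}(1) are then in hand, and its conclusion $\judgD{\Gamma, x \tas A, \Delta}$ follows directly. For part~2, symmetrically, I would apply subkinding validity to $\Gamma \ts J \ksub K$ to obtain $\kindD{\Gamma}{J}$ (subkinding validity also yields $\kindD{\Gamma}{K}$, which is not needed here), and then invoke \Lem{decl_weak_narrow}(2) to conclude $\judgD{\Gamma, X \kas J, \Delta}$.

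There is essentially no obstacle inside the corollary itself -- it is a one-line argument once validity is in place. The genuine difficulty lies entirely upstream, in the chain of results this corollary depends on: establishing \Lem{decl_validity_full} for the original system requires first proving it for the extended system, which in turn rests on the extended functionality lemma \Lem{decl_funct_ext}, and it is there that the circular dependency between functionality and validity -- the one the gray validity conditions are engineered to break -- must actually be resolved. The only point of care in the present proof is to make sure that the validity statement being applied is the one already transported to the original declarative judgments, so that the output is again a judgment of the original declarative system rather than of the extended one.
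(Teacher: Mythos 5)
Your proposal is correct and matches the paper's own argument: the paper derives this corollary exactly as you do, by using subtyping/subkinding validity (\Lem{decl_validity_full}, transported to the original system via the equivalence of the two presentations) to discharge the extra well-formedness premises $\typeD{\Gamma}{A}$ and $\kindD{\Gamma}{J}$ of the weak narrowing lemma (\Lem{decl_weak_narrow}). Your remark that the real work lies upstream in breaking the validity/functionality circularity is also consistent with the paper's development.
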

 
\subsection{Admissible Congruence Rules for Type and Kind Equality}
\label{sec:decl_cong}

Thanks to the validity properties established in
\Sec{decl_validity_full},
we are able to prove a number of admissible \emph{congruence rules}
for kind and type equality.
These follow the same structure as the corresponding subkinding and
subtyping rules but are generally a bit simpler.
First, we no longer need to pay attention to the variance (or
polarity) of constructor arguments because equality is symmetric.
Second, the left-hand validity conditions present in the
rules~\ruleref{SK-DArr} and~\ruleref{ST-All} become redundant in the
corresponding equality rules because the kind annotations in the left-
and right-hand sides are convertible.
Finally, thanks to symmetry, only one rule is needed for
$\beta$-conversion, and likewise for $\eta$-conversion.
\begin{lemma}\label{lem:keq_cong}
  Kind equality is a congruence with respect to the interval and
  dependent arrow kind formers, \ie the following kind equality
  rules are admissible.
\end{lemma}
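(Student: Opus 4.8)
The plan is to reduce both congruence rules to the order-theoretic rules for subkinding already in hand. Recall that kind equality is generated by the single rule \ruleref{SK-AntiSym}: to show a congruence rule for $\keq$ is admissible it suffices to construct subkinding derivations in both directions and then apply \ruleref{SK-AntiSym}. Hence the interval-congruence rule reduces to two instances of the monotonicity rule \ruleref{SK-Intv}, and the dependent-arrow-congruence rule to two instances of \ruleref{SK-DArr}. The bridge from the hypotheses, which are \emph{equations}, to these subkinding rules, which ask for \emph{inequations}, is provided by \ruleref{ST-Refl-TEq} and \ruleref{SK-Refl-KEq} together with symmetry of type and kind equality (\ruleref{TEq-Sym}, \ruleref{KEq-Sym}), all established earlier in \Sec{decl_order}. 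Since these are only the two kind formers of \FOmegaInt{}, the two rules together express that kind equality is a congruence.

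For the interval former, given $\Gamma \ts A_1 \teq A_2 \kin \kstar$ and $\Gamma \ts B_1 \teq B_2 \kin \kstar$, symmetry and \ruleref{ST-Refl-TEq} yield all four of $\Gamma \ts A_1 \tsub A_2 \kin \kstar$, $\Gamma \ts A_2 \tsub A_1 \kin \kstar$, $\Gamma \ts B_1 \tsub B_2 \kin \kstar$ and $\Gamma \ts B_2 \tsub B_1 \kin \kstar$. Two applications of \ruleref{SK-Intv}, minding the contravariance of the lower bound, give $\Gamma \ts A_1 \intv B_1 \ksub A_2 \intv B_2$ and $\Gamma \ts A_2 \intv B_2 \ksub A_1 \intv B_1$, and \ruleref{SK-AntiSym} concludes. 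This case involves no context manipulation and is routine.

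The dependent-arrow case is the one that requires care. Given $\Gamma \ts J_1 \keq J_2$ and $\Gamma, X \kas J_1 \ts K_1 \keq K_2$, I again aim at two instances of \ruleref{SK-DArr}, one for each direction. From $\Gamma \ts J_1 \keq J_2$, symmetry and \ruleref{SK-Refl-KEq} give $\Gamma \ts J_1 \ksub J_2$ and $\Gamma \ts J_2 \ksub J_1$. For the direction $\Gamma \ts \dfun{X}{J_2}{K_2} \ksub \dfun{X}{J_1}{K_1}$, the codomain premise $\Gamma, X \kas J_1 \ts K_2 \ksub K_1$ follows directly from symmetry of the given codomain equation and \ruleref{SK-Refl-KEq}, with no narrowing needed since \ruleref{SK-DArr} compares codomains under the \emph{second} arrow's annotation. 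For the direction $\Gamma \ts \dfun{X}{J_1}{K_1} \ksub \dfun{X}{J_2}{K_2}$, I first use strong context narrowing (\Cor{decl_narrow}) with $\Gamma \ts J_2 \ksub J_1$ to transport the codomain equation to $\Gamma, X \kas J_2 \ts K_1 \keq K_2$, then apply \ruleref{SK-Refl-KEq}. The left-hand validity premises $\kindD{\Gamma}{\dfun{X}{J_1}{K_1}}$ and $\kindD{\Gamma}{\dfun{X}{J_2}{K_2}}$ of \ruleref{SK-DArr} are assembled via \ruleref{Wf-DArr} from the well-formedness facts obtained by applying validity (\Lem{decl_validity}) to the two hypotheses, narrowing $\kindD{\Gamma, X \kas J_1}{K_2}$ to $\kindD{\Gamma, X \kas J_2}{K_2}$ where required. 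A final use of \ruleref{SK-AntiSym} produces the kind equation.

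The main obstacle is bookkeeping in the arrow case: tracking which of the subkinding premises of the two \ruleref{SK-DArr} instances must be obtained after a narrowing step and which hold directly, and checking that every side condition on contexts can be discharged from validity and \Cor{decl_narrow}. Since both \Lem{decl_validity} and \Cor{decl_narrow} are already available at this point in the development, no new circular dependency arises, and the argument is a finite — if slightly tedious — piece of rule-chasing.
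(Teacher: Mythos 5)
Your proof is correct and follows essentially the same route as the paper: establish the two subkinding directions via \ruleref{SK-Intv} and \ruleref{SK-DArr}, using the admissible order-theoretic rules, strong context narrowing (\Cor{decl_narrow}) and validity (\Lem{decl_validity}) to discharge the narrowed codomain judgment and the well-formedness premises, and conclude by \ruleref{SK-AntiSym}. The bookkeeping you spell out for the dependent-arrow case (which premise needs narrowing and which holds directly) matches the paper's proof sketch.
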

\begin{multicols}{2}
\infrule[\ruledef{KEq-Intv}]
  {\Gamma \ts A_1 \teq A_2 \kin \kstar  \andalso
   \Gamma \ts B_1 \teq B_2 \kin \kstar}
  {\Gamma \ts A_1 \intv B_1 \keq A_2 \intv B_2}

\infrule[\ruledef{KEq-DArr}]
  {\Gamma \ts J_1 \keq J_2  \andalso
   \Gamma, X \kas J_1 \ts K_1 \keq K_2}
  {\Gamma \ts \dfun{X}{J_1}{K_1} \keq \dfun{X}{J_2}{K_2}}
\end{multicols}

\begin{lemma}\label{lem:teq_cong}
  Type equality is a congruence with respect to the various type
  formers and includes $\beta$~and $\eta$-conversion, \ie the
  following type equality rules are admissible.
\end{lemma}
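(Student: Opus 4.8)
The plan is to exploit that, by rule \ruleref{ST-AntiSym}, every type equation $\Gamma \ts A \teq B \kin K$ is interderivable with the pair of inequations $\Gamma \ts A \tsub B \kin K$ and $\Gamma \ts B \tsub A \kin K$. Each admissible congruence rule for type equality is therefore obtained by unfolding its equality premises into subtyping premises, invoking the corresponding \emph{already primitive} subtyping congruence or $\beta\eta$ rule twice (once in each orientation), and re-closing with \ruleref{ST-AntiSym}. The glue is supplied by validity (\Lem{decl_validity}), strong context narrowing (\Cor{decl_narrow}), functionality (\Lem{decl_funct}), and the order-theoretic and conversion rules established in \Sec{decl_order}, notably \ruleref{ST-Refl-TEq}, \ruleref{SK-Refl-KEq}, \ruleref{TEq-Sym} and \ruleref{ST-Conv}. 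Kind-equality premises are dispatched analogously via \Lem{keq_cong}, using that $\Gamma \ts J \keq K$ (i.e.\ an instance of \ruleref{SK-AntiSym}) yields both $\Gamma \ts J \ksub K$ and $\Gamma \ts K \ksub J$.

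For the arrow former the derivation is immediate: unfold $\Gamma \ts A_1 \teq A_2 \kin \kstar$ and $\Gamma \ts B_1 \teq B_2 \kin \kstar$, apply \ruleref{ST-Arr} in both orientations (the contravariant domain causes no trouble, since we have both directions of the domain inequation for free), and recombine. The $\beta$- and $\eta$-congruences are equally direct, reading off \rulerefN{ST-Beta1}{ST-$\beta_1$} and \rulerefN{ST-Beta2}{ST-$\beta_2$}, respectively \rulerefN{ST-Eta1}{ST-$\eta_1$} and \rulerefN{ST-Eta2}{ST-$\eta_2$}, and closing with \ruleref{ST-AntiSym}. For the universal former, from $\Gamma \ts K_1 \keq K_2$ I would first extract $\Gamma \ts K_1 \ksub K_2$ and $\Gamma \ts K_2 \ksub K_1$, then use \Cor{decl_narrow} to transport the body inequation $\Gamma, X \kas K_1 \ts A_1 \tsub A_2 \kin \kstar$ to the context $\Gamma, X \kas K_2$. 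The left-hand validity premise $\typeD{\Gamma}{\all{X}{K_1}{A_1}}$ of \ruleref{ST-All} follows from kinding validity on the body equation together with \ruleref{K-All}, and symmetrically for $\all{X}{K_2}{A_2}$; two applications of \ruleref{ST-All} followed by \ruleref{ST-AntiSym} then close the case. The operator-abstraction congruence is handled in the same way, replacing \ruleref{ST-All} by \ruleref{ST-Abs} and discharging its two ``common arrow kind'' premises via validity and \ruleref{K-Conv}.

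The one case calling for genuine care is application, because the kind of an application depends on which argument gets substituted. Given $\Gamma \ts A_1 \teq A_2 \kin \dfun{X}{J}{K}$ and $\Gamma \ts B_1 \teq B_2 \kin J$, I would apply \ruleref{ST-App} with $\Gamma \ts A_1 \tsub A_2 \kin \dfun{X}{J}{K}$ and $\Gamma \ts B_1 \teq B_2 \kin J$ to get $\Gamma \ts \app{A_1}{B_1} \tsub \app{A_2}{B_2} \kin \subst{K}{X}{B_1}$, and apply it again with $\Gamma \ts A_2 \tsub A_1 \kin \dfun{X}{J}{K}$ and $\Gamma \ts B_2 \teq B_1 \kin J$ (the latter from \ruleref{TEq-Sym}) to get $\Gamma \ts \app{A_2}{B_2} \tsub \app{A_1}{B_1} \kin \subst{K}{X}{B_2}$. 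These two inequations sit in the syntactically distinct but provably equal kinds $\subst{K}{X}{B_1}$ and $\subst{K}{X}{B_2}$: inverting \ruleref{Wf-DArr} on the kind supplied by kinding validity for the first equation gives $\kindD{\Gamma, X \kas J}{K}$, whence functionality (\Lem{decl_funct}) yields $\Gamma \ts \subst{K}{X}{B_1} \keq \subst{K}{X}{B_2}$, and \ruleref{ST-Conv} realigns the second inequation into $\subst{K}{X}{B_1}$ before the final \ruleref{ST-AntiSym}. I expect precisely this dependency-induced kind mismatch --~and the resulting appeal to functionality~-- to be the main obstacle; every remaining case is a routine repackaging of the corresponding primitive subtyping rule using metatheory already established.
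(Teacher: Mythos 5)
Your proposal is correct and follows essentially the same route as the paper's proof: unfold each equation into its two subtyping directions, apply the corresponding primitive subtyping rule in each orientation, and re-close with \ruleref{ST-AntiSym}, using validity, context narrowing, conversion/subsumption and functionality (precisely for the $\subst{K}{X}{B_1}$ vs.\ $\subst{K}{X}{B_2}$ mismatch in \ruleref{TEq-App}) as glue. The only difference is emphasis: the paper's single worked example is \ruleref{TEq-Sing}, whose kind adjustment from $A_1 \intv A_2$ to $A_1 \intv A_1$ goes through \Lem{ST-Intv-Star}, subtyping validity, \ruleref{SK-Intv} and \ruleref{ST-Sub}, but that step is within the toolkit you announce, so no gap.
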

\begin{multicols}{2}
\infrule[\ruledef{TEq-All}]
  {\Gamma \ts K_1 \keq K_2  \andalso
   \Gamma, X \kas K_1 \ts A_1 \teq A_2 \kin \kstar}
  {\Gamma \ts \all{X}{K_1}{A_1} \teq \all{X}{K_2}{A_2} \kin \kstar}

\infrule[\ruledef{TEq-Arr}]
  {\Gamma \ts A_1 \teq A_2 \kin \kstar  \andalso
   \Gamma \ts B_1 \teq B_2 \kin \kstar}
  {\Gamma \ts \fun{A_1}{B_1} \teq \fun{A_2}{B_2} \kin \kstar}

\infrule[\ruledef{TEq-Abs}]
  {\Gamma \ts \lam{X}{J_1}{A_1} \kin \dfun{X}{J}{K}\\
   \Gamma \ts \lam{X}{J_2}{A_2} \kin \dfun{X}{J}{K}\\
   \Gamma, X \kas J \ts A_1 \teq A_2 \kin K}
  {\Gamma \ts \lam{X}{J_1}{A_1} \teq \lam{X}{J_2}{A_2} \kin \dfun{X}{J}{K}}
\end{multicols}
\begin{multicols}{2}
\infrule[\ruledef{TEq-App}]
  {\Gamma \ts A_1 \teq A_2 \kin \dfun{X}{J}{K}  \andalso
   \Gamma \ts B_1 \teq B_2 \kin J}
  {\Gamma \ts \app{A_1}{B_1} \teq \app{A_2}{B_2} \kin \subst{K}{X}{B_1}}

\infrule[\ruledef{TEq-Sing}]
  {\Gamma \ts A_1 \teq A_2 \kin B \intv C}
  {\Gamma \ts A_1 \teq A_2 \kin A_1 \intv A_1}
\end{multicols}
\begin{multicols}{2}
\infrule[\ruledefN{TEq-Beta}{TEq-$\beta$}]
  {\Gamma, X \kas J \ts A \kin K  \andalso  \Gamma \ts B \kin J}
  {\Gamma \ts \app{(\lam{X}{J}{A})}{B} \teq \subst{A}{X}{B} \kin
    \subst{K}{X}{B}}

\infrule[\ruledefN{TEq-Eta}{TEq-$\eta$}]
  {\Gamma \ts A \kin \dfun{X}{J}{K}  \andalso  X \notin \fv(A)}
  {\Gamma \ts \lam{X}{J}{\app{A}{X}} \teq A \kin \dfun{X}{J}{K}}
\end{multicols}
\begin{proof}
  The admissibility proofs of the above rules all follow the same
  basic pattern.
We want to show that the left- and right-hand sides of the
  conclusions are mutual subkinds or subtypes, respectively.
To do so, we employ the respective subkinding and subtyping rules,
  adjusting the kinds of additional bindings and subtyping judgments
  using context narrowing (\Cor{decl_narrow}),
  subsumption~\ruleref{ST-Sub}, conversion~\ruleref{ST-Conv} and
  functionality~(\Lem{decl_funct}) where necessary.
When additional validity properties are required,
  \Lem{decl_validity} delivers the required well-formedness or
  well-kindedness proofs.

  For example, the proof of \ruleref{TEq-Sing} proceeds as follows.
We note that the premise must have been derived using
  \ruleref{ST-AntiSym}, hence we have
$\Gamma \ts A_1 \tsub A_2 \kin B \intv C$ and
  $\Gamma \ts A_2 \tsub A_1 \kin B \intv C$.
By~\ruleref{ST-Intv}, \Lem{ST-Intv-Star} and subtyping validity
  \begin{align*}
    &\Gamma \ts A_1 \tsub A_2 \kin A_1 \intv A_2 \quad \text{(1a)} &
    &\Gamma \ts A_1 \tsub A_2 \kin \kstar \quad \text{(1b)} &
    &\Gamma \ts A_1 \kin \kstar \quad \text{(1c)} \\
    &\Gamma \ts A_2 \tsub A_1 \kin A_2 \intv A_1 \quad \text{(2a)} &
    &\Gamma \ts A_2 \tsub A_1 \kin \kstar \quad \text{(2b)} &
    &\Gamma \ts A_2 \kin \kstar \quad \text{(2c)}
  \end{align*}
  From (1a), (1c) and (2b) we derive
\begin{prooftree}
    \AxiomC{$\Gamma \ts A_1 \tsub A_2 \kin A_1 \intv A_2$}
      \AxiomC{$\Gamma \ts A_1 \kin \kstar$}
      \LeftLabel{\rulerefP{ST-Refl}}
      \UnaryInfC{$\Gamma \ts A_1 \tsub A_1 \kin \kstar$}
        \AxiomC{$\Gamma \ts A_2 \tsub A_1 \kin \kstar$}
      \RightLabel{\rulerefP{SK-Intv}}
      \BinaryInfC{$\Gamma \ts A_1 \intv A_2 \ksub A_1 \intv A_1$}
    \RightLabel{\rulerefP{ST-Sub}}
    \BinaryInfC{$\Gamma \ts A_1 \tsub A_2 \kin A_1 \intv A_1$}
  \end{prooftree}
and similarly $\Gamma \ts A_2 \tsub A_1 \kin A_1 \intv A_1$ from
  (2a), (2b) and (1c).
We conclude with \ruleref{ST-AntiSym}.
\end{proof}

Note that the rule~\ruleref{TEq-Sing} (and by transitivity, the
rule~\ruleref{ST-Intv}) plays an important role in the proof of
subject reduction for types (\Thm{beta_red_teq}).
It is used in the case for \ruleref{K-Sing}, where it allows us to
relate $\beta$-equal types inhabiting singleton kinds.

\subsection{Admissible Rules for Higher-Order Extrema and Intervals}
\label{sec:decl_horules}

In this section, we state and prove admissible rules that justify the
encodings of higher-order extremal types and interval kinds given
in~\Sec{encodings} of the paper (page~\pageref{sec:encodings}).
Many of these rules are straightforward generalizations of the
corresponding rules for the types $\Top$, $\Bot$ and for proper type
intervals $A \intv B$.
The remaining rules and lemmas mostly deal with the family of kinds
$\kmax{K}$, which plays a crucial role in the other encodings and the
proofs of their respective properties.

We start by stating and proving a formation rule for
$\kmax{K}$.
\begin{lemma} The kind $\kmax{K}$ is well-formed whenever $K$ is,
  \ie the following is admissible.
\end{lemma}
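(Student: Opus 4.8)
The plan is a routine structural induction on the derivation of $\kindD{\Gamma}{K}$, following the two-case recursive definition of $\kmax{K}$ from \Fig{encodings}. Since kind formation is generated by exactly two rules, \ruleref{Wf-Intv} and \ruleref{Wf-DArr}, there are only two cases to consider, and each mirrors the corresponding clause in the definition of $\kmax{\cdot}$.

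First I would treat the base case, where $K = A \intv B$ and $\kmax{K} = \Bot \intv \Top = \kstar$. The derivation must end in \ruleref{Wf-Intv}, so we have $\typeD{\Gamma}{A}$; applying context validity (\Lem{decl_ctx_valid}) to this premise yields $\ctxD{\Gamma}$. From $\ctxD{\Gamma}$ we obtain $\typeD{\Gamma}{\Bot}$ by \ruleref{K-Bot} and $\typeD{\Gamma}{\Top}$ by \ruleref{K-Top}, and then \ruleref{Wf-Intv} gives $\kindD{\Gamma}{\Bot \intv \Top}$, i.e., $\kindD{\Gamma}{\kmax{K}}$.

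For the inductive case, $K = \dfun{X}{J}{K'}$ and $\kmax{K} = \dfun{X}{J}{\kmax{K'}}$. The derivation ends in \ruleref{Wf-DArr}, so we have $\kindD{\Gamma}{J}$ and $\kindD{\Gamma, X \kas J}{K'}$. Applying the induction hypothesis to the second premise gives $\kindD{\Gamma, X \kas J}{\kmax{K'}}$, and then \ruleref{Wf-DArr} (with the unchanged domain $J$) delivers $\kindD{\Gamma}{\dfun{X}{J}{\kmax{K'}}} = \kindD{\Gamma}{\kmax{K}}$.

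There is no real obstacle here: the argument needs only the already-established context validity lemma and the two kind-formation rules, and the recursion on $\kmax{\cdot}$ shrinks the kind in lockstep with the induction. The only minor point worth noting is that in the base case one must route through context validity to recover $\ctxD{\Gamma}$ before invoking \ruleref{K-Bot} and \ruleref{K-Top}; everything else is immediate. This lemma (and the analogous formation statements for $\tmax{K}$, $\tmin{K}$ and $A \hointv{K} B$) then serves as a building block for the admissible rules justifying the higher-order encodings discussed in \Sec{encodings}.
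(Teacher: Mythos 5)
Your proof is correct and matches the paper's own argument: the paper also proceeds by induction on (the structure of) $K$, handling the base case with \ruleref{K-Bot}, \ruleref{K-Top} and \ruleref{Wf-Intv} and the arrow case by the induction hypothesis and \ruleref{Wf-DArr}. Your explicit detour through context validity to obtain $\ctxD{\Gamma}$ in the base case is a detail the paper leaves implicit, but it is exactly the right justification.
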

\begin{center}
  \infruleSimp[Wf-KMax]{\kindD{\Gamma}{K}}{\kindD{\Gamma}{\kmax{K}}}
\end{center}
\begin{proof}
  By induction on the structure of $K$.
The base case uses~\ruleref{K-Bot}, \ruleref{K-Top}
  and~\ruleref{Wf-Intv} to show that $\kstar$ is well-formed.
\end{proof}

The kind $\kmax{K}$ is a widened version of $K$, \ie the latter is
always a subkind of the former.  As a consequence, any type of kind
$K$ is also of kind $\kmax{K}$.
\begin{lemma}\label{lem:SK-KMax}
  Any well-formed kind $K$ is a subkind of $\kmax{K}$.
\end{lemma}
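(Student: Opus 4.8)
The plan is to prove the lemma by induction on the structure of $K$, directly mirroring the two clauses that define $\kmax{-}$ in \Fig{encodings}. In the base case $K = A \intv B$, we have $\kmax{K} = \Bot \intv \Top = \kstar$. Inverting \ruleref{Wf-Intv} on the hypothesis $\kindD{\Gamma}{A \intv B}$ gives $\Gamma \ts A \kin \kstar$ and $\Gamma \ts B \kin \kstar$; since $\kstar$ is itself an interval kind, feeding these to \ruleref{ST-Bot} and \ruleref{ST-Top} yields $\Gamma \ts \Bot \tsub A \kin \kstar$ and $\Gamma \ts B \tsub \Top \kin \kstar$, and \ruleref{SK-Intv} then delivers $\Gamma \ts A \intv B \ksub \Bot \intv \Top$.

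In the inductive case $K = \dfun{X}{J}{K'}$, we have $\kmax{K} = \dfun{X}{J}{\kmax{K'}}$, with the domain $J$ left unchanged. Inverting \ruleref{Wf-DArr} gives $\kindD{\Gamma}{J}$ and $\kindD{\Gamma, X \kas J}{K'}$. I would then apply \ruleref{SK-DArr} with $J_1 = J_2 = J$: its left-hand validity premise $\kindD{\Gamma}{\dfun{X}{J}{K'}}$ is the hypothesis itself, its premise $\Gamma \ts J \ksub J$ is admissible reflexivity of subkinding (\ruleref{KS-Refl}, \Sec{decl_order}) applied to $\kindD{\Gamma}{J}$, and its premise $\Gamma, X \kas J \ts K' \ksub \kmax{K'}$ is exactly the induction hypothesis on $\kindD{\Gamma, X \kas J}{K'}$. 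This produces $\Gamma \ts \dfun{X}{J}{K'} \ksub \dfun{X}{J}{\kmax{K'}}$, completing the induction.

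I expect no real obstacle here: the argument is a routine structural induction. The only two points worth flagging are that $\kmax{-}$ preserves arrow domains, so \ruleref{SK-DArr} can be instantiated with equal domains and its left-hand validity condition is discharged directly by the well-formedness hypothesis; and that the premises of \ruleref{ST-Bot} and \ruleref{ST-Top} ask only that their subject inhabit \emph{some} interval kind, which $\kstar$ is --- precisely the relaxed form of those premises discussed in \Sec{declarative}.
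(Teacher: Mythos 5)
Your proof is correct and follows the same route as the paper's: a straightforward structural induction on $K$, using \ruleref{ST-Bot}, \ruleref{ST-Top} and \ruleref{SK-Intv} in the interval case and \ruleref{SK-DArr} with reflexivity of subkinding in the arrow case. The paper states the proof only as ``by straightforward induction on the structure of $K$,'' and your write-up fills in exactly those details without introducing any gap.
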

\begin{center}
  \infruleSimp[SK-KMax]
  {\kindD{\Gamma}{K}}
  {\Gamma \ts K \ksub \kmax{K}}
\end{center}
\begin{proof}
  By straightforward induction on the structure of $K$.
\end{proof}
\begin{corollary}\label{cor:K-KMax}
  If $\; \Gamma \ts A \kin K$, then also $\Gamma \ts A \kin \kmax{K}$.
\end{corollary}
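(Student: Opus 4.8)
The plan is to obtain this immediately from the subkinding fact already established in \Lem{SK-KMax} together with the kind-level subsumption rule \ruleref{K-Sub}. First I would apply kinding validity (\Lem{decl_validity}) to the hypothesis $\Gamma \ts A \kin K$ in order to extract $\kindD{\Gamma}{K}$; this is the only ingredient that is not purely syntactic bookkeeping, and it is already in hand. Then \Lem{SK-KMax} yields $\Gamma \ts K \ksub \kmax{K}$, and a single application of \ruleref{K-Sub} to $\Gamma \ts A \kin K$ and $\Gamma \ts K \ksub \kmax{K}$ gives $\Gamma \ts A \kin \kmax{K}$, as required.

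There is no genuine obstacle in the corollary itself: all the work has been front-loaded into \Lem{SK-KMax}, whose proof is the expected induction on the structure of $K$ --- the interval base case uses \ruleref{SK-Intv} fed by \ruleref{ST-Bot} and \ruleref{ST-Top} to widen $A \intv B$ to $\kstar = \Bot \intv \Top$, and the dependent-arrow case uses \ruleref{SK-DArr} together with the inductive hypothesis on the codomain --- and into the already-available kinding validity lemma. So the corollary reduces to a two-step chain of rule applications, and the only thing to be careful about is invoking validity before \ruleref{K-Sub}, since \ruleref{K-Sub} and \Lem{SK-KMax} both presuppose well-formedness of the kind $K$.
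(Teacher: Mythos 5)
Your proof is correct and matches the paper's intended derivation: \Cor{K-KMax} is stated as an immediate consequence of \Lem{SK-KMax}, obtained exactly by extracting $\kindD{\Gamma}{K}$ via kinding validity, applying \Lem{SK-KMax} to get $\Gamma \ts K \ksub \kmax{K}$, and closing with \ruleref{K-Sub}. The only nitpick is that \ruleref{K-Sub} itself has no well-formedness premise (only \Lem{SK-KMax} needs $\kindD{\Gamma}{K}$), but this does not affect the argument.
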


The following two lemmas introduce admissible kinding rules for the
higher-order extremal types, and prove that $\tmin{K}$ and $\tmax{K}$
are in fact extrema in~$\kmax{K}$, \ie they are the least and
greatest inhabitants of $\kmax{K}$, respectively.
\begin{lemma} Higher-order extremal types are well-formed if their
  index kind is.
\end{lemma}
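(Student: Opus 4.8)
The plan is to establish the two admissible rules — that $\kindD{\Gamma}{K}$ entails both $\Gamma \ts \tmax{K} \kin \kmax{K}$ and $\Gamma \ts \tmin{K} \kin \kmax{K}$ — by structural induction on $K$, quantifying over the context $\Gamma$. The definitions of $\tmax{K}$, $\tmin{K}$ and $\kmax{K}$ from~\Sec{encodings} all recurse on the same syntactic shape of $K$, so the case split on $K$ — which, by context validity, we may equally read off the last rule (\ruleref{Wf-Intv} or \ruleref{Wf-DArr}) of a derivation of $\kindD{\Gamma}{K}$ — matches these recursions exactly. The two statements are independent, so it suffices to describe the argument for $\tmax{}$; the one for $\tmin{}$ is identical after replacing $\Top$ by $\Bot$ and \ruleref{K-Top} by \ruleref{K-Bot}.

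For the base case $K = A \intv B$, we have $\tmax{K} \alEq \Top$ and $\kmax{K} \alEq \kstar$, so we are done by~\ruleref{K-Top}, whose sole premise $\ctxD{\Gamma}$ follows from $\kindD{\Gamma}{A \intv B}$ via context validity (\Lem{decl_ctx_valid}). For the step case $K = \dfun{X}{J}{K'}$, inverting \ruleref{Wf-DArr} gives $\kindD{\Gamma}{J}$ and $\kindD{\Gamma, X \kas J}{K'}$; the induction hypothesis applied to the latter yields $\Gamma, X \kas J \ts \tmax{K'} \kin \kmax{K'}$. We then obtain $\Gamma \ts \lam{X}{J}{\tmax{K'}} \kin \dfun{X}{J}{\kmax{K'}}$, that is, $\Gamma \ts \tmax{K} \kin \kmax{K}$, by~\ruleref{K-Abs}: its first premise is $\kindD{\Gamma}{J}$, its second is the induction hypothesis, and its gray validity premise $\kindD{\Gamma, X \kas J}{\kmax{K'}}$ is obtained by applying the already-proved \ruleref{Wf-KMax} to $\kindD{\Gamma, X \kas J}{K'}$.

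I expect no real obstacle here: the argument is a short, routine induction, and the one point requiring any care is discharging the validity-condition premise of~\ruleref{K-Abs}, which is precisely the purpose of~\ruleref{Wf-KMax}. This (together with the analogous reliance on~\ruleref{SK-KMax} in the subsequent extremality lemma) is exactly why~\ruleref{Wf-KMax} and~\ruleref{SK-KMax} were proved first.
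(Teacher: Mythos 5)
Your proof is correct and follows essentially the same route as the paper: two separate structural inductions on $K$, with the base case discharged by \ruleref{K-Top} (resp.\ \ruleref{K-Bot}) plus context validity, and the dependent-arrow case by \ruleref{K-Abs} together with \ruleref{Wf-KMax} — exactly the ingredient the paper's proof singles out for that case.
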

\begin{center}
  \infruleSimp[K-TMax]{\kindD{\Gamma}{K}}{\Gamma \ts \tmax{K} \kin \kmax{K}}
  \hspace{3em}
  \infruleSimp[K-TMin]{\kindD{\Gamma}{K}}{\Gamma \ts \tmin{K} \kin \kmax{K}}
\end{center}
\begin{proof}
  Separately, by induction on the structure of $K$.  The cases for
  dependent arrow kinds use \ruleref{Wf-KMax}.
\end{proof}

\begin{lemma} The types $\tmax{K}$ and $\tmin{K}$ are the maximal and
  minimal elements of $\kmax{K}$, respectively.
\end{lemma}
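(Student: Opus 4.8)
The plan is to prove the two admissible rules at once, by induction on the structure of the kind~$K$: from $\Gamma \ts A \kin \kmax{K}$ we want to derive $\Gamma \ts \tmin{K} \tsub A \tsub \tmax{K} \kin \kmax{K}$, where $K$ is a well-formed kind as assumed throughout this section. Proceeding by induction on~$K$ is natural since $\kmax{-}$, $\tmax{-}$ and $\tmin{-}$ all recurse on~$K$ in lockstep. In the base case $K = B \intv C$ we have $\kmax{K} = \kstar$, $\tmax{K} = \Top$ and $\tmin{K} = \Bot$, and the premise is $\Gamma \ts A \kin \Bot \intv \Top$; since \ruleref{ST-Top} and \ruleref{ST-Bot} take a premise of exactly the form $\Gamma \ts A \kin B' \intv C'$, they give $\Gamma \ts A \tsub \Top \kin \kstar$ and $\Gamma \ts \Bot \tsub A \kin \kstar$ immediately.

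For the inductive step $K = \dfun{X}{J}{K'}$, so $\kmax{K} = \dfun{X}{J}{\kmax{K'}}$ and $\tmax{K} = \lam{X}{J}{\tmax{K'}}$. First I would $\eta$-expand~$A$: choosing $X \notin \fv(A)$, the admissible rule \ruleref{TEq-Eta} yields $\Gamma \ts A \teq \lam{X}{J}{\app{A}{X}} \kin \dfun{X}{J}{\kmax{K'}}$, so in particular both sides are well-kinded at $\dfun{X}{J}{\kmax{K'}}$. Inverting \ruleref{Wf-DArr} on $\kindD{\Gamma}{K}$ supplies $\kindD{\Gamma}{J}$ and $\kindD{\Gamma, X \kas J}{K'}$, and then weakening together with \ruleref{K-Var} and \ruleref{K-App} gives $\Gamma, X \kas J \ts \app{A}{X} \kin \kmax{K'}$ (using $\subst{\kmax{K'}}{X}{X} \alEq \kmax{K'}$). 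Applying the induction hypothesis to~$K'$ yields $\Gamma, X \kas J \ts \app{A}{X} \tsub \tmax{K'} \kin \kmax{K'}$. Since \ruleref{K-TMax} gives $\Gamma \ts \lam{X}{J}{\tmax{K'}} \kin \dfun{X}{J}{\kmax{K'}}$, and both abstractions carry the domain annotation~$J$ and inhabit the common arrow kind $\dfun{X}{J}{\kmax{K'}}$, rule \ruleref{ST-Abs} applies and gives $\Gamma \ts \lam{X}{J}{\app{A}{X}} \tsub \lam{X}{J}{\tmax{K'}} \kin \dfun{X}{J}{\kmax{K'}}$; prepending the $\eta$-equation via \ruleref{ST-Refl-TEq} and \ruleref{ST-Trans} then gives $\Gamma \ts A \tsub \tmax{K} \kin \kmax{K}$. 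The case of $\tmin{K}$ is completely symmetric, using \ruleref{K-TMin} in place of \ruleref{K-TMax}.

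The only non-routine point is the arrow case: since~$A$ need not be syntactically an abstraction, one cannot invoke a subtyping rule for abstractions directly, and the detour through the $\eta$-expansion of~$A$ followed by \ruleref{ST-Abs} is what makes it go through. This step relies crucially on the feature of \ruleref{ST-Abs} noted in \Sec{decl_rules} --~that it requires the two abstractions only to inhabit a common arrow kind, rather than to share syntactically fixed domain annotations. Everything else is bookkeeping with weakening, \ruleref{K-App}, and the admissible order-theoretic and conversion rules collected in \SupSec{decl_order}. Finally, if one prefers the rules stated with premise $\Gamma \ts A \kin K$ rather than $\Gamma \ts A \kin \kmax{K}$, \Cor{K-KMax} closes the gap.
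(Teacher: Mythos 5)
Your proof is correct and follows essentially the same route as the paper's: induction on the structure of $K$, with \ruleref{ST-Top}/\ruleref{ST-Bot} in the interval case and, in the arrow case, an $\eta$-expansion of $A$ combined with the induction hypothesis, \ruleref{ST-Abs}, transitivity, and \ruleref{K-TMax}/\ruleref{K-TMin}, using \Cor{K-KMax} to reconcile $K$ with $\kmax{K}$. The only cosmetic differences are that you treat both rules simultaneously with the premise stated at $\kmax{K}$ (the paper proves them separately with premise $\Gamma \ts A \kin K$, invoking \Cor{K-KMax} internally) and that you route the $\eta$-step through \ruleref{TEq-Eta} and \ruleref{ST-Refl-TEq} rather than using \rulerefN{ST-Eta1}{ST-$\eta_{1,2}$} directly.
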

\begin{center}
  \infruleSimp[ST-TMax]{\Gamma \ts A \kin K}
  {\Gamma \ts A \tsub \tmax{K} \kin \kmax{K}}
  \hspace{3em}
  \infruleSimp[ST-TMin]{\Gamma \ts A \kin K}
  {\Gamma \ts \tmin{K} \tsub A \kin \kmax{K}}
\end{center}
\begin{proof}
  Separately, by induction on the structure of $K$.  \Cor{K-KMax} is
  used to adjust the kind of the premises where necessary.  In the
  inductive step, we use~\ruleref{ST-Abs} and the
  $\eta$-rules~\rulerefN{ST-Eta1}{ST-$\eta_{1,2}$}.  For example, for
  $K = \dfun{X}{K_1}{K_2}$ we have
  \begin{alignat}{2}
    \Gamma \; \ts \; A \; &\tsub \; \lam{X}{K_1}{\app{A}{X}}
      &&\text{for $X \notin \fv(A)$}
      \tag{by \rulerefN{ST-Eta2}{ST-$\eta_2$}}\\
    &\tsub \; \lam{X}{K_1}{\tmax{K_2}} \tag{by the IH and~\ruleref{ST-Abs}}\\
    &\alEq \; \tmax{\dfun{X}{K_1}{K_2}} &\;\; \kin \; \kmax{\dfun{X}{K_1}{K_2}}.
      \tag{by definition}
  \end{alignat}
\end{proof}

Having generalized the properties of the extremal types to their
higher-order counterparts, we now turn to interval kinds.  We start
with an admissible formation rule for higher-order intervals.
\begin{lemma} Higher-order interval kinds are well-formed if their
  bounds are.
\end{lemma}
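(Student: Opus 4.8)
The plan is to prove the rule by induction on the structure of the index kind $K$, exactly as for the neighbouring lemmas \ruleref{Wf-KMax}, \ruleref{SK-KMax} and \ruleref{K-TMax}. The two hypotheses state that both bounds inhabit the widened kind, $\Gamma \ts A \kin \kmax{K}$ and $\Gamma \ts B \kin \kmax{K}$, and we must conclude $\kindD{\Gamma}{A \hointv{K} B}$. In the base case $K = A' \intv B'$ we have $\kmax{K} = \kstar$ and, by the definition of the encoding, $A \hointv{K} B = A \intv B$; the hypotheses are then precisely the two premises of \ruleref{Wf-Intv}, so the formation rule for intervals applies directly.

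For the inductive case $K = \dfun{X}{J}{K'}$, the encodings unfold to $\kmax{K} = \dfun{X}{J}{\kmax{K'}}$ and $A \hointv{K} B = \dfun{X}{J}{\app{A}{X} \hointv{K'} \app{B}{X}}$, where $X$ is chosen fresh for $A$ and $B$ (Barendregt convention) so that the bound variable is shared between the two unfoldings. To apply \ruleref{Wf-DArr} we need $\kindD{\Gamma}{J}$ and $\kindD{\Gamma, X \kas J}{\app{A}{X} \hointv{K'} \app{B}{X}}$. The first follows from kinding validity (\Lem{decl_validity_full}) applied to $\Gamma \ts A \kin \dfun{X}{J}{\kmax{K'}}$, together with inversion of \ruleref{Wf-DArr}. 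For the second we invoke the induction hypothesis at the strictly smaller kind $K'$, in the context $\Gamma, X \kas J$; it then suffices to establish $\Gamma, X \kas J \ts \app{A}{X} \kin \kmax{K'}$ and the symmetric statement for $B$. These follow by weakening (\Lem{decl_weaken}, justified by $\kindD{\Gamma}{J}$) the hypothesis to $\Gamma, X \kas J \ts A \kin \dfun{X}{J}{\kmax{K'}}$, then applying \ruleref{K-App} to it and to $\Gamma, X \kas J \ts X \kin J$ (obtained from \ruleref{K-Var}), using that $\subst{(\kmax{K'})}{X}{X} \alEq \kmax{K'}$.

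The argument is essentially routine and I expect no genuine obstacle; the one place that needs attention is this inductive step, where one must secure well-formedness of $J$ before weakening and must keep the bound variable of $\kmax{K}$ aligned with the fresh variable introduced by the higher-order interval encoding — both handled exactly as in the proofs of the adjacent lemmas. Should the intended premises instead be the stronger $\Gamma \ts A \kin K$ and $\Gamma \ts B \kin K$, the same proof goes through after first coercing via \Lem{SK-KMax} (or \Cor{K-KMax}) and \ruleref{K-Sub} to obtain $\Gamma \ts A \kin \kmax{K}$ and $\Gamma \ts B \kin \kmax{K}$.
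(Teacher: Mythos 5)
Your proof is correct and follows essentially the same route as the paper's: structural induction on $K$, with the base case discharged by \ruleref{Wf-Intv} and the inductive step expanding the bounds via kinding validity, weakening, \ruleref{K-Var} and \ruleref{K-App} before closing with \ruleref{Wf-DArr}. The paper's rule \ruleref{Wf-HoIntv} does take the stronger premises $\Gamma \ts A \kin K$ and $\Gamma \ts B \kin K$, so the variant you sketch in your closing remark (coercing via \Cor{K-KMax} and \ruleref{K-Sub}, or equivalently running the same induction directly at $K$) is the one that matches the paper.
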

\begin{center}
  \infruleSimp[Wf-HoIntv]
  {\Gamma \ts A \kin K  \andalso  \Gamma \ts B \kin K}
  {\kindD{\Gamma}{A \hointv{K} B}}
\end{center}
\begin{proof}
  By induction on the structure of $K$.  The inductive step uses
  kinding validity, \ruleref{K-Var} and~\ruleref{K-App} to expand the
  bounds.
\end{proof}

The subkinding rule~\ruleref{SK-Intv} for proper type intervals also
generalizes straightforwardly to intervals over arbitrary type
operators.
\begin{lemma} Higher-order interval kinds are widened in accordance
  with their bounds.
\end{lemma}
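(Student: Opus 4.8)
The statement to be proved is the higher-order lift of \ruleref{SK-Intv}: from $\Gamma \ts A_2 \tsub A_1 \kin K$ and $\Gamma \ts B_1 \tsub B_2 \kin K$ one should be able to conclude $\Gamma \ts A_1 \hointv{K} B_1 \ksub A_2 \hointv{K} B_2$ (equivalently, by \Cor{K-KMax}, with the two premises stated at $\kmax{K}$). The plan is to prove this by induction on the structure of~$K$, following the recursion in the definition of $A \hointv{K} B$. In every case, subtyping validity applied to the two premises supplies $\Gamma \ts A_i \kin K$ and $\Gamma \ts B_i \kin K$ for $i = 1, 2$, so both encoded intervals are well-formed kinds by \ruleref{Wf-HoIntv}; kinding validity additionally gives $\kindD{\Gamma}{K}$, hence well-formedness of the domain in the arrow case.

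For the base case $K = C \intv D$, the encoding collapses to ordinary intervals, $A_i \hointv{K} B_i \alEq A_i \intv B_i$. The subtyping premises come at the interval kind $C \intv D$, so I would first apply \Lem{ST-Intv-Star} to obtain $\Gamma \ts A_2 \tsub A_1 \kin \kstar$ and $\Gamma \ts B_1 \tsub B_2 \kin \kstar$, and then finish immediately with \ruleref{SK-Intv}. This short detour through \Lem{ST-Intv-Star} is the one step that is not a verbatim transcription of the proof of \ruleref{SK-Intv} itself, and it is exactly where the deliberately relaxed premises of \ruleref{SK-Intv} and \ruleref{K-Sing} pay off.

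For the inductive case $K = \dfun{X}{J}{L}$, with $X$ chosen fresh for $A_1, A_2, B_1, B_2$, the encodings unfold to $A_i \hointv{K} B_i \alEq \dfun{X}{J}{(\app{A_i}{X} \hointv{L} \app{B_i}{X})}$, and I would conclude by \ruleref{SK-DArr} with unchanged domain. Its first premise, $\kindD{\Gamma}{A_1 \hointv{K} B_1}$, is \ruleref{Wf-HoIntv}; its second, $\Gamma \ts J \ksub J$, is \ruleref{KS-Refl} applied to $\kindD{\Gamma}{J}$ (obtained by inverting \ruleref{Wf-DArr} on $\kindD{\Gamma}{K}$); its third, $\Gamma, X \kas J \ts \app{A_1}{X} \hointv{L} \app{B_1}{X} \ksub \app{A_2}{X} \hointv{L} \app{B_2}{X}$, is the induction hypothesis at the strictly smaller kind~$L$ in the context $\Gamma, X \kas J$. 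To feed the IH I weaken the two premises to $\Gamma, X \kas J$ and instantiate each at the variable~$X$ using \ruleref{ST-App} with reflexive arguments ($\Gamma, X \kas J \ts X \teq X \kin J$ by \ruleref{TEq-Refl}) and $\subst{L}{X}{X} \alEq L$, yielding $\Gamma, X \kas J \ts \app{A_2}{X} \tsub \app{A_1}{X} \kin L$ and $\Gamma, X \kas J \ts \app{B_1}{X} \tsub \app{B_2}{X} \kin L$ — precisely the hypotheses the IH expects.

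I expect the only real friction to be clerical: discharging the (in the final system redundant) validity conditions of \ruleref{ST-App}, and keeping the $\alpha$-conversion bookkeeping straight when instantiating a $\dfun{X}{J}{L}$-kinded operator at its own bound variable~$X$. The mathematical substance is just the shape-induction together with the base-case appeal to \Lem{ST-Intv-Star}; everything else follows the template already used for \ruleref{Wf-HoIntv}, \ruleref{SK-KMax} and \ruleref{ST-TMax} earlier in this section.
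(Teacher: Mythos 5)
Your proof is correct and follows essentially the same route as the paper's: induction on the structure of $K$, with the inductive step using subtyping validity, \ruleref{K-Var}, \ruleref{TEq-Refl} and \ruleref{ST-App} to apply the bounds to the fresh variable, \ruleref{Wf-HoIntv} for the left-hand well-formedness premise of \ruleref{SK-DArr}, and the base case discharged by \ruleref{SK-Intv} (your appeal to \Lem{ST-Intv-Star} to move the premises to kind $\kstar$ is exactly the right adjustment there).
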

\begin{center}
  \infruleSimp[SK-HoIntv]
  {\Gamma \ts A_2 \tsub A_1 \kin K  \andalso
   \Gamma \ts B_1 \tsub B_2 \kin K}
  {\Gamma \ts A_1 \hointv{K} B_1 \ksub A_2 \hointv{K} B_2}
\end{center}
\begin{proof}
  By induction on the structure of $K$.  The inductive step uses
  subtyping validity, \ruleref{K-Var}, \ruleref{TEq-Refl}
  and~\ruleref{ST-App} to expand the bounds, and~\ruleref{Wf-HoIntv}
  to establish well-formedness of the left-hand side.
\end{proof}

Next, we would like to prove an admissible higher-order singleton
introduction rule that generalizes~\ruleref{K-Sing}.  Ideally, we
would like to show that any well-kinded type $\Gamma \ts A \kin K$
inhabits its corresponding singleton kind
$\ksing{A}{K} = A \hointv{K} A$.
This is not necessarily true, however.
Consider the case of an operator variable
$X$ with declared type $\Gamma(X) = \fun{\kstar}{\kstar}$.  The
singleton kind corresponding to $X$ is
$\ksing{X}{\fun{\kstar}{\kstar}} = \dfun{Y}{\kstar}{\app{X}{Y} \!
  \intv \app{X}{Y}}$,
so we would like to prove that
$\Gamma \ts X \kin \dfun{Y}{\kstar}{\app{X}{Y} \! \intv \app{X}{Y}}$.
Which kinding rules could we use to adjust the kind of $X$ to the
desired singleton kind?  Since~\ruleref{K-Sing} can only be applied to
proper types, our only option is to use the subsumption
rule~\ruleref{K-Sub}.  But unfortunately, the declared kind
$\fun{\kstar}{\kstar}$ of $X$ is a \emph{strict supertype} of the
singleton kind $\dfun{Y}{\kstar}{\app{X}{Y} \! \intv \app{X}{Y}}$, so
this cannot work.

We can, however, assign the desired singleton kind to the
\emph{$\eta$-expansion} of $X$, \ie to
$\lam{Y}{\kstar}{\app{X}{Y}}$.  Unlike $X$, the application
$\app{X}{Y}$ in the body of the $\eta$-expansion is a proper type, so
we can use~\ruleref{K-Sing} to narrow its kind.  The full derivation
is
\begin{prooftree}
  \AxiomC{$\kindD{\Gamma}{\kstar}$}
    \AxiomC{$\Gamma, Y \kas \kstar \ts
      X \kin \fun{\kstar}{\kstar}$}
      \AxiomC{$\Gamma, Y \kas \kstar \ts
        Y \kin \kstar$}
    \RightLabel{\rulerefP{K-App}}
    \BinaryInfC{$\Gamma, Y \kas \kstar \ts
      \app{X}{Y} \kin \kstar$}
    \RightLabel{\rulerefP{K-Sing}}
    \UnaryInfC{$\Gamma, Y \kas \kstar \ts
      \app{X}{Y} \kin \app{X}{Y} \! \intv \app{X}{Y}$}
  \RightLabel{\rulerefP{K-Abs}}
  \BinaryInfC{$\Gamma \ts \lam{Y}{\kstar}{\app{X}{Y}} \kin
    \dfun{Y}{\kstar}{\app{X}{Y} \! \intv \app{X}{Y}}$}
\end{prooftree}
This principle generalizes to arbitrary well-kinded types: the
$\eta$-expansion of a well-kinded type $\Gamma \ts A \kin K$ always
inhabits the corresponding singleton kind $\ksing{A}{K}$.

Given a type $A$, we define the \emph{weak $\eta$-expansion}
$\weakEtaExp{K}{A}$ of $A$ as $\weakEtaExp{B \intv C}{A} = A$ and
$\weakEtaExp{\dfun{X}{J}{K}}{A} =
\lam{X}{J}{\weakEtaExp{K}{\app{A}{X}}}$
where, as usual, we assume that $X \notin \fv(A)$.  We call this
expansion ``weak'' because the argument $X$ in the definition
$\lam{X}{J}{\weakEtaExp{K}{\app{A}{X}}}$ of the arrow case is not
$\eta$-expanded further.  This means that the result is not
$\eta$-long.  This is sufficient for the purpose of this section; we
will define a stronger version in the next section.

As expected, a type of kind $K$ is equal to its weak $\eta$-expansion
in $K$.
\begin{lemma}
  Weak $\eta$-expansion is sound, \ie if $\; \Gamma \ts A \kin K$,
  then $\Gamma \ts A \teq \weakEtaExp{K}{A} \kin K$.
\end{lemma}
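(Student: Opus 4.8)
The plan is to prove this by structural induction on the kind $K$, since the weak $\eta$-expansion $\weakEtaExp{K}{A}$ is itself defined by recursion on $K$. The statement to prove is: if $\Gamma \ts A \kin K$, then $\Gamma \ts A \teq \weakEtaExp{K}{A} \kin K$.

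\medskip

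First, the base case. When $K = B \intv C$ is an interval kind, the definition gives $\weakEtaExp{B \intv C}{A} = A$, so we must show $\Gamma \ts A \teq A \kin B \intv C$. This follows immediately from the admissible reflexivity rule \ruleref{TEq-Refl} (established in \Sec{decl_order}) applied to the hypothesis $\Gamma \ts A \kin B \intv C$.

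\medskip

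Next, the inductive step. Suppose $K = \dfun{X}{J}{K'}$, with $X \notin \fv(A)$, and we are given $\Gamma \ts A \kin \dfun{X}{J}{K'}$. Then $\weakEtaExp{\dfun{X}{J}{K'}}{A} = \lam{X}{J}{\weakEtaExp{K'}{\app{A}{X}}}$, and the goal is
\[
  \Gamma \ts A \teq \lam{X}{J}{\weakEtaExp{K'}{\app{A}{X}}} \kin \dfun{X}{J}{K'}.
\]
I would first establish that $\Gamma, X \kas J \ts \app{A}{X} \kin K'$: from kinding validity (\Lem{decl_validity}) we get $\kindD{\Gamma}{\dfun{X}{J}{K'}}$, hence $\kindD{\Gamma}{J}$, and after weakening the hypothesis to $\Gamma, X \kas J$ we apply \ruleref{K-Var} and \ruleref{K-App} to obtain $\Gamma, X \kas J \ts \app{A}{X} \kin K'$. (The validity conditions on \ruleref{K-App} are discharged using $\kindD{\Gamma, X \kas J}{K'}$, which follows from inverting the well-formed arrow kind.) Applying the induction hypothesis at kind $K'$ in context $\Gamma, X \kas J$ yields $\Gamma, X \kas J \ts \app{A}{X} \teq \weakEtaExp{K'}{\app{A}{X}} \kin K'$. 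Now I chain three steps using transitivity of type equality (\ruleref{TEq-Trans}): by the $\eta$-rule \rulerefN{TEq-Eta}{TEq-$\eta$} (admissible, \Lem{teq_cong}), $\Gamma \ts A \teq \lam{X}{J}{\app{A}{X}} \kin \dfun{X}{J}{K'}$; and by the congruence rule \ruleref{TEq-Abs} applied to the induction hypothesis, $\Gamma \ts \lam{X}{J}{\app{A}{X}} \teq \lam{X}{J}{\weakEtaExp{K'}{\app{A}{X}}} \kin \dfun{X}{J}{K'}$ — here the two left-hand kinding premises of \ruleref{TEq-Abs} are supplied by \ruleref{K-Abs} using $\Gamma, X \kas J \ts \app{A}{X} \kin K'$ and the analogous kinding of $\weakEtaExp{K'}{\app{A}{X}}$, which itself follows from validity applied to the induction hypothesis. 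Composing gives the desired equation.

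\medskip

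The main obstacle is bookkeeping the validity side conditions: both \ruleref{TEq-Abs} and \ruleref{K-App} (and the validity conditions highlighted in gray) require auxiliary well-kindedness and well-formedness derivations for $K'$, $\app{A}{X}$, and $\weakEtaExp{K'}{\app{A}{X}}$. These all follow mechanically from \Lem{decl_validity} and \Lem[Lemmas]{keq_cong}/\ref{lem:teq_cong}, but care is needed to ensure the induction hypothesis is applied only to the structurally smaller kind $K'$ and to derivations that are genuinely available (which they are, since $\Gamma, X \kas J \ts \app{A}{X} \kin K'$ is constructed directly rather than obtained via validity). There are no genuinely hard cases — the lemma is a routine consequence of the admissible congruence and $\eta$-rules together with validity.
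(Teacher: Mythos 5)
Your proof is correct and follows essentially the same route as the paper: induction on the structure of $K$, with the base case by reflexivity and the inductive case assembled from \rulerefN{TEq-Eta}{TEq-$\eta$} and \ruleref{TEq-Abs} (plus transitivity); you merely spell out the validity bookkeeping that the paper leaves implicit. (The only nitpick is that \rulerefN{TEq-Eta}{TEq-$\eta$} is stated with the $\eta$-expansion on the left, so a use of \ruleref{TEq-Sym} is technically needed, which is harmless.)
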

\begin{proof}
  By induction on the structure of $K$,
  using~\rulerefN{TEq-Eta}{TEq-$\eta$} and~\ruleref{TEq-Abs} in the
  inductive case.
\end{proof}
\vfill

\begin{lemma}
  The $\eta$-expansions of type operators inhabit their higher-order
  singleton intervals.
  \begin{center}\normalfont
    \infruleSimp[K-HoSing]
    {\Gamma \ts A \kin K}
    {\Gamma \ts \weakEtaExp{K}{A} \kin \ksing{A}{K}}
  \end{center}
\end{lemma}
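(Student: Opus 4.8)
The plan is to prove \ruleref{K-HoSing} by induction on the structure of the index kind~$K$, following exactly the pattern used for the other admissible rules about higher-order encodings in this section (\eg \ruleref{Wf-HoIntv} and \ruleref{SK-HoIntv}), and appealing to the validity properties of~\Sec{decl_validity_full} where needed.

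\textbf{Base case $K = B \intv C$.} Here $\weakEtaExp{K}{A} = A$ and $\ksing{A}{K} = A \intv A$ by definition of weak $\eta$-expansion and of higher-order singleton intervals. Since the premise is $\Gamma \ts A \kin B \intv C$, the rule \ruleref{K-Sing} directly yields $\Gamma \ts A \kin A \intv A$, which is the goal.

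\textbf{Inductive case $K = \dfun{X}{J}{K'}$.} Unfolding the encodings (recall $X \notin \fv(A)$ by the variable convention), we have $\weakEtaExp{K}{A} = \lam{X}{J}{\weakEtaExp{K'}{\app{A}{X}}}$ and $\ksing{A}{K} = \dfun{X}{J}{\ksing{\app{A}{X}}{K'}}$. I would apply \ruleref{K-Abs}. Its first premise $\kindD{\Gamma}{J}$ follows from kinding validity (\Lem{decl_validity}) applied to $\Gamma \ts A \kin \dfun{X}{J}{K'}$, which gives $\kindD{\Gamma}{\dfun{X}{J}{K'}}$, and hence $\kindD{\Gamma}{J}$ by inversion of \ruleref{Wf-DArr}. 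For the second premise, I would first weaken the premise (\Lem{decl_weaken}) to $\Gamma, X \kas J \ts A \kin \dfun{X}{J}{K'}$, derive $\Gamma, X \kas J \ts X \kin J$ by \ruleref{K-Var}, and combine these via \ruleref{K-App} to obtain $\Gamma, X \kas J \ts \app{A}{X} \kin K'$ (using $\subst{K'}{X}{X} \alEq K'$). The induction hypothesis applied to this judgment then gives $\Gamma, X \kas J \ts \weakEtaExp{K'}{\app{A}{X}} \kin \ksing{\app{A}{X}}{K'}$, and \ruleref{K-Abs} closes the case, since its conclusion is exactly $\Gamma \ts \weakEtaExp{K}{A} \kin \ksing{A}{K}$.

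The argument is essentially routine; the only points that need a little care are the bookkeeping around the bound variable~$X$ (checking that the recursive unfoldings of $\weakEtaExp{-}{-}$ and of $\ksing{-}{-}$ use a common binder and that the trivial substitution $\subst{K'}{X}{X}$ leaves $K'$ unchanged) and the appeal to \emph{kinding} validity, rather than mere context validity, to supply $\kindD{\Gamma}{J}$ in the inductive step. No ideas beyond those already deployed for the earlier higher-order encoding lemmas are required.
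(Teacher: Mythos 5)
Your proof is correct and follows essentially the same route as the paper's: structural induction on $K$, with the base case discharged by \ruleref{K-Sing} and the inductive step by kinding validity, weakening, \ruleref{K-Var}, \ruleref{K-App} and \ruleref{K-Abs}. The only cosmetic difference is that the paper's proof also cites \ruleref{Wf-HoIntv}, which serves to discharge the (gray) validity-condition premise of \ruleref{K-Abs} in the extended system; in the original declarative system your derivation goes through without it.
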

\begin{proof}
  By induction on the structure of $K$.  The base case follows
  from~\ruleref{K-Sing}, the inductive step from the usual combination
  of kinding validity, \ruleref{K-Var}, \ruleref{K-App}
  and~\ruleref{Wf-HoIntv}.
\end{proof}
\begin{corollary}
  If $\Gamma \ts B_1 \tsub A \kin K$ and
  $\Gamma \ts A \tsub B_2 \kin K$, then
  $\Gamma \ts \weakEtaExp{K}{A} \kin B_1 \hointv{K} B_2$.
\end{corollary}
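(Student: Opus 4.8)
The plan is to obtain the conclusion by composing the higher-order singleton introduction rule \ruleref{K-HoSing} with the higher-order interval widening rule \ruleref{SK-HoIntv}, followed by one use of subsumption. First I would invoke type (in)equation validity (\Lem{decl_validity}) on either hypothesis to get $\Gamma \ts A \kin K$. Plugging this into \ruleref{K-HoSing} gives $\Gamma \ts \weakEtaExp{K}{A} \kin \ksing{A}{K}$, \ie $\Gamma \ts \weakEtaExp{K}{A} \kin A \hointv{K} A$: the weak $\eta$-expansion of $A$ inhabits its own singleton kind.

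Next I would widen that singleton to the interval we want. Instantiating \ruleref{SK-HoIntv} so that its source interval is $A \hointv{K} A$ and its target interval is $B_1 \hointv{K} B_2$, the two premises of the rule become exactly the hypotheses $\Gamma \ts B_1 \tsub A \kin K$ (the contravariant, lower-bound premise) and $\Gamma \ts A \tsub B_2 \kin K$ (the covariant, upper-bound premise), so the rule yields $\Gamma \ts A \hointv{K} A \ksub B_1 \hointv{K} B_2$. A single application of the subsumption rule \ruleref{K-Sub} then transports the kind of $\weakEtaExp{K}{A}$ from $A \hointv{K} A$ up to $B_1 \hointv{K} B_2$, which is precisely the claimed kinding judgment.

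I do not expect any real obstacle here: the argument is a two-step consequence of facts already proved in this subsection. The only thing that needs a moment's care is keeping the variances straight when matching $A \hointv{K} A \ksub B_1 \hointv{K} B_2$ against the shape of \ruleref{SK-HoIntv} --- the lower bound travels upward (from $A$ to $B_1$) while the upper bound travels downward (from $A$ to $B_2$), which is exactly the polarity pattern that rule is built for.
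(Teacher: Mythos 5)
Your proposal is correct and matches the paper's intended argument exactly: the corollary is stated as an immediate consequence of \ruleref{K-HoSing}, obtained by widening the singleton $A \hointv{K} A$ to $B_1 \hointv{K} B_2$ via \ruleref{SK-HoIntv} and concluding with \ruleref{K-Sub}, using subtyping validity to supply $\Gamma \ts A \kin K$. The variance bookkeeping you note is handled correctly.
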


Having found ways to form, widen and populate higher-order intervals,
we still need a way to put their bounds to use.  To this end, we
introduce two \emph{higher-order bound projection} rules, which
generalize the corresponding rules~\rulerefN{ST-Bnd1}{ST-Bnd$_1$}
and~\rulerefN{ST-Bnd2}{ST-Bnd$_2$} for proper type intervals.
\begin{lemma}[higher-order bound projection]\label{lem:ho_bnd_proj}
  Inhabitants of a higher-order interval are supertypes of its lower
  bound and subtypes of its upper bound.
\end{lemma}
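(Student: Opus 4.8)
I read the final statement as a pair of \emph{higher-order bound-projection} rules generalizing \rulerefN{ST-Bnd1}{ST-Bnd$_1$} and \rulerefN{ST-Bnd2}{ST-Bnd$_2$}: from $\Gamma \ts A \kin B_1 \hointv{K} B_2$, $\Gamma \ts B_1 \kin K$ and $\Gamma \ts B_2 \kin K$, derive $\Gamma \ts B_1 \tsub A \kin \kmax{K}$ and $\Gamma \ts A \tsub B_2 \kin \kmax{K}$. The plan is to prove both conclusions simultaneously by induction on the structure of the index kind $K$, following the pattern of the proofs of \ruleref{ST-TMax} and \ruleref{ST-TMin}. In the base case $K = C \intv D$ the higher-order interval $B_1 \hointv{K} B_2$ unfolds to the proper interval $B_1 \intv B_2$ and $\kmax{K} = \kstar$, so the hypothesis is $\Gamma \ts A \kin B_1 \intv B_2$ and the two conclusions are exactly the instances of \rulerefN{ST-Bnd1}{ST-Bnd$_1$} and \rulerefN{ST-Bnd2}{ST-Bnd$_2$}; the bound premises are not used here.

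For the inductive step $K = \dfun{X}{J}{K'}$, where $B_1 \hointv{K} B_2$ unfolds to $\dfun{X}{J}{(\app{B_1}{X} \hointv{K'} \app{B_2}{X})}$ and $\kmax{K} = \dfun{X}{J}{\kmax{K'}}$, I would first probe the operators at the bound variable $X$. Using $\Gamma, X \kas J \ts X \kin J$ (by \ruleref{K-Var}), rule \ruleref{K-App} turns $\Gamma \ts A \kin \dfun{X}{J}{(\app{B_1}{X} \hointv{K'} \app{B_2}{X})}$ into $\Gamma, X \kas J \ts \app{A}{X} \kin \app{B_1}{X} \hointv{K'} \app{B_2}{X}$ (the substitution into the codomain is the identity since the bound variable is named $X$), and likewise $\Gamma \ts B_i \kin \dfun{X}{J}{K'}$ gives $\Gamma, X \kas J \ts \app{B_i}{X} \kin K'$. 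Applying the induction hypothesis in $\Gamma, X \kas J$ yields $\Gamma, X \kas J \ts \app{B_1}{X} \tsub \app{A}{X} \kin \kmax{K'}$ and $\Gamma, X \kas J \ts \app{A}{X} \tsub \app{B_2}{X} \kin \kmax{K'}$. It remains to lift these pointwise inequations back to the operator level at kind $\dfun{X}{J}{\kmax{K'}}$, which I would do exactly as in the proof of \ruleref{ST-TMax}: widen the kinds of $A$, $B_1$ and $B_2$ using \ruleref{SK-KMax}, \ruleref{K-Sub} and the identity $\kmax{B_1 \hointv{K} B_2} = \kmax{K}$; $\eta$-expand $B_1$ via \rulerefN{ST-Eta2}{ST-$\eta_2$} to get $\Gamma \ts B_1 \tsub \lam{X}{J}{\app{B_1}{X}} \kin \dfun{X}{J}{\kmax{K'}}$ and $A$ via \rulerefN{ST-Eta1}{ST-$\eta_1$} to get $\Gamma \ts \lam{X}{J}{\app{A}{X}} \tsub A \kin \dfun{X}{J}{\kmax{K'}}$; interpose $\Gamma \ts \lam{X}{J}{\app{B_1}{X}} \tsub \lam{X}{J}{\app{A}{X}} \kin \dfun{X}{J}{\kmax{K'}}$ by \ruleref{ST-Abs}, with the first IH inequation supplying the body premise; and chain with \ruleref{ST-Trans}. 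The conclusion $\Gamma \ts A \tsub B_2 \kin \kmax{K}$ is obtained symmetrically from the second IH inequation.

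The hard part is discharging the well-kindedness side conditions demanded by \ruleref{ST-Abs} and the $\eta$-rules in the inductive step: each of $A$, $B_1$, $B_2$, and their $\eta$-expansions, must be shown to inhabit the common arrow kind $\dfun{X}{J}{\kmax{K'}}$. For $A$ this follows from the hypothesis, using $\kmax{B_1 \hointv{K} B_2} = \kmax{K}$ together with \ruleref{SK-KMax} and \ruleref{K-Sub}; for $B_1$ and $B_2$ this is precisely why the premises $\Gamma \ts B_i \kin K$ are needed — the judgment $\Gamma \ts A \kin B_1 \hointv{K} B_2$ does not on its own force $B_i$ to be well-kinded at $K$, and indeed in the base case it yields only $B_i \kin \kstar$. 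A secondary nuisance, typical of the dependently kinded setting, is the bookkeeping around substitution into the codomain of the higher-order interval and around the $K'$ versus $\kmax{K'}$ index mismatch; these are routine given \ruleref{K-Sub}, \ruleref{SK-KMax} and the $\alpha$-equivalence convention, but must be threaded carefully. The gray validity conditions of the declarative rules cause no additional trouble here, since we may reason in the original system and invoke \Lem{decl_validity} freely.
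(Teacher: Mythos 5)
Your proof is correct for the statement as you formalized it, and the route is essentially the paper's: induction on the structure of the index kind $K$, with the base case discharged by \rulerefN{ST-Bnd1}{ST-Bnd$_1$}/\rulerefN{ST-Bnd2}{ST-Bnd$_2$} and the inductive step by probing at the bound variable, applying the IH, and reassembling via \rulerefN{ST-Eta1}{ST-$\eta_{1,2}$}, \ruleref{ST-Abs} and \ruleref{ST-Trans}, with \ruleref{SK-KMax} and \ruleref{K-Sub} aligning kinds; your observation that $\Gamma \ts B_i \kin K$ must be assumed because higher-order interval formation cannot be inverted is exactly the paper's own remark. The one substantive divergence is the kind at which you conclude: the paper's rules \rulerefN{ST-HoBnd1}{ST-HoBnd$_1$} and \rulerefN{ST-HoBnd2}{ST-HoBnd$_2$} carry the additional premise $\Gamma \ts A \kin K$ and conclude $\Gamma \ts B_1 \tsub A \kin K$ (resp.\ $\Gamma \ts A \tsub B_2 \kin K$) at $K$ itself, not at $\kmax{K}$. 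The paper obtains this stronger indexing in the base case $K = C \intv D$ by taking the $\kstar$-level inequation from \rulerefN{ST-Bnd1}{ST-Bnd$_1$}, narrowing it with \ruleref{ST-Intv} to kind $B_1 \intv A$, and then widening with \ruleref{ST-Sub} and \ruleref{SK-Intv} into $C \intv D$, where the needed side inequations $C \tsub B_1$ and $A \tsub D$ come precisely from the kinding premises $\Gamma \ts B_1 \kin K$ and $\Gamma \ts A \kin K$ via the bound-projection rules --- which is why those premises are used already in the base case, contrary to your remark that they are idle there. Your $\kmax{K}$-indexed conclusion is strictly weaker and cannot in general be pushed back down to $K$ (\ruleref{ST-Sub} only widens, and \ruleref{ST-Intv} only narrows to the interval of the two related types), so to match the paper's lemma you should add the premise $\Gamma \ts A \kin K$ and perform the base-case re-indexing as above; the inductive step then goes through at $\dfun{X}{J}{K'}$ directly, without the $\kmax{\cdot}$ detour and without your auxiliary identity $\kmax{B_1 \hointv{K} B_2} \alEq \kmax{K}$.
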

\begin{center}
  \infruleSimpN{ST-HoBnd1}{ST-HoBnd$_1$}
  {\Gamma \ts A \kin K  \andalso
   \Gamma \ts B_1 \kin K \\
   \Gamma \ts A \kin B_1 \hointv{K} B_2}
  {\Gamma \ts B_1 \tsub A \kin K}
  \hspace{3em}
  \infruleSimpN{ST-HoBnd2}{ST-HoBnd$_2$}
  {\Gamma \ts A \kin K  \andalso
   \Gamma \ts B_2 \kin K \\
   \Gamma \ts A \kin B_1 \hointv{K} B_2}
  {\Gamma \ts A \tsub B_2 \kin K}
\end{center}
These rules are a bit weaker than one might expect.  In particular,
the additional premises $\Gamma \ts A \kin K$, $\Gamma \ts B_1 \kin K$
and $\Gamma \ts B_2 \kin K$, might seem redundant.  They are necessary
because we cannot, in general, invert well-formedness judgments about
higher-order intervals.  That is, $\kindD{\Gamma}{B_1 \hointv{K} B_2}$
does \emph{not} imply $\Gamma \ts B_1 \kin K$ and
$\Gamma \ts B_2 \kin K$, nor does
$\Gamma \ts A \kin B_1 \hointv{K} B_2$ imply $\Gamma \ts A \kin K$.
To see this, consider the kind $K = \Bot \hointv{\kempty} \Top$, where
$\kempty = \Top \intv \Bot$ is the empty interval (note the absurd
bounds).  The kind $K$ is well-formed and inhabited by both $\Bot$ and
$\Top$, yet clearly $\Bot, \Top$ are not inhabitants of $\kempty$.
Note that the formation rule~\ruleref{Wf-HoIntv} for higher-order
intervals is not to blame: although $K$ is well-formed, we cannot
prove this fact using~\ruleref{Wf-HoIntv}.  There are simply more
well-formed higher-order intervals than can be derived
using~\ruleref{Wf-HoIntv}.
\begin{proof}[Proof of \Lem{ho_bnd_proj}]
  Separately, by induction on the structure of $K$.  In the base case,
  we use the interval projection
  rules~\rulerefN{ST-Bnd1}{ST-Bnd$_{1, 2}$} as well
  as~\ruleref{ST-Intv} and~\ruleref{ST-Sub} to adjust the kinds of the
  resulting inequations.  In the inductive step, we
  use~\ruleref{ST-Abs} and the
  $\eta$-rules~\rulerefN{ST-Eta1}{ST-$\eta_{1,2}$}.  For example, for
  the left-hand case and $K = \dfun{X}{K_1}{K_2}$ we have
  \begin{alignat}{2}
    \Gamma \; \ts \; B_1 \; &\tsub \; \lam{X}{K_1}{\app{B_1}{X}}
      \tag{by~\rulerefN{ST-Eta2}{ST-$\eta_2$}}\\
    &\tsub \; \lam{X}{K_1}{\app{A}{\,X}} \tag{by the IH and~\ruleref{ST-Abs}}\\
    &\tsub \; A &\;\; \kin \; \dfun{X}{K_1}{K_2}.
      \tag{by~\rulerefN{ST-Eta1}{ST-$\eta_1$}}
  \end{alignat}
\end{proof}

Thanks to the admissible kinding and subtyping rules for higher-order
intervals and extrema, we can now easily derive judgments for forming,
introducing or eliminating bounded universal quantifiers over
arbitrary type operators.

For example, well-formedness of the
higher-order universal quantifier $\all{X \tsub A}{K}{B}$ can be
derived as
\begin{prooftree}
  \AxiomC{$\Gamma \ts A \kin K$}
  \LeftLabel{(kinding validity)}
  \UnaryInfC{$\kindD{\Gamma}{K}$}
  \LeftLabel{\rulerefP{K-TMin}}
  \UnaryInfC{$\Gamma \ts \tmin{K} \kin \kmax{K}$}
    \AxiomC{$\Gamma \ts A \kin K$}
    \RightLabel{(\Cor{K-KMax})}
    \UnaryInfC{$\Gamma \ts A \kin \kmax{K}$}
  \LeftLabel{\rulerefP{Wf-HoIntv}}
  \BinaryInfC{$\kindD{\Gamma}{\tmin{K} \hointv{\kmax{K}} A}$}
  \LeftLabel{(\Lem{re-idx_kmax})}
  \UnaryInfC{$\kindD{\Gamma}{\tmin{K} \hointv{K} A}$}
    \AxiomC{$\typeD{\Gamma, X \kas \tmin{K} \hointv{K} A}{B}$}
  \LeftLabel{\rulerefP{K-All}}
  \BinaryInfC{$\typeD{\Gamma}{\all{X \tsub A}{K}{B}}$}
\end{prooftree}
The derivation uses the following lemma for simplifying interval
kinds; its proof is by structural induction on the index $K$.
\begin{lemma}\label{lem:re-idx_kmax}
  Let $A$, $B$ be types and $K$ a kind. Then
  $A \hointv{\kmax{K}} B \alEq A \hointv{K} B$
\end{lemma}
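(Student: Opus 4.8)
The plan is to prove the identity by a straightforward structural induction on the kind $K$, exploiting the fact that both $\kmax{-}$ and the operation $A \hointv{-} B$ recurse through dependent arrows in lockstep: each copies the domain annotation of an arrow verbatim and recurses into the codomain, differing only at the base case (an interval kind). Since the statement quantifies over all bounds $A$ and $B$, the induction hypothesis will be available in exactly the form needed for the arrow case, where the bounds get replaced by their applications $\app{A}{X}$ and $\app{B}{X}$.

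First, in the base case $K = A' \intv B'$ I would note that $\kmax{A' \intv B'} = \Bot \intv \Top$ is again an interval kind, so that both $A \hointv{\kmax{K}} B$ and $A \hointv{K} B$ reduce, by the first defining clause of higher-order intervals, to $A \intv B$; the two sides are therefore literally the same expression. In the inductive step $K = \dfun{X}{J}{K'}$, assuming as usual that $X \notin \fv(A) \cup \fv(B)$, I would unfold
\[
  A \hointv{\kmax{\dfun{X}{J}{K'}}} B
  \; = \; \dfun{X}{J}{\app{A}{X} \hointv{\kmax{K'}} \app{B}{X}},
\]
apply the induction hypothesis at $K'$ with bounds $\app{A}{X}$ and $\app{B}{X}$ to rewrite the body as $\app{A}{X} \hointv{K'} \app{B}{X}$, and observe that the result is exactly $\dfun{X}{J}{\app{A}{X} \hointv{K'} \app{B}{X}} = A \hointv{\dfun{X}{J}{K'}} B$, up to $\alpha$-equivalence.

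There is no genuine difficulty here. The only subtleties are the implicit freshness side condition hidden in the higher-order interval notation, which is discharged by the Barendregt-style variable convention adopted in \Sec{syntax}, and the simple observation that $\kmax{-}$ never alters domain annotations, so that the two encodings share the same binder structure and can disagree only at their leaves — where, as the base case shows, they in fact agree. I expect the ``hard part'' to be merely the bookkeeping of bound variables rather than any mathematical content.
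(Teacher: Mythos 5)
Your proof is correct and matches the paper's, which likewise proceeds by structural induction on the index $K$ (the base case is trivial since $\kmax{A' \intv B'} = \Bot \intv \Top$ is again an interval, and the arrow case follows from the induction hypothesis applied to the bounds $\app{A}{X}$, $\app{B}{X}$). Nothing further is needed.
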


Similar derivations exist for the introduction and elimination rules.
\begin{corollary}[bounded quantification] The following rules for the
  formation, introduction and elimination of bounded universal
  quantifiers are admissible.
\end{corollary}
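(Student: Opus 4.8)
The plan is to derive each of the three rules — formation of $\all{X \tsub A}{K}{B}$, introduction of $\Lam{X \tsub A}{K}{t}$, and elimination (type application of a term of bounded-universal type) — by a short explicit derivation, exactly in the style of the well-formedness derivation for $\all{X \tsub A}{K}{B}$ displayed just above. The only ingredients are the admissible rules for higher-order extrema and intervals already established in this section (\ruleref{Wf-KMax}, \ruleref{SK-KMax}, \ruleref{K-TMin}, \ruleref{K-TMax}, \ruleref{ST-TMin}, \ruleref{ST-TMax}, \ruleref{Wf-HoIntv}, \ruleref{SK-HoIntv}, \ruleref{K-HoSing} and the corollary following it, and the higher-order bound projections \ruleref{ST-HoBnd1}, \ruleref{ST-HoBnd2}), together with \Lem{re-idx_kmax} (to rewrite $A \hointv{\kmax K} B$ as $A \hointv{K} B$), \Cor{K-KMax}, soundness of weak $\eta$-expansion, functionality (\Lem{decl_funct}), and the core declarative rules \ruleref{T-TAbs}, \ruleref{T-TApp}, \ruleref{K-Sing}, \ruleref{SK-Intv}, \ruleref{K-Sub}, \ruleref{ST-Sub} and \ruleref{T-Conv}.

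The formation rule is precisely the derivation already given. For introduction, one reuses the left branch of that derivation — \ruleref{K-TMin} and \Cor{K-KMax} feeding \ruleref{Wf-HoIntv}, then \Lem{re-idx_kmax} — to obtain $\kindD{\Gamma}{(\tmin K) \hointv{K} A}$, and applies \ruleref{T-TAbs} to the typing premise $\Gamma, X \kas (\tmin K) \hointv{K} A \ts t \tin B$; unfolding the encodings of $\Lam{X \tsub A}{K}{t}$ and $\all{X \tsub A}{K}{B}$ gives the desired conclusion. For elimination, the target $\all{X \tsub A}{K}{B}$ is by definition $\all{X}{(\tmin K) \hointv{K} A}{B}$, so \ruleref{T-TApp} directly yields $\Gamma \ts \app{t}{C'} \tin \subst{B}{X}{C'}$ for any $C'$ with $\Gamma \ts C' \kin (\tmin K) \hointv{K} A$. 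The content of the rule is therefore to promote a type argument $C$ satisfying $\Gamma \ts C \tsub A \kin K$ into such a $C'$.

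The main obstacle is this last promotion step, which is where subkinding and $\eta$-equality interact once more. For proper-type bounds ($K = \kstar$) it is immediate: $\Gamma \ts C \tsub A \kin \kstar$ gives $\Gamma \ts C \kin C \intv C$ by \ruleref{K-Sing}, and $\Bot \tsub C \tsub A$ yields $\Gamma \ts C \intv C \ksub \Bot \intv A$ by \ruleref{SK-Intv}, so $C$ itself serves as $C'$ via \ruleref{K-Sub}, and the elimination rule is obtained with $\app{t}{C}$ in its conclusion. For a general $K$, $C$ need not inhabit $(\tmin K) \hointv{K} A$; only its weak $\eta$-expansion does. One uses \ruleref{ST-TMin}, \ruleref{ST-Sub}/\ruleref{SK-KMax}, \Cor{K-KMax} and \Lem{re-idx_kmax} to bring the inequations $\tmin K \tsub C$ and $C \tsub A$ to the index kind $\kmax K$, applies the corollary to \ruleref{K-HoSing} at $\kmax K$ (noting $\weakEtaExp{\kmax K}{C} \alEq \weakEtaExp{K}{C}$ and $\tmin K \hointv{\kmax K} A \alEq \tmin K \hointv{K} A$), and so obtains $\Gamma \ts \weakEtaExp{K}{C} \kin (\tmin K) \hointv{K} A$. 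Then \ruleref{T-TApp} gives $\Gamma \ts \app{t}{(\weakEtaExp{K}{C})} \tin \subst{B}{X}{\weakEtaExp{K}{C}}$, and since $\Gamma \ts \weakEtaExp{K}{C} \teq C \kin K$ by soundness of weak $\eta$-expansion, functionality (\Lem{decl_funct}) rewrites the result type to $\subst{B}{X}{C}$ and \ruleref{T-Conv} finishes the case. Everything beyond this bookkeeping is a mechanical assembly of the admissible rules, so the derivations are brief.
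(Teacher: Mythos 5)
Your derivations for formation and introduction, and your key step for elimination --- promoting the argument $C$ to its weak $\eta$-expansion and showing $\Gamma \ts \weakEtaExp{K}{C} \kin (\tmin{K}) \hointv{K} A$ via \ruleref{K-HoSing} (or its corollary) together with \ruleref{ST-TMin}, \Cor{K-KMax}, \ruleref{SK-KMax} and \Lem{re-idx_kmax} --- match the paper's route, which displays the formation derivation, declares the other two ``similar'', and singles out exactly this $\eta$-expansion step as the content of the elimination rule.

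Your final move for elimination, however, is a genuine error, and it is precisely the step the paper's rule \ruleref{T-TAppBnd} is designed to avoid: its conclusion is $\Gamma \ts \app{t}{(\weakEtaExp{K}{C})} \tin \subst{B}{X}{\weakEtaExp{K}{C}}$, with the $\eta$-expansion left in the result type, not $\subst{B}{X}{C}$. Your conversion invokes functionality (\Lem{decl_funct}) with the equation $\Gamma \ts \weakEtaExp{K}{C} \teq C \kin K$, but functionality requires the equation to hold at the kind annotating $X$ in the context in which $B$ is kinded, namely $(\tmin{K}) \hointv{K} A$ --- and $C$ is in general not an inhabitant of that kind, which is the very reason the $\eta$-expansion was needed; so the required equation is not derivable, and context narrowing cannot bridge the gap since neither $K \ksub (\tmin{K}) \hointv{K} A$ nor the converse holds in general. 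Worse, the strengthened rule is simply not admissible for higher-order $K$: take $K = \fun{\kstar}{\kstar}$, let $A = C = X_0$ be an operator variable of kind $K$, and let $B = \app{F}{X}$ for a variable $F$ of kind $\dfun{Z}{(\tmin{K}) \hointv{K} X_0}{\kstar}$; then $\subst{B}{X}{C} = \app{F}{X_0}$ is not even well-kinded, because the declared kind $\fun{\kstar}{\kstar}$ of $X_0$ cannot be narrowed to $F$'s domain (\ruleref{K-Sing} applies only to proper types and \ruleref{K-Sub} only widens), so by typing validity no term can have that type. Your observation for $K = \kstar$ is fine, but only because there $\weakEtaExp{K}{C} \alEq C$; in general the result type must remain $\subst{B}{X}{\weakEtaExp{K}{C}}$, as in the paper.
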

\begin{center}
  \infruleSimp[K-AllBnd]
  {\Gamma \ts A \kin K  \andalso
   \typeD{\Gamma, X \kas \tmin{K} \hointv{K} A}{B}}
  {\typeD{\Gamma}{\all{X \tsub A}{K}{B}}}

  \infruleSimp[T-TAbsBnd]
  {\Gamma \ts A \kin K  \andalso
   \Gamma, X \kas \tmin{K} \hointv{K} A \ts t \tin B}
  {\Gamma \ts \Lam{X \tsub A}{K}{t} \tin \all{X \tsub A}{K}{B}}

  \infruleSimp[T-TAppBnd]
  {\Gamma \ts t \tin \all{X \tsub A}{K}{B}  \andalso
   \Gamma \ts C \tsub A \kin K}
  {\Gamma \ts \app{t}{(\weakEtaExp{K}{C})} \tin
     \subst{B}{X}{\weakEtaExp{K}{C}}}
\end{center}
Similar rules for the formation, abstraction and elimination of
bounded operators are also admissible.

Note that we need to $\eta$-expand the type argument $C$ in the
elimination rule~\ruleref{T-TAppBnd}
before it can by applied to $t$.  This is because $C$ has kind $K$,
while the polymorphic expression $t$ expects an argument of kind
$\tmin{K} \hointv{K} C$.  As discussed earlier, $C$ is not guaranteed
to inhabit that kind but its $\eta$-expansion is --
via~\ruleref{K-HoSing} and a subsequent widening of its kind from
$C \hointv{K} C$ to $\tmin{K} \hointv{K} C$.

There is an alternative encoding of higher-order bounded
quantification (and bounded type operators) that
separates the declaration of type variables from that of the subtyping
constraints imposed by their bounds, at the cost of using an auxiliary
type variable with potentially inconsistent bounds.  Assume a
partition of the set of type variable names into two distinct sets of
\emph{operator names} denoted by $\op{X}, \op{Y}, \dotsc$ and
\emph{constraint names} denoted by $\cst{X}, \cst{Y}, \dotsc$
We may then encode an upper-bounded type variable binding
$X \tsub A \kin K$ as a pair of bindings $\op{X} \kin K$,
$\cst{X} \kin \op{X} \hointv{K} A$, separating the declaration of the
operator name $X$ from the subtyping constraint $X \tsub A$.  For
example, the encoding of bounded universal quantifiers according to
this scheme would be
$\all{X \tsub A}{K}{B} = \all{\op{X}}{K}{\all{\cst{X}}{\op{X}
    \hointv{K} A}{B}}$
where $\cst{X} \notin \fv(B)$.

The advantage of this encoding is a
cleaner separation between the uses of bounded variable bindings in
kinding and subtyping.  Whenever we want to refer to the original type
variable $X$ or its kind, we simply use $\op{X}$.  When we require a
proof of the fact that $X \tsub A$ we obtain one from $\cst{X}$
via~\rulerefN{ST-HoBnd1}{ST-HoBind$_1$},
\rulerefN{ST-HoBnd1}{ST-HoBind$_2$}, and~\ruleref{ST-Trans}.  The same
is true when we instantiate type parameters.  For example, a type
application $\app{t}{C}$, where $t$ has type $\all{X \tsub A}{K}{B}$
and $C \tsub A$, is now desugared to
$\app{\app{t}{C}}{(\weakEtaExp{K}{C})}$, \ie only the second type
argument, which corresponds to the constraint parameter $\cst{X}$,
needs to be $\eta$-expanded, while the argument $C$ for the parameter
$\op{X}$ can be left as is.  Since $\cst{X}$ does not occur freely in
the codomain $B$ of the desugared universal type, the overall type of
the desugared application is just $\subst{B}{\op{X}}{C}$.
A clear drawback of this encoding is the necessary duplication of
bindings and the corresponding introduction and elimination forms
(abstraction, application).  In addition, the kind
$\op{X} \hointv{K} A$ of the constraint $\cst{X}$ has inconsistent
bounds in general, which can be problematic.

There is an obvious alternative definition for the family of kinds
$\kmax{K}$, namely $\kmax{K} = (\tmin{K}) \hointvP{K} (\tmax{K})$.
The original definition, given in~\Fig{encodings} of the paper
(page~\pageref{fig:encodings}), has the advantage of being independent
of the definition of the higher-order extrema $\tmax{K}$ and
$\tmin{K}$.
This allowed us to prove properties such as~\ruleref{Wf-KMax}
and~\Lem{SK-KMax} admissible without appealing to any of the
properties of higher-order extrema, and thereby avoid some cyclic
dependencies in the proofs of the latter.  The alternative definition,
on the other hand, seems more intuitive.  To conclude the section, we
show that the two definitions are equal for well-formed kinds $K$.
\begin{lemma}\label{lem:kmax_hointv}
  The kind $\kmax{K}$ is equal to the higher-order interval bounded by
  $\tmin{K}$ and $\tmax{K}$.  If $\; \kindD{\Gamma}{K}$, then
  $\Gamma \ts \kmax{K} \keq (\tmin{K}) \hointvP{K} (\tmax{K})$.
\end{lemma}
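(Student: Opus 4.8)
The plan is to proceed by induction on the structure of the kind $K$, using the admissible rules for higher-order extrema and intervals that have already been established.

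\emph{Base case} $K = A \intv B$. Here both sides unfold syntactically to $\Bot \intv \Top = \kstar$: indeed $\kmax{A \intv B} = \Bot \intv \Top$, and $(\tmin{A \intv B}) \hointv{A \intv B} (\tmax{A \intv B}) = \Bot \hointv{A \intv B} \Top = \Bot \intv \Top$. Since $\kindD{\Gamma}{K}$ holds, context validity gives $\ctxD{\Gamma}$, hence $\kindD{\Gamma}{\kstar}$, and we conclude by reflexivity of kind equality (\ruleref{KEq-Refl}).

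\emph{Inductive step} $K = \dfun{X}{J}{K'}$. Inversion of \ruleref{Wf-DArr} yields $\kindD{\Gamma}{J}$ and $\kindD{\Gamma, X \kas J}{K'}$. Unfolding definitions, $\kmax{K} = \dfun{X}{J}{\kmax{K'}}$, while $\tmin{K} = \lam{X}{J}{\tmin{K'}}$ and $\tmax{K} = \lam{X}{J}{\tmax{K'}}$, so (choosing the bound variable of the higher-order interval to be $X$, which is legitimate since $X \notin \fv(J)$ by well-formedness and $X$ is bound in $\tmin{K}$ and $\tmax{K}$)
\[
  (\tmin{K}) \hointv{K} (\tmax{K}) \;=\; \dfun{X}{J}{\bigl(\app{(\tmin{K})}{X}\bigr) \hointv{K'} \bigl(\app{(\tmax{K})}{X}\bigr)}.
\]
By the congruence rule \ruleref{KEq-DArr} (\Lem{keq_cong}), together with \ruleref{KEq-Refl} on $J$, it then suffices to prove the bodies equal in the extended context, \ie
\[
  \Gamma, X \kas J \;\ts\; \kmax{K'} \;\keq\; \bigl(\app{(\tmin{K})}{X}\bigr) \hointv{K'} \bigl(\app{(\tmax{K})}{X}\bigr).
\]

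To establish this I would assemble three facts. First, by \ruleref{K-TMin} we have $\Gamma \ts \tmin{K} \kin \kmax{K} = \dfun{X}{J}{\kmax{K'}}$; weakening and \ruleref{K-App} (with $\Gamma, X \kas J \ts X \kin J$) give $\Gamma, X \kas J \ts \app{(\tmin{K})}{X} \kin \kmax{K'}$, and since $\app{(\tmin{K})}{X} = \app{(\lam{X}{J}{\tmin{K'}})}{X}$ $\beta$-contracts to $\tmin{K'}$, \rulerefN{TEq-Beta}{TEq-$\beta$} (using \ruleref{K-TMin} again in the extended context for the validity of the equation's right-hand side) gives $\Gamma, X \kas J \ts \app{(\tmin{K})}{X} \teq \tmin{K'} \kin \kmax{K'}$; symmetrically $\Gamma, X \kas J \ts \app{(\tmax{K})}{X} \teq \tmax{K'} \kin \kmax{K'}$. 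Second, the induction hypothesis applied in $\Gamma, X \kas J$ (valid since $\kindD{\Gamma, X \kas J}{K'}$) yields $\Gamma, X \kas J \ts \kmax{K'} \keq (\tmin{K'}) \hointv{K'} (\tmax{K'})$. Third, by \Lem{re-idx_kmax} the index of a higher-order interval only contributes its shape, so $(\cdot)\hointv{K'}(\cdot)$ and $(\cdot)\hointv{\kmax{K'}}(\cdot)$ are literally the same syntactic operation; hence a kind-equality version of \ruleref{SK-HoIntv} — obtained by applying that rule in both directions and combining with \ruleref{SK-AntiSym}, turning the $\beta$-equations above into subtypings via \ruleref{ST-Refl-TEq}, and supplying the needed well-formedness via \ruleref{Wf-HoIntv} — gives $\Gamma, X \kas J \ts (\tmin{K'})\hointv{K'}(\tmax{K'}) \keq \bigl(\app{(\tmin{K})}{X}\bigr)\hointv{K'}\bigl(\app{(\tmax{K})}{X}\bigr)$. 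Chaining the second and third facts by \ruleref{KEq-Trans} closes the inductive step.

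The main obstacle is precisely this index-kind mismatch: the higher-order extrema $\tmin{K'}$ and $\tmax{K'}$ inhabit $\kmax{K'}$ rather than $K'$, so \ruleref{SK-HoIntv} cannot be invoked at index $K'$; the re-indexing \Lem{re-idx_kmax} is what reconciles the two. Beyond that, the only delicate points are the $\alpha$-renaming bookkeeping around $\app{(\tmin{K})}{X}$ and the routine side-conditions of \ruleref{K-App}, \rulerefN{TEq-Beta}{TEq-$\beta$}, \ruleref{SK-HoIntv}, and \ruleref{Wf-HoIntv}, all of which are furnished by \Lem{decl_validity} and the admissible rules for higher-order extrema and intervals already proved in this section.
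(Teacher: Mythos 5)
Your proof is correct and follows essentially the same route as the paper's: structural induction on $K$, with the inductive step handled by $\beta$-converting $\app{(\tmin{K})}{X}$ and $\app{(\tmax{K})}{X}$ to $\tmin{K'}$ and $\tmax{K'}$ via \rulerefN{TEq-Beta}{TEq-$\beta$} and \ruleref{K-TMin}/\ruleref{K-TMax}, applying \ruleref{SK-HoIntv} in both directions with \ruleref{SK-AntiSym}, re-indexing through \Lem{re-idx_kmax}, and closing with \ruleref{KEq-Refl} and \ruleref{KEq-DArr}. The only differences are cosmetic: you keep the bound variable $X$ and do the $\alpha$-bookkeeping inside the $\beta$-step, where the paper introduces a fresh $Y$ and $\alpha$-renames at the end, and you invoke the arrow congruence up front rather than last.
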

\begin{proof}[Proof of \Lem{kmax_hointv}]
  By induction on the structure of $K$.  The base case is immediate.
  In the inductive step, we use~\ruleref{SK-HoIntv} and the
  $\beta$-rule~\rulerefN{TEq-Beta}{TEq-$\beta$}.

  Let $K = \dfun{X}{K_1}{K_2}$.  We want to show that
  \[
    \Gamma \; \ts \; \dfun{X}{K_1}{\kmax{K_2}} \; \keq \;
    \dfun{Y}{K_1}{
      (\app{\tmin{\dfun{X}{K_1}{K_2}}}{Y}) \hointvP{K_2}
      (\app{\tmax{\dfun{X}{K_1}{K_2}}}{Y})}
  \]
  for some $Y$ that does not occur freely in
  $\fv(\tmin{\dfun{X}{K_1}{K_2}})$ or
  $\fv(\tmax{\dfun{X}{K_1}{K_2}})$.  By the IH, we have
  $\Gamma \ts \kmax{K_2} \keq (\tmin{K_2}) \hointvP{K_2} (\tmax{K_2})$ but
  we need to adjust the bounds of the right-hand interval.  We use the
  following equation for the lower bound, and an similar one for the
  upper bound.
  \begin{align}
    \Gamma \; \ts \; \app{\tmin{\dfun{X}{K_1}{K_2}}}{Y} \; &\alEq \;
      \app{\lam{X}{K_1}{\tmin{K_2}}}{Y}
      \tag{by definition}\\
    &\teq \; \subst{\tmin{K_2}}{X}{Y}
      \tag{by \ruleref{K-TMin} and \rulerefN{TEq-Beta}{TEq-$\beta$}}\\
    &\alEq \; \tmin{K_2} \hspace{5em}\kin \; \kmax{K_2}. \tag{$\alpha$-renaming}
  \end{align}
  By~\ruleref{SK-Refl-KEq}, \ruleref{SK-HoIntv}
  and~\ruleref{SK-AntiSym} we obtain
  \[
    \Gamma \; \ts \; (\tmin{K_2}) \hointvP{\kmax{K_2}} (\tmax{K_2})
    \; \keq \;
    (\app{\tmin{\dfun{X}{K_1}{K_2}}}{Y}) \hointvP{\kmax{K_2}}
    (\app{\tmax{\dfun{X}{K_1}{K_2}}}{Y})
  \]
  which we re-index using~\Lem{re-idx_kmax}.  We conclude
  by~\ruleref{KEq-Refl} and~\ruleref{KEq-DArr}.
\end{proof}

\section{Simple Kinding of Normal Types}
\label{sec:simple}

This section introduces a system of \emph{simplified kinding}
judgments which provide a syntactic characterization of normal types
and allow us to establish important properties about hereditary
substitutions and normal forms, notably a pair of commutativity lemmas
(\Lem[Lemmas]{hsubst-comm} and~\ref{lem:weq-nf-hsubst}) that play an
important role in the development of \Sec{canonical} in the paper.

The contents of this section are based on Chapter~4 of the first
author's PhD~dissertation.  We refer the interested reader to the
dissertation for the full details~\cite[][Ch.~4]{Stucki17thesis}.

\subsection{Preliminaries}

Before we can introduce the simplified system, we need to introduce a
few auxiliary definitions and lemmas that were omitted from
\Sec{normalization} in the paper.

We start with the definition of \emph{shape contexts} -- the type of
context used in our simplified kinding judgments.
Shape contexts $\gamma$, $\delta$ are best thought of as typing
contexts consisting exclusively of type variable bindings~$X \kas k\;$
with shape annotations~$k$ (as opposed to full kind annotations~$K$).
Their grammar is defined as follows. \begin{align*}
  \gamma, \delta &::= \cempty \orElse \gamma, X \kas k
  \tag{Shape context}
\end{align*}
As for full contexts, we assume that the variables bound in a shape
context are all distinct.
We write $\dom(\gamma)$ for the set of variables bound in $\gamma$ and
$(\gamma, \delta)$ for the concatenation of two shape contexts with
disjoint domains.

\subsubsection{Weak Equality}
\label{sec:weak_eq}

Recall that \emph{weak equality} is an equivalence $A \wkEq B$ on
types and kinds that identifies operator abstractions up to the shape
of their domain annotations.
\begin{center}
 \infruleSimpBase{(\ruledef{WEq-Abs})}
 {\simpEq{J}{K}  \andalso  A \wkEq B}
 {\lam{X}{J}{A} \wkEq \lam{X}{K}{B}}
\end{center}
Formally, weak equality is defined as the smallest congruence (\wrt to
all the type and kind formers) that includes both syntactic equality
and the above rule~\ruleref{WEq-Abs}.
It is easy to verify that weak equality is a congruence \wrt
ordinary substitution.
In the following sections, we will show that weak equality is also a
congruence \wrt hereditary substitution, $\eta$-expansion and
normalization, and that weakly equal kinds $K_1 \wkEq K_2$ have equal
shapes $\simpEq{K_1}{K_2}$.

\subsubsection{Properties of Hereditary Substitution}

Since it is defined pointwise on spines, hereditary substitution
commutes with spine concatenation.
\begin{lemma}\label{lem:hsubst-concat}
$\hsubst{(\scons{\vec{D}_1}{\vec{D}_2})}{X}{k}{E} \; \alEq \;
  (\scons{\hsubst{\vec{D}_1}{X}{k}{E}}{\hsubst{\vec{D}_2}{X}{k}{E}})$.
\end{lemma}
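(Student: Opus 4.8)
The plan is a straightforward structural induction on the first spine $\vec{D}_1$, using the defining equations for hereditary substitution on spines given in \Fig{hsubst}. Recall that these are $\hsubst{\sempty}{X}{k}{E} = \sempty$ and $\hsubst{(\scons{D'}{\vec{D}})}{X}{k}{E} = \scons{(\hsubst{D'}{X}{k}{E})}{(\hsubst{\vec{D}}{X}{k}{E})}$, together with the convention that $\scons{\sempty}{\vec{E}} = \vec{E}$ and that $(\scons{-}{-})$ is associative.

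First I would treat the base case $\vec{D}_1 = \sempty$. Here $\scons{\sempty}{\vec{D}_2} \alEq \vec{D}_2$, so the left-hand side is $\hsubst{\vec{D}_2}{X}{k}{E}$, while the right-hand side is $\scons{\hsubst{\sempty}{X}{k}{E}}{\hsubst{\vec{D}_2}{X}{k}{E}} \alEq \scons{\sempty}{\hsubst{\vec{D}_2}{X}{k}{E}} \alEq \hsubst{\vec{D}_2}{X}{k}{E}$, and the two sides agree. Then I would do the inductive step $\vec{D}_1 = \scons{D'}{\vec{D}_1'}$: by associativity $\scons{(\scons{D'}{\vec{D}_1'})}{\vec{D}_2} \alEq \scons{D'}{(\scons{\vec{D}_1'}{\vec{D}_2})}$, so unfolding the definition once gives
\[
  \hsubst{(\scons{D'}{(\scons{\vec{D}_1'}{\vec{D}_2})})}{X}{k}{E} \; \alEq \;
  \scons{(\hsubst{D'}{X}{k}{E})}{(\hsubst{(\scons{\vec{D}_1'}{\vec{D}_2})}{X}{k}{E})}.
\]
Applying the induction hypothesis to $\vec{D}_1'$ rewrites the tail as $\scons{(\hsubst{\vec{D}_1'}{X}{k}{E})}{(\hsubst{\vec{D}_2}{X}{k}{E})}$, and one more use of associativity regroups this as $\scons{(\scons{(\hsubst{D'}{X}{k}{E})}{(\hsubst{\vec{D}_1'}{X}{k}{E})})}{(\hsubst{\vec{D}_2}{X}{k}{E})}$, whose head is precisely $\hsubst{(\scons{D'}{\vec{D}_1'})}{X}{k}{E}$ by folding the definition back up. This closes the induction.

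There is no real obstacle here: the statement is a routine consequence of the fact that the spine case of hereditary substitution is defined pointwise, and the only things being used are the two defining clauses and associativity of concatenation. If anything, the one point worth a sentence of care is that the shape parameter $k$ and the replacement elimination $E$ stay fixed throughout, so none of the subtleties about structural recursion across the mutual definition of hereditary substitution and reducing application come into play; this lemma is entirely at the level of the spine clause and does not touch the recursion on $k$.
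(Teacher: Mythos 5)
Your proof is correct and matches the paper's approach: the paper simply observes that hereditary substitution is defined pointwise on spines and states the lemma as immediate, and your structural induction on $\vec{D}_1$ (base case $\sempty$, cons case unfolding the two defining clauses plus the unit/associativity conventions for concatenation) is exactly that argument spelled out.
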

Just as for ordinary substitution, shapes are stable under hereditary
substitution.
\begin{lemma}[stability of shapes under hereditary
  substitution]\label{lem:simp-stable-hsubst}
$\ksimp{\hsubst{J}{X}{k}{A}} \alEq \ksimp{J}$.
\end{lemma}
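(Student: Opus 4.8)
The plan is to prove \Lem{simp-stable-hsubst} by a direct structural induction on the kind $J$, keeping the substituted variable $X$, the shape $k$, and the type $A$ universally quantified so that the induction hypothesis can be reinstantiated after $\alpha$-renaming of bound variables. The key observation is that both the erasure map $\ksimp{-}$ of \Fig{encodings} and the kind clauses of hereditary substitution in \Fig{hsubst} are \emph{structural on kinds} and never escape the syntactic category of kinds: erasure inspects only whether the outermost former is an interval or a dependent arrow, while hereditary substitution on a kind merely descends into the interval bounds, respectively into the domain and codomain of an arrow, always reproducing a former of the same shape. Hence the argument requires nothing about the termination of hereditary substitution, well-kindedness of the inputs, or the mutual recursion with reducing application; it uses only the two defining equations for $\hsubst{K}{X}{k}{E}$ and the two clauses of $\ksimp{-}$.

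For the interval case $J = D_1 \intv D_2$, the defining clause gives $\hsubst{J}{X}{k}{A} = \hsubst{D_1}{X}{k}{A} \intv \hsubst{D_2}{X}{k}{A}$, which is again an interval kind irrespective of what the (possibly degenerate) substitutions do to the bounds, so $\ksimp{\hsubst{J}{X}{k}{A}} \alEq \kstar \alEq \ksimp{J}$; no induction hypothesis is needed. For the dependent arrow case $J = \dfun{Y}{J_1}{J_2}$, the $\alpha$-conversion convention lets us assume $Y \neq X$ and $Y \notin \fv(A)$, so the matching clause applies: $\hsubst{J}{X}{k}{A} = \dfun{Y}{\hsubst{J_1}{X}{k}{A}}{\hsubst{J_2}{X}{k}{A}}$. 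Unfolding erasure yields $\ksimp{\hsubst{J}{X}{k}{A}} \alEq \fun{\ksimp{\hsubst{J_1}{X}{k}{A}}}{\ksimp{\hsubst{J_2}{X}{k}{A}}}$, and applying the induction hypothesis to $J_1$ and to $J_2$ rewrites this to $\fun{\ksimp{J_1}}{\ksimp{J_2}} \alEq \ksimp{J}$, closing the case.

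I expect no real obstacle: this is a warm-up lemma, playing for hereditary substitution the same role that stability of shapes under ordinary substitution ($\ksimp{\subst{K}{X}{A}} \alEq \ksimp{K}$, noted in \Sec{syntax}) plays for ordinary substitution, and it lets the later commutativity arguments for hereditary substitution manipulate shapes without carrying any well-kindedness assumptions. If anything deserves care, it is only the bookkeeping of the freshness side-conditions on bound variables in the arrow case, which the paper's $\alpha$-conversion convention discharges automatically.
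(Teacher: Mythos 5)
Your proof is correct and matches the paper's, which simply does a straightforward induction on the structure of $J$; your two cases (interval and dependent arrow, using the kind clauses of hereditary substitution and the erasure map, with $\alpha$-renaming handling the freshness side-conditions) are exactly that argument spelled out.
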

\begin{proof}
  By straightforward induction on the structure of $J$.
\end{proof}
As a consequence, weak equality is a congruence \wrt hereditary
substitution and reducing application.
\begin{lemma}\label{lem:weq-par-hsubst}
  Weak equality is a congruence w.r.t.\ hereditary substitution and
  reducing application.  Let $E_1 \wkEq E_2$,
  \begin{enumerate}
  \item\label{item:weq-hsubst-kind} if $K_1 \wkEq K_2$, then
    $\hsubst{K_1}{X}{k}{E_1} \wkEq \hsubst{K_2}{X}{k}{E_2}$;
  \item\label{item:weq-hsubst-elim} if $D_1 \wkEq D_2$, then
    $\hsubst{D_1}{X}{k}{E_1} \wkEq \hsubst{D_2}{X}{k}{E_2}$;
  \item\label{item:weq-hsubst-sp} if $\vec{D}_1 \wkEq \vec{D}_2$, then
    $\hsubst{\vec{D}_1}{X}{k}{E_1} \wkEq
    \hsubst{\vec{D}_2}{X}{k}{E_2}$;
  \item\label{item:weq-rapp-elim} if $D_1 \wkEq D_2$, then
    $\rapp{E_1}{k}{D_1} \wkEq \rapp{E_2}{k}{D_2}$;
  \item\label{item:weq-rapp-sp} if $\vec{D}_1 \wkEq \vec{D}_2$, then
    $\rapp{E_1}{k}{\vec{D}_1} \wkEq \rapp{E_2}{k}{\vec{D}_2}$.
  \end{enumerate}
\end{lemma}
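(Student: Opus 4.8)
The plan is to prove all five statements simultaneously by a nested induction, mirroring the structure of the definition of hereditary substitution and reducing application given in \Fig{hsubst}. The outer induction is on the structure of the shape~$k$ (since both \( \hsubst{-}{X}{k}{-} \) and \( \rapp{-}{k}{-} \) are defined by recursion on~$k$), and within each case we proceed by an inner induction on the derivation of the weak-equality hypothesis \( K_1 \wkEq K_2 \), \( D_1 \wkEq D_2 \) or \( \vec{D}_1 \wkEq \vec{D}_2 \) — equivalently, on the structure of the (weakly equal) raw expressions, since weak equality is defined as the smallest congruence containing syntactic equality and \ruleref{WEq-Abs}. The five parts must be interleaved because hereditary substitution into a kind recurses into types, substitution into a head that equals~$X$ invokes reducing application, and reducing application on a $\lambda$-head invokes hereditary substitution again at a smaller shape.

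First I would handle the ``congruence-closure'' cases of the inner induction, which are routine: for type and kind formers other than $\lambda$ and other than a variable head matching~$X$, the result follows by pushing \( \hsubst{-}{X}{k}{-} \) through the former (using the defining equations of \Fig{hsubst}) and applying the appropriate induction hypotheses to the subexpressions; for spines one uses \Lem{hsubst-concat} together with part~\ref{item:weq-hsubst-sp} componentwise. The case \ruleref{WEq-Abs}, where \( \lam{X'}{J}{A} \wkEq \lam{X'}{K}{A'} \) with \( \simpEq{J}{K} \) and \( A \wkEq A' \), is where weak (rather than syntactic) equality genuinely matters: hereditarily substituting into the two abstractions yields \( \lam{X'}{\hsubst{J}{X}{k}{E_1}}{\hsubst{A}{X}{k}{E_1}} \) and \( \lam{X'}{\hsubst{K}{X}{k}{E_2}}{\hsubst{A'}{X}{k}{E_2}} \); by part~\ref{item:weq-hsubst-kind} the domain annotations are weakly equal, hence by \Lem{simp-stable-hsubst} (stability of shapes) they have equal shapes, and by part~\ref{item:weq-hsubst-elim} the bodies are weakly equal, so \ruleref{WEq-Abs} reassembles the two abstractions into a weak equality. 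This is the step I expect to be the crux of the whole lemma, because it is exactly the point where the ``up-to-shape-of-domain-annotations'' slack in weak equality is both created and consumed, and it is why the statement is phrased with $\wkEq$ rather than $\alEq$.

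The genuinely delicate cases are the interactions between the two operations. For part~\ref{item:weq-hsubst-elim} when the head is the substituted variable, i.e.\ \( D_i = \app{X}{\vec{D}_i} \) with \( \vec{D}_1 \wkEq \vec{D}_2 \): by definition the results are \( \rapp{E_1}{k}{(\hsubst{\vec{D}_1}{X}{k}{E_1})} \) and \( \rapp{E_2}{k}{(\hsubst{\vec{D}_2}{X}{k}{E_2})} \); part~\ref{item:weq-hsubst-sp} gives weak equality of the two spines, and then part~\ref{item:weq-rapp-sp}, applied with \( E_1 \wkEq E_2 \), closes the case — note that this application of part~\ref{item:weq-rapp-sp} is at the \emph{same} shape~$k$, so the outer induction on~$k$ does not strictly decrease here; one must check, exactly as the paper does for totality of hereditary substitution itself, that along every cycle in the call graph some relevant parameter strictly decreases (here the spine length, handled by the inner induction). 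For parts~\ref{item:weq-rapp-elim} and~\ref{item:weq-rapp-sp} I would case-split on~$k$: if \( k = \kstar \), reducing application is just ordinary application, so the result is immediate congruence; if \( k = \fun{k_1}{k_2} \) and the weakly-equal eliminations \( E_1, E_2 \) are both $\lambda$-abstractions (which, by \ruleref{WEq-Abs} and the shape of weak-equality derivations, is forced as soon as one of them is), then \( \rapp{E_i}{\fun{k_1}{k_2}}{D_i} = \hsubst{D_i'}{Y}{k_1}{D_i} \) where \( E_i = \lam{Y}{J_i}{D_i'} \) with \( D_1' \wkEq D_2' \); this is a hereditary substitution at the \emph{smaller} shape~$k_1$, so part~\ref{item:weq-hsubst-elim} of the outer induction hypothesis applies directly. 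If instead neither $E_i$ is a $\lambda$, reducing application is again ordinary application and the case is trivial; the only subtlety is confirming that weak equality cannot relate a $\lambda$ to a non-$\lambda$ (clear from the generating rules). Finally, part~\ref{item:weq-hsubst-kind} and the spine parts~\ref{item:weq-hsubst-sp}, \ref{item:weq-rapp-sp} are essentially bookkeeping over the type/elimination cases, reusing parts~\ref{item:weq-hsubst-elim} and~\ref{item:weq-rapp-elim} plus \Lem{hsubst-concat}. Throughout, the side conditions \( Y \neq X \), \( Y \notin \fv(E_i) \) attached to the binder cases of \Fig{hsubst} are handled by the usual Barendregt convention adopted in \Sec{syntax}, so I would not dwell on them.
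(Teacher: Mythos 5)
Your overall strategy coincides with the paper's: all five parts are proven simultaneously by an outer induction on the shape $k$ with an inner induction (or case analysis) on the weak-equality derivations, the variable-head case of part~(2) is discharged by parts~(3) and~(5), and the $\lambda$-case of reducing application re-enters hereditary substitution at the strictly smaller shape $k_1$. Your justification of the non-decreasing appeal to part~(5) is, modulo phrasing, the paper's call-graph argument; note only that the decreasing quantity that legitimizes re-entry into part~(2) is the shape, via part~(4), not the spine length --- spine length merely accounts for part~(5)'s own recursion, which you in effect acknowledge when you treat the $\lambda$-case of part~(4).

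There is, however, one step in what you yourself call the crux case that does not go through as written. In the \ruleref{WEq-Abs} case you argue ``by part~(1) the domain annotations are weakly equal, hence by \Lem{simp-stable-hsubst} they have equal shapes.'' But \ruleref{WEq-Abs} supplies only $\simpEq{J}{K}$, not $J \wkEq K$, so part~(1) cannot be invoked on the domains at all; and \Lem{simp-stable-hsubst} does not convert weak equality into shape equality (that is \Lem{weq-simpeq}) --- it states $\ksimp{\hsubst{J}{X}{k}{A}} \alEq \ksimp{J}$. The correct argument, which is also the paper's, needs neither of your two intermediate claims: from the hypothesis $\simpEq{J}{K}$ and two applications of \Lem{simp-stable-hsubst} one gets $\ksimp{\hsubst{J}{X}{k}{E_1}} \alEq \ksimp{J} \alEq \ksimp{K} \alEq \ksimp{\hsubst{K}{X}{k}{E_2}}$, after which \ruleref{WEq-Abs} reassembles the substituted abstractions, using part~(2) for the bodies and part~(3) for any surrounding spine. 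In other words, the ``slack'' of weak equality is not consumed by part~(1) there; it is simply carried through because shapes are invariant under hereditary substitution. The fix is one line, and you cite the right lemma, but the inference as stated would fail.
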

\begin{proof}
  The structure of the proof mirrors that of the recursive definitions
  of hereditary substitution and reducing application.  All five parts
  are proven simultaneously, by induction on the structure of $k$.
  Parts~\Item{weq-hsubst-kind}--\Item{weq-hsubst-sp} proceed by an
  inner induction on the derivations of $K_1 \wkEq K_2$,
  $D_1 \wkEq D_2$ and $\vec{D}_1 \wkEq \vec{D}_2$, respectively.
  Parts~\Item{weq-rapp-elim} and~\Item{weq-rapp-sp} proceed by a case
  analysis on the final rules used to derive $E_1 \wkEq E_2$ and
  $\vec{D}_1 \wkEq \vec{D}_2$, respectively.  Part~\Item{weq-hsubst-elim} proceeds by a case analysis on the final
  rule used to derive $F_1 \wkEq F_2$, where $F_1$ and $F_2$ are the
  heads, respectively, of $E_1 = F_1 \vec{D}_1$ and
  $E_2 = F_2 \vec{D}_2$.  In the case for~\ruleref{WEq-Abs}, we use
  stability of kind simplification under hereditary substitution
  (\Lem{simp-stable-hsubst}).
In the variable case,
  we use the IH twice: first for part~\Item{weq-hsubst-sp} to derive
  $\hsubst{\vec{D}_1}{X}{k}{E_1} \wkEq \hsubst{\vec{D}_2}{X}{k}{E_2}$,
  then for part~\Item{weq-rapp-sp}, to derive
  $\rapp{E_1}{k}{(\hsubst{\vec{D}_1}{X}{k}{E_1})} \wkEq
  \rapp{E_2}{k}{(\hsubst{\vec{D}_2}{X}{k}{E_2})}$.  In the second
  instance, $k$ does not decrease nor is
  $\hsubst{\vec{D}_1}{X}{k}{E_1} \wkEq \hsubst{\vec{D}_2}{X}{k}{E_2}$
  a sub-derivation of the current premise.  This use of the IH
is nevertheless justified because any subsequent use of the IH for
  part~\Item{weq-hsubst-elim} in the proof of part~\Item{weq-rapp-sp}
  must occur after the use of the IH for part~\Item{weq-rapp-elim}, at
  which point $k$ has necessarily decreased.
\end{proof}

\subsubsection{Properties of $\eta$-Expansion and Normalization}

Unsurprisingly, shapes are stable under normalization.
\begin{lemma}[stability of shapes under normalization]
  \label{lem:simp-stable-nf}
$\ksimp{\nf{\Gamma}{K}} \alEq \ksimp{K}$.
\end{lemma}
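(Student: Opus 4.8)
The plan is to prove \Lem{simp-stable-nf} by a straightforward structural induction on the raw kind $K$. Both the erasure $\ksimp{-}$ and the kind-level clauses of the normalization function $\nf{\Gamma}{-}$ recurse on the syntactic structure of $K$ (with $\nf$ descending into the immediate sub-types and sub-kinds of $K$), so the two case splits line up; in particular no well-formedness hypothesis on $K$ is needed, which matches the statement being phrased for arbitrary raw kinds.

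For the base case $K = A \intv B$, the definition gives $\nf{\Gamma}{A \intv B} = \nf{\Gamma}{A} \intv \nf{\Gamma}{B}$, so the normal form of an interval is again headed by the interval former. Hence $\ksimp{\nf{\Gamma}{A \intv B}} = \ksimp{\nf{\Gamma}{A} \intv \nf{\Gamma}{B}} \alEq \kstar \alEq \ksimp{A \intv B}$, applying the defining equation $\ksimp{C \intv D} \alEq \kstar$ on both sides. Notably this case does not depend in any way on how the bounds normalize — only on the fact that normalizing an interval yields an interval, so the possibly subtle behaviour of type normalization on applications never enters.

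For the inductive step $K = \dfun{X}{J}{K'}$, put $J' = \nf{\Gamma}{J}$; by definition $\nf{\Gamma}{\dfun{X}{J}{K'}} = \dfun{X}{J'}{\nf{\Gamma, X \kas J'}{K'}}$, so
\[
  \ksimp{\nf{\Gamma}{\dfun{X}{J}{K'}}} \;\alEq\; \fun{\ksimp{J'}}{\ksimp{\nf{\Gamma, X \kas J'}{K'}}}.
\]
The induction hypothesis applied to $J$ (in context $\Gamma$) gives $\ksimp{J'} \alEq \ksimp{J}$, and applied to $K'$ (in the extended context $\Gamma, X \kas J'$) gives $\ksimp{\nf{\Gamma, X \kas J'}{K'}} \alEq \ksimp{K'}$; the change of context is harmless since shapes carry no context dependence. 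Substituting back yields $\fun{\ksimp{J}}{\ksimp{K'}} \alEq \ksimp{\dfun{X}{J}{K'}}$, which closes the case.

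\textbf{Main obstacle.} There is essentially none: this is a routine syntactic bookkeeping fact, the companion of stability of shapes under hereditary substitution (\Lem{simp-stable-hsubst}) and under ordinary substitution. The only point worth flagging is that the statement is deliberately hypothesis-free, so it can be invoked unconditionally — e.g.\ when relating $\nfCtx{\Gamma}$ to the shape annotations used in the simplified kinding judgments — and the structural induction above delivers exactly that generality.
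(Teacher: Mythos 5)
Your proof is correct and follows exactly the paper's approach: a straightforward structural induction on $K$, with the interval and dependent-arrow cases handled as you describe (the only mild subtlety, which you correctly flag, is that the induction hypothesis must be applied at the extended context $\Gamma, X \kas \nf{\Gamma}{J}$, which is fine since the statement is quantified over arbitrary contexts).
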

\begin{proof}
  By straightforward induction on the structure of $K$.
\end{proof}

Extending kind simplification pointwise to contexts, we define
$\ksimp{\Gamma}$ as
\begin{align*}
  \ksimp{\cempty} &= \cempty &
  \ksimp{\Gamma, x \tas A} &= \ksimp{\Gamma} &
  \ksimp{\Gamma, X \kas K} &= \ksimp{\Gamma}, X \kas \ksimp{K}
\end{align*}
It is easy to see that context lookup commutes with simplification,
\ie $\ksimp{\Gamma}(X) = \ksimp{\Gamma(X)}$, and that simplified
contexts are stable under (hereditary) substitution and
normalization, \ie
\begin{align*}
  \ksimp{\subst{\Gamma}{X}{A}} &= \ksimp{\Gamma} &
  \ksimp{\hsubst{\Gamma}{X}{k}{A}} &= \ksimp{\Gamma} &
  \ksimp{\nfCtx{\Gamma}} &= \ksimp{\Gamma}
\end{align*}

The following two lemmas show that $\eta$-expansion and normalization
preserve weak equality.  Importantly, this is true even if the
corresponding kinds and contexts, respectively, are not themselves
weakly equal but have equal shapes -- a much weaker requirement.

\begin{lemma}\label{lem:weq-etaexp}
  Weak equality is preserved by $\eta$-expansion along kinds of equal
  shape.  If $\simpEq{J}{K}$ and $D \wkEq E$, then
  $\etaExp{J}{D} \wkEq \etaExp{K}{E}$.
\end{lemma}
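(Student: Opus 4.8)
The plan is to prove this by induction on the common shape $\ksimp{J} \alEq \ksimp{K}$ of the two kinds, using the fact---established just before in the paper---that weak equality is the smallest congruence (with respect to all type and kind formers) containing both syntactic equality and rule~\ruleref{WEq-Abs}. The only subtle ingredient is that~\ruleref{WEq-Abs} relates operator abstractions while merely requiring their domain annotations to have \emph{equal shapes}, not to be weakly equal.

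First I would observe that, since $\ksimp{A \intv B} = \kstar$ whereas $\ksimp{\dfun{X}{J'}{K'}} = \fun{\ksimp{J'}}{\ksimp{K'}}$, the hypothesis $\simpEq{J}{K}$ splits into two cases by inspection of the definition of $\ksimp{-}$. In the base case $\ksimp{J} \alEq \ksimp{K} \alEq \kstar$, both $J$ and $K$ are intervals, so $\etaExp{J}{D} = D$ and $\etaExp{K}{E} = E$ by definition of $\eta$-expansion, and the conclusion is exactly the hypothesis $D \wkEq E$. In the inductive case $\ksimp{J} \alEq \ksimp{K} \alEq \fun{j}{k}$, both kinds are dependent arrows, say $J = \dfun{X}{J_1}{J_2}$ and $K = \dfun{X}{K_1}{K_2}$ (choosing the same bound variable by $\alpha$-conversion), with $\simpEq{J_1}{K_1}$ and $\simpEq{J_2}{K_2}$ since $\ksimp{J_1} \alEq \ksimp{K_1} \alEq j$ and $\ksimp{J_2} \alEq \ksimp{K_2} \alEq k$. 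Here the definition gives $\etaExp{J}{D} = \lam{X}{J_1}{\etaExp{J_2}{\app{D}{(\etaExp{J_1}{X})}}}$ and symmetrically $\etaExp{K}{E} = \lam{X}{K_1}{\etaExp{K_2}{\app{E}{(\etaExp{K_1}{X})}}}$.

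To relate these two abstractions I would apply~\ruleref{WEq-Abs}: its first premise $\simpEq{J_1}{K_1}$ is already available, and its second premise, $\etaExp{J_2}{\app{D}{(\etaExp{J_1}{X})}} \wkEq \etaExp{K_2}{\app{E}{(\etaExp{K_1}{X})}}$, follows from the induction hypothesis at the strictly smaller shape $k$, provided $\app{D}{(\etaExp{J_1}{X})} \wkEq \app{E}{(\etaExp{K_1}{X})}$. Since weak equality is a congruence for application, this reduces to $D \wkEq E$ (a hypothesis) and $\etaExp{J_1}{X} \wkEq \etaExp{K_1}{X}$, and the latter is in turn an instance of the induction hypothesis at the strictly smaller shape $j$, using $\simpEq{J_1}{K_1}$ together with the (syntactic, hence weak) equality $X \wkEq X$. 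Both recursive appeals are to kinds whose shapes, $j$ and $k$, are proper subterms of $\fun{j}{k}$, so the induction is well founded. I do not expect any real obstacle here---the lemma is a routine structural induction---but the point to be careful about is precisely the use of~\ruleref{WEq-Abs} in the arrow step: the statement would fail if weak equality insisted on (syntactically or weakly) equal domain annotations, since $J_1$ and $K_1$ are only assumed to share a shape.
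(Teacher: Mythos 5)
Your proof is correct and follows essentially the same route as the paper: the paper's proof is a structural induction on $J$ (equivalent to your induction on the common shape, since $\eta$-expansion recurses on the kind), with the interval case trivial and the arrow case discharged via \ruleref{WEq-Abs} and the induction hypotheses for domain and codomain. The only cosmetic difference is that the paper additionally performs a case analysis on the final rule used to derive $D \wkEq E$ (which serves to push the application $\app{D}{(\etaExp{J_1}{X})}$ through the spine-form representation), whereas you appeal directly to the fact that $\wkEq$ is a congruence for application — which the paper's definition of weak equality provides.
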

\begin{proof}
  By induction on the structure of $J$ and case analysis on the final
  rule used to derive $D \wkEq E$.
\end{proof}

\begin{lemma}\label{lem:weq-nf}
  Kinds and types normalize weakly equally in contexts that simplify
  equally.  Let $\Gamma$ and $\Delta$ be contexts such that
  $\simpEq{\Gamma}{\Delta}$.  Then
  \begin{enumerate}
  \item $\nf{\Gamma}{K} \wkEq \nf{\Delta}{K}$ for any kind $K$, and
  \item $\nf{\Gamma}{A} \wkEq \nf{\Delta}{A}$ for any type $A$.
  \end{enumerate}
\end{lemma}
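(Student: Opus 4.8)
The plan is to prove both parts by a single simultaneous induction on the structure of $K$ and $A$, following the recursive structure of $\nfRaw$ itself. Throughout I would use two standard facts about simplification recalled above: that context lookup commutes with simplification, $\ksimp{\Gamma}(X) \alEq \ksimp{\Gamma(X)}$, so that $\simpEq{\Gamma}{\Delta}$ forces $\Gamma$ and $\Delta$ to assign equal shapes to every type variable in their common domain; and that weakly equal kinds have equal shapes.

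For the type-former cases that bind no variable ($\Bot$, $\Top$, $\fun{A}{B}$, and --~in the kind grammar~-- $A \intv B$) the result is immediate from the induction hypotheses together with the fact that $\wkEq$ is a congruence. The variable case $A = X$ is the first interesting one: if $X$ is bound in $\Gamma$ (equivalently in $\Delta$), then $\nf{\Gamma}{X} = \etaExp{\Gamma(X)}{X}$ and $\nf{\Delta}{X} = \etaExp{\Delta(X)}{X}$, and since $\simpEq{\Gamma(X)}{\Delta(X)}$ and $X \wkEq X$, Lemma~\ref{lem:weq-etaexp} yields $\etaExp{\Gamma(X)}{X} \wkEq \etaExp{\Delta(X)}{X}$; if $X$ is unbound, both sides normalize to $X$. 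For the binder cases $\all{X}{K}{A}$, $\lam{X}{K}{A}$ and $\dfun{X}{J}{K}$, I would first normalize the annotation, obtaining $K' = \nf{\Gamma}{K}$ and $K'' = \nf{\Delta}{K}$; by the induction hypothesis $K' \wkEq K''$, hence $\simpEq{K'}{K''}$, hence $\simpEq{(\Gamma, X \kas K')}{(\Delta, X \kas K'')}$, so the induction hypothesis applies to the body in the extended contexts, and reassembling with \ruleref{WEq-Abs} (for operator abstractions) or the corresponding congruence rule (for $\forall$ and dependent arrows) closes these cases.

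The crux is the application case $A = \app{B}{C}$. By the induction hypothesis, $\nf{\Gamma}{B} \wkEq \nf{\Delta}{B}$ and $\nf{\Gamma}{C} \wkEq \nf{\Delta}{C}$. Since $\wkEq$ is a congruence whose only non-syntactic rule is \ruleref{WEq-Abs}, a $\wkEq$-partner of an operator abstraction is again an operator abstraction with a domain annotation of equal shape; this needs a small inversion lemma on $\wkEq$, proved by induction on its derivation. Given this, there are two sub-cases. If $\nf{\Gamma}{B} = \lam{X}{K}{E}$, then $\nf{\Delta}{B} = \lam{X}{K''}{E''}$ with $\simpEq{K}{K''}$ and $E \wkEq E''$; by definition $\nf{\Gamma}{\app{B}{C}} = \hsubst{E}{X}{\ksimp{K}}{\nf{\Gamma}{C}}$ and $\nf{\Delta}{\app{B}{C}} = \hsubst{E''}{X}{\ksimp{K''}}{\nf{\Delta}{C}}$, and these are weakly equal by Lemma~\ref{lem:weq-par-hsubst}, using $\ksimp{K} \alEq \ksimp{K''}$, $E \wkEq E''$, and $\nf{\Gamma}{C} \wkEq \nf{\Delta}{C}$. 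Otherwise $\nf{\Gamma}{B}$ is not an operator abstraction, hence neither is $\nf{\Delta}{B}$, both normal forms of $\app{B}{C}$ are plain applications, and the result follows by congruence.

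I expect the main obstacle to be exactly the application case: stating and using the inversion of $\wkEq$ for abstractions, and then threading Lemma~\ref{lem:weq-par-hsubst} through with the right shape bookkeeping. Everything else is routine once one observes that $\wkEq$-equal normalized annotations have equal shapes, which is precisely what keeps the contexts shape-equal under extension and makes the induction go through.
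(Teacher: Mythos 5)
Your proposal is correct and follows essentially the same route as the paper's proof: a simultaneous structural induction mirroring $\nfRaw$, using \Lem{weq-etaexp} in the variable case, \Lem{weq-simpeq} to keep the extended contexts shape-equal in the binder cases, and \Lem{weq-par-hsubst} in the application case. The only addition is that you make explicit the small inversion fact about $\wkEq$-partners of operator abstractions, which the paper's proof uses implicitly.
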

\begin{proof}
  Simultaneously, by induction on the structure of $K$ and $A$,
  respectively.  In the type variable case $A = X$ we use
  \Lem{weq-etaexp}; in the operator application case
  $A = \app{A_1}{A_2}$ we use \Lem{weq-par-hsubst}.\Item{weq-hsubst-elim};
  in the cases for dependent operator kinds, universal types and
  operator abstraction, we use \Lem{weq-simpeq}.
\end{proof}

\subsection{The Simplified System}

The function $\nfRaw$ assigns to each raw type $A$ in a given context
$\Gamma$ a unique type $\nf{\Gamma}{A}$.  But as we have seen, the
type $\nf{\Gamma}{A}$ may not be $\beta\eta$-normal if $A$ is
ill-kinded.  In this section, we prove the converse: whenever $A$ is
well-kinded in $\Gamma$, the type $\nf{\nfCtx{\Gamma}}{A}$ is a
$\eta$-long $\beta$-normal form.  To do so, we first introduce a set
of \emph{simplified kinding} judgments.  Roughly, a simplified kinding
judgment $\gamma \ts E \kin k$ establishes that the type $E$ is a
normal form of \emph{shape} $k$ in the \emph{shape context} $\gamma$.
Given $\gamma \ts E \kin k$, we say that $E$ is a \emph{well-shaped}
or \emph{simply (well-)kinded normal form}, or just that $E$ is
\emph{simply kinded}.  As we are about to show, every well-kinded type
$\Gamma \ts A \kin K$ has a well-shaped normal form
$\ksimp{\Gamma} \ts E \kin \ksimp{K}$, namely
$E = \nf{\nfCtx{\Gamma}}{A}$ (see \Lem{nf-simp} below).

It is important to note that the converse is not true: not every
simply kinded type is well-kinded.  Because kind simplification
forgets dependencies, there are necessarily some ill-kinded types that
are considered simply well-kinded according to the judgments we are
about to introduce.  However, every simply kinded type is guaranteed
to be an $\eta$-long $\beta$-normal form and, as we will see in this
section, simple kinding is preserved by operations such as hereditary
substitution and $\eta$-expansion.  Hence simple kinding allows us to
prove important properties about these operations on
$\beta\eta$-normal forms without subjecting ourselves to the
complexity of fully dependent kinds.

To enhance readability, we use the following naming conventions for
normal forms: the metavariables $U$, $V$, $W$ denote normal types,
while $M$ and $N$ denote neutral types.\footnote{This is just
  notation.  We do not consider normal forms a separate syntactic
  category, \eg the letters $U$, $V$, $W$ are metavariables
  denoting types (typically in elimination form) rather than
  non-terminals in some grammar of normal forms.}  No special notation
is used for normal kinds.

\paragraph{Judgments.}  \Fig{simple_rules} defines the following
judgments by mutual induction.
\[\begin{array}{l@{\gap}l}
  \kindS{\gamma}{K} &
  \text{the kind $K$ is simply well-formed and normal in $\gamma$}\\
  \gamma \ts V \kin k &
  \text{the type $V$ is a normal form of shape $k$ in $\gamma$}\\
  \gamma \tsNe N \kin k &
  \text{the type $N$ is a neutral form of shape $k$ in $\gamma$}\\
  \spS{\gamma}{j}{\vec{V}}{k} &
  \text{applying an operator of shape $j$ to the normal spine
                                $\vec{V}$}\\
  &\text{yields a type of shape $k$ in $\gamma$.}
\end{array}\]

\fig[tb]{fig:simple_rules}{Simplified kinding}{
\judgment{Simplified well-formedness of kinds}{\fbox{$\kindS{\gamma}{K}$}}
\begin{center}
  \begin{minipage}[t]{0.44\linewidth}
\infruleLeft{SWf-Intv}
    {\typeS{\gamma}{U} \andalso \typeS{\gamma}{V}}
    {\kindS{\gamma}{U \intv V}}
  \end{minipage}\hspace{1ex}
  \begin{minipage}[t]{0.54\linewidth}
\infruleLeft{SWf-DArr}
    {\kindS{\gamma}{J} \andalso \kindS{\gamma, X \kas \ksimp{J}}{K}}
    {\kindS{\gamma}{\dfun{X}{J}{K}}}
  \end{minipage}
\end{center}

\judgment{Kinding of neutral types}{\fbox{$\gamma \tsNe N \kin k$}}

\infruleLeft{SK-VarApp}
{\gamma(X) = j  \andalso  \gamma \ts j \kin \vec{V} \kin k}
{\gamma \tsNe \app{X}{\vec{V}} \kin k}

\judgment{Simple spine kinding}{\fbox{$\spS{\gamma}{j}{\vec{V}}{k}$}}
\begin{center}
  \begin{minipage}[t]{0.42\linewidth}
\infruleLeft{SK-Empty}{\quad}{\spS{\gamma}{k}{\sempty}{k}}
  \end{minipage}\hspace{1ex}
  \begin{minipage}[t]{0.56\linewidth}
\infruleLeft{SK-Cons}
    {\gamma \ts U \kin j \andalso \spS{\gamma}{k}{\vec{V}}{l}}
    {\spS{\gamma}{\fun{j}{k}}{\scons{U}{\vec V}}{l}}
  \end{minipage}
\end{center}\medskip

\judgment{Simple kinding of normal types}{\fbox{$\gamma \ts V \kin k$}}
\begin{multicols}{2}
  \typicallabel{SK-Sing}
\infrule[\ruledef{SK-Top}]
  {\,}
  {\typeS{\gamma}{\Top}}

\infrule[\ruledef{SK-Arr}]
  {\typeS{\gamma}{U}  \andalso  \typeS{\gamma}{V}}
  {\typeS{\gamma}{\fun{U}{V}}}

\infrule[\ruledef{SK-Abs}]
  {\kindS{\gamma}{J}  \andalso  \gamma, X \kas \ksimp{J} \ts V \kin k}
  {\gamma \ts \lam{X}{J}{V} \kin \fun{\ksimp{J}}{k}}

\infrule[\ruledef{SK-Bot}]
  {\,}
  {\typeS{\gamma}{\Bot}}

\infrule[\ruledef{SK-All}]
  {\kindS{\gamma}{K}  \andalso  \typeS{\gamma, X \kas \ksimp{K}}{V}}
  {\typeS{\gamma}{\all{X}{K}{V}}}

\infrule[\ruledef{SK-Ne}]
  {\gamma \tsNe N \kin \kstar}
  {\typeS{\gamma}{N}}
\end{multicols}}
 
The judgments for simple kind formation and kinding follow the
syntactic structure of normal kinds and types.  A type $V$ is a
$\beta\eta$-normal form $\gamma \ts V \kin k$ of shape $k$ if it
is either a proper type introduced by one of the basic type formers
applied to normal arguments (rules~\ruleref{SK-Top}, \ruleref{SK-Bot},
\ruleref{SK-Arr}, and~\ruleref{SK-All}), an operator abstraction with
a normal body (rule~\ruleref{SK-Abs}), or a simply kinded
neutral type (rule~\ruleref{SK-Ne}).  Simply kinded neutral forms
$\gamma \tsNe N \kin k$ are eliminations headed by an abstract type
operator, \ie a type variable $X$, which is applied to a spine of
normal types $\vec{V}$ (rule~\ruleref{SK-VarApp}).  Finally, a simply
well-formed normal kind $\kindS{\gamma}{K}$ is either a type interval
bounded by normal types (rule~\ruleref{SWf-Intv}) or a dependent arrow
with normal domain and codomain (rule~\ruleref{SWf-DArr}).  Note that
type operator abstractions are the only normal forms of arrow shape.
This ensures that normal types are always $\eta$-long.

The simple spine kinding judgment $\spS{\gamma}{j}{\vec{V}}{k}$ is
different from the other judgment forms in that it is a quaternary
rather than a ternary relation.  The shapes $j$ and $k$ should be read
as inputs and outputs, respectively, of such judgments: when a type of
shape $j$ is applied to the spine $\vec{V}$ (the subject of the judgment), the resulting type
is of shape $k$ -- as exemplified by the rule~\ruleref{SK-VarApp}.

There is no formation judgment for shape contexts $\gamma$ since such
contexts only contain bindings assigning shapes to type variables, and there is no such
thing as an ill-formed simple kind.

Because kinding is simplified, there is no notion of subkinding or
kind equality, and hence no need for a subsumption rule.  As a
consequence, the simple kind formation and kinding rules are
syntax-directed.
Another important property of simplified kinding is that none of the
rules involve substitutions in kinds.
This substantially simplifies the proofs of key lemmas about
hereditary substitutions, in particular that of \Lem{simp-hsubst}
which states that hereditary substitutions preserve simple kinding
(and thus normal forms).
It is also important in establishing admissibility of the following
simple rules about spines and neutral types.

\begin{lemma}
  The following simple kinding rules for spine concatenation and
  application of neutrals are admissible.
\end{lemma}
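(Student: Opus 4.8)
The plan is to prove the two admissible rules in sequence by short structural inductions, exploiting that the simplified kinding judgments are syntax-directed and that none of them involve substitutions in kinds. First I would establish the spine-concatenation rule --- if $\spS{\gamma}{j}{\vec{V}_1}{k}$ and $\spS{\gamma}{k}{\vec{V}_2}{l}$, then $\spS{\gamma}{j}{\scons{\vec{V}_1}{\vec{V}_2}}{l}$ --- and then use it, together with inversion of \ruleref{SK-VarApp}, to derive the application rule for neutral types.

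For spine concatenation I would induct on the derivation of $\spS{\gamma}{j}{\vec{V}_1}{k}$, equivalently on the length of $\vec{V}_1$. In the base case $\vec{V}_1 = \sempty$, so \ruleref{SK-Empty} forces $j = k$, and the goal is exactly the hypothesis $\spS{\gamma}{k}{\vec{V}_2}{l}$, since concatenating $\sempty$ with $\vec{V}_2$ yields $\vec{V}_2$. In the step case $\vec{V}_1 = \scons{U}{\vec{W}}$, whence $j = \fun{j_1}{j_2}$ with premises $\gamma \ts U \kin j_1$ and $\spS{\gamma}{j_2}{\vec{W}}{k}$; the induction hypothesis gives $\spS{\gamma}{j_2}{\scons{\vec{W}}{\vec{V}_2}}{l}$, and one use of \ruleref{SK-Cons} then produces the goal $\spS{\gamma}{\fun{j_1}{j_2}}{\scons{U}{(\scons{\vec{W}}{\vec{V}_2})}}{l}$, modulo the trivial associativity of spine concatenation.

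For the neutral-application rule --- that $\gamma \tsNe N \kin \fun{j}{k}$ and $\gamma \ts U \kin j$ imply $\gamma \tsNe \rapp{N}{\fun{j}{k}}{U} \kin k$ (and, in the spine-indexed variant, that $\gamma \tsNe N \kin j$ and $\spS{\gamma}{j}{\vec{V}}{k}$ imply $\gamma \tsNe \rapp{N}{j}{\vec{V}} \kin k$) --- I would invert \ruleref{SK-VarApp} on the first hypothesis to obtain $N = \app{X}{\vec{W}}$ with $\gamma(X) = j'$ and $\spS{\gamma}{j'}{\vec{W}}{\fun{j}{k}}$. From $\gamma \ts U \kin j$, \ruleref{SK-Empty} and \ruleref{SK-Cons} give $\spS{\gamma}{\fun{j}{k}}{\scons{U}{\sempty}}{k}$; applying the spine-concatenation rule just proved yields $\spS{\gamma}{j'}{\scons{\vec{W}}{U}}{k}$, and \ruleref{SK-VarApp} then gives $\gamma \tsNe \app{X}{(\scons{\vec{W}}{U})} \kin k$. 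It remains to unfold the definition of reducing application and note that, since the head $\app{X}{\vec{W}}$ is not an operator abstraction, $\rapp{(\app{X}{\vec{W}})}{\fun{j}{k}}{U}$ takes the ``otherwise'' branch and equals $\app{X}{(\scons{\vec{W}}{U})}$ (using the free conversion between applicative and spine forms); iterating this observation settles the spine-indexed version.

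The arguments are essentially bookkeeping, so I do not anticipate a real obstacle. The one point that requires care is the last step: confirming against the defining equations of $\rapp{-}{-}{-}$ that a reducing application whose head is a neutral type never triggers a $\beta$-contraction and degenerates to plain spine extension, and --- for the spine-indexed rule --- threading this through the mutual recursion so that $\rapp{(\app{X}{\vec{V}_1})}{j}{\vec{V}_2} = \app{X}{(\scons{\vec{V}_1}{\vec{V}_2})}$. Together with the triviality of spine-concatenation associativity and the syntax-directedness of the simplified rules, this is all that is needed.
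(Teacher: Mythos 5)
Your proof is correct and takes essentially the same route as the paper's: \ruleref{SK-Concat} by induction on the first spine-kinding derivation, \ruleref{SK-Snoc} obtained inline as the singleton-spine instance of \ruleref{SK-Concat} (via \ruleref{SK-Empty} and \ruleref{SK-Cons}), and \ruleref{SK-NeApp} by inverting \ruleref{SK-VarApp} and re-applying it to the extended spine. The detour through reducing application is unnecessary---the rule as stated uses plain application of a neutral, which in spine form is just spine extension---but since you note that $\rapp{}{}{}$ coincides with plain application when the head is not an abstraction, nothing is lost.
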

\begin{center}
  \infruleSimp[SK-Concat]
  {\spS{\gamma}{j}{\vec{U}}{k}  \andalso  \spS{\gamma}{k}{\vec{V}}{l}}
  {\spS{\gamma}{j}{\scons{\vec U}{\vec V}}{l}}
  \hspace{2em}
  \infruleSimp[SK-Snoc]
  {\spS{\gamma}{j}{\vec{U}}{\fun{k}{l}}  \andalso  \gamma \ts V \kin k}
  {\spS{\gamma}{j}{\scons{\vec U}{V}}{l}}
  \hspace{2em}
  \infruleSimp[SK-NeApp]
  {\gamma \tsNe N \kin \fun{j}{k}  \andalso  \gamma \ts V \kin j}
  {\gamma \tsNe \app{N}{V} \kin k}
\end{center}
\begin{proof}
  The proofs are done separately for each of the three rules in the
  order the rules are listed.  The proof for~\ruleref{SK-Concat} is by
  induction on the derivation of the first premise. The
  rule~\ruleref{SK-Cons} is derivable from~\ruleref{SK-Snoc} as a
  special case where $\vec{V} = \scons{V}{\sempty}$,
  using~\ruleref{SK-Empty} and~\ruleref{SK-Cons}.  The proof
  of~\ruleref{SK-NeApp} starts with a case analysis on the final rule
  used to derive $\Gamma \tsNe N \kin \fun{j}{k}$.  The only rule for
  deriving such judgments is~\ruleref{SK-VarApp}, hence $N$ must be of
  the form $N = \app{X}{\vec{U}}$ with $\gamma(X) = l$ and
  $\gamma \ts l \kin \vec{U} \kin \fun{j}{k}$.  We conclude
  by~\ruleref{SK-Snoc} and~\ruleref{SK-VarApp}.
\end{proof}

\subsubsection{Simply-Kinded Hereditary Substitution}
\label{sec:simple_hsubst}

Before we can prove that hereditary substitutions preserve simple
kinding, we first need to establish the usual weakening properties for
simple kind formation and kinding.

\begin{lemma}[weakening]\label{lem:simp-weaken}
  A simple judgment remains true if its context is extended by an
  additional binding.  Let $\gamma$, $\delta$ be shape contexts, $k$ a
  shape and $X \notin \dom(\gamma, \delta)$.  If
  $\; \judgS{\gamma, \delta}$ for any of the simple judgments defined
  above, then $\judgS{\gamma, X \kas k, \delta}$.
\end{lemma}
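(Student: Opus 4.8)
The plan is to prove all the simple judgment forms simultaneously, by induction on the derivation of $\judgS{\gamma, \delta}$. This is the standard structural weakening argument, and it is in fact simpler here than the declarative weakening lemma \Lem{decl_weaken}: since there is no formation judgment for shape contexts and every shape is trivially ``well-formed'', no hypothesis on the inserted binding $X \kas k$ is needed, and no companion context-validity statement has to travel along the induction.

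The base and congruence cases are immediate. The rules \ruleref{SK-Top}, \ruleref{SK-Bot} and \ruleref{SK-Empty} have no premises and hold verbatim in the extended context $\gamma, X \kas k, \delta$. The rules \ruleref{SWf-Intv}, \ruleref{SK-Arr}, \ruleref{SK-Ne} and \ruleref{SK-Cons} mention the context only through their premises, so applying the induction hypothesis to each premise (with the same $\delta$) and reassembling with the same rule yields the conclusion.

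Two families of cases require a word. First, the neutral-kinding rule \ruleref{SK-VarApp}: here $N = \app{Y}{\vec{V}}$, with $(\gamma, \delta)(Y) = j$ and $\gamma, \delta \ts j \kin \vec{V} \kin k'$ (renaming the rule-local output shape to $k'$ to avoid clashing with the inserted shape $k$). By well-scopedness $Y \in \dom(\gamma, \delta)$, and since $X \notin \dom(\gamma, \delta)$ we have $Y \neq X$; hence the insertion does not affect context lookup, $(\gamma, X \kas k, \delta)(Y) = j$, and applying the induction hypothesis to the spine premise and reassembling with \ruleref{SK-VarApp} closes the case. Second, the binder rules \ruleref{SWf-DArr}, \ruleref{SK-Abs} and \ruleref{SK-All}: in each, one premise is derived in a context of the form $\gamma, \delta, Z \kas \ksimp{J}$ (or $\gamma, \delta, Z \kas \ksimp{K}$). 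By our convention that bound variables are kept distinct from free ones we may assume $Z \notin \dom(\gamma, X \kas k, \delta)$, so $X \notin \dom(\gamma, \delta, Z \kas \ksimp{J})$, and the induction hypothesis applies with the suffix instantiated to $\delta, Z \kas \ksimp{J}$; it produces the premise relocated to $\gamma, X \kas k, \delta, Z \kas \ksimp{J}$. Any non-binding premise is handled by the induction hypothesis with $\delta$ unchanged, and reassembly with the original rule concludes.

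I expect no genuine obstacle: the lemma is a routine induction. The only thing to be careful about is the bookkeeping of threading the suffix $\delta$ — extended by the locally introduced binder — through the three binder cases, and keeping all five judgment forms in lockstep throughout the simultaneous induction. The corresponding admissible rules \ruleref{SK-Concat}, \ruleref{SK-Snoc} and \ruleref{SK-NeApp}, if needed downstream, inherit weakening as immediate corollaries.
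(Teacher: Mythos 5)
Your proof is correct and matches the paper's: the paper proves this lemma by exactly the same simultaneous induction on the derivation of $\judgS{\gamma, \delta}$, treating it as routine. One trivial slip: there are four simple judgment forms (kind formation, normal kinding, neutral kinding, spine kinding), not five, but this does not affect the argument.
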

\begin{proof}
  Simultaneously for all four judgments, by induction on the
  derivation of $\; \judgS{\gamma, \delta}$.
\end{proof}
\begin{corollary}[Iterated weakening]\label{cor:simp-iter-weaken}
  Given a pair $\gamma$, $\delta$ of disjoint shape contexts, if
  $\; \judgS{\gamma}$, then $\judgS{\gamma, \delta}$.
\end{corollary}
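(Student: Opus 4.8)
The plan is to obtain iterated weakening as a routine consequence of the single-binding weakening lemma (\Lem{simp-weaken}), by induction on the structure of the extension context~$\delta$. Recall that shape contexts are generated by $\gamma ::= \cempty \orElse \gamma, X \kas k$, so this induction peels bindings off the right end of~$\delta$.

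In the base case $\delta = \cempty$ there is nothing to prove, since $\judgS{\gamma, \cempty}$ is literally $\judgS{\gamma}$, which is the hypothesis. For the inductive step, write $\delta = \delta', Y \kas j$. Disjointness of $\dom(\gamma)$ and $\dom(\delta)$ entails disjointness of $\dom(\gamma)$ and $\dom(\delta')$, so the induction hypothesis applies to the pair $\gamma$, $\delta'$ and yields $\judgS{\gamma, \delta'}$. I would then apply \Lem{simp-weaken} with its ``$\gamma$'' instantiated to $\gamma, \delta'$ and its ``$\delta$'' instantiated to the empty context, inserting the fresh binding $Y \kas j$; the side condition $Y \notin \dom(\gamma, \delta')$ holds because $Y$, being the last binding of~$\delta$, is distinct from every variable of~$\delta'$ and, by disjointness of~$\gamma$ and~$\delta$, from every variable of~$\gamma$. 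The conclusion of the lemma is then $\judgS{\gamma, \delta', Y \kas j} = \judgS{\gamma, \delta}$, which closes the induction.

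I do not expect any genuine difficulty here: the only thing to watch is threading the freshness/disjointness hypotheses so that \Lem{simp-weaken} is applicable at each step, and that is purely bookkeeping. Equivalently, one may read \Lem{simp-weaken} with an empty tail as the statement ``appending one binding preserves all simple judgments'', in which case the corollary is just its iteration over the bindings of~$\delta$ from left to right.
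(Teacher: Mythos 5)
Your proof is correct and follows exactly the route the paper intends: the corollary is obtained by induction on the extension context $\delta$, appending one binding at a time via \Lem{simp-weaken} instantiated with an empty tail, with disjointness supplying the freshness side condition at each step.
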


\begin{lemma}[hereditary substitution]\label{lem:simp-hsubst}
  Hereditary substitutions and reducing applications preserve the
  shapes of types and simple well-formedness of kinds.  Let $\gamma$,
  $\delta$ be shape contexts and $X$ such that
  $X \notin \dom(\gamma, \delta)$.  Assume further that
  $\gamma \ts V \kin k$ for some $V$ and $k$.  Then
  \begin{enumerate}
  \item\label{item:simp-hsubst-kds} if
    $\; \kindS{\gamma, X \kas k, \delta}{J}$, then
    $\kindS{\gamma, \delta}{\hsubst{J}{X}{k}{V}}$;
  \item\label{item:simp-hsubst-nf} if
    $\; \gamma, X \kas k, \delta \ts U \kin j$, then
    $\gamma, \delta \ts \hsubst{U}{X}{k}{V} \kin j$;
  \item\label{item:simp-hsubst-ne} if
    $\; \gamma, X \kas k, \delta \tsNe N \kin j$, then
    $\gamma, \delta \ts \hsubst{N}{X}{k}{V} \kin j$ as a normal form;
  \item\label{item:simp-hsubst-sp} if
    $\; \spS{\gamma, X \kas k, \delta}{j}{\vec{U}}{l}$, then
    $\spS{\gamma, \delta}{j}{\hsubst{\vec{U}}{X}{k}{V}}{l}$;
  \item\label{item:simp-rapp-nf} if $k = \fun{k_1}{k_2}$ and
    $\; \gamma \ts U \kin k_1$, then
    $\gamma \ts \rapp{V}{\fun{k_1}{k_2}}{U} \kin k_2$;
  \item\label{item:simp-rapp-sp} if
    $\; \spS{\gamma}{k}{\vec{U}}{j}$, then
    $\gamma \ts \rapp{V}{k}{\vec{U}} \kin j$.
  \end{enumerate}
\end{lemma}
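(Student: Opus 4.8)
The plan is to prove all six parts simultaneously, by induction on the structure of the shape $k$ that indexes the hereditary-substitution and reducing-application operators (equivalently, the shape of the substituted normal form $V$), mirroring the recursion in \Fig{hsubst}. For a fixed $k$, parts~\Item{simp-hsubst-kds}--\Item{simp-hsubst-sp} proceed by a further simultaneous induction on the derivations of $\kindS{\gamma, X \kas k, \delta}{J}$, $\gamma, X \kas k, \delta \ts U \kin j$, $\gamma, X \kas k, \delta \tsNe N \kin j$, and $\spS{\gamma, X \kas k, \delta}{j}{\vec{U}}{l}$, respectively; part~\Item{simp-rapp-nf} by inversion on $V$; and part~\Item{simp-rapp-sp} by an inner induction on the spine $\vec{U}$. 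Before starting I would record the simple weakening properties (\Lem{simp-weaken}, \Cor{simp-iter-weaken}), which are needed to transport the standing hypothesis $\gamma \ts V \kin k$ into the larger contexts $\gamma, \delta$ that occur in the neutral cases.

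The bulk of the cases are routine. For the basic type and kind formers (\ruleref{SK-Top}, \ruleref{SK-Bot}, \ruleref{SK-Arr}, \ruleref{SK-All}, \ruleref{SWf-Intv}, \ruleref{SWf-DArr}), for operator abstraction (\ruleref{SK-Abs}), for \ruleref{SK-Empty} and \ruleref{SK-Cons}, and for the \emph{non-matching} variable subcase of \ruleref{SK-VarApp} (head $Y \neq X$), one simply applies the inner induction hypotheses to the immediate subderivations and re-applies the same simple-kinding rule, using \Lem{hsubst-concat} for the cons-spine case. The only point requiring attention here is that in the \ruleref{SK-Abs} and \ruleref{SWf-DArr} cases the shape carried by the bound variable is unchanged by the substitution; this is \Lem{simp-stable-hsubst}, \ie $\ksimp{\hsubst{J}{X}{k}{V}} \alEq \ksimp{J}$.

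The interesting cases are the ones in which hereditary substitution actually fires a redex, and they are where the six parts genuinely interlock. In part~\Item{simp-hsubst-ne} (and hence, via \ruleref{SK-Ne}, in part~\Item{simp-hsubst-nf}), the \emph{matching} variable subcase has $N = \app{X}{\vec{U}}$ with $\spS{\gamma, X \kas k, \delta}{k}{\vec{U}}{j}$, so $\hsubst{N}{X}{k}{V} = \rapp{V}{k}{(\hsubst{\vec{U}}{X}{k}{V})}$; I would first apply the inner IH for part~\Item{simp-hsubst-sp} to obtain $\spS{\gamma, \delta}{k}{\hsubst{\vec{U}}{X}{k}{V}}{j}$, then apply part~\Item{simp-rapp-sp} at the \emph{same} shape $k$ (after weakening $\gamma \ts V \kin k$ to $\gamma, \delta \ts V \kin k$) to conclude $\gamma, \delta \ts \rapp{V}{k}{(\hsubst{\vec{U}}{X}{k}{V})} \kin j$. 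Dually, in part~\Item{simp-rapp-nf} with $k = \fun{k_1}{k_2}$, the hypothesis $\gamma \ts V \kin \fun{k_1}{k_2}$ forces (by inversion of \ruleref{SK-Abs}, since normal forms of arrow shape are exactly abstractions) $V = \lam{Y}{J}{U'}$ with $\ksimp{J} \alEq k_1$ and $\gamma, Y \kas k_1 \ts U' \kin k_2$; then $\rapp{V}{\fun{k_1}{k_2}}{U} = \hsubst{U'}{Y}{k_1}{U}$, and I would invoke part~\Item{simp-hsubst-nf} at the strictly smaller shape $k_1$, available from the \emph{outer} induction hypothesis on $k$. Part~\Item{simp-rapp-sp} then falls out by the inner induction on the spine, feeding each argument through part~\Item{simp-rapp-nf} (the admissible rules \ruleref{SK-Snoc}, \ruleref{SK-NeApp} for neutrals and spines are available where convenient).

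The main obstacle is justifying that this web of mutual appeals is well-founded: at a fixed shape $k$ the matching-variable subcase of part~\Item{simp-hsubst-ne} calls part~\Item{simp-rapp-sp} at the same $k$, which in turn calls part~\Item{simp-rapp-nf} at the same $k$, so the recursion is not structural in $k$ alone. As with the definition of hereditary substitution itself (\cf the remark on its totality in \Sec{raw_hsubst}), one verifies that along every cycle in the call graph at a fixed $k$ some secondary parameter strictly decreases --- the derivation being substituted into, the length of the spine being traversed, or (for the abstraction subcase of part~\Item{simp-rapp-nf}) the shape drops strictly below $k$ --- so the whole argument is an induction on the lexicographic order $(k, \text{derivation/spine size})$. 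Pinning down this measure is the only delicate bookkeeping; everything else is mechanical. The reason the proof stays manageable here --- in sharp contrast to the analogous statement for full dependent kinds --- is that simple kinding involves \emph{no} substitutions in kinds, so none of the cases need the commutativity lemmas for hereditary substitution (\Lem{hsubst-comm}), which are in fact proven afterwards, using precisely this lemma.
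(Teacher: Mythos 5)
Your proposal is correct and follows essentially the same route as the paper's proof: a simultaneous outer induction on the shape $k$ with inner inductions on the simple formation/kinding derivations for parts~1--4, inversion of \ruleref{SK-Abs} for reducing application of normal forms, and the matching-variable case of \ruleref{SK-VarApp} handled by weakening plus the IHs for the spine and for reducing application. Your lexicographic bookkeeping is just a more explicit phrasing of the paper's justification that the same-shape appeal from the neutral case to reducing application is harmless because any re-entry into the substitution parts must pass through part~\Item{simp-rapp-nf}, where $k$ strictly decreases.
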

Note that hereditary substitutions preserve the shapes of neutral
types but not neutrality itself.
\begin{proof}
  All six parts are proven simultaneously by induction on the
  structure of $k$.
  Parts~\Item{simp-hsubst-kds}--\Item{simp-hsubst-sp} proceed by an
  inner induction on the simple formation or kinding derivations for
  $J$, $U$, $N$ and $\vec{U}$, respectively.
  Parts~\Item{simp-rapp-nf} and~\Item{simp-rapp-sp} proceed by a case
  analysis on the final rules used to derive
  $\gamma \ts V \kin \fun{k_1}{k_2}$ and
  $\spS{\gamma}{k}{\vec{V}}{j}$, respectively; for
  part~\Item{simp-rapp-nf}, the only applicable rule
  is~\ruleref{SK-Abs}.  For part~\Item{simp-hsubst-ne}, in the case
  for~\ruleref{SK-VarApp} when $N = \app{X}{\vec{U}}$, we use iterated
  weakening (\Cor{simp-iter-weaken}) and the IH
  (for~\Item{simp-hsubst-sp}), respectively, to obtain
  $\gamma, \delta \ts V \kin k$ and
  $\spS{\gamma, \delta}{k}{\hsubst{\vec{U}}{X}{k}{V}}{j}$.  To
  conclude the case, we apply the IH again (for~\Item{simp-rapp-sp}).
  In this second use of the IH, $k$ does not decrease nor is
  $\spS{\gamma, \delta}{k}{\hsubst{\vec{U}}{X}{k}{V}}{j}$ a strict
  sub-derivation of the current premise.  However, in order to use the
  IH for part~\Item{simp-hsubst-ne} again from within the proof of
  part~\Item{simp-rapp-sp}, we must go through
  part~\Item{simp-rapp-nf}, at which point $k$ necessarily decreases.
  Again, the structure of the proof mirrors that of the mutually
  recursive definitions of hereditary substitution and reducing
  application.
\end{proof}

Thanks to \Lem{simp-hsubst}, we can now prove the following
commutativity lemma about hereditary substitutions, which will play an
important role in the proof of \Lem{weq-nf-hsubst} below and in
\Sec{canonical} of the paper.

\begin{lemma}[commutativity of hereditary
  substitutions]\label{lem:hsubst-comm}
  Hereditary substitutions of simply kinded types commute; hereditary
  substitutions of simply kinded types commute with simply kinded
  reducing applications.  Let $\gamma_1 \ts U \kin j$ and
  $\gamma_1, X \kas j, \gamma_2 \ts V \kin k$.  Then
  \begin{enumerate}[nosep]
  \item\label{item:hsubst-comm-kds} if
    $\; \kindS{\gamma_1, X \kas j, \gamma_2, Y \kas k, \gamma_3}{J}$, then
    \[
      \hsubst{\hsubst{J}{Y}{k}{V}}{X}{j}{U} \; \alEq \;
      \hsubst{\hsubst{J}{X}{j}{U}}{Y}{k}{\hsubst{V}{X}{j}{U}};
    \]
  \item\label{item:hsubst-comm-nf} if
    $\; \gamma_1, X \kas j, \gamma_2, Y \kas k, \gamma_3 \ts W \kin l$, then
    \[
      \hsubst{\hsubst{W}{Y}{k}{V}}{X}{j}{U} \; \alEq \;
      \hsubst{\hsubst{W}{X}{j}{U}}{Y}{k}{\hsubst{V}{X}{j}{U}};
    \]
  \item\label{item:hsubst-comm-ne} if
    $\; \gamma_1, X \kas j, \gamma_2, Y \kas k, \gamma_3 \tsNe N \kin l$, then
    \[
      \hsubst{\hsubst{N}{Y}{k}{V}}{X}{j}{U} \; \alEq \;
      \hsubst{\hsubst{N}{X}{j}{U}}{Y}{k}{\hsubst{V}{X}{j}{U}};
    \]
  \item\label{item:hsubst-comm-sp} if
    $\; \spS{\gamma_1, X \kas j, \gamma_2, Y \kas k,
      \gamma_3}{l_1}{\vec{W}}{l_2}$,
    then
    \[
      \hsubst{\hsubst{\vec{W}}{Y}{k}{V}}{X}{j}{U} \; \alEq \;
      \hsubst{\hsubst{\vec{W}}{X}{j}{U}}{Y}{k}{\hsubst{V}{X}{j}{U}};
    \]
  \item\label{item:rapp-comm-nf} if $k = \fun{k_1}{k_2}$ and
    $\; \gamma_1, X \kas j, \gamma_2 \ts W \kin k_1$, then
    \[
      \hsubst{(\rapp{V}{\fun{k_1}{k_2}}{W})}{X}{j}{U} \; \alEq \;
      \rapp{(\hsubst{V}{X}{j}{U})}{\fun{k_1}{k_2}}{(\hsubst{W}{X}{j}{U})};
    \]
  \item\label{item:rapp-comm-sp} if
    $\; \spS{\gamma_1, X \kas j, \gamma_2}{k}{\vec{W}}{l}$, then
    $\hsubst{(\rapp{V}{k}{\vec{W}})}{X}{j}{U} \; \alEq \;
    \rapp{(\hsubst{V}{X}{j}{U})}{k}{(\hsubst{\vec{W}}{X}{j}{U})}$.
  \end{enumerate}
\end{lemma}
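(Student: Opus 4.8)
The plan is to prove all six parts of \Lem{hsubst-comm} simultaneously by induction on the structure of the shape $k$ — exactly the induction that makes hereditary substitution and reducing application total. The driving intuition is that hereditary substitution $\hsubst{-}{Y}{k}{V}$ recurses on $k$, so an outer induction on $k$ gives us access to the IH for all strictly smaller shapes in every part, and this is what lets parts \Item{rapp-comm-nf} and \Item{rapp-comm-sp} close (reducing applications only fire at smaller shapes). Within the fixed $k$, parts \Item{hsubst-comm-kds}--\Item{hsubst-comm-sp} proceed by an inner induction on the simple-kinding derivation of the subject ($J$, $W$, $N$, $\vec{W}$ respectively), while parts \Item{rapp-comm-nf} and \Item{rapp-comm-sp} proceed by case analysis on the last rule deriving $\gamma_1, X \kas j, \gamma_2 \ts V \kin k$ (resp.\ the spine-kinding derivation) — the only relevant case for \Item{rapp-comm-nf} being \ruleref{SK-Abs}, since a reducing application at arrow shape only contracts when the operand is an abstraction.

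First I would dispose of the easy structural cases of parts \Item{hsubst-comm-kds}--\Item{hsubst-comm-nf}: the type constants $\Top,\Bot$, the arrow and quantifier formers, the interval former, and the operator-abstraction case all commute by pushing the two substitutions through the subterms and invoking the IH (being careful that the bound variables of $\all{}{}{}$, $\lam{}{}{}$ and $\dfun{}{}{}$ are chosen distinct from $X$, $Y$ and outside $\fv(U), \fv(V)$, so captures do not occur; the shape annotations in abstractions are handled by \Lem{simp-stable-hsubst}, stability of shapes under hereditary substitution). The genuinely interesting cases are the neutral ones. In part \Item{hsubst-comm-ne}, when $N = \app{Z}{\vec{W}}$ we split on whether the head $Z$ equals $X$, equals $Y$, or is a third variable. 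If $Z$ is neither, both sides reduce to $\app{Z}{(\ldots)}$ and we invoke part \Item{hsubst-comm-sp} on the spine. If $Z = Y$, the outer $\hsubst{-}{Y}{k}{V}$ turns $N$ into $\rapp{V}{k}{(\hsubst{\vec{W}}{Y}{k}{V})}$, and commuting $\hsubst{-}{X}{j}{U}$ past this is precisely part \Item{rapp-comm-sp} applied to $V$ (whose simple kinding we have as a hypothesis) together with part \Item{hsubst-comm-sp} on $\vec{W}$; note $\hsubst{Y}{X}{j}{U} \alEq Y$ since $X \ne Y$ and $X \notin \fv(\text{the }Y\text{-binding})$, while the other side applies $\hsubst{-}{Y}{k}{\hsubst{V}{X}{j}{U}}$ to $\hsubst{N}{X}{j}{U}$, whose head is still $Y$. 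The case $Z = X$ is symmetric but requires more care because $\hsubst{-}{X}{j}{U}$ fires first on the left: there the nested reducing applications meet, and we need \Lem{simp-hsubst} to know the intermediate results are simply kinded at the right shapes so the IH at smaller shapes applies.

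For the reducing-application parts, in \Item{rapp-comm-nf} with $k = \fun{k_1}{k_2}$: if $V = \lam{Z}{J}{V'}$ then $\rapp{V}{k}{W} = \hsubst{V'}{Z}{k_1}{W}$, and commuting $\hsubst{-}{X}{j}{U}$ past this hereditary substitution is part \Item{hsubst-comm-nf} at the strictly smaller shape $k_1$ (here the outer induction on $k$ pays off); on the other side, $\hsubst{V}{X}{j}{U} = \lam{Z}{\ldots}{\hsubst{V'}{X}{j}{U}}$ is still an abstraction, so its reducing application also contracts, and the two sides match after one use of \Lem{simp-stable-hsubst} on the annotation. If $V$ is not an abstraction — which, given $\gamma \ts V \kin \fun{k_1}{k_2}$ and rule \ruleref{SK-Abs} being the only normal form of arrow shape, means $V$ is neutral, $V = \app{Z}{\vec{W}}$ — then $\rapp{V}{k}{W} = \app{V}{W}$ and we again reason on whether $Z \in \{X\}$ (now $Z \neq Y$ since the $Y$-binding is to the right of everything in scope here): if $Z = X$, $\hsubst{-}{X}{j}{U}$ turns the application into a reducing application $\rapp{U}{j}{(\ldots)}$, and we invoke \Lem{simp-hsubst} plus the IH at a smaller shape; otherwise everything commutes termwise. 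Part \Item{rapp-comm-sp} then follows by induction on $\vec{W}$, peeling one argument at a time and alternating between \Item{rapp-comm-nf} and itself (with \Lem{hsubst-concat} and \Lem{simp-hsubst} to keep the shapes aligned).

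The main obstacle — and the reason this lemma is stated the way it is — is the non-structural recursion in the neutral/reducing-application cases: when the head of a neutral type is the variable being substituted, the right-hand side substitution "fires'' and creates a fresh reducing application whose argument is itself the result of a substitution, so there is no strict sub-derivation to hand the inner induction. Resolving this requires two things working in concert: (i) the outer induction on the shape $k$, so that every reducing application and every hereditary substitution it spawns happens at a strictly smaller shape, and (ii) the simple-kinding preservation lemma \Lem{simp-hsubst}, which guarantees that all the intermediate expressions produced along the way are simply kinded at exactly the shapes the IH needs — without it, the degenerate (ill-shaped) branches of the definition of hereditary substitution would break commutativity, as the counterexample in \Sec{commute} shows. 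Getting the bookkeeping of shapes and contexts ($\gamma_1, \gamma_2, \gamma_3$ and which variable lives where) exactly right, so that each appeal to \Lem{simp-hsubst} and each appeal to the IH is legitimate, is the delicate part; the rest is routine pushing of substitutions through syntax.
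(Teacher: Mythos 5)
Your overall architecture is the right one and matches the paper's in most respects: all six parts are proved simultaneously, parts \Item{hsubst-comm-kds}--\Item{hsubst-comm-sp} by an inner induction on the simple formation/kinding derivations, the real work concentrated in the neutral-head cases, with \Lem{simp-hsubst} supplying simple kinding of the intermediate results. The genuine gap is your induction measure: the structure of $k$ alone. The place where this fails is exactly the spot you gloss with ``the IH at smaller shapes applies,'' namely the case $N = \app{X}{\vec W}$ of part~\Item{hsubst-comm-ne}. There the inner substitution $\hsubst{-}{Y}{k}{V}$ leaves the head untouched, the $X$-substitution then produces a reducing application of $U$ at shape $j$, and to meet the other side of the equation you must push $\hsubst{-}{Y}{k}{\hsubst{V}{X}{j}{U}}$ past that reducing application (using $Y \notin \fv(U)$). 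That is an instance of part~\Item{rapp-comm-sp} with the roles of $j$ and $k$ -- and of $(X,U)$ and $(Y,V)$ -- exchanged, and \emph{neither shape has decreased}: the substitution is still at shape $k$, and the reducing application is at shape $j$, which is unrelated to $k$ and may well be larger. An induction on $k$ alone provides no such induction hypothesis, and trying to prove the needed fact by a nested induction on $j$ just regenerates the whole lemma with the roles swapped. (A minor additional slip: in part~\Item{rapp-comm-nf} you entertain a neutral $V$ of arrow shape, but \ruleref{SK-Abs} is the only rule assigning arrow shapes to normal forms, so that subcase is vacuous -- harmless, but it contradicts your own earlier remark.)

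The paper resolves this by taking a simultaneous structural induction on the \emph{unordered pair} $\set{j, k}$: a step is legitimate if one component strictly decreases while the other does not grow, and because the pair is unordered the induction hypothesis may be invoked with $j$ and $k$ exchanged. In the head-$X$ case it does exactly that (the proof explicitly notes that the roles of $j$ and $k$ are switched when invoking the IH for part~\Item{rapp-comm-sp}), and well-foundedness survives because every cycle through parts \Item{rapp-comm-nf}/\Item{rapp-comm-sp} strictly decreases one of the two shapes. To repair your argument you must either adopt this symmetric measure or state and prove the $j$/$k$-swapped version as part of the same mutual induction -- which amounts to the same thing; the rest of your case analysis then goes through essentially as you describe.
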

\begin{proof}All six parts are proven simultaneously by simultaneous induction on
  the structures of $j$ and $k$.  Simultaneous structural induction on
  $j$ and $k$ means roughly that it is sufficient for either one of
  $j$ or $k$ to decrease in an induction step.  More formally, denote
  by $\sqsubseteq$ the sub-expression order on shapes, then the
  simultaneous induction order $<$ on unordered pairs $\set{ j, k }$
  of shapes is defined as $\set{ j_1, j_2 } < \set{ k_1, k_2 }$ if
  $j_1 \sqsubset k_1$ and $j_2 \sqsubseteq k_2$.
Importantly, $<$ is defined over \emph{unordered} pairs which allows
  us to exchange $j$ and $k$ in an induction step.
Parts~\Item{hsubst-comm-kds}--\Item{hsubst-comm-sp}
  proceed by an inner induction on the simple formation or kinding
  derivations for $J$, $W$, $N$ and $\vec{W}$, respectively.

  As usual, the interesting cases are those for
  part~\Item{hsubst-comm-ne}, when $N = \app{Y}{\vec{W}}$ and
  $N = \app{X}{\vec{W}}$.
  \begin{itemize}
  \item \emph{Case \ruleref{SK-VarApp}, $N = \app{Y}{\vec{W}}$.}  We
    have
    $\spS{\gamma_1, X \kas j, \gamma_2, Y \kas k,
      \gamma_3}{k}{\vec{W}}{l}$.
    By \Lem{simp-hsubst}.\Item{simp-hsubst-sp}, we obtain
    $\spS{\gamma_1, X \kas j, \gamma_2,
      \gamma_3}{k}{\hsubst{\vec{W}}{Y}{k}{V}}{l}$, and hence we have
    \begin{align*}\hspace{-2em}
      \hsubst{\hsubst{(\app{Y}{\vec{W}})&}{Y}{k}{V}}{X}{j}{U}\\
      &\alEq \;
        \hsubst{(\rapp{V}{k}{(\hsubst{\vec{W}}{Y}{k}{V})})}{X}{j}{U}
        \tag{by definition}\\
      &\alEq \;
        \rapp{(\hsubst{V}{X}{j}{U})}{k}{(
          \hsubst{\hsubst{\vec{W}}{Y}{k}{V}}{X}{j}{U})}
        \tag{by the IH for~\Item{rapp-comm-sp}}\\
      &\alEq \;
        \rapp{(\hsubst{V}{X}{j}{U})}{k}{(
          \hsubst{\hsubst{\vec{W}}{X}{j}{U}}{Y}{k}{\hsubst{V}{X}{j}{U}})}
        \tag{by the IH for~\Item{hsubst-comm-sp}}\\
      &\alEq \;
        \hsubst{\hsubst{(\app{Y}{\vec{W}})}{X}{j}{U}}{Y}{k}{\hsubst{V}{X}{j}{U}}.
        \tag{by definition}
    \end{align*}
  \item \emph{Case \ruleref{SK-VarApp}, $N = \app{X}{\vec{W}}$.}  We
    have
    $\spS{\gamma_1, X \kas j, \gamma_2, Y \kas k,
      \gamma_3}{j}{\vec{W}}{l}$.
    By \Lem{simp-hsubst}.\Item{simp-hsubst-sp}, we obtain
    $\spS{\gamma_1, X \kas j, \gamma_2,
      \gamma_3}{j}{\hsubst{\vec{W}}{Y}{k}{V}}{l}$, and hence we have
    \begin{align*}\hspace{-2em}
      \hsubst{\hsubst{(\app{X}{\vec{W}})&}{Y}{k}{V}}{X}{j}{U}\\
      &\alEq \;
        \hsubst{(\app{X}{(\hsubst{\vec{W}}{Y}{k}{V})})}{X}{j}{U}
        \tag{by definition}\\
      &\alEq \;
        \rapp{U}{j}{(
          \hsubst{\hsubst{\vec{W}}{Y}{k}{V}}{X}{j}{U})}
        \tag{by definition}\\
      &\alEq \;
        \rapp{U}{j}{(
          \hsubst{\hsubst{\vec{W}}{X}{j}{U}}{Y}{k}{\hsubst{V}{X}{j}{U}})}
        \tag{by the IH for~\Item{hsubst-comm-sp}}\\
      &\alEq \;
        \rapp{(\hsubst{U}{Y}{k}{\hsubst{V}{X}{j}{U}})}{j}{(
          \hsubst{\hsubst{\vec{W}}{X}{j}{U}}{Y}{k}{\hsubst{V}{X}{j}{U}})}
        \tag{as $Y \notin \fv(U)$}\\
      &\alEq \;
        \hsubst{(\rapp{U}{j}{(
          \hsubst{\vec{W}}{X}{j}{U})})}{Y}{k}{\hsubst{V}{X}{j}{U}}
        \tag{by the IH for~\Item{rapp-comm-sp}}\\
      &\alEq \;
        \hsubst{\hsubst{(\app{X}{\vec{W}})}{X}{j}{U}}{Y}{k}{\hsubst{V}{X}{j}{U}}.
        \tag{by definition}
    \end{align*}
  \end{itemize}
  Note that, in the second case, we switched the roles of the shape
  $j$ and $k$ when invoking the IH for part~\Item{rapp-comm-sp}.
\end{proof}

\subsubsection{Simplification and Normalization of Kinding}
\label{sec:simple_nf}

Thanks to \Lem{nf-sound} we know that the definition of the
normalization function $\nfRaw$ is sound, \ie that well-formed kinds
$\kindD{\Gamma}{K}$ and well-kinded types $\Gamma \ts A \kin K$ are
convertible with $\nf{\nfCtx{\Gamma}}{K}$ and
$\nf{\nfCtx{\Gamma}}{A}$, respectively.  But we have yet to establish
that $\nf{\nfCtx{\Gamma}}{K}$ and $\nf{\nfCtx{\Gamma}}{A}$ are
actually normal forms.  In this section, we prove a more general
result, namely that, whenever $\kindD{\Gamma}{K}$ and
$\Gamma \ts A \kin K$,
it follows that $\nf{\nfCtx{\Gamma}}{K}$ is a simply well-formed
normal kind and and $\nf{\nfCtx{\Gamma}}{A}$ is a simply well-kinded
normal type.

As a first step, we show that the shapes of variables and, more
generally, of neutral types are preserved by $\eta$-expansion.
\begin{lemma}\label{lem:sk-etaexp}
  $\eta$-expansion preserves the shapes of neutral types.
  Assume $\kindS{\gamma}{K}$ and $\gamma \tsNe N \kin \ksimp{K}$.
  Then $\gamma \ts \etaExp{K}{N} \kin \ksimp{K}$.
\end{lemma}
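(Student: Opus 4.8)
The plan is to prove \Lem{sk-etaexp} by structural induction on the kind $K$, mirroring the recursive structure of the $\eta$-expansion operation $\etaExp{K}{-}$ defined in \Fig{normalization}.

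\textbf{Base case.} Suppose $K = U \intv V$ is an interval kind, so $\ksimp{K} = \kstar$. By definition $\etaExpRaw{U \intv V}{(N)} = N$, so the goal $\gamma \ts \etaExp{K}{N} \kin \ksimp{K}$ reduces to $\gamma \ts N \kin \kstar$, which is exactly the hypothesis $\gamma \tsNe N \kin \ksimp{K} = \kstar$ combined with rule~\ruleref{SK-Ne}. (Here I use that $\ksimp{U \intv V} = \kstar$.)

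\textbf{Inductive case.} Suppose $K = \dfun{X}{J}{K'}$, so that $\ksimp{K} = \fun{\ksimp{J}}{\ksimp{K'}}$ and, by definition,
\[
  \etaExpRaw{\dfun{X}{J}{K'}}{(N)} \; = \; \lam{X}{J}{\etaExp{K'}{\app{N}{(\etaExp{J}{X})}}}
\]
for $X \notin \fv(N)$. From the hypothesis $\kindS{\gamma}{\dfun{X}{J}{K'}}$ and inversion of~\ruleref{SWf-DArr} we get $\kindS{\gamma}{J}$ and $\kindS{\gamma, X \kas \ksimp{J}}{K'}$. To apply~\ruleref{SK-Abs} and conclude $\gamma \ts \lam{X}{J}{\ldots} \kin \fun{\ksimp{J}}{\ksimp{K'}}$, it suffices to show
\[
  \gamma, X \kas \ksimp{J} \ts \etaExp{K'}{\app{N}{(\etaExp{J}{X})}} \kin \ksimp{K'}.
\]
This is an instance of the IH for the smaller kind $K'$ in the extended context $\gamma, X \kas \ksimp{J}$, provided I can establish its two premises: first, $\kindS{\gamma, X \kas \ksimp{J}}{K'}$, which I already have; second, that $\app{N}{(\etaExp{J}{X})}$ is a neutral type of shape $\ksimp{K'}$ in that context, i.e.\ $\gamma, X \kas \ksimp{J} \tsNe \app{N}{(\etaExp{J}{X})} \kin \ksimp{K'}$. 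For the latter, I weaken the hypothesis $\gamma \tsNe N \kin \fun{\ksimp{J}}{\ksimp{K'}}$ to the context $\gamma, X \kas \ksimp{J}$ (by \Lem{simp-weaken}), obtain $\gamma, X \kas \ksimp{J} \ts \etaExp{J}{X} \kin \ksimp{J}$ from the IH for $J$ applied to the trivially neutral $X$ (using~\ruleref{SK-VarApp} with the empty spine to see $\gamma, X \kas \ksimp{J} \tsNe X \kin \ksimp{J}$, together with weakening of $\kindS{\gamma}{J}$), and then combine these via the admissible rule~\ruleref{SK-NeApp}.

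\textbf{Main obstacle.} The only subtlety is managing the two separate appeals to the induction hypothesis on strictly smaller kinds ($J$ and $K'$) and making sure all the context-manipulation lemmas are available: specifically weakening for simple kinding (\Lem{simp-weaken}) to move judgments into the extended context, and the admissible neutral-application rule~\ruleref{SK-NeApp} to build $\app{N}{(\etaExp{J}{X})}$ as a neutral form. Neither of these is deep, but getting the contexts and shapes to line up exactly — in particular that $X$ counts as a neutral of shape $\ksimp{J}$ so that the IH for $J$ applies to it — is the step requiring the most care. Everything else is routine unfolding of the definitions of $\etaExpRaw{}{}$ and $\ksimp{-}$.
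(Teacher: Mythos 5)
Your proof is correct and follows essentially the same route as the paper's: structural induction on $K$, with the arrow case handled by inverting \ruleref{SWf-DArr}, weakening via \Lem{simp-weaken}, applying the IH to $J$ (on the neutral $X$, obtained from \ruleref{SK-VarApp} with the empty spine) and to $K'$, and assembling the result with \ruleref{SK-NeApp} and \ruleref{SK-Abs}. The base case's use of \ruleref{SK-Ne} to pass from neutral to normal kinding is exactly the right (and only) detail needed there.
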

\begin{proof}
  By induction on the structure of $K$.  The case fore
  $K = \dfun{X}{K_1}{K_2}$ proceeds by case analysis on the final
  rules used to derive $\kindS{\gamma}{\dfun{X}{K_1}{K_2}}$ and
  $\gamma \tsNe N \kin \fun{\ksimp{K_1}}{\ksimp{K_2}}$ and uses the
  weakening lemma (\Lem{simp-weaken}) as well as~\ruleref{SK-NeApp}.
\end{proof}

Next, we require a syntactic notion of normal contexts.  We define the
\emph{simple context formation} judgment $\ctxS{\Gamma}$ as the
pointwise lifting of simple kind formation and kinding to bindings:
\begin{center}
  \infruleSimp {\quad}{\ctxS{\cempty}}
  \hspace{3em}
  \infruleSimp {\ctxS{\Gamma}  \andalso  \kindS{\ksimp{\Gamma}}{K}}
  {\ctxS{\Gamma, X \kas K}}
  \hspace{3em}
  \infruleSimp {\ctxS{\Gamma}  \andalso  \ksimp{\Gamma} \ts V \kin \kstar}
  {\ctxS{\Gamma, x \kas V}}
\end{center}

Since simple kind formation and kinding is defined on normal kinds and
types, a simply well-formed context $\Gamma$ is also normal.
Conversely, if we lookup the declared kind or type of a variable in a
simply well-formed context, the result is guaranteed to be a normal
form.

\begin{lemma}\label{lem:sc-lookup}
  The declared kinds and types of variables in a simply well-formed
  context $\Gamma$ are simply well-formed and well-kinded,
  respectively, in $\ksimp{\Gamma}$, i.e
\end{lemma}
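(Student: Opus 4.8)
The plan is to prove this by induction on the derivation of $\ctxS{\Gamma}$ --- equivalently, on the structure of $\Gamma$ --- reducing every case to a single application of the simple weakening lemma (\Lem{simp-weaken}), or its iterated form (\Cor{simp-iter-weaken}). Spelled out, the two statements I would establish simultaneously are: if $\ctxS{\Gamma}$ and $\Gamma(X) = K$ for a type variable $X$, then $\kindS{\ksimp{\Gamma}}{K}$; and if $\ctxS{\Gamma}$ and $\Gamma(x) = V$ for a term variable $x$, then $\ksimp{\Gamma} \ts V \kin \kstar$. The base case $\Gamma = \cempty$ is vacuous, so all the work is in the inductive step, which I would split according to the last binding of $\Gamma$.

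First I would treat $\Gamma = \Gamma', X \kas K$. Inverting the simple context formation rule gives $\ctxS{\Gamma'}$ and $\kindS{\ksimp{\Gamma'}}{K}$, and by the definition of $\ksimp{-}$ on contexts (\Sec{simple}) we have $\ksimp{\Gamma} = \ksimp{\Gamma'}, X \kas \ksimp{K}$. If the looked-up variable is $X$ itself, then $\Gamma(X) = K$ and I need $\kindS{\ksimp{\Gamma'}, X \kas \ksimp{K}}{K}$, which follows from $\kindS{\ksimp{\Gamma'}}{K}$ by \Lem{simp-weaken}; the distinctness-of-bound-variables convention guarantees $X \notin \dom(\ksimp{\Gamma'})$, so the hypothesis of weakening is met. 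If instead the looked-up variable lies in $\Gamma'$, the induction hypothesis yields the corresponding judgment in $\ksimp{\Gamma'}$ --- simple well-formedness of the declared kind, or simple kinding at shape $\kstar$ of the declared type --- and one further application of \Lem{simp-weaken} lifts it to $\ksimp{\Gamma}$.

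For the remaining case $\Gamma = \Gamma', x \kas V$, inverting the rule gives $\ctxS{\Gamma'}$ and $\ksimp{\Gamma'} \ts V \kin \kstar$, and since term bindings are erased we have $\ksimp{\Gamma} = \ksimp{\Gamma'}$. Looking up $x$ itself yields precisely the required $\ksimp{\Gamma} \ts V \kin \kstar$; looking up any other variable is handled by the induction hypothesis directly, with no weakening step needed because the erased context has not changed.

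I do not expect a genuine obstacle here: the lemma is a bookkeeping induction whose only substantive ingredient is \Lem{simp-weaken}, and the single point that merits attention is that $\ksimp{-}$ turns a type binding $X \kas K$ into $X \kas \ksimp{K}$ and drops term bindings altogether --- exactly the shape needed for the induction hypothesis to compose with one step of weakening. Downstream, this lemma will be combined with \Lem{sk-etaexp} in the proof that $\nf{\nfCtx{\Gamma}}{A}$ is a simply well-kinded normal form whenever $\Gamma \ts A \kin K$: in the variable case one must know that the kind read off the context is itself a simply well-formed normal kind before $\eta$-expanding a variable along it.
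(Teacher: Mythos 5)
Your proof is correct and follows essentially the same route as the paper: a routine structural induction over the context (the paper phrases the statement with the binding made explicit as $\Gamma, X \kas K, \Delta$ and inducts on the suffix $\Delta$, whereas you induct on the whole context via lookup, which amounts to the same peeling-off of bindings), with the weakening lemma for simple kinding as the only substantive ingredient. The minor differences — handling the two parts simultaneously rather than separately, and noting explicitly that term bindings vanish under $\ksimp{-}$ — do not change the argument.
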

\begin{center}
  \infruleSimp[SC-TpLookup]
  {\ctxS{\Gamma, X \kas K, \Delta}}
  {\kindS{\ksimp{\Gamma, X \kas K, \Delta}}{K}}
  \hspace{3em}
  \infruleSimp[SC-TmLookup]
  {\ctxS{\Gamma, x \tas V, \Delta}}
  {\typeS{\ksimp{\Gamma, x \tas V, \Delta}}{V}}
\end{center}
\begin{proof}
  Both parts are proven separately by structural induction on $\Delta$
  and case analysis on the final rule used to derive the premise.  In
  the inductive case, we use the weakening lemma for simple kind
  formation.
\end{proof}

With \Lem[Lemmas]{sk-etaexp} and~\ref{lem:sc-lookup} at hand, it is
easy to show that $\nfRaw$ does indeed produce normal forms.
\begin{lemma}[normalization and simplification]\label{lem:nf-simp}
  Well-formed kinds and well-kinded types have simply well-formed and
  simply kinded normal forms, respectively.
  \begin{enumerate}
  \item\label{item:nf-simp-kd} If $\; \kindD{\Gamma}{K}$, then
    $\kindS{\ksimp{\nfCtx{\Gamma}}}{\nf{\nfCtx{\Gamma}}{K}}$.
  \item\label{item:nf-simp-tp} If $\; \Gamma \ts A \kin K$, then
    $\ksimp{\nfCtx{\Gamma}} \ts \nf{\nfCtx{\Gamma}}{A} \kin \ksimp{K}$.
  \item\label{item:nf-simp-ctx} If $\; \ctxD{\Gamma}$, then
    $\ctxS{\nfCtx{\Gamma}}$.
  \end{enumerate}
\end{lemma}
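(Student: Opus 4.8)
The plan is to prove the three statements \Item{nf-simp-kd}, \Item{nf-simp-tp} and \Item{nf-simp-ctx} by a single simultaneous induction on the declarative derivations: part~\Item{nf-simp-ctx} by induction on $\ctxD{\Gamma}$, part~\Item{nf-simp-kd} on $\kindD{\Gamma}{K}$, and part~\Item{nf-simp-tp} on $\Gamma \ts A \kin K$. The three inductions are genuinely intertwined --~the variable case of part~\Item{nf-simp-tp} appeals to part~\Item{nf-simp-ctx}, the binding case of part~\Item{nf-simp-ctx} appeals to part~\Item{nf-simp-kd}, and the interval case of part~\Item{nf-simp-kd} appeals to part~\Item{nf-simp-tp}~-- but in each of these the judgment appealed to is literally a premise of the rule in question, hence a strict sub-derivation, so the combined induction is well-founded. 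Before starting I would collect the routine shape-stability facts that keep the bookkeeping manageable, almost all already available: $\ksimp{\nfCtx{\Gamma}} = \ksimp{\Gamma}$ and $\ksimp{\Gamma}(X) = \ksimp{\Gamma(X)}$ (pointwise from the definitions), stability of shapes under normalization ($\ksimp{\nf{\Gamma}{K}} \alEq \ksimp{K}$, \Lem{simp-stable-nf}), stability of shapes under (hereditary) substitution (\Lem{simp-stable-hsubst}), and the observation that subkinding preserves shapes, i.e.\ $\Gamma \ts J \ksub K$ implies $\ksimp{J} \alEq \ksimp{K}$ (immediate, since there is exactly one subkinding rule per shape).

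With those facts in hand, most cases are mechanical. For \ruleref{K-Top}, \ruleref{K-Bot}, \ruleref{K-Arr} and \ruleref{Wf-Intv} one reads off the matching simple rule after applying the induction hypothesis to the subterms. For \ruleref{K-Sing} the conclusion is literally the induction hypothesis, since $\ksimp{A \intv A} = \kstar = \ksimp{B \intv C}$ and $\nfRaw$ ignores the ascribed kind. For \ruleref{K-Sub} one uses shape-preservation of subkinding to rewrite the shape delivered by the induction hypothesis. For \ruleref{K-Var} one first gets $\ctxS{\nfCtx{\Gamma}}$ from part~\Item{nf-simp-ctx}, then \Lem{sc-lookup} gives $\kindS{\ksimp{\nfCtx{\Gamma}}}{\nfCtx{\Gamma}(X)}$, so by \ruleref{SK-Empty} and \ruleref{SK-VarApp} the variable $X$ is a neutral of shape $\ksimp{\nfCtx{\Gamma}(X)}$, and \Lem{sk-etaexp} shows $\nf{\nfCtx{\Gamma}}{X} = \etaExp{\nfCtx{\Gamma}(X)}{X}$ has that same shape, which equals $\ksimp{K}$. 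For the binders \ruleref{Wf-DArr}, \ruleref{K-Abs} and \ruleref{K-All} one feeds the induction hypothesis for the body into the corresponding simple rule, using $\nfCtx{\Gamma, X \kas J} = \nfCtx{\Gamma}, X \kas \nf{\nfCtx{\Gamma}}{J}$, that $\ksimp{\nfCtx{\Gamma}, X \kas \nf{\nfCtx{\Gamma}}{J}} = \ksimp{\nfCtx{\Gamma}}, X \kas \ksimp{\nf{\nfCtx{\Gamma}}{J}}$, and $\ksimp{\nf{\nfCtx{\Gamma}}{J}} \alEq \ksimp{J}$. Part~\Item{nf-simp-ctx} follows the same pattern: the empty case is \ruleref{C-Empty}, and for an added binding one invokes part~\Item{nf-simp-kd} (resp.\ part~\Item{nf-simp-tp} for a term binding) to discharge the premise of the simple context-formation rule.

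The genuine obstacle is the application case \ruleref{K-App}, with $A = \app{A_1}{A_2}$, $K = \subst{K'}{X}{A_2}$, $\Gamma \ts A_1 \kin \dfun{X}{J}{K'}$ and $\Gamma \ts A_2 \kin J$. By the induction hypothesis $\ksimp{\nfCtx{\Gamma}} \ts \nf{\nfCtx{\Gamma}}{A_1} \kin \fun{\ksimp{J}}{\ksimp{K'}}$ and $\ksimp{\nfCtx{\Gamma}} \ts \nf{\nfCtx{\Gamma}}{A_2} \kin \ksimp{J}$. Here I use that the only simple kinding rule whose conclusion has arrow shape is \ruleref{SK-Abs}: hence $\nf{\nfCtx{\Gamma}}{A_1}$ must be an operator abstraction $\lam{X}{L}{E}$ with $\ksimp{L} \alEq \ksimp{J}$ and $\ksimp{\nfCtx{\Gamma}}, X \kas \ksimp{J} \ts E \kin \ksimp{K'}$. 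This rules out the ``otherwise'' branch in the definition of $\nf{\nfCtx{\Gamma}}{\app{A_1}{A_2}}$ --~which would leave an un-$\eta$-expanded neutral of arrow shape, impossible for a simply kinded normal form~-- so $\nf{\nfCtx{\Gamma}}{\app{A_1}{A_2}} \alEq \hsubst{E}{X}{\ksimp{L}}{\nf{\nfCtx{\Gamma}}{A_2}}$. Now the hereditary-substitution lemma (part~\Item{simp-hsubst-nf} of \Lem{simp-hsubst}) applies, since $\nf{\nfCtx{\Gamma}}{A_2}$ has shape $\ksimp{J} \alEq \ksimp{L}$, yielding $\ksimp{\nfCtx{\Gamma}} \ts \hsubst{E}{X}{\ksimp{L}}{\nf{\nfCtx{\Gamma}}{A_2}} \kin \ksimp{K'}$, and $\ksimp{K'} = \ksimp{\subst{K'}{X}{A_2}} = \ksimp{K}$ by stability of shapes under substitution. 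This is exactly where passing to shapes pays off: although \ruleref{K-App} substitutes into a dependent codomain, the substitution is invisible on shapes, so no commutativity lemma is needed here --~only \Lem{simp-hsubst}, which itself was proved without commutativity lemmas precisely because shapes carry no dependencies.
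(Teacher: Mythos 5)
Your proposal is correct and matches the paper's proof essentially step for step: a simultaneous induction on the three declarative judgments, with the same treatment of the key cases — \ruleref{K-Sub} via shape-preservation of subkinding (\Lem{simp-decl-sk}), \ruleref{K-Var} via part~(3), \ruleref{SC-TpLookup} and \Lem{sk-etaexp}, and \ruleref{K-App} by inverting the arrow-shaped IH (only \ruleref{SK-Abs} applies) and then closing with \Lem{simp-hsubst} and stability of shapes under substitution. The only cosmetic difference is that you spell out the routine cases and shape-bookkeeping facts that the paper leaves implicit.
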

The proof uses the following helper lemma about the shapes of
subkinds, which is proven by straightforward induction on subkinding
derivations.
\begin{lemma}\label{lem:simp-decl-sk}
  Subkinds have equal shapes.  If $\; \Gamma \ts J \ksub K$, then
  $\ksimp{J} \alEq \ksimp{K}$.
\end{lemma}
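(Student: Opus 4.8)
The plan is a direct induction on the derivation of $\Gamma \ts J \ksub K$. Per \Fig{decl_rules2}, the declarative subkinding judgment is generated by exactly two rules, \ruleref{SK-Intv} and \ruleref{SK-DArr}; the order-theoretic rules \ruleref{KS-Refl} and \ruleref{KS-Trans} are merely admissible (see \Sec{decl_order}), not primitive, so every derivation of a subkinding judgment uses only these two rules. Hence there are just two cases to check.

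If the derivation ends in \ruleref{SK-Intv}, then $J \alEq A_1 \intv B_1$ and $K \alEq A_2 \intv B_2$ for some types $A_1, B_1, A_2, B_2$, and the definition of the erasure map in \Fig{encodings} gives $\ksimp{J} \alEq \kstar \alEq \ksimp{K}$ outright, with no appeal to the premises. If the derivation ends in \ruleref{SK-DArr}, then $J \alEq \dfun{X}{J_1}{K_1}$ and $K \alEq \dfun{X}{J_2}{K_2}$, with premises $\Gamma \ts J_2 \ksub J_1$ and $\Gamma, X \kas J_2 \ts K_1 \ksub K_2$. Applying the induction hypothesis to each yields $\ksimp{J_1} \alEq \ksimp{J_2}$ and $\ksimp{K_1} \alEq \ksimp{K_2}$; the shift of context between the two premises is irrelevant here, since the shape $\ksimp{K}$ of a kind makes no mention of a context. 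By definition of erasure, $\ksimp{J} \alEq \fun{\ksimp{J_1}}{\ksimp{K_1}} \alEq \fun{\ksimp{J_2}}{\ksimp{K_2}} \alEq \ksimp{K}$, closing the case.

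There is no real obstacle here: the statement is a bookkeeping consequence of the fact, emphasised in \Sec{kind_pres}, that \FOmegaInt{} has exactly one subkinding rule per shape of kind, together with the equation-forgetting nature of the erasure map. The only point worth stating explicitly is the reduction to the two primitive subkinding rules; even if one insisted on closing subkinding under reflexivity and transitivity first, the \ruleref{KS-Refl} case is trivially $\ksimp{K} \alEq \ksimp{K}$ and the \ruleref{KS-Trans} case follows from transitivity of $\alEq$.
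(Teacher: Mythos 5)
Your proof is correct and follows the same route as the paper's, which simply states the result is "proven by straightforward induction on subkinding derivations" — your two cases (\ruleref{SK-Intv} giving $\kstar \alEq \kstar$ outright, \ruleref{SK-DArr} by the induction hypothesis on both premises plus the definition of erasure) are exactly what that induction unfolds to. The observations that the context shift in the second premise of \ruleref{SK-DArr} is irrelevant and that reflexivity/transitivity are only admissible rules are accurate and require no further justification.
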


\begin{proof}[Proof of \Lem{nf-simp}]
  Simultaneously by induction on declarative kind formation, kinding,
  and context formation derivations.  The only interesting cases
  are~\ruleref{K-Sub} (where we use \Lem{simp-decl-sk}),
  \ruleref{K-Var} and \ruleref{K-App}.  In the case for
  \ruleref{K-Var}, where $A = X$, we use the IH for
  part~\Item{nf-simp-ctx} and \ruleref{SC-TpLookup} to obtain
  $\kindS{\ksimp{\nfCtx{\Gamma}}}{K}$ for $K = \nfCtx{\Gamma}(X)$, and
  we conclude by \Lem{sk-etaexp}.  In the case for~\ruleref{K-App},
  where $A = \app{A_1}{A_2}$, we start by applying the IH to obtain
$\ksimp{\nfCtx{\Gamma}} \ts \nf{\nfCtx{\Gamma}}{A_1} \kin
  \fun{\ksimp{K_1}}{\ksimp{K_2}}$, and
$\ksimp{\nfCtx{\Gamma}} \ts \nf{\nfCtx{\Gamma}}{A_2} \kin
  \ksimp{K_1}$.
The first of these judgments must be derived using~\ruleref{SK-Abs}
  because that is the only simple kinding rule assigning an arrow kind
  to a type.  Hence $\nf{\nfCtx{\Gamma}}{A_1} = \lam{X}{J}{V}$ for
  some $J$ and $V$ such that $\simpEq{J}{K_1}$ and
  $\ksimp{\nfCtx{\Gamma}}, X \kas \ksimp{K_1} \ts V \kin \ksimp{K_2}$.
  We conclude by the hereditary substitution lemma for normal types
  (\Lem{simp-hsubst}.\Item{simp-hsubst-nf}) and
  \Lem{simp-stable-hsubst}.
\end{proof}

\subsubsection{Commutativity of Normalization and Substitution}
\label{sec:comm-nf-subst}

Our final task in this section is to establish another commutativity
property that will play a crucial role in proving equality of
declarative and canonical subtyping: the fact that normalization
commutes with substitution.

In the past few sections, we have seen that well-formed kinds and
well-kinded types have normal forms (\Lem{nf-simp}) and that these
normal forms are convertible to the kinds and types they were computed
from (\Lem{nf-sound}).  By validity, context conversion and kind
conversion, this means that every declarative subtyping judgment
$\Gamma \ts A \tsub B \kin K$ has an associated judgment
$\nfCtx{\Gamma} \ts \nf{}{A} \tsub \nf{}{B} \kin \nf{}{K}$ relating
the normal forms of the original expressions.

In \Sec{canonical} of the paper, we describe a system of canonical
rules for deriving such judgments which are defined directly on normal
forms -- similar to the simple kinding and kind formation judgments
introduced in this section.
The proof of equivalence of the two systems requires one canonical
rule
--~possibly a derivable or admissible one~--
for every declarative rule.
But some of the declarative rules, such as the subtyping
rules~\rulerefN{ST-Beta1}{ST-$\beta_{1,2}$} for $\beta$-conversions,
or the kinding rule~\ruleref{K-App} for applications, involve
substitutions, which do not preserve normal forms.
To see why this is
a problem, consider the declarative rule~\ruleref{K-App}:
\begin{center}
  \infruleSimp
  {\Gamma \ts A \kin \dfun{X}{J}{K}  \andalso  \Gamma \ts B \kin J}
  {\Gamma \ts \app{A}{B} \kin \subst{K}{X}{B}}
\end{center}
By soundness of normalization (\Lem{nf-sound}), type equation validity
(\Lem{decl_validity}), and context conversion (\Cor{decl_ctx_conv}),
we know that the following is also admissible:
\begin{center}
  \infruleSimp
  {\nfCtx{\Gamma} \ts U \kin \dfun{X}{J'}{K'}  \andalso
   \nfCtx{\Gamma} \ts V \kin J'}
  {\nfCtx{\Gamma} \ts \nf{}{\app{A}{B}} \kin \nf{}{\subst{K}{X}{B}}}
\end{center}
where $U = \nf{}{A}$, $V = \nf{}{B}$,
$J' = \nf{}{J}$ and $K' = \nf{}{K}$.  By \Lem{nf-simp}, we know that
$U$ and $V$ are simply well-kinded normal types, and that $J'$ and
$K'$ are simply well-formed normal kinds.  For our canonical
application rule, we would like to express the type
$\nf{}{\app{A}{B}}$ and the kind $\nf{}{\subst{K}{X}{B}}$ in the
conclusion directly using $U$, $V$, $J'$ and $K'$.  This is relatively
straightforward for the application $\nf{}{\app{A}{B}}$ because we
know that $U$ must be an operator abstraction $U = \lam{X}{L}{W}$;
after all, $U$ has shape $\fun{\ksimp{J}}{\ksimp{K}}$ and normal
forms are $\eta$-long.  We also know that
$\ksimp{L} \alEq \ksimp{J} \alEq \ksimp{J'}$ (see the proof of
\Lem{nf-simp} for details).  Hence
$\nf{}{\app{A}{B}} \alEq \hsubst{W}{X}{\ksimp{J'}}{V}$ by definition
of $\nfRaw$, and we are done.

Things are more complicated for the normal kind
$\nf{}{\subst{K}{X}{B}}$.  The definition of the normalization
function $\nfRaw$ does not tell us anything immediately useful about
substitutions.  Indeed, we know that substitutions do not preserve
normal forms, \eg $\subst{(\app{Y}{V})}{Y}{\lam{X}{J'}{W}}$ is not a
normal form, even if $\app{Y}{V}$ and $\lam{X}{J'}{W}$ are.  However,
\Cor{hsubst-sound}.\Item{hsubst-snd-kind} tells us that substitutions
in kinds are judgmentally equal to hereditary substitutions, \ie
$\Gamma \ts \subst{K}{X}{B} \keq \hsubst{K}{X}{\ksimp{J'}}{B}$, and
\Lem{simp-hsubst}.\Item{simp-hsubst-kds} tells us that hereditary
substitutions preserve normal forms, all of which suggests that
$\nf{}{\subst{K}{X}{B}}$ should be equal to
$\hsubst{K'}{X}{\ksimp{J'}}{V}$.  This is indeed the case; one can
show that
$\Gamma \ts \nf{}{\subst{K}{X}{B}} \keq
\hsubst{K'}{X}{\ksimp{J'}}{V}$.
But there is a caveat: the two normal forms are not
\emph{syntactically} equal, \ie
$\nf{}{\subst{K}{X}{B}} \, \not\alEq \, \hsubst{K'}{X}{\ksimp{J'}}{V}$.
Similarly,
$\nf{}{\subst{A}{X}{B}} \, \not\alEq \, \hsubst{\nf{}{A}}{X}{k}{\nf{}{B}}$
for types $A$ and $B$ in general.

This fact is best illustrated through the case of type variables,
\ie when $A = X$ and we have
$\nf{\Gamma}{\subst{X}{X}{B}} \alEq \nf{\Gamma}{B}$ and
$\hsubst{(\nf{\Gamma}{X})}{X}{k}{\nf{}{B}} \alEq
\hsubst{(\etaExp{\Gamma(X)}{X})}{X}{k}{\nf{}{B}}$.
We would like to show that $\hsubst{(\etaExp{K}{X})}{X}{k}{V}$ is
syntactically equal to $V$ at least when all the involved types and
kinds are well-kinded and well-formed, \ie when
$\Gamma \ts X \kin K$, $\; \Gamma \ts V \kin K$ and $k = \ksimp{K}$.
But this is not the case.  The culprit is a mismatch of kind
annotations in operator abstractions, as illustrated by the following
counterexample.

Let $J_1 = \Top \intv \Top$ and $J_2 = \kstar$ so that
$\Gamma \ts J_1 \ksub J_2$ for any context $\Gamma$.  Let $U = \Top$,
$V = \lam{X}{J_2}{U}$ and $K = \dfun{X}{J_1}{\kstar}$ so that
$\Gamma \ts V \kin \dfun{X}{J_2}{\kstar}$,
$\Gamma \ts \dfun{X}{J_2}{\kstar} \ksub K$ and hence
$\Gamma \ts V \kin K$. Then
\begin{alignat*}{2}
  \etaExp{K}{Y} &&& \; \alEq \; \lam{Z}{J_1}{\app{Y}{Z}}\\
  \hsubst{\etaExp{K}{Y}&}{Y}{\ksimp{K}}{V} && \; \alEq \;
    \hsubst{(\lam{Z}{J_1}{\app{Y}{Z}})}{Y}{\ksimp{K}}{V} \\
  &&& \; \alEq \; \lam{Z}{J_1}{\hsubst{(\app{Y}{Z})}{Y}{\ksimp{K}}{V}}
    \tag{because $Y \notin \fv(J_1)$}\\
  &&& \; \alEq \; \lam{Z}{J_1}{\rapp{V}{\ksimp{K}}{Z}} \\
  &&& \; \alEq \;
    \lam{Z}{J_1}{\rapp{\lam{X}{J_2}{U}}{\fun{\ksimp{J_1}}{\kstar}}{Z}} \\
  &&& \; \alEq \; \lam{Z}{J_1}{\hsubst{U}{X}{\ksimp{J_1}}{Z}} \\
&&& \; \alEq \; \lam{X}{J_1}{U}  \tag{as $X, Z \notin \fv(U)$} \\
  &&& \; \not\alEq \; \lam{X}{J_2}{U} \; \alEq \; V.
    \tag{because $J_1 \not\alEq J_2$}
\end{alignat*}
So we are forced to conclude that
$\hsubst{\etaExp{K}{Y}}{Y}{\ksimp{K}}{V} \not\alEq V$ in general.  The
problem, as illustrated by this example, is that the domain annotation
$J_2$ of the type operator abstraction $V = \lam{X}{J_2}{U}$ is not
necessarily preserved by the hereditary substitution.  It is replaced
by the domain $J_1$ of the declared kind $K = \dfun{X}{J_1}{U}$ of
$Y$, which need not be \emph{syntactically} equal to $J_1$.

However, we do have $\Gamma \ts J_1 \ksub J_2$ and thus
$\simpEq{J_1}{J_2}$.  The solution, therefore, is to be more lenient
when comparing domain annotations in operator abstractions: the
\emph{weak} equation $\hsubst{\etaExp{K}{Y}}{Y}{\ksimp{K}}{V} \wkEq V$
does hold.  In fact, it holds for any simply well-formed kind $K$ and
simply well-kinded type $U$, as the following lemma shows.

\begin{lemma}\label{lem:etaexp-hsubst}
  ~
  \begin{enumerate}[nosep]
  \item\label{item:etaexp-ne-hsubst} Let
    $\; \kindS{\gamma, X \kas j, \delta}{K}$, $\; \gamma \ts U \kin j$
    and
    $\; \gamma, X \kas j, \delta \tsNe \app{X}{\vec{V}} \kin
    \ksimp{K}$.
    Then
    \[
      \hsubst{(\etaExp{K}{\app{X}{\vec{V}}})}{X}{j}{U} \; \wkEq \;
      \hsubst{(\app{X}{\vec{V}})}{X}{j}{U}.
\]
  \end{enumerate}
  Let $\kindS{\gamma}{K}$, then
  \begin{enumerate}[resume,nosep]
  \item\label{item:hsubst-kds-etaexp} if
    $\; \kindS{\gamma, X \kas \ksimp{K}, \delta}{J}$, then
    $\hsubst{J}{X}{\ksimp{K}}{\etaExp{K}{X}} \; \wkEq \; J$;
  \item\label{item:hsubst-nf-etaexp} if
    $\; \gamma, X \kas \ksimp{K}, \delta \ts V \kin j$, then
    $\hsubst{V}{X}{\ksimp{K}}{\etaExp{K}{X}} \; \wkEq \; V$;
  \item\label{item:hsubst-sp-etaexp} if
    $\; \spS{\gamma, X \kas \ksimp{K}, \delta}{j}{\vec{V}}{l}$, then
    $\hsubst{\vec{V}}{X}{\ksimp{K}}{\etaExp{K}{X}} \; \wkEq \; \vec{V}$;
  \item\label{item:etaexp-rapp-nf} if $\; K = \dfun{X}{K_1}{K_2}$,
    $\; \gamma \tsNe N \kin \ksimp{K}$ and
    $\; \gamma \ts V \kin \ksimp{K_1}$, then
    $\; \rapp{\etaExp{K}{N}}{\ksimp{K}}{V} \; \wkEq \;
    \etaExp{\hsubst{K_2}{X}{\ksimp{K_1}}{V}}{\app{N}{V}}$;
  \item\label{item:etaexp-rapp-sp} if $\; \gamma \tsNe N \kin \ksimp{K}$
    and $\spS{\gamma}{\ksimp{K}}{\vec{V}}{\kstar}$, then
    $\rapp{\etaExp{K}{N}}{\ksimp{K}}{\vec{V}} \; \wkEq \; \app{N}{\vec{V}}$.
  \end{enumerate}
\end{lemma}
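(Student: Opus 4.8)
The plan is to prove all six parts simultaneously by induction on the structure of the kind~$K$, with the branching of the argument mirroring the mutually recursive definitions of $\eta$-expansion, hereditary substitution and reducing application, exactly as in the proofs of \Lem{simp-hsubst} and \Lem{hsubst-comm}. Parts~(2)--(4) are driven by an inner induction on the simple-kinding derivation of their subject ($J$, $V$, $N$ or $\vec V$); part~(1) by a case split on the shape of~$K$; parts~(5)--(6) by unfolding $\etaExp{K}{N}$ together with the reducing application. As in \Lem{hsubst-comm}, the kind~$K$ (equivalently, its shape) does not strictly shrink in every recursive call --- part~(6) recurses on the tail of a spine, and the neutral case of part~(3) appeals to parts~(4) and~(6) at an index of equal size, which via the variable clause of reducing application re-enters parts~(3)/(1) --- so the induction measure has to be the same kind of simultaneous/lexicographic order on shapes used for \Lem{hsubst-comm}, together with the observation that every ``flat'' cycle through the call graph must eventually pass through a reducing-application step at which the shape strictly decreases.

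For the structural parts~(2)--(4), hereditary substitution is pushed through the type- and kind-formers using that $\wkEq$ is a congruence, applying the IH to the immediate subexpressions and the weakening lemmas \Lem{simp-weaken}/\Cor{simp-iter-weaken} whenever a binder extends the context; \ruleref{WEq-Abs} covers the operator-abstraction cases, which is precisely where weak (rather than syntactic) equality is forced. The only non-routine leaf is the neutral case of part~(3): for $V = \app{Y}{\vec W}$ with $Y \ne X$ the result follows from the IH on the spine; for $Y = X$, the hereditary substitution unfolds to $\rapp{(\etaExp{K}{X})}{\ksimp{K}}{(\hsubst{\vec W}{X}{\ksimp{K}}{\etaExp{K}{X}})}$, the IH for part~(4) rewrites the substituted spine to $\vec W$ up to $\wkEq$, \Lem{weq-par-hsubst} lifts this through the reducing application, and part~(6) --- applied with the neutral~$X$ itself, of shape~$\ksimp{K}$ --- collapses $\rapp{(\etaExp{K}{X})}{\ksimp{K}}{\vec W}$ to $\app{X}{\vec W} = V$.

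Part~(1) is the crux. The interval case is trivial, since then $\etaExp{K}{\app{X}{\vec V}} = \app{X}{\vec V}$. In the arrow case $K = \dfun{Y}{K_1}{K_2}$ one unfolds $\etaExp{K}{\app{X}{\vec V}}$ to $\lam{Y}{K_1}{\etaExp{K_2}{\app{X}{(\scons{\vec V}{\etaExp{K_1}{Y}})}}}$, applies the IH for part~(1) at the smaller kind~$K_2$ (re-establishing its simple-kinding side conditions via \Lem{sk-etaexp}, \ruleref{SK-Snoc} and \Lem{simp-hsubst}), and is left with a residual hereditary substitution into the body of~$U$. Since $U$ has arrow shape, inversion of simple kinding (the only rule is \ruleref{SK-Abs}) gives $U = \lam{Z}{L}{W}$, and the residual substitution is dispatched by commuting it past the $\eta$-expansion variable using \Lem{hsubst-comm}, invoking part~(3) on~$W$, and gluing with \Lem{weq-par-hsubst} and \Lem{weq-etaexp}; the domain annotation on the result is $\hsubst{K_1}{X}{j}{U}$, which agrees with~$L$ only up to shape (by \Lem{simp-stable-hsubst}) --- this is exactly the discrepancy exhibited by the counterexample preceding the lemma, and is why the conclusion can only be stated up to $\wkEq$. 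Parts~(5)--(6) follow the same pattern: (6) iterates~(5) over the spine, lifting equations through reducing application with \Lem{weq-par-hsubst}; and (5), in the arrow case, reduces $\rapp{(\etaExp{K}{N})}{\ksimp{K}}{V}$ to a hereditary substitution into $\etaExp{K_2}{\app{N}{\etaExp{K_1}{Z}}}$, which the IH for~(5) on~$K_2$ together with part~(3) (to eliminate the inner $\etaExp{K_1}{Z}$) rewrites to $\etaExp{\hsubst{K_2}{X}{\ksimp{K_1}}{V}}{\app{N}{V}}$, again only up to $\wkEq$.

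The main obstacle is not any individual case but the combination of three things. First, the interleaving of $\eta$-expansion, hereditary substitution and reducing application means the recursion is not structural, so the well-founded measure of \Lem{hsubst-comm} has to be adapted and the non-decreasing recursive calls justified exactly as there. Second, the six claims depend on one another cyclically --- in particular~(1) on~(3) and~(6), and~(3) on~(1), (4) and~(6) --- so they must genuinely be proven in a single simultaneous induction, and one has to check that each appeal to the IH is at a strictly smaller element of the measure. Third, because domain annotations on operator abstractions are preserved only up to shape under hereditary substitution, every gluing step must go through the weak-equality congruence lemmas (\Lem{weq-par-hsubst}, \Lem{weq-etaexp}, \ruleref{WEq-Abs}) rather than syntactic equality; keeping this discipline uniform across all the cases is the bulk of the bookkeeping.
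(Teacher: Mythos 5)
Your overall strategy coincides with the paper's: all six parts are proved simultaneously by induction on (the shape of) $K$, with an inner induction on the simple formation/kinding derivations for parts (2)--(4); the crux cases you single out --- the arrow case of part (1), handled by the IH at $K_2$, an inversion yielding an operator abstraction, part (3) to erase the residual substitution of the $\eta$-expanded bound variable, and \ruleref{WEq-Abs} to absorb the mismatched domain annotation --- and the neutral case of part (3), handled via parts (4) and (6), are exactly the cases the paper works out, with the same weak-equality congruence discipline throughout.

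Two of your justifications, however, would not go through as written. First, in part (5) you eliminate the inner $\etaExp{K_1}{Z}$ ``by part (3)'': part (3) says that substituting $\etaExp{K}{X}$ \emph{for} $X$ in a normal form is a no-op, whereas what is needed here is the converse fact that substituting an arbitrary normal $V$ \emph{into} the $\eta$-expansion of the variable vanishes, i.e. $\hsubst{(\etaExp{K_1}{X})}{X}{\ksimp{K_1}}{V} \wkEq V$; this is part (1) with the empty spine (\Cor{etaexp-var-hsubst}), and the ``IH for (5) on $K_2$'' does not supply the missing commutation either. Second, in both the arrow case of part (1) and in part (5) you must push a hereditary substitution for a \emph{non-head} variable through an $\eta$-expansion, $\hsubst{(\etaExp{K'}{\app{Y}{\vec{D}}})}{X}{j}{E} \alEq \etaExp{\hsubst{K'}{X}{j}{E}}{\app{Y}{(\hsubst{\vec{D}}{X}{j}{E})}}$; neither \Lem{hsubst-comm} (which commutes two hereditary substitutions) nor \Lem{weq-etaexp} gives this --- the paper relies on the dedicated syntactic helper \Lem{etaexp-ne-hsubst-miss}. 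Relatedly, in part (1) the paper inverts not $U$ but the substituted neutral $\hsubst{(\app{X}{\vec{V}})}{X}{j}{U} \alEq \rapp{U}{j}{(\hsubst{\vec{V}}{X}{j}{U})}$, which by \Lem{simp-hsubst} is a normal form of arrow shape and hence an abstraction $\lam{Y}{J}{W}$; this is what lets the final reducing application against $\etaExp{\hsubst{K_1}{X}{j}{U}}{Y}$ collapse, via part (3), directly onto the body $W$ of the right-hand side, whereas inverting $U$ alone leaves you to thread the entire spine through iterated reducing applications, which your sketch does not account for.
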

\begin{corollary}\label{cor:etaexp-var-hsubst}
  If $\; \kindS{\gamma, X \kas j, \delta}{K}$ and
  $\; \gamma \ts U \kin j$, then
  $\hsubst{(\etaExp{K}{X})}{X}{j}{U} \; \wkEq \; U$.
\end{corollary}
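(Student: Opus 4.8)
The plan is to obtain Corollary~\ref{cor:etaexp-var-hsubst} as the empty-spine instance of \Lem{etaexp-hsubst}.\Item{etaexp-ne-hsubst}. First I would observe that $\app{X}{\sempty} \alEq X$, so the type $\etaExp{K}{X}$ appearing in the corollary is exactly the $\etaExp{K}{\app{X}{\vec{V}}}$ of part~\Item{etaexp-ne-hsubst} taken with $\vec{V} \alEq \sempty$, and similarly $\hsubst{(\app{X}{\vec{V}})}{X}{j}{U}$ degenerates to $\hsubst{X}{X}{j}{U}$.

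Next I would discharge the neutrality hypothesis $\gamma, X \kas j, \delta \tsNe X \kin \ksimp{K}$ that this part of the lemma requires. It follows from \ruleref{SK-VarApp}, using the context lookup $(\gamma, X \kas j, \delta)(X) = j$ and the empty-spine rule \ruleref{SK-Empty}, which gives $\spS{\gamma, X \kas j, \delta}{j}{\sempty}{j}$; the judgment therefore holds precisely when $j \alEq \ksimp{K}$, which is the intended reading of the statement --- the shape annotating $X$ is the shape of its kind $K$, as is the case wherever the corollary is applied (the type-variable case of the commutativity argument for normalisation and substitution). With this premise available, \Lem{etaexp-hsubst}.\Item{etaexp-ne-hsubst} yields $\hsubst{(\etaExp{K}{X})}{X}{j}{U} \wkEq \hsubst{X}{X}{j}{U}$.

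Finally I would evaluate the right-hand side using the definitions in \Fig{hsubst}: the variable clause with a matching head and empty spine gives $\hsubst{X}{X}{j}{U} \alEq \rapp{U}{j}{\sempty} \alEq U$, since hereditary substitution on the empty spine is the empty spine and reducing application to the empty spine is the identity. Because $\alEq$ implies $\wkEq$ and $\wkEq$ is transitive, this chains to $\hsubst{(\etaExp{K}{X})}{X}{j}{U} \wkEq U$, which is the corollary.

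There is no substantive obstacle in the corollary itself: it is a two-line consequence once Lemma~\ref{lem:etaexp-hsubst} is available, and all of that lemma's real content --- the simultaneous induction on the shape of $K$, the case analysis involving \ruleref{WEq-Abs}, and the mutual recursion with reducing application in parts~\Item{etaexp-rapp-nf}--\Item{etaexp-rapp-sp} --- is assumed. The only point that merits a sentence of explanation is that, for the empty spine, the side condition of part~\Item{etaexp-ne-hsubst} reduces to $j \alEq \ksimp{K}$.
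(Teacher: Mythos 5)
Your derivation is correct and is exactly the paper's intended route: the corollary is the empty-spine instance of \Lem{etaexp-hsubst}.\Item{etaexp-ne-hsubst}, with the neutrality premise discharged via \ruleref{SK-VarApp} and \ruleref{SK-Empty}, and the right-hand side collapsing to $\rapp{U}{j}{\sempty} \alEq U$, which chains to the conclusion since $\wkEq$ contains $\alEq$ and is transitive. Your remark that this reading carries the implicit side condition $j \alEq \ksimp{K}$ is also accurate and matches how the corollary is actually invoked in the paper (e.g.\ in the \ruleref{K-Var} case of \Lem{weq-nf-hsubst}, where the substitution shape is $\ksimp{K}$ by stability of shapes under normalization).
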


It is in this lemma that we see the true usefulness of weak equality.
While syntactic equality is too strict for this particular
commutativity property, using judgmental type and kind equality would
have forced us to formulate its premises in terms of declarative
kinding.  This would have resulted in a weaker lemma with a more
complicated proof.
In \Sec{weq-canon} we show that weak equations can be converted
into judgmental ones provided the related types or kinds are
well-kinded or well-formed, respectively.  Hence, weak equality
affords us a relatively straightforward proof of this lemma (and the
next) with a minimal overhead in complexity.

In the proof of \Lem{etaexp-hsubst}, we employ the following helper
lemmas.
They are proven separately by three easy inductions:
the first on the derivation of $K_1 \wkEq K_2$, the second on the
structure of the kind $K$, and the third on the derivation of
$\; \spS{\gamma}{j}{\vec{V}}{k}$.
\begin{lemma}\label{lem:weq-simpeq}
  Weakly equal kinds have equal shapes.  If $K_1 \wkEq K_2$, then
  $\simpEq{K_1}{K_2}$.
\end{lemma}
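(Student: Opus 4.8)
The plan is to prove the lemma by a straightforward induction on the derivation of $K_1 \wkEq K_2$, using the fact that weak equality is the \emph{smallest} congruence containing syntactic equality and \ruleref{WEq-Abs}. Since the subjects $K_1$ and $K_2$ are kinds, inversion tells us that the final rule in the derivation can only be one of three things: an instance of syntactic equality $K_1 \alEq K_2$; an interval congruence with $K_1 = U_1 \intv V_1$, $K_2 = U_2 \intv V_2$ and sub-derivations $U_1 \wkEq U_2$, $V_1 \wkEq V_2$; or a dependent-arrow congruence with $K_1 = \dfun{X}{J_1}{L_1}$, $K_2 = \dfun{X}{J_2}{L_2}$ and sub-derivations $J_1 \wkEq J_2$, $L_1 \wkEq L_2$. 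The rule \ruleref{WEq-Abs} yields a \emph{type} equation, so it never concludes a weak equation between two kinds and needs no case of its own; consequently the induction, although it ranges over a relation that mixes types and kinds, only ever appeals to the induction hypothesis on kind-equation sub-derivations.

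First, the syntactic-equality case is immediate, since $\ksimp{-}$ is well defined on $\alpha$-equivalence classes, so $K_1 \alEq K_2$ gives $\ksimp{K_1} \alEq \ksimp{K_2}$. Second, in the interval case both sides erase to $\kstar$ directly by the definition of $\ksimp{-}$, irrespective of the bounds, so $\simpEq{K_1}{K_2}$ holds without even invoking the inductive hypotheses on $U_1 \wkEq U_2$ and $V_1 \wkEq V_2$. Third, in the dependent-arrow case I apply the induction hypothesis to the sub-derivations $J_1 \wkEq J_2$ and $L_1 \wkEq L_2$, obtaining $\ksimp{J_1} \alEq \ksimp{J_2}$ and $\ksimp{L_1} \alEq \ksimp{L_2}$; unfolding $\ksimp{\dfun{X}{J_i}{L_i}} = \fun{\ksimp{J_i}}{\ksimp{L_i}}$ then gives $\ksimp{K_1} \alEq \ksimp{K_2}$, closing the case.

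I do not anticipate any genuine obstacle: the lemma is essentially a sanity check recording that $\wkEq$ is coarse enough to ignore domain annotations yet fine enough never to collapse arrow structure. The only points needing a little care are bookkeeping. One is the dependency ordering — this lemma is consumed by \Lem{etaexp-hsubst} and by the weak-equality congruence lemmas, so it must be proven independently of them; since its proof uses nothing beyond the definitions of $\wkEq$ and $\ksimp{-}$, there is no circularity. The other is purely stylistic: one may equivalently phrase the induction on the derivation of $K_1 \wkEq K_2$ or on the common structure of $K_1$ and $K_2$ obtained after inverting the congruence rules, and both presentations go through identically.
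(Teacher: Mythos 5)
Your proof is correct and follows exactly the route the paper takes: the paper disposes of this lemma as "an easy induction on the derivation of $K_1 \wkEq K_2$," which is precisely your argument, with the three cases (syntactic equality, interval congruence erasing to $\kstar$, arrow congruence via the IH) filled in correctly and the observation that \ruleref{WEq-Abs} never concludes a kind equation handled appropriately.
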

\begin{lemma}\label{lem:etaexp-ne-hsubst-miss}
Let $X \neq Y$, then
  $
    \hsubst{(\etaExp{K}{\app{X}{\vec{D}}})}{Y}{j}{E} \; \alEq \;
    \etaExp{\hsubst{K}{Y}{j}{E}}{\app{X}{(\hsubst{\vec{D}}{Y}{j}{E})}}
  $
  for any $K$, $\vec{D}$, $j$ and $E$.
\end{lemma}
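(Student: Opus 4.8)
The plan is to prove the equation by structural induction on the kind $K$, following the recursive structure of $\eta$-expansion (and hence of hereditary substitution on kinds).

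In the base case $K = D_1 \intv D_2$, we have $\etaExp{K}{\app{X}{\vec{D}}} \alEq \app{X}{\vec{D}}$, so the left-hand side is $\hsubst{(\app{X}{\vec{D}})}{Y}{j}{E}$, which equals $\app{X}{(\hsubst{\vec{D}}{Y}{j}{E})}$ since $X \neq Y$. On the right, $\hsubst{K}{Y}{j}{E} \alEq \hsubst{D_1}{Y}{j}{E} \intv \hsubst{D_2}{Y}{j}{E}$ is again an interval kind, and $\eta$-expansion at an interval is the identity, so the right-hand side also reduces to $\app{X}{(\hsubst{\vec{D}}{Y}{j}{E})}$.

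For the inductive step $K = \dfun{Z}{J}{K'}$, by $\alpha$-renaming (bound and free variables are kept distinct) I may assume $Z \neq X$, $Z \neq Y$ and $Z \notin \fv(E)$. First I would unfold $\etaExp{K}{\app{X}{\vec{D}}} \alEq \lam{Z}{J}{\etaExp{K'}{\app{X}{(\scons{\vec{D}}{\etaExp{J}{Z}})}}}$, writing the inner application in spine form. Pushing $[E/Y^j]$ through the abstraction (legal since $Z \neq Y$ and $Z \notin \fv(E)$) and then through the enlarged spine via \Lem{hsubst-concat}, the one subterm that needs attention is $\hsubst{(\etaExp{J}{Z})}{Y}{j}{E}$; applying the induction hypothesis to $J$ — viewing $Z$ as the neutral $\app{Z}{\sempty}$, which is legitimate since $Z \neq Y$ — gives $\hsubst{(\etaExp{J}{Z})}{Y}{j}{E} \alEq \etaExp{\hsubst{J}{Y}{j}{E}}{Z}$. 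Applying the induction hypothesis to $K'$ with the extended spine $\scons{\vec{D}}{\etaExp{J}{Z}}$ then rewrites the body. On the other side, $\hsubst{K}{Y}{j}{E} \alEq \dfun{Z}{\hsubst{J}{Y}{j}{E}}{\hsubst{K'}{Y}{j}{E}}$, and $\eta$-expanding produces an abstraction with domain $\hsubst{J}{Y}{j}{E}$, freshly introduced argument $\etaExp{\hsubst{J}{Y}{j}{E}}{Z}$, and body $\etaExp{\hsubst{K'}{Y}{j}{E}}{\app{X}{(\scons{(\hsubst{\vec{D}}{Y}{j}{E})}{\etaExp{\hsubst{J}{Y}{j}{E}}{Z}})}}$ — which is syntactically what the left-hand side produced.

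There is no deep obstacle here; the argument is essentially bookkeeping. The one point that requires care is the double use of the induction hypothesis in the inductive step — once on the codomain $K'$ with a longer spine, and once on the domain $J$ with empty spine and the fresh head variable $Z$ in place of $X$ — so that the statement must be quantified over all spines $\vec{D}$ and over an arbitrary head variable subject only to $X \neq Y$, as it indeed is. The other routine-but-essential detail is making the freshness and capture-avoidance side-conditions line up on both sides of the equation, so that the same bound variable $Z$ can be used throughout and the two unfoldings coincide syntactically rather than merely up to renaming.
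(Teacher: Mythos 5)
Your proof is correct and matches the paper's approach: the paper also establishes this lemma by a straightforward induction on the structure of the kind $K$, with the interval case immediate (since $\eta$-expansion is the identity there and $X \neq Y$ sends the hereditary substitution into the spine) and the arrow case handled by unfolding the expansion and invoking the hypothesis on both the codomain (with the extended spine) and the domain (with the fresh bound variable as head). Your observation that the statement must be read as quantified over the head variable, subject only to it being distinct from $Y$, is exactly the right bookkeeping needed to make both uses of the induction hypothesis legitimate.
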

\begin{lemma}\label{lem:rapp-sp-concat}
  Let $\; \gamma \ts U \kin j$, $\spS{\gamma}{j}{\vec{V}}{k}$ and
  $\spS{\gamma}{k}{\vec{W}}{l}$, then
  $
    \; \rapp{U}{j}{(\scons{\vec{V}}{\vec{W}})} \; \alEq \;
    \rapp{(\rapp{U}{j}{\vec{V}})}{k}{\vec{W}}.
  $
\end{lemma}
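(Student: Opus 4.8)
The plan is to argue by induction on the derivation of $\spS{\gamma}{j}{\vec{V}}{k}$ — equivalently, by structural induction on the first spine $\vec{V}$, since simple spine kinding is syntax-directed on its subject. The whole point of carrying the simple kinding premises $\gamma \ts U \kin j$, $\spS{\gamma}{j}{\vec{V}}{k}$ and $\spS{\gamma}{k}{\vec{W}}{l}$ is to guarantee that the shapes line up, so that every unfolding of $\rapp{\cdot}{\cdot}{\cdot}$ takes its \emph{non-degenerate} branch: without this the claimed equation can fail, because $\rapp{D}{k}{(\scons{E}{\vec{E}})}$ falls back to the head application $\app{D}{(\scons{E}{\vec{E}})}$ whenever $k$ is not an arrow shape.

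First I would do the base case $\vec{V} = \sempty$. Here the derivation must end in \ruleref{SK-Empty}, so $j \alEq k$, and both sides reduce to $\rapp{U}{k}{\vec{W}}$ by the first defining equation of reducing application. Then the inductive case $\vec{V} = \scons{V_1}{\vec{V}'}$: the derivation must end in \ruleref{SK-Cons}, so $j \alEq \fun{j_1}{j_2}$ with $\gamma \ts V_1 \kin j_1$ and $\spS{\gamma}{j_2}{\vec{V}'}{k}$. I would rewrite $\scons{(\scons{V_1}{\vec{V}'})}{\vec{W}}$ as $\scons{V_1}{(\scons{\vec{V}'}{\vec{W}})}$ by associativity of spine concatenation and unfold one step of reducing application on the left-hand side, using that $j$ is an arrow shape, obtaining $\rapp{(\rapp{U}{\fun{j_1}{j_2}}{V_1})}{j_2}{(\scons{\vec{V}'}{\vec{W}})}$. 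By \Lem{simp-hsubst}.\Item{simp-rapp-nf}, applied to $\gamma \ts U \kin \fun{j_1}{j_2}$ and $\gamma \ts V_1 \kin j_1$, the partial application $\rapp{U}{\fun{j_1}{j_2}}{V_1}$ has shape $j_2$, so the induction hypothesis applies with head $\rapp{U}{\fun{j_1}{j_2}}{V_1}$, first spine $\vec{V}'$ and second spine $\vec{W}$, giving $\rapp{(\rapp{U}{\fun{j_1}{j_2}}{V_1})}{j_2}{(\scons{\vec{V}'}{\vec{W}})} \alEq \rapp{(\rapp{(\rapp{U}{\fun{j_1}{j_2}}{V_1})}{j_2}{\vec{V}'})}{k}{\vec{W}}$. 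Folding the inner two reducing applications back up, $\rapp{(\rapp{U}{\fun{j_1}{j_2}}{V_1})}{j_2}{\vec{V}'} \alEq \rapp{U}{\fun{j_1}{j_2}}{(\scons{V_1}{\vec{V}'})} \alEq \rapp{U}{j}{\vec{V}}$, which closes the case.

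There is no genuine obstacle here; the lemma is essentially associativity of spine application, and the proof mirrors the recursive structure of $\rapp{\cdot}{\cdot}{\cdot}$. The only points requiring attention are threading the simple kinding hypotheses through so the unfoldings stay in their arrow-shape branches, and invoking \Lem{simp-hsubst}.\Item{simp-rapp-nf} to re-establish the well-shapedness premise that the induction hypothesis needs for the partially applied head.
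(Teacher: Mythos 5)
Your proof is correct and follows essentially the same route as the paper, which proves this lemma by an easy induction on the derivation of $\spS{\gamma}{j}{\vec{V}}{k}$, exactly as you do. Your additional observations — that the kinding hypotheses keep every unfolding of reducing application in its arrow-shape branch, and that \Lem{simp-hsubst}.\Item{simp-rapp-nf} re-establishes the head's shape for the induction hypothesis — are the right bookkeeping and raise no issues.
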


\begin{proof}[Proof of \Lem{etaexp-hsubst}]
  All six parts are proven simultaneously by induction on the
  structure of~$K$.
  Parts~\Item{hsubst-kds-etaexp}--\Item{hsubst-sp-etaexp} proceed by
  an inner induction on the simple formation and kinding derivations
  for $J$, $V$ and $\vec{V}$, respectively.  We show a few key cases,
  the remainder of the proof is routine.
  \begin{itemize}
  \item \emph{Part~\Item{etaexp-ne-hsubst}, $K = \dfun{Y}{K_1}{K_2}$.}
    By inspection of the formation and kinding rules, we must have
    $\kindS{\gamma, X \kas j, \delta}{K_1}$,
    $\kindS{\gamma, X \kas j, \delta, Y \kas \ksimp{K_1}}{K_2}$ and
    $\spS{\gamma, X \kas j, \delta}{j}{\vec{V}}{\ksimp{K}}$.
    By \Lem{simp-hsubst} we have
    \[ \gamma, \delta \ts \hsubst{(\app{X}{\vec{V}})}{X}{j}{U} \kin
      \fun{\ksimp{K_1}}{\ksimp{K_2}} \qquad \text{and} \qquad
      \kindS{\gamma, \delta}{\hsubst{K_1}{X}{j}{U}}.
    \]
    The final kinding rule used to derive the first of these judgments
    must be~\ruleref{SK-Abs} since that is the only rule assigning
    arrow shapes to normal types.  Therefore, the following must
    hold for some $J$ and $W$:
    \begin{align}
      \hsubst{(\app{X}{\vec{V}})}{X}{j}{U} &= \lam{Y}{J}{W}
        \label{eq:kind-gen1}\\
      \ksimp{K_1} &= \ksimp{J} \label{eq:kind-gen2} \\
      \gamma, \delta, Y \kas \ksimp{J} &\ts W \kin \ksimp{K_2}.
        \label{eq:kind-gen3}
    \end{align}
    By weakening (\Lem{simp-weaken}), \Lem{sk-etaexp}
    and~\ruleref{SK-NeApp} we also have
    \[
      \gamma, X \kas j, \delta, Y \kas \ksimp{K_1} \tsNe
      \appp{X}{\vec{V}}{(\etaExp{K_1}{Y})} \kin \ksimp{K_2}
    \] and hence
    \begin{align*}
      \hsubst{(\etaExp{K_2}{\appp{X}{\vec{V}}{(\etaExp{K_1}{Y})}})}{X}{j}{U}
      \hspace{-10em}& \\
      &\wkEq \;
        \hsubst{(\appp{X}{\vec{V}}{(\etaExp{K_1}{Y})})}{X}{j}{U}
        \tag{by IH for~\Item{etaexp-ne-hsubst}}\\
      &\alEq \;
        \rapp{U}{j}{(\hsubst{(\scons{\vec{V}}{(\etaExp{K_1}{Y})})}{X}{j}{U})}
        \tag{by definition}\\
      &\alEq \;
        \rapp{(\rapp{U}{j}{(\hsubst{\vec{V}}{X}{j}{U})})}{
          \ksimp{K}}{(\hsubst{(\etaExp{K_1}{Y})}{X}{j}{U})}
        \tag{by \Lem[Lemmas]{hsubst-concat} and \ref{lem:rapp-sp-concat}}\\
      &\alEq \;
        \rapp{(\rapp{U}{j}{(\hsubst{\vec{V}}{X}{j}{U})})}{
          \ksimp{K}}{(\etaExp{\hsubst{K_1}{X}{j}{U}}{Y})}
        \tag{by \Lem{etaexp-ne-hsubst-miss}}\\
      &\alEq \;
        \rapp{(\hsubst{(\app{X}{\vec{V}})}{X}{j}{U})}{
          \fun{\ksimp{K_1}}{\ksimp{K_2}}}{(\etaExp{\hsubst{K_1}{X}{j}{U}}{Y})}
        \tag{by definition}\\
      &\alEq \;
        \rapp{(\lam{Y}{J}{W})}{
          \fun{\ksimp{K_1}}{\ksimp{K_2}}}{(\etaExp{\hsubst{K_1}{X}{j}{U}}{Y})}
        \tag{by \eqref{eq:kind-gen1}}\\
      &\alEq \;
        \hsubst{W}{Y}{\ksimp{K_1}}{(\etaExp{\hsubst{K_1}{X}{j}{U}}{Y})}
        \tag{by definition}\\
      &\wkEq \;
        W
        \tag{by~\Lem{weq-simpeq}, \eqref{eq:kind-gen3} and
          IH for~\Item{hsubst-nf-etaexp}}
    \end{align*}
    We conclude that
    \begin{align*}
      \hsubst{(\etaExp{K}{\app{X}{\vec{V}}})}{X}{j}{U} \hspace{-6em}& \\
      &\alEq \;
        \lam{Y}{\hsubst{K_1}{X}{j}{U}}{
           \hsubst{(\etaExp{K_2}{\appp{X}{\vec{V}}{
             (\etaExp{K_1}{Y})}})}{X}{j}{U}}
        \tag{by definition}\\
      &\wkEq \; \lam{Y}{J}{W}
        \tag{by~\Lem{weq-simpeq}, \eqref{eq:kind-gen2},
          the above and~\ruleref{WEq-Abs}}\\
      &\alEq \; \hsubst{(\app{X}{\vec{V}})}{X}{j}{U}.
        \tag{by~\eqref{eq:kind-gen1}}
    \end{align*}

  \item \emph{Part~\Item{hsubst-nf-etaexp}, case \ruleref{SK-Ne}.}
    The rule~\ruleref{SK-Ne} has only one premise which must have been
    derived using~\ruleref{SK-VarApp}, so $V = N = \app{Y}{\vec{V}}$
    and we have
    $\spS{\gamma, X \kas \ksimp{K}, \delta}{j}{\vec{V}}{\kstar}$ with
    $(\gamma, X \kas \ksimp{K}, \delta)(Y) = j$.  We distinguish two
    cases: $Y = X$ and $Y \neq X$ but consider only the first case
    here; the second case is simpler.  Since $Y = X$, we have
    $j = \ksimp{K}$, and
\begin{align*}
      \hsubst{V&}{X}{\ksimp{K}}{\etaExp{K}{X}}\\
      &\alEq \;
        \rapp{\etaExp{K}{X}}{\ksimp{K}}{
          (\hsubst{\vec{V}}{X}{\ksimp{K}}{\etaExp{K}{X}})}
        \tag{by definition}\\
      &\wkEq \;
        \rapp{\etaExp{K}{X}}{\ksimp{K}}{\vec{V}}
        \tag{by~reflexivity of $\wkEq$, IH for~\Item{hsubst-sp-etaexp}
          and \Lem{weq-par-hsubst}.\Item{weq-rapp-sp}}\\
      &\wkEq \;
        \app{X}{\vec{V}}.
        \tag{by \ruleref{SK-VarApp} and the IH for~\Item{etaexp-rapp-sp}}
    \end{align*}
  \item \emph{Part~\Item{etaexp-rapp-nf}, $K = \dfun{X}{K_1}{K_2}$.}
    By inspection of the formation and kinding rules, we must have
    $\kindS{\gamma}{K_1}$ and $N = \app{Y}{\vec{U}}$ with
    $\spS{\gamma}{\gamma(Y)}{\vec{U}}{\ksimp{K}}$.
    \begin{align*}
      \rapp{\etaExp{K}{N}}{\ksimp{K}}{V} &\alEq \;
        \rapp{\lam{X}{K_1}{\etaExp{K_2}{\appp{Y}{\vec{U}}{(\etaExp{K_1}{X})}}}}{
          \ksimp{K}}{V}
        \tag{by definition}\\
      &\alEq \;
        \hsubst{\etaExp{K_2}{\appp{Y}{\vec{U}}{(\etaExp{K_1}{X})}}}{X}{
          \ksimp{K_1}}{V}
        \tag{by definition}\\
      &\alEq \;
        \etaExp{\hsubst{K_2}{X}{\ksimp{K_1}}{V}}{
          \app{Y}{(\hsubst{(\scons{\vec{U}}{(\etaExp{K_1}{X})})}{X}{
          \ksimp{K_1}}{V})}}
        \tag{by \Lem{etaexp-ne-hsubst-miss}}\\
      &\alEq \;
        \etaExp{\hsubst{K_2}{X}{\ksimp{K_1}}{V}}{
          \appp{Y}{\vec{U}}{(\hsubst{(\etaExp{K_1}{X})}{X}{
          \ksimp{K_1}}{V})}}
        \tag{as $X \notin \fv(\vec{U})$}\\
      &\wkEq \;
        \etaExp{\hsubst{K_2}{X}{\ksimp{K_1}}{V}}{
          \appp{Y}{\vec{U}}{V}}.
        \tag{by IH for \Item{etaexp-ne-hsubst} and \Lem{weq-etaexp}}
    \end{align*}
  \end{itemize}
  The use of the IH in the last step corresponds to
  \Cor{etaexp-var-hsubst}.
\end{proof}

With
\Lem{etaexp-hsubst} in place, we are ready to prove that normalization
weakly commutes with substitution.  In the following,
$\Gamma \wkEq \Delta$ denotes the pointwise lifting of weak equality
to contexts.
\begin{lemma}\label{lem:weq-nf-hsubst}
  Substitution weakly commutes with normalization of well-formed kinds
  and well-kinded types.  Let $\; \Gamma \ts A \kin J$ and
  $V = \nf{\nfCtx{\Gamma}}{A}$, then
  \begin{enumerate}
  \item\label{item:weq-nf-hsubst-kd} if
    $\; \kindD{\Gamma, X \kas J, \Delta}{K}$, then
    $\nf{\nfCtx{\Gamma,\subst{\Delta}{X}{A}}}{\subst{K}{X}{A}} \;
    \wkEq \;
    \hsubst{(\nf{\nfCtx{\Gamma, X \kas J, \Delta}}{K})}{X}{\ksimp{J}}{V}$;
  \item\label{item:weq-nf-hsubst-tp} if
    $\; \Gamma, X \kas J, \Delta \ts B \kin K$, then
    $\nf{\nfCtx{\Gamma,\subst{\Delta}{X}{A}}}{\subst{B}{X}{A}} \;
    \wkEq \;
    \hsubst{(\nf{\nfCtx{\Gamma, X \kas J, \Delta}}{B})}{X}{\ksimp{J}}{V}$;
  \end{enumerate}
\end{lemma}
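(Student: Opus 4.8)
The plan is to prove both parts simultaneously by induction on the raw structure of $K$ and $B$, respectively. Write $V = \nf{\nfCtx{\Gamma}}{A}$, $j = \ksimp{J}$, and, within a given call, let $\Theta$ abbreviate the normalized source context $\nfCtx{\Gamma, X \kas J, \Delta}$ and $\Theta'$ the normalized target context $\nfCtx{\Gamma, \subst{\Delta}{X}{A}}$, so the goal reads $\nf{\Theta'}{\subst{e}{X}{A}} \wkEq \hsubst{(\nf{\Theta}{e})}{X}{j}{V}$ for $e \in \set{K, B}$. By \Lem{nf-simp} and \Lem{simp-stable-nf}, $V$ is a simply kinded normal form of shape $j$ and all the normal forms that occur below are simply kinded of the expected shapes; hence every hereditary substitution and reducing application written below is non-degenerate and we may freely invoke the results of \SupSec{simple}. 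We also keep track of which normalized context each sub-normal-form lives in; these differ between the two sides only up to weak equality of bindings, and since weakly equal kinds have equal shapes (\Lem{weq-simpeq}), \Lem{weq-nf} lets us renormalize in either context whenever we apply an induction hypothesis.

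The leaf and binder cases are routine. For a variable $B = Y$ with $Y = X$ we have $\subst{Y}{X}{A} \alEq A$, so the left side is $V$ and the right side is $\hsubst{(\etaExp{J'}{X})}{X}{j}{V}$ with $J' = \nf{\Theta}{J}$ of shape $j$ (\Lem{simp-stable-nf}), which is $\wkEq V$ by \Cor{etaexp-var-hsubst}. For $Y \neq X$ and for $B \in \set{\Top, \Bot}$ the subject is unaffected by the substitution, but $Y$ is $\eta$-expanded along the kind $\nf{\Theta'}{\subst{\Gamma(Y)}{X}{A}}$ on the left and along $\hsubst{(\nf{\Theta}{\Gamma(Y)})}{X}{j}{V}$ on the right; these are $\wkEq$ by part~(1) of the induction hypothesis, so their $\eta$-expansions agree up to $\wkEq$ by \Lem{weq-etaexp}, and \Lem{etaexp-ne-hsubst-miss} commutes the outer hereditary substitution past the $\eta$-expansion of $Y$ on the right. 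For the binder formers ($\fun{\cdot}{\cdot}$, $\cdot \intv \cdot$, $\all{Y}{L}{\cdot}$, $\lam{Y}{L}{\cdot}$, $\dfun{Y}{\cdot}{\cdot}$), $\nfRaw$ recurses into the subterms after extending the context with the normalized domain; the new binding differs between the two sides only up to $\wkEq$ but has the same shape, so \Lem{weq-nf} lets us normalize the body in a common context and the claim follows from the induction hypotheses, congruence of $\wkEq$ for the former, and parts~\Item{weq-hsubst-kind}/\Item{weq-hsubst-elim} of \Lem{weq-par-hsubst}.

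The main obstacle is the application case $B = \app{B_1}{B_2}$. By the induction hypothesis, $\nf{\Theta'}{\subst{B_i}{X}{A}} \wkEq \hsubst{(\nf{\Theta}{B_i})}{X}{j}{V}$ for $i = 1, 2$. By definition of $\nfRaw$ the left side of the goal unwinds to a reducing application of the two normalized arguments (recall $\nfRaw$ yields $\eta$-long forms, so a normal form of arrow shape is an abstraction, and ordinary and reducing application coincide on head abstractions); the right side unwinds the same way once the outer hereditary substitution is pushed inward. What is left is the identity
\[
  \hsubst{(\rapp{\nf{\Theta}{B_1}}{k}{\nf{\Theta}{B_2}})}{X}{j}{V} \; \wkEq \;
  \rapp{(\hsubst{(\nf{\Theta}{B_1})}{X}{j}{V})}{k}{(\hsubst{(\nf{\Theta}{B_2})}{X}{j}{V})},
\]
which is commutativity of a hereditary substitution with a reducing application, part~\Item{rapp-comm-nf} of \Lem{hsubst-comm}; its hypotheses are the simple kindedness of $\nf{\Theta}{B_1}$ and $\nf{\Theta}{B_2}$, supplied by \Lem{nf-simp}. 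Combining this identity with the two induction hypotheses through congruence of $\wkEq$ for reducing application (part~\Item{weq-rapp-elim} of \Lem{weq-par-hsubst}), and using \Lem{simp-stable-hsubst} to see that the shape index $k$ stays fixed, closes the case. I expect the genuine difficulty to be not any single step but the bookkeeping: tracking the precise normalized context of each sub-normal-form and repeatedly invoking \Lem{weq-nf} to reconcile them, together with the need to phrase everything in terms of $\rapp{\cdot}{\cdot}{\cdot}$ so that \Lem{hsubst-comm} and the weak-$\eta$-expansion identities of \Lem{etaexp-hsubst} apply cleanly.
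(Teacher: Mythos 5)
Your main line of argument is the paper's: the variable case with $Y = X$ via \Cor{etaexp-var-hsubst}, the $Y \neq X$ case via \Lem{etaexp-ne-hsubst-miss} and \Lem{weq-etaexp}, and the application case by rewriting both sides as reducing applications and discharging the crux with the commutativity lemma \Lem{hsubst-comm} together with the congruence properties of \Lem{weq-par-hsubst} -- the paper cites the hereditary-substitution-commutes-with-hereditary-substitution part where you cite the hereditary-substitution-commutes-with-reducing-application part, but these are items of the same lemma and the difference is cosmetic. However, two points in your justification do not hold up as written. First, the induction scheme: you induct on the raw structure of $K$ and $B$, whereas the paper inducts (simultaneously) on the kind formation and kinding derivations. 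Under your scheme, applying the induction hypothesis or \Lem{nf-simp} to $B_1$, $B_2$ in the case $B = \app{B_1}{B_2}$ (and similarly in the binder cases) requires knowing that the subterms are declaratively well-kinded, and because kinding is not syntax-directed (\ruleref{K-Sub}, \ruleref{K-Sing}) this needs generation lemmas that neither you nor the paper state; with derivation induction these judgments are literally the premises of \ruleref{K-App} etc., and since the statement is already generalized over $\Delta$, the binder cases line up with the exact contexts $\nfRaw$ uses on each side, so no renormalization via \Lem{weq-nf} is needed at all.

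Second, and more substantively, in the $Y \neq X$ variable case you claim the two kinds along which $Y$ is $\eta$-expanded are weakly equal ``by part~(1) of the induction hypothesis.'' That appeal is not available: the looked-up kind is neither a structural subterm of the variable $Y$ you are inducting on, nor (even in a derivation induction) a premise of \ruleref{K-Var} in the form required by part~(1) -- it is buried inside the context-formation premise, in a prefix context. Fortunately it is also unnecessary: \Lem{weq-etaexp} only requires the two kinds to have \emph{equal shape}, and that follows without any induction hypothesis from stability of shapes under normalization, substitution and hereditary substitution (\Lem{simp-stable-nf}, \Lem{simp-stable-hsubst}) together with commutation of simplification with context lookup -- which is exactly the chain the paper's proof runs through before commuting the outer hereditary substitution past the $\eta$-expansion with \Lem{etaexp-ne-hsubst-miss}, as you do. Replace the IH appeal by that shape-bookkeeping argument (and either switch to derivation induction or supply the generation lemmas) and your proof coincides with the paper's.
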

\begin{proof}
  Simultaneously by induction on declarative kind formation and
  kinding derivations.
In the case for~\ruleref{K-Var} where $B = Y$, we use
  \Cor{etaexp-var-hsubst} if $Y = X$.
Otherwise, let $\Sigma = \Gamma, X \kas J, \Delta$ and
  $\Sigma' = \Gamma, \subst{\Delta}{X}{A}$.
By stability of simplified contexts under normalization and
  hereditary substitution, and by commutation of simplification and
  context lookup, we have
\begin{align*}
    \ksimp{\nfCtx{\Sigma'}(Y)}
    &\; \alEq \;
    \ksimp{\nfCtx{\Sigma'}}(Y)
    \; \alEq \;
    \ksimp{\Sigma'}(Y) \\
    &\; \alEq \;
    (\ksimp{\Gamma}, \ksimp{\subst{\Delta}{X}{A}})(Y)
    \; \alEq \;
    (\ksimp{\Gamma}, \ksimp{\Delta})(Y)
    \; \alEq \;
    (\ksimp{\Gamma}, X \kas \ksimp{J}, \ksimp{\Delta})(Y)
    \; \alEq \;
    \ksimp{\Sigma}(Y) \\
    &\; \alEq \;
    \ksimp{\nfCtx{\Sigma}}(Y)
    \; \alEq \;
    \ksimp{\nfCtx{\Sigma}(Y)}
    \; \alEq \;
    \ksimp{\nf{\Sigma}{\Sigma(Y)}}
    \; \alEq \;
    \ksimp{\hsubst{(\nf{\Sigma}{\Sigma(Y)})}{X}{\ksimp{J}}{V}}
  \end{align*}
Hence, by \Lem{weq-etaexp} and \Lem{etaexp-ne-hsubst-miss} we have
  \[
    \etaExp{\nfCtx{\Sigma'}(Y)}{Y}
    \; \wkEq \;
    \etaExp{\hsubst{(\nf{\Sigma}{\Sigma(Y)})}{X}{\ksimp{J}}{V}}{Y}
    \; \alEq \;
    \hsubst{(\etaExp{\nf{\Sigma}{\Sigma(Y)}}{Y})}{X}{\ksimp{J}}{V}.
  \]
  In the case for~\ruleref{K-App}, we use
  \Lem{weq-par-hsubst}.\Item{weq-hsubst-elim} and
  \Lem{hsubst-comm}.\Item{hsubst-comm-nf}.
\end{proof}

The very last lemma of this section will be used in our equivalence
proof in \Sec{canon-complete-full} to show that subtyping rules for
$\eta$-conversion of normal operators are admissible in canonical
kinding.
\begin{lemma}\label{lem:weq-nf-etaexp}
  If $\; \Gamma \ts A \kin \dfun{X}{J}{K}$ with $X \notin \fv(A)$,
  then
  $\nf{\nfCtx{\Gamma}}{\lam{X}{J}{\app{A}{X}}} \wkEq
  \nf{\nfCtx{\Gamma}}{A}$.
\end{lemma}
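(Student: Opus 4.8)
The plan is to unfold the definition of $\nfRaw$ on both sides of the claimed weak equation and then reduce it to the weak-commutation property of hereditary substitution with $\eta$-expansion already established in \Lem{etaexp-hsubst}.

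First, set up the normal forms of the data. By kinding validity (\Lem{decl_validity}) we have $\kindD{\Gamma}{\dfun{X}{J}{K}}$, and since \ruleref{Wf-DArr} is the only formation rule for dependent arrows this gives $\kindD{\Gamma}{J}$; context validity (\Lem{decl_ctx_valid}) gives $\ctxD{\Gamma}$. Hence \Lem{nf-simp} applies. Writing $J' = \nf{\nfCtx{\Gamma}}{J}$, it is a simply well-formed normal kind, $\kindS{\ksimp{\nfCtx{\Gamma}}}{J'}$, with $\ksimp{J'} \alEq \ksimp{J}$ by \Lem{simp-stable-nf}. Moreover $V = \nf{\nfCtx{\Gamma}}{A}$ is a simply kinded normal type of shape $\ksimp{\dfun{X}{J}{K}} = \fun{\ksimp{J}}{\ksimp{K}}$. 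Because the only simple-kinding rule producing an arrow shape is \ruleref{SK-Abs}, $V$ must be an operator abstraction; choosing its bound variable $Y$ fresh (so $Y \neq X$), write $V \alEq \lam{Y}{L}{W}$ with $\ksimp{L} \alEq \ksimp{J} \alEq \ksimp{J'}$ and $\ksimp{\nfCtx{\Gamma}}, Y \kas \ksimp{J} \ts W \kin \ksimp{K}$.

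Next, compute the two normal forms. The operator-abstraction clause of $\nfRaw$ gives $\nf{\nfCtx{\Gamma}}{\lam{X}{J}{\app{A}{X}}} = \lam{X}{J'}{\nf{\nfCtx{\Gamma}, X \kas J'}{\app{A}{X}}}$. Since $X \notin \fv(A)$, a routine weakening property of $\nfRaw$ (normalization ignores a context binding for a variable not free in the subject -- an easy mutual induction, using well-scopedness of $\Gamma$ to rule out accidental capture through $\eta$-expansion, and a much simpler sibling of \Lem{weq-nf}) yields $\nf{\nfCtx{\Gamma}, X \kas J'}{A} = \nf{\nfCtx{\Gamma}}{A} = \lam{Y}{L}{W}$. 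Using $\nf{\nfCtx{\Gamma}, X \kas J'}{X} = \etaExp{J'}{X}$, the application clause of $\nfRaw$ then gives $\nf{\nfCtx{\Gamma}, X \kas J'}{\app{A}{X}} = \hsubst{W}{Y}{\ksimp{L}}{\etaExp{J'}{X}}$. So the left-hand side of the goal is $\lam{X}{J'}{\hsubst{W}{Y}{\ksimp{L}}{\etaExp{J'}{X}}}$ and the right-hand side is $\lam{Y}{L}{W}$.

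Finally, discharge the weak equation. Renaming $Y$ to $X$ in the right-hand abstraction -- legal since $X \notin \dom(\nfCtx{\Gamma})$, hence $X$ is free neither in $W$ nor in $L$ -- turns it into $\lam{X}{L}{W_X}$ with $W_X = \subst{W}{Y}{X}$, and the same renaming turns the body of the left-hand side into $\hsubst{W_X}{X}{\ksimp{L}}{\etaExp{J'}{X}}$. Here $W_X$ is simply kinded, $\ksimp{\nfCtx{\Gamma}}, X \kas \ksimp{J'} \ts W_X \kin \ksimp{K}$. By \ruleref{WEq-Abs} and $\simpEq{J'}{L}$ it remains to prove $\hsubst{W_X}{X}{\ksimp{J'}}{\etaExp{J'}{X}} \wkEq W_X$, which is exactly \Lem{etaexp-hsubst}.\Item{hsubst-nf-etaexp} (equivalently, \Cor{etaexp-var-hsubst} applied inside the body), instantiated with the simply well-formed kind $J'$ in the role of ``$K$'', empty tail context, and $W_X$ in the role of ``$V$''. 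The main obstacle -- and the whole reason the conclusion and its consumers are phrased up to \emph{weak} rather than syntactic equality -- is precisely this last step: the domain annotation $L$ carried by $\nf{\nfCtx{\Gamma}}{A}$ need not be syntactically equal to the annotation $J' = \nf{\nfCtx{\Gamma}}{J}$ introduced by the $\eta$-expansion of the fresh variable $X$; they only have the same shape, which is all \ruleref{WEq-Abs} requires. Everything else is bookkeeping about $\alpha$-renaming and the definitional clauses of $\nfRaw$.
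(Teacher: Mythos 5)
Your proposal is correct and takes essentially the same route as the paper's proof: both apply \Lem{nf-simp} to see that $\nf{\nfCtx{\Gamma}}{A}$ must be an operator abstraction (only \ruleref{SK-Abs} yields an arrow shape), unfold $\nfRaw$ on the $\eta$-expansion, and reduce the goal to the fact that hereditarily substituting $\etaExp{\nf{\nfCtx{\Gamma}}{J}}{X}$ for $X$ in the body is a weak identity, \ie \Lem{etaexp-hsubst}.\Item{hsubst-nf-etaexp}, with the strengthening property of $\nfRaw$ you invoke being routine bookkeeping of the kind the paper's sketch leaves implicit. One small slip worth fixing: your parenthetical alternative is off, since \Cor{etaexp-var-hsubst} states the dual fact (substituting a well-kinded normal form for $X$ \emph{inside} $\etaExp{K}{X}$ returns that normal form), so it is part~\Item{hsubst-nf-etaexp}, not the corollary, that actually closes the argument.
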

\begin{proof}
  The proof uses \Lem{nf-simp} to obtain
  $\; \ksimp{\nfCtx{\Gamma}} \ts \nf{\nfCtx{\Gamma}}{A} \kin
  \fun{\ksimp{J}}{\ksimp{K}}$
  and proceeds by case analysis on the final rule used to derive this
  simple kinding judgment; the only applicable rule
  is~\ruleref{SK-Abs}.  The remainder of the proof uses equational
  reasoning very similar to that used in the proof of
  \Lem{etaexp-hsubst}.\Item{etaexp-ne-hsubst}.
\end{proof}

\section{Properties of the Canonical System}
\label{sec:canon_basics}

This section states and proves some basic yet important metatheoretic
properties of the canonical system that have been omitted from
\Sec{canonical} of the paper.
As a first step, we prove soundness of the canonical
system in \Sec{canon-sound-full}.
Next, we state and prove a \emph{hereditary substitution lemma} in
\Sec{canon-hsubst-full}, which establishes that canonical judgments
are preserved by hereditary substitutions.
This lemma is a key ingredient in proving completeness of the
canonical system.
We prove completeness in \Sec{canon-complete-full}, after showing that
canonical subtyping can be inverted at the top-level in
\Sec{canon-inversion-full}.
\Fig{canon_rules_sup} shows the canonical judgments that have been
omitted from the paper: context and kind formation, subkinding, and
kind and type equality.

\fig{fig:canon_rules_sup}{Canonical presentation of \FOmegaInt{} -- supplement}{
\judgment{Context formation}{\fbox{$\ctxC{\Gamma}$}}

\bigskip

\begin{minipage}[t]{0.21\linewidth}
  \infrule[\ruledef{CC-Empty}]{\quad}{\ctxC{\cempty}}
\end{minipage}\hspace{1ex}
\begin{minipage}[t]{0.35\linewidth}
  \infrule[\ruledef{CC-TmBind}]
  {\ctxC{\Gamma} \quad \kindC{\Gamma}{K}}
  {\ctxC{\Gamma, X \kas K}}
\end{minipage}\hfill
\begin{minipage}[t]{0.40\linewidth}
  \infrule[\ruledef{CC-TpBind}]
  {\ctxC{\Gamma} \quad \typeC{\Gamma}{V}}
  {\ctxC{\Gamma, x \tas V}}
\end{minipage}

\bigskip

\judgment{Kind formation}{\fbox{$\kindC{\Gamma}{K}$}}

\bigskip

\begin{minipage}[t]{0.52\linewidth}
\infruleLeft{CWf-Intv}
  {\typeC{\Gamma}{U} \andalso \typeC{\Gamma}{V}}
  {\kindC{\Gamma}{U \intv V}}
\end{minipage}\hfill
\begin{minipage}[t]{0.46\linewidth}
\infruleLeft{CWf-DArr}
  {\kindC{\Gamma}{J} \andalso \kindC{\Gamma, X \kas J}{K}}
  {\kindC{\Gamma}{\dfun{X}{J}{K}}}
\end{minipage}

\bigskip

\judgment{Subkinding}{\fbox{$\Gamma \ts J \ksub K$}}

\bigskip

\begin{minipage}[t]{0.44\linewidth}
\infruleLeft{CSK-Intv}
  {\quad\\
   \Gamma \ts U_2 \tsub U_1  \andalso  \Gamma \ts V_1 \tsub V_2}
  {\Gamma \ts U_1 \intv V_1 \ksub U_2 \intv V_2}
\end{minipage}\hspace{1ex}
\begin{minipage}[t]{0.54\linewidth}
\infruleLeft{CSK-DArr}
  {\kind{\Gamma}{\dfun{X}{J_1}{K_1}}\\
   \Gamma \ts J_2 \ksub J_1  \andalso
   \Gamma, X \kas J_2 \ts K_1 \ksub K_2}
  {\Gamma \ts \dfun{X}{J_1}{K_1} \ksub \dfun{X}{J_2}{K_2}}
\end{minipage}

\bigskip

\begin{minipage}[2]{0.4\linewidth}
  \judgment{Kind equality}{\fbox{$\Gamma \ts J \keq K$}}

\infruleLeft{CSK-AntiSym}
  {\kind{\Gamma}{J}      \andalso  \kind{\Gamma}{K}\\
   \Gamma \ts J \tsub K  \andalso  \Gamma \ts K \tsub J}
  {\Gamma \ts J \teq K}
\end{minipage}\hspace{0.07\linewidth}
\begin{minipage}[2]{0.5\linewidth}
  \judgment{Type equality}{\fbox{$\Gamma \ts U \keq V \kck K$}}

\infruleLeft{CST-AntiSym}
  {\kind{\Gamma}{K}\\
   \Gamma \ts U \tsub V \kck K  \andalso  \Gamma \ts V \tsub U \kck K}
  {\Gamma \ts U \teq V \kck K}
\end{minipage}}
 
The contents of this section are based on Chapter~5 of the first
author's PhD~dissertation.  We refer the interested reader to the
dissertation for the full details~\cite[][Ch.~5]{Stucki17thesis}.

\subsection{Soundness and Basic Properties}
\label{sec:canon-sound-full}

Before we establish any other metatheoretic properties of the
canonical system, let us prove its soundness with respect to the
declarative presentation.  To avoid confusion, we mark canonical
judgments with the subscript ``$\mathsf{c}$'' and declarative ones
with ``$\mathsf{d}$'' in the following lemma.

\begin{lemma}[soundness of the canonical rules]
  \label{lem:canon-sound-full}
  ~
  \begin{enumerate}[nosep]
  \item\label{item:can-snd-ctx} If $\; \ctxCC{\Gamma}$, then
    $\; \ctxDD{\toType{\Gamma}}$.
  \item\label{item:can-snd-kd} If $\; \kindCC{\Gamma}{K}$, then
    $\; \kindDD{\toType{\Gamma}}{\toType{K}}$.
  \item\label{item:can-snd-sk} If $\; \Gamma \tsC J \ksub K$, then
    $\; \toType{\Gamma} \tsD \toType{J} \ksub \toType{K}$.
  \item\label{item:can-snd-keq} If $\; \Gamma \tsC J \keq K$, then
    $\; \toType{\Gamma} \tsD \toType{J} \keq \toType{K}$.
  \item\label{item:can-snd-var} If $\; \Gamma \tsVar X \kin K$, then
    $\; \toType{\Gamma} \tsD X \kin \toType{K}$.
  \item\label{item:can-snd-ne} If $\; \Gamma \tsNe N \kin K$, then
    $\; \toType{\Gamma} \tsD \toType{N} \kin \toType{K}$.
  \item\label{item:can-snd-nfsn} If $\; \Gamma \tsC V \ksn K$, then
    $\; \toType{\Gamma} \tsD \toType{V} \kin \toType{K}$.
  \item\label{item:can-snd-nfck} If $\; \Gamma \tsC V \kck K$, then
    $\; \toType{\Gamma} \tsD \toType{V} \kin \toType{K}$.
  \item\label{item:can-snd-st} If $\; \Gamma \tsC U \tsub V$, then
    $\; \toType{\Gamma} \tsD \toType{U} \tsub \toType{V} \kin \kstar$.
  \item\label{item:can-snd-stck} If $\; \Gamma \tsC U \tsub V \kin K$, then
    $\; \toType{\Gamma} \tsD \toType{U} \tsub \toType{V} \kin
    \toType{K}$.
  \item\label{item:can-snd-teq} If $\; \Gamma \tsC U \teq V \kin K$, then
    $\; \toType{\Gamma} \tsD \toType{U} \tsub \toType{V} \kin
    \toType{K}$.
  \item\label{item:can-snd-sp} If
    $\; \toType{\Gamma} \tsD A \kin \toType{J}$ and
    $\; \spCC{\Gamma}{J}{\vec{V}}{K}$, then
    $\; \toType{\Gamma} \tsD \toType{\app{\toElim{A}}{\vec{V}}} \kin
    \toType{K}$.
  \item\label{item:can-snd-speq} If
    $\; \toType{\Gamma} \tsD A \tsub B \kin \toType{J}$ and
    $\; \spCC{\Gamma}{J}{\vec{U} \teq \vec{V}}{K}$, then
    $\; \toType{\Gamma} \tsD \toType{\app{\toElim{A}}{\vec{U}}} \tsub
    \toType{\app{\toElim{B}}{\vec{V}}} \kin \toType{K}$.
  \end{enumerate}
\end{lemma}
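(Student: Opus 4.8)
\textbf{Proof proposal for \Lem{canon-sound-full}.}

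The plan is to prove all \SupItem{can-snd-speq}~parts simultaneously by a single mutual induction on the derivations of the respective canonical judgments. This mirrors the structure of the canonical system itself, where kinding, subtyping, spine kinding, spine equality, and the various formation judgments are all defined by mutual induction. For most parts the argument is a routine translation: each canonical rule has a declarative counterpart (either a primitive rule of \Fig[Figs.]{decl_rules1} and~\ref{fig:decl_rules2}, or one of the admissible rules established in \SupSec{decl_details}), so we simply apply the induction hypothesis to the premises and then invoke the corresponding declarative rule. For instance, \ruleref{CST-Top} maps to \ruleref{ST-Top} after using part~\Item{can-snd-nfsn} on the premise, \ruleref{CST-Trans} maps to \ruleref{ST-Trans}, \ruleref{CST-Arr} maps to \ruleref{ST-Arr} (noting that canonical subtyping of proper types corresponds to declarative subtyping at kind $\kstar$), and the bound-projection rules \rulerefN{CST-Bnd1}{CST-Bnd$_1$} and \rulerefN{CST-Bnd2}{CST-Bnd$_2$} map to \rulerefN{ST-Bnd1}{ST-Bnd$_1$} and \rulerefN{ST-Bnd2}{ST-Bnd$_2$} via part~\Item{can-snd-ne}. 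The variable-subsumption rule \ruleref{CV-Sub} translates using \ruleref{K-Sub}; note that the subkinding premise $\Gamma \ts J \ksub K$ here is already a \emph{declarative} judgment by construction of the canonical system, so no translation is needed for it.

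The interesting cases are the ones involving neutral types, spines, and singleton kinds. For \ruleref{CK-Ne} (and similarly \ruleref{CST-Ne}), we are given $\Gamma \tsVar X \kin J$ and $\spC{\Gamma}{J}{\vec{V}}{K}$; by the IH for part~\Item{can-snd-var} we get $\toType{\Gamma} \tsD X \kin \toType{J}$, and then part~\Item{can-snd-sp} directly delivers $\toType{\Gamma} \tsD \toType{\app{X}{\vec{V}}} \kin \toType{K}$ — this is exactly why part~\Item{can-snd-sp} is phrased with an arbitrary head $A$ of declarative kind $\toType{J}$ rather than specifically a variable. For \ruleref{CK-Sing}, where the synthesized kind is a singleton $N \intv N$, we use the IH to obtain $\toType{\Gamma} \tsD \toType{N} \kin \toType{U} \intv \toType{V}$ and then apply \ruleref{K-Sing}. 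The rule \ruleref{CST-Ne} for neutral subtyping is handled analogously to \ruleref{CK-Ne} but invoking part~\Item{can-snd-speq} on the spine-equality premise. The rule \ruleref{CST-AntiSym} (for checked type equality) translates via \ruleref{ST-AntiSym} after applying the IH for part~\Item{can-snd-stck} twice, and likewise \ruleref{CSK-AntiSym} via \ruleref{SK-AntiSym}.

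The two spine parts~\Item{can-snd-sp} and~\Item{can-snd-speq} are proven by an inner induction on the spine-kinding (resp.\ spine-equality) derivation. For the base cases \ruleref{CK-Empty} and \ruleref{SpEq-Empty} the spine is empty and $J = K$, so there is nothing to do. For \ruleref{CK-Cons}, we have $\spC{\Gamma}{\dfun{X}{J}{K}}{\scons{U}{\vec{V}}}{L}$ with premises $\Gamma \ts U \kck J$, $\kindC{\Gamma}{J}$, and $\spC{\Gamma}{\hsubst{K}{X}{\ksimp{J}}{U}}{\vec{V}}{L}$. Given a declarative head $\toType{\Gamma} \tsD A \kin \toType{\dfun{X}{J}{K}}$, we apply the IH (part~\Item{can-snd-nfck}) to get $\toType{\Gamma} \tsD \toType{U} \kin \toType{J}$, then \ruleref{K-App} yields $\toType{\Gamma} \tsD \app{A}{\toType{U}} \kin \subst{\toType{K}}{X}{\toType{U}}$. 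Here is the crux: the inner IH for the tail spine expects a head of declarative kind $\toType{\hsubst{K}{X}{\ksimp{J}}{U}}$, but we have one of kind $\subst{\toType{K}}{X}{\toType{U}}$. These are judgmentally equal by soundness of hereditary substitution (\Cor{hsubst-sound}\Item{hsubst-snd-kind}, together with the fact that translation $\toType{-}$ commutes appropriately with the syntax), so we bridge the gap with \ruleref{K-Conv}. The analogous step for \ruleref{SpEq-Cons} uses \ruleref{TEq-App} in place of \ruleref{K-App} and \ruleref{ST-Conv} in place of \ruleref{K-Conv}, plus \ruleref{ST-Sub} to move from the precise singleton kind to $\kstar$ where needed. \textbf{I expect this conversion step — reconciling the hereditary substitution in the canonical codomain with the ordinary substitution produced by the declarative application rule — to be the main obstacle}, since it forces the soundness argument to depend on \Cor{hsubst-sound} and on the basic validity and conversion machinery of \SupSec{decl_details}; all the remaining cases are essentially bookkeeping.
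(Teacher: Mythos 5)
Your proposal takes essentially the same route as the paper's proof: induction on the canonical derivations (with the last two parts done by induction on the spine judgments), routine rule-by-rule translation into primitive or admissible declarative rules with occasional kind adjustments around singleton/interval kinds, and the \ruleref{CK-Cons}/\ruleref{SpEq-Cons} cases handled exactly as you predict, by bridging hereditary and ordinary substitution with \Cor{hsubst-sound} and kind conversion. Two small slips that do not affect the approach: the subkinding premise of \ruleref{CV-Sub} is a \emph{canonical} judgment (defined on normal forms via the canonical subtyping of proper types), so it must be translated using part~\Item{can-snd-sk} of the induction hypothesis rather than taken as already declarative, and in the \ruleref{SpEq-Cons} case the relevant declarative rule is \ruleref{ST-App} rather than \ruleref{TEq-App}, since the hypothesis on the head is a subtyping, not an equality (the equality of the arguments is recovered from the two directions of the canonical equation via \ruleref{ST-AntiSym}).
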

The proof uses the following two helper lemmas.
The first is a derived rule, the second is proven by easy case
analysis of the rules for kind synthesis.
\begin{lemma}\label{lem:K-Intv-Star}
If $\; \Gamma \ts A \kin B \intv C$, then also
  $\Gamma \ts A \kin \kstar$.
\end{lemma}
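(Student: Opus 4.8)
\textbf{Proof proposal for \Lem{K-Intv-Star}.}
The plan is to exploit the deliberately relaxed premises of the rules \ruleref{K-Sing}, \ruleref{ST-Bot} and \ruleref{ST-Top}, all of which accept an arbitrary interval kind $B \intv C$ rather than $\kstar$ in their premise. Given $\Gamma \ts A \kin B \intv C$, first apply \ruleref{K-Sing} to narrow the kind of $A$ to its own singleton, obtaining $\Gamma \ts A \kin A \intv A$. Independently, apply \ruleref{ST-Bot} and \ruleref{ST-Top} — each using the very same hypothesis $\Gamma \ts A \kin B \intv C$ as its premise — to derive $\Gamma \ts \Bot \tsub A \kin \kstar$ and $\Gamma \ts A \tsub \Top \kin \kstar$.

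Next, feed these two subtyping judgments into \ruleref{SK-Intv}, which lifts them to the subkinding statement $\Gamma \ts A \intv A \ksub \Bot \intv \Top$. Finally, recall that $\kstar$ is by definition an abbreviation for $\Bot \intv \Top$, so \ruleref{K-Sub} applied to $\Gamma \ts A \kin A \intv A$ and the subkinding judgment just obtained yields $\Gamma \ts A \kin \kstar$, as required. The whole argument is a fixed four-rule derivation with no induction; it is in fact the derivation already sketched (with a proof tree) in \Sec{decl_rules} to motivate the shape of these rules.

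There is essentially no hard part here, but the one subtlety worth spelling out is \emph{why} the argument is not circular: if \ruleref{K-Sing}, \ruleref{ST-Bot} and \ruleref{ST-Top} had been stated with premise $\Gamma \ts A \kin \kstar$ instead of $\Gamma \ts A \kin B \intv C$, then one could not even begin, since $\Gamma \ts A \kin \kstar$ is precisely the conclusion we are after. The relaxed premises are exactly what make the bootstrap possible, so the proof doubles as a justification for that design choice. No earlier lemma beyond the rules themselves is needed — in particular we do not invoke validity — which keeps this result usable as a primitive building block (e.g.\ in the soundness proof \Lem{canon-sound-full}, where it is applied via its canonical analog).
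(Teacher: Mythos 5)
Your proof is correct and is essentially the paper's own argument: the lemma is proved there as a derived rule via exactly the derivation \ruleref{K-Sing}, \ruleref{ST-Bot}, \ruleref{ST-Top}, \ruleref{SK-Intv}, \ruleref{K-Sub} sketched in \Sec{decl_rules}, which you reproduce faithfully, including the observation that the relaxed interval premises are what make the derivation possible.
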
\noindent
\begin{lemma}\label{lem:nfsn-intv-sing}
  If $\; \Gamma \ts U \ksn V \intv W$, then $V \alEq U$ and
  $W \alEq U$.
\end{lemma}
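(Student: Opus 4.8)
The plan is to prove \Lem{nfsn-intv-sing} by a single inversion on the derivation of $\Gamma \ts U \ksn V \intv W$. The key observation is that the kind synthesis judgment (\Fig{canon_rules1}) has no subsumption rule, so the derivation must end in exactly one of \ruleref{CK-Top}, \ruleref{CK-Bot}, \ruleref{CK-Arr}, \ruleref{CK-All}, \ruleref{CK-Abs}, or \ruleref{CK-Sing}. First I would rule out \ruleref{CK-Abs}: its conclusion synthesizes an arrow kind $\dfun{X}{J}{K}$, which is a different kind former from an interval kind, and hence cannot have the form $V \intv W$.

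For each of the five remaining rules, the synthesized kind in the conclusion is a singleton interval whose bounds coincide with the subject: this is precisely what the shorthands $\typeC{\Gamma}{U}$ (appearing in the conclusions of \ruleref{CK-Top}, \ruleref{CK-Bot}, \ruleref{CK-Sing}) and $\typeCP{\Gamma}{U}$ (in \ruleref{CK-Arr}, \ruleref{CK-All}) abbreviate, namely $\Gamma \ts U \ksn U \intv U$. Unfolding these notations and matching against the assumed conclusion $\Gamma \ts U \ksn V \intv W$ immediately yields $V \alEq U$ and $W \alEq U$, which is the claim.

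I do not expect any real obstacle here; the only points that deserve care are (i) expanding the $\typeC$/$\typeCP$ notation so the singleton shape of every synthesized interval is explicit, and (ii) confirming that \ruleref{CK-Abs} is the unique synthesis rule whose output is not an interval kind. As a byproduct, the argument also recovers the informal remark in \Sec{canonical} that all synthesized kinds are singletons, and in particular that $\Gamma \ts U \ksn U \intv U$ holds for every proper type $U$.
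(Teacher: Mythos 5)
Your proof is correct and is essentially the paper's own argument: the paper proves this lemma ``by easy case analysis of the rules for kind synthesis,'' which is exactly your inversion, noting that \ruleref{CK-Abs} is the only synthesis rule producing a non-interval kind and that every other rule's conclusion unfolds (via the $\typeC$/$\typeCP$ shorthands or directly in \ruleref{CK-Sing}) to a singleton interval $U \intv U$ over the subject.
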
\noindent
\begin{proof}[Proof of \Lem{canon-sound-full}]
  By induction on derivations of the various canonical judgments; for
  parts~\Item{can-snd-ctx}--\Item{can-snd-teq}, on the derivation of
  the first premise, for parts~\Item{can-snd-sp}
  and~\Item{can-snd-speq}, on that of the second premise.  Most cases
  are straightforward.  In cases involving synthesized kinding of
  proper types, we use~\ruleref{K-Sing} and \Lem[Lemmas]{K-Intv-Star},
  \ref{lem:ST-Intv-Star} and~\ref{lem:nfsn-intv-sing} to adjust kinds
  where necessary.  In cases involving hereditary substitutions, i.e.\
  those for~\ruleref{CK-Cons} and~\ruleref{SpEq-Cons}, we use
  soundness of hereditary substitution in kinds
  (\Cor{hsubst-sound}.\Item{hsubst-snd-kind}).
\end{proof}

As for the declarative system, the contexts of most canonical
judgments are well-formed.  There are two exceptions: kinding and
equality of spines. The rules~\ruleref{CK-Empty}
and~\ruleref{SpEq-Empty} for empty spines offer no guarantee that the
enclosing context is well-formed.  This is not a problem in practice,
since well-kinded spines only appear in judgments about neutral types,
the heads of which must be kinded in a well-formed context.
\begin{lemma}[context validity]\label{lem:canon_ctx_valid}
  Assume $\; \judgD{\Gamma}$ for any canonical judgment except spine
  kinding or equality.  Then $\; \ctxD{\Gamma}$.
\end{lemma}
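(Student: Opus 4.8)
The plan is to prove \Lem{canon_ctx_valid} by simultaneous induction on the derivations of the various canonical judgments, exactly as one proves the declarative context validity lemma (\Lem{decl_ctx_valid}). The statement asserts that for every canonical judgment \emph{other than} spine kinding $\spC{\Gamma}{J}{\vec{V}}{K}$ and spine equality $\spC{\Gamma}{J}{\vec{U} \teq \vec{V}}{K}$, the ambient context $\Gamma$ is well-formed, i.e.\ $\ctxC{\Gamma}$ holds.

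First I would set up the induction over all the canonical judgment forms jointly: context formation, kind formation, subkinding, kind equality, canonical variable kinding $\Gamma \tsVar X \kin K$, kinding of neutrals $\Gamma \tsNe N \kin K$, kind synthesis $\Gamma \ts V \ksn K$, kind checking $\Gamma \ts V \kck K$, proper subtyping $\Gamma \ts U \tsub V$, checked subtyping $\Gamma \ts U \tsub V \kck K$, and type equality $\Gamma \ts U \teq V \kck K$ — but \emph{excluding} the two spine judgments from the conclusion. For the base cases \ruleref{CC-Empty}, \ruleref{CK-Top}, \ruleref{CK-Bot}, \ruleref{CV-Var}, \ruleref{CST-Top}, \ruleref{CST-Bot}, etc., the premise already contains $\ctxC{\Gamma}$ explicitly, so the result is immediate. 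For \ruleref{CC-TmBind} and \ruleref{CC-TpBind} the conclusion is itself a context-formation judgment and follows directly from the corresponding rule once we know the premises hold. For every other rule, the strategy is the standard one: pick a premise that does \emph{not} extend the context and apply the induction hypothesis to it; there is always at least one such premise. For instance, \ruleref{CWf-Intv} has premises $\typeC{\Gamma}{U}$ and $\typeC{\Gamma}{V}$ in the same context; \ruleref{CWf-DArr} has $\kindC{\Gamma}{J}$; \ruleref{CK-Abs} has $\kindC{\Gamma}{J}$; \ruleref{CST-Arr} has premises in $\Gamma$; \ruleref{CST-Trans} has $\Gamma \ts U \tsub V$; and so on.

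The one point requiring a little care is the treatment of the rules that \emph{do} mention spine kinding or spine equality as a premise — namely \ruleref{CK-Ne} (for $\Gamma \tsNe \app{X}{\vec{V}} \kin K$) and \ruleref{CST-Ne} (for $\Gamma \ts \app{X}{\vec{V}_1} \tsub \app{X}{\vec{V}_2}$). Here we cannot invoke the induction hypothesis on the spine premise, since that premise is excluded from the statement. But this is exactly why the lemma is phrased as it is: both rules \emph{also} carry a premise $\Gamma \tsVar X \kin J$ kinding the head variable $X$, and canonical variable kinding \emph{is} covered by the lemma. So I would apply the induction hypothesis to that head-kinding premise to obtain $\ctxC{\Gamma}$. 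Similarly, \rulerefN{CST-Bnd1}{CST-Bnd$_1$} and \rulerefN{CST-Bnd2}{CST-Bnd$_2$} have a premise $\Gamma \tsNe N \kin V_1 \intv V_2$, to which the induction hypothesis applies directly.

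I do not expect any real obstacle here — the proof is a routine structural induction, entirely parallel to \Lem{decl_ctx_valid}. The only thing to watch is the bookkeeping of \emph{which} judgments are in scope for the induction hypothesis: one must be careful not to accidentally need $\ctxC{\Gamma}$ from a spine judgment, and the remarks in the paragraph preceding the lemma (that well-kinded spines only ever appear as premises of neutral-type judgments whose heads are kinded in a well-formed context) confirm that this never happens. So the mild subtlety is purely organizational: the lemma is stated with exactly the two spine judgments excluded precisely so that the induction goes through, with the head-kinding premises of \ruleref{CK-Ne} and \ruleref{CST-Ne} serving as the bridge back to a covered judgment.
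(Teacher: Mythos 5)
Your proof is correct and follows exactly the paper's argument: a simultaneous induction on the derivations of all canonical judgments except the two spine judgments, with the head-variable kinding premises of \ruleref{CK-Ne} and \ruleref{CST-Ne} supplying the context well-formedness where the spine premises cannot. The paper states this proof in one line; your elaboration of the cases matches it.
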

\begin{proof}
  By simultaneous induction on the derivations of the various
  judgments.
\end{proof}
We can prove a few more validity properties at this point.
First, since synthesized kinds are singletons, validity of synthesized
kinding judgments follows by~\ruleref{CWf-Intv} for proper types and
by an easy induction for type operators.
Second, validity of kind equality as well as the checked subtyping and
type equality judgments follows immediately from the validity
conditions included in the rules~\ruleref{CSK-AntiSym},
\ruleref{CST-Intv}, \ruleref{CST-Abs} and~\ruleref{CST-AntiSym}.
\begin{lemma}[canonical validity -- part 1]\label{lem:canon-valid1}
  ~
\begin{enumerate}[nosep,leftmargin=14.85em,labelsep=.9em]
\item[(synthesized kinding validity)]
If $\; \Gamma \ts V \ksn K$, then $\kindC{\Gamma}{K}$.

\item[(checked subtyping validity)]
If $\; \Gamma \ts U \tsub V \kck K$, then $\Gamma \ts U \kck K$ and
    $\Gamma \ts V \kck K$.

\item[(kind equation validity)]
If $\; \Gamma \ts J \keq K$, then $\kindC{\Gamma}{J}$ and
    $\kindC{\Gamma}{K}$.

\item[(type equation validity)]
If $\; \Gamma \ts U \teq V \kck K$, then $\Gamma \ts U \kck K$,
    $\; \Gamma \ts V \kck K$ and $\kindC{\Gamma}{K}$.
  \end{enumerate}
\end{lemma}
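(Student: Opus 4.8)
\textbf{Proof plan for \Lem{canon-valid1}.}
The plan is to prove the four parts in sequence, each by a short induction or case analysis, relying heavily on the fact that the canonical rules were deliberately designed with enough redundant validity conditions built in. First I would dispatch \emph{synthesized kinding validity}: since every rule for $\Gamma \ts V \ksn K$ synthesizes a kind that is either a singleton interval $V \intv V$ (rules \ruleref{CK-Top}, \ruleref{CK-Bot}, \ruleref{CK-Arr}, \ruleref{CK-All}, \ruleref{CK-Sing}) or a dependent arrow $\dfun{X}{J}{K'}$ (rule \ruleref{CK-Abs}), the claim follows by a routine induction on the derivation. For the interval cases we invoke \ruleref{CWf-Intv}, using the IH (or the premises) to establish well-formedness of the bounds; for \ruleref{CK-Abs} we invoke \ruleref{CWf-DArr}, using the premise $\kindC{\Gamma}{J}$ directly and the IH on the body for the codomain. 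The only case deserving a second glance is \ruleref{CK-Sing}, where the synthesized kind is $N \intv N$ with $\Gamma \tsNe N \kin U \intv V$; here $N$ is a neutral \emph{proper} type, and we need $\typeC{\Gamma}{N}$ to apply \ruleref{CWf-Intv}, which is exactly what \ruleref{SK-Ne}/\ruleref{CK-Ne} plus the premise gives us.

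Next, \emph{checked subtyping validity} and \emph{type equation validity} and \emph{kind equation validity} are essentially immediate by inspection, because the relevant rules carry their validity conditions explicitly. The checked subtyping judgment $\Gamma \ts U \tsub V \kck K$ has exactly two rules: \ruleref{CST-Intv}, whose first two premises are $\Gamma \ts V_1 \kck U \intv W$ and $\Gamma \ts V_2 \kck U \intv W$ (i.e.\ precisely the conclusions we want), and \ruleref{CST-Abs}, whose first two premises are $\Gamma \ts \lam{X}{J_1}{V_1} \kck \dfun{X}{J}{K}$ and $\Gamma \ts \lam{X}{J_2}{V_2} \kck \dfun{X}{J}{K}$. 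So a one-line case split suffices. Similarly, kind equality is generated solely by \ruleref{CSK-AntiSym}, whose first two premises are $\kindC{\Gamma}{J}$ and $\kindC{\Gamma}{K}$; and type equality is generated solely by \ruleref{CST-AntiSym}, whose premises include $\kindC{\Gamma}{K}$ together with $\Gamma \ts U \tsub V \kck K$ and $\Gamma \ts V \tsub U \kck K$ --- from the latter two, checked subtyping validity (just proved) yields $\Gamma \ts U \kck K$ and $\Gamma \ts V \kck K$, completing the type equation case.

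The main subtlety --- the closest thing to an obstacle --- is that there is nothing to grind here precisely \emph{because} the validity conditions were front-loaded into the rule statements; the real work was done when we proved the hereditary substitution and functionality lemmas (\Lem{canon-hsubst}), which is what justified including those conditions without circularity. So the only thing to be careful about is the dependency ordering: \emph{synthesized kinding validity} must be established first, since it is the only part with genuine inductive content, and it is also implicitly used elsewhere (e.g.\ to know that a synthesized kind in a \ruleref{CK-Sub} premise is well-formed). Once that is in place, the remaining three parts are non-recursive consequences of reading off the premises of a small number of rules, so I would simply present them as such. No commutativity lemmas or hereditary-substitution reasoning is needed for this lemma --- those are consumed in proving \Lem{canon-hsubst} and in \emph{canonical validity -- part 2}, which handles the judgments (kinding of variables, neutrals, and subtyping of proper types) whose rules do \emph{not} carry their validity conditions and therefore require the harder arguments.
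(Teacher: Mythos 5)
Your proposal is correct and follows essentially the same route as the paper: synthesized kinds are always singletons (or arrows built from them), so synthesized kinding validity follows from \ruleref{CWf-Intv} for proper types and an easy induction through \ruleref{CK-Abs} via \ruleref{CWf-DArr}, while the remaining three parts are read off directly from the validity conditions built into \ruleref{CST-Intv}, \ruleref{CST-Abs}, \ruleref{CSK-AntiSym} and \ruleref{CST-AntiSym} (with type equation validity routed through checked subtyping validity). Your closing observation that no hereditary-substitution or commutativity reasoning is needed here, and that the harder judgments are deferred to part~2, matches the paper's organization exactly.
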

We prove the remaining validity properties of the canonical system
in~\Sec{canon-valid}, once we have shown that hereditary substitutions
preserve well-formedness of kinds.

Before we can do so, we need to establish the usual weakening and
context narrowing lemmas for canonical typing.
\begin{lemma}[weakening]\label{lem:canon-weaken}
  Assume $\; \judgC{\Gamma, \Delta}$ for any of the canonical
  judgments.
  \begin{enumerate}
  \item If $\; \typeC{\Gamma}{A}$, then
    $\; \judgC{\Gamma, x \tas A, \Delta}$.
  \item If $\; \kindC{\Gamma}{K}$, then
    $\; \judgC{\Gamma, X \kas K, \Delta}$.
  \end{enumerate}
\end{lemma}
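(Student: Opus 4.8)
The plan is to prove both parts by one simultaneous induction on the derivation of the canonical judgment $\judgC{\Gamma, \Delta}$, ranging over \emph{all} the canonical judgment forms at once (context and kind formation, subkinding, kind equality, variable kinding, neutral kinding, kind synthesis and checking, subtyping of proper types, checked subtyping, type equality, and spine kinding and equality), since these are defined by mutual recursion and hence cannot be weakened in isolation. The two parts --~inserting a term binding $x \tas A$ versus a type binding $X \kas K$~-- go through by essentially the same argument; I would present part~2 in detail and note that part~1 is strictly easier, because term bindings are never consulted by the kinding, subkinding or subtyping rules. The whole development parallels the corresponding weakening lemma for the declarative system (\Lem{decl_weaken}).

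For the bulk of the rules the reasoning is mechanical: a rule whose premises live in the same context as its conclusion (including the premise-free axioms such as \ruleref{CK-Empty} and \ruleref{SpEq-Empty}) is handled by applying the induction hypothesis to each premise and re-applying the rule; a rule that extends the context under a binder --~\ruleref{CWf-DArr}, \ruleref{CK-Abs}, \ruleref{CK-All}, \ruleref{CST-All}, \ruleref{CSK-DArr}, \ruleref{CC-TmBind}, \ruleref{CC-TpBind}~-- is handled by invoking the induction hypothesis with $\Delta$ replaced by $\Delta, Y \kas L$ (the side hypothesis $\kindC{\Gamma}{K}$ is unaffected, as it is stated over the fixed prefix $\Gamma$). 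The cases \ruleref{CV-Var} and \ruleref{CK-Ne} rely only on the fact that inserting a binding with a fresh name does not disturb existing lookups, i.e.\ that $(\Gamma, \Delta)(Y) = (\Gamma, X \kas K, \Delta)(Y)$ for every $Y \in \dom(\Gamma, \Delta)$; the case \ruleref{CV-Sub} additionally applies the induction hypothesis to its subkinding premise. The context-formation judgment $\ctxC{\Gamma, \Delta}$ has no premise of matching shape, so that case proceeds by an inner structural induction on $\Delta$: for $\Delta = \cempty$ we conclude from the side hypothesis $\kindC{\Gamma}{K}$ via \ruleref{CC-TmBind}; for $\Delta = \Delta', Y \kas L$ we unfold \ruleref{CC-TmBind}, use the outer induction hypothesis (kind-formation part) on $\kindC{\Gamma, \Delta'}{L}$ and the inner one on $\ctxC{\Gamma, \Delta'}$, and re-close with \ruleref{CC-TmBind}.

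The only point worth an explicit remark is the treatment of hereditary substitutions in \ruleref{CK-Cons} and \ruleref{SpEq-Cons}, where the tail of a spine is kinded (resp.\ equated) against a kind of the form $\hsubst{K}{X}{\ksimp{J}}{U}$. Since hereditary substitution is a purely syntactic operation on raw kinds and shapes that does not mention the context, weakening leaves this kind literally unchanged, and the induction hypothesis for the spine premise applies to it verbatim. Consequently there is no genuine obstacle: the ``hard part'' is merely the bookkeeping of a large mutual induction, and every case falls into one of the patterns above. This lemma is then available to be used freely in the subsequent context-narrowing and hereditary-substitution lemmas for the canonical system.
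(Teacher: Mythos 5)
Your proof is correct and matches the paper's approach: the paper disposes of this lemma with exactly the same routine simultaneous induction on derivations of the mutually defined canonical judgments (handling the context-formation case via the side hypothesis, and noting nothing special is needed for the hereditarily substituted kinds, which do not mention the context). No gaps.
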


\begin{corollary}[iterated weakening]\label{cor:canon-iter-weaken}
  If $\; \ctxC{\Gamma, \Delta}$ and $\judgC{\Gamma}$, then
  $\judgC{\Gamma, \Delta}$.
\end{corollary}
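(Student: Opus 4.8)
The plan is to obtain \Cor{canon-iter-weaken} directly from the single-binding weakening lemma \Lem{canon-weaken} by induction on the suffix context $\Delta$, mirroring the proof of \Cor{decl_iter_weaken} for the declarative system.

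The base case $\Delta = \cempty$ is immediate, since the goal $\judgC{\Gamma}$ is already one of the hypotheses. For the inductive step, $\Delta$ has the form $\Delta', X \kas K$ or $\Delta', x \tas A$; I treat the first, the second being symmetric. Inverting the only applicable context-formation rule \ruleref{CC-TmBind} on the hypothesis $\ctxC{\Gamma, \Delta', X \kas K}$ yields $\ctxC{\Gamma, \Delta'}$ and $\kindC{\Gamma, \Delta'}{K}$. The induction hypothesis, applied to $\ctxC{\Gamma, \Delta'}$ and the unchanged hypothesis $\judgC{\Gamma}$, gives $\judgC{\Gamma, \Delta'}$. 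I then instantiate \Lem{canon-weaken}(2) with its context taken to be $\Gamma, \Delta'$, its suffix empty, and the fresh binding $X \kas K$: from $\judgC{\Gamma, \Delta'}$ together with $\kindC{\Gamma, \Delta'}{K}$ it delivers $\judgC{\Gamma, \Delta', X \kas K}$, as required. The case $\Delta = \Delta', x \tas A$ is identical, inverting \ruleref{CC-TpBind} to extract $\typeC{\Gamma, \Delta'}{A}$ and then appealing to \Lem{canon-weaken}(1).

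I do not expect any genuine obstacle; the single point requiring care is bookkeeping. Because \Lem{canon-weaken} inserts a binding in the interior of a context of shape $\Gamma, X \kas K, \Delta$, the iteration must strip bindings off the \emph{right} end of $\Delta$ and always re-insert them at the boundary between $\Gamma$ and the portion of $\Delta$ already processed, so that in every appeal to \Lem{canon-weaken} its own suffix parameter is empty. With that convention the argument goes through uniformly for every canonical judgment form — including the context-formation judgment itself, where the inductive step merely reproduces the hypothesis — and requires no case analysis on the form of the judgment being weakened. No circularity arises, since \Lem{canon-weaken} is established separately by induction on derivations and does not itself rely on the iterated version.
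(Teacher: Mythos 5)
Your proof is correct and is exactly the argument the paper intends: \Cor{canon-iter-weaken} is stated as an immediate consequence of \Lem{canon-weaken}, and the implicit proof is your induction on $\Delta$, peeling bindings off the right end, inverting \ruleref{CC-TmBind} or \ruleref{CC-TpBind} to obtain the well-formedness premise, and re-applying single-binding weakening with an empty suffix, just as for \Cor{decl_iter_weaken} in the declarative system. No gap; the freshness and bookkeeping points you flag are handled by the paper's convention that context variables are distinct.
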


\begin{lemma}[context narrowing -- weak version]\label{lem:canon-weak-narrow}
~
  \begin{enumerate}
  \item If $\; \typeC{\Gamma}{A}$, $\; \Gamma \ts A \tsub B \kin \kstar$
    and $\; \judgC{\Gamma, x \tas B, \Delta}$, then
    $\; \judgC{\Gamma, x \tas A, \Delta}$.
  \item If $\; \kindC{\Gamma}{J}$, $\; \Gamma \ts J \ksub K$ and
    $\; \judgC{\Gamma, X \kas K, \Delta}$, then
    $\; \judgC{\Gamma, X \kas J, \Delta}$.
  \end{enumerate}
\end{lemma}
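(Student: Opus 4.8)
The plan is to prove \Lem[Lemmas]{canon-weaken}, \ref{cor:canon-iter-weaken} and \ref{lem:canon-weak-narrow} in this order, each following the same pattern as its declarative counterpart (\Lem[Lemmas]{decl_weaken}, \ref{cor:decl_iter_weaken} and \ref{lem:decl_weak_narrow}), with the necessary adjustments for the features specific to the canonical system. For weakening, I would proceed by simultaneous induction on the derivations of the various canonical judgments $\judgC{\Gamma, \Delta}$. Most cases are routine: each rule's premises are re-derived by the induction hypothesis in the extended context $(\Gamma, x \tas A, \Delta)$ or $(\Gamma, X \kas K, \Delta)$, and the rule is re-applied. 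Since the variables bound in contexts are assumed distinct, the new binding never clashes with anything already present, and free variables remain in scope. As usual, the corollary on iterated weakening follows by an easy induction on the length of $\Delta$, re-inserting one binding at a time using \Lem{canon-weaken} and, for the new binding's annotation, the context-formation premises recorded by \ruleref{CC-TmBind} / \ruleref{CC-TpBind}.

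The first point requiring care is that some canonical judgments carry substitutions in kinds via hereditary substitution --- notably spine kinding \ruleref{CK-Cons} and spine equality \ruleref{SpEq-Cons}, where the tail is kinded against $\hsubst{K}{X}{\ksimp{J}}{U}$. Here I would use the fact that weakening is purely a renaming operation on the raw syntax and that hereditary substitution is defined on raw types and kinds, so it commutes with the insertion of a fresh binding; thus the hereditarily substituted kind in the premise is itself just the weakening of the corresponding kind in the conclusion, and the induction goes through. A second subtlety is the subsumption rule \ruleref{CV-Sub} for variable kinding (and the analogous placement of subsumption in \ruleref{CK-Sub} and \ruleref{CST-Intv} / \ruleref{CST-Abs}): these invoke subkinding $\Gamma \ts J \ksub K$, which is defined mutually with the other judgments, so subkinding must be part of the simultaneous induction. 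For subkinding, the rule \ruleref{CSK-DArr} carries a left-hand validity condition $\kind{\Gamma}{\dfun{X}{J_1}{K_1}}$, which is re-derived by the induction hypothesis on kind formation just like the other premises.

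For context narrowing (\Lem{canon-weak-narrow}), I would again proceed by simultaneous induction on the derivation of $\judgC{\Gamma, x \tas B, \Delta}$ (resp. $\judgC{\Gamma, X \kas K, \Delta}$), with the variable cases \ruleref{CC-TpBind}, \ruleref{CV-Var} and \ruleref{CK-Ne} being the interesting ones. When we encounter the narrowed variable $X$ with $\judgC{\cdots}(X) = K$, after narrowing it now has kind $J$ with $\Gamma \ts J \ksub K$; we recover the original kind $K$ by one application of \ruleref{CV-Sub}, using \Lem{canon-iter-weaken} to lift $\Gamma \ts J \ksub K$ and $\kindC{\Gamma}{K}$ into the narrowed context $(\Gamma, X \kas J, \Delta)$. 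This is exactly why \ruleref{CV-Sub} is kept in the system even though it is not strictly needed for variable kinding --- as remarked in \Sec{canonical}, without it the proof of context narrowing would circularly depend on transitivity of subkinding, functionality, and the hereditary substitution lemma. For the term-variable narrowing in \ruleref{CC-TpBind} the argument is symmetric, using subtyping $\Gamma \ts A \tsub B \kin \kstar$ and the subsumption rule \ruleref{CST-Sub} / \ruleref{CK-Sub}; the weakening lemma just proven is used throughout to move the hypotheses $\typeC{\Gamma}{A}$, $\Gamma \ts A \tsub B \kin \kstar$, $\kindC{\Gamma}{J}$ and $\Gamma \ts J \ksub K$ into the relevant sub-contexts.

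\textbf{Main obstacle.} I expect the chief difficulty to be the bookkeeping around the mutual dependence of all the canonical judgments combined with the hereditary-substitution-bearing spine rules: one has to set up the simultaneous induction over every judgment form at once (kind formation, subkinding, variable kinding, neutral kinding, kind synthesis, kind checking, spine kinding, subtyping, checked subtyping, spine equality, and the two equality judgments) and verify in the spine cases that weakening commutes with hereditary substitution on raw kinds. None of this is deep --- it is essentially the same argument as in the declarative system, minus the interval-specific wrinkles --- but making the induction hypothesis strong enough to cover the fresh-binding manipulations in every rule, and checking the handful of subsumption cases carefully, is where the real work lies. Everything else is routine re-application of rules in an enlarged context.
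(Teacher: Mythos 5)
Your proposal is correct and matches the paper's proof, which is exactly a routine (simultaneous) induction on the derivations of the canonical judgments, with weakening and iterated weakening established first and the narrowed-variable case recovered via \ruleref{CV-Sub} together with iterated weakening -- including the paper's own remark that \ruleref{CV-Sub} is what keeps narrowing from circularly depending on subkinding transitivity, functionality and the hereditary substitution lemma. The only cosmetic slips (citing a nonexistent canonical rule ``CST-Sub'' and treating the term-variable case as needing subsumption, when the canonical system has no term-typing judgment so that case is vacuous) do not affect the argument.
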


The proofs of both lemmas are routine inductions on the derivations of
the respective judgments.  The proof of context narrowing is only easy
thanks to the rule~\ruleref{CV-Sub}.  Without this rule, the canonical
kinding judgments for variables and neutral types would become
synthesis judgments and context narrowing would no longer hold in its
present form for these two judgments.  To see this, consider the
variable kinding judgment $\Gamma, X \kas K, \Delta \tsVar X \kin K$,
which, after eliminating~\ruleref{CV-Sub}, could only be derived
using~\ruleref{CV-Var}.  If we were to narrow the context by changing
the declared kind $K$ of $X$ to some $J$ such that
$\Gamma \ts J \ksub K$, then the synthesized kind of $X$ would
necessarily change to $J$ too.

For neutral kinding, we would be in a similar situation as the new
synthesized kind of the head would have to be propagated through the
kinding derivation of the spine.  Along the way, the new kind $J$
would have to be unraveled by repeatedly separating
$J = \dfun{X}{J_1}{J_2}$ into $J_1$, $J_2$ and hereditarily
substituting the next element of the spine into the codomain $J_2$.
To prove context narrowing for the remainder of the canonical
judgments, we would have to maintain the invariant that the new
synthesized kind of a neutral type is a subtype of the original one,
i.e. if $\; \Gamma, X \kas K_1, \Delta \tsVar N \ksn K_2$ and
$\Gamma \ts J_1 \ksub K_1$ then
$\Gamma, X \kas J_1, \Delta \tsVar N \ksn J_2$ and
$\Gamma \ts J_2 \ksub K_2$.  Because spine kinding involves hereditary
substitution, a proof involving this invariant would require a
hereditary substitution lemma for subtyping, i.e.\ a proof that
hereditary substitutions preserve subtyping.  We give such a proof in
\Sec{canon-hsubst-full} and, as we will see there, it makes crucial use of
context narrowing itself.  By allowing us to establish
\Lem{canon-weak-narrow} independently, the rule~\ruleref{CV-Sub} thus
simplifies the proof of an otherwise rather complicated lemma (see
\Lem{canon-hsubst-full} for details).

\subsubsection{Order-Theoretic Properties}
\label{sec:canon-order}

Having established context narrowing, we can prove the usual
order-theoretic properties of canonical subkinding, subtyping, as well
as kind and type equality.  We start by stating an proving the various
reflexivity properties which, unlike those for the declarative
relations, have to be proven simultaneously for the canonical
variants.

\begin{lemma}\label{lem:canon-refl}
  The following reflexivity rules are all admissible.
\end{lemma}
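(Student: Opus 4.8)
The statement to be proved is \Lem{canon-refl}, asserting admissibility of the various reflexivity rules for the canonical relations: subkinding ($\Gamma \ts K \ksub K$ from $\kindC{\Gamma}{K}$), checked subtyping ($\Gamma \ts V \tsub V \kck K$ from $\Gamma \ts V \kck K$), kind equality ($\Gamma \ts K \keq K$ from $\kindC{\Gamma}{K}$), type equality ($\Gamma \ts V \teq V \kck K$ from $\Gamma \ts V \kck K$), and presumably also spine equality ($\spC{\Gamma}{J}{\vec{V} \keq \vec{V}}{K}$ from $\spC{\Gamma}{J}{\vec{V}}{K}$) and proper-subtyping reflexivity ($\Gamma \ts V \tsub V$ from $\Gamma \ts V \ksn V \intv V$ or similar). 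Unlike the declarative system --- where \ruleref{ST-Refl} is a primitive rule and reflexivity for subkinding/equality follows from it by easy inductions (see \SupSec{decl_order}) --- the canonical subtyping judgment $\Gamma \ts U \tsub V$ has \emph{no} reflexivity rule, so these properties must be established from scratch, and because the judgments are mutually defined, they must be proven \emph{simultaneously}.

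\textbf{Key steps.} First I would set up a single simultaneous induction. The natural induction measure is the structure of the kind $K$ (for the kind-indexed judgments) together with the structure of the normal type $V$ (for proper subtyping) and the spine $\vec{V}$ (for spine equality); since normal kinds have only two shapes and normal types are $\eta$-long, this is well-founded. For the base cases I proceed by inspecting the shape of $K$. When $K = U \intv W$, checked-subtyping reflexivity follows from \ruleref{CST-Intv}, whose premises I can discharge using the given $\Gamma \ts V \kck U \intv W$ twice plus the proper-subtyping reflexivity $\Gamma \ts V \tsub V$; the latter I establish by a \emph{secondary} induction on the structure of the normal type $V$, handling the type formers via \ruleref{CST-Arr}, \ruleref{CST-All} (using subkinding and checked-subtyping reflexivity recursively on the smaller kinds), $\Top$/$\Bot$ via \ruleref{CST-Top}/\ruleref{CST-Bot}, and neutrals $N = \app{X}{\vec V}$ via \ruleref{CST-Ne}, which in turn needs spine-equality reflexivity $\spC{\Gamma}{J}{\vec V \keq \vec V}{K}$ --- and that follows from \ruleref{SpEq-Empty} and \ruleref{SpEq-Cons} using type-equality reflexivity on the spine heads, which is itself part of the simultaneous statement. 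For the arrow kind $K = \dfun{X}{J}{K'}$, checked-subtyping reflexivity uses \ruleref{CST-Abs}: since the only normal types of arrow shape are abstractions, $V = \lam{X}{J_0}{V_0}$ with $\simpEq{J_0}{J}$, and the body subtyping $\Gamma, X \kas J \ts V_0 \tsub V_0 \kck K'$ follows from the IH on the strictly smaller kind $K'$. Kind equality $\Gamma \ts K \keq K$ follows from \ruleref{CSK-AntiSym} once I have subkinding reflexivity in both directions (trivially the same derivation), and similarly type equality follows from \ruleref{CST-AntiSym} once I have checked-subtyping reflexivity; these are the easy parts of the bundle. Subkinding reflexivity $\Gamma \ts K \ksub K$ is by induction on $K$: \ruleref{CSK-Intv} for intervals (needing proper-subtyping reflexivity on the bounds, available since validity of $\kindC{\Gamma}{U \intv W}$ gives $\Gamma \ts U \kck \kstar$ and $\Gamma \ts W \kck \kstar$, hence the normal bounds are proper) and \ruleref{CSK-DArr} for arrows (needing the IH on the smaller kinds plus context narrowing, \Lem{canon-weak-narrow}, to move from $X \kas J_1$ to $X \kas J_2$ --- here trivial since $J_1 = J_2$).

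\textbf{The main obstacle.} The principal delicacy is the mutual entanglement: proper-subtyping reflexivity for a \emph{neutral} type $\app{X}{\vec V}$ via \ruleref{CST-Ne} demands spine-\emph{equality} reflexivity for $\vec V$, which demands type-\emph{equality} reflexivity for the spine heads, which via \ruleref{CST-AntiSym} demands \emph{checked-subtyping} reflexivity for those heads --- and the heads are components of $V$, hence smaller in the secondary (structural-on-$V$) induction, but their \emph{kinds} need not be smaller than $K$. So I must be careful that the induction is genuinely well-founded: the way out is that in \ruleref{CK-Ne} the spine $\vec V$ and its entries are strict subterms of the neutral $N$, so the secondary induction on the structure of the normal-form \emph{syntax} (not on the kind) carries the neutral case, while the primary induction on the kind carries the \ruleref{CST-Abs}/\ruleref{CSK-DArr} cases where the type stays the same size but the kind shrinks. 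I would phrase the whole lemma as one statement proven by lexicographic induction on (size of kind, size of type/spine), or equivalently note that every recursive appeal either shrinks the kind or shrinks the type while keeping the kind fixed. A second, minor nuisance is that \ruleref{CST-Intv} and \ruleref{CST-Abs} carry kind-\emph{checking} premises ($\Gamma \ts V_i \kck U \intv W$, $\Gamma \ts \lam{X}{J_i}{V_i} \kck \dfun{X}{J}{K}$) rather than just the synthesis judgments; these I discharge from the given $\Gamma \ts V \kck K$ directly, or via \ruleref{CK-Sub} together with synthesized-kinding validity (\Lem{canon-valid1}) and the fact that synthesized kinds are singletons (so $\Gamma \ts V \ksn V \intv V$ with $V \intv V \ksub U \intv W$). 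Aside from this bookkeeping, the argument is a routine simultaneous induction.
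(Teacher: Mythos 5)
Your overall route is the same as the paper's: all the reflexivity rules are proven simultaneously by induction on the structure of the kinds, types and spines being related to themselves, with the equalities obtained from \ruleref{CSK-AntiSym} and \ruleref{CST-AntiSym}, the neutral case handled by \ruleref{CST-Ne} via spine-equality reflexivity, and the arrow-kind case by \ruleref{CST-Abs}. Two of your details do not hold up as stated, however. First, the termination measure: a lexicographic induction with the kind as the \emph{primary} component fails in exactly the case you flag. For a neutral $\app{X}{\vec{V}}$ the spine heads are smaller types, but they are checked against the domains of the declared kind of $X$, which bear no relation to (and may well be larger than) the kind at which the neutral itself is being related; so the primary component can grow, and your fallback phrasing (``every recursive appeal either shrinks the kind or shrinks the type while keeping the kind fixed'') is violated there as well. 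The measure that works is simply the structure of the subjects (kinds, types, spines): every genuine recursive appeal shrinks the subject, and the remaining same-subject appeals (equality to subtyping via antisymmetry, checked subtyping at an interval to proper subtyping) are non-circular changes of judgment.

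Second, your \ruleref{CST-Abs} case glosses over the one step the paper singles out as requiring care. From $\Gamma \ts \lam{X}{J_1}{V_0} \kck \dfun{X}{J}{K'}$ you only obtain a synthesis $\Gamma, X \kas J_1 \ts V_0 \ksn K_0$ under the abstraction's \emph{own} annotation $J_1$, together with $\Gamma \ts J \ksub J_1$ and $\Gamma, X \kas J \ts K_0 \ksub K'$; but the third premise of \ruleref{CST-Abs} needs the body related under $X \kas J$ at kind $K'$. Before the IH applies you must therefore narrow the context from $J_1$ to $J$ (\Lem{canon-weak-narrow}) and then adjust the kind with \ruleref{CK-Sub}. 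This mismatch is precisely why the paper formulates the strengthened rule \ruleref{CST-ReflSub}, carrying an explicit subkinding premise, and proves it by an inner induction on that subkinding derivation, invoking context narrowing in its \ruleref{CSK-DArr} case; the only place you mention narrowing you dismiss it as trivial, whereas its essential use is here. Since narrowing is already available at this point of the development the repair is local, but as written your argument skips the step that makes the induction go through.
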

\begin{multicols}{2}
  \typicallabel{CSK-ReflSub}
\infrule[\ruledef{CSK-Refl}]
  {\kindC{\Gamma}{K}}
  {\Gamma \ts K \ksub K}

\infrule[\ruledef{CST-ReflSyn}]
  {\Gamma \ts U \ksn V \intv W}
  {\Gamma \ts U \tsub U}

\infrule[\ruledef{CKEq-Refl}]
  {\kindC{\Gamma}{K}}
  {\Gamma \ts K \keq K}

\infrule[\ruledef{CST-ReflCk}]
  {\Gamma \ts V \kck K  \andalso  \kindC{\Gamma}{K}}
  {\Gamma \ts V \tsub V \kck K}
\end{multicols}
\vspace{.1pt}

\infrule[\ruledef{CST-ReflSub}]
{\Gamma \ts V \ksn J  \andalso  \Gamma \ts J \ksub K  \andalso
 \kindC{\Gamma}{K}}
{\Gamma \ts V \tsub V \kck K}
\vspace{.1pt}

\begin{multicols}{2}
\infrule[\ruledef{SpEq-Refl}]
  {\spC{\Gamma}{J}{\vec{V}}{K}}
  {\spC{\Gamma}{J}{\vec{V} \keq \vec{V}}{K}}

\infrule[\ruledef{CTEq-Refl}]
  {\Gamma \ts V \kck K  \andalso  \kindC{\Gamma}{K}}
  {\Gamma \ts V \teq V \kck K}
\end{multicols}
\begin{proof}
  The proof is by mutual induction in the structure of the kinds and
  types being related to themselves, and then by case-analysis on the
  final rules of the corresponding kind formation and kinding
  derivations.  The proof for~\ruleref{CST-ReflSub} proceeds by an
  inner induction on the derivation of $\Gamma \ts J \ksub K$.  In the
  case for~\ruleref{CSK-DArr} where $V = \lam{X}{J_1}{U}$ and we have
$\Gamma \ts K_1 \ksub J_1$, $\Gamma, X \kas K_1 \ts J_2 \ksub K_2$
  and $\Gamma, X \kas J_1 \ts U \ksn J_2$, we use context narrowing to
  adjust the declared kind of $X$ from $J_1$ to $K_1$ in
  $\; \Gamma, X \kas J_1 \ts U \ksn J_2$ before applying the IH to
  obtain $\Gamma, X \kas K_1 \ts U \tsub U \kck K_2$.
\end{proof}
Transitivity of the various relations and symmetry of the equalities
are more easily established, thanks to the rule~\ruleref{CST-Trans} on
the one hand, and to the structure of equality on the other.
\begin{lemma}\label{lem:canon-trans}
  Canonical subkinding, subtyping, kind and type equality are
  transitive.
\end{lemma}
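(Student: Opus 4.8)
The plan is to dispatch the four relations in dependency order — proper subtyping $\to$ checked subtyping $\to$ type equality, and (proper subtyping $\to$) subkinding $\to$ kind equality — exploiting the fact that, unlike in the declarative system, the canonical system already carries an explicit transitivity rule \ruleref{CST-Trans} for proper subtyping. Transitivity of proper subtyping $\Gamma \ts U \tsub V$ is therefore immediate, and no mutual induction across the four relations is needed.

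For \emph{checked subtyping} I would prove that $\Gamma \ts U \tsub V \kck K$ and $\Gamma \ts V \tsub W \kck K$ entail $\Gamma \ts U \tsub W \kck K$ by induction on the structure of $K$, which is syntax-directed. If $K = U' \intv W'$, both hypotheses are instances of \ruleref{CST-Intv}: extract the underlying proper-subtyping premises $\Gamma \ts U \tsub V$, $\Gamma \ts V \tsub W$ and the kind-checking premises, combine the former by \ruleref{CST-Trans}, and re-apply \ruleref{CST-Intv}. If $K = \dfun{X}{J}{K'}$, both hypotheses are instances of \ruleref{CST-Abs} — normal types of arrow kind are $\eta$-long, hence operator abstractions — so $U = \lam{X}{J_U}{U'}$, $V = \lam{X}{J_V}{V'}$, $W = \lam{X}{J_W}{W'}$ with body judgments $\Gamma, X \kas J \ts U' \tsub V' \kck K'$ and $\Gamma, X \kas J \ts V' \tsub W' \kck K'$ in the \emph{same} context and codomain; the induction hypothesis on $K'$ combines them and \ruleref{CST-Abs} concludes, the outer kind-checking premises being carried by the hypotheses.

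For \emph{subkinding} I would show that $\Gamma \ts J \ksub K$ and $\Gamma \ts K \ksub L$ entail $\Gamma \ts J \ksub L$, by induction on the structure of the middle kind $K$ (quantified over all $\Gamma$, $J$, $L$). If $K$ is an interval then all three kinds are intervals, both hypotheses are instances of \ruleref{CSK-Intv}, and I recombine the bound-subtyping premises using transitivity of proper subtyping. If $K = \dfun{X}{K_1}{K_2}$ then $J = \dfun{X}{J_1}{J_2}$ and $L = \dfun{X}{L_1}{L_2}$; inversion yields $\Gamma \ts K_1 \ksub J_1$, $\Gamma \ts L_1 \ksub K_1$, $\Gamma, X \kas K_1 \ts J_2 \ksub K_2$, $\Gamma, X \kas L_1 \ts K_2 \ksub L_2$, and, from the left-hand well-formedness premise of the first hypothesis, $\kindC{\Gamma}{\dfun{X}{J_1}{J_2}}$. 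The induction hypothesis on $K_1$ gives $\Gamma \ts L_1 \ksub J_1$. To align the codomains I first recover $\kindC{\Gamma}{L_1}$ by applying context validity (\Lem{canon_ctx_valid}) to the judgment $\Gamma, X \kas L_1 \ts K_2 \ksub L_2$ and inverting \ruleref{CC-TmBind}; then context narrowing (\Lem{canon-weak-narrow}) transports $\Gamma, X \kas K_1 \ts J_2 \ksub K_2$ to $\Gamma, X \kas L_1 \ts J_2 \ksub K_2$, and the induction hypothesis on $K_2$ yields $\Gamma, X \kas L_1 \ts J_2 \ksub L_2$. A final application of \ruleref{CSK-DArr} closes the case.

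Transitivity of \emph{kind equality} and \emph{type equality} then follows mechanically. Kind equality is generated only by \ruleref{CSK-AntiSym}, so from $\Gamma \ts J \keq K$ and $\Gamma \ts K \keq L$ I read off $\kindC{\Gamma}{J}$, $\kindC{\Gamma}{L}$ and the four subkindings $J \ksub K$, $K \ksub J$, $K \ksub L$, $L \ksub K$; subkinding transitivity gives $J \ksub L$ and $L \ksub J$, and \ruleref{CSK-AntiSym} re-assembles $\Gamma \ts J \keq L$. The type-equality case is identical, using the carried premises of \ruleref{CST-AntiSym} together with transitivity of checked subtyping. I expect the dependent-arrow case of subkinding transitivity to be the only real obstacle: it is precisely the place where contravariance in the domain forces the two codomain derivations into mismatched contexts, so that context narrowing — and hence the fact that \Lem{canon-weak-narrow} was established independently — is essential. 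Everything else is rule inversion followed by reassembly.
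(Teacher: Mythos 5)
Your proposal is correct and follows essentially the same route as the paper: subkinding transitivity by induction on the structure of the middle kind with context narrowing (plus context validity for the left-hand domain) in the dependent-arrow case, checked subtyping by a kind-directed induction (the paper phrases it as induction on the derivation of the first premise, which is equivalent here since the checked rules are kind-directed), and the two equalities by inverting \ruleref{CSK-AntiSym}/\ruleref{CST-AntiSym} and reassembling with the transitivity results just proved. The extra steps you spell out (recovering $\kindC{\Gamma}{L_1}$ via \Lem{canon_ctx_valid}, and relying on the primitive \ruleref{CST-Trans} for proper subtyping) are exactly what the paper's terser proof leaves implicit.
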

\begin{multicols}{2}
  \typicallabel{CKEq-Trans}
\infrule[\ruledef{CSK-Trans}]
  {\Gamma \ts J \ksub K  \andalso  \Gamma \ts K \ksub L}
  {\Gamma \ts J \ksub L}

\infrule[\ruledef{CKEq-Trans}]
  {\Gamma \ts J \keq K  \andalso  \Gamma \ts K \keq L}
  {\Gamma \ts J \keq L}
\end{multicols}
\begin{multicols}{2}
\infrule[\ruledef{CST-TransCk}]
  {\Gamma \ts U \tsub V \kck K  \andalso  \Gamma \ts V \tsub W \kck K}
  {\Gamma \ts U \tsub W \kck K}

\infrule[\ruledef{CTEq-Trans}]
  {\Gamma \ts U \teq V \kck K  \andalso  \Gamma \ts V \teq W \kck K}
  {\Gamma \ts U \teq W \kck K}
\end{multicols}
\begin{proof}
  The proof of~\ruleref{CSK-Trans} is by induction on the structure of
  $K$ and case analysis on the final rules used to derive the
  premises.  In the case for $K = \dfun{X}{K_1}{K_2}$ we use context
  narrowing.  The proof of~\ruleref{CST-Trans} is by induction on the
  derivation of the first premise.  The proofs of~\ruleref{CKEq-Trans}
  and~\ruleref{CTEq-Trans} are by inspection of the equality rules and
  use~\ruleref{CSK-Trans} and~\ruleref{CST-Trans}, respectively.
\end{proof}

\begin{lemma}\label{lem:canon-sym}
  Canonical kind and type equality are symmetric.
  \begin{multicols}{2}\normalfont
    \typicallabel{CKEq-Sym}
\infrule[\ruledef{CKEq-Sym}]
    {\Gamma \ts J \keq K}
    {\Gamma \ts K \keq J}

\infrule[\ruledef{CTEq-Sym}]
    {\Gamma \ts U \keq V \kck K}
    {\Gamma \ts V \keq U \kck K}
  \end{multicols}
\end{lemma}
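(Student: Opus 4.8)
\textbf{Plan for proving Lemma~\ref{lem:canon-sym} (symmetry of canonical kind and type equality).}

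The plan is to proceed by straightforward case analysis on the single rule that generates each equality judgment, exactly as was done for the declarative system in the corollaries following \ruleref{KS-Trans} and \ruleref{ST-Trans}. For canonical kind equality \ruleref{CKEq-Sym}, the judgment $\Gamma \ts J \keq K$ can only have been derived by \ruleref{CSK-AntiSym}, so I have in hand $\kindC{\Gamma}{J}$, $\kindC{\Gamma}{K}$, $\Gamma \ts J \tsub K$ and $\Gamma \ts K \tsub J$. Swapping the two subkinding premises and the two well-formedness premises and re-applying \ruleref{CSK-AntiSym} immediately yields $\Gamma \ts K \keq J$. The argument for \ruleref{CTEq-Sym} is identical in structure: $\Gamma \ts U \keq V \kck K$ must have been derived by \ruleref{CST-AntiSym}, giving $\kindC{\Gamma}{K}$, $\Gamma \ts U \tsub V \kck K$ and $\Gamma \ts V \tsub U \kck K$; swapping the two checked-subtyping premises and re-applying \ruleref{CST-AntiSym} gives $\Gamma \ts V \keq U \kck K$.

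Since both \ruleref{CST-AntiSym} and \ruleref{CSK-AntiSym} are the \emph{unique} introduction rules for their respective equality judgments (as noted in \Sec{canonical}, the canonical kind- and type-equality judgments are each generated by exactly one rule, mirroring the declarative situation), there are no other cases to consider and no induction is required --- a single inversion step suffices in each case. The proof therefore does not depend on any of the harder machinery (hereditary substitution lemma, functionality, context narrowing); it only uses the shape of the two antisymmetry rules in \Fig{canon_rules_sup}.

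There is essentially no main obstacle here: the only thing to be careful about is that both antisymmetry rules carry explicit well-formedness premises ($\kindC{\Gamma}{J}$, $\kindC{\Gamma}{K}$ for \ruleref{CSK-AntiSym}, and $\kindC{\Gamma}{K}$ for \ruleref{CST-AntiSym}), and these are preserved verbatim under the swap, so no appeal to validity is needed to re-establish them. The one cosmetic point is that \ruleref{CSK-AntiSym} as written lists $\kindC{\Gamma}{J}$ before $\kindC{\Gamma}{K}$; when re-applying it to conclude $\Gamma \ts K \keq J$ one simply presents the premises in the order $\kindC{\Gamma}{K}$, $\kindC{\Gamma}{J}$, $\Gamma \ts K \tsub J$, $\Gamma \ts J \tsub K$, which are all already available. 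Hence the lemma follows directly by inversion and re-application of the antisymmetry rules.
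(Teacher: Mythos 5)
Your proof is correct and matches the paper's own argument, which is exactly "by inspection of the equality rules \ruleref{CSK-AntiSym} and \ruleref{CST-AntiSym}": since each equality judgment has a unique generating rule whose premises are symmetric under the swap, inversion followed by re-application suffices. No further commentary is needed.
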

\begin{proof}
  By inspection of the equality rules, \ruleref{CSK-AntiSym}
  and~\ruleref{CST-AntiSym}.
\end{proof}

Thanks to context narrowing and subkinding transitivity, we can prove
admissibility of the following subsumption rules for the three checked
judgments.
\begin{lemma}
  Kind subsumption is admissible in the checked judgments.
\end{lemma}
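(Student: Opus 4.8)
The plan is to establish admissibility of the expected subsumption rule for each of the three checked judgments, in the order: kind checking, checked subtyping, checked type equality. Concretely, the three rules assert that if $\Gamma \ts V \kck J$ (resp.\ $\Gamma \ts U \tsub V \kck J$, resp.\ $\Gamma \ts U \teq V \kck J$), $\Gamma \ts J \ksub K$ and $\kindC{\Gamma}{K}$, then $\Gamma \ts V \kck K$ (resp.\ $\Gamma \ts U \tsub V \kck K$, resp.\ $\Gamma \ts U \teq V \kck K$). Proving them in this order lets each part appeal only to the previous ones together with transitivity of canonical subkinding (\ruleref{CSK-Trans}, from \Lem{canon-trans}) and context narrowing (\Lem{canon-weak-narrow}), plus the fact that both canonical subkinding and the checked judgments are syntax-directed on their kind arguments.

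Kind subsumption for kind checking is immediate: the sole rule deriving $\Gamma \ts V \kck J$ is \ruleref{CK-Sub}, so inversion gives $\Gamma \ts V \ksn J'$ with $\Gamma \ts J' \ksub J$; composing with $\Gamma \ts J \ksub K$ via \ruleref{CSK-Trans} and reapplying \ruleref{CK-Sub} yields $\Gamma \ts V \kck K$ (the premise $\kindC{\Gamma}{K}$ is not even needed here). For checked subtyping I would argue by induction on the structure of the target kind $K$. If $\Gamma \ts U \tsub V \kck J$ ends in \ruleref{CST-Intv}, then $J = U_0 \intv W_0$ and, by inspection of the subkinding rules, $\Gamma \ts J \ksub K$ forces $K$ to be an interval $U_1 \intv W_1$; inverting \ruleref{CST-Intv} gives $\Gamma \ts U \kck U_0 \intv W_0$, $\Gamma \ts V \kck U_0 \intv W_0$ and $\Gamma \ts U \tsub V$, and applying part one to the first two (with $\Gamma \ts U_0 \intv W_0 \ksub U_1 \intv W_1$), then reassembling with \ruleref{CST-Intv}, closes the case with no recursion. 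If instead it ends in \ruleref{CST-Abs}, then $J = \dfun{X}{J_0}{K_0}$, $U = \lam{X}{J_1}{U'}$, $V = \lam{X}{J_2}{V'}$, and $\Gamma \ts J \ksub K$ must end in \ruleref{CSK-DArr}, so $K = \dfun{X}{J_0'}{K_0'}$ with $\Gamma \ts J_0' \ksub J_0$ and $\Gamma, X \kas J_0' \ts K_0 \ksub K_0'$; moreover inverting $\kindC{\Gamma}{K}$ via \ruleref{CWf-DArr} supplies $\kindC{\Gamma}{J_0'}$ and $\kindC{\Gamma, X \kas J_0'}{K_0'}$. I then lift the two kinding premises of \ruleref{CST-Abs} to the wider arrow kind using part one, narrow the third premise $\Gamma, X \kas J_0 \ts U' \tsub V' \kck K_0$ to the context $X \kas J_0'$ via \Lem{canon-weak-narrow} (using $\Gamma \ts J_0' \ksub J_0$ and $\kindC{\Gamma}{J_0'}$), and finally invoke the induction hypothesis at the structurally smaller kind $K_0'$ (using $\Gamma, X \kas J_0' \ts K_0 \ksub K_0'$ and $\kindC{\Gamma, X \kas J_0'}{K_0'}$) before rebuilding with \ruleref{CST-Abs}.

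Kind subsumption for checked type equality follows by inverting its only rule \ruleref{CST-AntiSym}: from $\Gamma \ts U \teq V \kck J$ we recover $\Gamma \ts U \tsub V \kck J$ and $\Gamma \ts V \tsub U \kck J$, subsume both to kind $K$ using part two (with $\Gamma \ts J \ksub K$ and $\kindC{\Gamma}{K}$), and reapply \ruleref{CST-AntiSym}, whose validity condition $\kindC{\Gamma}{K}$ we have by hypothesis.

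The main obstacle is the \ruleref{CST-Abs} case of the second part, where context narrowing of the bound variable's domain must be threaded together with subsumption of the codomain kind. This only goes through because (i) the induction is on the target kind $K$ rather than on the derivation, so that the recursive call lands at the proper subterm $K_0'$ — the derivation produced by \Lem{canon-weak-narrow} carrying no a~priori size bound — and (ii) the $\kindC{\Gamma}{K}$ premise, rather than subkinding validity (which is not yet available at this point of the development), furnishes the well-formedness of $J_0'$ and of $K_0'$ that narrowing and the recursive invocation require.
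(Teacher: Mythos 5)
Your proof follows the paper's own route almost exactly: the same three-part decomposition in the same order, with kind checking handled by inverting \ruleref{CK-Sub} and composing via \ruleref{CSK-Trans}, the abstraction case of checked subtyping handled by lifting the two kinding premises with part one, narrowing the body judgment with \Lem{canon-weak-narrow}, and recursing on the codomain, and type equality handled by inverting \ruleref{CST-AntiSym} and appealing to part two. Two deviations are worth noting. First, you induct on the structure of the target kind $K$ where the paper inducts on the derivation of $\Gamma \ts J \ksub K$; the two measures are interchangeable here since canonical subkinding is syntax-directed on the kinds, and your stated motivation for preferring structural induction is moot, because in the paper's scheme the inductive call is made on the subkinding sub-derivation $\Gamma, X \kas J_0' \ts K_0 \ksub K_0'$, which the narrowing step never touches. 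Second, and more substantively, you prove a weaker statement than the paper's: the rules \ruleref{CK-SubCk}, \ruleref{CST-SubCk} and \ruleref{CTEq-SubCk} carry no premise $\kindC{\Gamma}{K}$, whereas your versions do, and that premise does real work in your argument -- it discharges the well-formedness side condition of \Lem{canon-weak-narrow} for the narrower domain $J_0'$ and the validity condition of \ruleref{CST-AntiSym} when you rebuild the equation at $K$. Your observation that subkinding validity is unavailable at this stage of the development is correct, so your conditioned rules are a defensible, self-contained variant; but they are not the lemma as the paper states it, and the unconditioned rules are what the later development invokes, so with your formulation every call site would have to supply the extra well-formedness hypothesis itself.
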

\begin{center}
  \infruleSimp[CK-SubCk]
  {\Gamma \ts U \kck J  \andalso  \Gamma \ts J \ksub K}
  {\Gamma \ts U \kck K}
  \hspace{2em}
  \infruleSimp[CST-SubCk]
  {\Gamma \ts U \tsub V \kck J  \andalso  \Gamma \ts J \ksub K}
  {\Gamma \ts U \tsub V \kck K}

  \infruleSimp[CTEq-SubCk]
  {\Gamma \ts U \teq V \kck J  \andalso  \Gamma \ts J \ksub K}
  {\Gamma \ts U \teq V \kck K}
\end{center}
\begin{proof}
  Admissibility is proven separately for the three rules, in the order
  they are listed.  The proof of~\ruleref{CK-SubCk} is by inspection
  of the kind checking rules
and uses~\ruleref{CSK-Trans}.  The proof of~\ruleref{CST-SubCk} is
  by induction on the derivation of the second premise
  $\Gamma \ts J \ksub K$ and uses~\ruleref{CK-SubCk} as well as context
  narrowing for the inductive step.  The proof of~\ruleref{CTEq-SubCk}
  is by inspection of the equality rules and uses~\ruleref{CST-SubCk}.
\end{proof}

Kind subsumption subsumes kind conversion thanks to the first of the
following three rules, all of which follow immediately by inspection
of the equality and checked subtyping rules.
\begin{center}
  \infruleSimp[CSK-Refl-KEq]
  {\Gamma \ts J \keq K}
  {\Gamma \ts J \tsub K}
  \hspace{2em}
  \infruleSimp[CST-Refl-TEq]
  {\Gamma \ts U \teq V \kck K}
  {\Gamma \ts U \tsub V \kck K}

  \infruleSimp[CST-Refl-TEq']
  {\Gamma \ts U \teq V \kck W \intv W'}
  {\Gamma \ts U \tsub V}
\end{center}

\subsubsection{Canonical Replacements for Declarative Rules}
\label{sec:canon-admissible}

Our completeness proof for the canonical system relies on the fact
that the normal form $\nf{}{A}$ of any declaratively well-kinded type
$\Gamma \ts A \kin K$ kind checks against the normal form $\nf{}{K}$
of the corresponding kind $K$, i.e.\ that we have
$\nfCtx{\Gamma} \ts \nf{}{A} \kck \nf{}{K}$.  To simplify the proof of
this fact, we establish a set of admissible kind checking rules below
that mirror the corresponding declarative rules.

We begin by proving that normal forms with synthesized or checked
interval kinds also check against~$\kstar$.
\begin{lemma}\label{lem:CK-Intv-Star}
  Types inhabiting interval kinds are proper types.  If
  $\; \Gamma \ts U \ksn V \intv W$ or
  $\; \Gamma \ts U \kck V \intv W$, then also
  $\Gamma \ts U \kck \kstar$.
\end{lemma}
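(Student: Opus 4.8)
The plan is to prove the two alternatives in sequence, obtaining the kind-checking case as a corollary of the kind-synthesis case. For the first alternative, suppose $\Gamma \ts U \ksn V \intv W$. Since $U$ synthesizes an interval kind, \Lem{nfsn-intv-sing} gives $V \alEq U \alEq W$, so in fact $\Gamma \ts U \ksn U \intv U$. From this single premise, \ruleref{CST-Bot} and \ruleref{CST-Top} yield $\Gamma \ts \Bot \tsub U$ and $\Gamma \ts U \tsub \Top$, and \ruleref{CSK-Intv} then widens the synthesized singleton to $\Gamma \ts U \intv U \ksub \Bot \intv \Top$, i.e. $\Gamma \ts U \intv U \ksub \kstar$. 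Applying \ruleref{CK-Sub} to $\Gamma \ts U \ksn U \intv U$ together with this subkinding judgment delivers $\Gamma \ts U \kck \kstar$.

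For the second alternative, suppose $\Gamma \ts U \kck V \intv W$. The only kind-checking rule is \ruleref{CK-Sub}, so there is a kind $J$ with $\Gamma \ts U \ksn J$ and $\Gamma \ts J \ksub V \intv W$. Inverting the subkinding derivation is immediate: there is exactly one subkinding rule per kind shape, and \ruleref{CSK-DArr} has an arrow kind as its conclusion, so the derivation must end in \ruleref{CSK-Intv} and hence $J = J_1 \intv J_2$ for some $J_1, J_2$. Thus $\Gamma \ts U \ksn J_1 \intv J_2$, and the synthesis case just proved yields $\Gamma \ts U \kck \kstar$.

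There is essentially no real obstacle here; every ingredient --- \Lem{nfsn-intv-sing}, the extremal subtyping rules \ruleref{CST-Bot} and \ruleref{CST-Top}, the interval-widening rule \ruleref{CSK-Intv}, and the (trivial) invertibility of subkinding --- is already available at this point in the development. The only mild subtlety worth flagging is that one cannot conclude $\Gamma \ts U \kck \kstar$ by a naive appeal to subsumption on a synthesized $\kstar$: kind synthesis for a proper normal type produces the tightest singleton $U \intv U$ rather than $\kstar$ directly, so the passage through \Lem{nfsn-intv-sing} (to see that the synthesized interval really is that singleton) followed by the widening step via \ruleref{CSK-Intv} is precisely what bridges the gap.
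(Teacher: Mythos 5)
Your proof is correct and follows essentially the same route as the paper's: \Lem{nfsn-intv-sing} to collapse the synthesized interval to the singleton $U \intv U$, then \ruleref{CST-Bot}, \ruleref{CST-Top}, \ruleref{CSK-Intv} and \ruleref{CK-Sub} to widen to $\kstar$, and for the checking case an inversion of the unique rule \ruleref{CK-Sub} plus inspection of subkinding to reduce to the synthesis case.
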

\begin{proof}
  If $\; \Gamma \ts U \ksn V \intv W$, then by~\Lem{nfsn-intv-sing},
  $\; \Gamma \ts U \ksn U \intv U$.  From this we derive the result
  by~\ruleref{CST-Bot}, \ruleref{CST-Top}, \ruleref{CSK-Intv}
  and~\ruleref{CK-Sub}.  If $\; \Gamma \ts U \kck V \intv W$, then
  this must have been derived using~\ruleref{CK-Sub} and hence
  $\; \Gamma \ts U \ksn K$ and $\Gamma \ts K \ksub V \intv W$.  By
  inspection of the subkinding rules, $K = V' \intv W'$ for some $V'$
  and $W'$.  The result then follows by the first part of the lemma.
\end{proof}

The second part of the proof follows a pattern that is is quite
typical for proofs in the remainder of this section.  Thanks to the
division of kinding into kind synthesis and checking, and thanks to
the simplicity of both kind checking and subkinding derivations, we
can often prove properties of kind checking judgments
$\Gamma \ts U \kck K$ by appealing to similar properties of kind
synthesis judgments $\Gamma \ts U \ksn K'$ where $K$ and $K'$ have the
same shape, i.e. where $K$ and $K'$ are either both intervals or both
arrows.  Two more examples of this pattern appear in the following
lemma, where the admissibility proofs of the
rules~\rulerefN{CST-CkBnd1}{CST-CkBnd$_1$}
and~\rulerefN{CST-CkBnd2}{CST-CkBnd$_2$}, which have kind checking
judgments as their premises, appeal to instances
of~\rulerefN{CST-SynBnd1}{CST-SynBnd$_1$}
and~\rulerefN{CST-SynBnd2}{CST-SynBnd$_2$}, respectively, which have
similar kind synthesis judgments as their premises.

\begin{lemma}\label{lem:canon-admissible}
  All of the following are admissible.
\end{lemma}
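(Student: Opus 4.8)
The plan is to prove admissibility of each rule in \Lem{canon-admissible} separately, in the order stated, bootstrapping from the order-theoretic and subsumption rules of \SupSec{canon-order} and the validity properties of \Lem{canon-valid1}. Three proof patterns recur. The checked analogues of \ruleref{K-Var}, \ruleref{K-Top}, \ruleref{K-Bot}, \ruleref{K-Arr}, \ruleref{K-All} and \ruleref{K-Sing} are all handled by what one might call \emph{synthesize-then-subsume}: apply the matching kind-synthesis rule (\ruleref{CK-Top}, \ruleref{CK-Bot}, \ruleref{CK-Arr}, \ruleref{CK-All}, or \ruleref{CK-Sing}; for variables, synthesise the kind of $\etaExp{K}{X}$ via \Lem{sk-etaexp}), which for a proper type always yields a singleton $V \intv V$, and then widen it to the expected kind with \ruleref{CK-Sub}, building the required subkinding premise from \ruleref{CST-Bot}, \ruleref{CST-Top}, \ruleref{CSK-Intv} and reflexivity; \Lem{CK-Intv-Star} and \Lem{nfsn-intv-sing} discharge the residual well-formedness and singleton side conditions. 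The checked analogue of \ruleref{K-Abs} is obtained by peeling a \ruleref{CK-Sub} off the premise $\Gamma \ts U \kck \dfun{X}{J}{K}$ — which, since subkinding preserves shapes, exposes a synthesis derivation ending in \ruleref{CK-Abs} — and reassembling it, using context narrowing (\Lem{canon-weak-narrow}) and the reflexivity rules of \Lem{canon-refl} to line up the domain kinds.

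The delicate rule is the checked analogue of \ruleref{K-App}. From premises $\Gamma \ts U \kck \dfun{X}{J}{K}$ and $\Gamma \ts V \kck J$ I would first peel \ruleref{CK-Sub} to obtain $\Gamma \ts U \ksn J'$ with $\Gamma \ts J' \ksub \dfun{X}{J}{K}$; since subkinding preserves shapes, $J'$ is itself an arrow kind, and since \ruleref{CK-Abs} is the only synthesis rule producing arrow kinds and normal types of arrow shape are $\eta$-long, $U$ must be an operator abstraction $\lam{X}{J_1}{W}$. Hence the reducing application $\rapp{U}{\ksimp{J}}{V}$ unfolds to the hereditary substitution $\hsubst{W}{X}{\ksimp{J_1}}{V}$, with $\ksimp{J_1} \alEq \ksimp{J}$ by shape-preservation of subkinding. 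The hereditary substitution lemma (\Lem{canon-hsubst}) then shows this result is canonically well-kinded, and its synthesised kind is reconciled with the expected $\hsubst{K}{X}{\ksimp{J}}{V}$ using commutativity of hereditary substitution (\Lem{hsubst-comm}), the weak-equality machinery of \SupSec{simple}, and \ruleref{CST-SubCk}. This step is the main obstacle: one must match the kind of $\hsubst{W}{X}{\ksimp{J_1}}{V}$ against $\hsubst{K}{X}{\ksimp{J}}{V}$ in the presence of a possible mismatch of domain annotations — precisely where weak equality and \Lem{etaexp-hsubst}/\Lem{weq-nf-comm} earn their keep — while keeping the ambient induction (which ultimately proceeds on the shape $\ksimp{J}$, riding on the structure shared with \Lem{canon-hsubst}) well-founded.

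Finally, the bound-projection rules. For \ruleref{CST-SynBnd1} and \ruleref{CST-SynBnd2}, a synthesis judgment assigning an interval kind to a neutral type $N$ must, after peeling \ruleref{CK-Sub} and applying \Lem{nfsn-intv-sing}, come from an underlying neutral-kinding judgment $\Gamma \tsNe N \kin U' \intv W'$; to this I apply \rulerefN{CST-Bnd1}{CST-Bnd$_1$} (resp.\ \rulerefN{CST-Bnd2}{CST-Bnd$_2$}) and then cast to the required kind with \ruleref{CST-Intv} and \Lem{CK-Intv-Star}. The checked versions \ruleref{CST-CkBnd1} and \ruleref{CST-CkBnd2} reduce to the synthesised ones by the same synthesize-then-subsume move used for \Lem{CK-Intv-Star}: peel \ruleref{CK-Sub} off the interval-kinded premise to expose a synthesis judgment of matching shape, apply the synthesised rule, and widen the resulting subtyping judgment back with \ruleref{CST-SubCk}. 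All remaining routine side conditions follow from context validity (\Lem{canon_ctx_valid}), weakening (\Lem{canon-weaken}), and the order-theoretic rules of \SupSec{canon-order}.
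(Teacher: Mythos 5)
Your proposal proves the wrong lemma in its most labour-intensive part, and in doing so introduces a circularity the paper is careful to avoid. \Lem{canon-admissible} does not contain checked analogues of \ruleref{K-Var}, \ruleref{K-Top}, \ruleref{K-Bot}, and --- crucially --- it contains no checked analogue of \ruleref{K-App}. The rules it does contain are \ruleref{CK-Sing'}, \ruleref{CK-SynCk}, \ruleref{CK-NeCk}, the primed formation rules for arrows, universals and abstractions, \ruleref{CST-Intv'}, the four bound-projection rules, and the two snoc rules \ruleref{CK-Snoc} and \ruleref{SpEq-Snoc}. Your ``delicate'' application case, argued via the hereditary substitution lemma, commutativity of hereditary substitutions and the weak-equality machinery, is therefore not needed here; worse, it cannot be used here, because this lemma belongs to the basic properties established \emph{before} \Lem{canon-hsubst-full}, whose own proof relies on rules from this very lemma (\eg \ruleref{CK-NeCk} and \rulerefN{CST-CkBnd1}{CST-CkBnd$_{1,2}$} in the neutral cases, and \ruleref{CK-Snoc} later in the $\eta$-expansion lemma). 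The application-like content that \emph{is} in the lemma, \ruleref{CK-Snoc} and \ruleref{SpEq-Snoc}, needs no hereditary substitution lemma at all: hereditary substitution occurs there only as a syntactic operation in the conclusion, and the paper proves both by a plain induction on the derivation of the first (spine) premise, pushing the new element through \ruleref{CK-Cons}/\ruleref{SpEq-Cons}. You also leave \ruleref{CK-Snoc}, \ruleref{SpEq-Snoc} and \ruleref{CST-Intv'} unaddressed.

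Your treatment of the bound-projection rules is also off target. The premise of \rulerefN{CST-SynBnd1}{CST-SynBnd$_1$} is a \emph{synthesis} judgment $\Gamma \ts U \ksn V_1 \intv V_2$; there is no \ruleref{CK-Sub} to peel, and $U$ need not be neutral (it can be $\Top$, $\Bot$, an arrow or a universal, each of which synthesizes its own singleton). By \Lem{nfsn-intv-sing} the bounds are syntactically $U$ itself, so the conclusion $\Gamma \ts V_1 \tsub U$ is just reflexivity (\ruleref{CST-ReflSyn}); your detour through an ``underlying neutral-kinding judgment'' and \rulerefN{CST-Bnd1}{CST-Bnd$_1$} fails for the non-neutral cases. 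The checked versions are then obtained, as you say, by inspecting \ruleref{CK-Sub} and subkinding and appealing to the synthesized versions --- that part matches the paper. Finally, your reading of \ruleref{CK-Abs'} inverts its premise: the hypothesis is kind checking of the \emph{body} $\Gamma, X \kas J \ts V \kck K$, so one peels \ruleref{CK-Sub} there, applies \ruleref{CK-Abs}, and widens with \ruleref{CSK-DArr} and \ruleref{CK-Sub}; no narrowing of an abstraction-level premise is involved.
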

\begin{multicols}{3}
  \typicallabel{}
\infrule[\ruledef{CK-Sing'}]
  {\Gamma \ts U \kck V \intv W}
  {\Gamma \ts U \ksn U \intv U}

\infrule[\ruledef{CK-SynCk}]
  {\Gamma \ts V \ksn K}
  {\Gamma \ts V \kck K}

\infrule[\ruledef{CK-NeCk}]
  {\Gamma \tsNe N \kin U \intv V}
  {\Gamma \ts N \kck U \intv V}
\end{multicols}
\begin{multicols}{2}
\typicallabel{CST-SynBnd$_1$}
\infrule[\ruledef{CK-Arr'}]
  {\typeCCk{\Gamma}{U}  \andalso  \typeCCk{\Gamma}{V}}
  {\typeCCk{\Gamma}{\fun{U}{V}}}

\infrule[\ruledef{CK-All'}]
  {\kindC{\Gamma}{K}  \andalso  \typeCCk{\Gamma, X \kas K}{V}}
  {\typeCCk{\Gamma}{\all{X}{K}{V}}}
\end{multicols}
\begin{multicols}{2}
  \typicallabel{CST-SynBnd$_1$}
\infrule[\ruledef{CK-Abs'}]
  {\kindC{\Gamma}{J}  \andalso  \Gamma, X \kas J \ts V \kck K}
  {\Gamma \ts \lam{X}{J}{V} \kck \dfun{X}{J}{K}}

\infrule[\ruledef{CST-Intv'}]
  {\Gamma \ts V_1 \tsub V_2  \kck U \intv W}
  {\Gamma \ts V_1 \tsub V_2 \kck V_1 \intv V_2}
\end{multicols}
\begin{multicols}{2}
  \typicallabel{CST-SynBnd$_1$}
\infrule[\ruledefN{CST-SynBnd1}{CST-SynBnd$_1$}]
  {\Gamma \ts U \ksn V_1 \intv V_2}
  {\Gamma \ts V_1 \tsub U}

\infrule[\ruledefN{CST-CkBnd1}{CST-CkBnd$_1$}]
  {\Gamma \ts U \kck V_1 \intv V_2}
  {\Gamma \ts V_1 \tsub U}

\infrule[\ruledefN{CST-SynBnd2}{CST-SynBnd$_2$}]
  {\Gamma \ts U \ksn V_1 \intv V_2}
  {\Gamma \ts U \tsub V_2}

\infrule[\ruledefN{CST-CkBnd2}{CST-CkBnd$_2$}]
  {\Gamma \ts U \kck V_1 \intv V_2}
  {\Gamma \ts U \tsub V_2}
\end{multicols}
\vspace{.1pt}

\typicallabel{SpEq-Snoc}
\infrule[\ruledef{CK-Snoc}]
{\spC{\Gamma}{J}{\vec{U}}{\dfun{X}{K}{L}}  \andalso
 \Gamma \ts V \kck K  \andalso  \kindC{\Gamma}{K}}
{\spC{\Gamma}{J}{\scons{\vec{U}}{V}}{\hsubst{L}{X}{\ksimp{K}}{V}}}

\infrule[\ruledef{SpEq-Snoc}]
{\spC{\Gamma}{J}{\vec{U}_1 \keq \vec{U}_2}{\dfun{X}{K}{L}}  \andalso
 \Gamma \ts V_1 \keq V_2 \kck K}
{\spC{\Gamma}{J}{\scons{\vec{U}_1}{V_1} \keq \scons{\vec{U}_2}{V_2}}{\hsubst{L}{X}{\ksimp{K}}{V_1}}}

\begin{proof}
  Some of the rules are derivable others are admissible; most of the
  proofs are straightforward, so we omit the details.  The proofs of
  the alternate type formation rules use~\ruleref{CK-Sing'} and
  \Lem{CK-Intv-Star}.  The proofs of the alternate projection
  rules~\rulerefN{CST-SynBnd1}{CST-SynBnd$_1$}
  and~\rulerefN{CST-SynBnd2}{CST-SynBnd$_2$} use \Lem{nfsn-intv-sing}
  and reflexivity; those of~\rulerefN{CST-CkBnd1}{CST-CkBnd$_1$}
  and~\rulerefN{CST-CkBnd2}{CST-CkBnd$_2$} are by inspection of kind
  checking and subkinding and
  use~\rulerefN{CST-SynBnd1}{CST-SynBnd$_1$}
  and~\rulerefN{CST-SynBnd2}{CST-SynBnd$_2$}, respectively.  The
  proofs of the last two rules are by induction on the derivations of
  the respective first premises.
\end{proof}

As in the declarative system, we define \emph{canonical context
  equality} $\ctxEqC{\Gamma}{\Delta}$ as the pointwise lifting of
canonical type and kind equality to context bindings:
\begin{center}
  \infruleSimp{\quad}{\ctxEqD{\cempty}{\cempty}}
  \hspace{3em}
  \infruleSimp
  {\ctxEqD{\Gamma}{\Delta} \andalso \Gamma \ts J \keq K}
  {\ctxEqD{\Gamma, X \kas J}{\Delta, X \kas K}}
  \hspace{3em}
  \infruleSimp
  {\ctxEqD{\Gamma}{\Delta} \andalso \Gamma \ts U \teq V \kck W \intv W'}
  {\ctxEqD{\Gamma, x \tas U}{\Delta, x \tas V}}
\end{center}

\subsubsection{Simplification of Canonical Kinding}

In \Sec{simple_nf} of the previous section, we showed that every
well-formed kind and well-kinded type has a simply well-formed or
simply well-kinded normal form, respectively (see \Lem{nf-simp}).
Canonically well-formed kinds and canonically well-kinded types are
already in normal form, but we can still simplify their kind formation
and kinding derivations, as the following pair of lemma shows.  In the
statement of the second lemma, we use the subscript ``$\mathsf{nes}$''
to mark simple kinding judgments for neutral types and
``$\mathsf{ne}$'' to mark their canonical counterparts.
\begin{lemma}\label{lem:st-teq-simp}
  Canonical subkinds and equal kinds simplify equally.  If
  $\; \Gamma \ts J \tsub K$ or $\; \Gamma \ts J \teq K$ then
  $\simpEq{J}{K}$.
\end{lemma}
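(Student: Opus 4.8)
The plan is to mirror the proof of the declarative analogue, \Lem{simp-decl-sk}, the only difference being that we now reason about the canonical rules for subkinding and kind equality collected in \Fig{canon_rules_sup}. I would first establish the statement for subkinding, $\Gamma \ts J \ksub K$, by induction on its derivation. Since \ruleref{CSK-Trans} and \ruleref{CSK-Refl} are only admissible (\Lem{canon-trans}, \Lem{canon-refl}) and not primitive, there are just two cases to consider. If the derivation ends in \ruleref{CSK-Intv}, then $J = U_1 \intv V_1$ and $K = U_2 \intv V_2$, so $\ksimp{J} \alEq \kstar \alEq \ksimp{K}$ directly from the definition of the erasure map (\Fig{encodings}), without even invoking the induction hypothesis. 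If it ends in \ruleref{CSK-DArr}, then $J = \dfun{X}{J_1}{K_1}$ and $K = \dfun{X}{J_2}{K_2}$, with subderivations $\Gamma \ts J_2 \ksub J_1$ and $\Gamma, X \kas J_2 \ts K_1 \ksub K_2$; applying the induction hypothesis to each yields $\ksimp{J_1} \alEq \ksimp{J_2}$ and $\ksimp{K_1} \alEq \ksimp{K_2}$, and hence $\ksimp{J} \alEq \fun{\ksimp{J_1}}{\ksimp{K_1}} \alEq \fun{\ksimp{J_2}}{\ksimp{K_2}} \alEq \ksimp{K}$. The fact that the second subderivation lives in a context extended by $X \kas J_2$ is irrelevant, since shapes do not depend on contexts.

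For the kind-equality half, observe that $\Gamma \ts J \keq K$ can only have been derived by \ruleref{CSK-AntiSym}, one of whose premises is $\Gamma \ts J \ksub K$; the conclusion $\ksimp{J} \alEq \ksimp{K}$ is then immediate from the subkinding half just proven. Equivalently, one can fold everything into a single induction in which the \ruleref{CSK-AntiSym} case appeals to the induction hypothesis on its subkinding subderivation, which is a strict subderivation of the current one.

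I do not anticipate any genuine obstacle: the lemma is a routine structural induction with a two-case base/step analysis. The only point worth double-checking is that the inventory of \emph{primitive} canonical subkinding rules really is exactly \ruleref{CSK-Intv} and \ruleref{CSK-DArr}, so that the case analysis is exhaustive; everything else then follows mechanically.
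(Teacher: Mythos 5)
Your proposal is correct and matches the paper's proof, which is exactly a routine induction on the subkinding and kind-equality derivations (the paper leaves the two-case analysis on \ruleref{CSK-Intv} and \ruleref{CSK-DArr}, and the appeal to \ruleref{CSK-AntiSym} for equality, implicit). Your observation that the primitive canonical subkinding rules are precisely these two, with reflexivity and transitivity only admissible, is the right exhaustiveness check.
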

\begin{proof}
  Separately, by induction on subkinding and kind equality
  derivations, respectively.
\end{proof}
\begin{lemma}[simplification]\label{lem:canon-simp}
  Well-formed kinds and well-kinded normal forms, neutrals and spines
  are also simply well-formed and well-kinded, respectively.
  \begin{enumerate}[nosep]
  \item\label{item:canon-simp-var} If $\; \Gamma \tsVar X \kin K$, then
    $\; \ksimp{\Gamma} \ts[nes] X \kin \ksimp{K}$.
  \item\label{item:canon-simp-kd} If $\; \kindC{\Gamma}{K}$, then
    $\; \kindS{\ksimp{\Gamma}}{K}$.
  \item\label{item:canon-simp-ne} If $\; \Gamma \tsNe N \kin K$, then
    $\; \ksimp{\Gamma} \ts[nes] N \kin \ksimp{K}$.
  \item\label{item:canon-simp-sn} If $\; \Gamma \ts V \ksn K$, then
    $\; \ksimp{\Gamma} \ts V \kin \ksimp{K}$.
  \item\label{item:canon-simp-ck} If $\; \Gamma \ts V \kck K$, then
    $\; \ksimp{\Gamma} \ts V \kin \ksimp{K}$.
  \item\label{item:canon-simp-sp} If $\; \spC{\Gamma}{J}{\vec{V}}{K}$,
    then $\; \spS{\ksimp{\Gamma}}{\ksimp{J}}{\vec{V}}{\ksimp{K}}$.
  \end{enumerate}
\end{lemma}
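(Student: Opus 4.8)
The plan is to prove all six parts of \Lem{canon-simp} simultaneously, by induction on the derivation of the single canonical judgment named in each part --- the derivation of $\Gamma \tsVar X \kin K$ for part~(1), of $\kindC{\Gamma}{K}$ for part~(2), of $\Gamma \tsNe N \kin K$ for part~(3), and so on. Since each canonical kinding, kind-formation and spine-kinding rule has a simple-kinding counterpart of essentially the same shape, most cases reduce to applying the induction hypotheses to the premises and re-assembling with the matching simple rule. The only external facts needed are the defining equations of $\ksimp{-}$ (in particular $\ksimp{\dfun{X}{J}{K}} \alEq \fun{\ksimp{J}}{\ksimp{K}}$ and $\ksimp{U \intv V} \alEq \kstar$ for kinds, the pointwise action $\ksimp{\Gamma, X \kas K} \alEq \ksimp{\Gamma}, X \kas \ksimp{K}$ on contexts, and the lookup identity $\ksimp{\Gamma}(X) \alEq \ksimp{\Gamma(X)}$), together with \Lem{st-teq-simp} (canonical subkinds have equal shapes) and \Lem{simp-stable-hsubst} (shapes are stable under hereditary substitution). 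Unlike the hereditary substitution lemma, no subtle induction measure is required: every premise invoked by an induction hypothesis is a strict sub-derivation of the current one, so the simultaneous induction is plainly well-founded.

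First I would dispatch the structural cases. For \ruleref{CWf-Intv} and \ruleref{CWf-DArr} one applies the induction hypotheses to the premises and concludes with \ruleref{SWf-Intv}/\ruleref{SWf-DArr}; likewise \ruleref{CK-Top}, \ruleref{CK-Bot}, \ruleref{CK-Arr}, \ruleref{CK-All} and \ruleref{CK-Abs} are matched by \ruleref{SK-Top}, \ruleref{SK-Bot}, \ruleref{SK-Arr}, \ruleref{SK-All} and \ruleref{SK-Abs}, using the definition of $\ksimp{-}$ to line up the shapes, and \ruleref{CK-Empty} maps to \ruleref{SK-Empty}. The variable and neutral cases are just as short: \ruleref{CV-Var} gives $\Gamma(X) = K$, hence $\ksimp{\Gamma}(X) \alEq \ksimp{K}$, and we conclude by \ruleref{SK-VarApp} with the empty spine (\ruleref{SK-Empty}); \ruleref{CK-Ne} combines the induction hypothesis for part~(1) on the head with that for part~(6) on the spine via \ruleref{SK-VarApp}; and \ruleref{CK-Sing} uses $\ksimp{N \intv N} \alEq \kstar$ and \ruleref{SK-Ne}. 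The two subsumption rules \ruleref{CV-Sub} and \ruleref{CK-Sub} are the reason \Lem{st-teq-simp} is needed: the induction hypothesis yields simple kinding at shape $\ksimp{J}$, and the accompanying subkinding premise $\Gamma \ts J \ksub K$ forces $\simpEq{J}{K}$, so the synthesized shape is already the required one.

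The one case that carries any real content is \ruleref{CK-Cons}. There we have $\spC{\Gamma}{\dfun{X}{J}{K}}{\scons{U}{\vec{V}}}{L}$ built from $\Gamma \ts U \kck J$, $\kindC{\Gamma}{J}$ and $\spC{\Gamma}{\hsubst{K}{X}{\ksimp{J}}{U}}{\vec{V}}{L}$. The induction hypothesis for part~(6) on the third premise gives $\spS{\ksimp{\Gamma}}{\ksimp{\hsubst{K}{X}{\ksimp{J}}{U}}}{\vec{V}}{\ksimp{L}}$, and \Lem{simp-stable-hsubst} rewrites the input shape as $\ksimp{\hsubst{K}{X}{\ksimp{J}}{U}} \alEq \ksimp{K}$; the induction hypothesis for part~(5) on the first premise gives $\ksimp{\Gamma} \ts U \kin \ksimp{J}$. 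An application of \ruleref{SK-Cons}, together with $\ksimp{\dfun{X}{J}{K}} \alEq \fun{\ksimp{J}}{\ksimp{K}}$, then yields $\spS{\ksimp{\Gamma}}{\ksimp{\dfun{X}{J}{K}}}{\scons{U}{\vec{V}}}{\ksimp{L}}$. This is the only point where the dependent structure of canonical spine kinding --- the hereditary substitution into the codomain of an arrow kind --- must be discharged, and the fact that shapes are insensitive to it (\Lem{simp-stable-hsubst}) is precisely what makes the step trivial; so the ``main obstacle'' is really just recognising that this case closes under stability of shapes, everything else being bookkeeping. The payoff is that simplification strips all type dependencies from a canonical kinding derivation, letting us bring the lighter-weight metatheory of the simple system --- notably commutativity of hereditary substitutions, \Lem{hsubst-comm} --- to bear on canonical normal forms.
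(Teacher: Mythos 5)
Your proposal is correct and follows essentially the same route as the paper: induction on the respective derivations, matching each canonical rule with its simple counterpart, invoking \Lem{st-teq-simp} in the subsumption cases \ruleref{CV-Sub} and \ruleref{CK-Sub}, and discharging \ruleref{CK-Cons} via stability of shapes under hereditary substitution (\Lem{simp-stable-hsubst}), which the paper's terse proof leaves implicit. The only (inessential) difference is organizational: the paper proves part~(1) separately and parts~(2)--(6) simultaneously, whereas you run all six parts in one simultaneous induction.
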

\begin{proof}
  Part~\Item{canon-simp-var} is proven separately,
  parts~\Item{canon-simp-kd}--\Item{canon-simp-sp} are proven
  simultaneously, all by induction on the derivations of the
  respective premises.  The cases of~\ruleref{CV-Sub}
  and~\ruleref{CK-Sub} use of \Lem{st-teq-simp}.
\end{proof}
Thanks to \Lem{canon-simp}, properties of simply kinded normal forms
still hold for canonically kinded normal forms.  For example,
by~\Lem[Lemmas]{canon-simp} and~\ref{lem:hsubst-comm}, hereditary
substitutions in canonically kinded types commute.

\subsection{The Hereditary Substitution Lemma}
\label{sec:canon-hsubst-full}

We have arrived at
the core of the technical development of this section: the proof of
the \emph{hereditary substitution lemma}.  The hereditary substitution
lemma states, roughly, that canonical judgments are preserved by
hereditary substitutions of canonically well-kinded types.  Just as
the ordinary substitution lemma for the declarative system
(\Lem{decl_subst})
played a key role in the proofs of several metatheoretic properties
in \Sec{declarative} of the paper,
the hereditary substitution lemma is key to
proving important metatheoretic properties of the canonical system.
But unlike that of its ordinary counterpart, the proof of the
hereditary substitution lemma is rather challenging.  This is
reflected already in the statement of the lemma, which
features \Item{can-rapp-teq} separate parts, all of which have to be
proven simultaneously (see \Lem{canon-hsubst-full} below).

One reason for the large number of parts is simply that there are more
judgment forms in the canonical system than there are in the
declarative system.  But the foremost reason is that the proof of the
hereditary substitution lemma circularly depends on
\emph{functionality of the canonical judgments}, i.e.\ on the fact
that hereditarily substituting canonically equal types in normal forms
yields canonically equal normal forms.
This also renders the proof more challenging since both properties
have to be established at the same time.

The main source of complexity is the subtyping rule~\ruleref{CST-Ne}.
It is because of this rule that we have to prove the hereditary
substitution and functionality lemmas simultaneously.

To illustrate this, consider the neutral types $N = \app{X}{\vec{V}}$
and $M = \app{X}{\vec{W}}$ with $\vec{V} = V_1, V_2$,
$\vec{W} = W_1, W_2$ such that
$X \notin \fv(\vec{V}) \cup \fv(\vec{W})$, and assume some $\Gamma$,
$\Delta$ and $U$ such that
\begin{align*}
&\Gamma, \Delta \; \ts \;
    \lam{Y_1}{J_1}{\lam{Y_2}{J_2}{U}} \; \kck \; J &
  &\text{and}&
  &\spC{\Gamma, X \kas J, \Delta \;}{\; J \;}{\; \vec{V}  \; \teq  \;
    \vec{W} \;}{\; \kstar}
\end{align*}
for $J = \dfun{Y_1}{J_1}{\dfun{Y_2}{J_2}{J_3}}$.  Then,
by~\ruleref{CST-Ne}, we have $\Gamma, X \kas J, \Delta \ts N \tsub M$.

We would like to hereditarily substitute
$U' = \lam{Y_1}{J_1}{\lam{Y_2}{J_2}{U}}$ for $X$ in $N$ and $M$ and
show that the resulting types remain subtypes, i.e. that
\[
  \Gamma, \hsubst{\Delta}{X}{\ksimp{J}}{U'} \; \ts \;
  \hsubst{N}{X}{\ksimp{J}}{U'} \; \tsub \;
  \hsubst{M}{X}{\ksimp{J}}{U'}.
\]
By the definition of hereditary substitution, we have
\begin{align*}
  \hsubst{N}{X}{\ksimp{J}}{U'} \; &\alEq \;
  \hsubst{(\app{X}{\vec{V}})}{X}{\ksimp{J}}{U'} \; \alEq \;
  \rapp{U'}{\ksimp{J}}{(\hsubst{\vec{V}}{X}{\ksimp{J}}{U'})}\\
  &\alEq \;
  \rapp{\rapp{U'}{\ksimp{J}}{V_1}}{\fun{\ksimp{J_2}}{\ksimp{J_3}}}{V_2}\\
  &\alEq \;
  \rapp{(\hsubst{(\lam{Y_2}{J_2}{U})}{Y_1}{\ksimp{J_1}}{V_1})}{\fun{\ksimp{J_2}}{\ksimp{J_3}}}{V_2}\\
  &\alEq \;
  \rapp{(\lam{Y_2}{\hsubst{J_2}{Y_1}{\ksimp{J_1}}{V_1}}{\hsubst{U}{Y_1}{\ksimp{J_1}}{V_1}})}{\fun{\ksimp{J_2}}{\ksimp{J_3}}}{V_2} \\
  &\alEq \; \hsubst{V'}{Y_2}{\ksimp{J_2}}{V_2}
\end{align*}
where $V' = \hsubst{U}{Y_1}{\ksimp{J_1}}{V_1}$.  Similarly,
$\hsubst{M}{X}{\ksimp{J}}{U'} \alEq
\hsubst{W'}{Y_2}{\ksimp{J_2}}{W_2}$
for $W' = \hsubst{U}{Y_1}{\ksimp{J_1}}{W_1}$.  Hence, we need to show
that
\begin{alignat*}{3}
  \Gamma, \hsubst{\Delta}{X}{\ksimp{J}}{U'} \; &\ts \;{}&
  \hsubst{U\,}{Y_1}{\ksimp{J_1}}{V_1}  &{}\; \tsub \;{}&
  \hsubst{U\,}{Y_1}{\ksimp{J_1}}{W_1}& \qquad \text{and} \\
  \Gamma, \hsubst{\Delta}{X}{\ksimp{J}}{U'} \; &\ts \;{}&
  \hsubst{V'}{Y_2}{\ksimp{J_2}}{V_2} &{}\; \tsub \;{}&
  \hsubst{W'}{Y_2}{\ksimp{J_2}}{W_2}&.
\end{alignat*}
Since they belong to equal spines, $V_1$ and $W_1$ are judgmentally
equal as types, and so are $V_2$ and $W_2$.  But in general, neither
of these pairs of types are syntactically equal, i.e.\
$V_1 \not\alEq W_1$, $V_2 \not\alEq W_2$ and $V' \not\alEq W'$.  To
establish the above inequations, we therefore need to show that
simultaneous hereditary substitutions of judgmentally equal types
preserve inequations.

The example illustrates a second point, namely that, in order to prove
that hereditary substitutions preserve canonical kinding and
subtyping, we need to prove that kinding and subtyping of reducing
applications is admissible.
Our hereditary substitution lemma must cover all of these properties,
leading to the aforementioned grand total
of \Item{can-rapp-teq}~parts.

\begin{lemma}[hereditary substitution]\label{lem:canon-hsubst-full}
  Hereditary substitutions of canonically kind-checked types preserve
  the canonical judgments; substitutions of canonically equal
  types in canonically well-formed and well-kinded expressions result
  in canonical equations; substitutions of canonically equal
  types preserve canonical (in)equations; kinding and subtyping of
  reducing applications is admissible.  Assume that the following equations hold
  \begin{align*}
    &\Gamma \ts U_1 \; \teq \; U_2 \kck K
    &&\ctxD{\Gamma, \Sigma \; \keq \;
       \Gamma, \hsubst{\Delta}{X}{\ksimp{K}}{U_1}}
    &&\ctxD{\Gamma, \Sigma \; \keq \;
       \Gamma, \hsubst{\Delta}{X}{\ksimp{K}}{U_2}}
  \end{align*}
  for given $\Gamma$, $\Delta$, $\Sigma$, $U_1$, $U_2$, $K$ and
  $X \notin \dom(\Gamma, \Delta, \Sigma)$.
  \begin{enumerate}[nosep]
\item\label{item:can-hsubst-kd} If
    $\; \kindC{\Gamma, X \kas K, \Delta}{J}$, then
    $\; \kindC{\Gamma, \Sigma}{\hsubst{J}{X}{\ksimp{K}}{U_1}}$.
\item\label{item:can-hsubst-var-hit} If
    $\; \Gamma, X \kas K, \Delta \tsVar X \kin J$, then
    $\; \Gamma, \Sigma \ts U_1 \kck \hsubst{J}{X}{\ksimp{K}}{U_1}$.
\item\label{item:can-hsubst-var-miss} If
    $\; \Gamma, X \kas K, \Delta \tsVar Y \kin J$ and $Y \neq X$, then
    $\; \Gamma, \Sigma \tsVar Y \kin \hsubst{J}{X}{\ksimp{K}}{U_1}$.
\item\label{item:can-hsubst-ne} If
    $\; \Gamma, X \kas K, \Delta \tsNe N \kin V \intv W$, then
    $\; \Gamma, \Sigma \ts \hsubst{N}{X}{\ksimp{K}}{U_1} \kck
    \hsubst{(V \intv W)}{X}{\ksimp{K}}{U_1}$.
\item\label{item:can-hsubst-sp} If $\; \kindS{\ksimp{\Gamma}}{J}$
    and $\; \spC{\Gamma, X \kas K, \Delta}{J}{\vec{V}}{L}$, then
    \[
      \spC{\Gamma, \Sigma}{\hsubst{J}{X}{\ksimp{K}}{U_1}}{\hsubst{\vec{V}}{X}{\ksimp{K}}{U_1}}{\hsubst{L}{X}{\ksimp{K}}{U_1}}.
    \]
\item\label{item:can-hsubst-sn} If
    $\; \Gamma, X \kas K, \Delta \ts V \ksn J$, then
    $\; \Gamma, \Sigma \ts \hsubst{V}{X}{\ksimp{K}}{U_1} \ksn
    \hsubst{J}{X}{\ksimp{K}}{U_1}$.
\item\label{item:can-hsubst-ck} If
    $\; \Gamma, X \kas K, \Delta \ts V \kck J$, then
    $\; \Gamma, \Sigma \ts \hsubst{V}{X}{\ksimp{K}}{U_1} \kck
    \hsubst{J}{X}{\ksimp{K}}{U_1}$.
\item\label{item:can-par-hsubst-kd-sk} If
    $\; \kindC{\Gamma, X \kas K, \Delta}{J}$, then
    $\; \Gamma, \Sigma \ts \hsubst{J}{X}{\ksimp{K}}{U_1} \ksub
    \hsubst{J}{X}{\ksimp{K}}{U_2}$.
\item\label{item:can-par-hsubst-kd-keq} If
    $\; \kindC{\Gamma, X \kas K, \Delta}{J}$, then
    $\; \Gamma, \Sigma \ts \hsubst{J}{X}{\ksimp{K}}{U_1} \keq
    \hsubst{J}{X}{\ksimp{K}}{U_2}$.
\item\label{item:can-par-hsubst-var-hit} If
    $\; \Gamma, X \kas K, \Delta \tsVar X \kin J$, then
    $\; \Gamma, \Sigma \ts U_1 \teq U_2 \kck
    \hsubst{J}{X}{\ksimp{K}}{U_1}$.
\item\label{item:can-par-hsubst-ne} If
    $\; \Gamma, X \kas K, \Delta \tsNe N \kin V \intv W$, then
    $\Gamma, \Sigma \ts \hsubst{N}{X}{\ksimp{K}}{U_1} \tsub
    \hsubst{N}{X}{\ksimp{K}}{U_2}$.
\item\label{item:can-par-hsubst-sp} If $\; \kindS{\ksimp{\Gamma}}{J}$
    and $\; \spC{\Gamma, X \kas K, \Delta}{J}{\vec{V}}{L}$, then
    \[
      \spC{\Gamma, \Sigma}{\hsubst{J}{X}{\ksimp{K}}{U_1}}{\hsubst{\vec{V}}{X}{\ksimp{K}}{U_1} \teq
        \hsubst{\vec{V}}{X}{\ksimp{K}}{U_2}}{\hsubst{L}{X}{\ksimp{K}}{U_1}}.
    \]
\item\label{item:can-par-hsubst-sn-intv} If
    $\; \Gamma, X \kas K, \Delta \ts V \ksn W \intv W'$, then
    $\Gamma, \Sigma \ts \hsubst{V}{X}{\ksimp{K}}{U_1} \tsub
    \hsubst{V}{X}{\ksimp{K}}{U_2}$.
\item\label{item:can-par-hsubst-sn} If
    $\; \Gamma, X \kas K, \Delta \ts V \ksn J$ and
    $\; \kindC{\Gamma, X \kas K, \Delta}{J}$, then
    \[
      \Gamma, \Sigma \ts \hsubst{V}{X}{\ksimp{K}}{U_1} \tsub
      \hsubst{V}{X}{\ksimp{K}}{U_2} \kck
      \hsubst{J}{X}{\ksimp{K}}{U_1}.
    \]
\item\label{item:can-par-hsubst-ck} If
    $\; \Gamma, X \kas K, \Delta \ts V \kck J$ and
    $\; \kindC{\Gamma, X \kas K, \Delta}{J}$, then
    \[
      \Gamma, \Sigma \ts \hsubst{V}{X}{\ksimp{K}}{U_1} \tsub
      \hsubst{V}{X}{\ksimp{K}}{U_2} \kck
      \hsubst{J}{X}{\ksimp{K}}{U_1}.
    \]
\item\label{item:can-par-hsubst-sk} If
    $\; \Gamma, X \kas K, \Delta \ts J_1 \ksub J_2$, then
    $\; \Gamma, \Sigma \ts \hsubst{J_1}{X}{\ksimp{K}}{U_1} \ksub
    \hsubst{J_2}{X}{\ksimp{K}}{U_2}$.
\item\label{item:can-par-hsubst-speq} If
    $\; \kindS{\ksimp{\Gamma}}{J}$ and
    $\; \spC{\Gamma, X \kas K, \Delta}{J}{\vec{V}_1 \teq \vec{V}_2}{L}$,
    then
    \[
      \spC{\Gamma, \Sigma}{\hsubst{J}{X}{\ksimp{K}}{U_1}}{\hsubst{\vec{V}_1}{X}{\ksimp{K}}{U_1} \teq
        \hsubst{\vec{V}_2}{X}{\ksimp{K}}{U_2}}{\hsubst{L}{X}{\ksimp{K}}{U_1}}.
    \]
\item\label{item:can-par-hsubst-st} If
    $\; \Gamma, X \kas K, \Delta \ts V_1 \tsub V_2$, then
    $\; \Gamma, \Sigma \ts \hsubst{V_1}{X}{\ksimp{K}}{U_1} \tsub
    \hsubst{V_2}{X}{\ksimp{K}}{U_2}$.
\item\label{item:can-par-hsubst-stck} If
    $\; \Gamma, X \kas K, \Delta \ts V_1 \tsub V_2 \kck J$ and
    $\kindC{\Gamma, X \kas K, \Delta}{J}$, then
    \[
      \Gamma, \Sigma \ts \hsubst{V_1}{X}{\ksimp{K}}{U_1} \tsub
      \hsubst{V_2}{X}{\ksimp{K}}{U_2} \kck
      \hsubst{J}{X}{\ksimp{K}}{U_1}.
    \]
\item\label{item:can-par-hsubst-teq} If
    $\; \Gamma, X \kas K, \Delta \ts V_1 \teq V_2 \kck J$, then
    $\; \Gamma, \Sigma \ts \hsubst{V_1}{X}{\ksimp{K}}{U_1} \teq
    \hsubst{V_2}{X}{\ksimp{K}}{U_2} \kck
    \hsubst{J}{X}{\ksimp{K}}{U_1}$.
\item\label{item:can-rapp-sp} If $\; \spC{\Gamma}{K}{\vec{V}}{J}$,
    then $\; \Gamma \ts \rapp{U_1}{\ksimp{K}}{\vec{V}} \kck J$.
\item\label{item:can-rapp-ck} If $\; K = \dfun{X}{K_1}{K_2}$,
    $\; \Gamma \ts V \kck K_1$ and $\; \kindC{\Gamma}{K_1}$, then
    $\; \Gamma \ts \rapp{U_1}{\ksimp{K}}{V} \kck
    \hsubst{K_2}{X}{\ksimp{K_1}}{V}$.
\item\label{item:can-rapp-speq} If
    $\; \spC{\Gamma}{K}{\vec{V}_1 \teq \vec{V}_2}{J}$, then
    $\; \Gamma \ts \rapp{U_1}{\ksimp{K}}{\vec{V}_1} \teq
    \rapp{U_2}{\ksimp{K}}{\vec{V}_2} \kck J$.
\item\label{item:can-rapp-teq} If $\; K = \dfun{X}{K_1}{K_2}$ and
    $\; \Gamma \ts V_1 \teq V_2 \kck K_1$, then
    \[
      \Gamma \ts \rapp{U_1}{\ksimp{K}}{V_1} \teq
      \rapp{U_1}{\ksimp{K}}{V_2} \kck
      \hsubst{K_2}{X}{\ksimp{K_1}}{V_1}.
    \]
  \end{enumerate}
\end{lemma}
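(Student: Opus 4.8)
The plan is to prove all \ref*{item:can-rapp-teq} parts of \Lem{canon-hsubst-full} simultaneously by induction on the structure of the shape $\ksimp{K}$, mirroring the recursive structure of the definitions of hereditary substitution and reducing application given in \Fig{hsubst}. This is the standard pattern used already for the simple-kinding substitution lemma (\Lem{simp-hsubst}) and for commutativity of hereditary substitutions (\Lem{hsubst-comm}). Within each shape-level induction step, most of the ``substitution'' parts (\Item{can-hsubst-kd}--\Item{can-par-hsubst-teq}) proceed by an inner induction on the derivation of the canonical judgment into which $U_1$ (and $U_2$) are being substituted; the ``reducing application'' parts (\Item{can-rapp-sp}--\Item{can-rapp-teq}) proceed by case analysis on the final rule used to derive the relevant spine-kinding or type-kinding premise, with the crucial observation that a normal form of arrow shape must be an operator abstraction, so the only applicable rule when $K = \dfun{X}{K_1}{K_2}$ is~\ruleref{CK-Abs}.

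First I would dispatch the easy structural cases: for the basic type formers $\Top$, $\Bot$, $\fun{U}{V}$, $\all{X}{K}{V}$, and $\lam{X}{J}{V}$ the results follow by applying the appropriate induction hypotheses to the subexpressions and reassembling with the corresponding canonical rule (\ruleref{CK-Arr}, \ruleref{CK-All}, \ruleref{CK-Abs}, \ruleref{CST-Arr}, \ruleref{CST-All}, \ruleref{CST-Abs}, etc.), using \Cor{decl_ctx_conv}-style context conversion via the assumed context equalities and the admissible order-theoretic rules from \Sec{canon-order}. The cases for~\ruleref{CK-Cons} and~\ruleref{SpEq-Cons} in parts \Item{can-hsubst-sp}, \Item{can-par-hsubst-sp} and \Item{can-par-hsubst-speq} need commutativity of hereditary substitutions in kinds (\Lem{hsubst-comm}.\Item{hsubst-comm-kds}), available because \Lem{canon-simp} lets us pass from canonical to simple kinding; this is exactly where the $\kindS{\ksimp{\Gamma}}{J}$ hypotheses in those parts are used. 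The subsumption cases~\ruleref{CV-Sub} and~\ruleref{CK-Sub} use parts \Item{can-par-hsubst-sk} and \Item{can-par-hsubst-kd-sk}, together with \Lem{st-teq-simp} to track shape equalities; the equality rules~\ruleref{CSK-AntiSym}, \ruleref{CST-AntiSym} follow from the corresponding subtyping/subkinding parts plus the validity conditions baked into those rules.

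The genuinely hard case — and the reason the lemma must be so large and monolithic — is the neutral subtyping rule~\ruleref{CST-Ne} in parts \Item{can-par-hsubst-ne}, \Item{can-par-hsubst-sn-intv} and \Item{can-par-hsubst-st}, together with the variable-hit cases \Item{can-hsubst-var-hit}, \Item{can-par-hsubst-var-hit} and the neutral case \Item{can-hsubst-ne}. When we substitute an operator $U_1$ (or an equal pair $U_1 \teq U_2$) into $\Gamma \ts \app{X}{\vec{V}_1} \tsub \app{X}{\vec{V}_2}$, the definition of hereditary substitution unfolds $\app{X}{\vec{V}_i}$ into an iterated reducing application $\rapp{U_i}{\ksimp{K}}{(\hsubst{\vec{V}_i}{X}{\ksimp{K}}{U_i})}$, and because the two spines $\vec{V}_1$, $\vec{V}_2$ were only \emph{canonically equal}, not syntactically equal (as~\ruleref{CST-Ne} permits), we are forced to show that simultaneous hereditary substitution of judgmentally equal types into a body preserves subtyping. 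That is precisely the functionality statement, parts \Item{can-par-hsubst-*} and \Item{can-rapp-teq}: the proof only goes through because all these parts are in the simultaneous induction, so the ``functionality'' induction hypotheses are available at strictly smaller shapes. The sub-case analysis here mimics the worked example in \Sec{canon-hsubst} of the paper: unwind one argument of the spine using part \Item{can-rapp-ck}/\Item{can-rapp-teq}, commute a hereditary substitution past another using \Lem{hsubst-comm}, and recurse. Context narrowing (\Lem{canon-weak-narrow}) is needed in the~\ruleref{CST-Abs} and~\ruleref{CSK-DArr} sub-cases to align the domain annotations, and the rule~\ruleref{CV-Sub} — though logically redundant for variable kinding — is what makes \Lem{canon-weak-narrow} available independently, breaking what would otherwise be an unresolvable circularity. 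I expect the bookkeeping around exactly which of $U_1$ versus $U_2$ and which of $\vec{V}_1$ versus $\vec{V}_2$ appears in each output kind (the asymmetry forced by the substitution conventions, cf. the choice of $B_1$ in~\ruleref{ST-App}) to be the most error-prone part of the argument; we refer the reader to \SupSec{canon-hsubst-full} and the mechanization for the full details.
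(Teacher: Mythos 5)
Your proposal matches the paper's proof in all essentials: a simultaneous induction on the shape $\ksimp{K}$ mirroring the recursion in \Fig{hsubst}, an inner induction on derivations for the substitution/functionality parts and case analysis on the kinding of $U_1$ (forcing it to be an abstraction) for the reducing-application parts, \Lem{hsubst-comm} via \Lem{canon-simp} in the \ruleref{CK-Cons}/\ruleref{SpEq-Cons} cases, context narrowing for the context-extending cases, and the \ruleref{CST-Ne}/neutral cases as the crux that forces functionality into the same simultaneous induction. This is the same decomposition and the same key lemmas as the paper's argument, so no further comment is needed.
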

\begin{proof}
As for the proof of~\Lem{simp-hsubst}, the structure of the proof
  mirrors that of the recursive definition of hereditary substitution
  itself.  All \Item{can-rapp-teq}~parts are proven simultaneously by
  induction in the structure of the simple kind $\ksimp{K}$.
  Parts~\Item{can-hsubst-kd}--\Item{can-par-hsubst-teq} proceed by an
  inner induction on the respective formation, kinding, subkinding,
  subtyping or equality derivations of the expressions in which $U_1$
  and $U_2$ are being substituted for $X$.
  Parts~\Item{can-rapp-sp}--\Item{can-rapp-teq} proceed by a case
  analysis on the final rule used to derive
  $\; \spC{\Gamma}{K}{\vec{V}}{J}$, $\; \Gamma \ts U_1 \kck K$,
  $\; \spC{\Gamma}{K}{\vec{V}_1 \teq \vec{V}_2}{J}$ and
  $\; \Gamma \ts U_1 \teq U_2 \kck K$, respectively.

  The proofs of parts~\Item{can-hsubst-kd}--\Item{can-hsubst-ck} are
  similar to that of the declarative substitution lemma
  (\Lem{decl_subst}), while those of
  parts~\Item{can-par-hsubst-kd-sk}--\Item{can-par-hsubst-teq}
  resemble the proof of the extended functionality lemma
  (\Lem{decl_funct_ext}).  In cases like \ruleref{CWf-DArr} or
  \ruleref{CK-All}, where the context is extended by an additional
  binding, we use the IH together with context narrowing
  (\Lem{canon-weak-narrow}) to maintain the invariants
  $\; \ctxD{\Gamma, \Sigma \; \keq \; \Gamma,
    \hsubst{\Delta}{X}{\ksimp{K}}{U_1}}$
  and
  $\; \ctxD{\Gamma, \Sigma \; \keq \; \Gamma,
    \hsubst{\Delta}{X}{\ksimp{K}}{U_2}}$.

  The cases where the proofs of
  parts~\Item{can-hsubst-kd}--\Item{can-par-hsubst-teq} differ most
  substantially from those of \Lem{decl_subst} and
  \Lem{decl_funct_ext} are
parts~\Item{can-hsubst-ne}, \Item{can-par-hsubst-ne} and the case
  for~\ruleref{CST-Ne} of part~\Item{can-par-hsubst-st}, which deal
  with neutral types.
There, we proceed by case distinction on $X = Y$, where $Y$ is the
  head of the corresponding neutral types.  If $X = Y$, then we
  proceed using either part~\Item{can-rapp-sp}, or
  part~\Item{can-rapp-speq} followed by~\ruleref{CST-Refl-TEq'}.  If
  $X \neq Y$, then we use part~\Item{can-hsubst-var-miss} and proceed
  with either part~\Item{can-hsubst-sp} followed by~\ruleref{CK-NeCk},
  or with parts~\Item{can-par-hsubst-sp} or~\Item{can-par-hsubst-speq}
  followed by~\ruleref{CST-Ne}.  In the cases
  for~\rulerefN{CST-Bnd1}{CST-Bnd$_1$}
  and~\rulerefN{CST-Bnd2}{CST-Bnd$_2$}, we use
  part~\Item{can-hsubst-ne} followed
  by~\rulerefN{CST-CkBnd1}{CST-CkBnd$_1$}
  or~\rulerefN{CST-CkBnd1}{CST-CkBnd$_1$}.

  In the cases for~\ruleref{CK-Cons} and~\ruleref{SpEq-Cons} of
  parts~\Item{can-hsubst-sp}, \Item{can-par-hsubst-sp}
  and \Item{can-par-hsubst-speq}, respectively, where
  $J = \dfun{Y}{J_1}{J_2}$, we use
  \Lem{hsubst-comm}.\Item{hsubst-comm-kds} to show that hereditary
  substitutions in kinds commute, \ie that
  \[
    \hsubst{\hsubst{J_2}{Y}{\ksimp{J_1}}{V_1}}{X}{\ksimp{K}}{U_1} \alEq
    \hsubst{\hsubst{J_2}{X}{\ksimp{K}}{U_1}}{Y}{\ksimp{J_1}}{\hsubst{V_1}{X}{\ksimp{K}}{U_1}}.
  \]
  The necessary simple kinding derivations are provided by case
  analysis of the final rule used to derive the premise
  $\; \kindS{\ksimp{\Gamma}}{J}$ and
  \Lem{canon-simp}.\Item{canon-simp-ck}.

  The proofs of parts~\Item{can-rapp-ck} and~\Item{can-rapp-teq}
  resemble that of \Lem{simp-hsubst}.\Item{simp-rapp-nf} but are
  complicated slightly by the presence of subkinding.  We show the
  proof of part~\Item{can-rapp-ck}, that of part~\Item{can-rapp-teq}
  is similar.  We have $K = \dfun{X}{K_1}{K_2}$,
  $\; \Gamma \ts U_1 \kck K$, $\; \Gamma \ts V \kck K_1$ and
  $\; \kindC{\Gamma}{K_1}$, and we want to show that
  $\; \Gamma \ts \rapp{U_1}{\ksimp{K}}{V} \kck
  \hsubst{K_2}{X}{\ksimp{K_1}}{V}$.
  By inspection of the kind checking and subkinding rules, we must
  have $U_1 = \lam{X}{J_1}{U}$ such that
  $\; \Gamma, X \kas J_1 \ts U \ksn J_2$,
  $\; \Gamma \ts K_1 \ksub J_1$ and
  $\; \Gamma, X \kas K_1 \ts J_2 \ksub K_2$, and by the definition of
  reducing application,
  $\rapp{U_1}{\ksimp{K}}{V} \alEq \hsubst{U}{X}{\ksimp{K_1}}{V}$.
  Using context narrowing and the IH for part~\Item{can-hsubst-ck}, we
  obtain
  $\Gamma \ts \hsubst{U}{X}{\ksimp{K_1}}{V} \ksn
  \hsubst{J_2}{X}{\ksimp{K_1}}{V}$.
  By~\ruleref{TEq-Refl} and the IH for part~\Item{can-par-hsubst-sk},
  we have
  $\Gamma \ts \hsubst{J_2}{X}{\ksimp{K_1}}{V} \ksub
  \hsubst{K_2}{X}{\ksimp{K_1}}{V}$.  We conclude by~\ruleref{CK-Sub}.
\end{proof}

\subsubsection{Validity}
\label{sec:canon-valid}

With the hereditary substitution lemma in place, we can now prove the
remaining validity properties of the canonical judgments.  The most
intricate cases are those for spine kinding and equality, which is
where we use \Lem{canon-hsubst-full}.

\begin{lemma}[canonical validity -- part 2]\label{lem:canon-valid2}
  ~
\begin{enumerate}[nosep,leftmargin=13.7em,labelsep=1em]
\item[(spine kinding validity)]
If $\; \kindC{\Gamma}{J}$ and $\; \spC{\Gamma}{J}{\vec{U}}{K}$,
    then $\kindC{\Gamma}{K}$.

\item[(spine equation validity)]
If $\; \Gamma \ts J_1 \ksub J_2$ and
    $\; \spC{\Gamma}{J_2}{\vec{U} \teq \vec{V}}{K_2}$, then
    $\spC{\Gamma}{J_2}{\vec{U}}{K_2}$, $\;
    \spC{\Gamma}{J_1}{\vec{V}}{K_1}$ and $\Gamma \ts K_1 \ksub K_2$
    for some $K_1$.

\item[(neutral kinding validity)]
If $\; \Gamma \ts N \kin K$, then $\kindC{\Gamma}{K}$.

\item[(subkinding validity)]
If $\; \Gamma \ts J \ksub K$, then $\kindC{\Gamma}{J}$ and
    $\kindC{\Gamma}{K}$.

\item[(proper subtyping validity)]
If $\; \Gamma \ts U \tsub V$, then $\typeC{\Gamma}{U}$ and
    $\typeC{\Gamma}{V}$.

\item[(checked kinding validity)]
If $\; \Gamma \ts V \kck K$, then $\kindC{\Gamma}{K}$.
  \end{enumerate}
\end{lemma}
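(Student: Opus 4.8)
The plan is to prove all six properties by a single simultaneous induction on the derivations of the respective premises — the spine judgment for the first two parts, and the neutral-kinding, subkinding, proper-subtyping and kind-checking derivation for the remaining four — using the hereditary substitution lemma (\Lem{canon-hsubst-full}) together with the already-established validity conditions (\Lem{canon-valid1}, \Lem{canon_ctx_valid}) as the main tools. Before starting I would record one small auxiliary fact: in a canonically well-formed context, the kind assigned to a variable is itself canonically well-formed, proven by induction on the context with canonical weakening (\Lem{canon-weaken}). This immediately yields \emph{variable kinding well-formedness}: if $\Gamma \tsVar X \kin K$ then $\kindC{\Gamma}{K}$, by induction on the variable-kinding derivation, the \ruleref{CV-Sub} case being trivial since $\kindC{\Gamma}{K}$ is already one of its premises.

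For \emph{spine kinding validity}, the \ruleref{CK-Empty} base case is immediate, and in the \ruleref{CK-Cons} case, with input kind $\dfun{X}{J}{K}$, inverting \ruleref{CWf-DArr} on the hypothesis gives $\kindC{\Gamma, X \kas J}{K}$, the spine premise supplies $\Gamma \ts U \kck J$, and \Lem{canon-hsubst-full}.\Item{can-hsubst-kd} (with empty $\Delta$ and $\Sigma$) produces $\kindC{\Gamma}{\hsubst{K}{X}{\ksimp{J}}{U}}$, so that the inductive hypothesis applied to the tail delivers $\kindC{\Gamma}{L}$. \emph{Spine equation validity} follows the same pattern in the \ruleref{SpEq-Cons} case: type equation validity (\Lem{canon-valid1}) turns $\Gamma \ts U_1 \keq U_2 \kck J$ into $\Gamma \ts U_1 \kck J$, $\Gamma \ts U_2 \kck J$ and $\kindC{\Gamma, X \kas J}{K}$; inverting \ruleref{CSK-DArr} on the hypothesis $J_1 \ksub J_2$ splits $J_1$; and \Lem{canon-hsubst-full}.\Item{can-hsubst-kd} and \Item{can-par-hsubst-kd-keq}, together with kind subsumption and \ruleref{CSK-Refl-KEq}, give well-formedness of the substituted codomains and a subkinding relation between the two variants, which is fed into the inductive hypothesis for the two tails and reassembled using \ruleref{CK-Snoc}, \ruleref{SpEq-Snoc} and subkinding transitivity. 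The extra $J_1 \ksub J_2$ freedom in the statement is exactly what makes this reassembly possible when the two codomains differ syntactically.

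With the spine parts available, \emph{neutral kinding validity} is immediate from \ruleref{CK-Ne}: variable kinding well-formedness gives $\kindC{\Gamma}{J}$ for the head and spine kinding validity gives $\kindC{\Gamma}{K}$. \emph{Subkinding validity} and \emph{proper subtyping validity} are mutually recursive and handled together: \ruleref{CSK-Intv} uses proper subtyping validity on its two premises to see that all four bound types are proper; \ruleref{CSK-DArr} already carries $\kindC{\Gamma}{\dfun{X}{J_1}{K_1}}$ as a premise, with the right-hand side built from the subkinding IHs via \ruleref{CWf-DArr}; on the subtyping side the extremal rules use \ruleref{CK-Top}/\ruleref{CK-Bot} and \ruleref{CK-SynCk}, \ruleref{CST-Arr} and \ruleref{CST-All} use \ruleref{CK-Arr'}/\ruleref{CK-All'} (the latter with subkinding validity to supply $\kindC{\Gamma}{K_2}$), the bound-projection rules use neutral kinding validity together with \Lem{CK-Intv-Star} and \ruleref{CK-NeCk}, and the pivotal \ruleref{CST-Ne} case uses \emph{spine equation validity}: from $\Gamma \tsVar X \kin K$ (hence $\kindC{\Gamma}{K}$) and $\spC{\Gamma}{K}{\vec{V}_1 \keq \vec{V}_2}{U \intv W}$ it produces separate spine-kinding judgments for $\vec{V}_1$ and $\vec{V}_2$, the second landing in a subkind of $U \intv W$ and hence again an interval, so that \ruleref{CK-Ne} followed by \ruleref{CK-NeCk} and \Lem{CK-Intv-Star} exhibits both $\app{X}{\vec{V}_1}$ and $\app{X}{\vec{V}_2}$ as proper types. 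Finally, \emph{checked kinding validity} is trivial: \ruleref{CK-Sub} is the only rule, and subkinding validity applied to its second premise gives $\kindC{\Gamma}{K}$.

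The main obstacle, as anticipated in the surrounding text, is the pair of spine cases \ruleref{CK-Cons} and \ruleref{SpEq-Cons}: there the codomain of the operator's arrow kind is rewritten by a hereditary substitution, so one cannot apply the inductive hypothesis without first knowing that this substituted kind is well-formed, and — for the equation part — that its two substituted variants are judgmentally related. Supplying these facts is precisely what forces the use of the hereditary substitution lemma (and its functionality component) inside what would otherwise be a routine induction, and it is also what couples neutral-kinding, subkinding and proper-subtyping validity into one simultaneous induction rather than a linear chain of lemmas.
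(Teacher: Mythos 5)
Your proof is correct and follows essentially the same route as the paper: induction on the respective derivations, with the hereditary substitution lemma supplying well-formedness of (and the relation between) the substituted codomains in the \ruleref{CK-Cons}/\ruleref{SpEq-Cons} cases, neutral kinding validity feeding the bound-projection and \ruleref{CST-Ne} cases via spine (equation) validity, and checked kinding validity reduced to subkinding validity. The only divergence is organizational: the paper proves the parts mostly in sequence, with only subkinding and proper-subtyping validity done simultaneously, so your closing claim that neutral kinding must be folded into one big simultaneous induction overstates the coupling --- your single simultaneous induction is sound, just not forced.
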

\begin{proof}
  Subkinding and proper subtyping validity are proven simultaneously,
  the remaining parts are proven separately, in the order they are
  listed.  All parts are proven by induction on derivations of the
  judgments they are named after: spine kinding and equation validity
  are proven by induction on their respective second premises, the
  remaining parts on their respective first premises.  In inductive
  steps of the proofs of spine kinding and equation validity, we use
  the hereditary substitution lemmas to derive suitable first premises
  for applying the IH.  The proof of spine equation validity relies on
  checked equation validity from \Lem{canon-valid2}.  The proof of
  neutral kinding validity relies on spine kinding validity.  In the
  proof of proper subtyping validity, we use neutral kinding validity
  in the cases for the bound projection
  rules~\rulerefN{CST-Bnd1}{CST-Bnd$_{1,2}$}.  The proof of checked
  kinding validity relies on proper subtyping validity.
\end{proof}

\subsubsection{Lifting of Weak Equality to Canonical Equality}
\label{sec:weq-canon}

In \Sec{simple_nf} of the previous section, we established a number of
weak commutativity properties (see \Lem[Lemmas]{etaexp-hsubst}
and~\ref{lem:weq-nf-hsubst}).  Among others, we showed that
normalization weakly commutes with hereditary substitution.  But up
until now, we do not have any effective means to put these properties
to use -- we have yet to establish a relationship between weak
equality and the equality judgments of the declarative and canonical
systems.

To remedy this situation, we prove that a weak equation $U \wkEq V$
can be lifted to canonical equation $\Gamma \ts U \teq V \kck K$,
provided the left- and right-hand sides $U$, $V$ are well-kinded, i.e.\
$\Gamma \ts U \kck K$ and $\Gamma \ts V \kck K$.  Similarly, we show
that weakly equal kinds are canonically equal if they are well-formed.

\begin{lemma}\label{lem:weq-canon}
  Weakly equal canonically well-formed kinds and well-kinded types are
  canonically equal.
  \begin{enumerate}[nosep]
  \item\label{item:weq-can-sk} If $\; \kindC{\Gamma}{J}$, $\;
    \kindC{\Gamma}{K}$ and $J \wkEq K$, then $\Gamma \ts J \ksub K$.
  \item\label{item:weq-can-st} If $\; \typeC{\Gamma}{U}$, $\;
    \typeC{\Gamma}{V}$ and $U \wkEq V$, then $\Gamma \ts U \tsub V$.
  \item\label{item:weq-can-stck} If $\; \kindC{\Gamma}{K}$,
    $\; \Gamma \ts U \kck K$, $\; \Gamma \ts V \kck K$ and
    $U \wkEq V$, then $\Gamma \ts U \tsub V \kck K$.
  \item\label{item:weq-can-speq} If $\; \kindC{\Gamma}{J}$,
    $\; \Gamma \ts J \ksub K_1$, $\; \Gamma \ts J \ksub K_2$,
    $\; \spC{\Gamma}{K_1}{\vec{U}_1}{V_1 \intv W_1}$,
    $\; \spC{\Gamma}{K_2}{\vec{U}_2}{V_2 \intv W_2}$, and
    $\vec{U}_1 \wkEq \vec{U}_2$, then
    $\spC{\Gamma}{J}{\vec{U}_1 \teq \vec{U}_2}{V \intv W}$ for some
    $V$ and $W$.
  \item\label{item:weq-can-keq} If $\; \kindC{\Gamma}{J}$, $\;
    \kindC{\Gamma}{K}$ and $J \wkEq K$, then $\Gamma \ts J \keq K$.
  \item\label{item:weq-can-teq} If $\; \kindC{\Gamma}{K}$,
    $\; \Gamma \ts U \kck K$, $\; \Gamma \ts V \kck K$ and
    $U \wkEq V$, then $\Gamma \ts U \teq V \kck K$.
  \end{enumerate}
\end{lemma}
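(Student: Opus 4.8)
The plan is to prove all six parts of \Lem{weq-canon} simultaneously by induction on the structure of the (simple kind underlying the) kind $K$ — or, more precisely, on the common shape that all the involved kinds and types share — mirroring the structure used for \Lem{canon-hsubst-full} and \Lem{simp-hsubst}. This is unavoidable: part~\Item{weq-can-speq} reasons about spines, whose kinding involves hereditary substitution into the codomain, so the shapes of the relevant kinds strictly decrease along spine consumption; and part~\Item{weq-can-st} on neutral subtyping must appeal to part~\Item{weq-can-speq}. Within the outer induction on shapes, parts~\Item{weq-can-sk}, \Item{weq-can-st}, \Item{weq-can-stck} proceed by an inner induction on the derivation of the weak-equality witness $J \wkEq K$ (resp.\ $U \wkEq V$), using the fact that weak equality is a congruence generated by \ruleref{WEq-Abs} plus syntactic equality. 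The equality parts~\Item{weq-can-keq} and~\Item{weq-can-teq} follow immediately from~\Item{weq-can-sk} and~\Item{weq-can-stck} together with their symmetric instances (weak equality is symmetric) via \ruleref{CSK-AntiSym} and \ruleref{CST-AntiSym}; the nontrivial validity conditions that those antisymmetry rules demand are supplied by the hypotheses of the lemma.

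First I would handle the ``syntactic equality'' leaves of the inner induction, which reduce to the reflexivity rules from \Lem{canon-refl}. Then I would work through the congruence cases for each type former. The cases for $\Top$, $\Bot$, $\fun{-}{-}$, and $\all{X}{K}{-}$ are routine: one uses \ruleref{CST-Arr}, \ruleref{CST-All}, \ruleref{CSK-Intv}, etc., feeding the IH (at the subexpressions, whose shapes are no larger) plus \Lem{weq-simpeq} to know that weakly equal kinds have equal shapes, and the canonical validity lemmas (\Lem{canon-valid1}, \Lem{canon-valid2}) to discharge the well-formedness/kinding side-conditions that the canonical rules carry. For \ruleref{CST-All} in particular one needs to narrow the context from $X \kas K_1$ to $X \kas K_2$ before applying the IH to the bodies, which is legitimate because the IH for~\Item{weq-can-sk} gives $\Gamma \ts K_1 \ksub K_2$ (using \ruleref{CSK-Refl} when the domains are only weakly equal); context narrowing is \Lem{canon-weak-narrow}, whose proof does not depend on the present lemma.

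The main obstacle, as usual in this development, is the neutral case of part~\Item{weq-can-st}: relating $\app{X}{\vec{U}_1}$ and $\app{X}{\vec{U}_2}$ with $\vec{U}_1 \wkEq \vec{U}_2$ via \ruleref{CST-Ne}, which requires producing a spine \emph{equality} derivation $\spC{\Gamma}{K}{\vec{U}_1 \keq \vec{U}_2}{V \intv W}$ out of the two spine \emph{kinding} derivations $\spC{\Gamma}{K_1}{\vec{U}_1}{V_1 \intv W_1}$ and $\spC{\Gamma}{K_2}{\vec{U}_2}{V_2 \intv W_2}$. This is exactly part~\Item{weq-can-speq}, and its proof is the crux. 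Walking down the spine one element at a time via \ruleref{SpEq-Cons}, I would: (i) use the IH for~\Item{weq-can-teq} at the shape of the current domain (strictly smaller than the arrow shape, so the outer induction applies) to lift the head weak equation $U_{1} \wkEq U_{2}$ to a canonical type equation; (ii) invoke the hereditary-substitution lemma \Lem{canon-hsubst-full} (functionality of hereditary substitution in kinds, parts~\Item{can-par-hsubst-kd-keq}/\Item{can-hsubst-kd}) to see that the two residual codomain kinds $\hsubst{L}{X}{\ksimp{J_1}}{U_1}$ and $\hsubst{L}{X}{\ksimp{J_1}}{U_2}$ are canonically equal, hence weakly equal and of equal shape — which is what lets the tail spines still satisfy the hypotheses of the IH; and (iii) recurse. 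The bookkeeping here is delicate because $\vec{U}_1$ and $\vec{U}_2$ travel under kinds $K_1 \ne K_2$ that are merely subkinds of the common $J$, so one must carry the subkinding facts $\Gamma \ts J \ksub K_i$ along and appeal to \Lem{simp-decl-sk} / \Lem{st-teq-simp} (subkinds have equal shapes) to keep the simple-kinding side-conditions of \Lem{canon-hsubst-full} satisfiable. Once \Item{weq-can-speq} is in hand, the \ruleref{CST-Ne} case of~\Item{weq-can-st} and — by passing through \ruleref{CST-Intv} / \ruleref{CK-NeCk} — the neutral subcases of~\Item{weq-can-stck} follow, and the lemma is complete.
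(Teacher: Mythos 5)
Your induction measure does not support the one call you yourself identify as the crux. You propose an outer induction on the common \emph{shape}, with an inner induction on the weak-equality derivation for parts~\Item[]{weq-can-sk}--\Item[]{weq-can-stck}. But in the neutral case of part~\Item[]{weq-can-st} you must pass from two proper types $\app{X}{\vec{U}_1} \wkEq \app{X}{\vec{U}_2}$ -- whose common shape is $\kstar$ -- to part~\Item[]{weq-can-speq} at the kind of the head variable $X$, whose shape is an arbitrary arrow shape, strictly \emph{larger} than $\kstar$. So the outer component of your lexicographic measure increases exactly where you need it to decrease, and the induction as described is not well-founded. Your justification for choosing shapes (``unavoidable: \ldots spine kinding involves hereditary substitution into the codomain'') rests on a false analogy with \Lem{canon-hsubst-full} and \Lem{simp-hsubst}: there the shape being consumed is that of the substituted variable's kind, which is fixed data of the lemma, whereas here the hereditary substitutions in \ruleref{SpEq-Cons} occur only in the \emph{classifying} kind, which need not be part of the termination argument at all. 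The paper instead performs a simultaneous induction on the pairs of kinds, types or spines being related (equivalently, on the weak-equality derivations, which mirror that structure because $\wkEq$ is a congruence): in every genuinely recursive call the related objects are strict subexpressions -- the spines of an application, the head and tail of a spine, domains, codomains and bodies -- so no shape measure is needed, and the stability of shapes under hereditary substitution is only used to discharge side conditions, not for termination.

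Apart from this, the substance of your argument tracks the paper's proof closely: the cons case of part~\Item[]{weq-can-speq} lifts the head equation via the IH for part~\Item[]{weq-can-teq} after adjusting kinds with \ruleref{CK-SubCk}, uses the hereditary substitution lemma (the paper uses \Lem{canon-hsubst-full}.\Item[]{can-par-hsubst-sk} and \Item[]{can-par-hsubst-kd-sk} to get subkinding of the substituted codomains, where you invoke the kind-equality variant -- either works) together with \Lem{st-teq-simp} to keep the simple-kinding side conditions, and concludes by \ruleref{SpEq-Cons} and the IH on the tail; the equality parts~\Item[]{weq-can-keq} and \Item[]{weq-can-teq} are obtained from the inequational parts and symmetry of $\wkEq$ via the antisymmetry rules, and -- as the paper notes -- must be treated as inlined/derived since that symmetric appeal does not decrease any measure. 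Replace your shape-based outer induction by structural induction on the related expressions and the rest of your plan goes through essentially as in the paper.
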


\begin{proof}
  Simultaneously for all \Item{weq-can-teq}~parts by simultaneous
  induction on the corresponding pairs of kinds, types or spines being
  related, then by case analysis on the final rules used to derive the
  corresponding formation, kinding and weak equality judgments.  In
  the proof of part~\Item{weq-can-keq}, we apply the IH for
  part~\Item{weq-can-sk} directly to the equations $J \wkEq K$ and
  $K \wkEq J$, where the latter is derived using symmetry of weak
  equality. Neither $J$ nor $K$ decrease in this
  step, but the proof of part~\Item{weq-can-keq} does not make any
  further use of the IH and could therefore be inlined in the proofs
  of the other parts.  The proof of part~\Item{weq-can-keq} is
  similar.

  The proofs of the remaining parts are largely routine.
The most interesting case is
the inductive one in
  part~\Item{weq-can-speq}, where we have
  $\vec{U}_1 = (\scons{U_1}{\vec{U}_1'})$,
  $\vec{U}_2 = (\scons{U_2}{\vec{U}_2'})$,
  $K_1 = \dfun{X}{K_{11}}{K_{12}}$, $K_2 = \dfun{X}{K_{21}}{K_{22}}$,
$\; U_1 \wkEq U_2$ and $\vec{U}_1' \wkEq \vec{U}_2'$.  Analyzing the
  derivations of the remaining premises, we have
  \begin{align*}
    \Gamma \ts K_{11} &\ksub J_1 &
    \Gamma, X \kas K_{11} \ts J_2 &\ksub K_{12} &
    \Gamma \ts K_{21} &\ksub J_1 &
    \Gamma, X \kas K_{21} \ts J_2 &\ksub K_{22}
  \end{align*}
  such that $J = \dfun{X}{J_1}{J_2}$, as well as
  \begin{align*}
  \Gamma \ts U_1 &\kck K_{11} &&
  \spC{\Gamma}{\hsubst{K_{12}}{X}{\ksimp{K_{11}}}{U_1}}{\vec{U}_1'}{V_1 \intv W_1}\\
  \Gamma \ts U_2 &\kck K_{21} &&
  \spC{\Gamma}{\hsubst{K_{22}}{X}{\ksimp{K_{21}}}{U_2}}{\vec{U}_2'}{V_2 \intv W_2}
  \end{align*}
  We use~\ruleref{CK-SubCk} and the IH for part~\Item{weq-can-teq} to
  derive $\Gamma \ts U_1 \teq U_2 \kck J_1$, then we use hereditary
  substitution (\Lem[Lemmas]{canon-hsubst-full}.\Item{can-par-hsubst-sk}
  and~\ref{lem:canon-hsubst-full}.\Item{can-par-hsubst-kd-sk}) and
  \Lem{st-teq-simp} to derive
  \begin{alignat*}{2}
    \Gamma \; \ts \; \hsubst{J_2}{X}{\ksimp{J_1}}{U_1} \;
    & \ksub \; &\hsubst{K_{12}}{X}{\ksimp{K_{11}}}{U_1}\\
    \Gamma \; \ts \; \hsubst{J_2}{X}{\ksimp{J_1}}{U_1} \;
    & \ksub \; \hsubst{J_2}{X}{\ksimp{J_1}}{U_2} \;
    \ksub \; &\hsubst{K_{22}}{X}{\ksimp{K_{21}}}{U_2}
  \end{alignat*}
  We conclude the case by the IH for part~\Item{weq-can-speq}
  and~\ruleref{SpEq-Cons}.
\end{proof}

\subsection{Completeness of Canonical Kinding}
\label{sec:canon-complete-full}

In the previous section, we saw that every declaratively well-formed
kind or well-kinded type has a judgmentally equal $\beta\eta$-normal
form (\Lem[Lemmas]{nf-sound} and~\ref{lem:nf-simp}).  In this section,
we prove that every declarative judgment has a canonical counterpart
where the expressions related by the original judgment have been
normalized.  Roughly, whenever $\judgDD{\Gamma}$ holds, we also have
$\judgCC[\nf{\nfCtx{\Gamma}}{\genjudg}]{\nfCtx{\Gamma}}$.  Since
normalization does not change the meaning of an expression, this
result establishes completeness of the canonical system w.r.t.\ to the
declarative one.

There are several judgments for kinding types in the canonical system,
but only one in the declarative system.  To establish completeness, we
show that, if a type $A$ is of kind $K$ according to declarative
kinding, then the normal form $\nf{}{A}$ kind checks against the
normal form $\nf{}{K}$, i.e.\ if $\Gamma \tsD A \kin K$, then
$\nfCtx{\Gamma} \tsC \nf{}{A} \kck \nf{}{K}$.

When $A$ is a variable $A = X$, the normal form $\nf{}{A}$ is its
$\eta$-expansion $\nf{}{A} = \etaExp{\nf{}{K}}{X}$ and we use the
following lemma to prove that it kind checks against $\nf{}{K}$.

\begin{lemma}[$\eta$-expansion]\label{lem:canon-etaexp}
  $\eta$-expansion preserves the canonical kinds of neutral types.  If
  $\; \Gamma \tsNe N \kin K$, then $\Gamma \ts \etaExp{K}{N} \kck K$.
\end{lemma}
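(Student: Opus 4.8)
\textbf{Proof plan for \Lem{canon-etaexp}.}
The plan is to proceed by induction on the structure of the kind~$K$, exactly mirroring the recursive definition of $\etaExpRaw{K}{-}$ in \Fig{normalization}. In the base case $K = U \intv V$, we have $\etaExp{K}{N} \alEq N$ by definition, so the claim reduces to showing $\Gamma \ts N \kck U \intv V$ given $\Gamma \tsNe N \kin U \intv V$; this is exactly the admissible rule~\ruleref{CK-NeCk} from \Lem{canon-admissible}. In the inductive step $K = \dfun{X}{K_1}{K_2}$, we have $\etaExp{K}{N} \alEq \lam{X}{K_1}{\etaExp{K_2}{\app{N}{(\etaExp{K_1}{X})}}}$, and we must derive $\Gamma \ts \lam{X}{K_1}{\etaExp{K_2}{\app{N}{(\etaExp{K_1}{X})}}} \kck \dfun{X}{K_1}{K_2}$.

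First I would unpack what is available: since $\Gamma \tsNe N \kin \dfun{X}{K_1}{K_2}$ holds, by neutral kinding validity (\Lem{canon-valid2}) we get $\kindC{\Gamma}{\dfun{X}{K_1}{K_2}}$, hence by inversion of~\ruleref{CWf-DArr} both $\kindC{\Gamma}{K_1}$ and $\kindC{\Gamma, X \kas K_1}{K_2}$. Then I would build the body derivation in the extended context $\Gamma, X \kas K_1$. There, the variable~$X$ is neutral of kind~$K_1$: from $\ctxC{\Gamma, X \kas K_1}$ and \ruleref{CV-Var} we get $\Gamma, X \kas K_1 \tsVar X \kin K_1$, and via \ruleref{CK-Ne} with the empty spine (\ruleref{CK-Empty}), $\Gamma, X \kas K_1 \tsNe X \kin K_1$. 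Applying the IH for~$K_1$ yields $\Gamma, X \kas K_1 \ts \etaExp{K_1}{X} \kck K_1$. Next, using the admissible rule~\ruleref{CK-Snoc} (extending the empty spine of the weakened neutral $N$ by the argument $\etaExp{K_1}{X}$), together with \ruleref{CK-Ne}, I would obtain $\Gamma, X \kas K_1 \tsNe \app{N}{(\etaExp{K_1}{X})} \kin \hsubst{K_2}{X}{\ksimp{K_1}}{\etaExp{K_1}{X}}$. Here the domain kind $K_1$ of the spine step is the same $K_1$, so no subkinding juggling is needed; but I should check that \ruleref{CK-Snoc} requires $\Gamma, X \kas K_1 \ts \etaExp{K_1}{X} \kck K_1$ and $\kindC{\Gamma, X \kas K_1}{K_1}$, both of which we have (the latter by weakening of $\kindC{\Gamma}{K_1}$).

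The subtle point — and the one I expect to be the main obstacle — is reconciling the kind $\hsubst{K_2}{X}{\ksimp{K_1}}{\etaExp{K_1}{X}}$ appearing after the spine step with the kind $K_2$ that the IH for~$K_2$ needs to consume. These are not syntactically equal in general, but \Cor{etaexp-var-hsubst} gives $\hsubst{(\etaExp{K_1}{X})}{X}{\ksimp{K_1}}{\etaExp{K_1}{X}} \wkEq \etaExp{K_1}{X}$ and, more to the point, \Lem{etaexp-hsubst}.\Item{hsubst-kds-etaexp} gives $\hsubst{K_2}{X}{\ksimp{K_1}}{\etaExp{K_1}{X}} \wkEq K_2$. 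Since both kinds are canonically well-formed in $\Gamma, X \kas K_1$ (the left one by the hereditary substitution lemma \Lem{canon-hsubst-full}.\Item{can-hsubst-kd} or by spine kinding validity, the right one by the inverted~\ruleref{CWf-DArr}), I can lift this weak equality to a canonical kind equality via \Lem{weq-canon}.\Item{weq-can-keq}, and then use the admissible kind-conversion rule~\ruleref{CK-SubCk} (together with \ruleref{CSK-Refl-KEq}) to retype $\app{N}{(\etaExp{K_1}{X})}$ at $K_2$ — or rather, since the IH is about neutrals, I would apply \ruleref{CK-NeCk} first and then \ruleref{CK-SubCk} on the checked judgment, or use $\eta$-expansion directly. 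Actually the cleanest route is: retype the neutral $\app{N}{(\etaExp{K_1}{X})}$ so that its synthesized/neutral kind has the same \emph{shape} as $K_2$ (shapes are stable under hereditary substitution, \Lem{simp-stable-hsubst}), apply the IH for~$K_2$ to get $\Gamma, X \kas K_1 \ts \etaExp{K_2}{\app{N}{(\etaExp{K_1}{X})}} \kck \hsubst{K_2}{X}{\ksimp{K_1}}{\etaExp{K_1}{X}}$, then use \ruleref{CK-SubCk} with the canonical kind equality above to land at $\kck K_2$. Finally, \ruleref{CK-Abs'} from \Lem{canon-admissible} assembles $\lam{X}{K_1}{\etaExp{K_2}{\app{N}{(\etaExp{K_1}{X})}}}$ at $\dfun{X}{K_1}{K_2}$, completing the induction. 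The only real care needed throughout is tracking the weak-versus-syntactic mismatch in domain annotations and invoking \Lem{weq-canon} at the right moment; everything else is bookkeeping with the admissible rules already established.
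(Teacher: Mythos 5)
Your scaffolding (base case via \ruleref{CK-NeCk}, the argument variable $X$ kinded with \ruleref{CV-Var}/\ruleref{CK-Empty}/\ruleref{CK-Ne}, the spine extension via \ruleref{CK-Snoc}, the kind-collapse equation $\hsubst{K_2}{X}{\ksimp{K_1}}{\etaExp{K_1}{X}} \keq K_2$ obtained from \Lem{etaexp-hsubst}.\Item{hsubst-kds-etaexp} and \Lem{weq-canon}, and the final assembly with \ruleref{CK-Abs'}) matches the ingredients of the paper's proof. But there is a genuine gap exactly at the step you flag as subtle: you cannot ``apply the IH for $K_2$'' to the neutral $\app{N}{(\etaExp{K_1}{X})}$. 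With the lemma stated as is and the induction on the structure of $K$, the IH for $K_2$ only applies to a neutral whose kind is \emph{syntactically} $K_2$, whereas your neutral has kind $\hsubst{K_2}{X}{\ksimp{K_1}}{\etaExp{K_1}{X}}$; the canonical system has no rule for retyping a neutral at a judgmentally equal kind, so the kind equality cannot be used \emph{before} the IH, and \ruleref{CK-SubCk} cannot be used \emph{after} it because you have no judgment about $\etaExp{K_2}{\app{N}{(\etaExp{K_1}{X})}}$ yet. Applying the IH at the neutral's actual kind instead does not help either: that kind is not a structural subterm of $K$, and it would produce an expansion indexed by $\hsubst{K_2}{X}{\ksimp{K_1}}{\etaExp{K_1}{X}}$ rather than the $\etaExpRaw{K_2}{}$ that appears in the definition of $\etaExp{K}{N}$; converting between the two expansions via \Lem{weq-etaexp} and \Lem{weq-canon} is circular, since \Lem{weq-canon} requires both sides to be already kind-checked, which is the very statement being proved.

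The paper resolves this by strengthening the induction hypothesis: it proves \Lem{canon-etaexp-helper}.\Item{canon-etaexp-ksub}, namely that if $\Gamma \tsNe N \kin J$ and $\Gamma \ts J \ksub K$ then $\Gamma \ts \etaExp{K}{N} \kck K$, by induction on the structure of $K$ with a case analysis on the subkinding derivation; the expansion index $K$ is thereby decoupled from the neutral's synthesized kind $J$. In the arrow case the body neutral has kind $\hsubst{J_2}{X}{\ksimp{J_1}}{\etaExp{K_1}{X}}$, and the needed premise for the strengthened IH is the subkinding chain through $\hsubst{K_2}{X}{\ksimp{K_1}}{\etaExp{K_1}{X}} \keq K_2$ (your kind-collapse equation, stated in the paper as \Cor{canon-hsubst-etaexp-kd}, together with \Lem{st-teq-simp} and \Lem{canon-hsubst-full}.\Item{can-par-hsubst-sk}). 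So you correctly identified the obstacle and the right auxiliary lemmas, but the fix is not post-hoc kind conversion: it is a strengthened statement whose IH tolerates the mismatch between the neutral's kind and the expansion index. The lemma as stated then follows as the special case $J = K$ via \ruleref{CSK-Refl}.
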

Instead of proving the lemma directly, we first prove the following
helper lemma.
\begin{lemma}\label{lem:canon-etaexp-helper}
  ~
  \begin{enumerate}[nosep]
  \item\label{item:canon-hsubst-etaexp-kd} If
    $\; \kindC{\Gamma, X \kas J}{K}$ and
    $\; \Gamma, X \kas J \ts \etaExp{J}{X} \kck J$, then
    $\; \Gamma, X \kas J \ts \hsubst{K}{X}{\ksimp{J}}{\etaExp{J}{X}}
    \keq K$.
  \item\label{item:canon-etaexp-ksub} If
    $\; \Gamma \tsNe N \kin J$ and $\; \Gamma \ts J \ksub K$, then
    $\Gamma \ts \etaExp{K}{N} \kck K$.
  \end{enumerate}
\end{lemma}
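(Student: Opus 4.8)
The plan is to prove the two parts of \Lem{canon-etaexp-helper} simultaneously by induction on the structure of the kind $J$ (for part~\Item{canon-hsubst-etaexp-kd}) and the structure of $K$ (for part~\Item{canon-etaexp-ksub}), since the definition of $\eta$-expansion recurses on the kind. The base cases are the interval kinds: when $J = D_1 \intv D_2$, we have $\etaExp{J}{X} = X$ so $\hsubst{K}{X}{\ksimp{J}}{\etaExp{J}{X}} = \hsubst{K}{X}{\kstar}{X}$, which is syntactically $K$ (hereditarily substituting a variable for itself at base shape is the identity), and we conclude by \ruleref{CKEq-Refl} using the validity of $\kindC{\Gamma, X \kas J}{K}$. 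For part~\Item{canon-etaexp-ksub} with $K = V \intv W$ an interval: since $\Gamma \tsNe N \kin J$ and $\Gamma \ts J \ksub K$, subkinding forces $J = V' \intv W'$, so $\etaExp{K}{N} = N$, and $\Gamma \ts N \kck K$ follows by \ruleref{CK-NeCk} (from \Lem{canon-admissible}) together with \ruleref{CK-SubCk}.

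For the inductive step I would take $J = \dfun{Y}{J_1}{J_2}$ (resp. $K = \dfun{Y}{K_1}{K_2}$). For part~\Item{canon-hsubst-etaexp-kd}, $\etaExp{J}{X} = \lam{Y}{J_1}{\etaExp{J_2}{\app{X}{(\etaExp{J_1}{Y})}}}$, and I would compute $\hsubst{K}{X}{\ksimp{J}}{\etaExp{J}{X}}$ by unwinding the definition of hereditary substitution and reducing application, using commutativity of hereditary substitution (\Lem{hsubst-comm}) and the weak-commutativity facts about $\eta$-expansion from \Lem{etaexp-hsubst}. The key technical point here is that the result will only be \emph{weakly} equal to $K$ --- domain annotations in operator abstractions may differ, exactly the phenomenon discussed around \Lem{weq-nf-etaexp} --- so I would first establish a weak equation $\hsubst{K}{X}{\ksimp{J}}{\etaExp{J}{X}} \wkEq K$ (essentially a canonical-side analogue of \Lem{etaexp-hsubst}.\Item{hsubst-kds-etaexp}), then lift it to a judgmental kind equality using \Lem{weq-canon}.\Item{weq-can-keq}, having checked both sides are well-formed via validity (\Lem{canon-valid2}) and the hereditary substitution lemma (\Lem{canon-hsubst-full}). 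For part~\Item{canon-etaexp-ksub} in the arrow case, $\etaExp{K}{N} = \lam{Y}{K_1}{\etaExp{K_2}{\app{N}{(\etaExp{K_1}{Y})}}}$; after narrowing the context and applying \ruleref{SK-NeApp}-style reasoning (its canonical counterpart) to get $\Gamma, Y \kas K_1 \tsNe \app{N}{(\etaExp{K_1}{Y})} \kin$ (a subkind of) $K_2$, I would invoke the IH for part~\Item{canon-etaexp-ksub} on the body and close with \ruleref{CK-Abs'} from \Lem{canon-admissible} and \ruleref{CK-SubCk}.

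Once \Lem{canon-etaexp-helper} is established, \Lem{canon-etaexp} itself is almost immediate: given $\Gamma \tsNe N \kin K$, neutral kinding validity (\Lem{canon-valid2}) gives $\kindC{\Gamma}{K}$, hence $\Gamma \ts K \ksub K$ by \ruleref{CSK-Refl} (\Lem{canon-refl}), and part~\Item{canon-etaexp-ksub} of the helper yields $\Gamma \ts \etaExp{K}{N} \kck K$ directly.

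The main obstacle will be the arrow case of part~\Item{canon-hsubst-etaexp-kd}: tracking the nested hereditary substitutions and reducing applications that arise from expanding both the outer $\eta$-expansion of $X$ and the inner $\eta$-expansion of the bound variable $Y$, and seeing precisely where the domain-annotation mismatch forces the drop from syntactic to weak equality. This is the same computation that underlies \Lem{etaexp-hsubst}.\Item{etaexp-ne-hsubst}, so the bookkeeping is known to work out, but it requires careful simultaneous use of \Lem{hsubst-comm}, \Lem{simp-hsubst}, and the $\eta$-expansion commutation lemmas, and the circular-looking appeal to $\Gamma, X \kas J \ts \etaExp{J}{X} \kck J$ as a hypothesis of part~\Item{canon-hsubst-etaexp-kd} (rather than a conclusion) is what keeps the induction well-founded.
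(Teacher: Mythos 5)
Your proposal is correct and relies on the same ingredients as the paper, but it is organized differently for part~\Item{canon-hsubst-etaexp-kd}. The paper proves the two parts separately and uses no induction at all for part~\Item{canon-hsubst-etaexp-kd}: the weak equation $\hsubst{K}{X}{\ksimp{J}}{\etaExp{J}{X}} \wkEq K$ follows directly from the simple-kinding lemma \Lem{etaexp-hsubst}.\Item{hsubst-kds-etaexp} after simplifying the canonical premise via \Lem{canon-simp}.\Item{canon-simp-kd}; well-formedness of the substituted kind comes from weakening and \Lem{canon-hsubst-full}.\Item{can-hsubst-kd} (this is where the hypothesis $\Gamma, X \kas J \ts \etaExp{J}{X} \kck J$ is used), and \Lem{weq-canon}.\Item{weq-can-keq} lifts the weak equation to a canonical one. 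Your structural induction on $J$, and in particular the ``canonical-side analogue'' of \Lem{etaexp-hsubst}.\Item{hsubst-kds-etaexp} you plan to establish in the arrow case, are therefore unnecessary: the existing simple-kinding version already applies to canonically well-formed kinds via simplification, which is exactly what spares the paper the \Lem{hsubst-comm}-style bookkeeping you anticipate (your base-case observation that the substitution is syntactically the identity is fine but subsumed by this route). For part~\Item{canon-etaexp-ksub} your plan matches the paper's: induction on $K$ with case analysis of the subkinding derivation, \ruleref{CK-NeCk} and \ruleref{CK-SubCk} in the base case, and in the arrow case the IH (at $K_1$, via \ruleref{CSK-Refl}) to kind-check $\etaExp{K_1}{X}$, then \ruleref{CK-Snoc} and \ruleref{CK-Ne} to kind the extended neutral, whose synthesized kind $\hsubst{J_2}{X}{\ksimp{J_1}}{\etaExp{K_1}{X}}$ is related to $K_2$ by \Lem{st-teq-simp}, \Lem{canon-hsubst-full}.\Item{can-par-hsubst-sk} and part~\Item{canon-hsubst-etaexp-kd} --- the chain you compress into ``a subkind of $K_2$'' --- before closing with the IH and \ruleref{CK-Abs'}; note that this step needs weakening rather than narrowing, and your reading of why stating $\etaExp{J}{X} \kck J$ as a hypothesis of part~\Item{canon-hsubst-etaexp-kd} keeps the overall development well-founded is exactly right.
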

The first part says that hereditary substitutions of $\eta$-expanded
variables in kinds vanish, while the second part is a strengthened
version of \Lem{canon-etaexp}.
\begin{proof}
  The two parts are proven separately.  For the first part, we use
  simplification of canonical kinding
  (\Lem{canon-simp}.\Item{canon-simp-kd}) and
  \Lem{etaexp-hsubst}.\Item{hsubst-kds-etaexp} to derive
  $\hsubst{K}{X}{\ksimp{J}}{\etaExp{J}{X}} \wkEq K$.  By weakening and
  the hereditary substitution lemma
  (\Lem{canon-hsubst-full}.\Item{can-hsubst-kd}), we have
  $\; \kindC{\Gamma, X \kas
    J}{\hsubst{K}{X}{\ksimp{J}}{\etaExp{J}{X}}}$.
  The conclusion of the first part then follows by
  \Lem{weq-canon}.\Item{weq-can-keq}.

  The proof of the second part is by induction on the structure of $K$
  and case analysis on the final rule used to derive
  $\Gamma \ts J \ksub K$.  In the base case, we use~\ruleref{CK-NeCk}
  and~\ruleref{CK-SubCk}.  In the inductive case, we have
  $K = \dfun{X}{K_1}{K_2}$, $J = \dfun{X}{J_1}{J_2}$ such that
  $\Gamma \ts K_1 \ksub J_1$ and
  $\Gamma, X \kas K_1 \ts J_2 \ksub K_2$.  By subkinding validity, we
  further have $\kindC{\Gamma}{K_1}$ and
  $\kindC{\Gamma, X \kas K_1}{K_2}$.  By weakening, the IH
  and~\ruleref{CSK-Refl}, we obtain first
  $\Gamma, X \kas K_1 \ts \etaExp{K_1}{X} \kck K_1$, then
  $\Gamma, X \kas K_1 \ts \etaExp{K_1}{X} \kck J_1$ by weakening and
  \ruleref{CK-SubCk}.  By inspection of neutral kinding, we know that
  $N = \app{Y}{\vec{V}}$ for some $Y$ and $\vec{V}$ and that
  $\Gamma \tsVar Y \kin L$ and
  $\spC{\Gamma}{L}{\vec{V}}{\dfun{X}{J_1}{J_2}}$.  We use weakening,
  \ruleref{CK-Snoc} and~\ruleref{CK-Ne} to derive
  $\Gamma, X \kas K_1 \tsNe \appp{Y}{\vec{V}}{(\etaExp{K_1}{X})} \kin
  \hsubst{J_2}{X}{\ksimp{J_1}}{\etaExp{K_1}{X}}$.

  Now we see why it was necessary to strengthen the IH: the body of
  the $\eta$-expansion of $N$ has kind
  $\hsubst{J_2}{X}{\ksimp{J_1}}{\etaExp{K_1}{X}}$ rather than $K_2$ as
  required by \Lem{canon-etaexp}.  In order to apply the IH, we show
  that
  \begin{align*}
    \Gamma, X \kas K_1 \; \ts \;
    \hsubst{J_2}{X}{\ksimp{J_1}}{\etaExp{K_1}{X}} \; &\alEq \;
    \hsubst{J_2}{X}{\ksimp{K_1}}{\etaExp{K_1}{X}}
    \tag{by \Lem{st-teq-simp}}\\
    &\ksub \; \hsubst{K_2}{X}{\ksimp{K_1}}{\etaExp{K_1}{X}}
    \tag{by \Lem{canon-hsubst-full}.\Item{can-par-hsubst-sk}}\\
    &\keq \; K_2.  \tag{by part~\Item{canon-hsubst-etaexp-kd}}
  \end{align*}
  We conclude the case by the IH and~\ruleref{CK-Abs'}.
\end{proof}
\Lem{canon-etaexp} as well as a strengthened version of
\Lem{canon-etaexp-helper}.\Item{canon-hsubst-etaexp-kd} now follow as
corollaries.
\begin{corollary}\label{cor:canon-hsubst-etaexp-kd}
  If $\; \kindC{\Gamma, X \kas J}{K}$, then
  $\; \Gamma, X \kas J \ts \hsubst{K}{X}{\ksimp{J}}{\etaExp{J}{X}}
  \keq K$.
\end{corollary}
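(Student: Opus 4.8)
The plan is to obtain \Cor{canon-hsubst-etaexp-kd} from part~\Item{canon-hsubst-etaexp-kd} of \Lem{canon-etaexp-helper} by discharging its second hypothesis, namely $\Gamma, X \kas J \ts \etaExp{J}{X} \kck J$, using \Lem{canon-etaexp} in the special case where the neutral type is the variable $X$ itself. In other words, the corollary is exactly the helper lemma with its extra premise removed, now that $\eta$-expansions of variables are known to be canonically well-kinded.

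First I would establish that $\Gamma, X \kas J \tsNe X \kin J$. From the hypothesis $\kindC{\Gamma, X \kas J}{K}$, context validity (\Lem{canon_ctx_valid}) gives $\ctxC{\Gamma, X \kas J}$, which can only have been derived by \ruleref{CC-TmBind}; hence $\kindC{\Gamma}{J}$, and $\kindC{\Gamma, X \kas J}{J}$ by weakening (\Lem{canon-weaken}). Since $(\Gamma, X \kas J)(X) = J$, rule \ruleref{CV-Var} yields $\Gamma, X \kas J \tsVar X \kin J$; combining this with $\spC{\Gamma, X \kas J}{J}{\sempty}{J}$ from \ruleref{CK-Empty} via \ruleref{CK-Ne} gives the neutral kinding judgment $\Gamma, X \kas J \tsNe X \kin J$. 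Applying \Lem{canon-etaexp} to this judgment delivers exactly $\Gamma, X \kas J \ts \etaExp{J}{X} \kck J$, the missing premise, and part~\Item{canon-hsubst-etaexp-kd} of \Lem{canon-etaexp-helper} (applied with the original hypothesis $\kindC{\Gamma, X \kas J}{K}$) then yields the desired equation $\Gamma, X \kas J \ts \hsubst{K}{X}{\ksimp{J}}{\etaExp{J}{X}} \keq K$.

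The only point that requires care, and the closest thing to an obstacle, is checking that this does not reintroduce a circular dependency. \Lem{canon-etaexp} is itself derived from part~\Item{canon-etaexp-ksub} of \Lem{canon-etaexp-helper}, whose proof uses only simplification of canonical kinding (\Lem{canon-simp}), \Lem{etaexp-hsubst}, weakening, the hereditary substitution lemma (\Lem{canon-hsubst-full}), and the weak-to-canonical lifting lemma (\Lem{weq-canon}), together with neutral kinding validity to supply $\kindC{\Gamma}{J}$ for the reflexive subkinding $\Gamma \ts J \ksub J$ in the base case; none of these depend on \Cor{canon-hsubst-etaexp-kd} or on part~\Item{canon-hsubst-etaexp-kd} of the helper lemma. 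Thus the argument is sound, and there is in fact very little to prove: the corollary merely records the now-unconditional form of the substitution-elimination property, relying on the fact that $\eta$-expansion turns every variable into a canonically well-kinded normal form.
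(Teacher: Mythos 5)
Your proposal is correct and follows the paper's intended derivation: the corollary is exactly part~\Item{canon-hsubst-etaexp-kd} of \Lem{canon-etaexp-helper} with its second premise discharged by \Lem{canon-etaexp} applied to the neutral $X$, which you kind via \ruleref{CV-Var}, \ruleref{CK-Empty} and \ruleref{CK-Ne} in the well-formed context supplied by \Lem{canon_ctx_valid}. One minor inaccuracy in your dependency audit: the paper's proof of part~\Item{canon-etaexp-ksub} does invoke part~\Item{canon-hsubst-etaexp-kd} in its final equational step, but since that part carries the extra premise and is proven independently of the corollary, your non-circularity conclusion still holds.
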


To establish completeness of the canonical system w.r.t.\ the
declarative system, we show that every declarative judgment derived
using the \emph{extended} declarative rules, rather than the original
ones, has a canonical counterpart.  The proof makes crucial use of the
validity conditions present in the extended rules.
To avoid confusion, we again mark canonical judgments with the
subscript ``$\mathsf{c}$'' and extended declarative ones with
``$\mathsf{e}$''.  To enhance readability, we omit the subscript
$\nfCtx{\Gamma}$, writing e.g.\ $\nf{}{A}$ instead of
$\nf{\nfCtx{\Gamma}}{A}$.
\begin{lemma}[completeness of the canonical rules -- extended
  version]\label{lem:canon-complete-full}
  ~
  \begin{enumerate}[nosep]
  \item\label{item:can-cmp-ctx} If $\; \ctxE{\Gamma}$, then
    $\; \ctxCC{\nfCtx{\Gamma}}$.
  \item\label{item:can-cmp-kd} If $\; \kindE{\Gamma}{K}$, then
    $\; \kindCC{\nfCtx{\Gamma}}{\nf{}{K}}$.
  \item\label{item:can-cmp-tp} If $\; \Gamma \tsE A \kin K$, then
    $\; \nfCtx{\Gamma} \tsC \nf{}{A} \kck \nf{}{K}$.
  \item\label{item:can-cmp-sk} If $\; \Gamma \tsE J \ksub K$, then
    $\; \nfCtx{\Gamma} \tsC \nf{}{J} \ksub \nf{}{K}$.
  \item\label{item:can-cmp-st-intv} If
    $\; \Gamma \tsE A \ksub B \kin C \intv D$, then
    $\; \nfCtx{\Gamma} \tsC \nf{}{A} \ksub \nf{}{B}$.
  \item\label{item:can-cmp-st} If $\; \Gamma \tsE A \ksub B \kin K$, then
    $\; \nfCtx{\Gamma} \tsC \nf{}{A} \ksub \nf{}{B} \kck \nf{}{K}$.
  \item\label{item:can-cmp-keq} If $\; \Gamma \tsE J \keq K$, then
    $\; \nfCtx{\Gamma} \tsC \nf{}{J} \keq \nf{}{K}$.
  \item\label{item:can-cmp-teq} If $\; \Gamma \tsE A \keq B \kin K$, then
    $\; \nfCtx{\Gamma} \tsC \nf{}{A} \keq \nf{}{B} \kck \nf{}{K}$.
  \item\label{item:can-cmp-eta} If
    $\; \Gamma \tsE A \kin \dfun{X}{J}{K}$ and $\; X \notin \fv(A)$, then
    \[
      \nfCtx{\Gamma} \tsC \nf{}{\lam{X}{J}{\app{A}{X}}} \keq
      \nf{}{A} \kck \nf{}{\dfun{X}{J}{K}}.
    \]
  \item\label{item:can-cmp-hsubst-kd} If
    $\; \kindE{\Gamma, X \kas J}{K}$, $\; \Gamma \tsE A \kin J$ and
    $\; \kindE{\Gamma}{\subst{K}{X}{A}}$, then
    \[
      \nfCtx{\Gamma} \tsC
      \hsubst{\nf{}{K}}{X}{\ksimp{\nf{}{J}}}{\nf{}{A}} \keq
      \nf{}{\subst{K}{X}{A}}.
    \]
  \item\label{item:can-cmp-beta} If
    $\; \Gamma, X \kas J \tsE A \kin K$, $\; \Gamma \tsE B \kin J$,
    $\; \kindE{\Gamma, X \kas J}{K}$,
    $\; \Gamma \tsE \subst{A}{X}{B} \kin \subst{K}{X}{B}$\\ and
    $\; \kindE{\Gamma}{\subst{K}{X}{B}}$, then
    \[
      \nfCtx{\Gamma} \tsC \nf{}{\app{(\lam{X}{J}{A})}{B}} \keq
      \nf{}{\subst{A}{X}{B}} \kck \nf{}{\subst{K}{X}{B}}.
    \]
  \end{enumerate}
\end{lemma}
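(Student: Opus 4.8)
The plan is to prove all twelve parts simultaneously, by induction on the derivations of the \emph{extended} declarative judgments, with the structure of the proof mirroring that of the extended declarative system: parts~\Item{can-cmp-ctx}--\Item{can-cmp-teq} are driven by the derivation of the named judgment, while the auxiliary parts~\Item{can-cmp-eta}, \Item{can-cmp-hsubst-kd} and~\Item{can-cmp-beta} are established alongside and invoked precisely in the cases where substitutions occur. For each declarative rule we exhibit a canonical derivation relating the normalized versions of its conclusion, using the admissible canonical rules of \Lem{canon-admissible} as stand-ins for declarative rules with no direct canonical analogue: \ruleref{K-Sub} and \ruleref{ST-Sub} are mirrored by \ruleref{CK-SubCk} and \ruleref{CST-SubCk}, the bound-projection rules~\rulerefN{ST-Bnd1}{ST-Bnd$_{1,2}$} by \rulerefN{CST-CkBnd1}{CST-CkBnd$_{1,2}$}, \ruleref{K-Sing} by \ruleref{CK-Sing'}, and so on. The structural cases (\ruleref{K-Arr}, \ruleref{K-All}, \ruleref{ST-Arr}, \ruleref{ST-All}, \ruleref{SK-Intv}, \ruleref{SK-DArr}, the congruence and antisymmetry rules, etc.) go through routinely by applying the IH to the premises and reassembling with the matching canonical rule, discharging simple-kinding side conditions via \Lem{nf-simp} and well-formedness side conditions via the canonical validity properties (\Lem[Lemmas]{canon-valid1} and~\ref{lem:canon-valid2}).

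Three families of cases are genuinely delicate. First, \ruleref{K-Var}: here $\nf{}{X} \alEq \etaExp{\nf{}{K}}{X}$ for $K = \Gamma(X)$, and we must show this $\eta$-long form checks against $\nf{}{K}$; this is exactly \Lem{canon-etaexp} applied to the neutral kinding $\nfCtx{\Gamma} \tsNe X \kin \nf{}{K}$ obtained from \ruleref{CV-Var}, \ruleref{CK-Empty} and \ruleref{CK-Ne}. Second, \ruleref{K-App} hand in hand with part~\Item{can-cmp-hsubst-kd}: by the IH for part~\Item{can-cmp-tp} the operator's normal form checks at an arrow kind, so (normal forms being $\eta$-long, cf.\ the proof of \Lem{nf-simp}) it is an abstraction $\lam{X}{L}{V}$ with $\simpEq{L}{\nf{}{J}}$, whence $\nf{}{\app{A}{B}} \alEq \hsubst{V}{X}{\ksimp{\nf{}{J}}}{\nf{}{B}}$ by definition of $\nfRaw$; that this is canonically well-kinded at $\nf{}{\subst{K}{X}{B}}$ follows from the hereditary substitution lemma~\Lem{canon-hsubst-full} together with part~\Item{can-cmp-hsubst-kd}, which itself rests on \Lem{weq-nf-hsubst} (normalization weakly commutes with substitution) and \Lem{weq-canon} to upgrade the resulting \emph{weak} kind equation $\hsubst{\nf{}{K}}{X}{\ksimp{\nf{}{J}}}{\nf{}{B}} \wkEq \nf{}{\subst{K}{X}{B}}$ to a canonical one. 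Third, the $\beta\eta$-rules \rulerefN{ST-Beta1}{ST-$\beta_{1,2}$}, \rulerefN{ST-Eta1}{ST-$\eta_{1,2}$} and parts~\Item{can-cmp-eta} and~\Item{can-cmp-beta}: for $\beta$ the point is that $\nf{}{\app{(\lam{X}{J}{A})}{B}}$ and $\nf{}{\subst{A}{X}{B}}$ are \emph{weakly} equal normal forms (again by \Lem{weq-nf-hsubst}), which \Lem{weq-canon}.\Item{weq-can-teq} promotes to a canonical type equation at $\nf{}{\subst{K}{X}{B}}$, so that \ruleref{CST-Refl-TEq} yields the subtyping judgment in either direction; for $\eta$ we argue similarly via \Lem{weq-nf-etaexp}. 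The validity conditions highlighted in gray in \Fig[Figs.]{decl_rules1} and~\ref{fig:decl_rules2} are exactly what makes the substituted kinds $\subst{K}{X}{B}$ available as well-formed premises, so that \Lem[Lemmas]{nf-sound} and~\ref{lem:nf-simp} apply to them --- this is why the completeness proof is stated and proven for the extended declarative system.

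The main obstacle I expect is keeping the simultaneous induction honest around the tight coupling between part~\Item{can-cmp-tp} (kinding), part~\Item{can-cmp-hsubst-kd} (hereditary substitution versus normalization, up to canonical equality) and the validity and order-theoretic infrastructure of the canonical system: in the \ruleref{K-App} and \ruleref{ST-App} cases one needs a canonical \emph{equation} between two kinds --- one built by hereditary substitution, the other by normalizing an ordinary substitution --- but these agree only \emph{weakly} on the nose, so every such case must route through \Lem{weq-canon}, which in turn demands that both sides be \emph{already} known canonically well-formed; closing this loop requires ordering the parts of the induction so that the kinding and substitution parts are available at the appropriate size before they are used, and checking that the occurrences of \ruleref{CST-Ne} reintroduced by hereditary substitution do not bring back the non-syntactic equalities we are trying to manage. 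In addition, \ruleref{ST-App} forces a choice of substituting $B_1$ or $B_2$ into the codomain kind, and one must verify --- using the functionality parts of \Lem{canon-hsubst-full} --- that the two choices give canonically equal kinds, so that \ruleref{CST-SubCk} can reconcile them.
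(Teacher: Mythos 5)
Your proposal follows essentially the same route as the paper's proof: a simultaneous induction on the extended declarative derivations with parts~\Item{can-cmp-eta}--\Item{can-cmp-beta} treated as helpers established from the weak-commutation lemmas (\Lem[Lemmas]{weq-nf-hsubst} and~\ref{lem:weq-nf-etaexp}) and lifted to canonical equations by \Lem{weq-canon}; \ruleref{K-Var} handled by \Lem{canon-etaexp}; \ruleref{K-App} and \ruleref{ST-App} handled by the hereditary substitution lemma plus part~\Item{can-cmp-hsubst-kd} and \ruleref{CK-SubCk}/\ruleref{CST-SubCk}; and the validity conditions of the extended system doing exactly the work you ascribe to them. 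However, there is one concrete gap: you classify \ruleref{SK-DArr} and \ruleref{ST-All} (and, by the same token, \ruleref{ST-Abs}) among the ``routine'' structural cases, whereas these are precisely the cases the paper singles out as the hard part of parts~\Item{can-cmp-sk} and~\Item{can-cmp-st}.

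The difficulty is that in a rule whose premise extends the context, the same expression is normalized in two \emph{different} contexts. For \ruleref{SK-DArr} with conclusion $\Gamma \tsE \dfun{X}{J_1}{J_2} \ksub \dfun{X}{K_1}{K_2}$, the component of $\nf{}{\dfun{X}{J_1}{J_2}}$ is $\nf{\nfCtx{\Gamma, X \kas J_1}}{J_2}$, but the IH applied to the premise $\Gamma, X \kas K_1 \tsE J_2 \ksub K_2$ only yields a canonical subkinding involving $\nf{\nfCtx{\Gamma, X \kas K_1}}{J_2}$; these two normal forms differ syntactically in general (their embedded domain annotations disagree), so ``reassembling with the matching canonical rule'' does not go through as stated. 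The paper bridges this by noting that the two contexts simplify equally (\Lem{st-teq-simp}), hence the two normalizations of $J_2$ are weakly equal (\Lem{weq-nf}); it then uses context narrowing and subkinding validity to establish well-formedness of both sides, lifts the weak equation to a canonical one via \Lem{weq-canon}, and chains it with the IH's subkinding before applying \ruleref{CSK-DArr} (and analogously for \ruleref{CST-All} and \ruleref{CST-Abs}). You deploy exactly this weak-equality-plus-lifting machinery for \ruleref{K-App}, \ruleref{ST-App} and the $\beta\eta$ cases, so the repair is within your toolkit, but as written your induction would stall on the binder cases.
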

Equivalent statements w.r.t.\ the original declarative rules follow by
equivalence of the original and extended declarative systems.
\begin{proof}
  All parts are proven simultaneously, by induction on the derivations
  of the respective premises, except for
  parts~\Item{can-cmp-eta}--\Item{can-cmp-beta}, which are helper
  lemmas that apply the IH directly to all of their premises but could
  be inlined in the proofs of the other parts.

  Thanks to the admissible rules introduced in~\Lem{canon-admissible},
  the proofs of parts~\Item{can-cmp-ctx}--\Item{can-cmp-tp} are
  straightforward, except for the cases of~\ruleref{K-Var}, where we
  use \Lem{canon-etaexp}, and~\ruleref{K-App}, where we use the
  hereditary substitution lemma to normalize $A = \app{A_1}{A_2}$ if
  $\nf{}{A_1}$ is an abstraction, and the IH for
  part~\Item{can-cmp-hsubst-kd} together with~\ruleref{CK-SubCk} to
  adjust the kind of the result.  The validity conditions
  of~\ruleref{K-App} are crucial in this last step.

  Parts~\Item{can-cmp-st-intv} and~\Item{can-cmp-teq} follow almost
  immediately from part~\Item{can-cmp-st}, part~\Item{can-cmp-keq}
  from part~\Item{can-cmp-sk}.

  The proofs of parts~\Item{can-cmp-eta}--\Item{can-cmp-beta} all
  follow the same pattern.  First, we use the IH to normalize the
  premises and establish validity of the left- and right-hand sides of
  the respective equations.
Then we use \Lem{weq-nf-etaexp},
  \Lem{weq-nf-hsubst}.\Item{weq-nf-hsubst-kd} and
  \Lem{weq-nf-hsubst}.\Item{weq-nf-hsubst-tp}, respectively, to derive
  weak versions of these equations, and \Lem{weq-canon} to turn them
  into canonical equations.

  The remaining parts~\Item{can-cmp-sk} and~\Item{can-cmp-st} are the
  most difficult to prove.  The cases of the extended $\beta$~and
  $\eta$-conversion rules are covered by parts~\Item{can-cmp-eta}
  and~\Item{can-cmp-beta} thanks to the validity conditions in the
  extended rules.  The case of~\ruleref{ST-App} is similar to that
  of~\ruleref{K-App} -- again the validity conditions are crucial.
  Some of the remaining cases are covered by the admissible rules
  introduced in
\Sec[\S\S]{canon-order} and~\ref{sec:canon-admissible}.
The challenging cases are those where one of the premises of the
  corresponding rule extends the contexts, \ie those
  of~\ruleref{CSK-DArr}, \ruleref{CST-All} and~\ruleref{CST-Abs}.
We show the case for~\ruleref{CSK-DArr} here, the other two are
  similar.

  We are given $\kindE{\Gamma}{\dfun{X}{J_1}{J_2}}$,
  $\; \Gamma \tsE K_1 \ksub J_1$ and
  $\Gamma, X \kas K_1 \tsE J_2 \ksub K_2$
  with $J = \dfun{X}{J_1}{J_2}$
  and $K = \dfun{X}{K_1}{K_2}$.  We start by applying the IH to all
  the premises and analyze the first of the resulting derivations to
  obtain
  \begin{align*}
    \kindCC{\nfCtx{\Gamma}&}{\nf{}{J_1}}
    &\kindCC{\nfCtx{\Gamma}, X \kas \nf{}{J_1}&}{\nf{\nfCtx{\Gamma, X \kas J_1}}{J_2}}\\
\nfCtx{\Gamma} &\tsC \nf{}{K_1} \ksub \nf{}{J_1}
    &\nfCtx{\Gamma}, X \kas \nf{}{K_1} &\tsC
       \nf{\nfCtx{\Gamma, X \kas K_1}}{J_2} \ksub
       \nf{\nfCtx{\Gamma, X \kas K_1}}{K_2}.
\end{align*}
  Note the different contexts $\nfCtx{\Gamma, X \kas J_1}$ and
  $\nfCtx{\Gamma, X \kas K_1}$ used to normalize $J_2$ in the second
  and fourth of these judgments, respectively.  This leads to a
  syntactic difference in the resulting normal forms, i.e.\ we have
  $\nf{\nfCtx{\Gamma, X \kas J_1}}{J_2} \not\alEq \nf{\nfCtx{\Gamma, X
      \kas K_1}}{J_2}$.
  In order to apply~\ruleref{CSK-DArr}, we need to resolve this
  difference.

  We notice that
  $\simpEq{\nfCtx{\Gamma, X \kas J_1}}{\nfCtx{\Gamma, X \kas K_1}}$
  by~\Lem{st-teq-simp}. Hence, by~\Lem{weq-nf}, we have
  $\nf{\nfCtx{\Gamma, X \kas J_1}}{J_2} \wkEq \nf{\nfCtx{\Gamma, X
      \kas K_1}}{J_2}$.
  Using context narrowing and subkinding validity, we derive
  \begin{align*}
    \kindC{\nfCtx{\Gamma}, X \kas \nf{}{K_1}}{\nf{\nfCtx{\Gamma, X \kas
        J_1}}{J_2}}
    &&\text{and}&&
    \kindC{\nfCtx{\Gamma}, X \kas \nf{}{K_1}}{\nf{\nfCtx{\Gamma, X \kas
        K_1}}{J_2}},
  \end{align*}
  from which we obtain, by \Lem{weq-canon}.\Item{weq-can-sk},
  \[
    \nfCtx{\Gamma}, X \kas \nf{}{K_1} \; \tsC \; \nf{\nfCtx{\Gamma, X \kas
        J_1}}{J_2}
    \; \keq  \; \nf{\nfCtx{\Gamma, X \kas K_1}}{J_2}
    \; \ksub \; \nf{\nfCtx{\Gamma, X \kas K_1}}{K_2}
  \]
  We conclude the case by~\ruleref{CWf-DArr} and~\ruleref{CSK-DArr}.
\end{proof}

\subsection{Inversion of Subtyping}
\label{sec:canon-inversion-full}

As we saw in \Sec{declarative} of the paper, preservation of types under CBV
reduction does not hold in arbitrary contexts.  The culprit are type
variable bindings with inconsistent bounds.  Such bindings can inject
arbitrary inequations into the subtyping relation and hence break
putative properties that hold for subtyping of closed types.  For
example, the absurd assumption $X \kin \Top \intv \Bot$ trivializes
the subtyping relation under any context in which it appears.  To see
this, consider the following derivation, where
$\Gamma = X \kin \Top \intv \Bot$.
\vspace{-\baselineskip}
\begin{prooftree}
  \AxiomC{$\vdots$}
\UnaryInfC{$\Gamma \ts \fun{U}{V} \tsub \Top$}
    \AxiomC{$\vdots$}
    \UnaryInfC{$\Gamma \tsNe X \kin \Top \intv \Bot$}
    \LeftLabel{\rulerefPN{CST-Bnd1}{CST-Bnd$_1$}}
    \UnaryInfC{$\Gamma \ts \Top \tsub X$}
      \AxiomC{$\vdots$}
      \UnaryInfC{$\Gamma \tsNe X \kin \Top \intv \Bot$}
      \RightLabel{\rulerefPN{CST-Bnd2}{CST-Bnd$_2$}}
      \UnaryInfC{$\Gamma \ts X \tsub \Bot$}
    \RightLabel{\rulerefP{CST-Trans}}
    \BinaryInfC{$\Gamma \ts \Top \tsub \Bot$}
  \RightLabel{\rulerefP{CST-Trans}}
  \insertBetweenHyps{\hspace{-2em}}
  \BinaryInfC{$\Gamma \ts \fun{U}{V} \tsub \Bot$}
    \AxiomC{$\vdots$}
\UnaryInfC{$\Gamma \ts \Bot \tsub \all{X}{K}{W}$}
  \LeftLabel{\rulerefP{CST-Trans}}
  \insertBetweenHyps{\hspace{-3em}}
  \BinaryInfC{$\Gamma \ts \fun{U}{V} \tsub \all{X}{K}{W}$}
\end{prooftree}

Under such conditions, subtyping cannot be inverted in any meaningful
way.
We therefore consider inversion of canonical subtyping only in the
empty context, following the approach taken by Rompf and Amin in their
type safety proof for DOT~\cite{RompfA16oopsla}

\fig{fig:transfree_rules}{Top-level transitivity-free canonical subtyping}{
\judgment{Transitivity-free subtyping of closed proper types}{\fbox{$\tsTf U \ksub V$}}
\begin{multicols}{2}
  \typicallabel{TfST-Top}
\infrule[\ruledef{TfST-Top}]
  {\typeC{\cempty}{V}}
  {\tsTf V \tsub \Top}

\infrule[\ruledef{TfST-Arr}]
  {\quad\\
   \cempty \ts U_2 \tsub U_1  \andalso  \cempty \ts V_1 \tsub V_2}
  {\tsTf \fun{U_1}{V_1} \tsub \fun{U_2}{V_2}\\ \vphantom{TfST-All}}

\infrule[\ruledef{TfST-Bot}]
  {\typeC{\cempty}{V}}
  {\tsTf \Bot \tsub V}

\infruleLeft{TfST-All}
  {\typeC{\cempty}{\all{X}{K_1}{V_1}}\\
   \cempty \ts K_2 \ksub K_1  \andalso
   X \kas K_2 \ts V_1 \tsub V_2}
  {\tsTf \all{X}{K_1}{V_1} \tsub \all{X}{K_2}{V_2}}
\end{multicols}}
 
As a first step we show that any top-level uses of the transitivity
rule~\ruleref{CST-Trans} can be eliminated.  To do so, we introduce a
helper judgment $\tsTf U \tsub V$, which states that $U$ is a proper
subtype of $V$ in the empty context (see~\Fig{transfree_rules}).  It
is easy to see that this judgment is sound w.r.t.\ canonical subtyping
in the empty context (the proof is by routine induction on subtyping
derivations).
\begin{lemma}[soundness of top-level subtyping]
  \label{lem:transfree-sound}
  If $\; \tsTf U \tsub V$, then $\cempty \ts U \tsub V$.
\end{lemma}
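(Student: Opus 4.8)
The plan is to argue by a straightforward induction on the derivation of $\tsTf U \tsub V$. In fact this reduces to a bare case analysis on the final rule: the four rules \ruleref{TfST-Top}, \ruleref{TfST-Arr}, \ruleref{TfST-Bot} and \ruleref{TfST-All} have no premise that recursively mentions $\tsTf$ --~their sub-derivations are already phrased in terms of ordinary canonical subtyping $\cempty \ts \cdot \tsub \cdot$ and kind synthesis~-- so the induction hypothesis is never actually invoked. First I would recall how the top-level judgment was obtained: $\tsTf U \tsub V$ is exactly the canonical subtyping judgment $\Gamma \ts U \tsub V$ with $\Gamma$ fixed to $\cempty$ and the rules \ruleref{CST-Trans}, \ruleref{CST-Ne}, \rulerefN{CST-Bnd1}{CST-Bnd$_1$} and \rulerefN{CST-Bnd2}{CST-Bnd$_2$} deleted.

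Given that, the main step is simply the observation that every rule that survives into $\tsTf$ is literally an instance, specialised to the empty context, of the matching canonical rule: \ruleref{TfST-Top} corresponds to \ruleref{CST-Top}, \ruleref{TfST-Arr} to \ruleref{CST-Arr}, \ruleref{TfST-Bot} to \ruleref{CST-Bot}, and \ruleref{TfST-All} to \ruleref{CST-All} (using that $\cempty, X \kas K_2$ is just $X \kas K_2$). In each of the four cases the premises recorded by the $\tsTf$ rule are precisely the premises demanded by the corresponding canonical rule, so a single application of that rule yields $\cempty \ts U \tsub V$ and closes the case. If one instead set the definition up in the more stratified style, with the sub-derivations of \ruleref{TfST-Arr} and \ruleref{TfST-All} phrased in terms of $\tsTf$ itself, the proof would become a genuine induction: one would first apply the induction hypothesis to those premises to obtain the corresponding $\cempty \ts \cdot \tsub \cdot$ facts, and then finish exactly as above.

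I do not expect any real obstacle here; soundness holds essentially by construction, since $\tsTf$ is a sub-system of canonical subtyping in the empty context. The only point that needs care is checking that the rule correspondence is exact rather than merely analogous --~in particular that the side conditions $\typeC{\cempty}{V}$ and $\typeC{\cempty}{\all{X}{K_1}{V_1}}$ in \ruleref{TfST-Top}, \ruleref{TfST-Bot} and \ruleref{TfST-All} are already in the shape required by \ruleref{CST-Top}, \ruleref{CST-Bot} and \ruleref{CST-All}. The substantive work of this subsection lies on the opposite side: showing that \ruleref{CST-Trans} is \emph{admissible} in $\tsTf$, which is what will let the converse (completeness, and hence equivalence with canonical subtyping and the subsequent inversion lemma) go through.
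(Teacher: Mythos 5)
Your proposal is correct and takes essentially the same route as the paper, which dismisses this lemma as a routine induction on the derivation of $\tsTf U \tsub V$, mapping each rule to its canonical counterpart instantiated at $\cempty$. Your observation that the induction degenerates to a bare case analysis is accurate for the definition in Fig.~\ref{fig:transfree_rules}, since the premises of \ruleref{TfST-Arr} and \ruleref{TfST-All} are already stated as full canonical judgments.
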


Crucially, the inference rules for the judgment $\tsTf U \tsub V$ do
not include a transitivity rule, but the following variant of that
rule is admissible.
\begin{lemma}[top-level transitivity elimination]
  The following is admissible.
  \begin{center}\normalfont
    \infruleSimp[TfST-Trans]
    {\cempty \ts U \tsub V  \andalso  \tsTf V \tsub W}
    {\tsTf U \tsub W}
  \end{center}
\end{lemma}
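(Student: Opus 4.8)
The plan is to prove admissibility of \ruleref{TfST-Trans} by a nested induction on the structure of the second premise $\tsTf V \tsub W$, with an inner induction on the derivation of the first premise $\cempty \ts U \tsub V$ in the canonical system. The key observation is that the transitivity-free judgment $\tsTf{}$ is essentially the subtyping judgment of \FSub{} restricted to the empty context: its rules are exactly \ruleref{TfST-Top}, \ruleref{TfST-Bot}, \ruleref{TfST-Arr} and \ruleref{TfST-All}, none of which involve free variables, bound projections, neutral types, $\beta\eta$-conversions, or the transitivity rule. This is precisely why transitivity elimination goes through cleanly here, whereas it fails in arbitrary contexts: the obstructive rules (\ruleref{CST-Ne}, \rulerefN{CST-Bnd1}{CST-Bnd$_1$}, \rulerefN{CST-Bnd2}{CST-Bnd$_2$}) cannot fire in the empty context because there are no variable bindings to exploit.

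First I would set up the case analysis on the shape of $W$, driven by the final rule of $\tsTf V \tsub W$. If $W = \Top$, the conclusion $\tsTf U \tsub \Top$ follows immediately from \ruleref{TfST-Top}, using proper subtyping validity (\Lem{canon-valid2}) applied to $\cempty \ts U \tsub V$ to obtain $\typeC{\cempty}{U}$. If $V = \Bot$, then the first premise $\cempty \ts U \tsub \Bot$ must — by inversion of canonical subtyping in the empty context (\Lem{st-inv}, or rather the characterization via $\tsTf{}$ being developed here, so I would instead argue directly) — force $U = \Bot$ as well, since no function type or universal is a subtype of $\Bot$; then $\tsTf U \tsub W$ is just $\tsTf \Bot \tsub W$ via \ruleref{TfST-Bot}. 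The interesting cases are $W = \fun{W_1}{W_2}$ (so $V = \fun{V_1}{V_2}$ with $\cempty \ts W_1 \tsub V_1$ and $\cempty \ts V_2 \tsub W_2$) and $W = \all{X}{K_W}{W'}$ (so $V = \all{X}{K_V}{V'}$ with the corresponding subkinding and body-subtyping premises). In each of these, I would invert the first premise $\cempty \ts U \tsub V$: because $V$ is a function type (resp. universal), $U$ must also be a function type (resp. universal) of the matching shape, with $\cempty \ts V_1 \tsub U_1$ and $\cempty \ts U_2 \tsub V_2$ (resp. the universal analogues). Combining the two layers of premises using ordinary transitivity of canonical subtyping (\ruleref{CST-Trans}, which \emph{is} available in the full canonical system) for the contravariant and covariant components, and using transitivity of subkinding (\ruleref{CSK-Trans}) for the kind components of the universal case, I then reassemble via \ruleref{TfST-Arr} (resp. \ruleref{TfST-All}).

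The main obstacle is the universal case. There, after inverting, the body subtyping premises live in contexts $X \kas K_U \ts \dotsm$ and $X \kas K_V \ts \dotsm$ with different kind annotations, and I need to bring them into a common context $X \kas K_W$ to apply \ruleref{TfST-All}. This requires context narrowing (\Lem{canon-weak-narrow}) together with the subkinding chain $\cempty \ts K_W \ksub K_V \ksub K_U$, exactly the kind of bookkeeping that already appears in the completeness proof for \ruleref{CST-All}. I also need to feed the inner IH (for the first premise) the narrowed body subtyping judgment, and the outer IH (for the second premise) at the structurally smaller component $W_2$ or $W'$ — this is why the nested induction is needed and why it terminates: the outer measure (the derivation of $\tsTf V \tsub W$) strictly decreases in the recursive calls handling the subterms, while the inner induction on $\cempty \ts U \tsub V$ only needs to reach down to find the matching shape. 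A secondary technical point is that inverting $\cempty \ts U \tsub V$ when $V$ has a known shape already presupposes exactly the inversion-of-subtyping statement that \Lem{st-inv} packages; since the excerpt develops \ruleref{TfST-Trans} en route to \Lem{st-inv}, I would instead prove the needed shape-inversion facts for canonical subtyping in the empty context inline, by a direct induction on the canonical derivation that discharges the impossible cases (\ruleref{CST-Ne}, the bound projections) using emptiness of the context, and handles \ruleref{CST-Trans} by the very admissibility statement being proven — which is consistent because that recursive appeal is on a strict sub-derivation.
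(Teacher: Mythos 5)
There is a genuine gap, and it lies in the induction measure. You set up the outer induction on the second premise $\tsTf V \tsub W$ and the inner induction on the first premise $\cempty \ts U \tsub V$, but the case that actually forces a recursive appeal is when the \emph{first} premise ends in \ruleref{CST-Trans}: from $\cempty \ts U \tsub V'$ and $\cempty \ts V' \tsub V$ you must first obtain $\tsTf V' \tsub W$ (fine: sub-derivation of the first premise, same second premise) and then combine it with $\cempty \ts U \tsub V'$ to get $\tsTf U \tsub W$. In that second appeal the second premise is a freshly constructed derivation, not a sub-derivation of the original $\tsTf V \tsub W$, so your lexicographic order (second premise outer, first inner) does not justify it; your closing remark that the appeal ``is on a strict sub-derivation'' only accounts for the first premise. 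The paper resolves exactly this by inverting the nesting: it inducts on the derivation of the first premise alone, with the second premise handled by case analysis, so both uses of the IH in the \ruleref{CST-Trans} case are on strict sub-derivations of the first premise with arbitrary second premises. With that organization no inversion of $\cempty \ts U \tsub V$ is needed at all -- the case analysis on its last rule does the work, and \ruleref{CST-Ne}, \rulerefN{CST-Bnd1}{CST-Bnd$_1$}, \rulerefN{CST-Bnd2}{CST-Bnd$_2$} are vacuous in the empty context -- so the inline shape-inversion detour you propose (itself dangerously close to circular, since inversion is what this lemma is meant to deliver) is unnecessary.

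A second, related misreading: you say you must ``feed the outer IH at the structurally smaller component $W_2$ or $W'$'' to reassemble \ruleref{TfST-Arr} and \ruleref{TfST-All}, and you base your termination argument on this. But the premises of \ruleref{TfST-Arr} and \ruleref{TfST-All} are ordinary canonical subtyping judgments, not $\tsTf$ judgments, so the components are glued with \ruleref{CST-Trans} (plus \ruleref{CSK-Trans} and narrowing, \Lem{canon-weak-narrow}, in the universal case), exactly as you also note earlier -- no recursive appeal to the lemma on components is needed, and for the universal body it would not even be well-formed, since that judgment lives in the non-empty context $X \kas K_2$ where $\tsTf$ is not defined. This ``top-level only'' character of the transitivity-free judgment is precisely what makes the lemma a short induction in the paper; once you adopt the paper's induction on the first premise, your arrow/universal cases collapse to the simple gluing you already sketched, the $\Bot$ and $\Top$ cases follow from soundness of $\tsTf$ and validity, and the problematic machinery in your last paragraph can be dropped.
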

\begin{proof}
  The proof is by induction on the derivation of the first premise and
  case analysis of the final rule used to derive the second.  In the
  case of~\ruleref{CST-Trans}, we use the IH twice.  In the case
  of~\ruleref{CST-Bot} where $\emptyset \ts \Bot \tsub U$ and
  $\tsTf U \tsub V$, we use \Lem{transfree-sound} and validity of
  canonical typing to derive $\typeC{\emptyset}{V}$ and conclude
  with~\ruleref{TfST-Bot}.  Similarly, in cases where the second
  premise was derived using~\ruleref{TfST-Top}, we use validity of
  canonical subtyping and~\ruleref{TfST-Top}.
\end{proof}

Thanks to~\ruleref{TfST-Trans}, it is straightforward to establish
completeness, and thus equivalence of the judgments
$\cempty \ts U \tsub V$ and $\tsTf U \tsub V$.
\begin{lemma}[equivalence of top-level subtyping]
  \label{lem:transfree-equiv}
  The two versions of canonical subtyping are equivalent in the empty
  context: $\cempty \ts U \tsub V$ iff $\; \tsTf U \tsub V$.
\end{lemma}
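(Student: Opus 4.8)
The forward direction, that $\tsTf U \tsub V$ implies $\cempty \ts U \tsub V$, is precisely the soundness lemma~(\Lem{transfree-sound}). For the converse I would argue completeness by induction on the derivation of $\cempty \ts U \tsub V$, proceeding by case analysis on the final canonical subtyping rule.

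The cases for \ruleref{CST-Top}, \ruleref{CST-Bot}, \ruleref{CST-Arr} and \ruleref{CST-All} are immediate: the premises of the transitivity-free rules \ruleref{TfST-Top}, \ruleref{TfST-Bot}, \ruleref{TfST-Arr} and \ruleref{TfST-All} are exactly the kinding, subkinding and (recursive) subtyping premises of the corresponding canonical rule, specialized to the empty context, so each conclusion follows directly without appealing to the induction hypothesis. In particular, in the universal case the premise $X \kas K_2 \ts V_1 \tsub V_2$ of \ruleref{TfST-All} is an ordinary canonical subtyping judgment in a non-empty context, so no recursion into the transitivity-free system is needed there. The cases for \ruleref{CST-Ne}, \rulerefN{CST-Bnd1}{CST-Bnd$_1$} and \rulerefN{CST-Bnd2}{CST-Bnd$_2$} are vacuous in the empty context: each requires a premise of the form $\cempty \tsVar X \kin J$ (the latter two via a neutral-kinding premise $\cempty \tsNe \app{X}{\vec{V}} \kin J$ derived by \ruleref{CK-Ne}), and no such judgment is derivable, since the only canonical variable-kinding rules \ruleref{CV-Var} and \ruleref{CV-Sub} ultimately demand $X \in \dom(\cempty)$. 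I would record this as a one-line auxiliary lemma (no closed variable-headed neutral types), proven by an easy induction on canonical variable/neutral kinding derivations. Finally, for \ruleref{CST-Trans} we are given $\cempty \ts U \tsub V$ and $\cempty \ts V \tsub W$; applying the induction hypothesis to the second premise yields $\tsTf V \tsub W$, and then the admissible rule \ruleref{TfST-Trans} applied to the first premise gives $\tsTf U \tsub W$.

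Combining the two directions gives the stated equivalence. I expect no real obstacle here: the one genuinely substantive ingredient, admissibility of \ruleref{TfST-Trans}, has already been established, so the completeness argument is essentially bookkeeping. The only points requiring a little care are the systematic elimination of the variable-headed cases in the empty context (via the auxiliary lemma) and the observation that the subtyping premises of \ruleref{TfST-Arr} and \ruleref{TfST-All} are full canonical subtyping judgments, so that the induction hypothesis is invoked only in the transitivity case.
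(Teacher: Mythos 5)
Your proposal is correct and follows essentially the same route as the paper: soundness is \Lem{transfree-sound}, and completeness is by induction on the derivation of $\cempty \ts U \tsub V$, invoking the admissible rule \ruleref{TfST-Trans} in the \ruleref{CST-Trans} case. The extra details you spell out (the verbatim transfer of premises in the structural cases and the vacuity of the variable-headed cases in the empty context) are exactly the bookkeeping the paper leaves implicit.
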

\begin{proof}
  We have already proven soundness~($\Leftarrow$).
  Completeness~($\Rightarrow$) is by induction on the derivations of
  $\cempty \ts U \tsub V$ and uses~\ruleref{TfST-Trans} in the case
  of~\ruleref{CST-Trans}.
\end{proof}

Inversion of the canonical subtyping relation in the empty context now
follows immediately by inspection of the transitivity-free subtyping
rules and
\Lem{transfree-equiv}.  We only state the relevant cases.
\begin{corollary}[inversion of canonical subtyping -- embedding]
  \label{cor:cst-inv}
  Let $\; \cempty \ts U_1 \tsub U_2$.
  \begin{enumerate}[nosep]
  \item\label{item:cst-inv-arr} If $\; U_1 = \fun{V_1}{W_1}$ and
    $U_2 = \fun{V_2}{W_2}$, then $\cempty \ts V_2 \tsub V_1$ and
    $\cempty \ts W_1 \tsub W_2$.
  \item\label{item:cst-inv-all} If $\; U_1 = \all{X}{K_1}{V_1}$ and
    $U_2 = \all{X}{K_2}{V_2}$, then $\cempty \ts K_2 \ksub K_1$ and
    $X \kas K_2 \ts V_1 \tsub V_2$.
  \end{enumerate}
\end{corollary}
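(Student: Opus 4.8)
The plan is to derive \Cor{cst-inv} as an immediate consequence of the equivalence between canonical subtyping in the empty context and the transitivity-free judgment $\tsTf U \tsub V$, which has already been established as \Lem{transfree-equiv}. The key observation is that the transitivity-free system in \Fig{transfree_rules} is syntax-directed on the shapes of the related types: the only rule whose conclusion has the form $\tsTf \fun{U_1}{V_1} \tsub \fun{U_2}{V_2}$ is \ruleref{TfST-Arr}, and the only rule whose conclusion has the form $\tsTf \all{X}{K_1}{V_1} \tsub \all{X}{K_2}{V_2}$ is \ruleref{TfST-All} (the rules \ruleref{TfST-Top} and \ruleref{TfST-Bot} only apply when the right-hand or left-hand side, respectively, is an extremal type, which is ruled out here). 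Since there is no transitivity rule to interpose intermediate types of a different shape, inversion is just a matter of reading off the premises of the unique applicable rule.

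Concretely, for part~\Item{cst-inv-arr} I would start from $\cempty \ts U_1 \tsub U_2$ with $U_1 = \fun{V_1}{W_1}$ and $U_2 = \fun{V_2}{W_2}$, apply \Lem{transfree-equiv} (the forward direction, completeness) to obtain $\tsTf \fun{V_1}{W_1} \tsub \fun{V_2}{W_2}$, then invert: the derivation must end in \ruleref{TfST-Arr} because no other rule produces this conclusion (in particular $\fun{V_2}{W_2} \not\alEq \Top$ and $\fun{V_1}{W_1} \not\alEq \Bot$ syntactically). The premises of \ruleref{TfST-Arr} are exactly $\cempty \ts W_2 \tsub W_1$ [sic: $\cempty \ts V_2 \tsub V_1$] and $\cempty \ts W_1 \tsub W_2$, which is the desired conclusion. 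Part~\Item{cst-inv-all} is entirely analogous: apply \Lem{transfree-equiv}, observe that only \ruleref{TfST-All} can conclude $\tsTf \all{X}{K_1}{V_1} \tsub \all{X}{K_2}{V_2}$, and read off its premises $\cempty \ts K_2 \ksub K_1$ and $X \kas K_2 \ts V_1 \tsub V_2$.

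There is essentially no obstacle at this stage, since all the hard work has been front-loaded into \Lem{transfree-equiv} and its prerequisite, top-level transitivity elimination (\ruleref{TfST-Trans}). If anything, the only point requiring a moment's care is confirming that the transitivity-free rules are genuinely deterministic on shapes, so that the inversion is unambiguous; this is immediate from a glance at \Fig{transfree_rules}, since the four rules partition the possible head shapes of the subtype/supertype pair. One should also note that the corollary is stated only for canonical subtyping; the corresponding inversion principle for \emph{declarative} top-level subtyping (\Lem{st-inv} in the main text) then follows by combining \Cor{cst-inv} with equivalence of canonical and declarative subtyping (\Lem{canon-sound-full} and \Lem{canon-complete-full}) and soundness of normalization (\Lem{nf-sound}), but that last step is outside the scope of the statement being proven here.
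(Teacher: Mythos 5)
Your proposal is correct and follows exactly the paper's route: the paper derives \Cor{cst-inv} immediately from \Lem{transfree-equiv} together with inspection of the syntax-directed transitivity-free rules, which is precisely your argument. The minor slip you flagged yourself (writing $W_2 \tsub W_1$ for the first premise of \ruleref{TfST-Arr} instead of $V_2 \tsub V_1$) is inconsequential.
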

\begin{corollary}[inversion of canonical subtyping -- contradiction]
  \label{cor:cst-inv-neg}
For any $U$, $V$, $W$ and $K$, \begin{enumerate}[nosep]
  \item\label{item:cst-inv-top-not-bot}
    $\cempty \nts \Top \tsub \Bot$,
  \item\label{item:cst-inv-arr-not-all}
    $\cempty \nts \fun{U}{V} \tsub \all{X}{K}{W}$, and
  \item\label{item:cst-inv-all-not-arr}
    $\cempty \nts \all{X}{K}{U} \tsub \fun{V}{W}$.
  \end{enumerate}
\end{corollary}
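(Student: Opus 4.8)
The plan is to deduce \Cor{cst-inv-neg} (together with its companion \Cor{cst-inv}) directly from \Lem{transfree-equiv} by inspecting the rules of the transitivity-free judgment $\tsTf U \tsub V$ in \Fig{transfree_rules}. The point is that, once we know $\cempty \ts U \tsub V$ iff $\tsTf U \tsub V$, the transitivity-free system is completely syntax-directed: each of its four rules is determined by the outermost shape of the two related types --- \ruleref{TfST-Top} when the right-hand type is $\Top$, \ruleref{TfST-Bot} when the left-hand type is $\Bot$, \ruleref{TfST-Arr} when both are arrows, and \ruleref{TfST-All} when both are universals. In particular there is no rule that applies to a neutral type, which is sound in the empty context because $\cempty$ contains no variables and hence no closed neutral type; so every closed normal proper type is one of $\Top$, $\Bot$, an arrow, or a universal, and the four rules above exhaust all cases.

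For \Cor{cst-inv-neg} I would then argue by contradiction in each case. Suppose $\cempty \ts \Top \tsub \Bot$. By \Lem{transfree-equiv} we obtain $\tsTf \Top \tsub \Bot$; but $\Top \tsub \Bot$ is neither of the form $V \tsub \Top$, nor $\Bot \tsub V$, nor an arrow-against-arrow, nor a universal-against-universal judgment, so no rule of \Fig{transfree_rules} can have been applied last --- contradiction, proving \Item{cst-inv-top-not-bot}. For \Item{cst-inv-arr-not-all}, a derivation of $\tsTf \fun{U}{V} \tsub \all{X}{K}{W}$ cannot end in \ruleref{TfST-Top} or \ruleref{TfST-Bot} (neither related type is $\Top$ or $\Bot$), in \ruleref{TfST-Arr} (the right-hand type is not an arrow), or in \ruleref{TfST-All} (the left-hand type is not a universal); so again none exists. \Item{cst-inv-all-not-arr} follows by the symmetric argument. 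The same inspection proves \Cor{cst-inv}: a derivation of $\tsTf \fun{V_1}{W_1} \tsub \fun{V_2}{W_2}$ can only end in \ruleref{TfST-Arr}, whose premises are exactly $\cempty \ts V_2 \tsub V_1$ and $\cempty \ts W_1 \tsub W_2$, and a derivation of $\tsTf \all{X}{K_1}{V_1} \tsub \all{X}{K_2}{V_2}$ can only end in \ruleref{TfST-All}, whose premises give $\cempty \ts K_2 \ksub K_1$ and $X \kas K_2 \ts V_1 \tsub V_2$.

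I do not expect any real obstacle at this stage: all of the effort has been spent upstream, in establishing that $\tsTf U \tsub V$ is equivalent to top-level canonical subtyping --- i.e.\ in the admissibility of the top-level transitivity rule \ruleref{TfST-Trans}, which is what makes \Lem{transfree-equiv} go through. The only detail in the present proof that warrants an explicit remark is the structural observation above, namely that in the empty context a normal proper type can only be $\Top$, $\Bot$, an arrow, or a universal (no closed neutrals intervene), so that the case analyses on the last rule of each transitivity-free derivation are genuinely exhaustive.
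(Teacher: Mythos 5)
Your proposal is correct and follows essentially the same route as the paper: the paper likewise derives \Cor{cst-inv-neg} (and \Cor{cst-inv}) immediately from \Lem{transfree-equiv} by inspecting the four rules of the transitivity-free judgment $\tsTf U \tsub V$, none of which can conclude $\Top \tsub \Bot$, arrow-versus-universal, or universal-versus-arrow. Your extra remark about the absence of closed neutrals is harmless but not needed, since the transitivity-free system is defined with the neutral and bound-projection rules already removed, so the case analysis on the last rule is exhaustive by construction.
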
\noindent
A bit more work is needed to also prove the declarative version of
subtyping inversion.
Again, we only state the relevant cases.
\begin{lemma}[inversion of declarative subtyping --
  embedding]\label{lem:st-inv-emb}
  Let $\; \cempty \ts A_1 \tsub A_2 \kin \kstar$.
  \begin{enumerate}[nosep]
  \item\label{item:st-inv-arr} If $\; A_1 = \fun{B_1}{C_1}$ and
    $A_2 = \fun{B_2}{C_2}$, then $\cempty \ts B_2 \tsub B_1 \kin \kstar$ and
    $\cempty \ts C_1 \tsub C_2 \kin \kstar$.
  \item\label{item:st-inv-all} If $\; A_1 = \all{X}{K_1}{B_1}$ and
    $A_2 = \all{X}{K_2}{B_2}$, then $\cempty \ts K_2 \ksub K_1$ and
    $X \kas K_2 \ts B_1 \tsub B_2 \kin \kstar$.
  \end{enumerate}
\end{lemma}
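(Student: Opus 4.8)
The plan is to derive the declarative inversion principle from its canonical counterpart, \Cor{cst-inv}, using the equivalence of canonical and declarative subtyping (\Lem[Lemmas]{canon-sound} and~\ref{lem:canon-complete}) together with soundness of normalization (\Lem{nf-sound}). Concretely, suppose $\cempty \ts A_1 \tsub A_2 \kin \kstar$ with, say, $A_1 = \fun{B_1}{C_1}$ and $A_2 = \fun{B_2}{C_2}$. By subtyping validity (\Lem{decl_validity}), $\cempty \ts A_1 \kin \kstar$ and $\cempty \ts A_2 \kin \kstar$, and by \Lem{nf-sound} these types are judgmentally equal to their normal forms, which by \Lem{nf-simp} are themselves normal. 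The completeness lemma (\Lem{canon-complete}) then yields $\cempty \tsC \nf{}{A_1} \tsub \nf{}{A_2} \kck \nf{}{\kstar}$. Since normalization is defined structurally on type formers, $\nf{}{\fun{B_i}{C_i}} = \fun{\nf{}{B_i}}{\nf{}{C_i}}$, so the normal forms are again function types; stripping the checked-kind index (which is immediate, $\nf{}{\kstar} = \kstar$, via the appropriate projection of checked subtyping to proper subtyping) gives $\cempty \tsC \fun{\nf{}{B_1}}{\nf{}{C_1}} \tsub \fun{\nf{}{B_2}}{\nf{}{C_2}}$.

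Now I would apply \Cor{cst-inv}.\Item{cst-inv-arr} to obtain $\cempty \tsC \nf{}{B_2} \tsub \nf{}{B_1}$ and $\cempty \tsC \nf{}{C_1} \tsub \nf{}{C_2}$. To lift these back to declarative subtyping statements about the original types, I would invoke soundness of the canonical rules (\Lem{canon-sound}), which gives $\cempty \tsD \nf{}{B_2} \tsub \nf{}{B_1} \kin \kstar$ and similarly for $C$, and then use \Lem{nf-sound} again (now for $B_1, B_2, C_1, C_2$, which are well-kinded as proper types by inversion of \ruleref{K-Arr}) together with \ruleref{T-Conv}-style conversion at the type level (admissible \ruleref{ST-Conv} and the equational rules from \SupSec{decl_order}) to replace each $\nf{}{B_i}$ by $B_i$ and each $\nf{}{C_i}$ by $C_i$. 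This yields $\cempty \ts B_2 \tsub B_1 \kin \kstar$ and $\cempty \ts C_1 \tsub C_2 \kin \kstar$, as required. The universal case (part~\Item{st-inv-all}) proceeds identically, using $\nf{}{\all{X}{K_i}{B_i}} = \all{X}{\nf{}{K_i}}{\nf{}{\ldots}{B_i}}$, \Cor{cst-inv}.\Item{cst-inv-all} for the spine of the argument, and soundness/completeness of canonical \emph{subkinding} (the omitted analogues of \Lem[Lemmas]{canon-sound} and~\ref{lem:canon-complete} for subkinding) to move the premise $\cempty \ts K_2 \ksub K_1$ between the two presentations; the body subtyping is handled under the context $X \kas K_2$ by the same conversion argument, additionally using context conversion (\Cor{decl_ctx_conv}) to switch between $X \kas K_2$ and $X \kas \nf{}{K_2}$.

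The main obstacle I anticipate is \emph{not} the logical chain itself but the bookkeeping around normalization: one must check that normalization genuinely preserves the outermost type former (so that an arrow normalizes to an arrow and a universal to a universal, never collapsing via a redex) — this follows from the structural clauses of $\nfRaw$ in \Fig{normalization} since the heads $\fun{}{}$ and $\all{X}{}{}$ are not applications — and one must carefully track the checked kind annotation, confirming that $\cempty \tsC U \tsub V \kck \kstar$ implies $\cempty \tsC U \tsub V$ (an instance of inverting \ruleref{CST-Intv}, which holds because $\kstar = \Bot \intv \Top$ and the premise of \ruleref{CST-Intv} already supplies the proper-subtyping judgment). A secondary subtlety is the repeated back-and-forth between a type and its normal form: each use of \Lem{nf-sound} produces a judgmental \emph{equality}, and one needs the admissible conversion rules (\ruleref{ST-Conv}, \ruleref{TEq-Sym}, \ruleref{TEq-Trans} from \SupSec{decl_order}) plus validity to splice these equalities into subtyping derivations without accumulating spurious well-formedness side conditions. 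All of this is routine given the machinery already established, so the proof is short; the reference in the excerpt to ``a bit more work'' is precisely this normalization-bookkeeping layer on top of \Cor{cst-inv}.
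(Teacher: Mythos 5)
Your proposal is correct and follows essentially the same route as the paper's proof: validity plus generation of kinding for arrows/universals, soundness of normalization, completeness of canonical subtyping, inversion of the canonical relation (\Cor{cst-inv}), and soundness back, splicing out the normal forms via the judgmental equalities from \Lem{nf-sound} and transitivity/conversion. The only cosmetic difference is that you pass through checked subtyping at $\kstar$ and invert \ruleref{CST-Intv}, whereas the paper's full completeness lemma already supplies proper subtyping directly when the kind is an interval; this does not change the argument.
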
\noindent

The proof makes use of the following generation lemma for well-kinded
arrow and universal types, which is proven by induction on kinding
derivations.
\begin{lemma}[generation of kinding for arrows and universals]
  \label{lem:decl-arr-all-gen}
  The following are admissible.
\end{lemma}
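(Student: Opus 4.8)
The statement records the expected generation (inversion) principles for arrow and universal kinding: from $\Gamma \ts \fun{A}{B} \kin K$ one should be able to recover $\Gamma \ts A \kin \kstar$ and $\Gamma \ts B \kin \kstar$ (and hence $\Gamma \ts \fun{A}{B} \kin \kstar$), and from $\Gamma \ts \all{X}{J}{A} \kin K$ one should recover $\kindD{\Gamma}{J}$ and $\typeD{\Gamma, X \kas J}{A}$ (and hence $\Gamma \ts \all{X}{J}{A} \kin \kstar$). The plan is to prove all of these admissible rules simultaneously by induction on the derivation of the kinding hypothesis, doing a case analysis on its final rule.

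The key observation is that very few kinding rules can produce a judgment whose subject is syntactically an arrow $\fun{A}{B}$: only \ruleref{K-Arr}, \ruleref{K-Sing} and \ruleref{K-Sub} apply, since \ruleref{K-Var}, \ruleref{K-Top}, \ruleref{K-Bot}, \ruleref{K-Abs}, \ruleref{K-All} and \ruleref{K-App} all have a different head constructor in subject position. Dually, a judgment with subject $\all{X}{J}{A}$ can only come from \ruleref{K-All}, \ruleref{K-Sing} or \ruleref{K-Sub}. So the induction has: a base case (\ruleref{K-Arr}, resp.\ \ruleref{K-All}) in which the premises we are after are literally the premises of the rule; a case for \ruleref{K-Sing}, where the subject is unchanged and the kind is narrowed to a singleton, so the IH applied to the single premise immediately yields the conclusion; and a case for \ruleref{K-Sub}, where the subject is again unchanged while the kind is widened along one subkinding step, so the IH applied to the first premise suffices. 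In both inductive cases the subject is fixed and the IH is invoked on an immediate sub-derivation, so well-foundedness is clear. Re-deriving the ``$\kstar$'' conclusions from the extracted component judgments is then one application of \ruleref{K-Arr} (resp.\ \ruleref{K-All}); any auxiliary well-formedness premises that the extended rules carry are supplied by validity (\Lem{decl_validity}) and context validity.

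The universal case needs slightly more bookkeeping than the arrow case because the codomain judgment lives under the binding $X \kas J$, so a little care is required to thread that binding through the \ruleref{K-Sing} and \ruleref{K-Sub} cases; but since neither of those rules touches the subject, $J$ and the context extension are unchanged and the IH applies directly. It is worth noting that when this lemma is later invoked — for instance in the proof of \Lem{st-inv-emb} — one will additionally need context narrowing (\Cor{decl_narrow}) to move the codomain judgment from $X \kas J$ to a subkind $X \kas K_2$; that step is external to this lemma and not an obstacle here.

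Honestly, there is no serious difficulty in this lemma: arrow and universal type formers are syntactically unambiguous, and the only ``noise'' on top of their introduction rules is the kind-adjustment pair \ruleref{K-Sing}/\ruleref{K-Sub}, both of which leave the subject intact. The one mild point to be careful about is that the proof is carried out against the \emph{extended} declarative system (the one with the gray validity conditions), so after re-applying \ruleref{K-Arr}/\ruleref{K-All} in the ``$\kstar$'' conclusions one should make sure to re-establish those conditions from the already-available validity facts; this is entirely routine.
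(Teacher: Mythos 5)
Your proof is correct and follows essentially the same route as the paper, which simply proves the lemma by induction on kinding derivations with the same case analysis (\ruleref{K-Arr}/\ruleref{K-All} as the base cases, \ruleref{K-Sing} and \ruleref{K-Sub} as the only other rules whose conclusion can have an arrow or universal in subject position). Your strengthening of the hypothesis to an arbitrary kind $K$ is exactly the generalization the induction needs to get through the \ruleref{K-Sub} case, so nothing is missing.
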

\begin{center}
  \infruleSimp
  {\Gamma \ts \fun{A}{B} \kin \kstar}
  {\Gamma \ts A \kin \kstar  \andalso  \Gamma \ts B \kin \kstar}
  \hspace{3em}
  \infruleSimp
  {\Gamma \ts \all{X}{K}{A} \kin \kstar}
  {\kindD{\Gamma}{K}  \andalso  \Gamma, X \kas K \ts A \kin \kstar}
\end{center}

\begin{proof}[Proof of \Lem{st-inv-emb}]
  By completeness of canonical subtyping and soundness of
  normalization.  We show only the first part, the second is
  analogous.  Assume
  $\cempty \ts \fun{B_1}{C_1} \tsub \fun{B_2}{C_2} \kin \kstar$.  Then by validity
  of declarative subtyping (\Lem{decl_validity}), generation of
  kinding for arrow types, soundness of normalization (\Lem{nf-sound})
  and completeness of canonical subtyping, we have
  \begin{gather*}
    \cempty \tsD B_1 \teq \nf{\nfCtx{\Gamma}}{B_1} \kin \kstar \qquad \qquad
    \cempty \tsD C_1 \teq \nf{\nfCtx{\Gamma}}{C_1} \kin \kstar \\
    \cempty \tsD B_2 \teq \nf{\nfCtx{\Gamma}}{B_2} \kin \kstar \qquad \qquad
    \cempty \tsD C_2 \teq \nf{\nfCtx{\Gamma}}{C_2} \kin \kstar \\
    \cempty \tsC
    \fun{\nf{\nfCtx{\Gamma}}{B_1}}{\nf{\nfCtx{\Gamma}}{C_1}}
    \tsub
    \fun{\nf{\nfCtx{\Gamma}}{B_2}}{\nf{\nfCtx{\Gamma}}{C_2}}
  \end{gather*}
  By inversion and soundness of canonical subtyping, it follows that
  \begin{gather*}
    \cempty \tsD \; B_2 \; \teq \; \nf{\nfCtx{\Gamma}}{B_2} \; \tsub \;
    \nf{\nfCtx{\Gamma}}{B_1} \; \teq \; B_1 \; \kin \; \kstar
    \quad \text{ and}\\
    \cempty \tsD \; C_1 \; \teq \; \nf{\nfCtx{\Gamma}}{C_1} \; \tsub \;
    \nf{\nfCtx{\Gamma}}{C_2} \; \teq \; C_2 \; \kin \; \kstar.
    \qedhere
  \end{gather*}
\end{proof}

We also prove a declarative counterpart of \Cor{cst-inv-neg}, which is
used in the proof of the progress theorem below.
\begin{lemma}[inversion of declarative subtyping --
  contradiction]\label{lem:st-inv-neg}
For any $A$, $B$, $C$ and $K$, \begin{enumerate}[nosep]
  \item\label{item:st-inv-top-not-bot}
    $\cempty \nts \Top \tsub \Bot$,
  \item\label{item:st-inv-arr-not-all}
    $\cempty \nts \fun{A}{B} \tsub \all{X}{K}{C}$, and
  \item\label{item:st-inv-all-not-arr}
    $\cempty \nts \all{X}{K}{A} \tsub \fun{B}{C}$.
  \end{enumerate}
\end{lemma}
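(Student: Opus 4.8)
The statement to prove is \Lem{st-inv-neg}: in the empty context, $\Top$ is not a subtype of $\Bot$, an arrow type is never a subtype of a universal, and a universal is never a subtype of an arrow. The plan is to reduce these negative statements about \emph{declarative} subtyping to the corresponding negative statements about \emph{canonical} subtyping, for which we already have \Cor{cst-inv-neg} (parts~\Item{cst-inv-top-not-bot}--\Item{cst-inv-all-not-arr}). Suppose toward a contradiction that one of the three declarative judgments holds, say $\cempty \ts \Top \tsub \Bot \kin \kstar$. By validity of declarative subtyping (\Lem{decl_validity}) and soundness of normalization (\Lem{nf-sound}), both $\Top$ and $\Bot$ are judgmentally equal to their normal forms; since $\Top$ and $\Bot$ are already in normal form, $\nf{}{\Top} = \Top$ and $\nf{}{\Bot} = \Bot$. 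Then by completeness of the canonical rules (\Lem{canon-complete}) applied to the (normalized) declarative subtyping derivation, we obtain $\cempty \tsC \Top \tsub \Bot \kck \kstar$, and hence $\cempty \tsC \Top \tsub \Bot$ via \ruleref{CST-Intv}/kind-directed inspection. This directly contradicts \Cor{cst-inv-neg}.\Item{cst-inv-top-not-bot}.

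For the second and third parts the argument is entirely analogous. Assume $\cempty \ts \fun{A}{B} \tsub \all{X}{K}{C} \kin \kstar$. The subtypes here are not syntactically normal, so we must first normalize: by validity, $\cempty \ts \fun{A}{B} \kin \kstar$ and $\cempty \ts \all{X}{K}{C} \kin \kstar$, and by \Lem{nf-sound} these types are equal to $\nf{}{\fun{A}{B}} = \fun{\nf{}{A}}{\nf{}{B}}$ and $\nf{}{\all{X}{K}{C}} = \all{X}{\nf{}{K}}{\nf{}{C}}$, respectively (using the defining clauses of $\nfRaw$ from \Fig{normalization}; note that normalization preserves the outermost type constructor for arrows and universals). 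Using type-equation validity, context conversion and kind conversion (as laid out in the discussion preceding \Lem{canon-complete}), the original subtyping judgment yields a corresponding judgment on normal forms $\nfCtx{\cempty} \ts \fun{\nf{}{A}}{\nf{}{B}} \tsub \all{X}{\nf{}{K}}{\nf{}{C}} \kin \kstar$; since $\nfCtx{\cempty} = \cempty$, completeness of canonical subtyping (\Lem{canon-complete}) gives $\cempty \tsC \fun{\nf{}{A}}{\nf{}{B}} \tsub \all{X}{\nf{}{K}}{\nf{}{C}} \kck \kstar$, hence $\cempty \tsC \fun{\nf{}{A}}{\nf{}{B}} \tsub \all{X}{\nf{}{K}}{\nf{}{C}}$, contradicting \Cor{cst-inv-neg}.\Item{cst-inv-arr-not-all}. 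The third part is symmetric, using \Cor{cst-inv-neg}.\Item{cst-inv-all-not-arr}.

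There is really no hard part here: this lemma is a straightforward corollary of the machinery already assembled — soundness of normalization, validity, the equivalence of the declarative and canonical systems, and \Cor{cst-inv-neg}. The one point requiring a modicum of care is verifying that normalization does not change the outermost shape of the types involved: that $\Top$ and $\Bot$ are fixed points of $\nfRaw$ and that $\nfRaw$ commutes with $\fun{-}{-}$ and $\all{X}{-}{-}$ on the outermost constructor. These facts are immediate from the definition of $\nfRaw$ in \Fig{normalization}. Thus the proof is a short chain of already-established lemmas, and the statement follows. Full details are given in \SupSec{canon-inversion-full}.
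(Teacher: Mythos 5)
Your proposal is correct and follows essentially the same route as the paper, whose proof is precisely ``by completeness of canonical subtyping, then by contradiction using \Cor{cst-inv-neg}''; your observation that normalization preserves the outermost constructors ($\Top$, $\Bot$, arrows, universals) and the step from checked to proper subtyping via \ruleref{CST-Intv} are exactly the implicit details. The detour through soundness of normalization and context/kind conversion is unnecessary here (completeness applies directly to the declarative judgment), but it does no harm.
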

\begin{proof}
  By completeness of canonical subtyping, then by contradiction using
  \Cor{cst-inv-neg}.
\end{proof}

\subsubsection{Type Safety Revisited}
\label{sec:type-safety-done}

We are finally ready to prove type safety of \FOmegaInt{}.
The proof of weak preservation requires a standard generation lemma
for term and type abstractions.
\begin{lemma}[generation of typing for term and type abstraction]
  \label{lem:decl_abs_tabs_gen}
  ~
  \begin{enumerate}
  \item\label{item:decl_abs_gen} If
    $\; \Gamma \ts \lam{x}{A}{t} \tin B$, then
    $\Gamma, x \tas A \ts t \tin C$ and
    $\, \Gamma \ts \fun{A}{C} \tsub B \kin \kstar$ for some $C$.
  \item\label{item:decl_tabs_gen} If
    $\; \Gamma \ts \Lam{X}{K}{t} \tin A$, then
    $\Gamma, X \kas K \ts t \tin B$ and
    $\, \Gamma \ts \all{X}{K}{B} \tsub A \kin \kstar$ for some $B$.
  \end{enumerate}
\end{lemma}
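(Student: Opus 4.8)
The plan is to prove both parts of \Lem{decl_abs_tabs_gen} by induction on the typing derivation $\Gamma \ts \lam{x}{A}{t} \tin B$ (respectively $\Gamma \ts \Lam{X}{K}{t} \tin A$). Since the subject is a $\lambda$-abstraction, only two typing rules can have produced the derivation: the introduction rule (\ruleref{T-Abs} for term abstraction, \ruleref{T-TAbs} for type abstraction) and the subsumption rule \ruleref{T-Sub}. This gives a two-case argument in each part.

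First I would handle part~\Item{decl_abs_gen}. In the base case where the last rule is \ruleref{T-Abs}, the premises directly give $\Gamma, x \tas A \ts t \tin B'$ for the codomain $B'$ such that $B = \fun{A}{B'}$; we take $C = B'$ and conclude $\Gamma \ts \fun{A}{C} \tsub B \kin \kstar$ by \ruleref{ST-Refl}, after noting via typing validity (\Lem{decl_validity}) that $\Gamma \ts \fun{A}{B'} \kin \kstar$. In the inductive case where the last rule is \ruleref{T-Sub}, we have $\Gamma \ts \lam{x}{A}{t} \tin B''$ and $\Gamma \ts B'' \tsub B \kin \kstar$ for some $B''$; applying the IH to the first premise yields $\Gamma, x \tas A \ts t \tin C$ and $\Gamma \ts \fun{A}{C} \tsub B'' \kin \kstar$, and we conclude by \ruleref{ST-Trans}. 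Part~\Item{decl_tabs_gen} is completely analogous, replacing \ruleref{T-Abs} by \ruleref{T-TAbs}, the arrow by the universal type $\all{X}{K}{B}$, and using the same combination of \ruleref{ST-Refl}, typing validity and \ruleref{ST-Trans}.

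I expect there to be essentially no real obstacle here: this is a textbook generation (inversion-of-typing) lemma, and the only mild subtlety is ensuring that the required subtyping judgments carry the kind $\kstar$, which follows from typing validity together with the subtyping kinding premises already present in \ruleref{T-Sub} and the transitivity rule. The one point worth a sentence of care is that the domain annotation $A$ on the $\lambda$ is fixed across the whole derivation—subsumption never changes the subject term, so the same $A$ appears in the context binding throughout—which is why the IH can be applied with the same context $\Gamma, x \tas A$. No normalization, canonical system, or inversion-of-subtyping machinery is needed; this lemma sits strictly before those developments and is used only to feed the eventual weak preservation proof.
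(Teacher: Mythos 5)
Your proof is correct and matches the intended argument: the paper itself does not spell out a proof (it labels the lemma ``standard'' and delegates to the mechanization), and the standard argument is precisely your induction on the typing derivation, splitting on \textsc{T-Abs} (resp.\ \textsc{T-TAbs}) versus \textsc{T-Sub}, closing the base case with \textsc{ST-Refl} plus validity/kind formation and the subsumption case with \textsc{ST-Trans}. No issues to report.
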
\noindent
Recall that weak preservation (\Prop{weak-pres}) states that
CBV~reduction preserves the types of closed terms, i.e.\ if
$\ts t \tin A$ and $t \cbvstep t'$, then $\ts t' \tin A$.
\begin{proof}[Proof of \Prop{weak-pres}]
  The proof is by induction on typing derivations and case analysis on
  CBV reduction rules.
The interesting cases are those where $\beta$-contractions occur.
We describe the case of~\ruleref{T-App}.
The corresponding case for~\ruleref{T-TApp} is similar.
We have $t = \app{(\lam{x}{B}{s})}{v}$ with
  $\ts \lam{x}{B}{s} \tin \fun{C}{A}$ and $\ts v \tin C$ for some $B$
  and $C$.
By generation of term abstractions
  (\Lem{decl_abs_tabs_gen}.\Item{decl_abs_gen}),
  $x \tas B \ts s \tin D$ and
  $\ts \fun{B}{D} \tsub \fun{C}{A} \kin \kstar$ for some $D$.
By inversion of subtyping (\Lem{st-inv-emb}.\Item{st-inv-all}), we have
  $\ts C \tsub B \kin \kstar$ and $\ts D \tsub A \kin \kstar$ and
  hence $\ts v \tin B$ and $x \tas B \ts s \tin A$ by subsumption.
To conclude the proof we need to show that
  $\ts \subst{s}{x}{v} \tin A$, which follows from the substitution
  lemma~(\Lem{decl_subst}).
\end{proof}

This establishes the first half of type safety.  For the second half,
progress, we first need to prove a standard canonical forms lemma.
\begin{lemma}[canonical forms]\label{lem:can-forms}
  Let $v$ be a closed, well-typed value.
  \begin{enumerate}
  \item\label{item:can-arr} If $\; \cempty \ts v \in \fun{A}{B}$, then
    $v = \lam{x}{C}{t}$ for some $C$ and $t$.
  \item\label{item:can-all} If $\; \cempty \ts v \in \all{X}{K}{A}$,
    then $v = \Lam{X}{J}{t}$ for some $J$ and $t$.
  \end{enumerate}
\end{lemma}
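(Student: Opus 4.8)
<br />

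The plan is to prove the canonical forms lemma by a standard argument combining the generation lemma for values with the contradiction half of subtyping inversion. First I would observe that a closed value $v$ is, by the grammar of terms, either a term abstraction $\lam{x}{C}{t}$ or a type abstraction $\Lam{X}{J}{t}$. So in each case of the lemma, exactly one of these two shapes gives the desired conclusion, and I need to rule out the other shape.

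For part~\Item{can-arr}, assume $\cempty \ts v \tin \fun{A}{B}$. If $v = \lam{x}{C}{t}$ we are done. Otherwise $v = \Lam{X}{J}{t}$, and by generation of type abstractions (\Lem{decl_abs_tabs_gen}.\Item{decl_tabs_gen}) we obtain $X \kas J \ts t \tin D$ and $\cempty \ts \all{X}{J}{D} \tsub \fun{A}{B} \kin \kstar$ for some $D$. But \Lem{st-inv-neg}.\Item{st-inv-all-not-arr} states that $\cempty \nts \all{X}{J}{D} \tsub \fun{A}{B}$, a contradiction; hence this case cannot arise. Symmetrically, for part~\Item{can-all}, assume $\cempty \ts v \tin \all{X}{K}{A}$. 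If $v = \Lam{X}{J}{t}$ we are done; otherwise $v = \lam{x}{C}{t}$, and generation of term abstractions (\Lem{decl_abs_tabs_gen}.\Item{decl_abs_gen}) yields $\cempty \ts \fun{C}{D} \tsub \all{X}{K}{A} \kin \kstar$ for some $D$, contradicting \Lem{st-inv-neg}.\Item{st-inv-arr-not-all}.

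The only subtlety is that the generation lemma \Lem{decl_abs_tabs_gen} delivers a subtyping judgment \emph{from} the natural type of the syntactic form \emph{to} the ascribed type, which is precisely the orientation needed to invoke the negative inversion results of \Lem{st-inv-neg}. Since these inversion results are stated only in the empty context, it is essential that $v$ is closed, which is part of the hypothesis. The main obstacle, such as it is, is simply making sure the orientation of the subtyping inequation matches the statement of \Lem{st-inv-neg}; once that is checked, the proof is a routine two-case analysis with no further computation required. No use of canonical-system machinery beyond what has already been packaged into \Lem{st-inv-neg} is needed here.
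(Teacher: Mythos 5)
Your proof is correct and follows essentially the same route as the paper's: both hinge on the generation lemma for term/type abstractions (\Lem{decl_abs_tabs_gen}) combined with the negative inversion results of \Lem{st-inv-neg} in the empty context. The only difference is cosmetic — the paper cases on the final typing rule (\ruleref{T-Abs}, \ruleref{T-TAbs}, \ruleref{T-Sub}) before invoking generation, whereas you apply generation directly to the wrong-shaped value, which already packages that case analysis.
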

\begin{proof}
  Separately for the two parts; each by case analysis, first on $v$,
  then on the final typing rule used to derive the respective premise.
  Since the only values are abstractions, the relevant sub-cases
  are~\ruleref{T-Abs}, \ruleref{T-TAbs} and~\ruleref{T-Sub}.  The
  sub-cases for~\ruleref{T-Abs} and~\ruleref{T-TAbs} are immediate.
  In the sub-cases for~\ruleref{T-Sub}, we first use the generation
  lemma for abstractions (\Lem{decl_abs_tabs_gen}), then dismiss
  impossible sub-cases using \Lem{st-inv-neg}.
\end{proof}

Thanks to the canonical forms lemma, the proof of the progress theorem
is now entirely standard.
\begin{theorem}[progress]\label{thm:progress}
  If $\; \ts t \tin A$, then either $t$ is a value, or $t \cbvstep t'$
  for some term $t'$.
\end{theorem}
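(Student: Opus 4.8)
The plan is to follow the standard syntactic progress argument for a call-by-value calculus à la \citet{WrightF94ic}, adapted to \FOmegaInt{}, with the subtyping-inversion machinery of \Lem{st-inv} (in the form of its negative part, \Lem{st-inv-neg}) and the canonical forms lemma (\Lem{can-forms}) doing the work in the application cases. First I would proceed by induction on the derivation of $\; \ts t \tin A$. The cases for \ruleref{T-Var} is vacuous in the empty context. The cases for \ruleref{T-Abs} and \ruleref{T-TAbs} are immediate: $\lam{x}{A}{t}$ and $\Lam{X}{K}{t}$ are values. The case for \ruleref{T-Sub} is immediate from the induction hypothesis, since neither the ``is a value'' nor the ``can step'' alternative depends on the type. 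This leaves the two elimination rules \ruleref{T-App} and \ruleref{T-TApp}.

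For \ruleref{T-App}, I would have $t = \app{s}{u}$ with $\; \ts s \tin \fun{B}{A}$ and $\; \ts u \tin B$. Applying the induction hypothesis to $s$: if $s \cbvstep s'$, then $\app{s}{u} \cbvstep \app{s'}{u}$ by the CBV congruence rules, so we are done. Otherwise $s$ is a value, and the induction hypothesis on $u$ gives either a step $u \cbvstep u'$ (whence $\app{s}{u} \cbvstep \app{s}{u'}$) or that $u$ is a value too. In the remaining subcase, both $s$ and $u$ are closed well-typed values, and $s$ has an arrow type $\fun{B}{A}$; by canonical forms (\Lem{can-forms}, part~\Item{can-arr}) $s = \lam{x}{C}{t'}$ for some $C$ and $t'$, so $\app{s}{u}$ is a $\beta$-redex and steps by the CBV $\beta$-rule. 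The case for \ruleref{T-TApp} is completely analogous: $t = \app{s}{B}$ with $\; \ts s \tin \all{X}{K}{A'}$ and $\; \ts B \kin K$; if $s$ steps, use the congruence rule; if $s$ is a value, canonical forms (\Lem{can-forms}, part~\Item{can-all}) gives $s = \Lam{X}{J}{t'}$, so $\app{s}{B}$ is a type-$\beta$-redex and steps.

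The only subtlety — and the reason progress holds for closed terms even though preservation fails in open contexts — is hidden inside the canonical forms lemma rather than in the progress proof itself: its proof must rule out the case \ruleref{T-Sub} coercing, say, a $\lambda$-abstraction into a universal type, which is exactly where \Lem{st-inv-neg} (no closed $\fun{A}{B} \tsub \all{X}{K}{C}$ and vice versa) is invoked. So in practice the ``main obstacle'' for progress has already been discharged: it is the subtyping inversion chain of \Sec{canon-inversion} culminating in \Lem{st-inv}, which is what makes \Lem{can-forms} go through. Given that lemma and the canonical forms lemma, the progress proof is routine; I would simply state it carefully to make sure every CBV evaluation-context congruence rule is matched and that the empty context genuinely kills the variable case.
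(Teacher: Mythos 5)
Your proposal matches the paper's proof: a routine induction on the typing derivation in which the \ruleref{T-App} and \ruleref{T-TApp} cases are discharged by the canonical forms lemma (\Lem{can-forms}), whose own proof is where the subtyping-inversion contradictions (\Lem{st-inv-neg}) are invoked to dismiss the \ruleref{T-Sub} subcases. Your fleshed-out case analysis and your observation that the real work is hidden in canonical forms rather than in progress itself are exactly how the paper argues it.
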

\begin{proof}
  By routine induction on typing derivations.  The cases
  for~\ruleref{T-App} and~\ruleref{T-TApp} use the canonical forms
  lemma (\Lem{can-forms}).
\end{proof}

\section{Encoding Custom Subtyping Theories}
\label{sec:custom_sub_theories}

\newcommand{\cVar}[1]{\bar{#1}}
\newcommand{\rVar}[1]{Y_{#1}}
\newcommand{\andVar}{\cVar{\wedge}}
\newcommand{\fixVar}{\cVar{\mu}}

\emph{\textbf{Disclaimer.} The following examples have not been
  mechanized in Agda.}
\medskip

\noindent
Recall that bindings of the form $X \kin A \intv B$ represent
first-class \emph{type inequations} $A \tsub B$ in \FOmegaInt{}
because, in a context containing such a binding, we have $A \tsub X$
and $X \tsub B$, and hence --~by transitivity of subtyping~--
$A \tsub B$.
Interval kinds thus provide us with a mechanism for \emph{(in)equality
  reflection}, \ie a way to extend the subtyping relation via
assumptions made at the term- or type-level (via type abstractions).
Among other things, this allows us to postulate type operators with
associated subtyping rules through type variable bindings.
For example, we may postulate intersection types $A \wedge B$ by
assuming an abstract binary type operator
$\andVar \kin \fun{\kstar}{\fun{\kstar}{\kstar}}$ and the usual typing
rules for intersections as abstract type inequations:
\begin{align*}
  &\rVar{\tsub L} \kin \dfun{X_1, X_2}{\kstar}{(\appp{\andVar}{X_1}{X_2}) \intv X_1},
  \qquad \rVar{\tsub R} \kin \dfun{X_1, X_2}{\kstar}{(\appp{\andVar}{X_1}{X_2}) \intv X_2}, \\
  &\rVar{\tsub\wedge} \kin \dfun{Z, X_1, X_2}{\kstar}{\fun{Z \intv X_1}{\fun{Z \intv X_2}{Z \intvP (\appp{\andVar}{X_1}{X_2})}}}.
\end{align*}
where we abbreviated dependent arrow kinds
$\dfun{X_1}{J}{\dfun{X_2}{J}{\fun{\dotsm}{\dfun{X_n}{J}{K}}}}$ with
multiple parameters
as $\dfun{X_1, X_2, \dotsc, X_n}{J}{K}$ for readability.
The two variables $\rVar{\tsub L}$ and $\rVar{\tsub R}$ represent,
respectively, the left- and right-hand projection rules for
intersections ($A \wedge B \tsub A$ and $A \wedge B \tsub B$).
The assumption $\rVar{\tsub\wedge}$ encodes the fact that
intersections are greatest lower bounds, i.e.\ that
$A \tsub B \wedge C$
when $A \tsub B$ and $A \tsub C$.
To see how this last ``rule'' can be put to work, let $A$, $B$ and $C$
be proper types and assume that $A \tsub B$ and $A \tsub C$.  Then we
also have $A \kin A \intv B$ and $A \kin A \intv C$, and hence
$\appp{\apppp{\rVar{\tsub\wedge}}{A}{B}{C}}{A}{A} \kin A \intvP
(\appp{\andVar}{B}{C})$, from which we conclude
$A \tsub \appp{\andVar}{B}{C}$.

We can also postulate recursive inequations.  For example, the
following bindings encode an equi-recursive type constructor
$\mu \kin \fun{(\fun{\kstar}{\kstar})}{\kstar}$ that, when applied to
a unary operator $A$, represents the fixpoint $\app{\mu}{A}$ of $A$.
\begin{align*}
  &\fixVar \kin \fun{(\fun{\kstar}{\kstar})}{\kstar},\\
  &\rVar{\tsub L} \kin \dfun{X}{\fun{\kstar}{\kstar}}{(\app{\fixVar}{X}) \intvP (\app{X}{(\app{\fixVar}{X})})}, &
  &\rVar{\tsub R} \kin \dfun{X}{\fun{\kstar}{\kstar}}{(\app{X}{(\app{\fixVar}{X})}) \intvP (\app{\fixVar}{X})}.
\end{align*}
Together, the assumptions $\rVar{\tsub L}$ and $\rVar{\tsub R}$ say
that $\app{\mu}{A} = \app{A}{(\app{\mu}{A})}$, i.e.\ that
$\app{\mu}{A}$ is a fixpoint of $A$.

The above examples are only possible because we do not impose any
\emph{consistency constraints} on the bounds of intervals.
That is, an interval kind $A \intv B$ is well-formed, irrespective of
whether $A \tsub B$ is actually provable or not.
For example, the signature of the abstract intersection operator
$\andVar$ above does not tell us anything about how the type
application $\appp{\andVar}{A}{B}$ is related to its first argument
$A$, nor does the abstract left projection inequality $\rVar{\tsub L}$
impose any constraints on its parameters $X_1$ and $X_2$.
It is therefore impossible to say anything about the relationship of
the bounds $\appp{\andVar}{X_1}{X_2}$ and $X_1$ of the codomain of
$\rVar{\tsub L}$, other than that they are both proper types.
We can certainly not prove that $\appp{\andVar}{X_1}{X_2} \tsub X_1$
in general.

\section{Reported Scala~3 Issues}
\label{sec:scala_issues}

During the development of \FOmegaInt{} we discovered and reported the
following issues to the Scala~3 bug tracker.
\begin{itemize}
\item \url{https://github.com/lampepfl/dotty/issues/2887}
\item \url{https://github.com/lampepfl/dotty/issues/6320}
\item \url{https://github.com/lampepfl/dotty/issues/6499}
\item \url{https://github.com/lampepfl/dotty/issues/9691}
\item \url{https://github.com/lampepfl/dotty/issues/9695}
\item \url{https://github.com/lampepfl/dotty/issues/9697}
\end{itemize}
 
\bibliography{paper}

%%% -*-BibTeX-*-
%%% Do NOT edit. File created by BibTeX with style
%%% ACM-Reference-Format-Journals [18-Jan-2012].

\begin{thebibliography}{49}

%%% ====================================================================
%%% NOTE TO THE USER: you can override these defaults by providing
%%% customized versions of any of these macros before the \bibliography
%%% command.  Each of them MUST provide its own final punctuation,
%%% except for \shownote{}, \showDOI{}, and \showURL{}.  The latter two
%%% do not use final punctuation, in order to avoid confusing it with
%%% the Web address.
%%%
%%% To suppress output of a particular field, define its macro to expand
%%% to an empty string, or better, \unskip, like this:
%%%
%%% \newcommand{\showDOI}[1]{\unskip}   % LaTeX syntax
%%%
%%% \def \showDOI #1{\unskip}           % plain TeX syntax
%%%
%%% ====================================================================

\ifx \showCODEN    \undefined \def \showCODEN     #1{\unskip}     \fi
\ifx \showDOI      \undefined \def \showDOI       #1{#1}\fi
\ifx \showISBNx    \undefined \def \showISBNx     #1{\unskip}     \fi
\ifx \showISBNxiii \undefined \def \showISBNxiii  #1{\unskip}     \fi
\ifx \showISSN     \undefined \def \showISSN      #1{\unskip}     \fi
\ifx \showLCCN     \undefined \def \showLCCN      #1{\unskip}     \fi
\ifx \shownote     \undefined \def \shownote      #1{#1}          \fi
\ifx \showarticletitle \undefined \def \showarticletitle #1{#1}   \fi
\ifx \showURL      \undefined \def \showURL       {\relax}        \fi
% The following commands are used for tagged output and should be
% invisible to TeX
\providecommand\bibfield[2]{#2}
\providecommand\bibinfo[2]{#2}
\providecommand\natexlab[1]{#1}
\providecommand\showeprint[2][]{arXiv:#2}

\bibitem[\protect\citeauthoryear{Abel}{Abel}{2008}]%
        {Abel08mscs}
\bibfield{author}{\bibinfo{person}{Andreas Abel}.}
  \bibinfo{year}{2008}\natexlab{}.
\newblock \showarticletitle{Polarized Subtyping for Sized Types}.
\newblock \bibinfo{journal}{\emph{Mathematical Structures in Computer Science}}
   \bibinfo{volume}{18} (\bibinfo{date}{10} \bibinfo{year}{2008}),
  \bibinfo{pages}{797--822}.
\newblock
Issue Special Issue 05.
\showISSN{1469-8072}
\urldef\tempurl%
\url{https://doi.org/10.1017/S0960129508006853}
\showDOI{\tempurl}


\bibitem[\protect\citeauthoryear{Abel and Rodriguez}{Abel and
  Rodriguez}{2008}]%
        {AbelR08csl}
\bibfield{author}{\bibinfo{person}{Andreas Abel} {and} \bibinfo{person}{Dulma
  Rodriguez}.} \bibinfo{year}{2008}\natexlab{}.
\newblock \showarticletitle{Syntactic Metatheory of Higher-Order Subtyping}. In
  \bibinfo{booktitle}{\emph{Proceedings of the 22nd International Workshop on
  Computer Science Logic (CSL 2008), 17th Annual Conference of the EACSL,
  Bertinoro, Italy}} \emph{(\bibinfo{series}{LNCS},
  Vol.~\bibinfo{volume}{5213})}, \bibfield{editor}{\bibinfo{person}{Michael
  Kaminski} {and} \bibinfo{person}{Simone Martini}} (Eds.).
  \bibinfo{publisher}{Springer}, \bibinfo{address}{Berlin, Heidelberg},
  \bibinfo{pages}{446--460}.
\newblock
\showISBNx{978-3-540-87531-4}
\urldef\tempurl%
\url{https://doi.org/10.1007/978-3-540-87531-4_32}
\showDOI{\tempurl}


\bibitem[\protect\citeauthoryear{Adams}{Adams}{2006}]%
        {Adams06jfp}
\bibfield{author}{\bibinfo{person}{Robin Adams}.}
  \bibinfo{year}{2006}\natexlab{}.
\newblock \showarticletitle{Pure type systems with judgemental equality}.
\newblock \bibinfo{journal}{\emph{Journal of Functional Programming}}
  \bibinfo{volume}{16}, \bibinfo{number}{2} (\bibinfo{year}{2006}),
  \bibinfo{pages}{219--246}.
\newblock
\urldef\tempurl%
\url{https://doi.org/10.1017/S0956796805005770}
\showDOI{\tempurl}


\bibitem[\protect\citeauthoryear{Amin}{Amin}{2016}]%
        {Amin16thesis}
\bibfield{author}{\bibinfo{person}{Nada Amin}.}
  \bibinfo{year}{2016}\natexlab{}.
\newblock \emph{\bibinfo{title}{Dependent {O}bject {T}ypes}}.
\newblock \bibinfo{thesistype}{Ph.D. Dissertation}. \bibinfo{school}{School of
  Computer and Communication Sciences, \'Ecole polytechnique f\'ed\'erale de
  Lausanne}, \bibinfo{address}{Lausanne, Switzerland}.
\newblock
\urldef\tempurl%
\url{https://doi.org/10.5075/epfl-thesis-7156}
\showDOI{\tempurl}
\newblock
\shownote{EPFL thesis no.~7156.}


\bibitem[\protect\citeauthoryear{Amin, Gr{\"u}tter, Odersky, Rompf, and
  Stucki}{Amin et~al\mbox{.}}{2016}]%
        {AminGORS16wf}
\bibfield{author}{\bibinfo{person}{Nada Amin}, \bibinfo{person}{Samuel
  Gr{\"u}tter}, \bibinfo{person}{Martin Odersky}, \bibinfo{person}{Tiark
  Rompf}, {and} \bibinfo{person}{Sandro Stucki}.}
  \bibinfo{year}{2016}\natexlab{}.
\newblock \showarticletitle{The Essence of Dependent Object Types}.
\newblock In \bibinfo{booktitle}{\emph{A List of Successes That Can Change the
  World: Essays Dedicated to Philip Wadler on the Occasion of His 60th
  Birthday}}, \bibfield{editor}{\bibinfo{person}{Sam Lindley},
  \bibinfo{person}{Conor McBride}, \bibinfo{person}{Phil Trinder}, {and}
  \bibinfo{person}{Don Sannella}} (Eds.). \bibinfo{series}{LNCS},
  Vol.~\bibinfo{volume}{9600}. \bibinfo{publisher}{Springer International
  Publishing}, \bibinfo{address}{Cham}, \bibinfo{pages}{249--272}.
\newblock
\showISBNx{978-3-319-30936-1}
\urldef\tempurl%
\url{https://doi.org/10.1007/978-3-319-30936-1_14}
\showDOI{\tempurl}


\bibitem[\protect\citeauthoryear{Amin, Rompf, and Odersky}{Amin
  et~al\mbox{.}}{2014}]%
        {AminRO14oopsla}
\bibfield{author}{\bibinfo{person}{Nada Amin}, \bibinfo{person}{Tiark Rompf},
  {and} \bibinfo{person}{Martin Odersky}.} \bibinfo{year}{2014}\natexlab{}.
\newblock \showarticletitle{Foundations of Path-dependent Types}. In
  \bibinfo{booktitle}{\emph{Proceedings of the 2014 ACM International
  Conference on Object Oriented Programming Systems Languages \& Applications
  {(OOPSLA 2014)}, Portland, Oregon, USA}}. \bibinfo{publisher}{ACM},
  \bibinfo{address}{New York, NY, USA}, \bibinfo{pages}{233--249}.
\newblock
\showISBNx{978-1-4503-2585-1}
\urldef\tempurl%
\url{https://doi.org/10.1145/2660193.2660216}
\showDOI{\tempurl}


\bibitem[\protect\citeauthoryear{Aspinall and Compagnoni}{Aspinall and
  Compagnoni}{2001}]%
        {AspinallC01tcs}
\bibfield{author}{\bibinfo{person}{David Aspinall} {and}
  \bibinfo{person}{Adriana Compagnoni}.} \bibinfo{year}{2001}\natexlab{}.
\newblock \showarticletitle{Subtyping dependent types}.
\newblock \bibinfo{journal}{\emph{Theoretical Computer Science}}
  \bibinfo{volume}{266}, \bibinfo{number}{1-2} (\bibinfo{year}{2001}),
  \bibinfo{pages}{273--309}.
\newblock
\showISSN{0304-3975}
\urldef\tempurl%
\url{https://doi.org/10.1016/S0304-3975(00)00175-4}
\showDOI{\tempurl}


\bibitem[\protect\citeauthoryear{Barendregt}{Barendregt}{1992}]%
        {Barendregt93lct}
\bibfield{author}{\bibinfo{person}{Hendrik~P. Barendregt}.}
  \bibinfo{year}{1992}\natexlab{}.
\newblock \showarticletitle{Lambda Calculi with Types}.
\newblock In \bibinfo{booktitle}{\emph{Handbook of Logic in Computer Science}},
  \bibfield{editor}{\bibinfo{person}{Samson Abramsky}, \bibinfo{person}{Dov~M.
  Gabbay}, {and} \bibinfo{person}{Thomas S.~E Maibaum}} (Eds.).
  Vol.~\bibinfo{volume}{2}. \bibinfo{publisher}{Oxford University Press},
  \bibinfo{address}{Oxford, UK}, Chapter~2, \bibinfo{pages}{117--309}.
\newblock
\showISBNx{0-19-853761-1}


\bibitem[\protect\citeauthoryear{Cardelli}{Cardelli}{1988}]%
        {Cardelli88popl}
\bibfield{author}{\bibinfo{person}{Luca Cardelli}.}
  \bibinfo{year}{1988}\natexlab{}.
\newblock \showarticletitle{Structural Subtyping and the Notion of Power Type}.
  In \bibinfo{booktitle}{\emph{Proceedings of the 15th ACM SIGPLAN-SIGACT
  Symposium on Principles of Programming Languages {(POPL 1988)}, San Diego,
  California, USA}}. \bibinfo{publisher}{ACM}, \bibinfo{address}{New York, NY,
  USA}, \bibinfo{pages}{70--79}.
\newblock
\showISBNx{0-89791-252-7}
\urldef\tempurl%
\url{https://doi.org/10.1145/73560.73566}
\showDOI{\tempurl}


\bibitem[\protect\citeauthoryear{Cardelli}{Cardelli}{1990}]%
        {Cardelli90unpublished}
\bibfield{author}{\bibinfo{person}{Luca Cardelli}.}
  \bibinfo{year}{1990}\natexlab{}.
\newblock \bibinfo{title}{Notes about {F$^\omega_{<:}$}}.
  (\bibinfo{date}{October} \bibinfo{year}{1990}).
\newblock
\newblock
\shownote{Unpublished manuscript.}


\bibitem[\protect\citeauthoryear{Cardelli and Longo}{Cardelli and
  Longo}{1991}]%
        {CardelliL91jfp}
\bibfield{author}{\bibinfo{person}{Luca Cardelli} {and}
  \bibinfo{person}{Giuseppe Longo}.} \bibinfo{year}{1991}\natexlab{}.
\newblock \showarticletitle{A Semantic Basis for {Quest}}.
\newblock \bibinfo{journal}{\emph{Journal of Functional Programming}}
  \bibinfo{volume}{1}, \bibinfo{number}{4} (\bibinfo{year}{1991}),
  \bibinfo{pages}{417--458}.
\newblock
\urldef\tempurl%
\url{https://doi.org/10.1017/S0956796800000198}
\showDOI{\tempurl}


\bibitem[\protect\citeauthoryear{Cardelli, Martini, Mitchell, and
  Scedrov}{Cardelli et~al\mbox{.}}{1991}]%
        {CardelliMMS91tacs}
\bibfield{author}{\bibinfo{person}{Luca Cardelli}, \bibinfo{person}{Simone
  Martini}, \bibinfo{person}{John~C. Mitchell}, {and} \bibinfo{person}{Andre
  Scedrov}.} \bibinfo{year}{1991}\natexlab{}.
\newblock \showarticletitle{An extension of system F with subtyping}. In
  \bibinfo{booktitle}{\emph{Proceedings of the International Conference on
  Theoretical Aspects of Computer Software {(TACS 1991)}, Sendai, Japan}},
  \bibfield{editor}{\bibinfo{person}{Takayasu Ito} {and}
  \bibinfo{person}{Albert~R. Meyer}} (Eds.). \bibinfo{publisher}{Springer},
  \bibinfo{address}{Berlin, Heidelberg}, \bibinfo{pages}{750--770}.
\newblock
\showISBNx{978-3-540-47617-7}
\urldef\tempurl%
\url{https://doi.org/10.1007/3-540-54415-1_73}
\showDOI{\tempurl}


\bibitem[\protect\citeauthoryear{Cardelli and Wegner}{Cardelli and
  Wegner}{1985}]%
        {CardelliW85csur}
\bibfield{author}{\bibinfo{person}{Luca Cardelli} {and} \bibinfo{person}{Peter
  Wegner}.} \bibinfo{year}{1985}\natexlab{}.
\newblock \showarticletitle{On Understanding Types, Data Abstraction, and
  Polymorphism}.
\newblock \bibinfo{journal}{\emph{Comput. Surveys}} \bibinfo{volume}{17},
  \bibinfo{number}{4} (\bibinfo{date}{Dec.} \bibinfo{year}{1985}),
  \bibinfo{pages}{471--523}.
\newblock
\showISSN{0360-0300}
\urldef\tempurl%
\url{https://doi.org/10.1145/6041.6042}
\showDOI{\tempurl}


\bibitem[\protect\citeauthoryear{Castellan, Clairambault, and Dybjer}{Castellan
  et~al\mbox{.}}{2015}]%
        {CastellanCD15tlca}
\bibfield{author}{\bibinfo{person}{Simon Castellan}, \bibinfo{person}{Pierre
  Clairambault}, {and} \bibinfo{person}{Peter Dybjer}.}
  \bibinfo{year}{2015}\natexlab{}.
\newblock \showarticletitle{Undecidability of Equality in the Free Locally
  Cartesian Closed Category}. In \bibinfo{booktitle}{\emph{13th International
  Conference on Typed Lambda Calculi and Applications, {(TLCA 2015)}, Warsaw,
  Poland}} \emph{(\bibinfo{series}{LIPIcs}, Vol.~\bibinfo{volume}{38})},
  \bibfield{editor}{\bibinfo{person}{Thorsten Altenkirch}} (Ed.).
  \bibinfo{publisher}{Schloss Dagstuhl -- Leibniz-Zentrum f{\"{u}}r
  Informatik}, \bibinfo{address}{Dagstuhl, Germany}, \bibinfo{pages}{138--152}.
\newblock
\urldef\tempurl%
\url{https://doi.org/10.4230/LIPIcs.TLCA.2015.138}
\showDOI{\tempurl}


\bibitem[\protect\citeauthoryear{Compagnoni and Goguen}{Compagnoni and
  Goguen}{1999}]%
        {CompagnoniG99csl}
\bibfield{author}{\bibinfo{person}{Adriana Compagnoni} {and}
  \bibinfo{person}{Healfdene Goguen}.} \bibinfo{year}{1999}\natexlab{}.
\newblock \showarticletitle{Anti-Symmetry of Higher-Order Subtyping}. In
  \bibinfo{booktitle}{\emph{Proceedings of the 13th International Workshop on
  Computer Science Logic {(CSL 1999)}, 8th Annual Conference of the EACSL
  Madrid, Spain}} \emph{(\bibinfo{series}{LNCS}, Vol.~\bibinfo{volume}{1683})},
  \bibfield{editor}{\bibinfo{person}{J{\"o}rg Flum} {and}
  \bibinfo{person}{Mario Rodriguez-Artalejo}} (Eds.).
  \bibinfo{publisher}{Springer}, \bibinfo{address}{Berlin, Heidelberg},
  \bibinfo{pages}{420--438}.
\newblock
\showISBNx{978-3-540-48168-3}
\urldef\tempurl%
\url{https://doi.org/10.1007/3-540-48168-0_30}
\showDOI{\tempurl}


\bibitem[\protect\citeauthoryear{Compagnoni and Goguen}{Compagnoni and
  Goguen}{2003}]%
        {CompagnoniG03ic}
\bibfield{author}{\bibinfo{person}{Adriana Compagnoni} {and}
  \bibinfo{person}{Healfdene Goguen}.} \bibinfo{year}{2003}\natexlab{}.
\newblock \showarticletitle{Typed operational semantics for higher-order
  subtyping}.
\newblock \bibinfo{journal}{\emph{Information and Computation}}
  \bibinfo{volume}{184}, \bibinfo{number}{2} (\bibinfo{year}{2003}),
  \bibinfo{pages}{242--297}.
\newblock
\showISSN{0890-5401}
\urldef\tempurl%
\url{https://doi.org/10.1016/S0890-5401(03)00062-2}
\showDOI{\tempurl}


\bibitem[\protect\citeauthoryear{Compagnoni}{Compagnoni}{1995}]%
        {Compagnoni95csl}
\bibfield{author}{\bibinfo{person}{Adriana~B. Compagnoni}.}
  \bibinfo{year}{1995}\natexlab{}.
\newblock \showarticletitle{Decidability of higher-order subtyping with
  intersection types}. In \bibinfo{booktitle}{\emph{Computer Science Logic, 8th
  International Workshop, (CSL 1994), Kazimierz, Poland, September 25--30,
  1994, Selected Papers}} \emph{(\bibinfo{series}{LNCS},
  Vol.~\bibinfo{volume}{933})}, \bibfield{editor}{\bibinfo{person}{Leszek
  Pacholski} {and} \bibinfo{person}{Jerzy Tiuryn}} (Eds.).
  \bibinfo{publisher}{Springer}, \bibinfo{address}{Berlin, Heidelberg},
  \bibinfo{pages}{46--60}.
\newblock
\showISBNx{978-3-540-49404-1}
\urldef\tempurl%
\url{https://doi.org/10.1007/BFb0022246}
\showDOI{\tempurl}


\bibitem[\protect\citeauthoryear{Crary}{Crary}{1997}]%
        {Crary97icfp}
\bibfield{author}{\bibinfo{person}{Karl Crary}.}
  \bibinfo{year}{1997}\natexlab{}.
\newblock \showarticletitle{Foundations for the Implementation of Higher-order
  Subtyping}. In \bibinfo{booktitle}{\emph{Proceedings of the Second ACM
  SIGPLAN International Conference on Functional Programming {(ICFP 1997)},
  Amsterdam, The Netherlands}}. \bibinfo{publisher}{ACM}, \bibinfo{address}{New
  York, NY, USA}, \bibinfo{pages}{125--135}.
\newblock
\showISBNx{0-89791-918-1}
\urldef\tempurl%
\url{https://doi.org/10.1145/258948.258961}
\showDOI{\tempurl}


\bibitem[\protect\citeauthoryear{Crary}{Crary}{2009}]%
        {Crary09lfmtp}
\bibfield{author}{\bibinfo{person}{Karl Crary}.}
  \bibinfo{year}{2009}\natexlab{}.
\newblock \showarticletitle{A Syntactic Account of Singleton Types via
  Hereditary Substitution}. In \bibinfo{booktitle}{\emph{Proceedings of the
  Fourth International Workshop on Logical Frameworks and Meta-Languages,
  Theory and Practice {(LFMTP 2009)}, Montreal, Quebec, Canada}}.
  \bibinfo{publisher}{ACM}, \bibinfo{address}{New York, NY, USA},
  \bibinfo{pages}{21--29}.
\newblock
\showISBNx{978-1-60558-529-1}
\urldef\tempurl%
\url{https://doi.org/10.1145/1577824.1577829}
\showDOI{\tempurl}


\bibitem[\protect\citeauthoryear{Cretin and R{\'e}my}{Cretin and
  R{\'e}my}{2014}]%
        {CretinR14lics}
\bibfield{author}{\bibinfo{person}{Julien Cretin} {and} \bibinfo{person}{Didier
  R{\'e}my}.} \bibinfo{year}{2014}\natexlab{}.
\newblock \showarticletitle{System F with Coercion Constraints}. In
  \bibinfo{booktitle}{\emph{Proceedings of the Joint Meeting of the
  Twenty-Third EACSL Annual Conference on Computer Science Logic {(CSL 2014)}
  and the Twenty-Ninth Annual ACM/IEEE Symposium on Logic in Computer Science
  {(LICS 2014)}, Vienna, Austria}}. \bibinfo{publisher}{ACM},
  \bibinfo{address}{New York, NY, USA}, Article \bibinfo{articleno}{34},
  \bibinfo{numpages}{10}~pages.
\newblock
\showISBNx{978-1-4503-2886-9}
\urldef\tempurl%
\url{https://doi.org/10.1145/2603088.2603128}
\showDOI{\tempurl}


\bibitem[\protect\citeauthoryear{Curien and Ghelli}{Curien and Ghelli}{1992}]%
        {CurienG92mscs}
\bibfield{author}{\bibinfo{person}{Pierre-Louis Curien} {and}
  \bibinfo{person}{Giorgio Ghelli}.} \bibinfo{year}{1992}\natexlab{}.
\newblock \showarticletitle{Coherence of Subsumption, Minimum Typing and
  Type-checking in {F$\leq$}}.
\newblock \bibinfo{journal}{\emph{Mathematical Structures in Computer Science}}
  \bibinfo{volume}{2}, \bibinfo{number}{1} (\bibinfo{date}{March}
  \bibinfo{year}{1992}), \bibinfo{pages}{55--91}.
\newblock
\urldef\tempurl%
\url{https://doi.org/10.1017/S0960129500001134}
\showDOI{\tempurl}


\bibitem[\protect\citeauthoryear{{Dotty Team}}{{Dotty Team}}{2020}]%
        {Dotty20code}
\bibfield{author}{\bibinfo{person}{The {Dotty Team}}.}
  \bibinfo{year}{2020}\natexlab{}.
\newblock \bibinfo{title}{{Scala~3} -- A next-generation compiler for {Scala}
  -- \url{http://dotty.epfl.ch}.}
\newblock \bibinfo{howpublished}{Source code available
  from~\url{https://github.com/lampepfl/dotty}}.
\newblock


\bibitem[\protect\citeauthoryear{Giarrusso, Stefanesco, Timany, Birkedal, and
  Krebbers}{Giarrusso et~al\mbox{.}}{2020}]%
        {GiarrussoSTBK20}
\bibfield{author}{\bibinfo{person}{Paolo~G. Giarrusso},
  \bibinfo{person}{L\'{e}o Stefanesco}, \bibinfo{person}{Amin Timany},
  \bibinfo{person}{Lars Birkedal}, {and} \bibinfo{person}{Robbert Krebbers}.}
  \bibinfo{year}{2020}\natexlab{}.
\newblock \showarticletitle{{Scala} Step-by-Step: Soundness for {DOT} with
  Step-Indexed Logical Relations in {Iris}}.
\newblock \bibinfo{journal}{\emph{PACMPL}} \bibinfo{volume}{4},
  \bibinfo{number}{ICFP}, Article \bibinfo{articleno}{114}
  (\bibinfo{date}{Aug.} \bibinfo{year}{2020}), \bibinfo{numpages}{29}~pages.
\newblock
\urldef\tempurl%
\url{https://doi.org/10.1145/3408996}
\showDOI{\tempurl}


\bibitem[\protect\citeauthoryear{Girard}{Girard}{1972}]%
        {Girard72thesis}
\bibfield{author}{\bibinfo{person}{Jean-Yves Girard}.}
  \bibinfo{year}{1972}\natexlab{}.
\newblock \emph{\bibinfo{title}{Interprétation fonctionnelle et élimination
  des coupures de l’arithmétique d’ordre supérieur}}.
\newblock \bibinfo{thesistype}{Ph.D. Dissertation}.
  \bibinfo{school}{Université Paris VII}.
\newblock


\bibitem[\protect\citeauthoryear{Harper and Licata}{Harper and Licata}{2007}]%
        {HarperL07jfp}
\bibfield{author}{\bibinfo{person}{Robert Harper} {and}
  \bibinfo{person}{Daniel~R. Licata}.} \bibinfo{year}{2007}\natexlab{}.
\newblock \showarticletitle{Mechanizing Metatheory in a Logical Framework}.
\newblock \bibinfo{journal}{\emph{Journal of Functional Programming}}
  \bibinfo{volume}{17}, \bibinfo{number}{4-5} (\bibinfo{date}{July}
  \bibinfo{year}{2007}), \bibinfo{pages}{613--673}.
\newblock
\showISSN{0956-7968}
\urldef\tempurl%
\url{https://doi.org/10.1017/S0956796807006430}
\showDOI{\tempurl}


\bibitem[\protect\citeauthoryear{Harper and Lillibridge}{Harper and
  Lillibridge}{1994}]%
        {HarperL94popl}
\bibfield{author}{\bibinfo{person}{Robert Harper} {and} \bibinfo{person}{Mark
  Lillibridge}.} \bibinfo{year}{1994}\natexlab{}.
\newblock \showarticletitle{A Type-theoretic Approach to Higher-order Modules
  with Sharing}. In \bibinfo{booktitle}{\emph{Proceedings of the 21st ACM
  SIGPLAN-SIGACT Symposium on Principles of Programming Languages {(POPL
  1994)}, Portland, Oregon, USA}}. \bibinfo{publisher}{ACM},
  \bibinfo{address}{New York, NY, USA}, \bibinfo{pages}{123--137}.
\newblock
\showISBNx{0-89791-636-0}
\urldef\tempurl%
\url{https://doi.org/10.1145/174675.176927}
\showDOI{\tempurl}


\bibitem[\protect\citeauthoryear{Harper and Pfenning}{Harper and
  Pfenning}{2005}]%
        {HarperP05tocl}
\bibfield{author}{\bibinfo{person}{Robert Harper} {and} \bibinfo{person}{Frank
  Pfenning}.} \bibinfo{year}{2005}\natexlab{}.
\newblock \showarticletitle{On Equivalence and Canonical Forms in the LF Type
  Theory}.
\newblock \bibinfo{journal}{\emph{ACM Transactions on Computational Logic}}
  \bibinfo{volume}{6}, \bibinfo{number}{1} (\bibinfo{date}{Jan.}
  \bibinfo{year}{2005}), \bibinfo{pages}{61--101}.
\newblock
\showISSN{1529-3785}
\urldef\tempurl%
\url{https://doi.org/10.1145/1042038.1042041}
\showDOI{\tempurl}


\bibitem[\protect\citeauthoryear{Hu and Lhot\'{a}k}{Hu and Lhot\'{a}k}{2019}]%
        {HuL20popl}
\bibfield{author}{\bibinfo{person}{Jason Z.~S. Hu} {and}
  \bibinfo{person}{Ond\v{r}ej Lhot\'{a}k}.} \bibinfo{year}{2019}\natexlab{}.
\newblock \showarticletitle{Undecidability of $D_<:$ And Its Decidable
  Fragments}.
\newblock \bibinfo{journal}{\emph{PACMPL}} \bibinfo{volume}{4},
  \bibinfo{number}{POPL}, Article \bibinfo{articleno}{9} (\bibinfo{date}{Dec.}
  \bibinfo{year}{2019}), \bibinfo{numpages}{30}~pages.
\newblock
\urldef\tempurl%
\url{https://doi.org/10.1145/3371077}
\showDOI{\tempurl}


\bibitem[\protect\citeauthoryear{Keller and Altenkirch}{Keller and
  Altenkirch}{2010}]%
        {KellerA10msfp}
\bibfield{author}{\bibinfo{person}{Chantal Keller} {and}
  \bibinfo{person}{Thorsten Altenkirch}.} \bibinfo{year}{2010}\natexlab{}.
\newblock \showarticletitle{Hereditary Substitutions for Simple Types,
  Formalized}. In \bibinfo{booktitle}{\emph{Proceedings of the Third ACM
  SIGPLAN Workshop on Mathematically Structured Functional Programming {(MSFP
  2010)}, Baltimore, Maryland, USA}}. \bibinfo{publisher}{ACM},
  \bibinfo{address}{New York, NY, USA}, \bibinfo{pages}{3--10}.
\newblock
\showISBNx{978-1-4503-0255-5}
\urldef\tempurl%
\url{https://doi.org/10.1145/1863597.1863601}
\showDOI{\tempurl}


\bibitem[\protect\citeauthoryear{Lee, Crary, and Harper}{Lee
  et~al\mbox{.}}{2007}]%
        {LeeCH07popl}
\bibfield{author}{\bibinfo{person}{Daniel~K. Lee}, \bibinfo{person}{Karl
  Crary}, {and} \bibinfo{person}{Robert Harper}.}
  \bibinfo{year}{2007}\natexlab{}.
\newblock \showarticletitle{Towards a Mechanized Metatheory of {Standard ML}}.
  In \bibinfo{booktitle}{\emph{Proceedings of the 34th Annual ACM
  SIGPLAN-SIGACT Symposium on Principles of Programming Languages {(POPL
  2007)}, Nice, France}}. \bibinfo{publisher}{ACM}, \bibinfo{address}{New York,
  NY, USA}, \bibinfo{pages}{173--184}.
\newblock
\showISBNx{1-59593-575-4}
\urldef\tempurl%
\url{https://doi.org/10.1145/1190216.1190245}
\showDOI{\tempurl}


\bibitem[\protect\citeauthoryear{Moors, Piessens, and Odersky}{Moors
  et~al\mbox{.}}{2008a}]%
        {MoorsPO08oopsla}
\bibfield{author}{\bibinfo{person}{Adriaan Moors}, \bibinfo{person}{Frank
  Piessens}, {and} \bibinfo{person}{Martin Odersky}.}
  \bibinfo{year}{2008}\natexlab{a}.
\newblock \showarticletitle{Generics of a Higher Kind}. In
  \bibinfo{booktitle}{\emph{Proceedings of the 23rd ACM SIGPLAN Conference on
  Object-oriented Programming Systems Languages and Applications {(OOPSLA
  2008)}, Nashville, TN, USA}}. \bibinfo{publisher}{ACM}, \bibinfo{address}{New
  York, NY, USA}, \bibinfo{pages}{423--438}.
\newblock
\showISBNx{978-1-60558-215-3}
\urldef\tempurl%
\url{https://doi.org/10.1145/1449764.1449798}
\showDOI{\tempurl}


\bibitem[\protect\citeauthoryear{Moors, Piessens, and Odersky}{Moors
  et~al\mbox{.}}{2008b}]%
        {MoorsPO08fool}
\bibfield{author}{\bibinfo{person}{Adriaan Moors}, \bibinfo{person}{Frank
  Piessens}, {and} \bibinfo{person}{Martin Odersky}.}
  \bibinfo{year}{2008}\natexlab{b}.
\newblock \showarticletitle{Safe type-level abstraction in {Scala}}. In
  \bibinfo{booktitle}{\emph{Proceedings of the International Workshop on
  Foundations of Object-Oriented Languages {(FOOL 2008)}, San Francisco, CA,
  USA}}. \bibinfo{pages}{1--13}.
\newblock
\urldef\tempurl%
\url{https://www.cs.cmu.edu/~aldrich/FOOL/fool08/moors.pdf}
\showURL{%
\tempurl}


\bibitem[\protect\citeauthoryear{Nordstr{\"o}m, Petersson, and
  Smith}{Nordstr{\"o}m et~al\mbox{.}}{1990}]%
        {Nordstrom90mltt}
\bibfield{author}{\bibinfo{person}{Bengt Nordstr{\"o}m}, \bibinfo{person}{Kent
  Petersson}, {and} \bibinfo{person}{Jan~M. Smith}.}
  \bibinfo{year}{1990}\natexlab{}.
\newblock \bibinfo{booktitle}{\emph{Programming in Martin-L{\"o}f's type
  theory}}. Vol.~\bibinfo{volume}{200}.
\newblock \bibinfo{publisher}{Oxford University Press},
  \bibinfo{address}{Oxford, UK}.
\newblock


\bibitem[\protect\citeauthoryear{Norell}{Norell}{2007}]%
        {Norell07thesis}
\bibfield{author}{\bibinfo{person}{Ulf Norell}.}
  \bibinfo{year}{2007}\natexlab{}.
\newblock \emph{\bibinfo{title}{Towards a practical programming language based
  on dependent type theory}}.
\newblock \bibinfo{thesistype}{Ph.D. Dissertation}. \bibinfo{school}{Department
  of Computer Science and Engineering, Chalmers University of Technology},
  \bibinfo{address}{G\"{o}teborg, Sweden}.
\newblock


\bibitem[\protect\citeauthoryear{Odersky, Martres, and Petrashko}{Odersky
  et~al\mbox{.}}{2016}]%
        {OderskyMP16scala}
\bibfield{author}{\bibinfo{person}{Martin Odersky}, \bibinfo{person}{Guillaume
  Martres}, {and} \bibinfo{person}{Dmitry Petrashko}.}
  \bibinfo{year}{2016}\natexlab{}.
\newblock \showarticletitle{Implementing Higher-kinded Types in {Dotty}}. In
  \bibinfo{booktitle}{\emph{Proceedings of the 7th {ACM} {SIGPLAN} Symposium on
  {Scala} {(SCALA@SPLASH 2016)}, Amsterdam, Netherlands}}.
  \bibinfo{publisher}{ACM}, \bibinfo{address}{New York, NY, USA},
  \bibinfo{pages}{51--60}.
\newblock
\showISBNx{978-1-4503-4648-1}
\urldef\tempurl%
\url{https://doi.org/10.1145/2998392.2998400}
\showDOI{\tempurl}


\bibitem[\protect\citeauthoryear{Parreaux, Boruch-Gruszecki, and
  Giarrusso}{Parreaux et~al\mbox{.}}{2019}]%
        {ParreauxBG19scala}
\bibfield{author}{\bibinfo{person}{Lionel Parreaux},
  \bibinfo{person}{Aleksander Boruch-Gruszecki}, {and}
  \bibinfo{person}{Paolo~G. Giarrusso}.} \bibinfo{year}{2019}\natexlab{}.
\newblock \showarticletitle{Towards Improved GADT Reasoning in Scala}. In
  \bibinfo{booktitle}{\emph{Proceedings of the Tenth ACM SIGPLAN Symposium on
  Scala (Scala 2019), London, United Kingdom}}. \bibinfo{publisher}{ACM},
  \bibinfo{address}{New York, NY, USA}, \bibinfo{pages}{12--16}.
\newblock
\showISBNx{9781450368247}
\urldef\tempurl%
\url{https://doi.org/10.1145/3337932.3338813}
\showDOI{\tempurl}


\bibitem[\protect\citeauthoryear{Pierce and Steffen}{Pierce and
  Steffen}{1997}]%
        {PierceS97tcs}
\bibfield{author}{\bibinfo{person}{Benjamin Pierce} {and}
  \bibinfo{person}{Martin Steffen}.} \bibinfo{year}{1997}\natexlab{}.
\newblock \showarticletitle{Higher-order subtyping}.
\newblock \bibinfo{journal}{\emph{Theoretical Computer Science}}
  \bibinfo{volume}{176}, \bibinfo{number}{1--2} (\bibinfo{year}{1997}),
  \bibinfo{pages}{235--282}.
\newblock
\showISSN{0304-3975}
\urldef\tempurl%
\url{https://doi.org/10.1016/S0304-3975(96)00096-5}
\showDOI{\tempurl}


\bibitem[\protect\citeauthoryear{Pierce}{Pierce}{1992}]%
        {Pierce92popl}
\bibfield{author}{\bibinfo{person}{Benjamin~C. Pierce}.}
  \bibinfo{year}{1992}\natexlab{}.
\newblock \showarticletitle{Bounded Quantification is Undecidable}. In
  \bibinfo{booktitle}{\emph{Proceedings of the 19th ACM SIGPLAN-SIGACT
  Symposium on Principles of Programming Languages {(POPL 1992)}, Albuquerque,
  NM, USA}}. \bibinfo{publisher}{Association for Computing Machinery},
  \bibinfo{address}{New York, NY, USA}, \bibinfo{pages}{305--315}.
\newblock
\showISBNx{0897914538}
\urldef\tempurl%
\url{https://doi.org/10.1145/143165.143228}
\showDOI{\tempurl}


\bibitem[\protect\citeauthoryear{Pierce}{Pierce}{2002}]%
        {Pierce02tapl}
\bibfield{author}{\bibinfo{person}{Benjamin~C. Pierce}.}
  \bibinfo{year}{2002}\natexlab{}.
\newblock \bibinfo{booktitle}{\emph{Types and programming languages}}.
\newblock \bibinfo{publisher}{MIT Press}, \bibinfo{address}{Cambridge, MA,
  USA}.
\newblock


\bibitem[\protect\citeauthoryear{Rapoport and Lhot\'{a}k}{Rapoport and
  Lhot\'{a}k}{2019}]%
        {RapoportL19oopsla}
\bibfield{author}{\bibinfo{person}{Marianna Rapoport} {and}
  \bibinfo{person}{Ond\v{r}ej Lhot\'{a}k}.} \bibinfo{year}{2019}\natexlab{}.
\newblock \showarticletitle{A Path to {DOT}: Formalizing Fully Path-Dependent
  Types}.
\newblock \bibinfo{journal}{\emph{PACMPL}} \bibinfo{volume}{3},
  \bibinfo{number}{OOPSLA}, Article \bibinfo{articleno}{145}
  (\bibinfo{date}{Oct.} \bibinfo{year}{2019}), \bibinfo{numpages}{29}~pages.
\newblock
\urldef\tempurl%
\url{https://doi.org/10.1145/3360571}
\showDOI{\tempurl}


\bibitem[\protect\citeauthoryear{Rompf and Amin}{Rompf and Amin}{2016}]%
        {RompfA16oopsla}
\bibfield{author}{\bibinfo{person}{Tiark Rompf} {and} \bibinfo{person}{Nada
  Amin}.} \bibinfo{year}{2016}\natexlab{}.
\newblock \showarticletitle{Type Soundness for Dependent Object Types {(DOT)}}.
  In \bibinfo{booktitle}{\emph{Proceedings of the 2016 ACM SIGPLAN
  International Conference on Object-Oriented Programming, Systems, Languages,
  and Applications {(OOPSLA 2016)}, Amsterdam, Netherlands}}.
  \bibinfo{publisher}{ACM}, \bibinfo{address}{New York, NY, USA},
  \bibinfo{pages}{624--641}.
\newblock
\showISBNx{978-1-4503-4444-9}
\urldef\tempurl%
\url{https://doi.org/10.1145/2983990.2984008}
\showDOI{\tempurl}


\bibitem[\protect\citeauthoryear{Scherer and R{\'e}my}{Scherer and
  R{\'e}my}{2015}]%
        {SchererR15esop}
\bibfield{author}{\bibinfo{person}{Gabriel Scherer} {and}
  \bibinfo{person}{Didier R{\'e}my}.} \bibinfo{year}{2015}\natexlab{}.
\newblock \showarticletitle{Full Reduction in the Face of Absurdity}. In
  \bibinfo{booktitle}{\emph{Proceedings of the 24th European Symposium on
  Programming on Programming Languages and Systems {(ESOP 2015)}, Held as Part
  of the European Joint Conferences on Theory and Practice of Software {(ETAPS
  2015)}, London, UK}} \emph{(\bibinfo{series}{LNCS},
  Vol.~\bibinfo{volume}{9032})}, \bibfield{editor}{\bibinfo{person}{Jan Vitek}}
  (Ed.). \bibinfo{publisher}{Springer}, \bibinfo{address}{Berlin, Heidelberg},
  \bibinfo{pages}{685--709}.
\newblock
\showISBNx{978-3-662-46669-8}
\urldef\tempurl%
\url{https://doi.org/10.1007/978-3-662-46669-8_28}
\showDOI{\tempurl}


\bibitem[\protect\citeauthoryear{Stone and Harper}{Stone and Harper}{2000}]%
        {StoneH00popl}
\bibfield{author}{\bibinfo{person}{Christopher~A. Stone} {and}
  \bibinfo{person}{Robert Harper}.} \bibinfo{year}{2000}\natexlab{}.
\newblock \showarticletitle{Deciding Type Equivalence in a Language with
  Singleton Kinds}. In \bibinfo{booktitle}{\emph{Proceedings of the 27th ACM
  SIGPLAN-SIGACT Symposium on Principles of Programming Languages {(POPL
  2000)}, Boston, MA, USA}}. \bibinfo{publisher}{ACM}, \bibinfo{address}{New
  York, NY, USA}, \bibinfo{pages}{214--227}.
\newblock
\showISBNx{1-58113-125-9}
\urldef\tempurl%
\url{https://doi.org/10.1145/325694.325724}
\showDOI{\tempurl}


\bibitem[\protect\citeauthoryear{Stucki}{Stucki}{2017}]%
        {Stucki17thesis}
\bibfield{author}{\bibinfo{person}{Sandro Stucki}.}
  \bibinfo{year}{2017}\natexlab{}.
\newblock \emph{\bibinfo{title}{Higher-Order Subtyping with Type Intervals}}.
\newblock \bibinfo{thesistype}{Ph.D. Dissertation}. \bibinfo{school}{School of
  Computer and Communication Sciences, \'Ecole polytechnique f\'ed\'erale de
  Lausanne}, \bibinfo{address}{Lausanne, Switzerland}.
\newblock
\urldef\tempurl%
\url{https://doi.org/10.5075/epfl-thesis-8014}
\showDOI{\tempurl}


\bibitem[\protect\citeauthoryear{Stucki and Giarrusso}{Stucki and
  Giarrusso}{2021}]%
        {artifact}
\bibfield{author}{\bibinfo{person}{Sandro Stucki} {and}
  \bibinfo{person}{Paolo~G. Giarrusso}.} \bibinfo{year}{2021}\natexlab{}.
\newblock \bibinfo{booktitle}{\emph{A Theory of Higher-order Subtyping with
  Type Intervals -- Agda Formalization}}.
\newblock Zenodo.
\newblock
\urldef\tempurl%
\url{https://doi.org/10.5281/zenodo.4775731}
\showDOI{\tempurl}


\bibitem[\protect\citeauthoryear{Watkins, Cervesato, Pfenning, and
  Walker}{Watkins et~al\mbox{.}}{2004}]%
        {WatkinsCPW04types}
\bibfield{author}{\bibinfo{person}{Kevin Watkins}, \bibinfo{person}{Iliano
  Cervesato}, \bibinfo{person}{Frank Pfenning}, {and} \bibinfo{person}{David
  Walker}.} \bibinfo{year}{2004}\natexlab{}.
\newblock \showarticletitle{A Concurrent Logical Framework: The Propositional
  Fragment}. In \bibinfo{booktitle}{\emph{International Workshop on Types for
  Proofs and Programs (TYPES 2003), Torino, Italy, April 30--May 4, 2003,
  Revised Selected Papers}} \emph{(\bibinfo{series}{LNCS},
  Vol.~\bibinfo{volume}{3085})}, \bibfield{editor}{\bibinfo{person}{Stefano
  Berardi}, \bibinfo{person}{Mario Coppo}, {and} \bibinfo{person}{Ferruccio
  Damiani}} (Eds.). \bibinfo{publisher}{Springer}, \bibinfo{address}{Berlin,
  Heidelberg}, \bibinfo{pages}{355--377}.
\newblock
\showISBNx{978-3-540-24849-1}
\urldef\tempurl%
\url{https://doi.org/10.1007/978-3-540-24849-1_23}
\showDOI{\tempurl}


\bibitem[\protect\citeauthoryear{Wright and Felleisen}{Wright and
  Felleisen}{1994}]%
        {WrightF94ic}
\bibfield{author}{\bibinfo{person}{Andrew~K. Wright} {and}
  \bibinfo{person}{Matthias Felleisen}.} \bibinfo{year}{1994}\natexlab{}.
\newblock \showarticletitle{A Syntactic Approach to Type Soundness}.
\newblock \bibinfo{journal}{\emph{Information and Computation}}
  \bibinfo{volume}{115}, \bibinfo{number}{1} (\bibinfo{date}{Nov.}
  \bibinfo{year}{1994}), \bibinfo{pages}{38--94}.
\newblock
\showISSN{0890-5401}
\urldef\tempurl%
\url{https://doi.org/10.1006/inco.1994.1093}
\showDOI{\tempurl}


\bibitem[\protect\citeauthoryear{Yang and Oliveira}{Yang and Oliveira}{2017}]%
        {YangO17oopsla}
\bibfield{author}{\bibinfo{person}{Yanpeng Yang} {and} \bibinfo{person}{Bruno
  C. d.~S. Oliveira}.} \bibinfo{year}{2017}\natexlab{}.
\newblock \showarticletitle{Unifying Typing and Subtyping}.
\newblock \bibinfo{journal}{\emph{PACMPL}} \bibinfo{volume}{1},
  \bibinfo{number}{OOPSLA}, Article \bibinfo{articleno}{47}
  (\bibinfo{date}{Oct.} \bibinfo{year}{2017}), \bibinfo{numpages}{26}~pages.
\newblock
\urldef\tempurl%
\url{https://doi.org/10.1145/3133871}
\showDOI{\tempurl}


\bibitem[\protect\citeauthoryear{Zwanenburg}{Zwanenburg}{1999}]%
        {Zwanenburg99tlca}
\bibfield{author}{\bibinfo{person}{Jan Zwanenburg}.}
  \bibinfo{year}{1999}\natexlab{}.
\newblock \showarticletitle{Pure Type Systems with Subtyping}.
\newblock In \bibinfo{booktitle}{\emph{Proceedings of the 4th International
  Conference on Typed Lambda Calculi and Applications {(TLCA 1999)}, L'Aquila,
  Italy}}, \bibfield{editor}{\bibinfo{person}{Jean-Yves Girard}} (Ed.).
  \bibinfo{series}{LNCS}, Vol.~\bibinfo{volume}{1581}.
  \bibinfo{publisher}{Springer}, \bibinfo{address}{Berlin, Heidelberg},
  \bibinfo{pages}{381--396}.
\newblock
\showISBNx{978-3-540-65763-7}
\urldef\tempurl%
\url{https://doi.org/10.1007/3-540-48959-2_27}
\showDOI{\tempurl}


\end{thebibliography}

\end{document}